\newtheorem{Theorem}{Theorem}[section]
\newtheorem{remark}{Remark}[section]
\newtheorem{Assumption}{Assumption}[section]
\newtheorem{definition}{Definition}[section]
\newtheorem{notation}{Notation}[section]
\newtheorem{Claim}{Claim}[section]
\begin{document}

\frontmatter
\begin{titlepage}
\centering
{\bf \Large Performance Improvement by Introducing Mobility in Wireless Communication Networks}
\\
\vspace{1cm}
{Guided by:}
\\
\vspace{1cm}
{\bf \Large Hailong Huang}

\vspace{3cm}
{2017.12}
\end{titlepage}

\chapter*{Abstract}
\addcontentsline{toc}{chapter}{Abstract}
Communication technology is a major contributor to our lifestyles. Improving the performance of communication system brings with various benefits to human beings.

This report covers two typical such systems: wireless sensor networks and cellular networks. Both relate to people’s lives closely. For example, people can use wireless sensor networks to get a better understanding of the environment; and use cellular networks to contact with others. We study the influence of mobility in these two networks. In wireless sensor networks, we consider the usage of mobile sinks to collect sensory data from static nodes. The mobile sinks can be attached to robots or vehicles. For the former case, we consider the non-holonomic model and propose a path planning algorithm. The generated paths are smooth, collision free, closed, and letting all the sensory data be collected. For the latter case, we design data collection protocols for a single mobile sink which aim at balancing the energy consumption among sensor nodes to improve the network lifetime. We also design an algorithm for multiple mobile sinks to collect urgent sensory information within the allowed latency.

Regarding the mobility in cellular networks, we mean the service providers are mobile. Conventional approaches usually consider how to optimally deploy static base stations according to a certain metric, such as throughput. However, due to the explosive demands, it will be difficult to satisfy the user requirements by the current facilities. Base station densification is a solution, but it is not cost efficient, because of the high prices of renting sites and backhaul links. Introducing mobility is to let the base stations fly in the sky, resulting in less investment. We consider one of the key issues of using drones to serve cellular users: drone deployment. From simple to complex, we study how to deploy a given number of drones in the area of interest to maximize the served user number; and what is the minimum number of drones and their placements to serve all the users.

We have implemented all the proposed algorithms by either simulations or experiments, and the results have confirmed the effectiveness of these approaches.

\dominitoc
\tableofcontents
\listoffigures \addcontentsline{toc}{chapter}{List of Figures}                   
\listoftables \addcontentsline{toc}{chapter}{List of Tables}

\mainmatter
\setcounter{mtc}{5}
\chapter{Introduction}
\minitoc
\section{Communication Networks}
Communication technology is a major contributor to our lifestyles. With regard of communication, data travels from devices to information sinks, and vice verse. Existing techniques to realize such communication pattern include wireless sensor networks, cellular network, etc. A traditional wireless sensor network consists of one or several static sinks and a set of static sensor nodes. With the years of development, it is found that the hop spot issue is in essence unavoidable in the traditional wireless sensor networks. In order to tackle this issue, mobility (mobile sinks) has been introduced to wireless sensor networks. Cellular network is another type of communication networks, where cellular users are in essence mobile. Conventional cellular networks usually use base stations to provide service to users. Users may experience bad service if a large amount of users request data simultaneously. One promising solution is to deploy Unmanned Aerial Vehicles (UAVs), represented by drones, airships or balloons, to serve as flying base stations. Consider both wireless sensor networks with mobile sinks and cellular networks with drones under the framework of mobile networks, the mobile sinks and drones can be regarded as servers, and the sensor nodes and cellular users can be treated as users. The major difference between these networks is that the users in wireless sensor networks are sensor nodes, which are mostly static; while the users in cellular networks are mobile. This report studies how to improve system performance by using mobility in these networks.
\subsection{Wireless Sensor Networks}
Wireless sensor networks (WSNs), a multidisciplinary research area, were primarily motivated by military applications. Benefiting from technological development, the production cost of sensor nodes created interest in diverse applications, such as habitat monitoring \cite{mainwaring2002habitat}, important machine operation monitoring \cite{kim2007health}, industrial process control \cite{zhao2011wireless}, intrusion detection \cite{butun2014survey}, disaster management \cite{saha2007disaster}, etc. A WSN, consisting of a number of distributed autonomous sensor nodes, has been regarded as a promising means to collect diverse information sources from the physical world, such as temperature, motion, seismic waves, and many others, which definitely helps people to have a better understanding of the environment.

Research activities conducted on WSNs can be categorized into three groups: application domain, hardware design and software development. According to various application domains, the requirements of hardware and software in different WSNs may be varying from each other.

In terms of hardware, the focus is on the production of the main components of a sensor node such that it can reliably work. Typically, an autonomous sensor node is composed of: sensor(s); memory; controller; transceiver; and power source. The main component of a sensor node is the sensor(s), which is to detect the nearby environment and the type(s) of the sensor(s) is application depended. A sensor node usually periodically senses the environment and the information can be transmitted out to others or stored locally in the memory. The controller performs the pre-configured tasks, processes the sensory data and controls the functionality of other components in the node. The transceiver is a single device, which is used to both transmit or receive data packets. The power resource supports power for sensing, communication, data processing and other activities. 

Researchers also focus on the software in the sensor nodes. When a WSN is deployed, it is expected to work reliably and automatically as long as possible. Suppose all the hardware can support this, then one of the key points is the management of the power source. Although the renewable battery has been seen in the market, it is still in development. So, how to make fully use of the limited battery is a significant problem in designing a WSN.
\subsection{Cellular Networks}
The considerable growth in demand for higher data rates and other services in cellular networks have accelerated the need to develop more innovative communications infrastructures. Neither the conventional macro cell facilities nor the small cells are able to address such issue in an cost efficient way. Because deploying more of them not only increases the cost including the equipments and site rental, but also brings with other issues such as that in the non-peak period, there will be a high percentage of facilities under low load. In such a scenairo, the drones can be integrated into the current communication systems \cite{UAV2016survey} to enhance service in areas with dense traffic, which is believed to be a cost efficient solution.

There are various problems we have to consider to introduce drones into cellular networks. In terms of system architecture, how will these drones collaborate the existing base stations? Regarding only the drone layer, where should they be to serve users? How long should they work? How to recharge the batteries? etc. These questions need to be answered before practically using drones to assist serving cellular users.
  
\section{Research Question}
The main topic of this report is how to make use of mobility to improve the performance of mobile networks. Regarding these two types of mobile systems, we study the below questions:

Considering the fact that the technology of renewable energy in WSNs is not mature \cite{jiang2005perpetual, kansal2007power, park2008energy} and the difficulty of recharging the distributed sensor nodes, minimizing energy consumption and improving wireless sensor network lifetime is usually the concern of system designers. Applying mobile sinks is promising in saving energy resource for the sensor nodes. So our first research question is how to make use of mobile sinks to improve system energy efficiency and network lifetime for wireless sensor networks.

Since the outdoor cellular data for personal use has been rising steadily, and the existing infrastructures usually have capacity limitations, cellular users may experience bad service especially when a large number of users request data simultaneously. Deploying more BSs is able to meet the increasing traffic demand. This solution, however, not only results in more cost, but also brings with other problems. Dense BSs may lead to high interference and a high percentage of BSs may have low utility in non-peak period. In such a scenario, the utilization of drones, which work as flying BSs, is a preferred solution, comparing to that of BS densification. Our second research question is how to make use of flying BSs to improve the user experience.
\section{Contributions}
Having the above questions in mind, we conduct extensive research from various aspects. In particular, for the first research question, we consider the controllable mobility and constrained mobility. Controllable mobility refers to that the mobile sinks can be fully controlled by network designers without any constrained. Generally, such mobility is easy to handle (see Chapter \ref{path_planning}). Constrained mobility refers to that the mobile sinks have certain constraints. This model is more practical in some applications, such as collecting data in urban environment WSNs (see Chapter \ref{cluster_ms}, \ref{cluster_cs} and \ref{cluster_um}). Besides the work proposed to make use of mobility to improve the network performance, we study the sink tracking issue, which can provide the network with the current locations of mobile sinks (see Chapter \ref{tracking}).For the second research question, we also consider the constrained mobility. Different from Chapter \ref{path_planning}, \ref{cluster_ms}, \ref{cluster_cs} and \ref{cluster_um}, in Chapter \ref{drone} we employ drones to serve cellular users. The main contributions of this report have been summarized as follows:

\begin{itemize}
\item Focusing on \textit{\textbf{controllable}} mobility, the first contribution is the proposal of a path planning approach which resolves several practical issues not having been sufficiently tackled yet when controllable mobile sinks are used to collect sensory data from sensor nodes. Different from existing methods which simply regard the mobile sinks as moving points, we use the unicycle robots (with constant line speed and limited angular velocity). Such model is closer to practice than the point-wise model. Taking into account this model, we need to design paths which are: smooth, collision free from sensor nodes and obstacles, closed, and letting the sinks read all the sensory data from the sensor nodes. Regarding these features, we define the term of viable path and propose two approaches for a single mobile sink and a set of mobile sinks respectively, which are named as: Shortest Viable Path Planning (SVPP) and $k$-Shortest Viable Path Planning ($k$-SVPP). We have shown that SVPP and $k$-SVPP are effective to design viable paths for unicycle mobile sinks with bounded angular velocity and can save much energy for the nodes compared to the multihop transmission. 
\item The second contribution is an approach using a single \textit{\textbf{constrained}} mobile sink with fixed path to collection data from sensor nodes. The proposed protocol aims at balancing the energy consumption, including energy expenditure to transmit data packet and network overhead across the network, to make the network operate as long as possible with all nodes alive. We design an energy-aware unequal clustering algorithm and an energy-aware routing algorithm. Theoretical analysis and simulation results confirm the effectiveness of the proposed approach against the alternative methods. 
\item Similar to the framework of the second contribution, the third one also considers the scenario of using a single \textit{\textbf{constrained}} mobile sink with fixed path. The difference lies in that we combine the technique of compressive sensing (CS) and clustering: within clusters, raw reading is transmitted; while CS measurement is transmitted between clusters and MS. We present an analytical model to describe the energy consumed by the nodes, based on which we figure out the optimal cluster radius. We present two distributed implementations, whose message complexities at a node are both $O(1)$. We conduct extensive simulations to investigate their performance and compare with existing work in terms of the energy efficiency. 
\item The forth contribution is the proposal of an algorithm which targets on delivery unusual message to the mobile sinks within the allowed latency. Same to the second and third contribution, the \textit{\textbf{constrained}} mobile sinks are attached to vehicles with fixed trajectories, e.g., public transportation vehicles. Instead of single mobile sink, we use multiple mobile sinks which are attached to the buses. The proposed data collection consists of sensor nodes, bus stop nodes (which work as the interface between sensor nodes and mobile sinks), and mobile sinks amounted on the buses. Upon detecting any unusual message, the source sensor node transmits the information to a set of selected target bus stop nodes. When buses pass by, the information is uploaded. The key issue here is how the source node selects the bus stop nodes. We formulate it as an integer programming problem. We take into account the realistic features of the buses, such as the timetable and uncertainties in the arrival time as well as the stopping duration. We conduct simulations and also experiments to test our approach. We show the proposed approach can deliver the unusual message to mobile sinks within the allowed latency with higher reliability and efficiency than the alternatives. 
\item The final contribution of this report lies in the study of drone deployment problems in cellular networks. We formulate the \textit{\textbf{constrained}} drone placement problems based on a novel street graph model associated with the UE density function (built up based on the real dataset). The performance in terms of serving UEs is competitive with existing work. The advantage is that the 2D projections of drones obtained by our approaches are always valid, since they will be on the streets. Furthermore, we provide solutions to the multiple drone placement problem and the problem of minimum number of drones to achieve the required QoS level from the point of Internet Service Provider (ISP), which can serve suitable guidelines in practice.
\end{itemize}

\section{Organization}
The organization of the rest content is briefly outlined: Chapter \ref{literature} reviews the related work. Chapter \ref{tracking} studies a basic problem when mobility is used to serve WSNs, i.e., sink tracking. Chapter \ref{path_planning} to \ref{drone} present the main work of the report. Specifically, Chapter \ref{path_planning} considers the scenario of using \textit{\textbf{controllable}} mobile sinks to collect data from sensor nodes. The problem we focus on is the path planning for the mobile sinks which are modelled as dubins car. Chapter \ref{cluster_ms} and \ref{cluster_cs} consider the scenario of using a \textit{\textbf{constrained}} mobile sink for data collection. The focus is how to efficiently collect data from the sensor nodes such that the network lifetime can be maximized. Chapter \ref{cluster_um} considers the scenario of transmitting the detected information about urgent events to \textit{\textbf{constrained}} mobile sinks within the allowed latency. Chapter \ref{drone} studies the \textit{\textbf{constrained}} drone deployment problems. Finally, Chapter \ref{conclusion} of this report summarize the key results and highlights the possible future research directions for the problems and solutions presented in the report.
\chapter{Related Work}\label{literature}
\minitoc
\section{Overview}
There are lots of existing work in literature on the topics of wireless sensor networks and cellular networks. In this chapter, we only present a survey of work related to our studied problems, i.e., how to use mobile sinks to improve network energy efficiency and/or network lifetime in WSNs, and how to improve the user experience by drones in cellular networks. Note, there are other options to improve the system performance of WSNs, i.e., adding energy to the system by either energy harvesting or wireless charging techniques. Because of the breadth of this report, we refer readers to \cite{gilbert2008comparison, seah2009wireless, vullers2010energy, xie2013wireless, xie2012making} and the references therein for more comprehensive reviews. 

In this chapter, besides discussing the most related work to our problem, we will also highlight the positions of our proposed approaches in the literature. The reviewed approaches fall into the structure shown in Fig. \ref{structure}. 

\begin{figure}[t]
\begin{center}
{\includegraphics[width=0.9\textwidth]{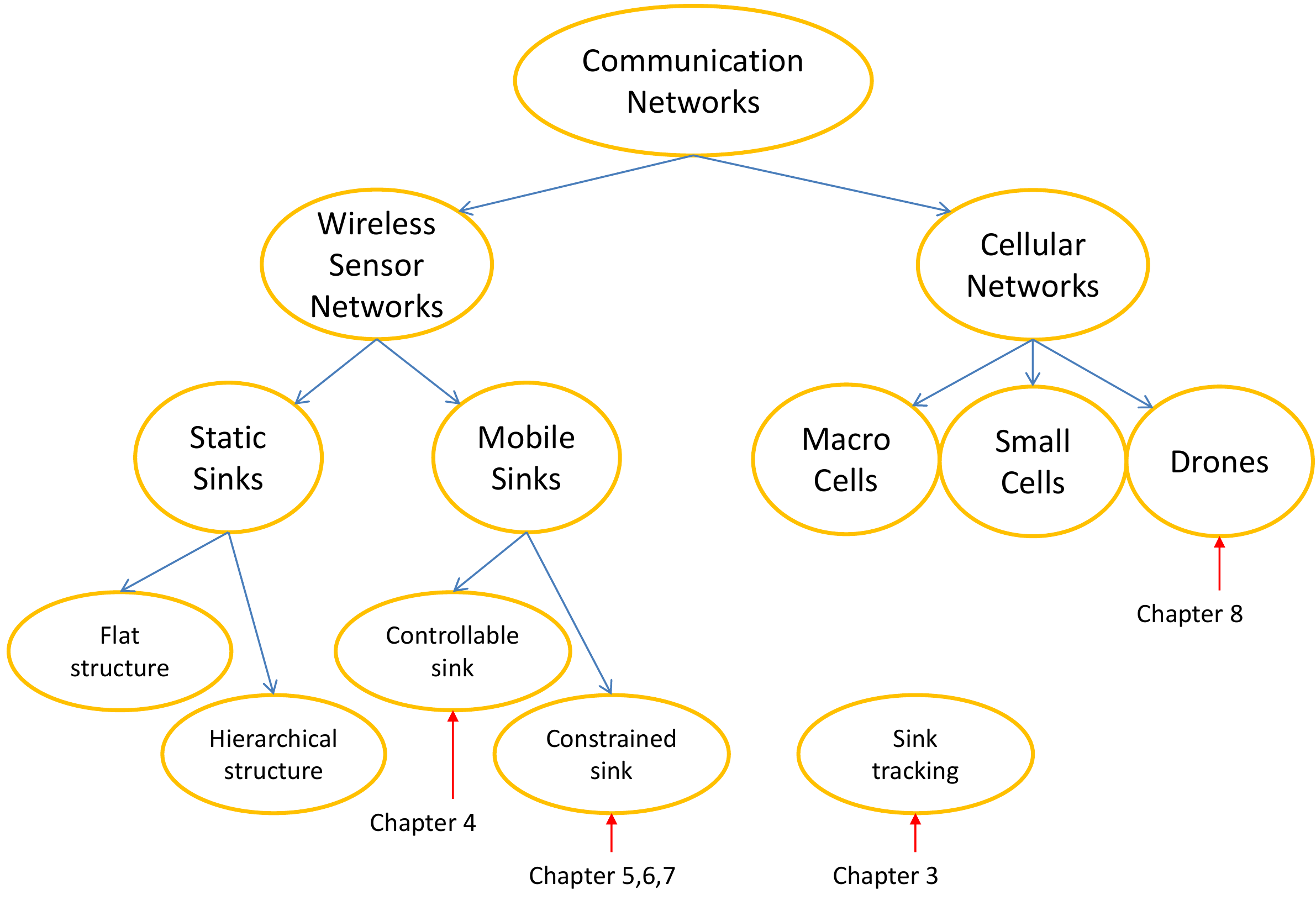}}
\caption{The catogery of the reviewed approaches and the positions of our contributions.}\label{structure}
\end{center}
\end{figure}
\section{Wireless Sensor Networks}
A wireless sensor network (WSN) consists of a set of wireless sensor nodes which work together to achieve single or multiple goals, e.g., environmental morning, intrusion detection, target tracking, etc. The wireless sensor nodes usually have on-board sensors, transmitter, receiver, a micro computer to process the sensory data and a battery. Driven by advances in manufacturing of high density electronics, the wireless sensor node becomes tiny. It handles with various types of sensors and the abilities of communication and data processing have been improved significantly. However, one bottleneck is the slow development in battery technologies. Although renewable energy has been introduced, its application in wireless sensor networks is still not mature. Thus, the energy constrained nature of sensor nodes is a major certain in the development of data collection protocols. Based on the mobility of base stations, we classify data collection approaches in wireless sensor networks into two categories: static sinks and mobile sinks. Below, we present a brief review of these two categories of approaches.

Recently, equipping sensor nodes with mobility can bring many advantages. 
For example, the number of nodes is decreased dramatically to guarantee the coverage of a given area due to the mobility \cite{cheng2009comlett, cheng2011sweep}.
Moreover, exploiting mobile sensor nodes to achieve a barrier coverage of the sensing field can be used to minimize the probability of undetected intrusion in intrusion detection applications and a sweep coverage can be used to maximize the event detection rate and in the meantime minimize the missed detections \cite{cheng2011barriersweep}.
Considering the fact that most existing WSN application still use static sensor nodes, those approaches which employ mobile sensor nodes, are out of the scope of this report.
Interested readers are refered to \cite{cheng2009comlett, cheng2011sweep, cheng2011barriersweep} and the references therein. Throughout this report, the wireless sensor nodes considered are static. 

\subsection{Traditional data collection approaches}
Traditional data collection approaches refer to those which use static sinks. The typical scenario is that a set of sensor nodes is deployed in the area of interest. The sensor nodes sense ambient conditions, transform the measurements into certain signals that can be processed to reveal the characteristics about the phenomena of interest, then send the data together with the location information to the static sink. At the sink, a large amount of sampling data at various positions for the same timestamp constructs a map of the area of interest. The convention way to transmit the sensory data to the static sink is through multi-hop communication, i.e., a sensor node transmits its sampling data to another node which is closer to the sink. It is easy to imagine that the nodes nearby the sink are overloaded than those far away. If the sensor nodes all have the same initial energy, the former will run out of battery earlier than the latter. The sink is isolated from the rest of the sensor nodes if the nearby nodes die. Approaches arming at improving energy efficiency and network lifetime can be classified into two categories according to the network structure: flat-based routing and hierarchical-based routing.

\subsubsection{Flat-based routing protocols}
Traditional flat-based routing protocols include flooding protocol \cite{flood02} or gossiping-based routing \cite{gossip06}. The shortcomings of flooding based protocol include implosion, which is caused by duplicate packets sent to the same node, and resource blindness without consideration for energy constraints. Gossiping avoids such issue by randomly selecting node to send the data packet rather than broadcasting the data packet. However, the propagation of data causes high delays. To improve the energy utilization and network lifetime, the redundancy within the sensory data must be exploited \cite{al2004routing}. Directed diffusion \cite{intanagonwiwat2000directed} aggregates the data coming from different sources by removing redundancy, reduces the number of transmissions, thus saves network energy and prolongs the network lifetime. Assuming the fixed BS, i.e., the direction of routing is always known, Minimum Cost Forwarding Algorithm \cite{ye2001scalable} allows a node only to maintain the least cost estimate from itself to the BS, instead of a routing table. When a node transmits a data packet, it broadcasts it to the neighbour nodes. Only that which is on the least cost path between the BS and source node rebroadcasts it. Gradient-based routing \cite{schurgers2001energy} memorizes the number of hops when the interest is diffused through the entire network. In particular, each node manages the depth of the node, which is the minimum number of hops to reach the BS. The difference between two nodes' depths is defined as the gradient on that link. A data packet is forwarded on a link with the largest gradient. 

Compressive Data Gathering (CDG) is proposed in \cite{CDG09}. Consider the sensor nodes form a particular tree with the BS as root. Rather than transmitting the raw sensory data, a well defined measurement is relayed to BS.  Benefiting from Compressive Sensing (CS) theory \cite{CS06_1, CS06_2}, the original readings can be recovered from the well defined measurement. Thus, the number of transmitted data packets is significantly reduced. In \cite{CDG09}, the leaf nodes initiate data transmission. Any node multiplies a measurement item to its raw reading, adds it to the sum of the measurements received from all its children, and sends the final sum to its parent. Then, all nodes send the same number of messages regardless of their hop distance to the BS. More importantly, the transmission load is uniform in the entire network, which avoids the energy hole issue.

\subsubsection{Hierarchical-based routing protocols}
In hierarchical-based routing protocols, the roles of sensor nodes are different. Some of them act as cluster heads (CHs), which can handle the cluster members (CMs) and execute local aggregation. So this kind of protocols is also known as cluster protocols. Low Energy Adaptive Clustering Hierarchy (LEACH). \cite{heinzelman2000energy} randomly selects CHs and rotates the role to evenly distribute the energy load among the entire network. CMs are to perform the sensing while CHs are to process the received data packets and transmit the aggregated data packet to BS. Unlike randomly selecting the CHs, \cite{estrin1999next} selects the CHs according to their residual energy and degree. Hybrid Energy-Efficient Distributed Clustering \cite{younis2004heed} is another cluster protocol. It tries to evenly distribute the CHs. The probability that two closely located nodes both becoming CHs is much smaller than LEACH. \cite{chen2009unequal} proposes Unequal Routing Clustering (URC). The authors point out that the sizes of the clusters near the sink should be smaller than those far away. Thus, the energy consumption on intra-cluster communication of CHs near the sink is reduced, and these CHs can spend more energy on inter-cluster communication, i.e., relaying data.

CS based routing approaches can also be extended to the hierarchical-based routing protocols. In CDG \cite{CDG09}, every node transmits the same number of packets for one reading. However, when multihop communication is used, the leaf nodes only need to send one packet containing its original reading. Thus, one disadvantage of CDG is the increasing energy consumption at the nodes close to the leaf nodes. Motivated by this observation, the hybrid CS approach has been proposed \cite{xiang11}. In the hybrid CS method, the nodes close to the leaf nodes transmit the raw readings without using CS method, while the nodes close to the sink transmit CS measurements. In this way, the overall message complexity is further reduced. The authors of \cite{jia14} proposes a clustering method that uses the hybrid CS. Within clusters, cluster members (CMs) send raw readings to their cluster head (CH). All the CHs and the sink form spanning trees with the sink as the root and CHs transmit CS measurements to the sink. This approach further reduces the message complexity in the entire network. The authors of \cite{CDG16} also use routing trees to collect data, but they claim that the links on the routing trees should be scheduled for transmissions such that adjacent transmissions do not cause harmful interference on one another (thus corrupting the compressed measurements) while maintaining a maximum spatial reuse of the wireless spectrum.

Although these approaches are able to improve the network lifetime, the energy hole issue is still unavoidable. In the next part, we present a review of work where mobility is used.
\subsection{Data collection approaches based on mobile sinks}\label{ME_collection}
Based on the mobility of sinks, we can divide the approaches using mobile sinks into three groups: random, controllable, and constrained. The random mobility refers to that the mobile sinks randomly move in the sensing area. The controllable mobility means that the mobile sinks can be fully controlled to visit any one of the sensor nodes. constrained mobility refers to the case that the mobile sinks have some frequently visited positions or follow some trajectories. One extreme case of the constrained mobility is that the mobile sinks are amounted on some vehicles, such as buses, which paths have been predetermined. Here, we only consider the controllable and constrained mobile sinks.
\subsubsection{Using controllable mobile sinks}
When the mobile sinks are fully controllable, the question of how the mobile sinks move is often asked. Thereby, lots of publications focus on the path planning for the mobile sinks such that some certain metrics is met.

One of the most considered metric is the path length. Path length is a reflect of data collection delay. Traditional Travelling Salesman Problem \cite{TSP85, TSP94} is a suitable formulation for the data collection problem using a mobile robot. Given a point set (the set of sensor nodes) in a plane, the objective is to find a minimum-length path that visits each node exactly once. Here the state of the robot can be represented by two-tuple $(x,y)\in \mathbb{R}^2$ representing the robot's position. A variant of TSP is called Asymmetric TSP (ATSP) \cite{L92} where the distances between two nodes may be different in two directions.

With the cost of the robot movement, the sensor nodes' transmission energy dissipation can be reduced. On the other hand, it leads to a long collection time due to the robot physical speed. To tackle such issue, a class of energy dissipation and delivery delay trade-off approaches has been proposed. For example, in TSPN based approaches\cite{TSPN07}, the point set is extended to a region set and communication between the node and robot is available once the robot is within the region. Considering the fact that the communication regions of sensor nodes may overlap with each other in dense networks, the path planning problem can be formulated as Generalized TSP (GTSP) \cite{TSPN13CSS}. It was shown GTSP can be transformed to ATSP through the Noon and Bean transformation \cite{NB91}.

One point worth mentioning is that the TSP based approaches focus more on the high level path planning while pay less attention to the mobile sinks' mobility constraint. For example, when a non-holonomic robot carries a mobile sink to collect data from the sensor nodes, the mobility constraint must be considered. As a result, another direction of research has been created. A typical approach is TSP for Dubins vehicles (DTSP), introduced in \cite{SFB05}, is to find the minimum-length path satisfying the bounded curvature, given a point set in a plane. Various research effort has been put onto the problem of DTSP, see e.g., \cite{tang2005motion, mcgee2006path, rathinam2007resource, savla2008traveling, cons2014integrating, shima2014motion, isaiah2015motion}. Following the idea of DTSP, the authors in \cite{visitingcircle2015} proposed a Smooth Path Construction (SPC) scheme to plan path for the robot based on a turning circle model. The produced path is smooth; however, if applied to grounded robots, the robots may collide with sensor nodes. As a variant of DTSP, TSP for Dubins vehicles with neighbourhoods (DTSPN) has been studied. 

Another aspect needs to be accounted in the consideration of low level path planning for realistic robots, i.e., obstacle avoidance. There are lots of existing robot navigation approaches with obstacle avoidance. Temporarily following the boundary of an obstacle is a standard method employed by many obstacle avoidance algorithms. The basic idea is to switch between two modes \cite{lumelsky1987path}: (a) the movement to-wards the objective and (b) the movement around an obstacle. \cite{lumelsky1990incorporating} proposes an improvement, which is an improvement to the condition that the robot uses to stop contouring an obstacle and resume the movement to the destination the so called leaving condition. Such enhancement leads to shorter path from origin to destination. 
\cite{kamon1997sensory} makes an alteration on the leaving condition which allows the robot to quit the  obstacle boundaries as soon as the global convergence is guaranteed, based on the information of whether the destination is in the free range direction.
\cite{langer2007k} further improves the leaving condition by exploiting the local sensing: the robot leaves the origin-destiny line when a new collision is detected.

The basic idea of switching between two modes also applies in moving obstacle avoidance. \cite{van2008planning} discusses the problem of safe path planning among unpredictably mobile obstacles. Specifically, the starting point as well as the destination are given. The sizes of the obstacles grow over time. It presents an approach to compute the minimal time cost path between origin and destination in the plane that avoids these growing obstacles. 
\cite{toibero2009stable} considers the wall-following task and presents a continuous controller for wheeled mobile robots. The proposed control system has three modes, and each one is designed to solve a specific instance of the navigating task: re-orientation (to avoid collisions), wall-following (for walls with nearly constant contour) and circle-performer (to recover the wall information in open corners.) The controller switches among these modes based on a switching logic which depends on the odometry and distance information.
\cite{MTS11} considers the guidance and control of an autonomous vehicle in problems of  border patrolling and avoiding collisions with moving and deforming obstacles. 
It consists of properly switching between moveing towards target in straight line, and bypassing obstacles at a pre-defined distance following a sliding mode control law.
The obstacles here are assumed to be static and convex.
\cite{matveev2012real} extends \cite{MTS11} to the case of dynamic environments which are cluttered with arbitrarily shaped obstacles. The ability of reaching destination globally in such environments was illustrated by experiments with a real wheeled robot as well as simulations.
A biologically inspired algorithm is proposed in \cite{savkin2013simple}.
Mathematically analysis of this algorithm is provided for the case of round obstacles which move with constant velocities. Simulations together experiments demonstrate that the algorithm performs better than some well-known methods such as artificial potential field and navigation laws based  on the  obstacle velocity.
\cite{savkin2014seeking} proposes an algorithm for collision free navigation of a non-holonomic robot in unknown complex dynamic environments with mobile obstacles. This algorithm is based on an integrated representation of the information about the environment that does not require to separate obstacles and approximate their shapes by discs or polygons or any information on the obstacles' velocities. 
\cite{matveev2015globally} proposes a reactive algorithm to navigate a planar mobile robot in densely cluttered environments with unpredictably moving and deforming obstacles. It uses omnidirectional vision of the scene up to the nearest reflection point. Apart from access to the desired azimuth, it assumes no further sensing capacity or
knowledge of the scene configuration.
Many other available schemes can be found in a recent survey \cite{HMS15} and book \cite{MSHW15}.

The above mentioned approaches focus more on robotics. From the point of sensor networks, some features should also be considered.

In a WSN, the data loads of the nodes may be different when event-driven sensors are used. If the nodes' storage size is fixed, visiting every node with the same frequency leads to the issue of buffer overflow for the nodes whose data generation rates are large. Considering this, the authors in \cite{MES04} studied the Mobile Element Scheduling (MES) problem. Different from TSP framework, one node may be visited multiple times depending on its data generation rate. To address MES problem, three algorithms, i.e., Earlist Deadline First (EDF), EDF with k-lookahead, and Minimum Weight Sum First (MWSF) were presented. They extended the work to the case where there are multi mobile robots to be scheduled  \cite{MES07}. The authors in \cite{PBS05} also considered MES problem and proposed a Partitioning-Based Scheduling (PBS) algorithm.

For the long data delivery delay issue, TSPN based approaches can reduce the latency while its potential is limited. Another approach from sensor networks is to select a subset of nodes as Rendezvous Points (RPs) \cite{rendezvous07, rendezvous12, adaptivestop14, MA07}. The other nodes forward the extracted data to RPs, and the mobile robots just visit RPs. Thus the data delivery delay can be reduced significantly. \cite{rendezvous07} formulated the Minimum-Energy Rendezvous Planning (MERP) problem and proposed two algorithms to address it. \cite{rendezvous12} proposed a Mobicluster protocol dealing with some practical issues, for example how to handle the case that RPs run out of energy, which were not covered by \cite{rendezvous07}. \cite{adaptivestop14} falls into the same category. But instead of searching the sensor nodes to select RPs, the authors created a lattice graph of the field and set the vertices as RPs. Two traversal schemes were proposed: Deterministic Walk (DW) and Biased Random Walk (BRW). DW guides the robot to traverse the network following a fixed visiting order, while BRW adapts the robot to the dynamic environment. \cite{MA07} also belongs to this category. The authors proposed a heuristic mechanism to determine RPs and then constructed path for SenCar. Note, the RPs in \cite{adaptivestop14, MA07} may not necessarily be the sensor nodes, which differs from the work \cite{rendezvous07, rendezvous12}. 
Comparing to TSPN, the rendezvous-based approaches are able to cutback the  path length decidedly and consequently shorten the collection time. 

One implied assumption behind the already discussed schemes is that the robot is able to move freely in the field. However, in many realistic applications, the movement of the robot is restricted \cite{pathconstrained11IEEE}, such as using public transport vehicles \cite{pathconstrained07}, \cite{pathconstrained11}. \cite{pathconstrained07} considered the application of highway traffic surveillance using sparsely deployed sensors. Assuming the locations and data loads of the nodes are known, the authors designed a Transmission Scheduling Algorithm (TSA) to determine the time slots for each sensor node. Different from \cite{pathconstrained07}, the authors in \cite{pathconstrained11} studied the data collection problem in a large scale network using path-constrained mobile robots. Similar in spirit to the rendezvous-based approaches, a subset of nodes which are geographically within the communication ranges of the mobile robots serve as RPs. 

Some other studies have also considered the case of utilizing multiple robots. Vehicle Routing Problem (VRP) \cite{DR59}, a generalization of Travelling Salesman Problem (TSP), is to design paths for a fleet of vehicles. Also many variants of VRP have been proposed taking into account varying factors. More details of VRP can be found in the recent book \cite{TV14}. Generally, the objective of the VRP framework is to minimize the total cost of all the vehicles' paths. Applying this kind of approaches may lead to the situation: some paths' costs are greater than the others.

Another approach is the cluster-based approach, where the sensor nodes are divided into a set of clusters and one robot serves one cluster. The basic consideration of this approach is that usually the number of robots is smaller than that of the nodes. Then the problem can be regarded as a combination of source assignment and path planning. Many existing clustering approaches have been used to separate the sensor nodes and then the problem turns to solving the problem with single robot in each cluster \cite{visitingcircle2015, TO05, RSD07}. For example, K-means was used in \cite{visitingcircle2015, TO05} and minimum spanning tree algorithm was used in \cite{RSD07}. It is easy to understand that the cost in each cluster depends on the clustering results.

To obtain a set of paths with similar costs, $k$-Travelling Salesman Problem ($k$-TSP) was studied and the authors in \cite{FHK78} proposed several heuristics including $k$-NEARINSERT, $k$-NEARNEIGHBOR, and $k$-SPLITOUR, among which $k$-SPLITOUR is the simplest and performs far superior. Differing from VRP, the goal of $k$-TSP is to minimize the length of the maximum path and the $k$-SPLITOUR algorithm starts from a TSP path and then splits it into $k$ subpaths with similar lengths. As an extension, the authors in \cite{BI09} introduced the contact time for downloading data from sensor nodes. As the generalization of \textit{k}-TSP, the authors in \cite{KAUWWT12, KUAWWT14} studied $k$-Travelling Salesman Problem with Neighbourhoods ($k$-TSPN), where the robots have different initial locations. 
\subsubsection{Using constrained mobile sinks}
One shortcoming of using fully controllable mobile sinks is the increased data collection delay, due to the low physical movement. A number of approaches using trajectory constrained MSs to collect data in WSNs have been proposed. The MSs can collect data either through single hop communication \cite{SONG07, MEHRABI16, CHAK03} or multihop communication \cite{SOMA06, GAO11}. The authors of \cite{CHAK03} consider the scenario where a MS is installed on a bus which moves on its fixed path periodically and collects data from a set of sensor nodes deployed near the path. They propose a queuing formulation to model the process of data collection and show that using constrained mobility can lead to large energy saving over convention static WSNs. Further they propose a communication protocol to assist gathering data by MS. Under the similar context, \cite{SONG07} focuses on the scheduling problem in node-sink transmission and a trade-off between the probability of successful information retrieval and node energy consumption is studied. Different from \cite{SONG07} which considers sparsely deployed network, the authors of \cite{MEHRABI16} focus on dense networks. They consider the maximization of data collection throughput by dividing the traversing time into a set of time slots with equal duration and studying the problem of assigning nodes to the time slots. One defect of these approaches is that they all use single hop communication, which requires that the sensor nodes are deployed within the communication range of MS when it moves on the trajectory.

In \cite{SOMA06}, the assumption, i.e., all the nodes are located within in the communication range of MS, is removed while multi-hop communication is used. During the movement, MS broadcasts a packet continuously within its communication range. The nodes that can hear this packet are called subsinks. Subsinks then forward the packet within their communication ranges. 
When MS finishes its trip, every node knows its shortest hop distance to a subsink as well as the shortest path towards the subsink. From MS's next trip, nodes transmit their sensory data to the corresponding subsinks and when MS comes, the subsinks upload the data. Compared to the single hop case, using multi-hop communication improves the applicability and scalability of system. The authors of \cite{GAO11} consider the same problem as \cite{SOMA06}. They first point out that the shortest path based routing leads to an unbalanced assignment of nodes to subsinks, which further results in that some subsinks having long contacting time with MS is associated with a small number of nodes. Thus, the subsinks may not manage to upload all its buffered data. Considering this, they formulate a constrained assignment problem such that each subsink is associated with an appropriate number of nodes, which enables more data can be collected. By doing this, the throughput is improved. However, some nodes may be associated to a far away subsink, thus relaying their data consumes more energy than the shortest path routing.

The above approaches using either single hop or multihop communication are based on the flat network structure. While the deployment of large scale WSNs and the need for aggregation necessitate the efficient management of the network topology to balance the load and prolong the network lifetime. Clustering, as opposed to direct single hop communication schemes, has been shown to be an effective approach for organizing the network, which improves energy efficiency, reduces packet collisions, and results in increased network throughput under high load \cite{MAMALIS09}. Researchers have proposed many well known clustering approaches, such as LEACH \cite{heinzelman2000energy}, EEHC \cite{bandyopadhyay2003energy}, HEED \cite{younis2004heed}, and their extensions. However, they all focus on static networks. The authors of \cite{KONS12} consider the same context as \cite{SOMA06, GAO11}, but clustering is introduced. The data collection protocol presented in \cite{KONS12}, MobiCluster, is based on a clustering algorithm, called Unequal Routing Clustering (URC)  \cite{CHEN09}. The authors of \cite{CHEN09} point out that given a network with a static sink, the sizes of the clusters near the sink should be smaller than those far away. The reason is as follows. The CHs near the sink have heavier burden of relaying data compared to the CHs far away. CHs also need to collect data from their CMs within cluster and aggregate the collected data. Thus, if the cluster size is identical across the network, the CHs near the sink may run out of energy much more quickly than those far away (funnelling effect). To avoid this, an effective approach is to construct unequal clusters, i.e., the clusters near the sink have smaller sizes while the clusters far away have larger sizes. Thus, the energy consumption on intra-cluster communication of CHs near the sink is reduced, and these CHs can spend more energy on inter-cluster communication, i.e., relaying data. The authors of \cite{KONS12} adopt a simplified version of URC where only two sizes are considered, and apply it to the scenario of using MS. One drawback of URC is it assumes that the sink is able to broadcast a packet to all the nodes. Similar to the defect of \cite{SONG07, MEHRABI16, CHAK03}, this assumption limits the scalability of URC \cite{CHEN09}, so as MobiCluster \cite{KONS12}. Another drawback is the cluster size depends on a distance information derived from signal strength, which may not be reliable in harsh environment. With regard to the disadvantages of URC, the authors of \cite{WEI11} propose an energy-efficient clustering algorithm (EC), where the cluster size relates to the hop distance from a CH to the sink, rather than the Euclidean distance. Apparently, compared to URC, EC is more appropriate for large-scale networks. \cite{KONS12, CHEN09, WEI11} assume the nodes are uniformly deployed. However, in some applications the nodes are not uniformly deployed and the approaches in \cite{KONS12, CHEN09, WEI11} are not guaranteed to work well. As shown in \cite{wu2008avoiding}, the unbalanced energy depletion among all the nodes in the circular multihop sensor network (modeled as concentric coronas) is unavoidable, due to the inherent many-to-one traffic pattern. The authors propose a strategy which distributes sensor nodes in a non-uniform manner, to achieve better balance of energy depletion across the network by regulating the number of nodes in each subarea. The authors of \cite{cardei2008non} also focus on nonuniform deployment of sensor nodes. They create a sensor movement plan to achieve the desired sensor densities for uniform energy depletion. 

\subsection{Sink tracking}\label{tracking_related}
One fundamental requirement in the many aforementioned approaches is the knowledge of the locations of sensor nodes and mobile sinks. Generally, these locations refer to physical coordinates, which are obtained by Global Positioning System (GPS) \cite{gps04}, Received Signal Strength Indication (RSSI) \cite{RSSI08}, Angle of Arrive (AOA) \cite{aoa06}, Time of Arrival (TOA) \cite{toa03}, and Time Difference of Arrival (TDOA) \cite{tdoa05} or some other technologies. Obviously, such means is not cost efficient. The reasons to require the location information of sensor nodes and mobile sinks are different. For sensor nodes, the location information is simply to indicate where the sensory data comes from; while for mobile sinks, it is used to efficiently route data packets dynamically to mobile sinks. Instead of relying on GPS device, an alternative to get the location information of sensor nodes is to use topological maps \cite{TPM14} based only on connectivity Information. 

The Topology Preserving Map (TPM) \cite{TPM14}, a recent technique, is generated from the network's Virtual Coordinates (VCs). VC reflects the connectivity of the sensor nodes. A subset of nodes is selected as anchors and each node's VC vector is represented by the minimum hop distance from itself to the set of anchors. The number of anchors is the dimensionality of nodes' VC vectors. The problem of anchor selection is discussed in \cite{anchor11}. By making use of an SVD-based dimensionality reduction scheme, nodes' Topological Coordinates (TCs) can be derived, thus regaining the directional information lost in VCs to derive TPM. TPM preserves the topological information of physical sensor networks but takes out the physical distances between sensors. TPM is derived from the hop-distances between sensors and anchors, which is easier to get than physical locations. TPM provides an alternative tool for physical coordinates. \cite{gunathillake2016maximum, gunathillake2017topology} presents a Maximum Likelihood-Topology Maps (ML-TM) that provide a more accurate physical representation, by using the probability of signal reception, an easily measurable parameter that is sensitive to the distance. Approach is illustrated using a mobile robot that listens to signals transmitted by sensor nodes and maps the packet reception probability to a coordinate system using a signal receiving probability function. It has been shown that the topology coordinates can replace the physical coordinates in some applications. 

Regarding the location information of mobile sinks, we can treat them as some targets and then many existing target tracking approaches can be used: cluster-based methods, tree-based approaches, Particle Filter, Kalman Filter and their variations. In \cite{cluster03}, a cluster-based architecture is used to determine the location of a mobile sink. The selected cluster head sensors carry out tracking algorithm and member sensors in the cluster collect information about the mobile sink. In \cite{tree04}, a tree-based approach simplifies collaboration among sensors in a localized manner. Particle Filter (PF) is investigated for tracking moving sinks \cite{particle10}. It provides an analysis of the effect of various design and calibration parameters on the accuracy of PF. PF’s extensions, Auxiliary Particle Filter (APF) and Cost-Reference Particle Filter (CRPF) are also applied for mobile sink tracking \cite{binary08}. For Kalman Filter, an adaptive tracking method, which is based on Kalman Filter (KF), is proposed for linear systems \cite{disk05}. The tracking region size adapts accordingly to the mobile sink's acceleration by waking up different numbers of sensors. As an extension of KF, Extended Kalman Filter (EKF) \cite{ekf79} is obtained by making use of linearization at the current state and using KF to solve the linear problem. One deviation is displayed in \cite{likelihood68}. For nonlinear systems, EKF has been utilized in \cite{ekf09energy}, \cite{ekf12missing}, \cite{ekf12sparsity}, etc. EKF is applied to an energy efficient tracking system where the sampling interval is adaptive based on the users' requirements \cite{ekf09energy}. In \cite{ekf12missing}, EKF is explored to deal with the problem of missing measurements for time-varying systems with stochastic nonlinearities. In \cite{ekf12sparsity}, the problem of mobile sink tracking based on energy readings of sensors is studied by using an EKF to minimize the estimation error. As an extension of EKF, the time continuous version of Robust Extended Kalman Filter (REKF) is completely formulated in \cite{rekf99} and the time discrete version is described in \cite{rekf09}. It focuses on the uncertainties of the system and achieves more robust performance. REKF is applied to derive an estimate of the mobile sink's location in networks where sensors are mobile \cite{rekf04location}. Instead of tracking the mobile sink, REKF is also used for sensor localization in Delay-Tolerant Sensor Networks \cite{pathirana2005node}. REKF is used to estimate the sensor positions for a DTN by a mobile robot, and it is computationally more efficient and robust in comparison to EKF implementation. In \cite{binary11} and \cite{binary08}, Binary Sensor Networks, where the sensor nodes are incapable of measuring distances, are discussed. The authors assume the mobile sink is always emitting a signal that can be detected in a range of the circular radius. Under this assumption, the sensors' output is 1 or 0 representing whether the mobile sink is within their sensing ranges. However, to make use of these binary outputs, they also assume the sensors know their own locations \cite{binary05}, which may raise the cost of sensor placement. Removing the need for sensors' locations will result in simpler nodes and less sensitive algorithms, and facilitate the large-scale deployment of networks. 

In \cite{TPMtrack13}, TPMs is used for tracking a target. It uses the TCs instead of any physical distance measurements in WSNs to locate the mobile sinks. However, the accuracy of this strategy is not guaranteed. It proposes the mobile sink's VCs are obtained by taking an average of neighbouring sensors' VCs. Therefore, the accuracy of the mobile sink's TCs highly depends on the accuracy of this averaging approximation. \cite{gunathillake2017mobile} makes use of the topology map to navigate a mobile robot towards an emergency source. \cite{ashanie16} proposes a real time target search scheme using topology map. The objective is to employ a mobile robot to catch the target in the shortest time. \cite{ashanie17} considers the problem of tracking targets. The proposed approach is a distributed one which consists of a Robust Kalman filter combined with a nonlinear least-square method, and Maximum Likelihood Topology Maps. The primary input for estimating target location and direction of motion is provided by time stamps recorded by the sensor nodes when the target is detected within their sensing range. An autonomous robot following the target collects this information from sensors in its neighborhood to determine its own path in search of the target. This work is the most relevant one to what we will present in Section \ref{tracking}, while the Extended Robust Kalman Filter is adopted there.

\section{Cellular Networks}
Cellular networks are the major means for people to communicate with others. The study of cellular networks has a long history. Traditionally, researchers usually focus on the macro cell management. The positions of macro cells  are determined mainly based on the long-term traffic behaviour \cite{bor2016new}. Due to the explosive demands in recent years, users usually have bad experience especially during crowed events \cite{shafiq2016characterizing}. Internet Service Providers (ISPs) have been developing more efficient strategies to meet the requirements of users \cite{nakamura2013trends}. Generally, under a common umbrella of network densification \cite{bhushan2014network, ge20165g}, two main approaches have been studied: the usage of small cells and drones. We respectively present brief reviews on these two kinds of approaches.
\subsection{Serving users by small BSs}
Underlaid in a traditional (macro) cellular networks, small BSs can be deployed in traffic hotspots \cite{bennis2013cellular, hwang2013holistic}. The key aspect of deploying small BSs is the elaborate site planning, which not only impacts on the capacity performance but also influences the coverage of the users. \cite{ghazzai2016optimized} studies how to deploy base stations to satisfy both cell coverage and capacity constraints. Meta-heuristic algorithms: PSO and GWO, are presented to address the optimization problem. User density is considered.
\cite{zhao2017approximation} considers the problem of deploying macro cells and small BSs. Each small area is associated with a required data rate. The spectral efficiency on a bandwidth cannot be lower than a threshold. Approximation algorithms are proposed.

\cite{arnold2010power} investigates the power consumption models of macro and small BSs in heterogeneous cellular networks. The authors claim that the energy efficiency of any deployment is impacted by the power consumption of each individual network element and the dependency of transmit power and load.
\cite{ashraf2011sleep} introduces sleep mode algorithms  for small cell base stations to reduce cellular networks' power consumption. They design small cell driven, core network driven, and user equipment driven algorithms, such that the energy consumption is modulated over the variations in traffic load. 
Similarly, the sleep mode idea is also used in macro cells \cite{marsan2009optimal, soh2013energy}. They formulate the power consumption minimization and energy efficiency maximization problems, and determine the optimal operating regimes for macro cells. 
\cite{hossain2014evolution} studies the interference management challenge (e.g. power control, cell association) in these networks with shared spectrum
access (i.e. when the different network tiers use the same spectrum). The authors claim that the existing interference management strategies are not able to deal with the interference management issue in the prioritized 5G multi-tier networks, where UEs in different tiers have various priorities to access available channels.
\cite{ye2013user} focuses on the problem of user association between macro cell and small cells. They find that although macro cell can provide strongest downlink signal, users should be actively pushed onto the small BSs (offloading), which will often be lightly loaded and so can provide a higher rate over time by offering the mobile many more resource blocks than the macro cell.
\cite{singh2014joint} studies the issue of joint resource partitioning and offloading. The authors show that resource partitioning is required in conjunction with offloading to improve the rate of cell edge users in co-channel heterogeneous networks.
\subsection{Serving users by UAVs}
Considering the fact that deploying dense small BSs is not efficient in cost (includes the equipments, the payments for backhaul and site rental, etc), and involves an elaborate site planning, an alternative solution is the utility of UAVs as intermediate points between the existing BSs and User Equipments (UEs). Using drones may not only be a cost efficient solution, but also be able to handle emergencies, which cannot be realized by traditional network infrastructure. Some world leading companies have already started to consider the utility of UAVs, such as Facebook \cite{facebook2014} and Google \cite{google2014}.

There is a growing number of works on the topic of drone BSs in cellular network. They consider various challenges, including optimal 3D deployment of drones, energy limitations \cite{chae2015iot}, interference management \cite{mozaffari2015drone} and path planning \cite{chi2012civil}. 

Regarding the optimal 3D deployment, the wireless communication model between drones and ground users is usually the first concern, which is quite different from conventional ground BSs to ground users. Different from the schemes of determining the positions of BSs, drones have another variable, i.e., altitude. This variable makes the wireless communication model between seriver and user very different from the 2D cases. In \cite{al2014modeling}, air-to-ground pathloss is modelled. It shows that there are two main propagation situations: UEs have Line-of-Sight (LoS) with drones and UEs have non-Line-of-Sight (NLoS) with drones due to reflections and diffractions. A closed form expression for the probability of LoS between drones and UEs is developed in \cite{al2014optimal}. The authors of \cite{mozaffari2015drone} investigate the maximum coverage by two drones in the presence and absence of interference between them. 

Based on this LoS probability model, two main two problems have been addresses: how to deploy a given number of drones, and how to figure out the minimum number of drones to serve all users. \cite{bor2016efficient} considers the problem of deploying one drone in 3D to maximize the number of the served users, and in the meantime, the users receive acceptable service. 
Exploring the relationship between vertical and horizontal dimensions, \cite{alzenad20173d} solves the problem in \cite{bor2016efficient} by turning the problem into a circle placement problem. \cite{sharma2016uav} studies the problem of deploying multiple drones. It develops an approach of mapping the drones to high traffic demand areas via a neural-based cost function. \cite{mozaffari2016efficient} derives the downlink coverage probability for drones as a function of the altitude (all drones are assumed to fly at the same altitude) and the antenna gain; and determines the locations of drones to maximize the total coverage area and the coverage lifetime using circle packing theory. The authors of \cite{kalantari2016number} consider the issue of the number of drones and where to deploy them in 3D environment; and design a PSO based heuristic algorithm, such that the users receive acceptable downlink rate. \cite{rohde2012interference} proposes a interference aware positioning approach for UAVs for cell overload and outage compensation.

There are also many publications on some other interesting aspects of drones. In \cite{yang2017proactive}, a proactive drone-cell deployment framework to alleviate the workload of existing ground BSs is proposed. It involves traffic models for three typical social activities: stadium; parade and gathering. Also, it presents a traffic prediction scheme based on the models. Further, it discusses an operation control method to evenly deploy the drones. 
\cite{becvar2017performance} studies the scenario where a drone serves users moving along a street. It compares with approaches of using small cell base stations and it shows that the flying BS introduces a significant gain in channel quality and outperforms that using dense BSs in terms of throughput.
\cite{chen2017caching} proposes a proactive deployment of cache-enabled drones to improve the quality of experience at users. The drones cache some popular content based on a prediction model. Such cache is able to reduce the data packet transmission delay. 

It is also worth mentioning that the drone networks studied can be viewed as a special class of networked multi-agent control systems \cite{savkin2004coordinated, matveev2003problem, matveev2004problem, matveev2007analogue, ren2008distributed, dimarogonas2012distributed, cao2013overview, azuma2013broadcast, cassandras2013optimal,shamma2008cooperative}.

\section{Summary}
In this chapter, we present a brief review on the existing work related to our studied problems. Some typical reviewed approaches will be used for comparisons in the next chapters. For more in depth reviews, readers are referred to the survey papers and books \cite{ian02survey, mobilesinksurvey11, OR15survey, feng2013survey, Gu16, savkin2015book, hasan2011green, damnjanovic2011survey, aliu2013survey}. The literature review about using mobile sinks to collection data in Section \ref{ME_collection} are presented in \cite{huang17survey}. From the next chapter, we will present the main contributions of this report.

\chapter{Sink Tracking in a Wireless Sensor Network using a Topology Map and Robust Extended Kalman Filter}\label{tracking}

\minitoc
\section{Motivation}
When mobility is adopted in WSNs, one fundamental challenge is the location update of the mobile sink, which impacts on the data routing from sensor nodes to the mobile sink. 
This chapter visits the problem of how to maintain the current location of the mobile sink. We treat it as a target tracking problem. 

A common framework of target tracking works like this: when a target moves in a sensing field, the sensor nodes nearby the target can detect it. A centralized or distributed algorithm is employed to estimate the target location and/or predict the future position based on the sensor nodes' location information and their measured values such as the distances between sensor nodes and the target. There are two challenges in this common framework. The first one is the sensor nodes' location information may be costly to get. Second, the devices to measure distances are usually expensive and may be too large for the tiny nodes. Therefore, the methods which can do not depend on the nodes' locations and measuring distances are attractive for the networks which consist of a great number of inexpensive sensor nodes.

We propose an approach for such scenario and related operations using sensor nodes incapable of distance measurement. Topology Preserving Map (TPM) \cite{TPM14, gunathillake2016maximum, gunathillake2017topology} is a new technology for WSNs. It is a rotated and/or distorted version of the real physical node map to account for connectivity information inherent in Virtual Coordinates (VCs). Making use of the hop distances (VCs) between sensors and anchors instead of physical distance measurements, a set of Topological Coordinates (TCs) is generated by Singular Value Decomposition (SVD) method. The TCs provided by TPMs is an alternative for physical coordinates for some applications depending on connectivity and location information. TPM preserves the topology of the WSNs. Specifically, one sensor node's neighbour nodes are the same no matter in the physical domain or in TPM. This is the foundation of target tracking using TPM.

The basic idea of this chapter can be stated as follows. When a mobile sink moves into the sensing field, a subset of the sensor nodes detects the sink and keeps track of receiving probabilities (formally defined in the following section) from the mobile sink. Using the TCs of these sensor nodes, which can be obtained after the network is deployed, and the measured values, we employ an estimation and prediction algorithm to estimate the mobile sink location and predict its position in the sensor field m time steps into the future. We use this prediction to wake up a new subset of sensor nodes that are near the predicted position to detect the sink. This new subset of sensor nodes will detect the mobile sink when it arrives and the tracking process continues. 

Different from the common framework, we track the mobile sink in the sensing field using TPM. The expected benefit is that TPM releases the requirement of nodes' location as it provides an alternative tool to represent locations which are called TCs. However, it also poses its own challenge: distortion, compared to physical coordinate based maps. As shown in \cite{TPM14}, some sensor nodes are located along a straight line. But in its corresponding TPM, the formation of these sensor nodes may be a curve. Therefore, the transformation of movement from the physical domain to topology domain brings uncertainties, which we have to deal with to track the mobile sink successfully in topology domain. In this chapter, we use a robust algorithm for estimation and prediction to tackle the uncertainties.

We provide a comprehensive study on the evaluation of the mobile sink tracking performance using TPM. The performance is compared with the physical tracking where the sensor nodes are location aware and capable of measuring distances. As a distorted version of the physical map, the distance between two nodes in topology map is not the same as that in physical map. Thus, it is challenging to compare them directly. Three methods are proposed to evaluate the performance and they all avoid the problem of inconsistent units. The first method focuses on the prediction errors. Our solution is to transform the movement in topology map back to the physical domain. By doing this, we compare the prediction errors under the same unit. The second one is to investigate the number of the required sensor nodes. As the predicted position cannot be 100\% precise no matter in the physical map or topology map, we need to select a certain set of sensor nodes to detect the mobile sink each time slot. We argue the larger the prediction error is, the more sensor nodes are needed. So comparing the number of these sensor nodes is another option to evaluate the tracking performance. Thirdly, we pay attention to the effectiveness of the selected sets of sensor nodes. We will show in this chapter, the physical tracking is generally more accurate than that in topology domain. Thus, we investigate the percentage that the topology set can cover the physical set.

The main contribution of this chapter is an approach for sink tracking when sensor nodes are incapable of measuring distance, which is competitive with themes that are based on distance measurements. It includes:

\begin{itemize}
\item We argue the benefits of utilizing TPM for mobile sink tracking: the release of sensor nodes' location and distance measurement. Using sensor nodes' TCs and their measured values, we analyze the possibility of tracking the mobile sink in TPM and demonstrate it by simulations.
\item We employ Robust Extended Kalman Filter \cite{rekf99} to estimate the mobile sink's location and predict its future positions in TPM using a rough estimated motion model.
\item We provide extensive simulations in a network which consists of 550 sensor nodes. We target on different kinds of motions ranging from constant or varying speed to random direction movement where the speed can be constant as well as varying.
\item We propose three methods to evaluate the proposed work. These methods avoid the challenging problem of different units in the physical domain and topology domain.
\end{itemize}

The rest of this chapter is organized as follows. 
Section \ref{tracking_statement} formulates the problem. It describes the Topological Preserving Maps first and then discusses the sink mobility model and the sensors' measurement model. Section \ref{tracking_kf} demonstrates the Robust Extended Kalman Filter (REKF) that we use to solve the problem. In Section \ref{tracking_topo}, we demonstrate the possibility of tracking using TPM. We provide the estimation performance of our proposed method in constant speed movement as well as varying speed movement. Section \ref{trackiing_compare} focuses on the prediction performance of our method. We target on a general movement: random movement. To evaluate the proposed algorithm, we also employ REKF based method in the physical domain. Further we propose three methods to compare the proposed method with the physical case. Finally, Section \ref{conclusion_tracking} concludes the chapter.

\section{Problem Statement}\label{tracking_statement}
\subsection{Problem Statement}
In this section, we formally describe the studied problem. The problem of mobile sink tracking in TPM involves two types of domains: physical domain and topology domain respectively.

We have a sensing field covered by a wireless sensor network ($S$) which consists of $n$ sensor nodes. Using the technology of TPM, every sensor node has their own TCs. A mobile sink moves in this field. The sensor nodes can output a measured value when the mobile sink moves near them. As we have no knowledge of the sensor nodes' location information, tracking the target in the physical domain is difficult. Thus, we introduce a transformation, i.e. the mobile sink's movement transformation from the physical domain to topology domain. This transformation requires a location estimation method to find out the mobile sink's location in TPM based on the detecting sensor nodes' TCs and their measured values. After having the location in TPM, the next sub-problems are: 1) how to predict its future position in TPM; 2) how to activate sensor nodes to detect the mobile sink. The final task is to use the activated sensor nodes to track the mobile sink.
\subsection{Topology Preserving Maps}
A Virtual Coordinate System (VCS) is based on anchors, which are selected randomly or by a well defined strategy \cite{anchor11}. Each node in the network is represented by a VC vector, indicating the shortest hops to each anchor. Directional information is not involved in VCS. \cite{tpm10topology} presents a scheme to extract directional information from VCs in the form of TCs. A network's TPM is based on TCs. It has been shown to preserve relative position information. In this chapter, we consider dealing with mobile sink tracking in the topology domain rather than the physical domain.

Consider a network with $n$ sensors and $m$ of them are selected as anchors. Thus, each sensor is represented by a VC vector with $m$ elements. Let $P$ be a $n\times m$ matrix containing VCs of all $n$ nodes. Considering the high computation cost of Singular Value Decomposition (SVD) in extracting TCs from $P$, calculating SVD items from anchor set is an alternative for large-scale networks. Let $AN$ be a $M\times M$ matrix containing VCs of only the anchor nodes. The way of producing TCs from $AN$ is presented below:
\begin{equation}\label{eq1}
AN=U_A \cdot S_A\cdot V_A^T
\end{equation}
\begin{equation}\label{eq2}
P_{SVD}=P \cdot V
\end{equation}
\begin{equation}\label{eq3}
[X_{TC},Y_{TC}]=[P_{SVD}^{(2)},P_{SVD}^{(3)}]
\end{equation}
where $P_{SVD}^{(2)}$ and $P_{SVD}^{(3)}$ in (\ref{eq3}) are the second and third columns of $P_{SVD}$ in (\ref{eq2}), and the set $\{(X_{TC},Y_{TC})\}$ consists of  coordinate pairs which form the TPM and play a fundamental role in our work. More details about the TPMs generalization method can be found in \cite{TPM14}.
\subsection{Motion Model in Physical Domain}
We focus on the following time-continuous mobility model and measurement model in physical domain
\begin{equation}\label{eq4_system_model}
\begin{cases}
\dot{x}=A(x)+Bw_n \\
y=C(x)+v_n\\
\end{cases}
\end{equation}
where $x$ is the state of the system. It may include the position of the mobile sink as well as the velocity and acceleration. $y$ is the measured output values of the nodes.  $A(\cdot)$ is the dynamic model, $B$ is a given matrix. $C(\cdot)$ is the measurement function. $w_n$ is the process noise and $v_n$ is measurement noise.
The dynamic model can describe various movements. For example, when $A(\cdot)$ is a linear function of the state $x$, we can use this model to represent nearly constant speed movement or nearly constant acceleration movement, both of which can be influenced by the noise. When $A(\cdot)$ is nonlinear function, it can describe more complex mobility.

$C(\cdot)$ depends on the measurement type of the sensor nodes. In this chapter, we use the sensor nodes which can output a receiving probability based on the distance between the mobile sink and themselves. So it is a nonlinear function of the system state. We provide more details in the following part.
\subsection{Measurement model}
In this part, we introduce a function $C(d)$ for sensor nodes which describes the probability of getting a signal from the mobile sink when the mobile sink is at distance $d$ from the sensor node. This function satisfies the following constraints:
\begin{equation}\label{eq5}
  \begin{aligned}
  0\leq C(d)\leq 1,\forall d\\
  C(d_1)\leq C(d_2),\forall d_1 \geq d_2 \\
  C(d)=0,\forall d>d_0
  \end{aligned}
\end{equation}
where $d_0$ is some given distance. Such function is called the receiving probability function. An extreme example of such a function is
\begin{equation}\label{eq6}
C(d)=
\begin{cases}
0,d>d_0 \\
1,d\leq d_0 \\
\end{cases}
\end{equation}

Obviously function (6) satisfies all the conditions (5) and this function has the same formulation of binary sensor nodes mentioned in the related work part. In binary sensor nodes, if the distance between the mobile sink and the node is no larger than $d_0$, it is often assumed that the node can detect the mobile sink. However, function (6) is discontinuous at $d_0$ which is difficult to be implemented in our algorithm. Thus we will use the following example of such a function
\begin{equation}\label{eq7}
C(d)=
\begin{cases}
\left( \frac{1}{d_0}-\frac{1}{a} \right) d+1,& 0 \leq d\leq a \\
\frac{a}{d_0(a-d_0)}d-\frac{a}{a-d_0},& a<d\leq d_0 \\
0,& d>d_0
\end{cases}
\end{equation}
where $0<a<d_0$. If $d=\sqrt{(X-X_i)^2+(Y-Y_i)^2}$ is the distance between the mobile sink and the sensor $i$, the sensing range of sensors, and $(X_i,Y_i)$ is the TCs of sensor $i$. As $d$ is a function of the mobile sink’s state, the receiving probability function $C(d)$ can also be formulated as $C(x)$. Thus the output of the sensor nodes is represented by adding the additive noise to $C(x)$, i.e. $y=C(x)+v_n$. Fig. 1 demonstrates the illustrative example of the receiving probability function. 

We can see from Fig. 1, when $a\rightarrow d_0$ , measurement function (7) tends to be the binary model (6).
\begin{figure}[t!]
\begin{center}
{\includegraphics[width=0.4\textwidth,height=0.3\textwidth]{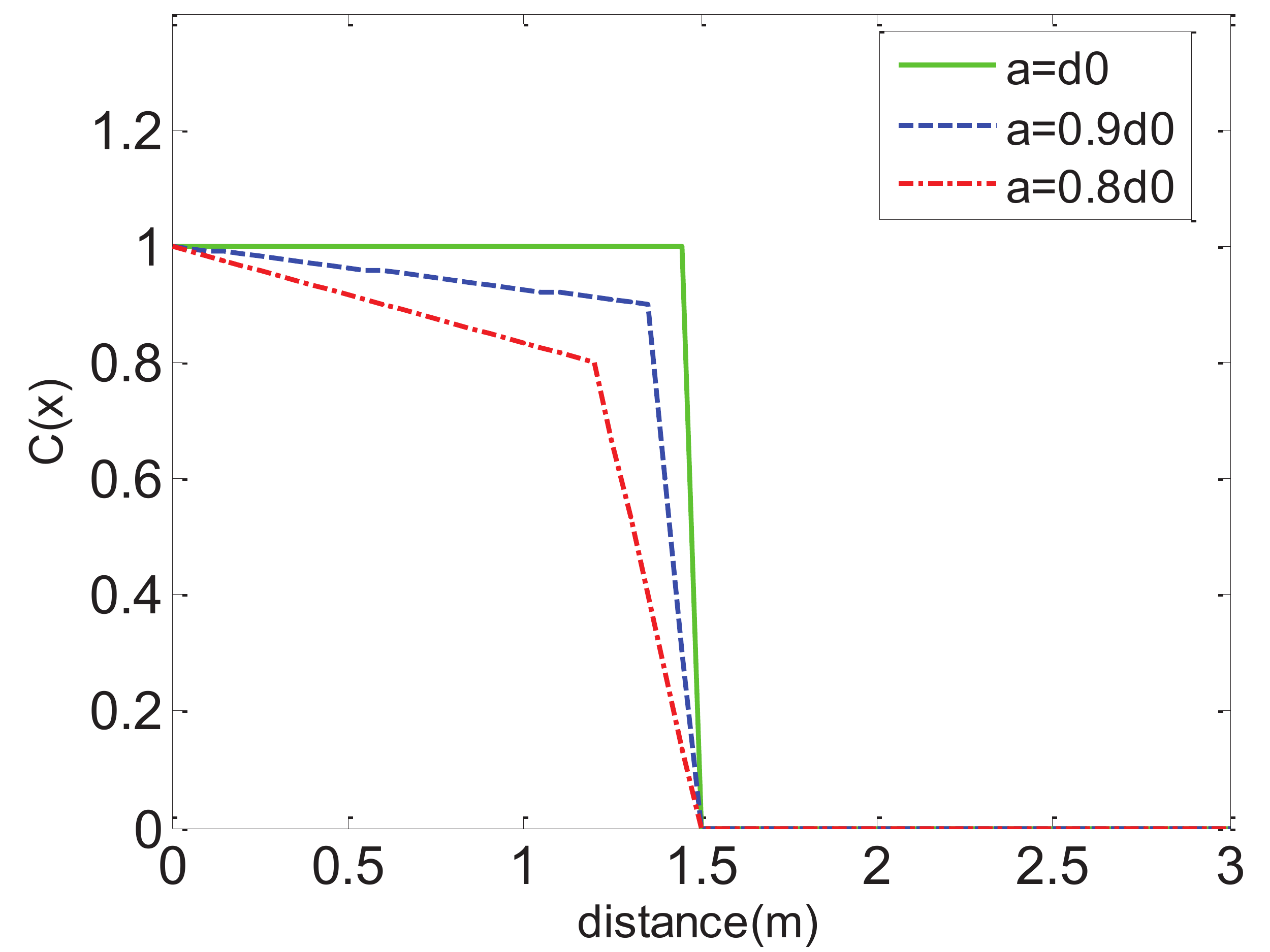}}
\caption{Measurement functions}
\end{center}
\end{figure}
\subsection{Movement Generation in Topology Domain}
Before providing the motion model in topology domain, we will first look at what the difference is between physical domain and topology domain. An illustrative example is shown in Fig. \ref{figure2}. Fig. \ref{figure2}(a) displays the trajectory from middle bottom to middle top in physical domain and Fig. \ref{figure2}(b) shows the corresponding trajectory from right to left in topology domain. We select 5 anchors following the Extreme Node Search (ENS) method \cite{anchor11}.

As mentioned before, TPM is a distorted version of the real physical node map. Various examples in \cite{TPM14} can demonstrate this character. Thus, the mobile sink's corresponding movement in topology map may not be transformed linearly from the physical domain. Therefore, in order to track the mobile sink in TPM, we have to provide a way to generate the mobile sink's movement. In this chapter, we follow the same method used in \cite{TPMtrack13}. Consider the mobile sink is surrounded by  $l$ neighbour nodes. Let $MT$ (a 1-by-$m$  vector) be the mobile sink's VCs and $P_{N_i}$(a 1-by-$m$ vector) be the $i^{th}$ neighbour node's VCs. So the mobile sink's TCs can be generated based on (\ref{eq2}) and (\ref{eq3}):
\begin{equation}\label{eq8}
  \begin{aligned}
  &MT_{SVD}=\left( \sum_{i=1}^l P_{N_i}/l \right) V_A \\
  &X_T=MT_{SVD}^2 \\
  &Y_T=MT_{SVD}^3 \\
  \end{aligned}
\end{equation}

We also consider the speeds along the two axes as state variables. Then the movement in topology domain can be formulated as follows
\begin{equation}\label{eq9_1}
\begin{cases}
\dot{x}_T=A_T(x_T)+B_T w_T \\
y_T=C_T(x_T)+v_T\\
\end{cases}
\end{equation}
where $x_T=[X_t,Y_t,\dot{X_t},\dot{Y_t}]$ is the state variable in TPM, $\dot{X_t}$ and $\dot{Y_t}$ are the velocity along each direction in TPM. $w_T$ is the process uncertainty and $v_T$ is measurement uncertainty.
\begin{figure}[t!]
\begin{center}
{\includegraphics[width=0.4\textwidth,height=0.6\textwidth]{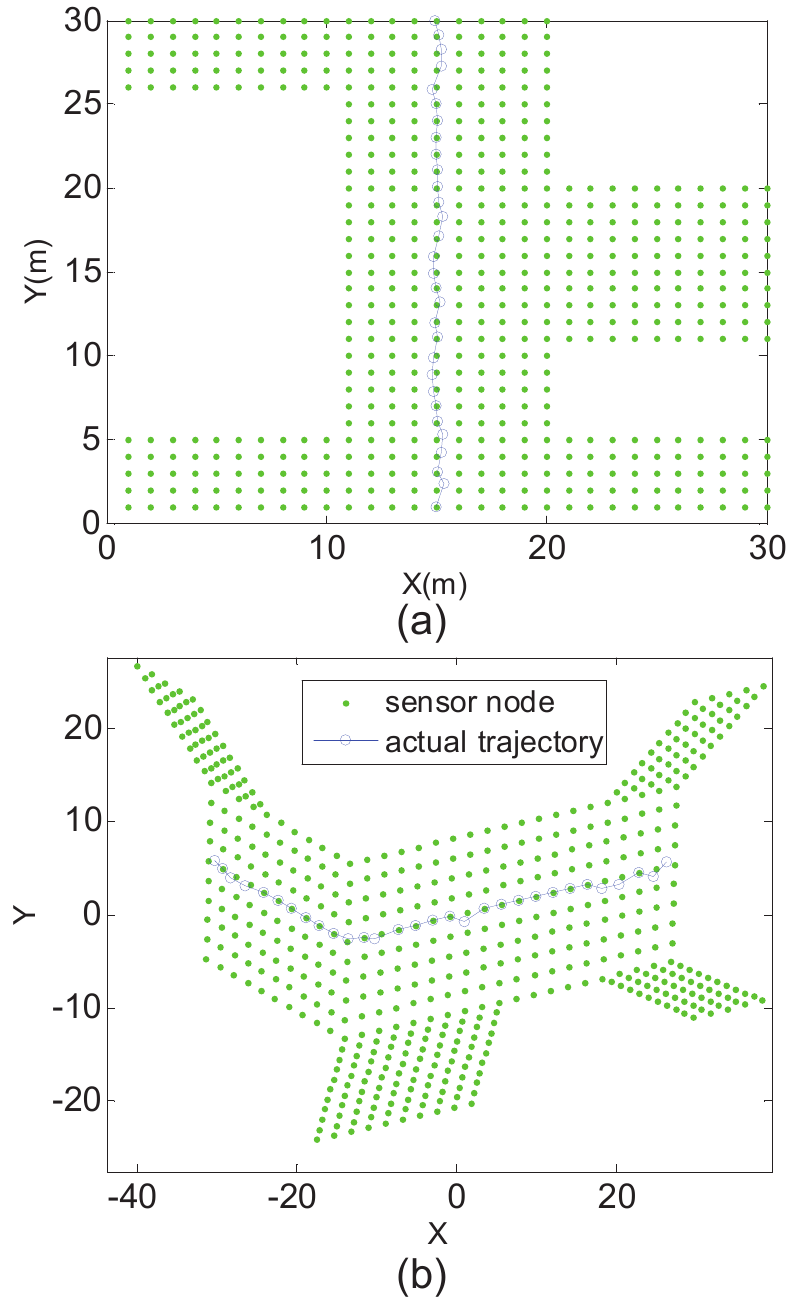}}
\caption{Illustration of the difference of trajectory in physical domain and topology domain. (a) In physical domain. (b) In topology domain.}\label{figure2}
\end{center}
\end{figure}

\section{Robust Extended Kalman Filter}\label{tracking_kf}
In the last section, we express the mobility model and measurement model with TPM. In this part, we will describe the robust mobile sink tracking method and the Robust Extended Kalman Filter is taken from \cite{rekf99} and originally this framework was presented in \cite{james1998nonlinear}. Instead of (\ref{eq4_system_model}), we consider a more general nonlinear uncertain system of the form
\begin{equation}\label{eq9}
\begin{cases}
\dot{x}=A(x,u)+Bw \\
z=K(x,u)\\
y=C(x)+v\\
\end{cases}
\end{equation}
where $z$ is the uncertainty output of this system and $K\geq0$ is the uncertainty model. The uncertainties  can be defined by a nonlinear integral constraint shown below:
\begin{equation}\label{eq10}
\Phi(x(0))+ \int_{0}^s L_1(w(t),v(t))dt \leq e+ \int_{0}^s L_2(z(t))dt
\end{equation}
where $\Phi$, $L_1$, and  $L_2$ are all bounded nonnegative functions with continuous partial derivatives satisfying growth conditions of the type
\begin{equation}\label{eq11}
  \begin{aligned}
\parallel \phi(x)-\phi(x') \parallel \leq \sigma \left( 1+ \parallel x \parallel +\parallel x' \parallel \right) \parallel x-x' \parallel
\end{aligned}
\end{equation}
where $e$ is a positive real number, $\parallel\cdot\parallel$ is the Euclidean norm with $\sigma>0$, and $\phi=\Phi,L_1,L_2$. Uncertainty input $w(\cdot)$ and $v(\cdot)$ satisfying this condition are called admissible uncertainties. Note, the constraint (\ref{eq10}) is a general case of the so-called integral quadratic constraint; see e.g. \cite{peterson2000robust}.

We consider the problem of characterizing the set of all possible states $X_s$ of (\ref{eq9}) at time $s\geq0$ which are consistent with a given control input $u^0(\cdot)$ and a given output path $y^0(\cdot)$; i.e., $x\in X_s$ if and only if there exist admissible uncertainties such that if $u^0(t)$ is the control input and $x(\cdot)$ and $y(\cdot)$ are resulting trajectories, then $x(s)=x$, $y(t)=y^0(t)$, for all $0\leq t\leq s$.

\subsection{State Estimator}
The state estimation set $X_s$ is characterized in terms of level sets of the solution $V(x,s)$ of the Partial Differential Equation (PDE) (\ref{eq12}).
\begin{equation}\label{eq12}
\begin{aligned}
&\frac{\partial}{\partial t} V+ \smash{\displaystyle\max_{w\in R^m}} [\nabla_x V_.(A(x,u^0)+bw) \\
&-L_1(w,y^0-C(x))+L_2(K(x,u^0))]=0
\end{aligned}
\end{equation}

The PDE (\ref{eq12}) can be viewed as a filter, taking observations $u^0(\cdot)$, $y^0(\cdot)$, $0\leq t\leq s$ and producing the set $X_s$ as output. The state of this filter is the function; thus $V$ is an information state for the state estimation problem.

Theorem: Assume the uncertain system (\ref{eq9}), (\ref{eq10}) satisfies the assumptions given above. Then the corresponding set of possible states is given by
\begin{equation*}
X_s=\{x\in R^n : V(x,s)\leq d\} 
\end{equation*}
where $V(x,t)$ is the unique viscosity solution of (\ref{eq11}) in $C(R^N\times [0,s])$.

Proof can be seen in \cite{rekf99}.

Here we consider an approximation to the PDE (\ref{eq12}), which leads to a Kalman filter-like characterization of the set $X_s$. Reference \cite{rekf99} presents this as an extended Kalman filter solution to the set-value state estimation problem for a linear plant with the uncertainty described by an Integral Quadratic Constraint (IQC). This IQC is also presented as a special case of (\ref{eq10}). We consider the uncertain system described by (\ref{eq9}) and an IQC of the form
\begin{equation}\label{eq13}
\begin{aligned}
&\left( x(0)-x_0 \right)'N \left(x(0)-x_0 \right) \\
&+\frac{1}{2}\int_0^s \left( w(t)'Q(t)w(t)+v(t)'Rv(t)\right)dt \\
&\leq e+ \frac{1}{2}\int_0^s z(t)'z(t) dt \\
\end{aligned}
\end{equation}
where $N>0$, $Q>0$ and $R>0$. For (\ref{eq9}) and (\ref{eq10}) the PDE (\ref{eq12}) can be written as
\begin{equation}\label{eq14}
\begin{aligned}
& \frac{\partial}{\partial t}V+ \nabla V_.A(x,u^0)+\frac{1}{2} \nabla_xV_.BQ^{-1}B\nabla_xV_.'\\
& -\frac{1}{2} \left( y^0-C(x) \right)' R \left( y^0-C(x) \right)\\
& +\frac{1}{2} K(x-u^0)'K(x-u^0)=0\\
& V(x,0)=\left( x(0)-x_0 \right)'X_0 \left( x(0)-x_0 \right)\\
\end{aligned}
\end{equation}

We now consider a function $x(t)$ defined as
\begin{equation*}
\hat{x}(t)\triangleq \smash{\displaystyle\max_{w\in R^m}} V(x,t)
\end{equation*}
Then it follows
\begin{equation}\label{eq15}
\nabla_x V \left( \hat{x}(t),t \right)=0
\end{equation}
And
\begin{equation}\label{eq16}
\nabla_x^2 V\left( \hat{x}(t),t \right)\dot{\hat{x}}(t)+ \frac{\partial}{\partial t}\nabla_x V\left( \hat{x}(t),t \right)\dot{\hat{x}}(t)'=0
\end{equation}

If we take the gradient with respect to $x$ of PDE (\ref{eq14}) and evaluate the yielding equation at $x=\hat{x}(t)$, and use (\ref{eq15}) and (\ref{eq16}), we obtain 
\begin{equation*}
\begin{aligned}
\nabla_x^2 V\left( \hat{x}(t),t \right)\dot{\hat{x}}(t)=& \nabla_x^2 V\left( \hat{x}(t),t \right) A(\hat{x}(t),u^0)\\
& +\nabla_x C(\hat{x}(t))'R(y^0-C(\hat{x}(t)))\\
& +\nabla_x K(\hat{x}(t),u^0)'K(\hat{x}(t),u^0)\\
& \hat{x}(0)=x_0
\end{aligned}
\end{equation*}

If the matrix $\nabla_x^2 V\left( \hat{x}(t),t \right)$ is nonsingular for all $t$, we can rewrite this equation as
\begin{equation*}
\begin{aligned}
\dot{\hat{x}}(t)=&A(\hat{x}(t),u^0)+\\
&\left[\nabla_x^2 V\left( \hat{x}(t),t \right) \right]^{-1} [\nabla_x C(\hat{x}(t))'R(y^0-C(\hat{x}(t)))+\\
& \nabla_x K(\hat{x}(t),u^0)'K(\hat{x}(t),u^0)]\\
& \hat{x}(0)=x_0
\end{aligned}
\end{equation*}

We earn a rough solution to the PDE (\ref{eq14}) by approximating $V(x,t)$ with a function of the form
\begin{equation}\label{eq17}
\widetilde{V}(x,t)=\frac{1}{2}\left(x-\widetilde{x}(t)\right)'N(t)\left(x-\widetilde{x}(t)\right)+\phi(t)
\end{equation}

The resulting (\ref{eq18}), (\ref{eq19}), and (\ref{eq20}) define our approximate solution to PDE (\ref{eq14})
\begin{equation}\label{eq18}
\begin{aligned}
\dot{\widetilde{x}}(t)=&A(\hat{x}(t),u^0)+\\
& X^{-1}[\nabla_x C(\hat{x}(t))'R(y^0-C(\hat{x}(t)))+\\
& \nabla_x K(\hat{x}(t),u^0)'K(\hat{x}(t),u^0)]\\
& \hat{x}(0)=x_0
\end{aligned}
\end{equation}
where $X$ is defined as the solution to the Riccati Differential Equation (RDE)
\begin{equation}\label{eq19}
\begin{aligned}
&\dot{X}+\nabla_x A(\widetilde{x},u^0)'X+X\nabla_x A(\widetilde{x},u^0)\\
&+XBQ^{-1}BX -\nabla_x C(\widetilde{x})'R\nabla_x C(\widetilde{x}) \\
&+\nabla_x K(\widetilde{x},u^0)'\nabla_x K(\widetilde{x},u^0)=0 \\
& X(0)=N
\end{aligned}
\end{equation}
and
\begin{equation}\label{eq20}
\begin{aligned}
\phi(t)=&\frac{1}{2}\int_{0}^t [(y^0-C(x))'R(y^0-C(x)) \\
&-K(\widetilde{x},u^0)'K(\widetilde{x},u^0)]d\tau \\
\end{aligned}
\end{equation}

Hence, it follows from Theorem 1 that an approximate formula for the set $X_s$ is given by
\begin{equation*}
\widetilde{X}_s=\{x\in R^n:\frac{1}{2}(x-\widetilde{x}(s))'X(s)(x-\widetilde{x}(s))\leq e-\phi(t)\}
\end{equation*}

This amounts to the so-called Robust Extended Kalman Filter (REKF) generalization presented in \cite{rekf99}.

Comparing system models (\ref{eq4_system_model}) and (\ref{eq9}), we can simplify the REKF formulas as follows.

Substitute $P(t)=X^{-1}(t)$ in (\ref{eq19}), the corresponding Riccati Differential Equation for the mobile sink's motion model is:
\begin{equation}\label{eq21}
\begin{aligned}
&\dot{P}=A'P+PA+BQ^{-1}B' \\
&+P[K'K-\nabla_x C(\hat{x})'RC(\hat{x})]P \\
&P(0)=N^{-1}
\end{aligned}
\end{equation}
And from (\ref{eq17}), the state estimator for our system is:
\begin{equation}\label{eq22}
\begin{aligned}
&\dot{\hat{x}}=A\hat{x}+P[\nabla_x C(\hat{x})'R(y-C(\hat{x}))+K'K\hat{x}]\\
&\hat{x}(0)=x_0\\
\end{aligned}
\end{equation}

\section{Sink Tracking in Topology Domain}\label{tracking_topo}
In this part, we provide the possibility for tracking mobile sink using REKF in topology domain. We use two different motion models to demonstrate the proposed method.
\subsection{Mobility Models}
We use two mobility models to evaluate the proposed method. Firstly we consider a simple movement (Motion 1) in physical domain, i.e. constant speed movement. This motion model falls into some realistic applications, e.g. tracking a car which is moving along a straight road. In this case the system state variable is specified as $x=[X,\dot{X},Y,\dot{X}]$, where $X$ and $Y$ are the mobile sink's physical coordinates, $X$ and $Y$ are the speeds along each axis. The dynamic matrix is as follows
\begin{equation*}
A=
\left(
  \begin{array}{cccc}   
    0 & 1 & 0 & 0 \\
    0 & 0 & 0 & 0 \\
    0 & 0 & 0 & 1 \\
    0 & 0 & 0 & 0 \\
  \end{array}
\right)
\end{equation*}

The second model (Motion 2) is more complex than Motion 1, i.e. varying speed movement. In this case the system state variable is $x=[X,\dot{X},\ddot{X},Y,\dot{Y},\ddot{Y}]$, where $X$ and $X$ are the accelerations along each axis. If the accelerations keep the same, the dynamic matrix is
\begin{equation*}
A=
\left(
  \begin{array}{cccccc}   
    0 & 1 & 0 & 0 & 0 & 0 \\
    0 & 0 & 1 & 0 & 0 & 0 \\
    0 & 0 & 0 & 0 & 0 & 0 \\
    0 & 0 & 0 & 0 & 1 & 0 \\
    0 & 0 & 0 & 0 & 0 & 1 \\
    0 & 0 & 0 & 0 & 0 & 0 \\
  \end{array}
\right)
\end{equation*}

Both of these two motion models are only used to describe the movements.
\subsection{Uncertainties}
In realistic mobile sink tracking applications, the precise mobility model of the mobile sink and the measurement model are often unavailable to users. The motion model and measurement model in TPM cannot be exactly known either. Therefore, the uncertainty in these models is a challenge in the implementation of our tracking algorithm. In this chapter, we consider both the process uncertainty and measurement uncertainty.

Our task is to construct the model (\ref{eq8}) to implement REKF. In realistic applications, we usually have no knowledge about the precise motion model of the mobile sink. Besides, the movement transformation (\ref{eq8}) from physical map to topology domain may be a nonlinear function, which is also unknown. Therefore, to estimate the movement in topology map, we roughly employ a nearly constant speed motion model: $\dot{x}_T=A_Tx_T+B_Tw_T$, where 
\begin{equation*}
A_T=
\left(
  \begin{array}{cccc}   
    0 & 1 & 0 & 0 \\
    0 & 0 & 0 & 0 \\
    0 & 0 & 0 & 1 \\
    0 & 0 & 0 & 0 \\
  \end{array}
\right)
\end{equation*}

We consider an uncertainty output model which is a linear function of the system state, i.e. $z_T=Kx_T$, where $K=\varepsilon I$, $\varepsilon$ is a positive scalar and $I$ is a unit matrix. The process uncertainty is modelled as $w_T=\gamma z_T+w_n$, where $\gamma \leq 1$ is a given parameter and $w_n$ is the white noise. Substitute $z_T$ we have
\begin{equation}\label{eq23}
w_T=\gamma \varepsilon Ix_T+w_n
\end{equation}

Thus the nearly constant speed motion model for estimation in topology domain becomes:
\begin{equation}\label{eq24}
\dot{x}_T=A_Tx_T+B_Tw_T=A_Tx_T+B_T(\gamma \varepsilon Ix_T+w_n)
\end{equation}
Now, we introduce the uncertain measurement function. We add a sine function and the white noise to (\ref{eq7}). The estimated measurement function is as follows:
\begin{equation}\label{eq25}
C_T(x)=C(x)+\eta sin(\xi d)+v_n
\end{equation}
where $\eta$ and $\xi$ are parameters and $v_n$ is the white noise. To compare (\ref{eq25}) with (\ref{eq7}), we display the two functions in the Fig. \ref{measurementfunction}.
\begin{figure}[t!]
\begin{center}
{\includegraphics[width=0.4\textwidth,height=0.3\textwidth]{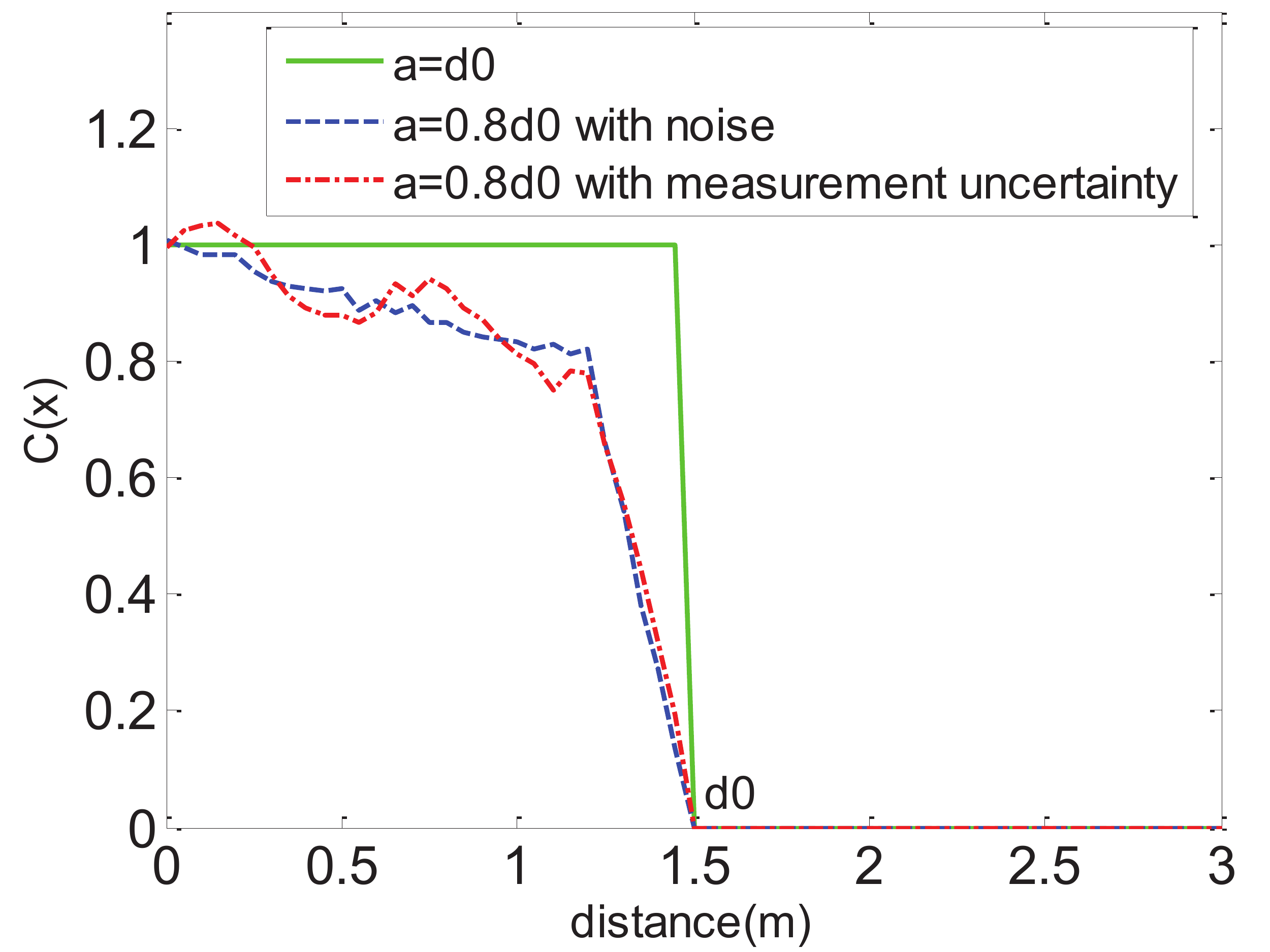}}
\caption{Measurement function with uncertainty.}\label{measurementfunction}
\end{center}
\end{figure}
\subsection{Simulations}
To examine the performance of the REKF in TPMs, we use a WSN with 550 nodes which are deployed in a 30$m$ * 30$m$ sensing field. The parameters and the uncertainty details for the simulation can be seen in Table \ref{simpara}.
\begin{table}
\begin{center}
\caption{Simulation Parameters and Uncertainty Details}\label{simpara} 
  \begin{tabular}{| c | c | c |}
    \hline
     \textbf{Parameter} & \textbf{Value} & \textbf{Comment} \\ \hline
    $n$ & 550 & Number of nodes\\ \hline
    $m$ & 5 & Number of anchors\\ \hline
    $s$ & 30$s$/60$s$ & Simulation time\\ \hline
    $d_0$ & 1.5$m$ & Sensing range\\ \hline
    $\bigtriangleup t$ & 1$s$ & Sampling interval\\ \hline
    $N$ & 0.1$I_4$ & Weighting on initial viscosity solution\\ \hline
    $w_n$ & 0.5$I_4$ & Covariance of process noise\\ \hline
    $v_n$ & 0.01$I_n$ & Covariance of measurement noise\\ \hline
    $Q$ & $I_4$ & Weighting on process uncertainty\\ \hline
    $R$ & $I_n$ & Weighting on measurement uncertainty\\ \hline
    $\gamma$ & 0.7 & Uncertainty parameter for motion model\\ \hline
    $\eta$ & 0.05 & Uncertainty parameter for measurement model\\ \hline
    $\xi$ & 10 & Uncertainty parameter for measurement model\\ \hline 
  \end{tabular}
\end{center}
\end{table}

Firstly, we provide a simulation using the illustrative example shown in Fig. \ref{figure2}. It is a nearly constant speed movement and the tracking result is shown in Fig. \ref{figure4}. The estimation error (defined as the distance between the actual position and the estimated position) of each step is shown in Fig. \ref{figure5}. The largest estimation error in this case is about 3.7 units and this is due to the initial condition. The estimation error in stable stages keeps at about 0.5 units in average.

The second simulation is to track the constant acceleration movement. Suppose a mobile sink makes a U-turn in the sensing field. The movement is displayed in Fig. \ref{figure6}. The mobile sink starts from (1,1) in the physical domain and its initial speeds along the two axis are 1.2$m/s$ and 0.48$m/s$ respectively. To make a U-turn as shown in Fig. \ref{figure6}, the acceleration along the two axis are set as -0.04$m/s^2$ and 0. It takes 60 $seconds$ to move to the top left. The movement trajectory in topological domain and the estimated trajectory are shown in Fig. \ref{figure7}. The estimation error in TPM is demonstrated in Fig. \ref{figure8}. In the stable stage, the estimation error also achieves 0.5 units in average.

From the above simulation results, the proposed method is able to track the movements with constant speed and varying speed as well in topology domain. In the next section, we evaluate the tracking performance in details.
\begin{figure}[t]
    \centering
    \begin{subfigure}[h]{0.45\textwidth}
        \includegraphics[width=\textwidth]{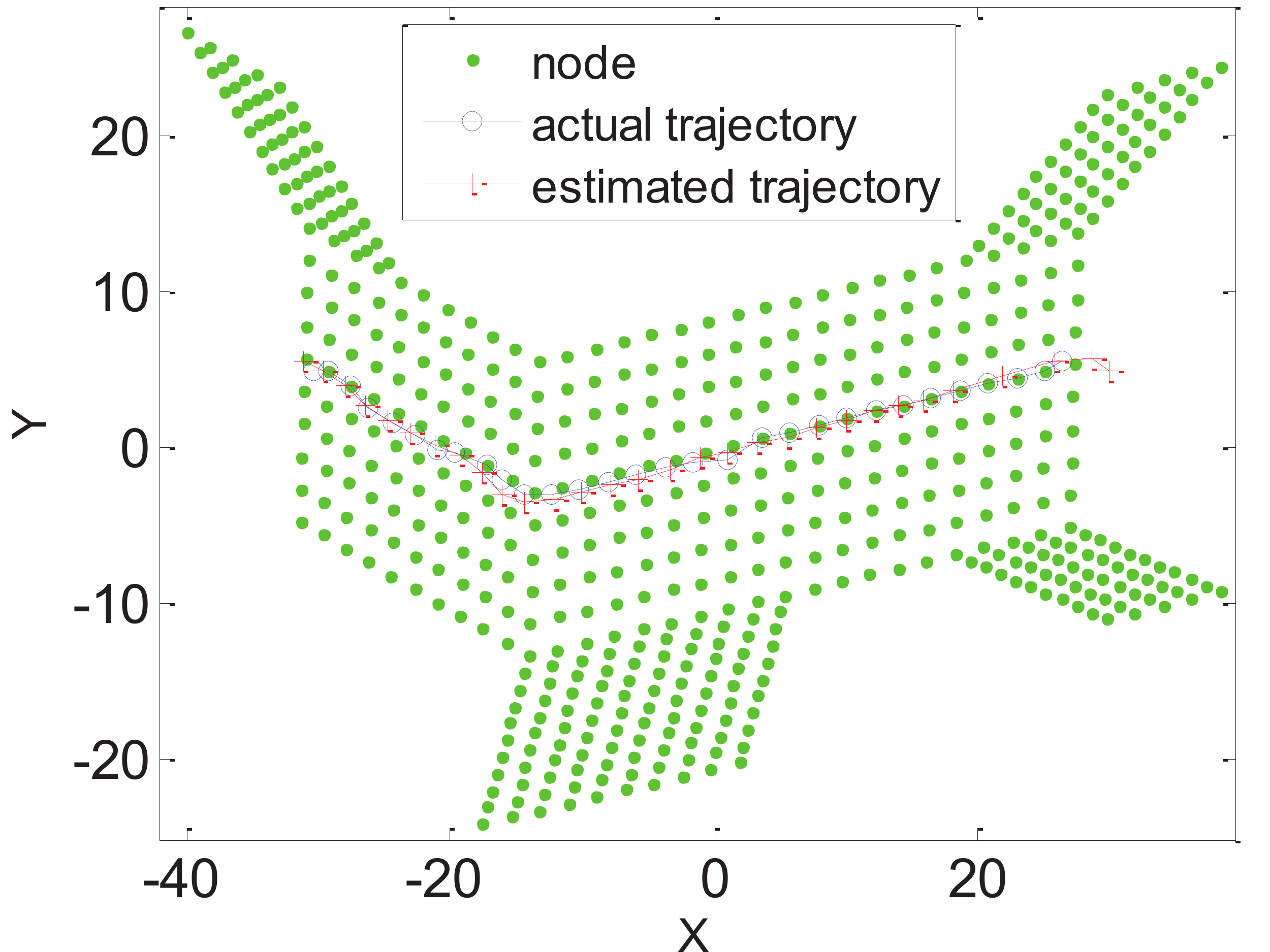}
        \caption{}
        \label{figure4}
        \end{subfigure}
    \begin{subfigure}[h]{0.45\textwidth}
        \includegraphics[width=\textwidth]{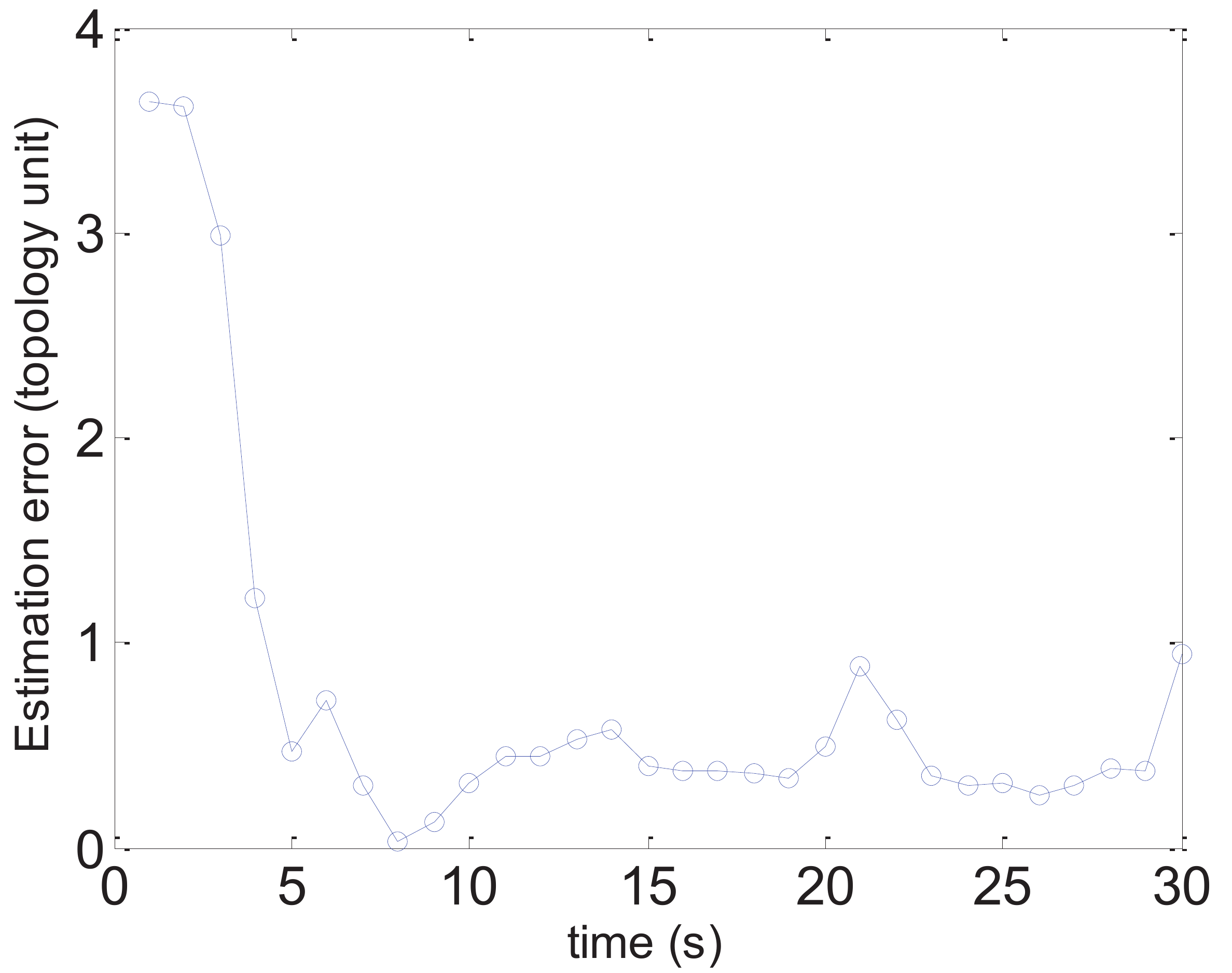}
        \caption{}
        \label{figure5}
    \end{subfigure}
    \caption{Motion 1. (a) Tracking result in TPM; (b) Errors.}
\end{figure}


\begin{figure}[t]
    \centering
    \begin{subfigure}[h]{0.45\textwidth}
        \includegraphics[width=\textwidth]{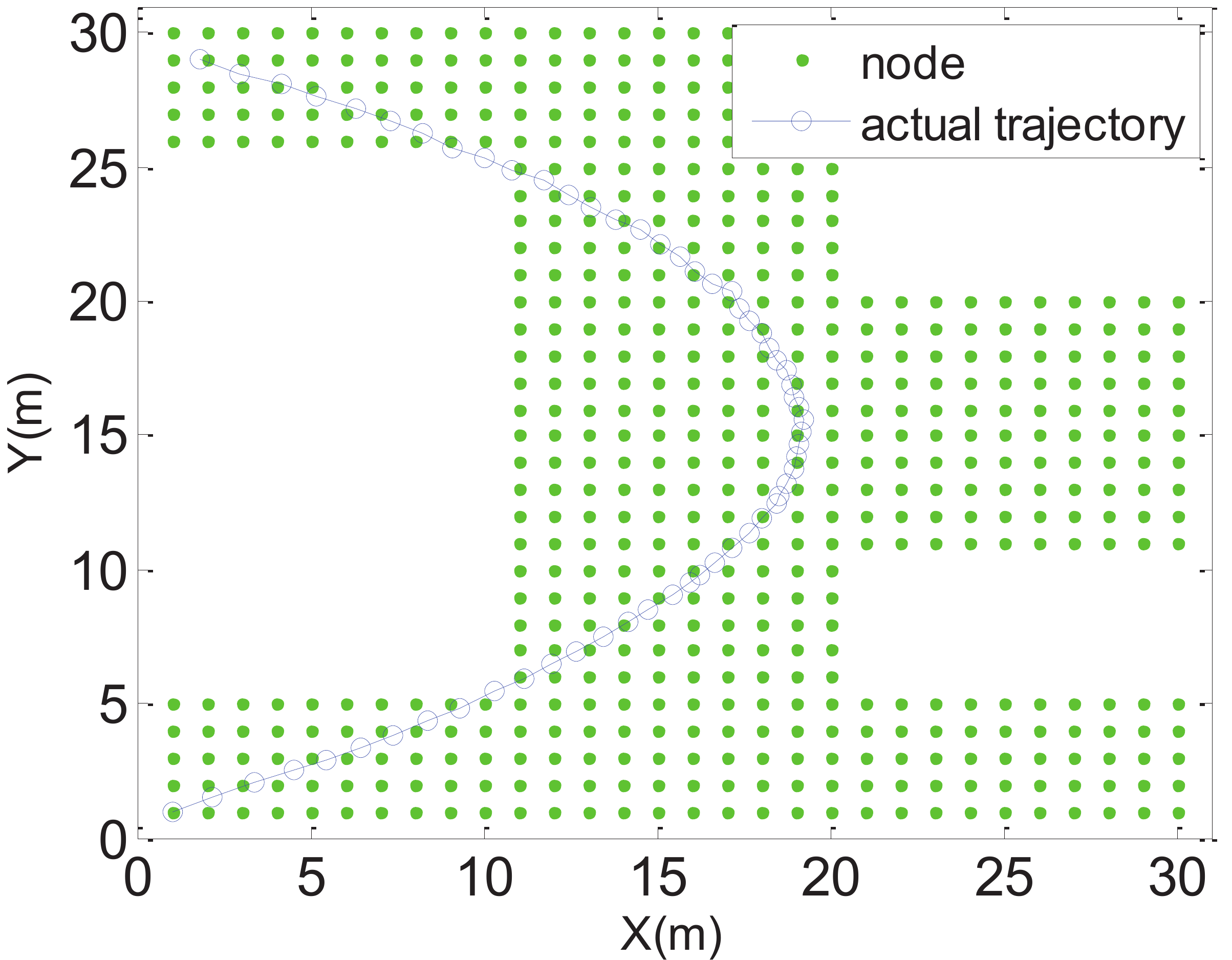}
        \caption{}
        \label{figure6}
        \end{subfigure}
    \begin{subfigure}[h]{0.45\textwidth}
        \includegraphics[width=\textwidth]{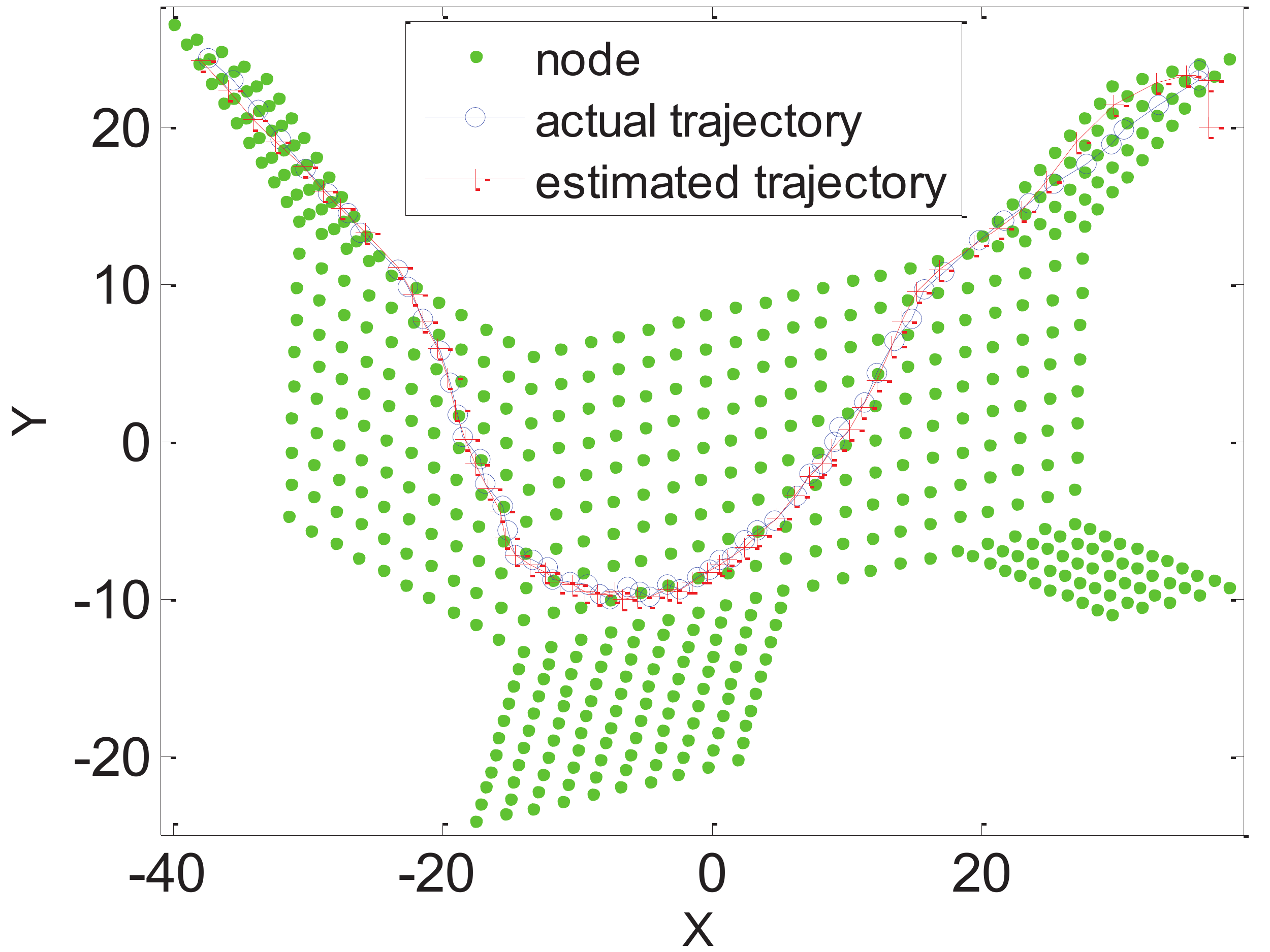}
        \caption{}
        \label{figure7}
    \end{subfigure}
    \begin{subfigure}[h]{0.45\textwidth}
        \includegraphics[width=\textwidth]{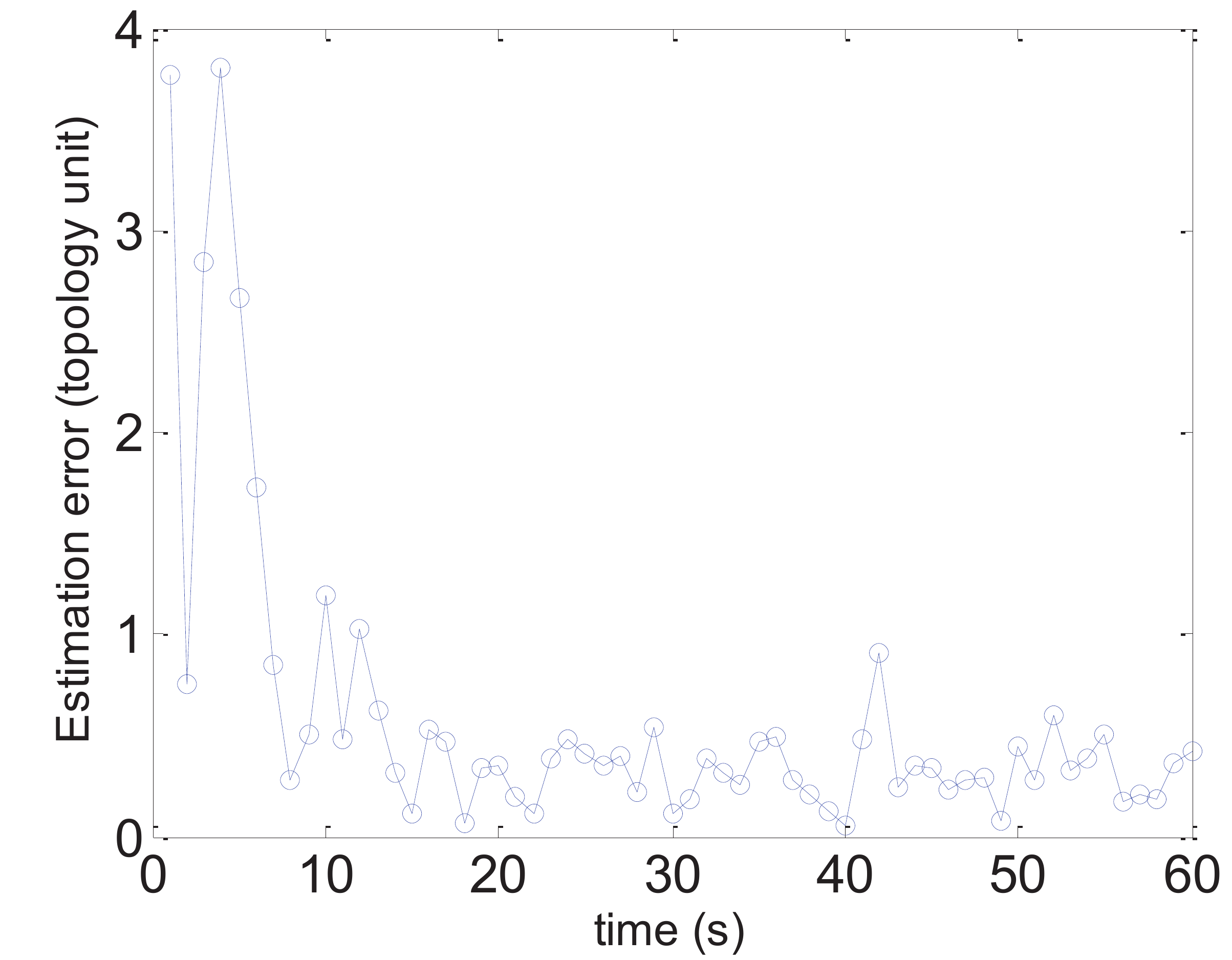}
        \caption{}
        \label{figure8}
    \end{subfigure}
    \caption{Motion 2. (a) Trajectory in physical domain; (b) Tracking result in TPM; (c) Errors.}
\end{figure}

\section{Topology Tracking VS Physical Tracking}\label{trackiing_compare}
In this part, we focus on the comparison of the tracking performances in topology domain and in physical domain. The last section demonstrates the effectiveness of REKF based tracking method in estimation of the mobile sink, while in this section we pay more attention to the prediction performance.
\subsection{Mobility Models}
We use a random movement \cite{mobility97} (Motion 3) to evaluate the proposed method. The speed and direction of movement in a new time period have no relation to the previous values. This model can generate mobile behaviour of pedestrians such as sharp turns or sudden stops. It has been studied in \cite{TPMtrack13} to evaluate the proposed method therein. But the speed of the mobile sink is required to be constant. In this chapter, we release this requirement and let the mobile sink move with varying speeds which have an upper bound. Besides, the mobile sink can change its direction by itself and it is assumed it must make a sharp turn when the mobile sink reaches the boundary of the sensing field.

In this case the system state variable can be represented by $x=[X,\dot{X},\ddot{X},Y,\dot{Y},\ddot{Y}]$, where $\ddot{X}$ and $\ddot{Y}$ are the accelerations along each axis.

In Motion 3 the different speeds may have two situations: 1) in different segments of the random movement, the speeds can distinguish from each other; but in each segment the speed keeps the same; 2) the speed in each segment of the movement is varying. These two cases are considered as a general version of the above Motion 1 and 2 respectively. In this section, we consider both of them.

\begin{figure}[t]
    \centering
    \begin{subfigure}[h]{0.45\textwidth}
        \includegraphics[width=\textwidth]{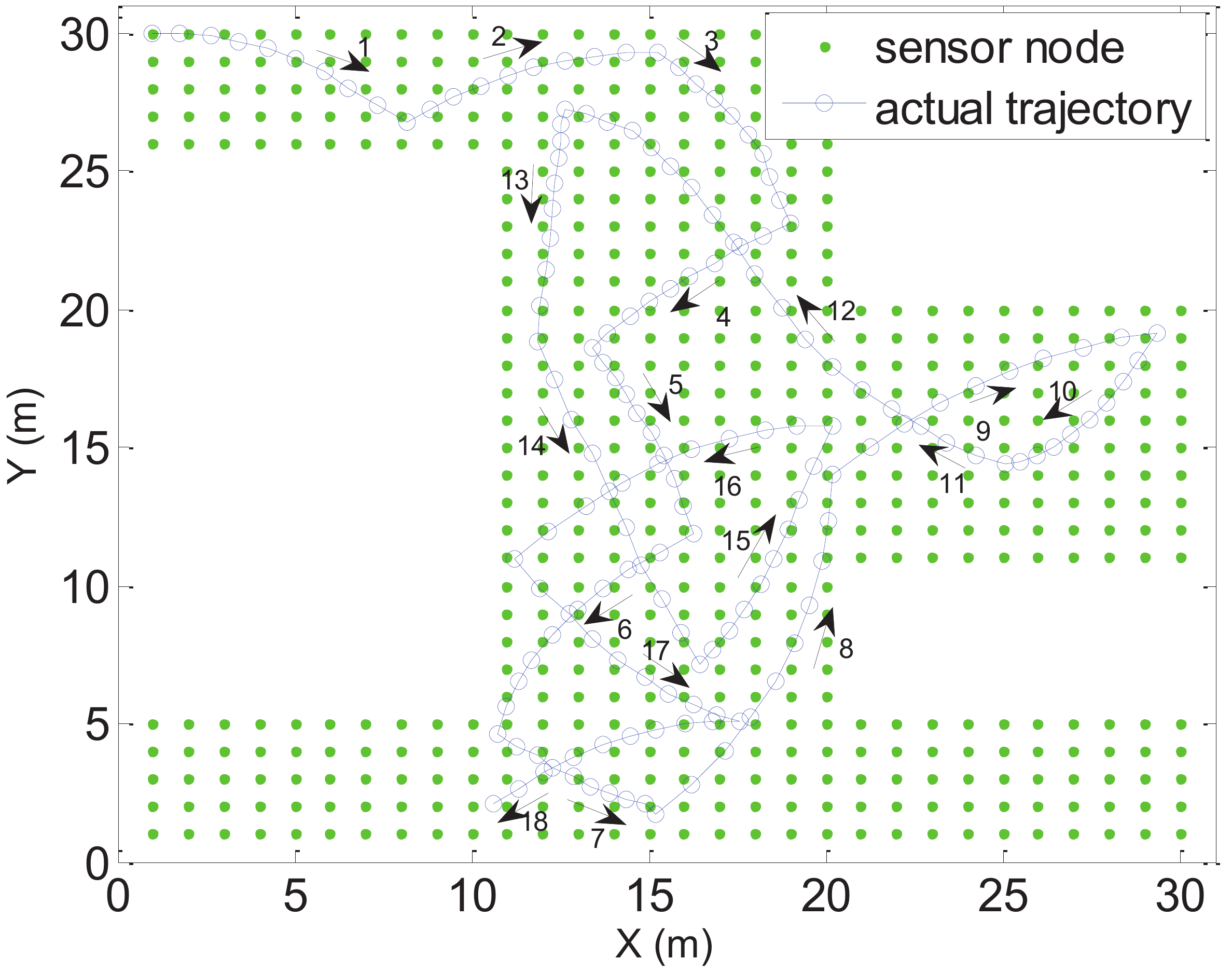}
        \caption{}
        \label{figure9}
        \end{subfigure}
    \begin{subfigure}[h]{0.45\textwidth}
        \includegraphics[width=\textwidth]{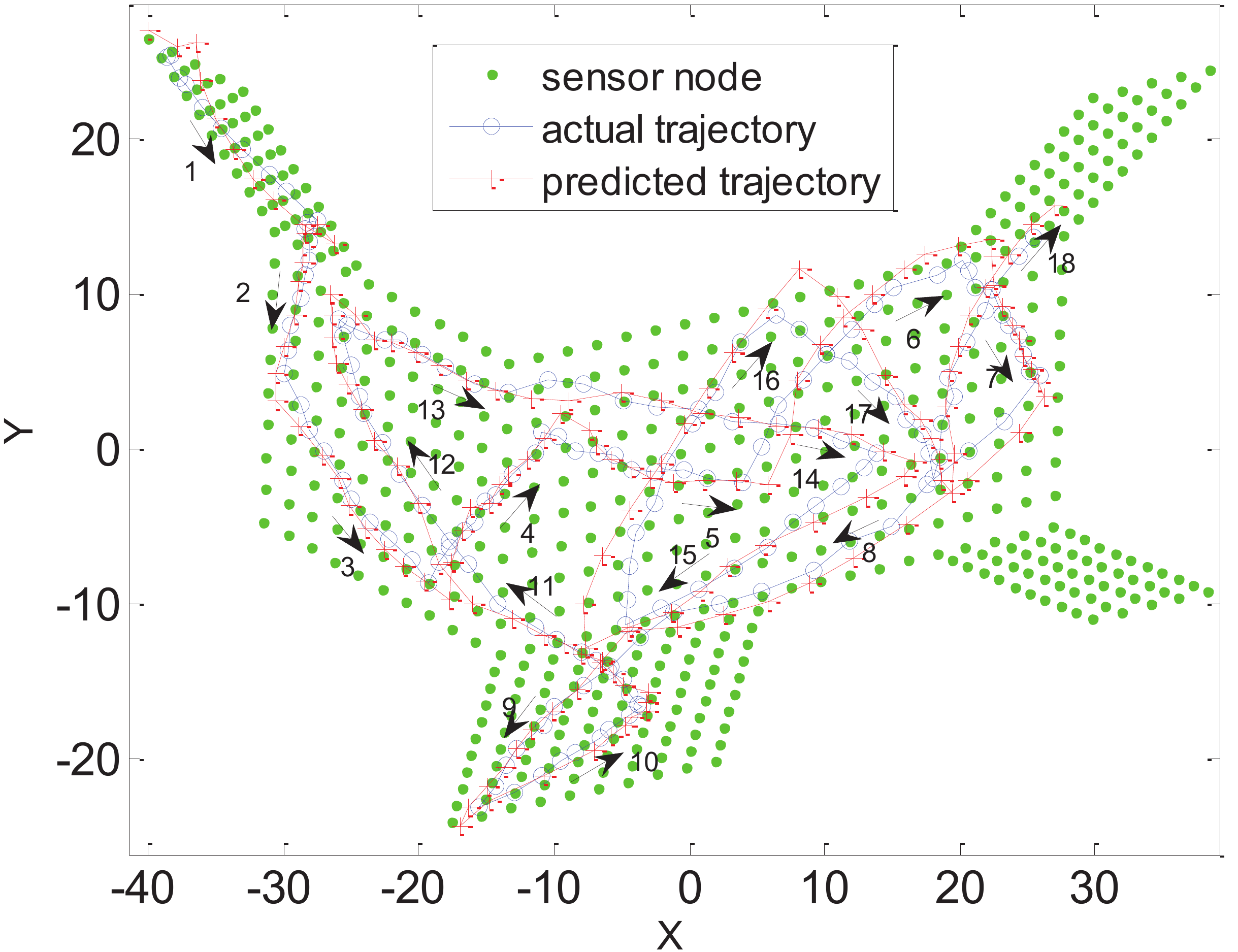}
        \caption{}
        \label{figure10}
    \end{subfigure}
    \begin{subfigure}[h]{0.45\textwidth}
        \includegraphics[width=\textwidth]{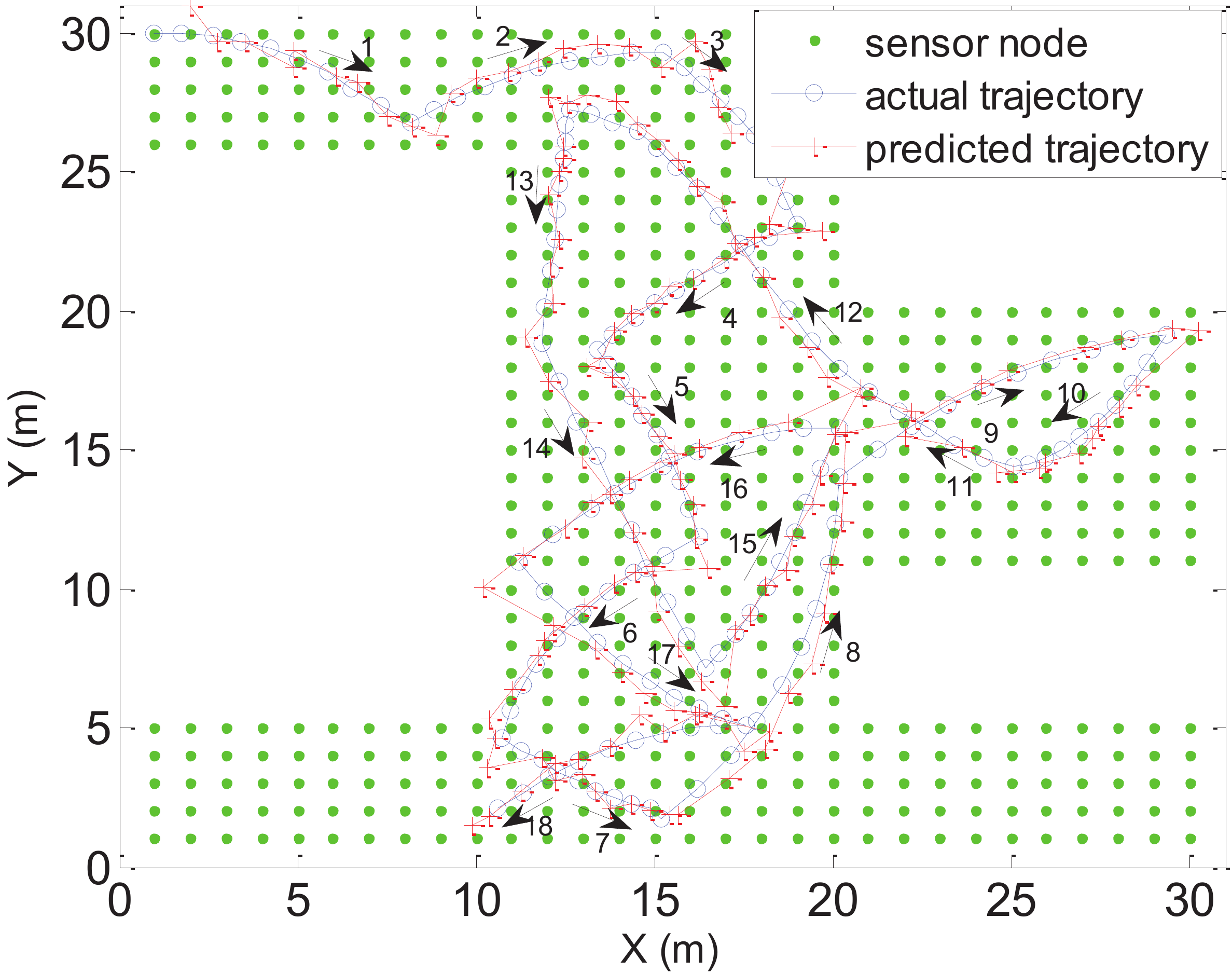}
        \caption{}
        \label{figure11}
    \end{subfigure}
    \caption{Random movement. (a) In physical domain; (b) Tracking in TPM; (c) Tracking in physical domain.}
\end{figure}

\subsection{Simulations}
Fig. \ref{figure9} shows a random movement in physical domain. The mobile sink moves in a random way (turning, accelerating or decelerating) and the number besides the arrow represents the moving order.

To track the mobile sink in topology domain, we first need to transform the movement into topology map. Then we need to implement the REKF based tracking algorithm. We still use a nearly constant speed movement model to estimate the mobile sink's location and predict its future position in topology map. The corresponding movement in topology map and the tracking result are both shown in Fig. \ref{figure10}. 

Different from Section \ref{tracking_topo}, we focus on the prediction performance here. So we only demonstrate the predicted trajectory in Fig. \ref{figure10}. From Fig. \ref{figure10} we can see that the REKF based estimation and prediction algorithm can make a good prediction for the mobile sink's future position. However, when the mobile sink makes a sudden turn, the prediction is not as accurate as the normal moving situations.

To evaluate the tracking performance using topology map, we also implement the REKF based algorithm in physical domain. We assume the sensor nodes in physical map are capable of measuring distance and their locations are known. The tracking result is shown in Fig. \ref{figure11}.

\subsection{Evaluations}
In this section, we compare the tracking performances demonstrated above in depth.

An intuitionist way to compare these two tracking results is based on the magnitudes of the prediction errors. No matter in which domain, the prediction error is defined as the gap between the actual position and the predicted position. We display the prediction errors in physical domain as well as in topology domain in Fig. \ref{figure12}. 

From Fig. \ref{figure12}, we can see the magnitudes of the prediction errors in topology domain are about twice of that in physical tracking. However, as the topology map is a distorted version of the physical coordinate based map, the map size and the space between two nodes may be different in two maps. Thus, we cannot evaluate the tracking performance by comparing the magnitude of the prediction error directly. One trial is to relate the two units of distance in the two domains. But we find this relation is not consistent across the whole network. Let's take the network used in this chapter as an example (Fig. \ref{figure2}). In the physical domain, all the sensor nodes are uniformly deployed. The space between any two sensor nodes is the same (Fig. \ref{figure2}(a)). However, the situation in its topology map is different (Fig. \ref{figure2}(b)). In some parts such as the top left part, it is denser than some other parts such as the middle part. It is possible to use 6 functions to relate the two units for this example. But this method is not feasible for large-scale and complex networks such as non-uniformly deployed networks. Therefore, we need to find a bridge to link these two maps.

In this chapter we provide three methods to evaluate the tracking performances in the two domains. The basic idea of the first method is: since topology domain has a different set of units, we may transform one result into the other's domain. Thus, these two tracking results can be compared under the same unit. We will first discuss the possibility of this method and then provide the comparison result.

As shown in \cite{TPM14}, any two adjoined nodes in physical domain are still next two each other in topology domain. In addition, any points on the link of these two nodes in physical domain are still between them in topology domain. So any predicted position in topology domain has a corresponding point in the physical domain. Now the problem is to find a way which transforms a point in topology domain to physical domain.

\begin{figure}[t]
    \centering
    \begin{subfigure}[h]{0.45\textwidth}
        \includegraphics[width=\textwidth]{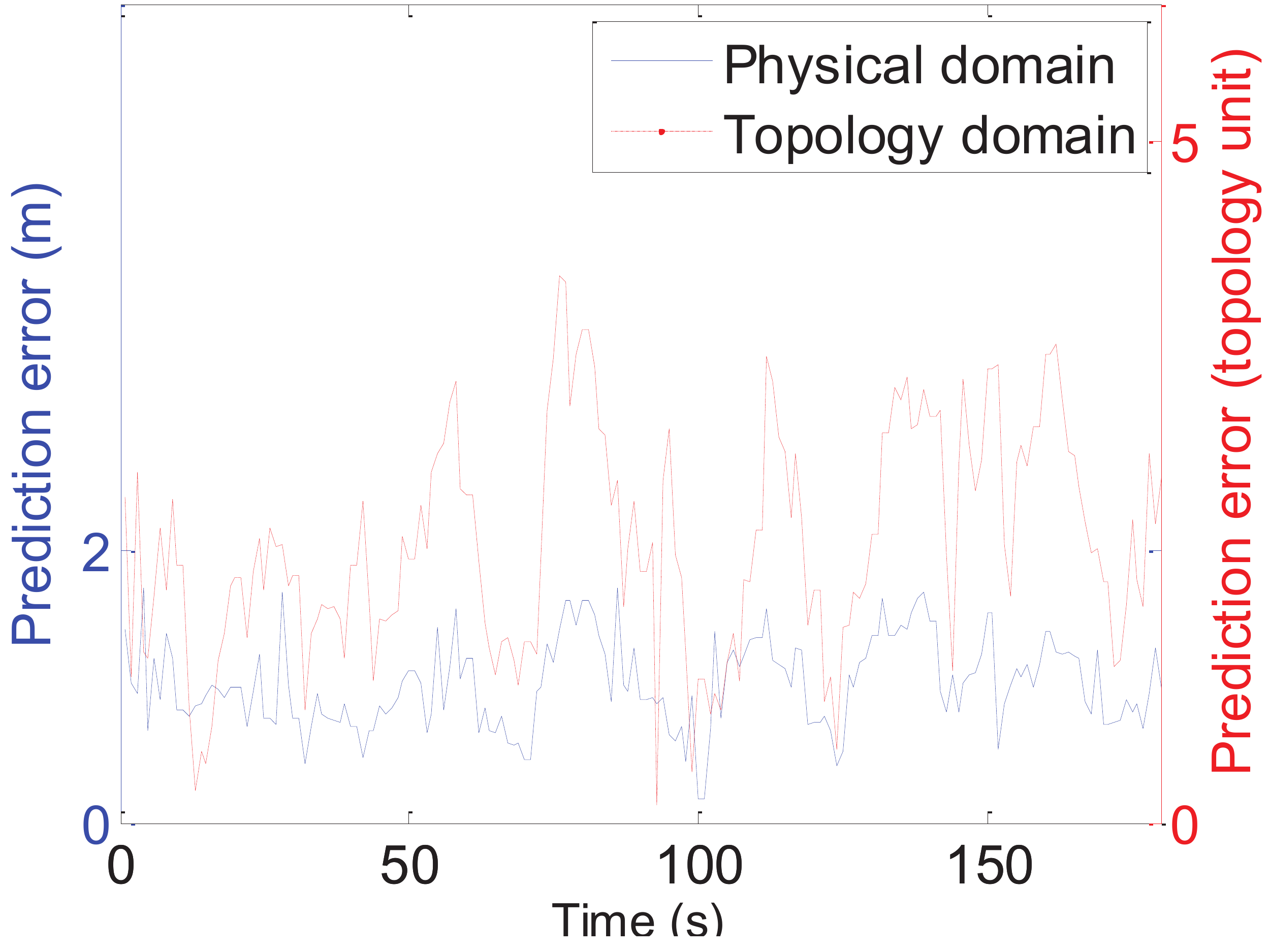}
        \caption{}
        \label{figure12}
    \end{subfigure}
     \begin{subfigure}[h]{0.45\textwidth}
        \includegraphics[width=\textwidth]{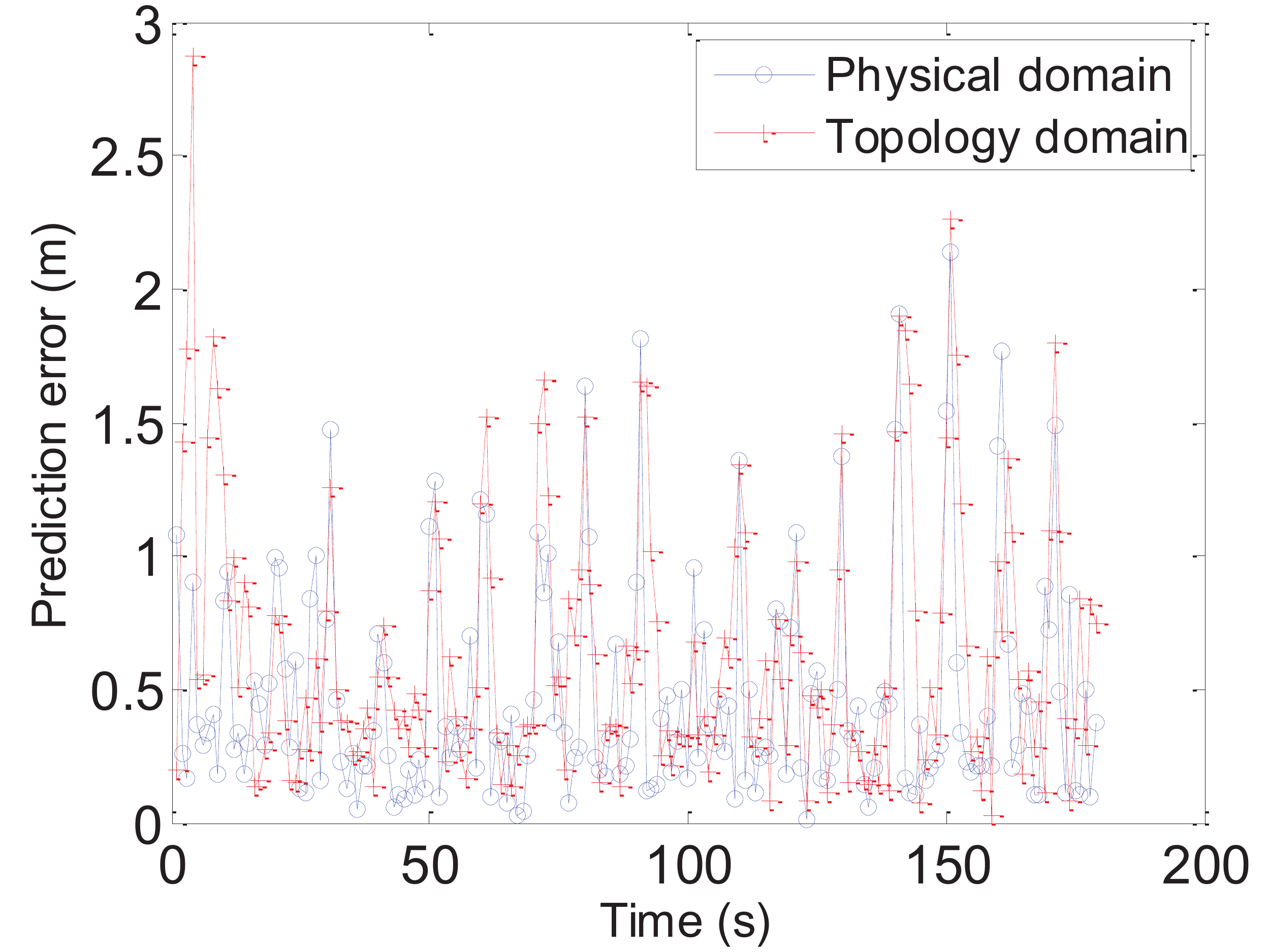}
        \caption{}
        \label{figure13}
    \end{subfigure}
     \begin{subfigure}[h]{0.45\textwidth}
        \includegraphics[width=\textwidth]{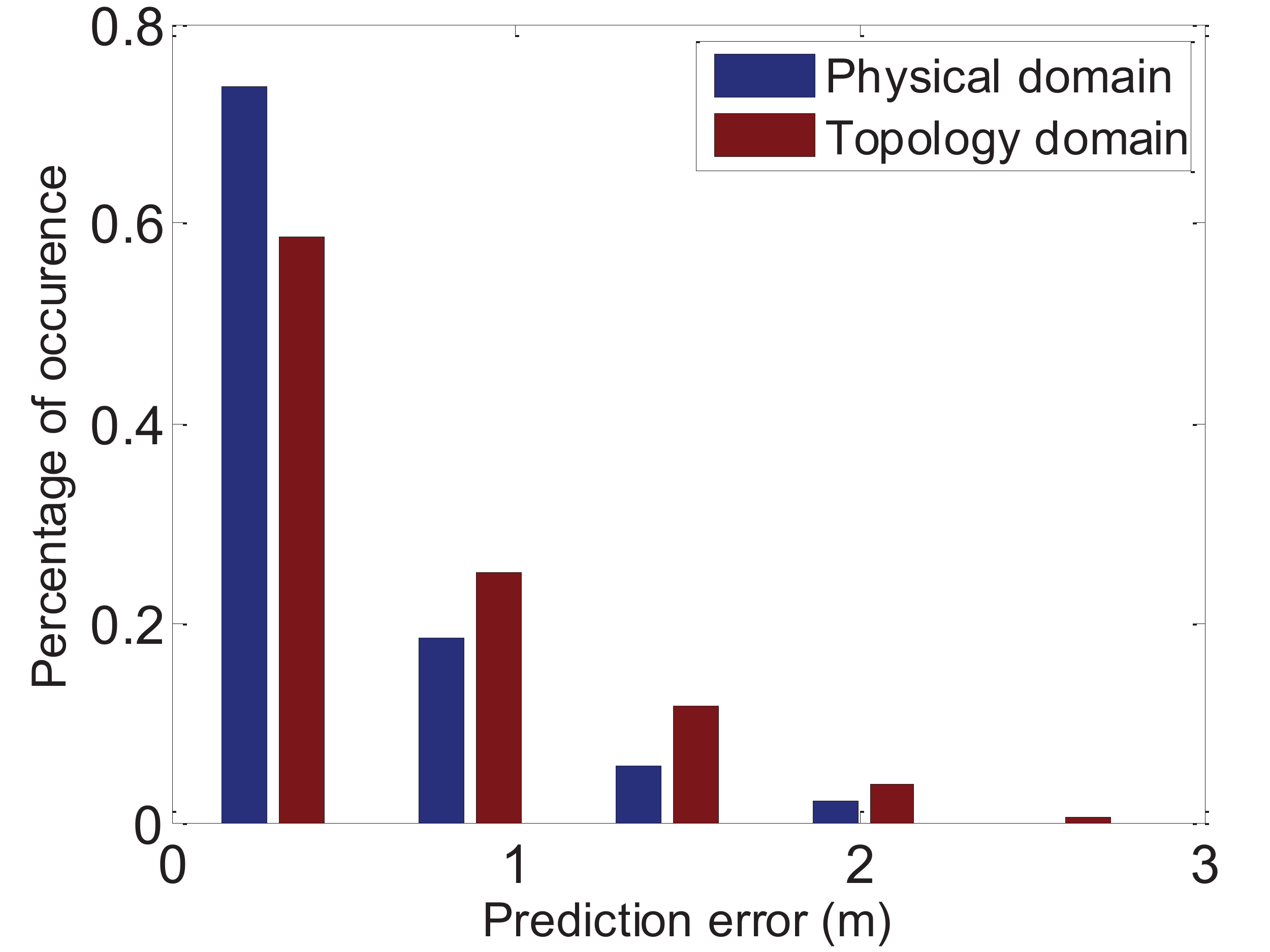}
        \caption{}
        \label{figure14}
    \end{subfigure}
    \caption{Comparisons on the random movement. (a) Errors in TPM; (b) Errors in physical domain; (c) Percentage of occurrence of prediction errors.}
\end{figure}
The topology map is generated from the virtual coordinates of the nodes by SVD. But so far we have no ideas on how to transform the topology map back to the physical map. We propose a rough method to compute the corresponding position in physical domain of the certain point in topology map. Inspired by the preserving characteristic, we use some nearby sensor nodes to locate a point's corresponding position in physical domain. Suppose a point is located at $(X_T,Y_T)$ in topology domain. We can select $k$ nearest sensor nodes by calculating the distances between this point and the all the sensor nodes' TCs and the set of these $k$ sensor nodes is represented by
\begin{equation}\label{eq26}
\begin{aligned}
S_k=\{s_i:&|(X_T,Y_T),(X_{Ts_i},Y_{Ts_i})|\leq \\
&|(X_T,Y_T),(X_{Ts_k},Y_{Ts_k})|,s_i,s_k\in S\} \\
\end{aligned}
\end{equation}
The corresponding position is calculated by taking the average of the physical coordinates of set $S_k$
\begin{equation}\label{eq27}
(X_P,Y_P)=\frac{\sum\limits_{s_i\in S_k}\left(X_{P_{s_i}},Y_{P_{s_i}}\right)} {k}
\end{equation}
Through (\ref{eq26}) and (\ref{eq27}), we can obtain a set of corresponding points of the predicted positions in topology domain tracking and these points are named as corresponding predicted positions. Following the definition of prediction error, we provide the comparison of prediction errors and the corresponding prediction errors in $meter$. The result is shown in Fig. \ref{figure13} and the occurrence of the prediction errors are shown in Fig. \ref{figure14}. From Fig. \ref{figure13} and \ref{figure14} we can see the corresponding prediction errors from topology domain are a bit larger than that of physical tracking. The average of the corresponding prediction errors is 0.65 $meters$ while that of physical tracking is only 0.47 $meters$. Therefore, tracking in physical domain outperforms that in topology map.

Now we propose the second method to evaluate the tracking performance. It looks at the problem from a different view compared to the first method. The basic idea is to evaluate the performance using the number of sensor nodes.

$Definition 1$: the required sensor nodes to do the tracking task are the ones that locate in a certain circle with the predicted position as center and the prediction error as radius.

Definition 1 is illustrated in Fig. \ref{figure15}. For each time slot, our algorithm predicts a future position. However there exists a gap between the prediction and the actual position. To detect the mobile sink, we select the sensor nodes locating in the red circle whose center is the prediction position and radius is the gap. These 8 nodes are the required sensor nodes to guarantee the mobile sink detection.
\begin{figure}[t]
    \centering
    \begin{subfigure}[h]{0.35\textwidth}
        \includegraphics[width=\textwidth]{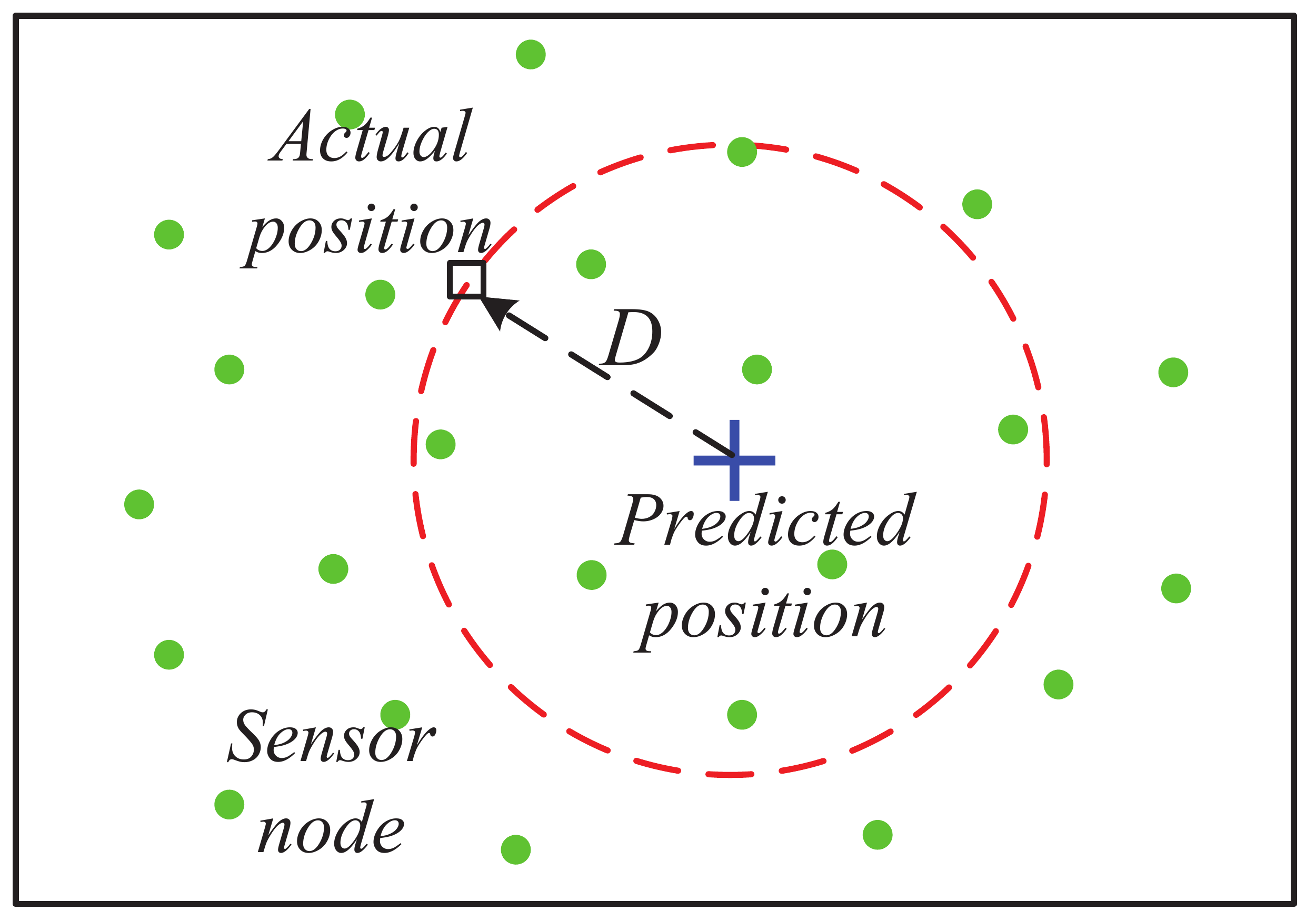}
        \caption{}
        \label{figure15}
        \end{subfigure}
    \begin{subfigure}[h]{0.45\textwidth}
        \includegraphics[width=\textwidth]{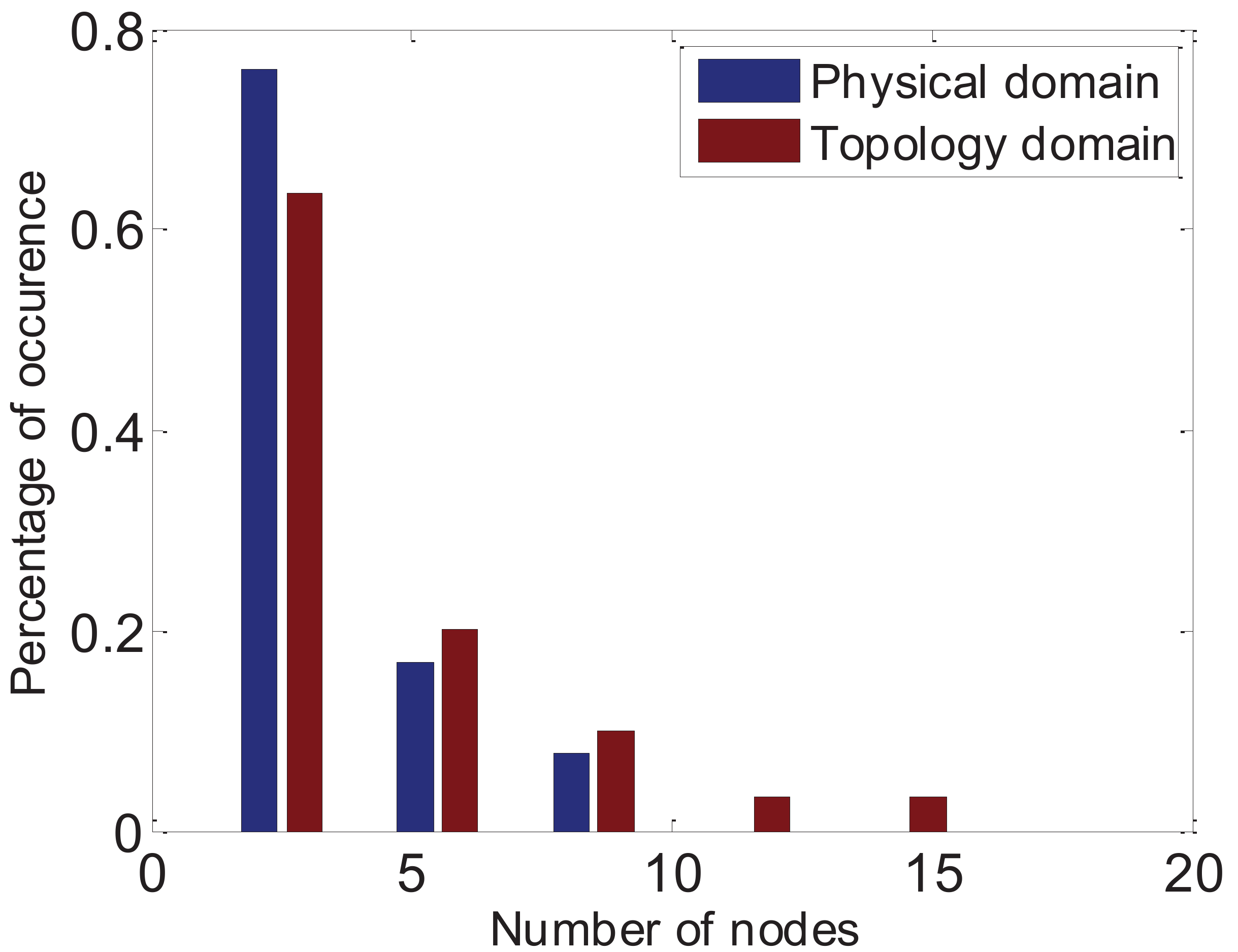}
        \caption{}
        \label{figure16}
    \end{subfigure}
    \caption{(a) Demonstration of required sensor nodes; (b) Comparison of the number of nodes to do detection in two domains.}
\end{figure}


There may be some cases where the prediction is very accurate. In other words, the gap between the predicted position and the actual position is too small to find one sensor node inside the certain circle. In these situations, we select the sensor node which is the nearest one to the predicted position.
We compare the number of the required sensor nodes in the two domains because this metric avoids the inconvenience of diverse distance units in physical domain and topology domain. The result is shown in Fig. \ref{figure16}. From Fig. \ref{figure16} we can see when we select the same number of sensor nodes in the two domains, for example 6, it has 92\% possibility to detect the mobile sink in physical domain. However, with this number of sensor nodes in topology domain, the detection possibility is 84\%. To achieve the same detection possibility, about 9 sensor nodes are needed in topology domain. To have 100\% possibility of detection for the whole movement, 9 and 15 are required in the two domains respectively. 

From the above analysis, the tracking performance in physical domain outperforms that in topology domain. The prediction error in the former is about 0.18 $meters$ averagely less than that of the latter and the number of sensor nodes to guarantee the detection of the mobile sink in physical domain is smaller than that in topology domain. The most significant reason is in physical domain, we assume the sensor nodes are able to measure the distance between themselves and the mobile sink, while in topology domain the sensor nodes can only output a receiving probability. Therefore, it is reasonable to use more sensor nodes to do detection in topology domain as the cost for the sensor nodes used there is lower than the nodes which are capable of measuring distance.

Finally, we focus on the effectiveness of tracking in topology domain by considering the above two conclusions. We mobile sink on another evaluation metric: the sensing area coverage in the following part. It can be seen from the first conclusion, the tracking performance using sensor nodes which are capable of distance measurement is generally better than that using topology map. In other words, the sensor nodes selected according to the prediction in physical domain generally have a higher possibility to detect the mobile sink. In topology domain, we can also select a subset of the sensor nodes to do detection. Instead of only comparing the number of these two sets as we do in the second method, we investigate how much these two sets overlap with each other. We provide our analysis result in physical domain which is more practical and we test the percentage that the topology set can cover the physical set.

As we see from Fig. \ref{figure13}, the prediction error of physical tracking is usually more accurate than that of topology tracking but there are some cases where the physical prediction is worse than the topology prediction. In those cases, the topology set of sensor nodes is more effective than the physical one, which means it is unnecessary to compute the coverage percentage by the topology set. So we only compare the sensing area coverage when the physical prediction error is less than the corresponding prediction error.

We select 9 and 15 sensor nodes in physical domain and topology domain respectively and compute the sensing area coverage. We provide two examples to demonstrate this coverage and they are shown in Fig. \ref{figure17}. In most time slots, the topology set of sensor nodes can fully cover the physical one (Fig. \ref{figure17}((a)), which means the topology set has no less possibility for detecting the mobile sink compared to the physical one. However, there are some time slots in which the topology set cannot 100\% cover the physical one (Fig. \ref{figure17}(b)). If the mobile sink appears in the area which is uncovered by the topology set, the sensor nodes selected by the topology tracking may fail to detect the mobile sink. 

As we only compare the sets when the physical prediction error is smaller than the topology one, the sensing area coverage can be defined as follows: 
\begin{equation}
Coverage=\frac{|T\cap P|}{|P|}
\end{equation}
where $P$ and $T$ are two sets of sensor nodes and $|\cdot|$ calculates the element number. The statistic result of this sensing area coverage is shown in Fig. \ref{figure18}.

From Fig. \ref{figure18} we can see, in all the time slots we test, the topology set of sensor nodes can at least cover the physical set by 50 percent and there are 86\% cases where this coverage can reach more than 80 percent. Therefore, the set of sensor nodes selected in topology domain also has a high quality in terms of detecting the mobile sink.
\begin{figure}[t!]
\begin{center}
{\includegraphics[width=0.45\textwidth,height=0.6\textwidth]{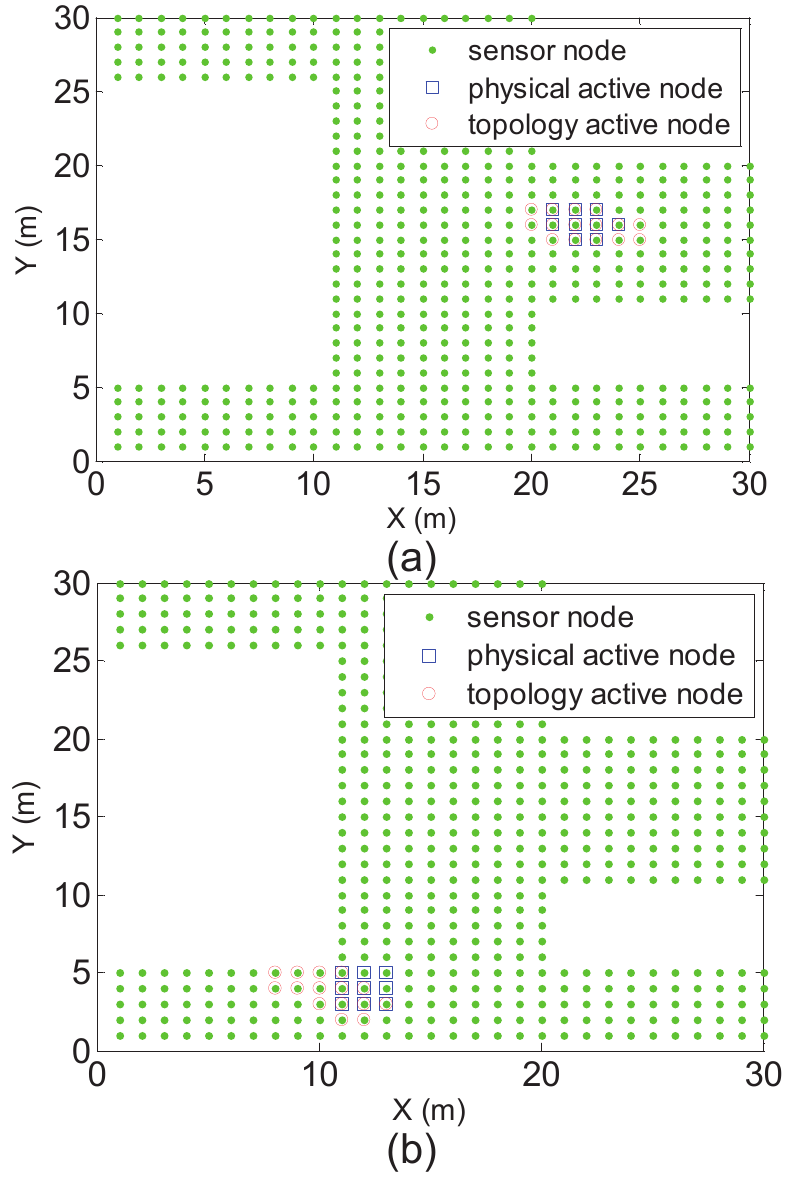}}
\caption{Two illustrative examples for the sensing area coverage. (a) Completed coverage. (b) Uncompleted coverage}\label{figure17}
\end{center}
\end{figure}
\begin{figure}[t!]
\begin{center}
{\includegraphics[width=0.4\textwidth,height=0.3\textwidth]{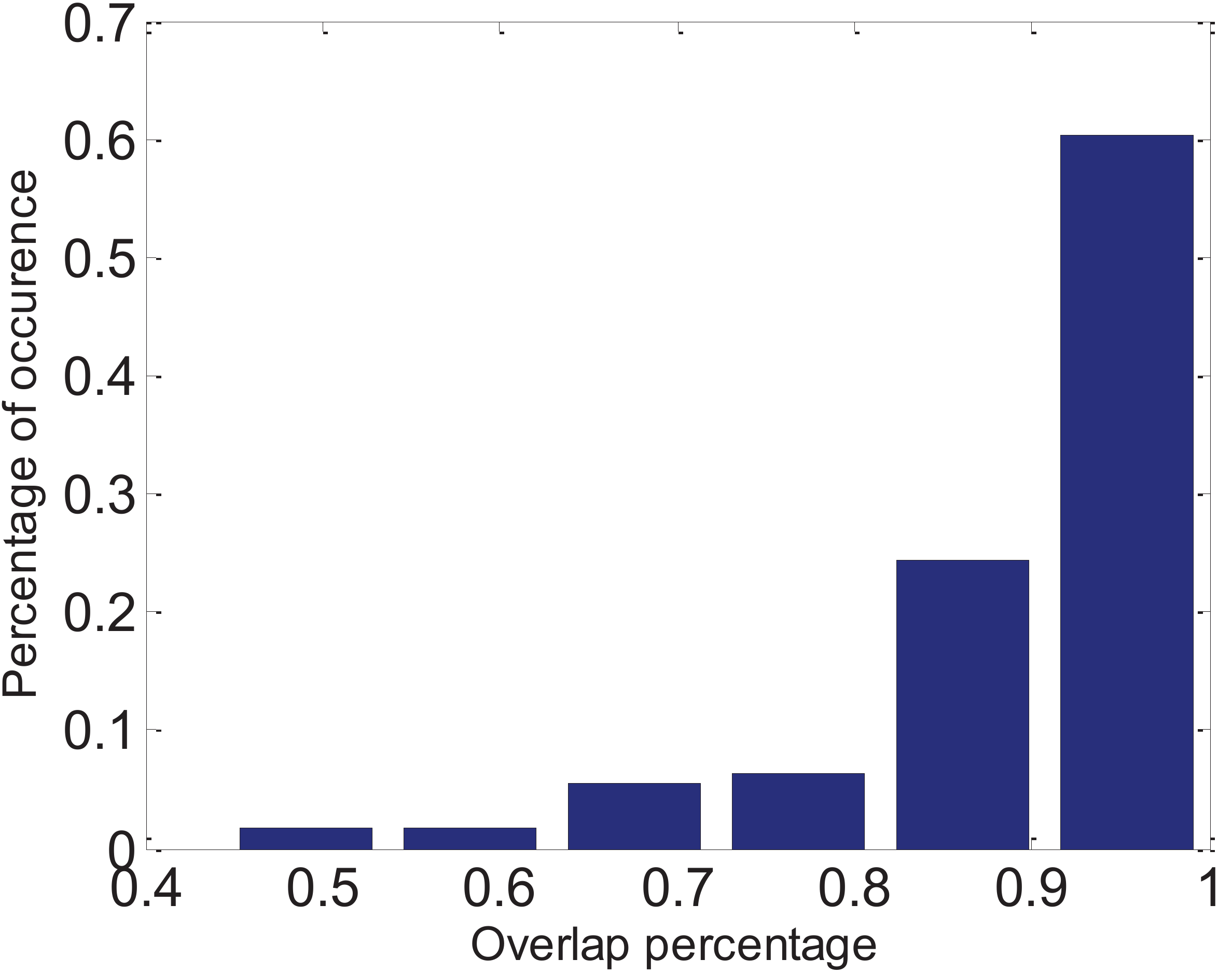}}
\caption{Overlap of the two sets of active sensor nodes in two domains}\label{figure18}
\end{center}
\end{figure}


\section{Summary}
\label{conclusion_tracking}
In this chapter, we tackled the problem of tracking a mobile sink in a wireless sensor network where the sensor nodes are incapable of distance measurement and location unaware. We utilized the sensor nodes' topology coordinates and their receiving probabilities to assist the implementation of Robust Extended Kalman Filter in topology domain. The REKF based algorithm can deal with the uncertainties in topology map. Various mobility models have been studied, ranging from simple models such constant or varying speed models to random motion models, to demonstrate the effectiveness of the proposed estimation and prediction method. Moreover, we proposed three metrics to evaluate the tracking performance. We compared our work in the topology map with that in the physical map. Our results indicated the proposed method is competitive with themes that are based on distance measurements. Our framework is costless and is more feasible for large-scale networks with inexpensive sensor nodes. 

The work presented in this chapter plays a fundamental role in the approaches where mobile sinks are used to serve WSNs, since the performance of them may depend on the current location of mobile sinks.
\chapter{Viable Path Planning for Data Collection Robots in a Sensing Field with Obstacles}\label{path_planning}
\minitoc
\section{Motivation}
This chapter presents the first main contribution of the report, which falls into the category of using \textit{\textbf{controllable}} mobility to collect data in WSNs.

Mobile sink has been regarded as a promising method to prolong the network lifetime. By physical movement, mobile sinks can save much energy resource at sensor nodes, since the communication between them can be done in a relatively short distance. As a coin has two sides, one defect of such scheme is that the data delivery delay is also increased due to the relative slow physical movement. Thus, one important research topic is how to design the path for mobile sinks such that all the sensory data can be collected and at the same time, the delivery delay is minimized. One basic approach is to view such problem as the traditional Travelling Salesman Problem (TSP) \cite{TSP85, TSP94}. The mobile sink is regarded as a salesman and the sensor nodes are regarded as the delivery destinations. Then, the problem is to find the shortest path in length such that every sensor node is visited exactly once. When multiple mobile sinks are used, the corresponding problem can be viewed as the Vehicle Routing Problem (VRP) \cite{VRP92}. 

Unlike the point-wise model in TSP and VRP based approaches, in this chapter, the mobile sinks are modelled as a unicycle moving at a constant speed with bounded angular velocity. This model is also called the Dubins car \cite{dubins1957} and it is well known that the motion of many wheeled robots and unmanned aerial vehicles can be described by this model \cite{MTS11, andrey11unicycle_2, andrey11unicycle_3, savkin2013reactive}. Recent studies have proposed various solutions to the path planning problem in data collection. However, several issues still need further investigation. One is that many existing approaches set the sensor nodes' locations as target positions for the robots, e.g., see \cite{visitingcircle2015}, which leads to the collisions with the sensor nodes. Another issue is that most studies assume the sensing field is obstacle-free, e.g., see \cite{obstaclefree10}, which is quite ideal in practice. Further, the dynamic constraint of robots is rarely taken into account in recent algorithms, then the produced paths may have non-smooth corners at the intersections \cite{rendezvous08}. This is a severe limitation in practice because the paths cannot be applied to some robots such as the considered unicycle robots with bounded angular velocity. Some approaches assume the data transfer time from the source sensor nodes to the robots is negligible \cite{MES04, MES07}. It is reasonable for the light data load nodes, but not for the nodes with heavy data loads, such as the nodes equipped with cameras to snapshot over-speed vehicles \cite{pathconstrained07}. 

With the mentioned concerns, we define a \textit{viable} path which is smooth, collision-free with sensor nodes/base station and obstacles, closed, and provides enough contact time with all the sensor nodes. The viable path takes into account properties of both robotics and sensor networks, which is close to reality. The main objective is to design the shortest viable path for the considered robots. We formulate the problem as a variant of Dubins Travelling Salesman Problem with neighbourhoods (DTSPN) \cite{OOD10, IH13}. To solve the problem, we propose a Shortest Viable Path Planning (SVPP) algorithm. In essence, our algorithm is based on a roadmap: Tangent Graph, which is then modified by a reading adjustment to provide enough contact time for each node. With the modified Tangent Graph, it determines a permutation of sensor nodes by solving an Asymmetric TSP (ATSP) instance. Having the permutation, the modified Tangent Graph is simplified and converted to a tree-like graph, where the edges and vertices unrelated to the permutation are removed. Finally, the shortest viable path is figured out by searching the tree-like graph using a dynamic programming based method. 

The viable paths produced by SVPP algorithm can be travelled by the considered unicycle robots periodically. However, such paths are designed for single mobile sink. For the large scale network, we further consider the situation where multiple mobile sinks are available, which is promising to reduce the average path length for each sink and shorten the delivery delay. We consider $k$ identical mobile sinks and target on designing viable paths for them such that the lengths of the $k$ paths are more or less equal. Then the data delivery delays on these paths can be similar. We first discuss an algorithm by introducing viable path to $k$-SPLITOUR \cite{FHK78} (denoted as viable $k$-SPLITOUR), which constructs a whole path using SVPP and then splits it into $k$ subpaths of more or less equal lengths. We point out that the generated $k$ viable paths are not guaranteed to be optimal. With this regard, we conduct a further operation, i.e., reconstruct the $k$ paths respectively using SVPP and this algorithm is referred as $k$-SVPP. It is easy to understand $k$-SVPP achieves no worse performance than viable $k$-SPLITOUR, and in many cases $k$-SVPP performs better.

For performance evaluation, extensive simulations are conducted. First, SVPP is applied to figure out viable paths for the network instances with different numbers of nodes and topologies where obstacles exist. We investigate the influences of two factors: robot speed and data load on system metrics: path length and collection time. Second, the performance of $k$-SVPP is demonstrated by comparing with Viable $k$-SPLITOUR. Third, the comparison of the proposed methods with a multihop communication algorithm: Shortest Path Routing is also provided. We study the influence of data load distributions on data collection performance in terms of energy consumption. We find that using mobile robots to collect data saves around 95\% energy for sensor nodes compared to multihop communication, which definitely increase the network lifetime.  

Moreover, we use the idea of tangent graph to navigate a flying robot to avoid detections by the ground sensors, which is called safe mission planning in various threat environments. 

The key contributions of this chapter include:

\begin{enumerate}
\item
We propose the concept of viable path which combines the concerns of robotics and sensor networks.
\item
We provide a formulation for the studied problem as a variant of DTSPN and an algorithm called Shortest Viable Path Planning (SVPP). 
\item
We provide a k-SVPP algorithm to design $k$ viable paths which have approximately equal lengths.
\item
We present extensive simulations to demonstrate the effectiveness and advantages of our algorithms.
\item
We further consider the problem of navigating a military aircraft in a threat environment to its final destination while minimizing the maximum threat level and the length of the aircraft path. The proposed method to construct optimal low-risk aircraft paths involves a simple geometric procedure and is very computationally efficient.
\end{enumerate}

The remaining parts of this chapter are structured as follows. Section \ref{problemstatement} formally describes the studied problem. Section \ref{algorithm_basic} provides the suggested SVPP algorithm and a \textit{k}-SVPP algorithm dealing with path planning for $k$ robots is proposed in Section \ref{algorithm_extend}. Section \ref{simulation_path} demonstrates the performance of our algorithms by simulations and comparisons with the alternatives. Some discussions are provided in Section \ref{discussion}. Furthermore, the extension of our approach to the problem of safe mission planning is discussed in Section \ref{navigation}. Finally, Section \ref{conclusion_path} concludes the chapter. The publications related to this chapter include \cite{huang2017viable}, \cite{savkin2017optimal}, \cite{huang2016path}, \cite{savkin2016problem}.
\section{System Model and Problem Statement}
\label{problemstatement}
This section describes the system model, gives the basic assumptions and states our problem formally. The main notations used in this chapter are listed in Table \ref{table:notation}.

\begin{table}[h!]
\begin{center}
\caption{Notations}\label{table:notation} 
  \begin{tabular}{| c |l |}
    \hline
    \textbf{Notation} & \textbf{Comments}\\ \hline
    $X$ & Configuration of robot \\ \hline
    $(x,y)$ & Position of robot\\ \hline
    $\theta$ & Heading of robot\\ \hline
    $u$ & Angular velocity of robot \\ \hline
    $v$ & Speed of robot\\ \hline
    $u_M$ & Maximum angular velocity of robot\\ \hline
    $R_{min}$ & Minimum turning radius of robot\\ \hline
    $n$ & The number of base station and nodes\\ \hline
    $s_i$ & Base station or sensor node\\ \hline
    $m$ & The number of obstacles\\ \hline
    $o_j$ & Obstacle\\ \hline
    $d_{safe}$ & Obstacle safety margin\\ \hline
    $\partial o_j$ & Boundary of the convex hull\\ \hline
    $C_i$ & Visiting circle\\ \hline
    $G(V,E)$ & (Modified) Tangent Graph\\ \hline
    $\cal L$& Viable path construction function\\ \hline
    $g$& Data load\\ \hline
    $r$& Data transmission rate\\ \hline
    $e$& Transmission energy consumption rate\\ \hline
    $\delta$ & Required contact time\\ \hline
    $\Sigma$& Permutation of visiting with length $n$\\ \hline
    $\sigma_i$ & The $i^{th}$ element of $\Sigma$\\ \hline
    $G(V',E')$ & Simplified Tangent Graph\\ \hline
    $\cal P$& Projection function from $SE(2)$ to $\mathbb{R}^2$\\ \hline
    $K$ & The blocking number\\ \hline
    $\Sigma'$& Extended permutation $(n'=n+K)$\\ \hline
    $\sigma_i'$ & The $i^{th}$ element of $\Sigma'$\\ \hline
    $\cal L'$& Viable path construction function\\ \hline
    $T$& tree-like graph, $T_i$ is the $i^{th}$ layer of $T$\\ \hline
    $L$& Path length\\ \hline
    $z$& Configuration variable\\ \hline
    $P$& Viable path\\ \hline
    $k$& The number of robots\\ \hline
    $len$& Path length function\\ \hline
    \end{tabular}
\end{center}
\end{table}

Consider a planar robot modelled as a unicycle, whose state can be represented as the configuration $X=(x,y,\theta)\in SE(2)$, where $(x,y)\in \mathbb{R}^2$ is the robot's position and $\theta \in \mathbb{S}^1$ is its heading. The robot travels with a constant speed $v$ and is controlled by the angular velocity $u$. The dynamics of the robot can be described as

\begin{equation}
\label{dubinscar}
\begin{array}{l}
\dot{x}(t) = v \cos \theta (t)
\\
\dot{y}(t) = v \sin \theta (t)
\\
\dot{\theta}(t) = u(t)\in [-u_M, u_M]
\end{array}
\begin{array}{l}
\end{array} 
\end{equation}
where $u_M$ is the given maximum angular velocity. Such a model describes a planar motion of many ground robots, missiles, UAVs, underwater vehicles etc \cite{tang2005motion, rathinam2007resource, MTS11, savkin2013reactive, manchester2006circular, kim2011minimum, MSF08, piazzi2007mmb, ST10, zarchan2012tactical, shima2011intercept, gottlieb2017multi, yang2002optimal}. The standard non-holonomic constraint can be represented as:
\begin{equation}
\label{nonh}
|u(t)|\leq u_M.
\end{equation}
Then the minimum turning radius of the robot is
\begin{equation}
\label{Rmin} R_{min}= \frac{v}{u_M}.
\end{equation}

Any path  $(x(t),y(t))$ of the robot (\ref{dubinscar}) is a plane curve satisfying the following constraint on its so-called average curvature (see \cite{dubins1957}): let $P(a)$ be this path parametrized by arc length, then
\begin{equation}
\label{ac}
\|P'(a_1)-P'(a_2)\|\leq \frac{1}{R_{min}}|a_1-a_2|.
\end{equation}
Here $\|\cdot\|$ denotes the standard Euclidean vector norm. Notice that we use the constraint
(\ref{ac}) on average curvature because we cannot use the standard definition of curvature from differential geometry  \cite{struik2012lectures} since the curvature may not exist at some points of the robot path.

Consider a sensor network deployed in a cluttered environment. It consists of a base station ($s_1$) and $n-1$ sensor nodes ($s_i,i\in [2,n]$). We overload $s_i$ ($i\in [1,n]$) as a sensor node/base station and its location. $s_1$ executes the path planning task and the robot downloads the path before it depart. Consider $m$ disjoint and smooth obstacles ($o_j,j\in [1,m]$) in the field. The locations of all the stationary sensor nodes and the obstacles as well as the shapes of the obstacles are known. Let $d_{safe}$ be the safety margin for the obstacles and then we can get the safety boundary of $o_j$, see Figure \ref{fig:convex_hull} for an example. Since $o_j$ can be a non-convex set, its safety boundary can also be non-convex. As discussed below, our objective is pass the obstacles as fast as possible. Then, moving along the boundary of the convex hull of $o_j$ saves time comparing to moving along the winding safety boundary. Let $\partial o_j$ ($j\in [1,m]$) be the boundary of the convex hull of $o_j$'s safety boundary. The way to construct such convex hull is simple. For the non-convex part, a common tangent line is placed, see Figure \ref{fig:convex_hull} for an example. Regarding constraint (\ref{ac}), we make the following assumption on $\partial o_j$, $j\in [1,m]$.
\begin{Assumption}\label{curvature}
Any $\partial o_j$ ($j\in [1,m]$) is a smooth curve with the curvature $c(p)$ at any point $p$ satisfying $c(p)\leq \frac{1}{R_{min}}$.
\end{Assumption}

The objective is to design the shortest viable path for robot (\ref{dubinscar}). We define the viable path as follows:

\begin{definition}\label{viable}
A path $P$ is viable if the following five Conditions are satisfied: 1) smooth, collision-free of 2) sensor nodes/base station and 3) obstacles, 4) closed and 5) offers enough contact time to read all the data from sensor nodes.
\end{definition}

The motivations behind the viable path include the following practical concerns. First, as aforementioned, since the considered robots' bounded turning radius, it is inappropriate for them to turn at a spot. Thus, the foremost condition of the viable path is smoothness. Second, the robot should not traverse the locations of the sensor nodes, otherwise collisions may happen. Besides, since obstacles are possible to exist in the sensing field, the robot should also be able to avoid them. Further, as many applications require the robot to execute the data collection task periodically, the closeness feature may facilitate this requirement. Finally, to read all the sensory data from sensor nodes, the produced path should provide enough time for the robot to communicate with sensor nodes. Next, we detail how to design viable paths.

With respect to Condition 2) and considering the dynamics (\ref{dubinscar}), we first describe our visiting circle model.

\begin{definition}
The visiting circle is centred at the location of a sensor node/base station with the radius of $R_{min}$.
\end{definition}

The radius of the visiting circle can be smaller than $R_{min}$. But as $R_{min}$ is the minimum turning radius, the robot cannot move along a circle whose radius is smaller than $R_{min}$. Being able to move along the visiting circles facilitates our work in the applications where the stored data cannot be uploaded in a short time. In this case, the robot can rotate around the sensor nodes/base station on the visiting circles until the data uploading is finished. Thus, the radius of the visiting circle is set as $R_{min}$. Similar to $d_{safe}$, $R_{min}$ can also be regarded as the safety margin of sensor nodes/base station. We use $C_i,i\in [1,n]$ to represent the visiting circle and we use the term \textit{element} to represent either a visiting circle $C_i$ ($i\in [1,n]$) or a boundary $\partial o_j$ ($j\in [1,m]$) in the rest of this chapter.

\begin{Assumption}\label{position}
Any two elements do not intersect. 
\end{Assumption}

Now we construct Tangent Graph \cite{savkin2013reactive}. One major component of Tangent Graph is the \textit{tangent}, defined as a straight line that is simultaneously tangent to two elements and not intersecting with any others. 
The common point of a tangent and an element is called the \textit{tangent point}. 
The curve between two tangent points on the same element is called the \textit{arc}. 
Let $G(V,E)$ be Tangent Graph, where vertex set $V$ consists of a finite set of tangent points and edge set $E$ consists of a finite set of tangents and arcs. Figure \ref{fig:graph} depicts an example of $G(V,E)$.

\begin{figure}[!h]
    \centering
    \begin{subfigure}[t]{0.5\textwidth}
        \includegraphics[width=\textwidth]{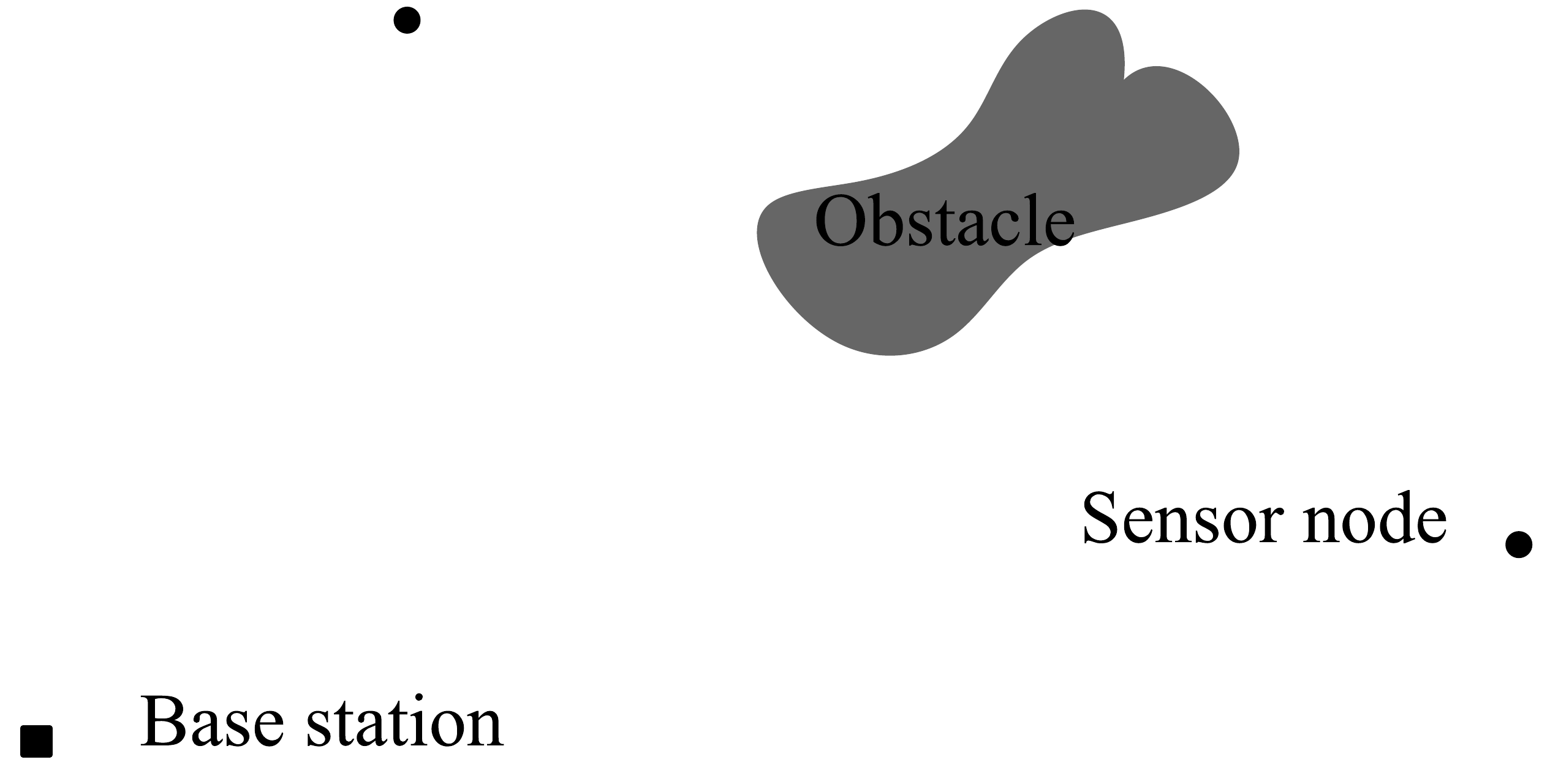}
        \caption{}
        \label{fig:puregraph}
    \end{subfigure}
    \begin{subfigure}[t]{0.22\textwidth}
        \includegraphics[width=\textwidth]{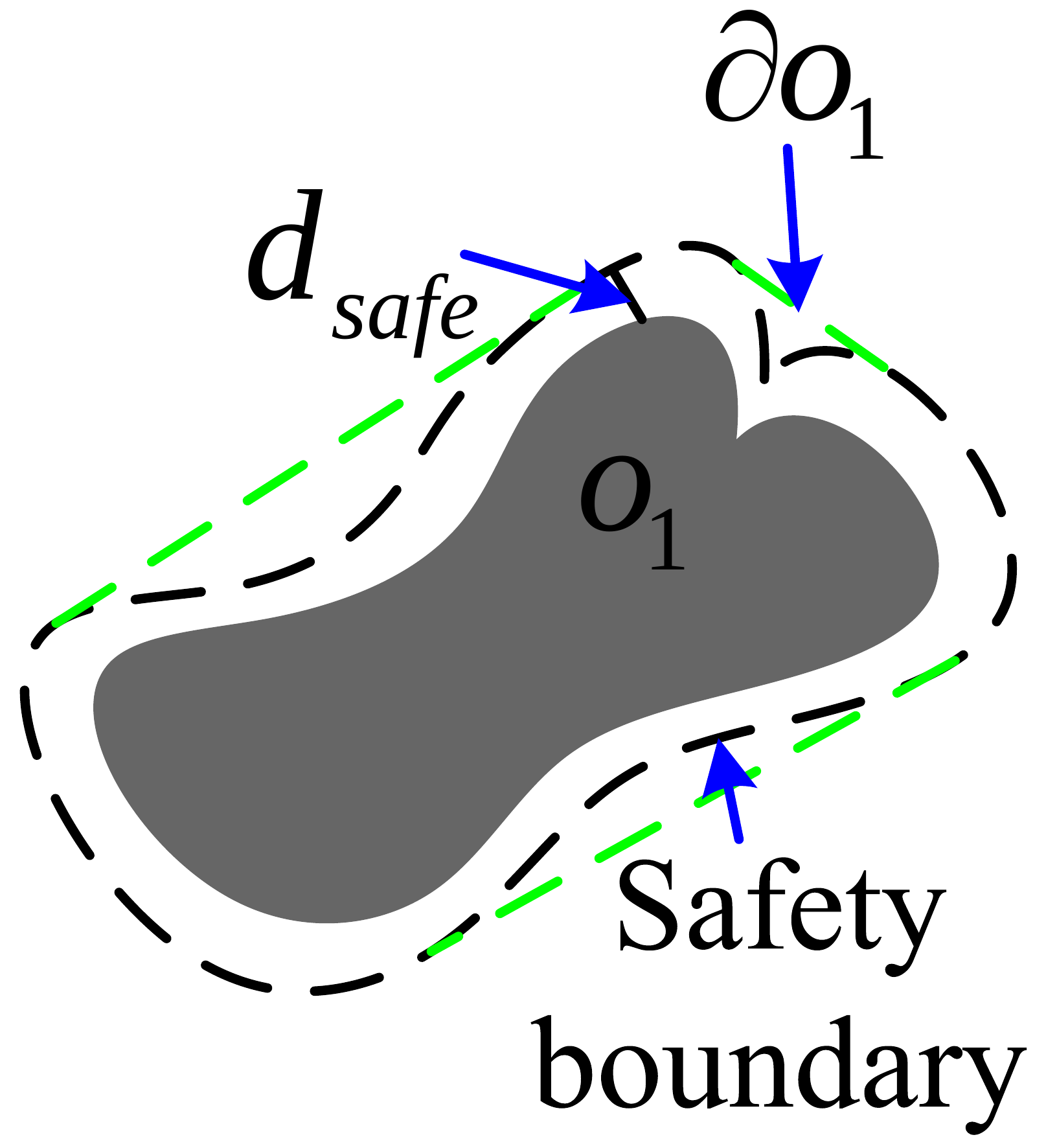}
        \caption{}
        \label{fig:convex_hull}
    \end{subfigure}
    \begin{subfigure}[t]{0.6\textwidth}
        \includegraphics[width=\textwidth]{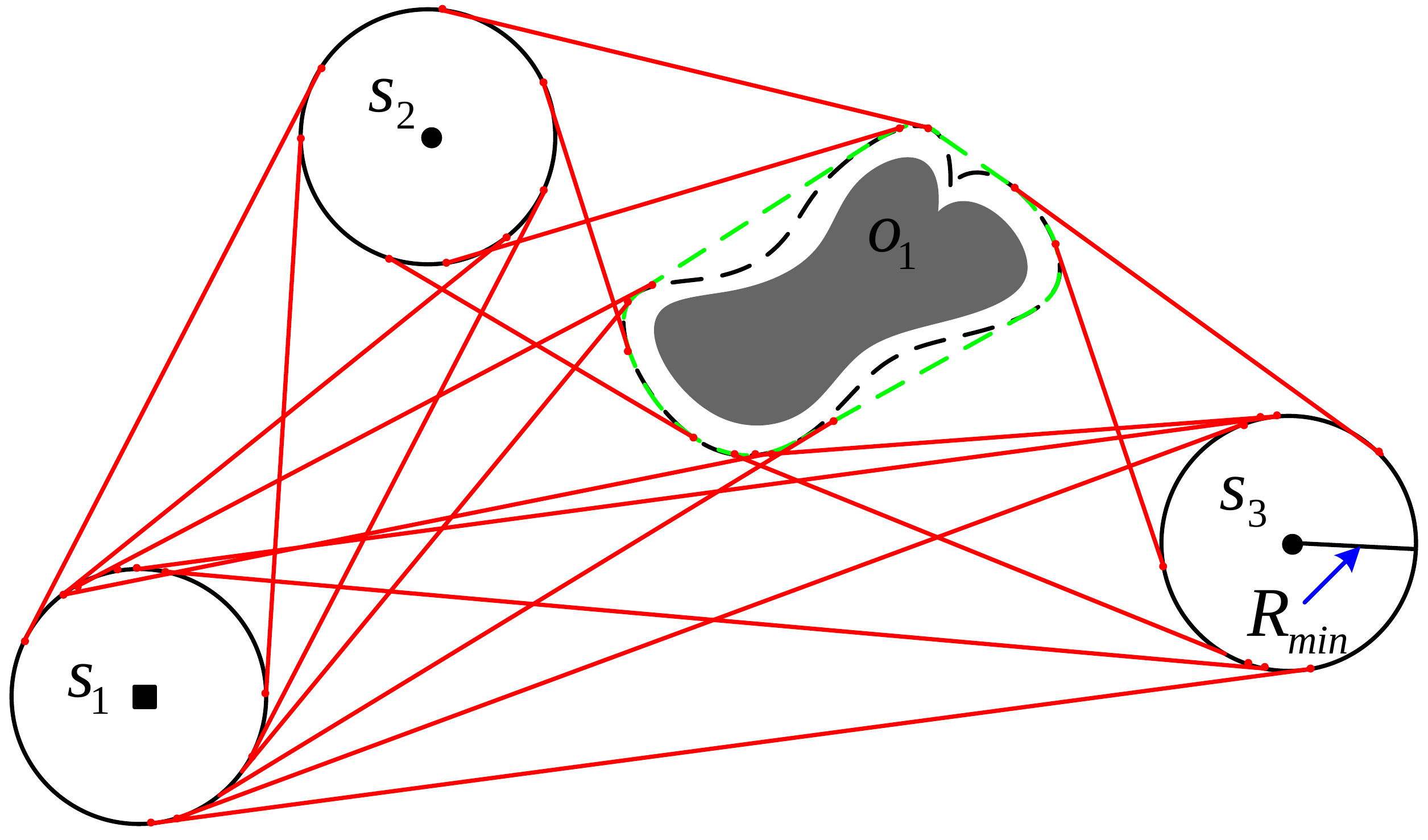}
        \caption{}
        \label{fig:tangentgraph_example}
    \end{subfigure}
    \caption{An illustrative example with one base station, two nodes and one non-convex obstacle. (a) Sensing field; (b) Convex hull construction; (c) Tangent Graph.}\label{fig:graph}
\end{figure} 

We claim that the path for the robot can be formed by a subset of $E$, since when the robot moves on $G(V,E)$, Condition 2) and 3) are satisfied definitely. However, such path is not viable. To make the path smooth, we further consider a heading constraint.

\begin{definition}\label{heading}
The heading constraint refers to that the robot's heading $\theta$ at the beginning of an edge should be equal to that at the ending of the last edge.
\end{definition}

Let $edge_1,edge_2\in E$ be two edges and they have a common tangent point $p$. Let $\theta_{edge_1}^p$ and $\theta_{edge_2}^p$ be the robot's headings at the ending of edge $edge_1$ and beginning of edge $edge_2$ respectively, i.e., at point $p$. Then, the heading constraint requires:

\begin{equation}\label{headingrequirement}
\theta_{edge_1}^p=\theta_{edge_2}^p.
\end{equation}

By carefully selecting the edges according to (\ref{headingrequirement}), the path can be smooth, i.e., Condition 1) is satisfied.

Now we consider Condition 5), which relates to the communication mode. With the visiting circle model, we consider the following assumption.

\begin{Assumption}\label{transmitrange}
The buffered data is transmitted only when the robot (\ref{dubinscar}) is on  $C_i,i\in [1,n]$.
\end{Assumption}

Notice as long as the robot is within the transmitting range of the sensor nodes, the sensor nodes can transmit data to the robot even before it arrives at the visiting circle and after it leaves. But it is practical to consider Assumption \ref{transmitrange}. One reason is that short distance communication guarantees a low data loss rate. Besides, it results in a relatively low transmission energy consumption.

With respect to Assumption \ref{transmitrange}, the node's data load and data transmission rate together determine the required contact time. However the arc edge constructed by two tangent points on a visiting circle does not guarantee such contact time. So we need to adjust the arc edge when necessary. Let $g$ be the data load of a node over a period $\cal T$ and $r$ be the data transmission rate. We consider the non-increasing staircase model \cite{ren2014data} where the data transmission rate $r(d)$ ($bit/s$) is a function of distance ($d$) between itself and the robot: 

\begin{equation}\label{transmitting}
r(d)=
\begin{cases}
r_1,0< d\leq d_1 \\
r_2,d_1< d<d_2 \\
\vdots\\
r_i,d_{i-1}< d\leq d_{i}\\
\vdots\\
0,d>d_{max}\\
\end{cases}
\end{equation}
where $d_{max}$ is the transmission range of a sensor node. Further, each data transmission rate is associated with an energy consumption rate $e(d)$ ($J/bit$), which is modelled as a non-decreasing staircase function of $d$.

For $s_i$, the required contact time $\delta_i$ can be calculated by

\begin{equation}\label{rtime}
{\delta}_i=\dfrac{g_i}{r(R_{min})}.
\end{equation}
With speed $v$, we obtain the minimum arc length $l_i$:
\begin{equation}\label{rlength}
l_i=\delta_i v=\dfrac{g_iv}{r(R_{min})}.
\end{equation}

We introduce two parameters to describe our arc edge adjustment model
\begin{equation}\label{qp}
\begin{aligned}
&a=\left\lfloor \dfrac{g_iv}{r(R_{min})2\pi R_{min}} \right\rfloor,\\
&b=\dfrac{g_iv}{r(R_{min})}-a\cdot 2\pi R_{min}.
\end{aligned}
\end{equation}

Given a visiting circle $C_i$, $i\in [1,n]$, and all the arcs on $C_i$, any arc $\wideparen{p_1p_2}$ can be adjusted by
\begin{equation}\label{p1p2}
|\wideparen{p_1p_2}|=
\begin{cases}
a \cdot 2\pi R_{min} +|\wideparen{p_1p_2}|,\ &if\ (\romannumeral 1)\\
(a+1) \cdot 2\pi R_{min} +|\wideparen{p_1p_2}|,\ &if\ (\romannumeral 2)\\
\end{cases}
\end{equation} 
where
\begin{equation*}
\begin{aligned}
(\romannumeral 1):\ &b \leq |\wideparen{p_1p_2}|\\
(\romannumeral 2):\ &b > |\wideparen{p_1p_2}|.\\
\end{aligned}
\end{equation*}

With the adjustment (\ref{p1p2}), $\wideparen{p_1p_2}$ is a valid arc edge allowing all the buffered data to be read by the robot. So we call (\ref{p1p2}) the reading adjustment. If $\wideparen{p_1p_2}$ is selected as a part of path, it means the robot arrives at $p_1$, makes $a$ or $a+1$ more round trips around $s_i$ and leaves from $p_2$. In the rest of this chapter, $G(V,E)$ represents the modified Tangent Graph, where $V$ keeps the same and the arc edges on visiting circles in $E$ are adjusted by (\ref{p1p2}). With this model, extracting a subset of $E$ satisfies Condition 5).

Finally, with respect to Condition 4) and the shortest path request, we introduce the problem to be addressed as DTSPN in \cite{IH13, VJ15}. Let $C=\{C_1,...,C_n\}$ be a set of goal regions to be visited and $\Sigma=\{\sigma_1,...,\sigma_n\}$ be an ordered permutation of $\{C_1,...,C_n\}$. Define a projection from $SE(2)$ to $\mathbb{R}^2$ as $\cal P$: $SE(2) \rightarrow \mathbb{R}^2$, i.e., ${\cal P} (X)=(x,y)$. The considered problem is an optimization over all possible permutations $\Sigma$ and configurations $X$. Stated more formally:
\begin{equation}\label{problem}
\begin{aligned}
&\min_{\Sigma,X}\  |{\cal L}(X_{\sigma_{n}},X_{\sigma_{1}})| +\sum_{i=1}^{n-1}|{\cal L}(X_{\sigma_{i}},X_{\sigma_{i+1}})|\\
& s.t.\ {\cal P}(X_{\sigma_{i})} \in \sigma_{i},\ i \in [1,n]
\end{aligned}
\end{equation}
where ${\cal L}(X_1,X_2)$ is the viable subpath (defined below) with minimum-length from configuration $X_1$ to $X_2$ (${\cal P}(X_1)\in C_{i_1}$, ${\cal P}(X_2)\in {C_{i_2}}$, $i_1,i_2\in [1,n]$ and $i_1 \neq i_2$) and $|{\cal L}(X_1,X_2)|$ gives the length. 

\begin{definition}\label{viablesubpath}
A viable subpath consists of a subset of $E$. Any two consecutive edges on the viable subpath have a common tangent point where the heading constraint (\ref{headingrequirement}) is satisfied.
\end{definition}

So far, since the function $\cal L$ makes the generated paths satisfy Condition 1), 2), 3) and 5) and formulation of problem (\ref{problem}) considers Condition 4) as well as the shortest request, the path produced by solving problem (\ref{problem}) is the shortest viable path for the data collection unicycle robot. Next section discusses how to address this problem.

\section{Shortest Viable Path Planning}\label{algorithm_basic}
Formulated as DTSPN like \cite{OOD10, IH13}, problem (\ref{problem}) is also NP-hard. But unlike DTSPN having infinite number of possible configurations in each goal region, the candidate configuration number on $C_i$, $i\in [1 ,n]$, in problem (\ref{problem}) is finite due to the Tangent Graph. So more appropriately, problem (\ref{problem}) is a sampled DTSPN. On the other hand, in the context of sensor networks, sensor nodes are usually deployed apart from each other. Then our considered problem is a standard version of sampled DTSPN, in which any two goal regions do not overlap according to Assumption \ref{position}. 

We propose an algorithm called Shortest Viable Path Planning (SVPP). The basic idea is similar to \cite{VJ15}, where the first stage is to determine the permutation $\Sigma$ based on $G(V,E)$. In the second stage, we simplify $G(V,E)$ into $G(V',E')$ and convert $G(V',E')$ to a tree-like graph $T$, and then search the shortest path in $T$. The main steps of SVPP are outlined as Algorithm \ref{wholealgorithm}. The details are given below.

\begin{algorithm}[h]
\caption{Shortest Viable Path Planning (SVPP)}\label{wholealgorithm}
\begin{algorithmic}[1]
\State Compute $\Sigma$ by solving ATSP instance based on $G(V,E)$.
\State Compute $\Sigma '$ by adding blocking safety boundaries to $\Sigma$.
\State Simplify $G(V,E)$ to $G(V',E')$ by remaining the edges and vertices related to $\Sigma'$ and deleting others.
\State Convert $G(V',E')$ to tree-like graph $T$.
\State Given an initial configuration, search the shortest path $P$ in $T$ .
\end{algorithmic}
\end{algorithm}

In Step 1, to compute $\Sigma$, we construct a directed graph. The reason for a directed graph instead of undirected graph lies in Condition 5) of Definition \ref{viable}, i.e., different sensor nodes may require different contacting time. We take $s_i$, $i\in [1,n]$ as the vertex set of the directed graph. We construct the edge set as follows. The length of the edge between two vertices takes into account two aspects: the length of the valid path between their visiting circles and the length of the adjusted arc on the latter vertex. Note, since there may be multiple valid paths between two visiting circles, see $s_1$ and $s_2$ in Figure \ref{fig:simplifiedgraph} for an example, here we use the average length of them. Thus, the length of the edge from $s_1$ to $s_2$ equals to the summation of the average length and the length of the adjusted arc on the visiting circle of $s_2$. In contrast, the length of the edge from $s_2$ to $s_1$ equals to the summation of the average length and the length of the adjusted arc on the visiting circle of $s_1$. With such directed graph, we use a ATSP solver \cite{FGM82} to calculate the permutation $\Sigma$.

Having $\Sigma$, $G(V,E)$ can be simplified by remaining the tangent edges connecting two successful visiting circles in $\Sigma$ and the corresponding arc edges. When any obstacle blocks any pair of visiting circles, the edges passing the obstacle safety boundaries are also remained. By doing this, we can get the blocking number $K\geq 0$ which counts the occurrence number of blocking two successful visiting circles in $\Sigma$. Since one obstacle can block more than one pair of successful visiting circles, the number of the blocking obstacles in $G'(V',E')$ may be smaller than $K$. 
We insert these boundaries to the proper positions in $\Sigma$ and get an extended permutation $\Sigma '=\{\sigma_1 ',...,\sigma'_{n'} \}$, whose length is $n'=n+K$. Obviously, $\Sigma \subseteq \Sigma '$. We call the new graph the Simplified Tangent Graph $G'(V',E')$, where $V'\subseteq V$ and $E'\subseteq E$. Given $\Sigma '$, the problem (\ref{problem}) can be reformulated as:
\begin{equation}\label{problem1}
\begin{aligned}
&\min_{X}\  |{\cal L'}(X_{\sigma'_{n'}},X_{\sigma_{1}'})| +\sum_{i=1}^{n'-1}|{\cal L'}(X_{\sigma'_{i}},X_{\sigma'_{i+1}})|\\
& s.t.\ {\cal P}(X_{\sigma'_{i})} \in \sigma'_{i},\ i \in [1,n']
\end{aligned}
\end{equation}
where ${\cal L'}={\cal L}$. 

\begin{figure}[t]
\begin{center}
{\includegraphics[width=0.5\textwidth]{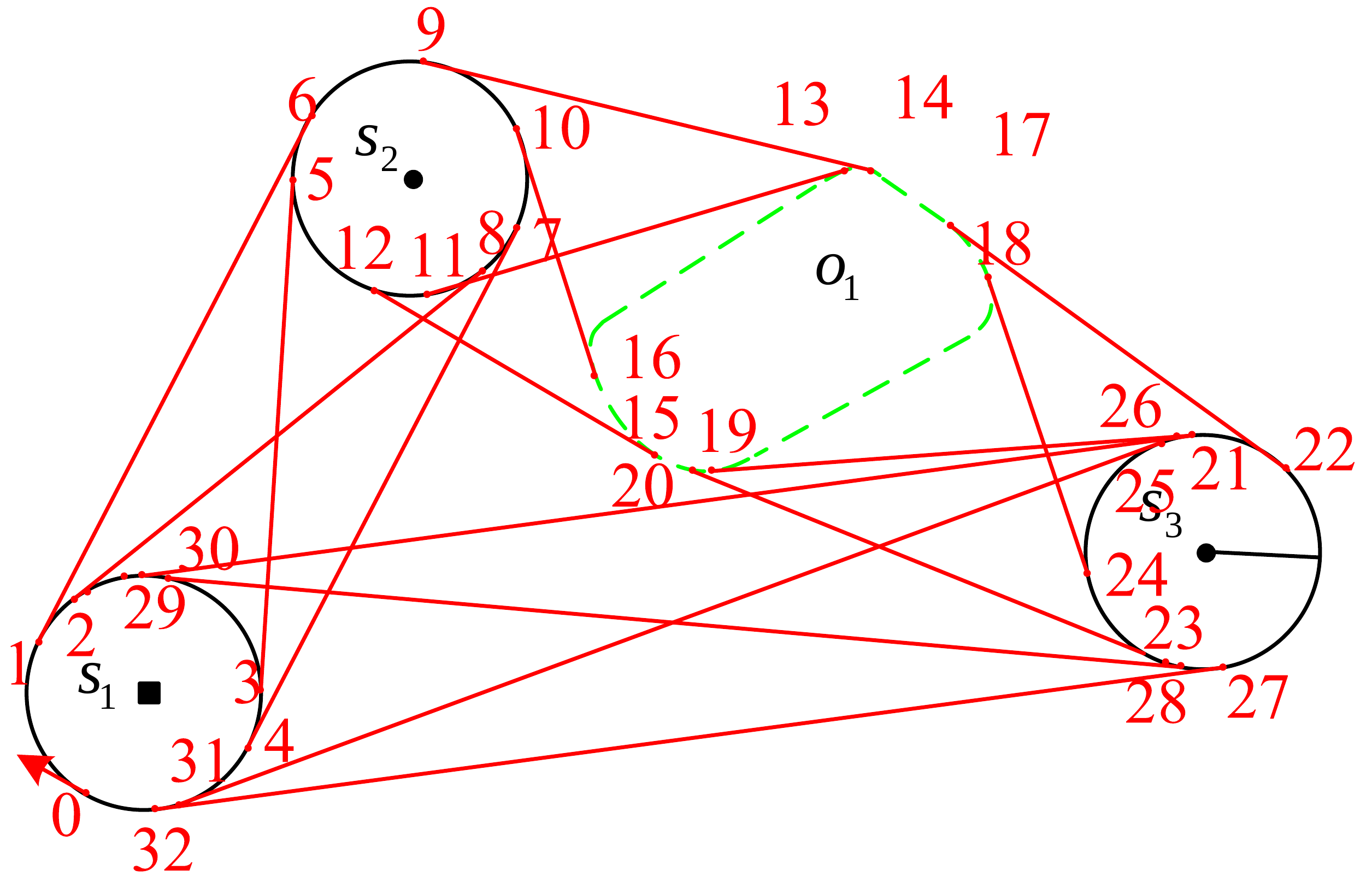}}
\caption{Simplified Tangent Graph $G(V',E')$ of $G(V,E)$ in Figure \ref{fig:tangentgraph_example}. Since $\Sigma=\{C_1,C_2,C_3\}$, the edges and vertices between $C_1$ and $\partial o_1$ are removed while the others are remained. Since $\partial o_1$ blocks $C_2$ and $C_3$, $\partial o_1$ is inserted into $\Sigma$, then we obtain $\Sigma '=\{C_1,C_2,\partial o_1,C_3\}$. The numbers near the vertices are the labels of robot configurations. For example, Label 0 represents $X_0$.}
\label{fig:simplifiedgraph}
\end{center}
\end{figure}

It is worth mentioning that if we name $X_1$ ($X_2$) as the arrival configuration on $C_{i_1}$ ($C_{i_2}$), there is a departure configuration, say $X_3$, on $C_{i_1}$ such that the heading constraint (\ref{headingrequirement}) is satisfied. Then ${\cal L'}(X_1,X_2)$ always consists of two edges: one arc edge $\wideparen{{\cal P}(X_1){\cal P}(X_3)}$ and one tangent edge $\overline{{\cal P}(X_3){\cal P}(X_2)}$. With such characteristic of ${\cal L'}$, the objective of problem (\ref{problem1}) can be stated to find the minimum-length path by selecting two configurations from each element in $G(V',E')$, subject to the heading constraint (\ref{headingrequirement}). 

Before we describe how to select such configurations, we analyse the number of possible paths in $G(V',E')$. We generate tangents between any pair of elements to construct $G(V,E)$. Our algorithm to construct tangents is given in \ref{app}. As the robots only need to move along the boundary of convex hull, we have 4 common tangents and 8 tangent points for each pair of successful elements in $G(V',E')$. Then one element has 8 configurations, 4 of which are arrival configurations and the other 4 are departure configurations. For example, in Figure \ref{fig:simplifiedgraph}, $X_5,X_6,X_7,X_8$ are the 4 arrival configurations of $C_2$ and $X_9,X_{10},X_{11},X_{12}$ are the 4 departure configurations. Also, we notice that from one arrival configuration of one element, we have 2 options to reach the arrival configurations of the next element considering the heading constraint (\ref{headingrequirement}). Then, from a given $X_0$ we have $2^{n'}$ paths to reach $\sigma'_{n'}$. Finally, the robot needs to return to $\sigma'_{1}$ from $\sigma'_{n'}$. Again due to the heading constraint (\ref{headingrequirement}), half of the arrival configurations on $\sigma'_{n'}$ cannot reach $X_0$, such as $X_{26}$ and $X_{28}$ in Figure \ref{fig:simplifiedgraph}. Therefore, the total number of paths starting and ending at $X_0$ is $2^{n'-1}$.

To better demonstrate the viable paths in $G(V',E')$, we convert it to a tree-like graph. Given $X_0$, we cut $\sigma_1'$ into two parts: one part contains the departure configurations and $X_0$, and the other contains the arrival configurations and $X_0$, then the circle-like graph $G(V',E')$ turns to be a tree-like graph, which is called tree-like graph $T$. Since $n'$ elements are on $G(V',E')$ and $\sigma_1'$ is divided into two parts, $T$ consists of $n'+1$ layers, where $T_i=\sigma_i'$ ($i\in [1,n']$) and $T_{n'+1}=\sigma_1'$. An example of $T$ is shown in Figure \ref{fig:treegraph}.

\begin{figure}[t]
\begin{center}
{\includegraphics[width=0.45\textwidth]{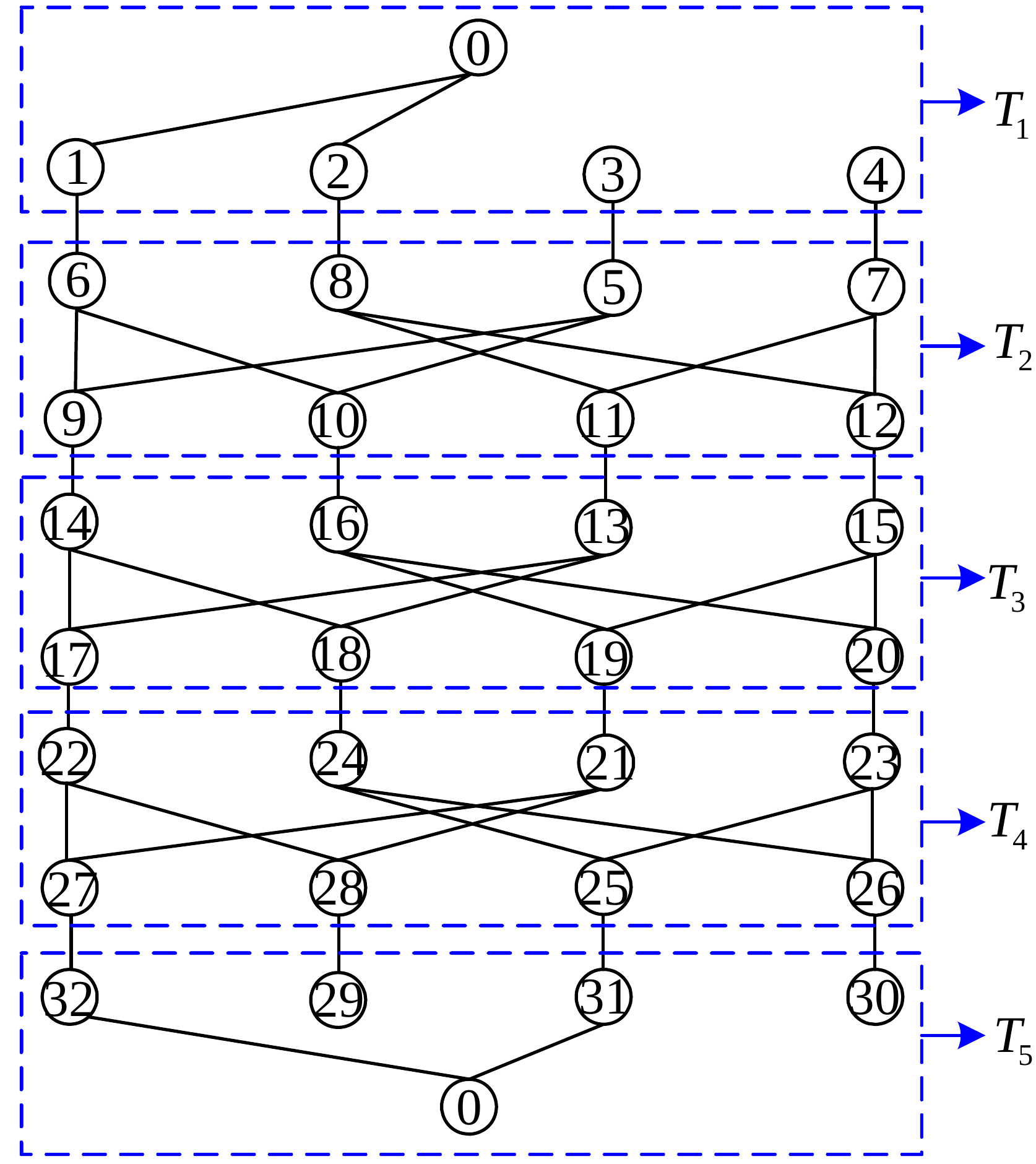}}
\caption{Tree-like graph of the example shown in Figure \ref{fig:simplifiedgraph}. There are 8 viable paths starting at $X_0$ and ending at $X_0$.}
\label{fig:treegraph}
\end{center}
\end{figure}

Evolved from $G(V',E')$, $T$ inherits the feature that given a departure configuration, moving from one layer to the next one has only one path. Define $z_i$ ($i\in [1,n'+1]$) as an arrival configuration variable of layer $T_i$. Define $L_{i,i+1}$ as the path length from an arrival configuration $z_i$ on $T_i$ to an arrival configuration $z_{i+1}$ on $T_{i+1}$, then we have

\begin{equation}
L_{i,i+1}= |{\cal L'}(z_{i},z_{i+1})|,1\leq i \leq n'.
\end{equation}

Suppose $L_{1,i+1}^*$ is the shortest path length from an arrival configuration on ${T}_{1}$ to an arrival configuration on ${T}_{i+1}$, $1\leq i \leq n'$. Since the initial value of $z_1$ is prescribed, i.e., $z_1=X_0$, it follows that $L_{1,i+1}^*$ is only a function of the viable $z_{i+1}$. Then
\begin{equation}\label{dynamicformulation}
L^*_{1,i+1}=\min\ \{L^*_{1,i}+L_{i,i+1}\},1\leq i \leq n'.
\end{equation}
where $L_{1,1}^*=0$.

Let $z_i^*$, $i\in [1,n'+1]$, denote the optimal configurations to problem (\ref{dynamicformulation}). We use a dynamic programming based method, given by Algorithm \ref{algorithm}, to solve problem (\ref{dynamicformulation}).

\begin{algorithm}[t]
\caption{Shortest Path Search}\label{algorithm}
\begin{algorithmic}[1]
\Require $T$, $\cal L'$
\Ensure $P$
\For{$i=2, i\leq n',i=i+1$}
\State For each feasible $z_i \in T_i$, choose feasible ${z}_{i-1} \in {T}_{i-1}$ to minimize $\{L^*_{1,i}+L_{i,i+1}\}$. Then $L^*_{1,i+1}$ is expressed as a function of $z_{i}$ only.
\EndFor
\State Use the destination configuration ${z}_1$ to select configurations $z_{n'+1}^*$, $z_{n'}^*$,..., $z_{2}^*$. $z_{1}^*=z_1$.
\State $P=\{{\cal L'}(z_{1}^*,z_{2}^*),...,{\cal L'}(z_{n'}^*,z_{n'+1}^*), \wideparen{{\cal P}(z_{n'+1}^*){\cal P}(z_{1}^*)}\}$.
\end{algorithmic}
\end{algorithm}

To end this section, we analyse the time complexity of SVPP. In Step 1, the complexity of the ATSP algorithm \cite{FGM82} is $O(n^3)$.  In Step 2, we check $n$ pairs of visiting circles to see whether they are blocked by any boundary of convex hull. In each checking, the worst case is to check all the $m$ obstacles and then the time complexity is $O(mn)$. In Step 3, we do a constant number of operations to each element in $\Sigma'$, then the time complexity of the simplifying procedure is $O(n+K)$. Converting $G(V',E')$ to $T$ costs $O(1)$ in Step 4. Finally, because the maximum number of arrival configurations is 4 on each layer of $T$, searching the shortest path in $T$ costs $O(n+K)$. Therefore, the overall time complexity of SVPP is $O(n^3)$. 

\section{$k$-Shortest Viable Path Planning}\label{algorithm_extend}
In Section \ref{algorithm_basic}, we have only considered the situation of using single robot. Applying SVPP to a large scale network may result in a path with unacceptable collection time. One possible solution to deal with the long collection time is to employ multiple robots. Given the sensor network, increasing the number of robots can decrease the average path length of the robots. In this scenario, we need to design a viable path for each robot. Given $k$ identical robots whose initial positions are at ${\cal P}(X_0)\in C_1$, we aim at finding $k$ viable paths $\{P_1,...,P_k\}$ such that:
\begin{enumerate}
\item
$P_l$, $l\in [1,k]$ starts from ${\cal P}(X_0)$ and ends at $C_1$;
\item
for each $i\in[2,n]$, $\exists l\in[1,k]$ such that a) $C_i \cap P_l \neq \emptyset$ and b) $C_i \cap P_j = \emptyset$, $j\in [1,k] \backslash \{l\}$, i.e., each visiting circle is visited by only one viable path;
\item
$\max_{l\in[1,k]}\{|P_l|\}$ is minimized, where the function $len(P_l)$ gives the length of $P_l$.
\end{enumerate}

Here the objective is to minimize the length of the longest path, denoted as $k$-length. This objective can make the paths have similar lengths. Since the $k$ robots move at the same speed, the collection times of these paths are more or less equal. It it easy to understand that this problem is a NP-hard problem. If the non-holonomic constraint is removed, $C_i$ is reduced to $s_i$. Then it is reduced to $k$-TSPN. If $k=1$, it is further reduced to TSPN, which is known as NP-hard. So this problem is also NP-hard. 

Three relevant approaches have been proposed for the case of multiple robots. The first is called Vehicle Routing Problem (VRP) \cite{DR59}. Since the VRP framework targets on minimization of the total length of all the paths, it may lead to the situation where the lengths of the produced paths are quite different from each other. The second one is the cluster-based approach which first divides the network into several clusters and then finds optimal path in each cluster. The cluster-based approach (e.g., K-means \cite{visitingcircle2015, TO05}) is known for effectively decreasing the problem scale, but the path lengths are still not guaranteed to be similar. The third one is the split-based approach (e.g., $k$-SPLITOUR \cite{FHK78}) which is first to construct a whole path for the single robot and then divided it into several parts by carefully selecting $k-1$ split positions. It is noted for generating paths of more or less equal lengths. However, we point the paths generated by $k$-SPLITOUR may not be guaranteed to be optimal since it simply connects the initial position to the selected split positions. 

Based on these considerations, we propose an algorithm, called $k$-Shortest Viable Path Planning ($k$-SVPP). $k$-SVPP is to make use of the profits of the cluster-based and split-based approaches to compensate their defect. Particularly, we use $k$-SPLITOUR to guide clustering and find optimal path in each cluster. The $k$-SVPP algorithm is given by Algorithm \ref{ksplit} and its main steps include 1) run SVPP to get a whole path; 2) use $k$-SPLITOUR to split the whole path into $k$ parts and get $k$ clusters; and 3) run SVPP to reconstruct the $k$ paths.  

\begin{algorithm}[t]
\caption{$k$-Shortest Viable Path Planning}\label{ksplit}
\begin{algorithmic}[1]
\State Run SVPP to get path $P$ with length $L=|P|$ and permutation $\Sigma=\{\sigma_1,...,\sigma_n\}$.
\State For each $l$, $l\in [1,k-1]$, find the last visiting circle $\sigma_{i(l)}$ along $P$ such that $len(\sigma_1,\sigma_{i(l)})\leq \frac{l}{k}(L-2L_{max})+L_{max}$. Construct $k$ clusters as follows: 

$Cluster_1=\{\sigma_1,\sigma_2,...,\sigma_{i(1)}\}$,

$Cluster_l=\{\sigma_1,\sigma_{i(l-1)+1},...,\sigma_{i(l)}\}$, $l\in [2,k-1]$, 

$Cluster_k=\{\sigma_1,\sigma_{i(k-1)+1},...,\sigma_{n}\}$.
\State Run SVPP to compute $P_l$ on $Cluster_l$, $l\in [1,k]$. 
\end{algorithmic}
\end{algorithm}

In Algorithm \ref{ksplit}, $L_{max}=\max_{i\in[1,n]}|{\cal L}(X_{\sigma_1},X_{\sigma_i})|$, i.e., the largest length between the configurations on $\sigma_1$ and the other visiting circles. Function $len()$ calculates the path length between two given configurations. $\cup_{l=1}^k Cluster_l=\{C_1,...,C_n\}$ and $Cluster_{l_1} \cap Cluster_{l_2}=C_1$, $l_1,l_2\in [1,k]$, $l_1\neq l_2$. Note the structure of $k$-SVPP is the same with $k$-SPLITOUR. The main differences between them are as follows. First, $k$-SPLITOUR is originally designed for $k$-TSP while in $k$-SVPP all the path segments have taken into account the concept of viable path. We can replace all the paths generated by $k$-SPLITOUR with viable paths and we call the corresponding method viable $k$-SPLITOUR. Second, in $k$-SPLITOUR the second (final) step is to construct $k$ permutations, i.e.,
\begin{equation*}
\begin{aligned}
&\Sigma_1=\{\sigma_1,\sigma_2,...,\sigma_{i(1)}\},\\
&\Sigma_l=\{\sigma_1,\sigma_{i(l-1)+1},...,\sigma_{i(l)}\}, l\in [2,k-1], \\
&\Sigma_k=\{\sigma_1,\sigma_{i(k-1)+1},...,\sigma_{n}\}.
\end{aligned}
\end{equation*}
while in $k$-SVPP we conduct a further operation (Step 3), i.e., reconstructing the path in each cluster. Suppose the lengths of the paths generated by viable $k$-SPLITOUR and $k$-SVPP are represented by $Length_l$ and $Length_l^*$, $l\in [1,k]$ respectively. We claim that $Length_l\geq Length_l^*,\forall l\in[1,k]$. Then we have $\max_l\{Length_l\}\geq \max_l\{Length_l^*\}$, i.e., the $k$-length of paths by $k$-SVPP is no greater than that of viable $k$-SPLITOUR. Behind this, the cost is to run SVPP for another $k$ times to get the $k$ reconstructed paths in $k$-SVPP while it is straightforward to construct paths based on the $k$ permutations in viable $k$-SPLITOUR.

\section{Simulation results}
\label{simulation_path}
This section is divided into three parts. The first part demonstrates the performance on some instances and investigate the influence of different factors on SVPP. The second part displays the performance of $k$-SVPP, and the third part provides the comparison with multihop communication.
\subsection{Performance of SVPP}\label{svpp}
We simulate a 200$m$ $\times$ 200$m$ virtual field with a set of disjoint obstacles. In this field, $n$ (10-50) sensor nodes are randomly deployed outside the obstacles. According to Assumption \ref{position}, 1) any two obstacles are at least $2d_{safe}$ from each other; 2) any two sensor nodes are at least $2R_{min}$ from each other; and 3) any sensor node and obstacle are $R_{min}+d_{safe}$ from each other. We further assume that the sensor network is connected, i.e., any node has at least one neighbour node within $d_{max}$ distance. This enables the nodes to transmit the data to the base via multihop communication. The sensor nodes are event driven and the data load $g$ is between 0 and 1$MB$. We consider $k$ identical ground robots, whose maximum angular velocity $u_M$ is set as $1rad/s$. The speed $v$ is between 1 and $6m/s$. Thus the corresponding minimum turning radius $R_{min}$ is between 1 and $6m$. 

This section displays the performance of SVPP. Since the base station is treated in the same way as sensor nodes, we present them using the same mark in the following figures. An network instance with 10 nodes is shown in Figure \ref{fig:node10_f} together with its Tangent Graph, Simplified Tangent Graph. The speed of the robot is taken as 6$m/s$. Applying the proposed algorithm SVPP, we get the shortest viable path. Here we demonstrate two shortest viable paths given two different initial headings. We also conduct simulations for the networks with 20, 30, 40 and 50 sensor nodes shown in Figure \ref{fig:node20_50}, where $v=3m/s$. In these simulations, we set a small value to $g$ such that the data can be loaded shortly.

\begin{figure}[t]
    \centering
    \begin{subfigure}[t]{0.45\textwidth}
        \includegraphics[width=\textwidth]{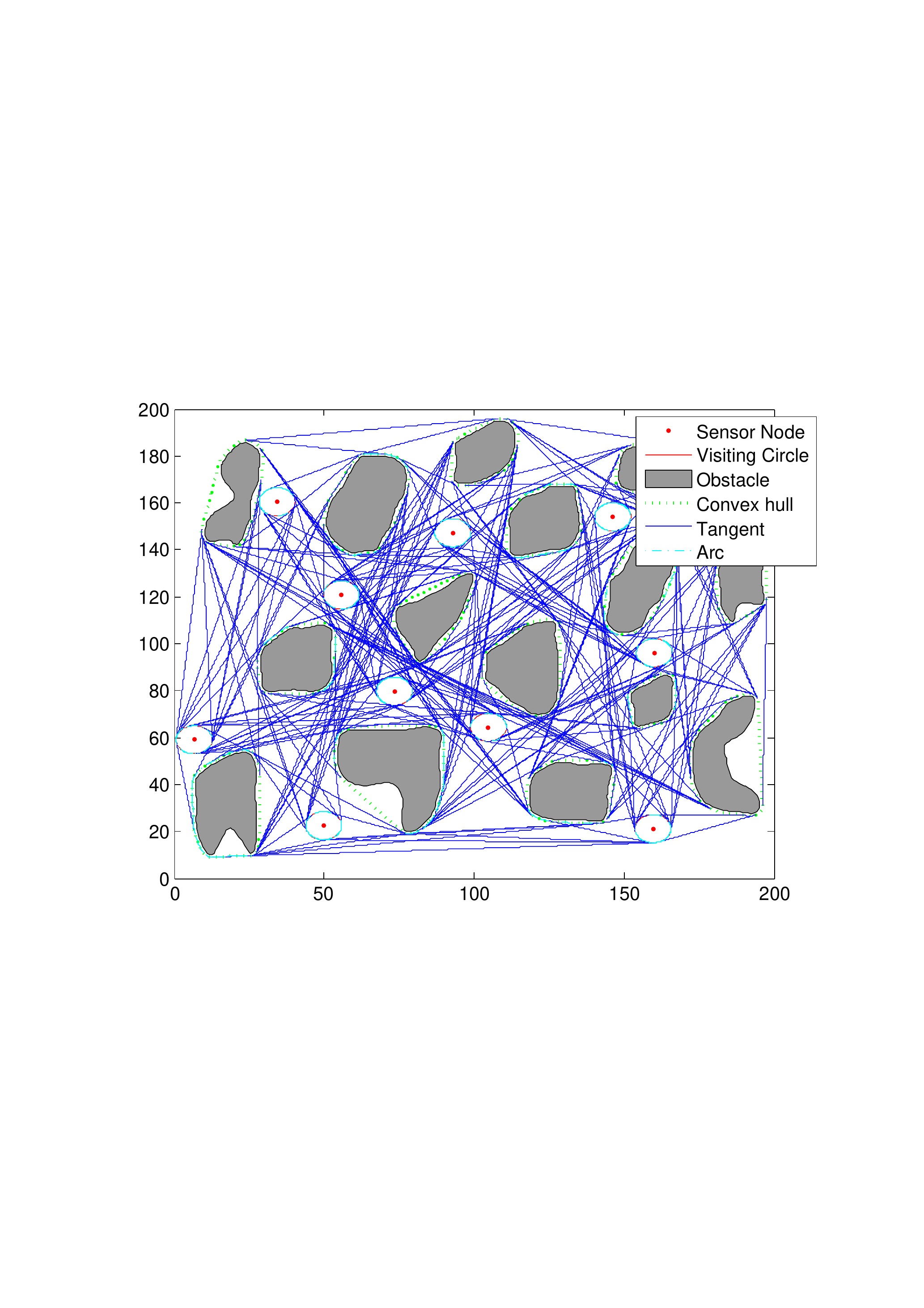}
        \caption{}
        \label{fig:tangentgraph}
    \end{subfigure}
    ~
    \begin{subfigure}[t]{0.45\textwidth}
        \includegraphics[width=\textwidth]{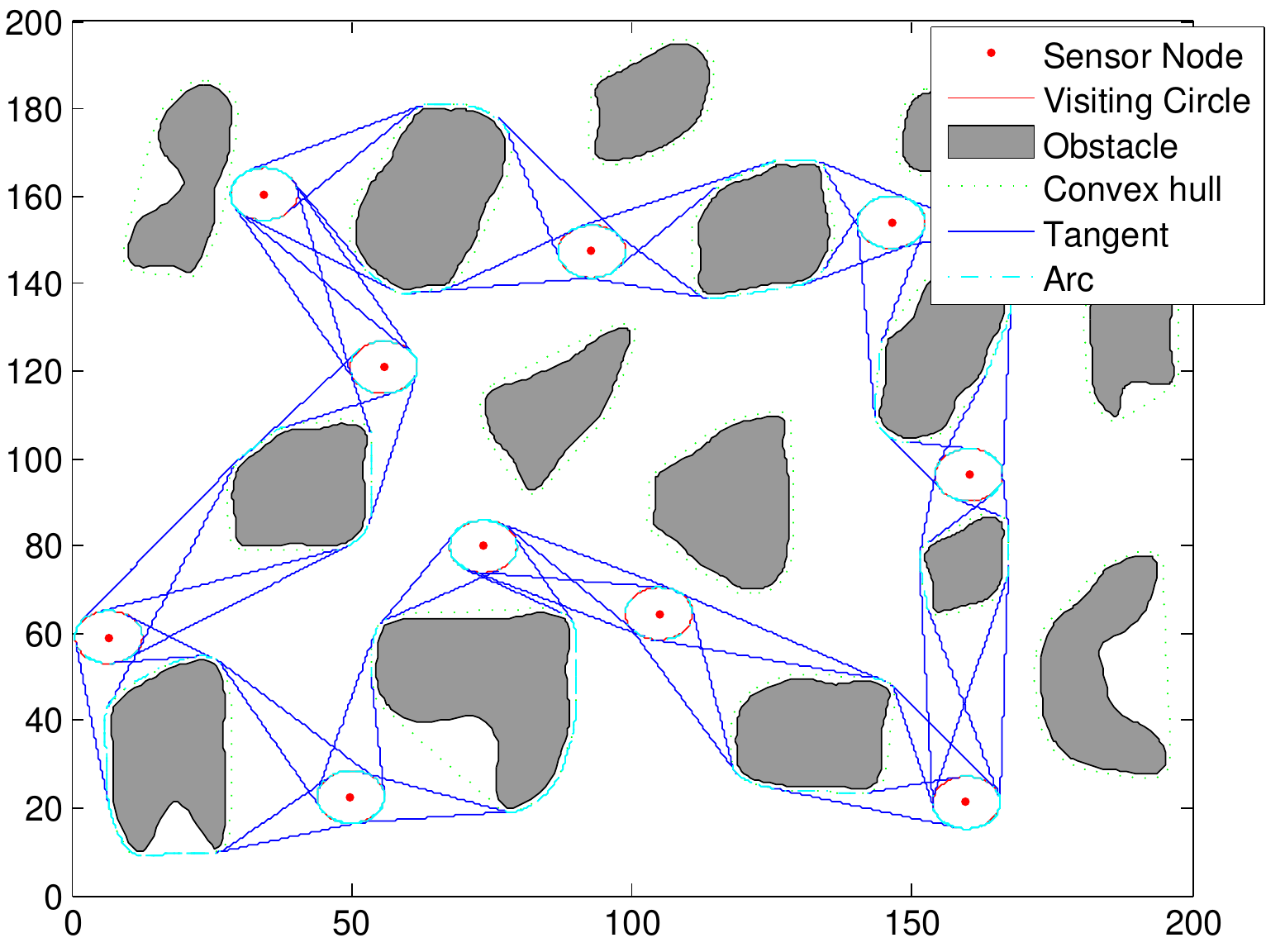}
        \caption{}
        \label{fig:simplified_tangent_graph}
    \end{subfigure}
    \\
    \begin{subfigure}[t]{0.45\textwidth}
        \includegraphics[width=\textwidth]{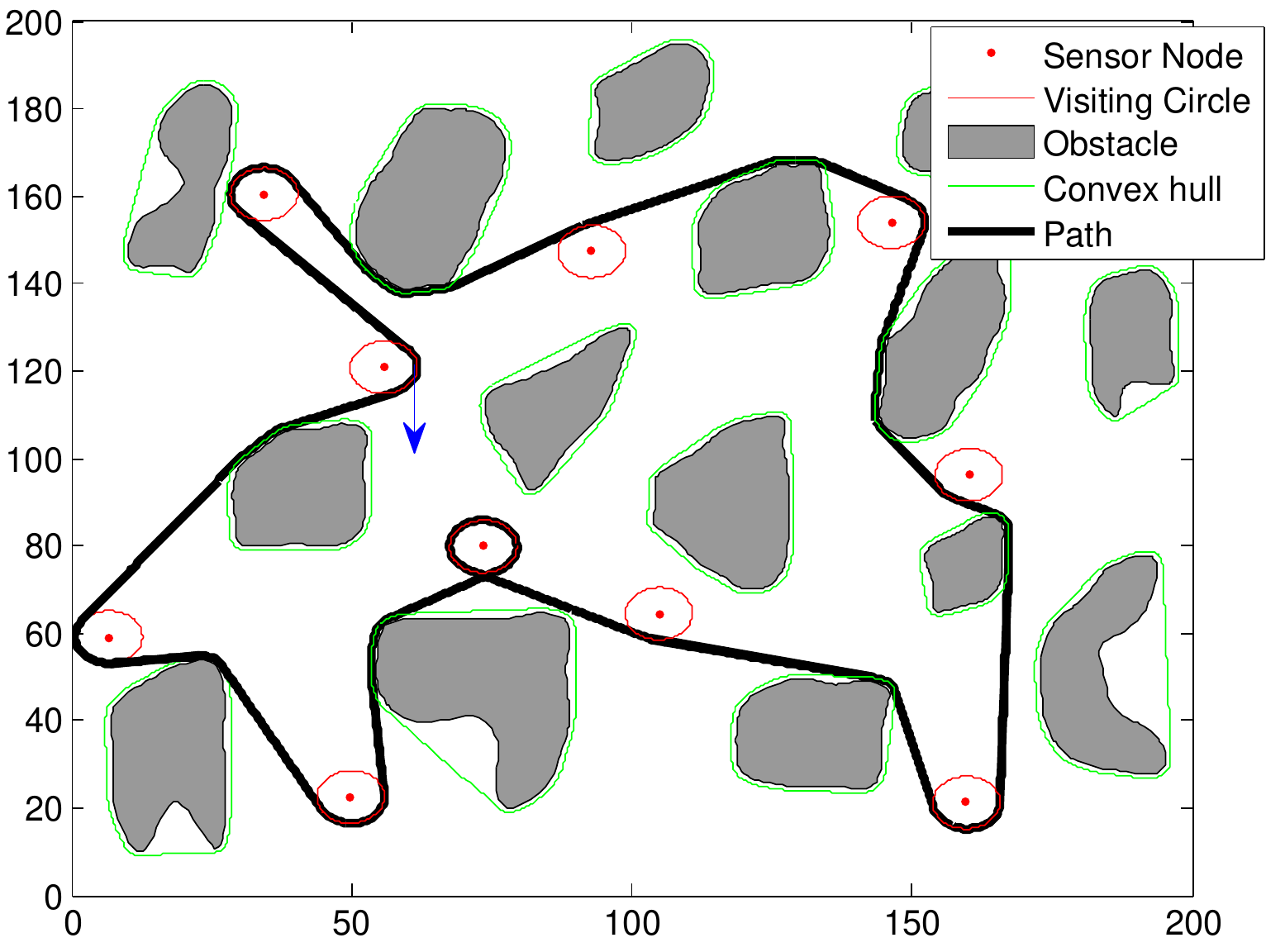}
        \caption{}
        \label{fig:node10_min}
    \end{subfigure}
    ~
    \begin{subfigure}[t]{0.45\textwidth}
        \includegraphics[width=\textwidth]{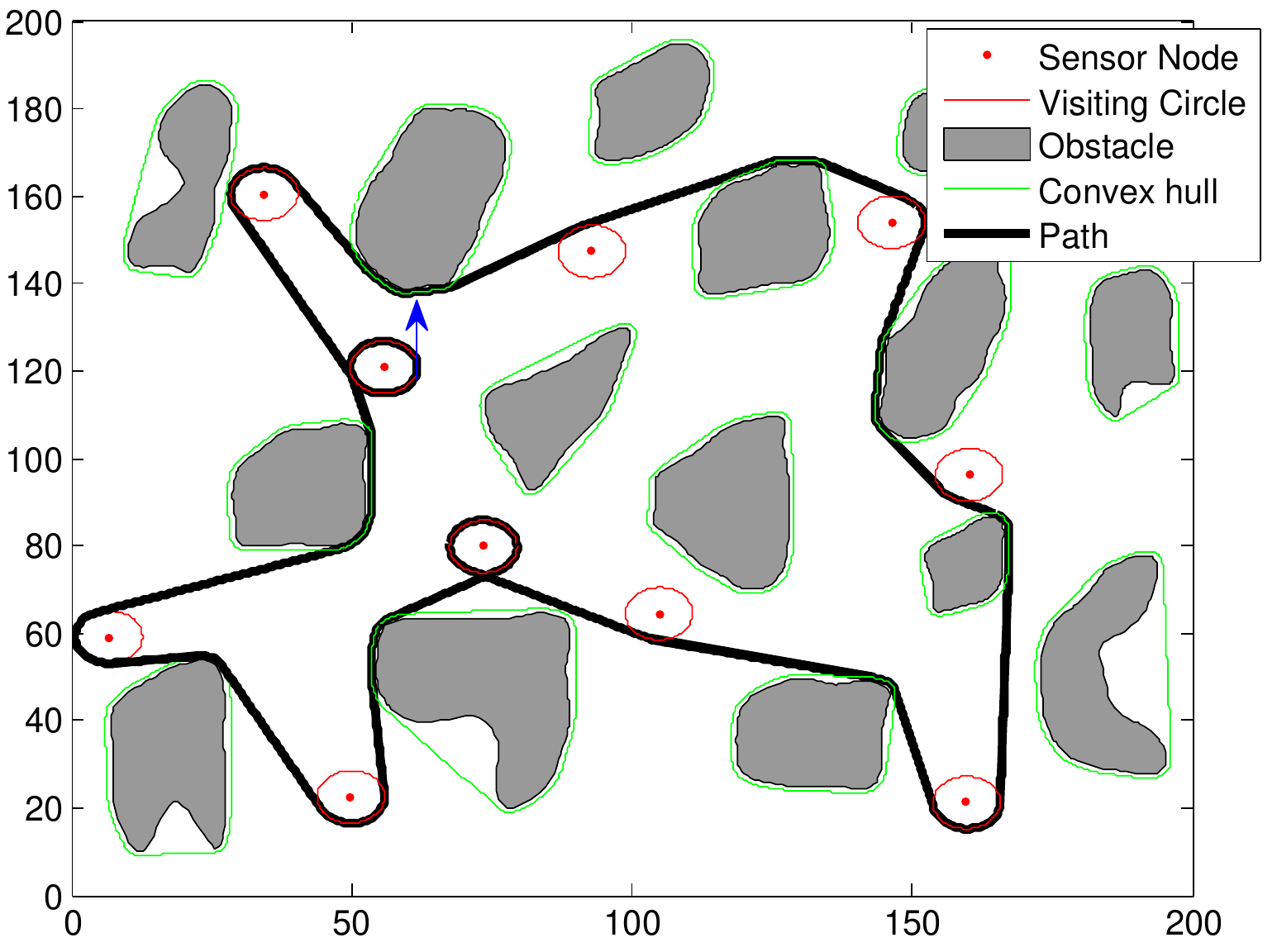}
        \caption{}
        \label{fig:node10_max}
    \end{subfigure}
    \caption{A demonstrative example with $n=10$: (a) Tangent Graph; (b) Simplified Tangent Graph; (c) Given a south facing initial heading, the shortest viable path length is 749.1$m$; (d) Given a north facing initial heading, the shortest viable path length is 786.7$m$}\label{fig:node10_f}
\end{figure}

\begin{figure}[t]
    \centering
    \begin{subfigure}[t]{0.45\textwidth}
        \includegraphics[width=\textwidth]{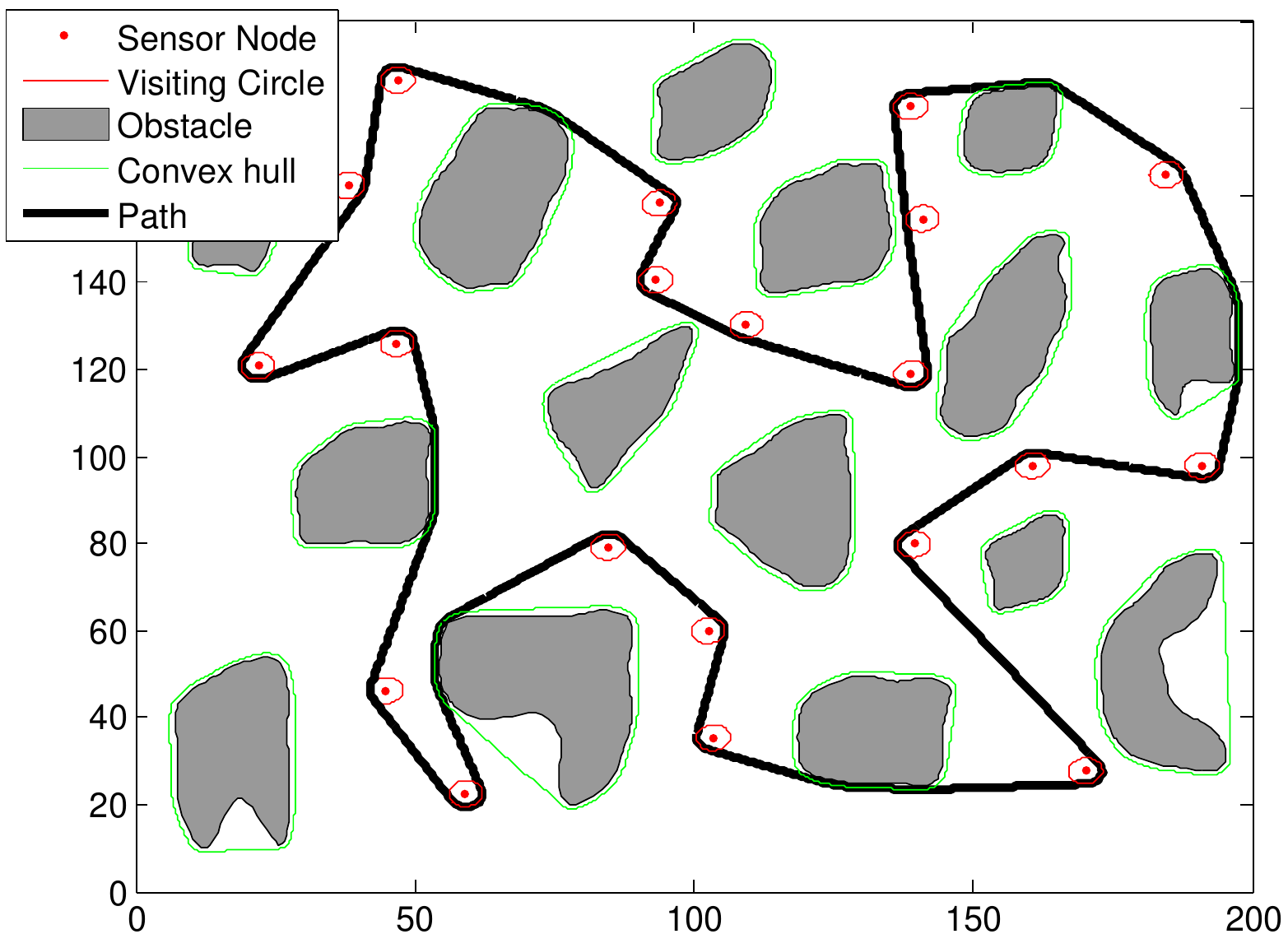}
        \caption{}
        \label{fig:node20}
    \end{subfigure}
    ~
    \begin{subfigure}[t]{0.45\textwidth}
        \includegraphics[width=\textwidth]{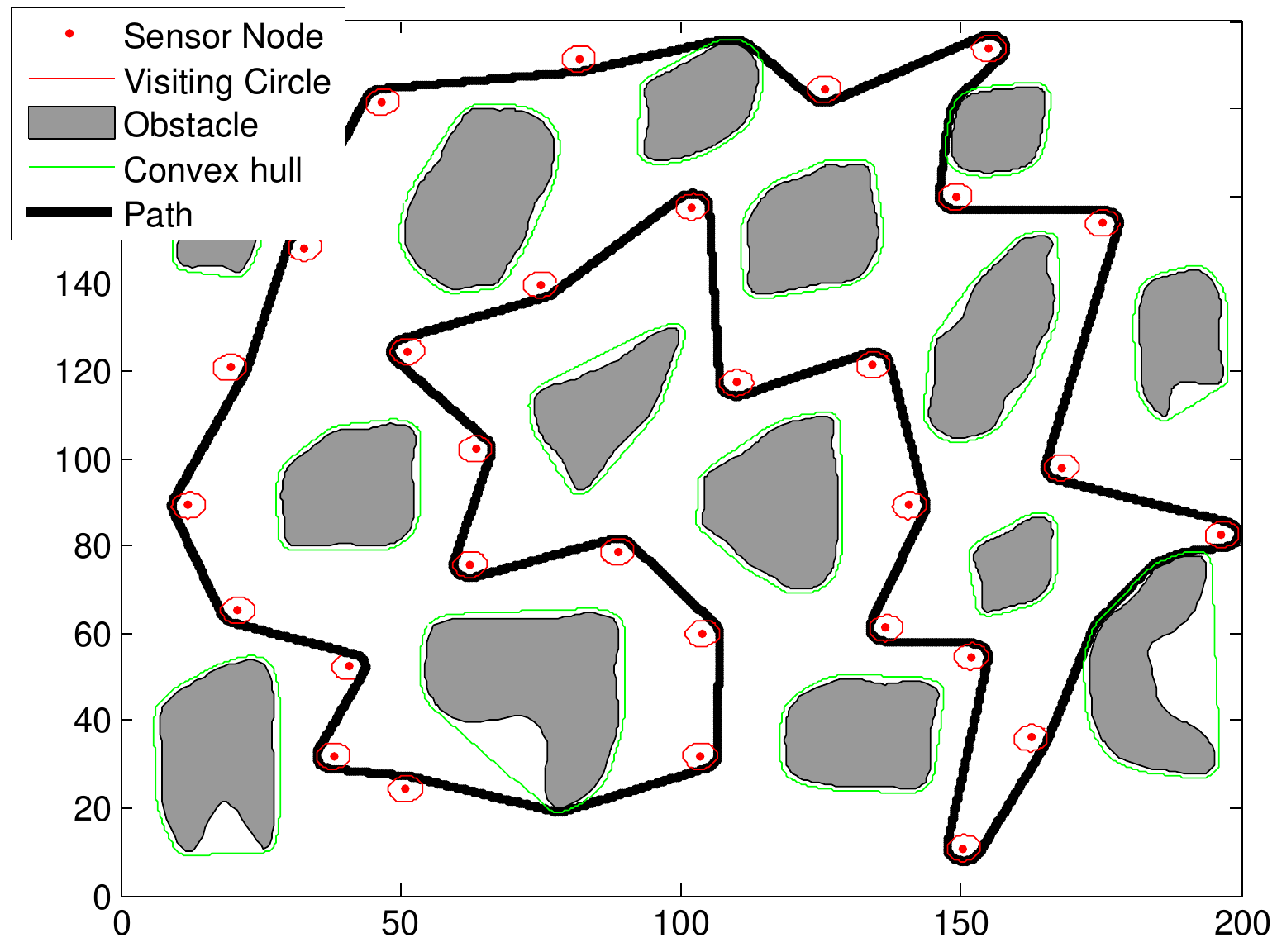}
        \caption{}
        \label{fig:node30}
    \end{subfigure}
    \\
    \begin{subfigure}[t]{0.45\textwidth}
        \includegraphics[width=\textwidth]{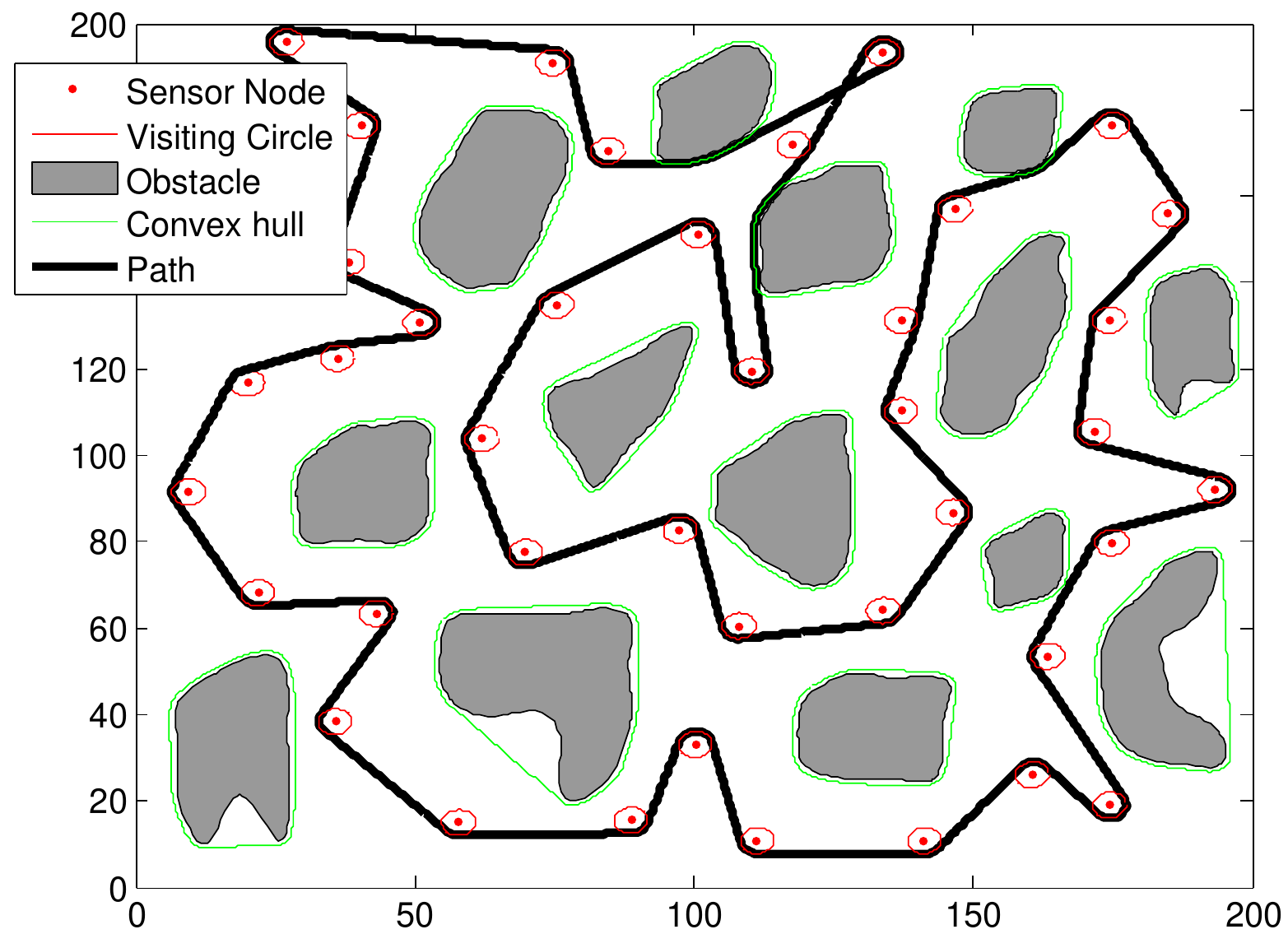}
        \caption{}
        \label{fig:node40}
    \end{subfigure}
    ~
    \begin{subfigure}[t]{0.45\textwidth}
        \includegraphics[width=\textwidth]{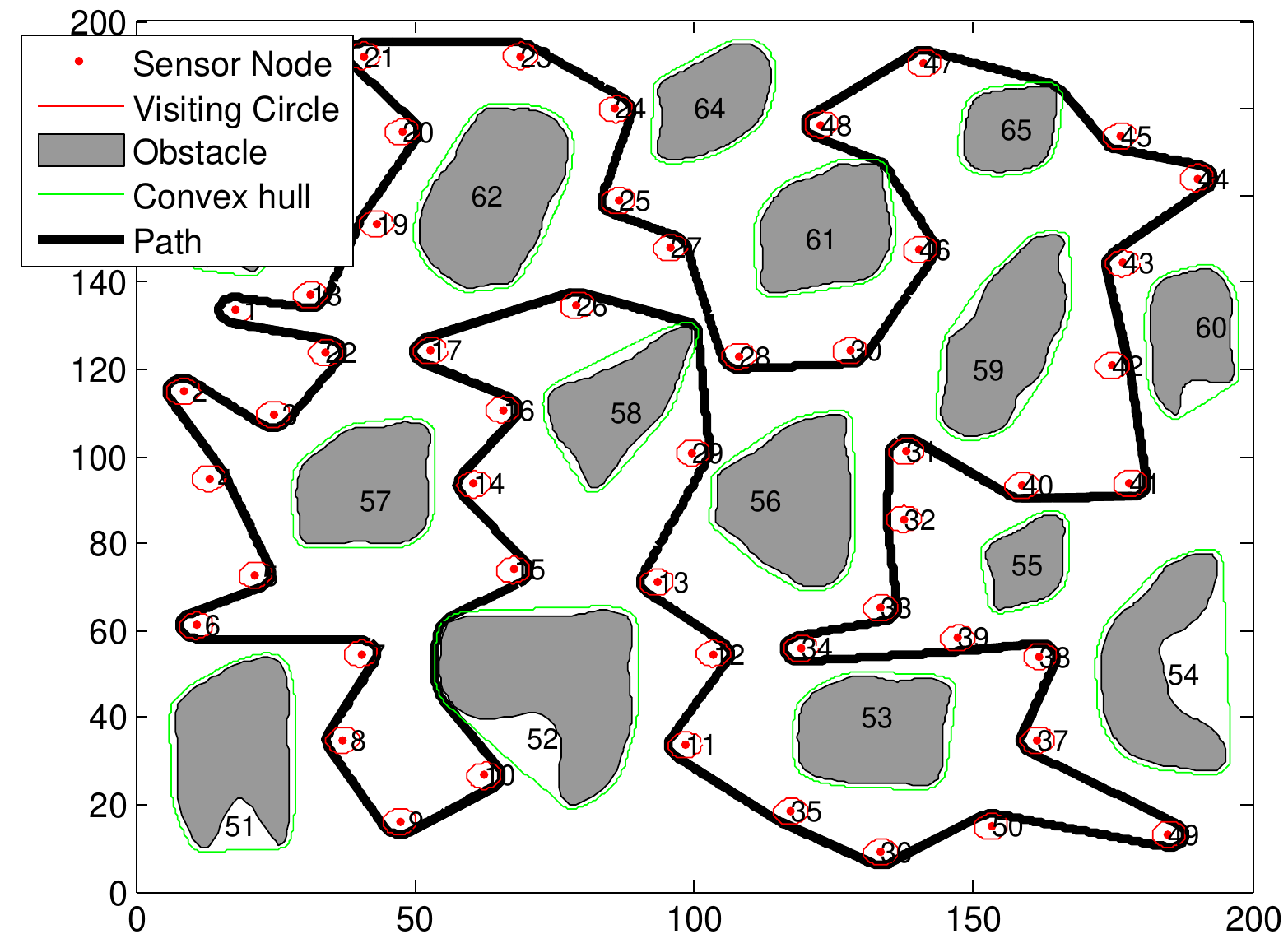}
        \caption{}
        \label{fig:node50}
    \end{subfigure}
    \caption{The viable paths for instances with different scale: (a) $n=20$ and the path length is 907.0$m$; (b) $n=30$ and the path length is 1078.4$m$; (c) $n=40$ and the path length is 1275.9$m$; (d) $n=50$ and the path length is 1352.4$m$.}\label{fig:node20_50}
\end{figure}

Next we investigate the influences of two key parameters, i.e., robot speed $v$ and data load $g$ on two system metrics: path length and collection time (path length/$v$). Here $v$= 1, 2, 3, 4, 5 and 6$m/s$ and $n=$10, 20, 30, 40, and 50. For each pair of $v$ and $n$, we simulate 20 independent instances and the results are shown in Figure \ref{fig:metrics1}. For a fixed $n$, the path length tends to increase with the increasing of $v$, due to the increasing of $R_{min}$. Extremely, if $v=0$, it turns to the case of TSP. We do not display the TSP paths as they are not viable. In Figure \ref{fig:timedelay1}, the collection time  decreases with the increasing of $v$. From these simulations we can see that although increasing the robot speed raises the path length, the collection time can be reduced. Note the data load $g$ still takes a small value. When large data load is accounted, such conclusion may not be appropriate.

\begin{figure}[t]
    \centering
    \begin{subfigure}[t]{0.45\textwidth}
        \includegraphics[width=\textwidth]{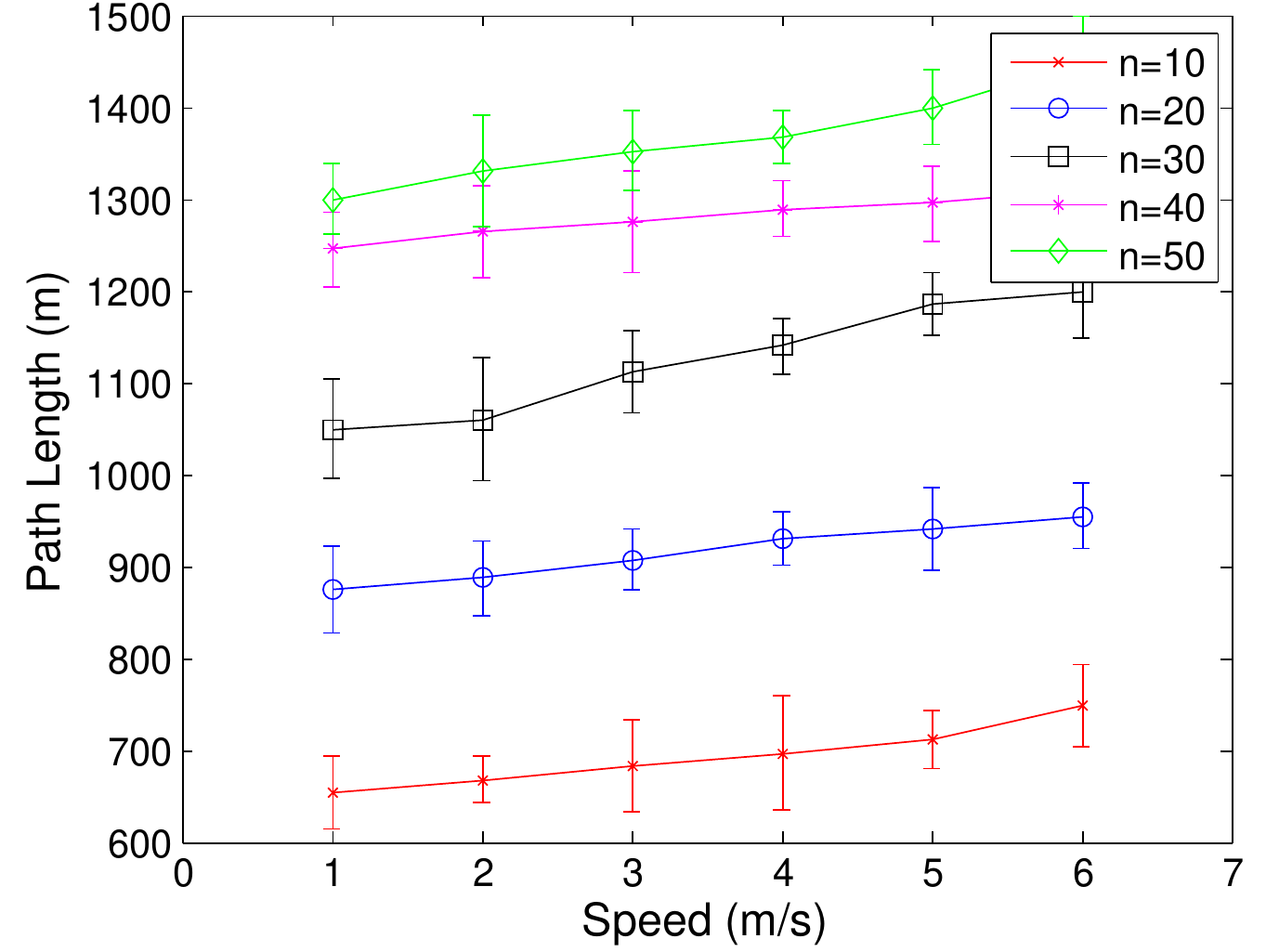}
        \caption{}
        \label{fig:pathlength1}
    \end{subfigure}
    \begin{subfigure}[t]{0.45\textwidth}
        \includegraphics[width=\textwidth]{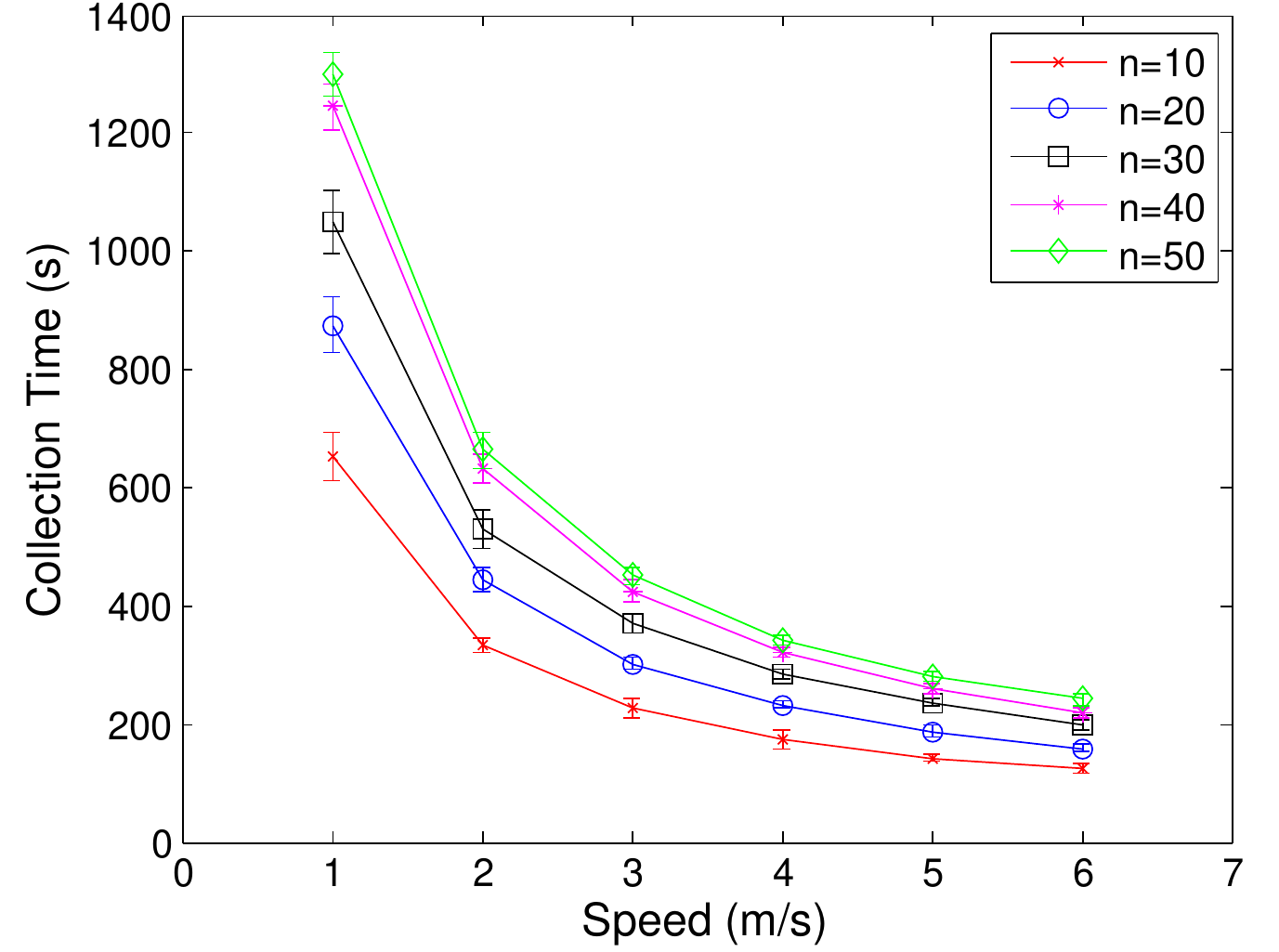}
        \caption{}
        \label{fig:timedelay1}
    \end{subfigure}
    \caption{The impacts of speed on path length and collection time: (a) Path length; (b) Collection time.}\label{fig:metrics1}
\end{figure} 

On these instances, we measure the computation time of SVPP on a 64-bit Windows machine with the processor of Inter(R) Core(TM) i5-4570CUP @3.20GHz. We display the average time for each network scale in Figure \ref{fig:computation_time}. It shows that the major time consumption is made on Step 1 to calculate the permutation, while the procedures (Step 2-5) to search the path takes relatively short time.

\begin{figure}[t]
\begin{center}
{\includegraphics[width=0.5\textwidth]{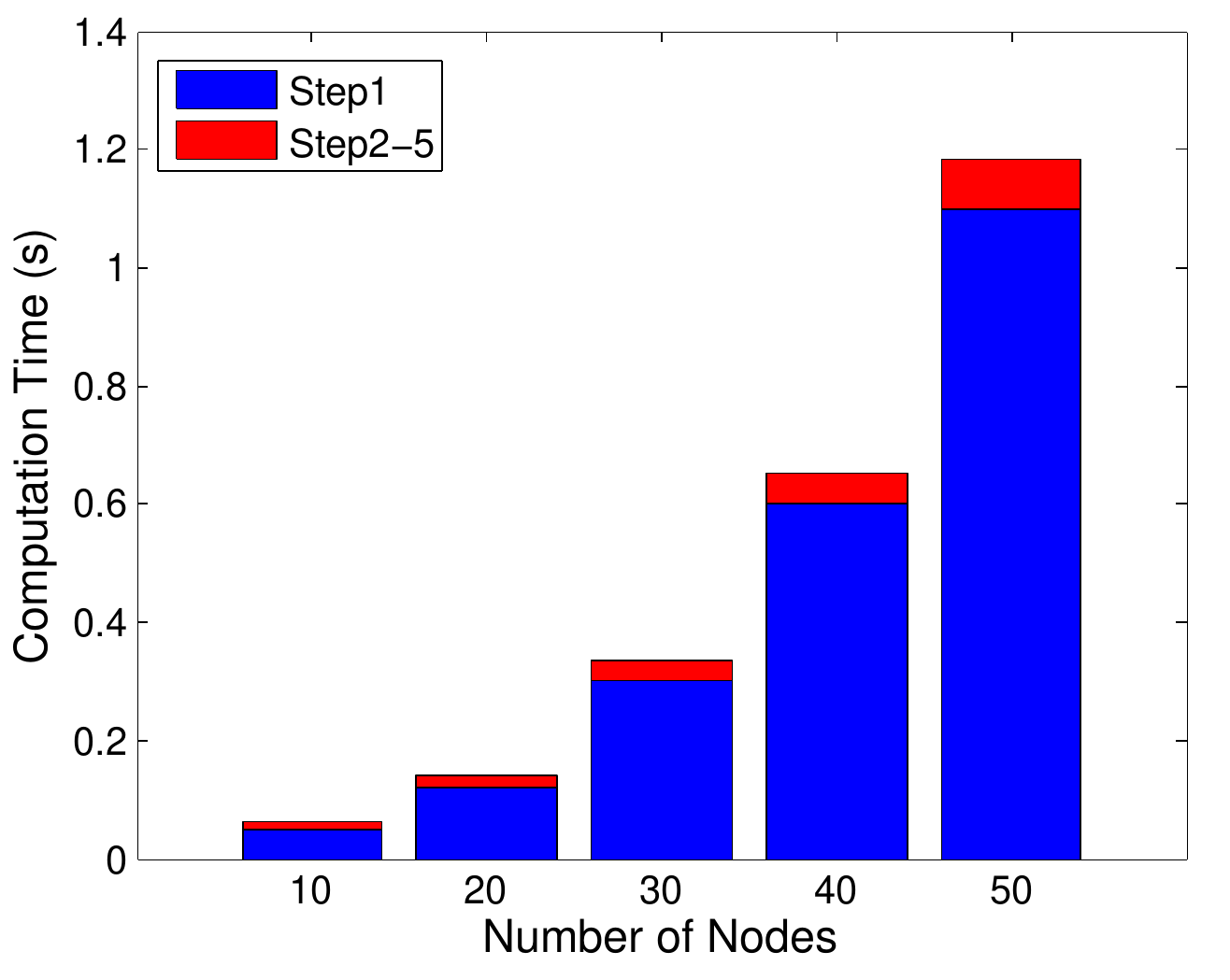}}
\caption{Average computation time of SVPP for different scale networks.}
\label{fig:computation_time}
\end{center}
\end{figure}

Now we investigate the impact of $g$ on system metrics and we focus on a network such as the one shown in Figure \ref{fig:node10_f}. Here, the data loads are uniform among all the nodes, which are between 0 and 1$MB$. For each pair of $g$ and $v$, we compute the shortest viable path by SVPP and the results are shown in Figure \ref{fig:metrics2}. We can learn that for a fixed $v$, the path length is non-decreasing with $g$ and it has an initial stabilization. For example, for $v=3m/s$, with the increasing of $g$, the path length remains at first and then starts to increase after $g=0.08MB$. The reason behind this is amoung the nodes' regular contact times, the shortest one allows to transmit up to $0.08MB$. When $g$ is larger than $0.08MB$, at the shortest regular contact time is not enough, then the reading adjustment applies, which leads to the raising of the path length. Another interesting phenomenon is the length of the path with higher $v$ may be shorter than that with lower $v$. For example, when $g=0.02MB$, the path length with $v=5m/s$ is longer than that with $v=6m/s$. This is because the regular contact times on the path with $v=6m/s$ are still enough for the data load, while several circles are added to the path with $v=5m/s$.
Figure \ref{fig:timedelay2} shows that the collection time increases with data load and increasing the robot speed can reduce the collection time.

\begin{figure}[t]
    \centering
    \begin{subfigure}[t]{0.45\textwidth}
        \includegraphics[width=\textwidth]{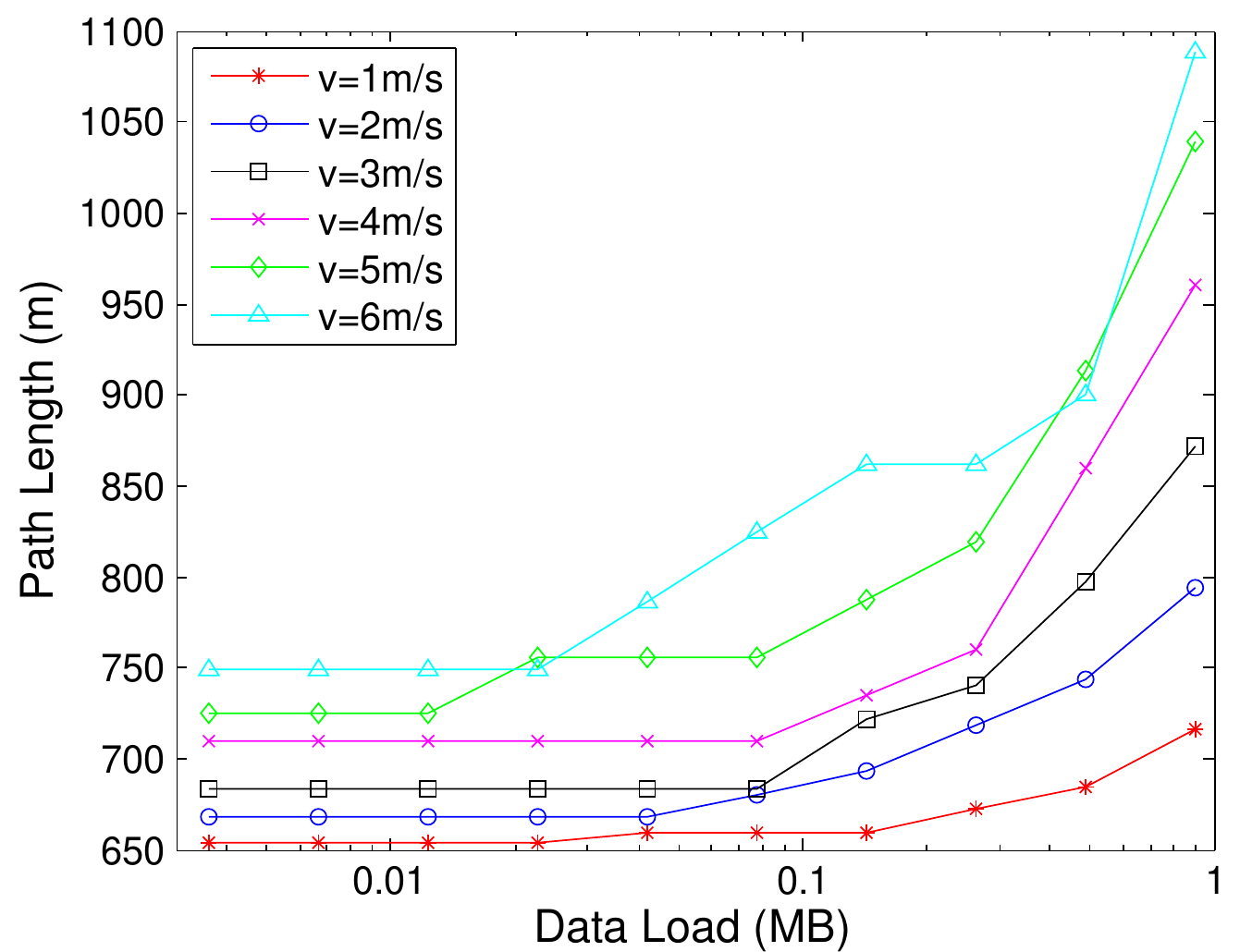}
        \caption{}
        \label{fig:pathlength2}
    \end{subfigure}
    \begin{subfigure}[t]{0.45\textwidth}
        \includegraphics[width=\textwidth]{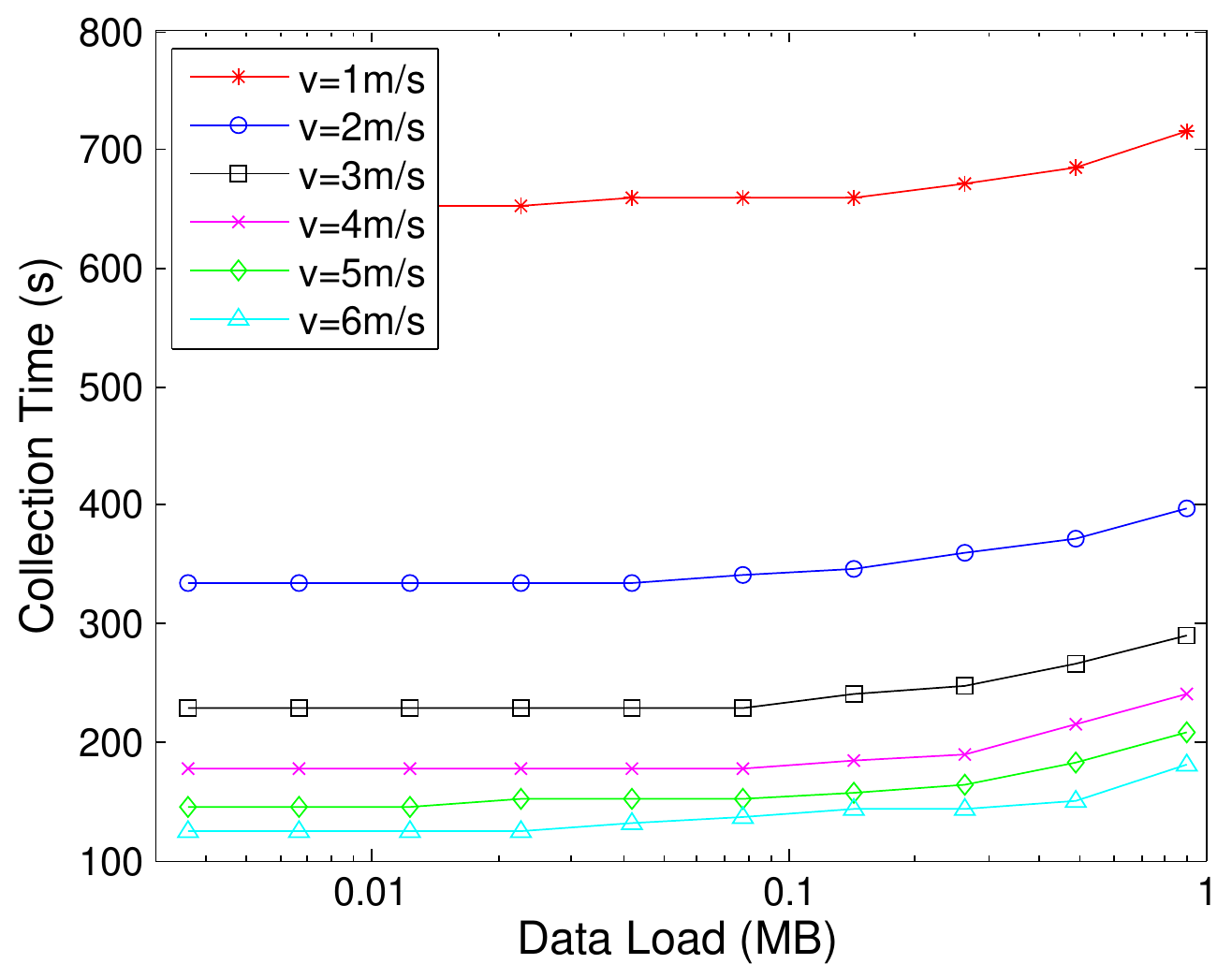}
        \caption{}
        \label{fig:timedelay2}
    \end{subfigure}
    \caption{The impacts of data generation rate on path length and collection time: (a) Path length; (b) Collection time.}\label{fig:metrics2}
\end{figure} 

\subsection{Performance of $k$-SVPP}\label{ksvpp}
This section investigates the performance of $k$-SVPP. 

We focus on a network shown in Figure \ref{fig:node50}. The node in the central of the field, i.e., $s_{29}$, is regarded as the base station, and the robots start and end at its visiting circle. We present the results of $k=2$ and $k=3$. For comparison, the paths generated by viable $k$-SPLITOUR are also displayed. For $k=2$, the split visiting circle is $C_{28}$. Then $\{C_{29}, C_{26},...,C_{28}\}$ forms the first cluster and $\{C_{29}, C_{30},...,C_{13}\}$ is the second. In this instance, $k$-SVPP and viable $k$-SPLITOUR generate the same paths, see Figure \ref{fig:node50_two_path}. For $k=3$, the two split visiting circles are $C_{18}$ and $C_{43}$. Then $\{C_{29}, C_{26},...,C_{18}\}$ forms the first cluster, $\{C_{29}, C_{19},...,C_{43}\}$ forms the second, and $\{C_{29}, C_{42},...,C_{13}\}$ is the third. The three paths generated by $k$-SVPP are shown in Figure \ref{fig:node50_three_path} and those of viable $k$-SPLITOUR is demonstrated by Figure \ref{fig:node50_three_path_compare}. Comparing these two results, we can see that the paths generated by $k$-SVPP outperform those by viable $k$-SPLITOUR. The $k$-length of $k$-SVPP paths is 528.8$m$ and that of viable $k$-SPLITOUR is 569.2$m$. Thus, in this instance, $k$-SVPP saves path length by 7.6\%. The reason behind this improvement is obvious, i.e., as discussed in Section \ref{algorithm_extend}, the split permutations are not optimal in the clusters. Thus, the paths constructed based on these permutations are not the shortest.

\begin{figure}[t]
\begin{center}
{\includegraphics[width=0.45\textwidth]{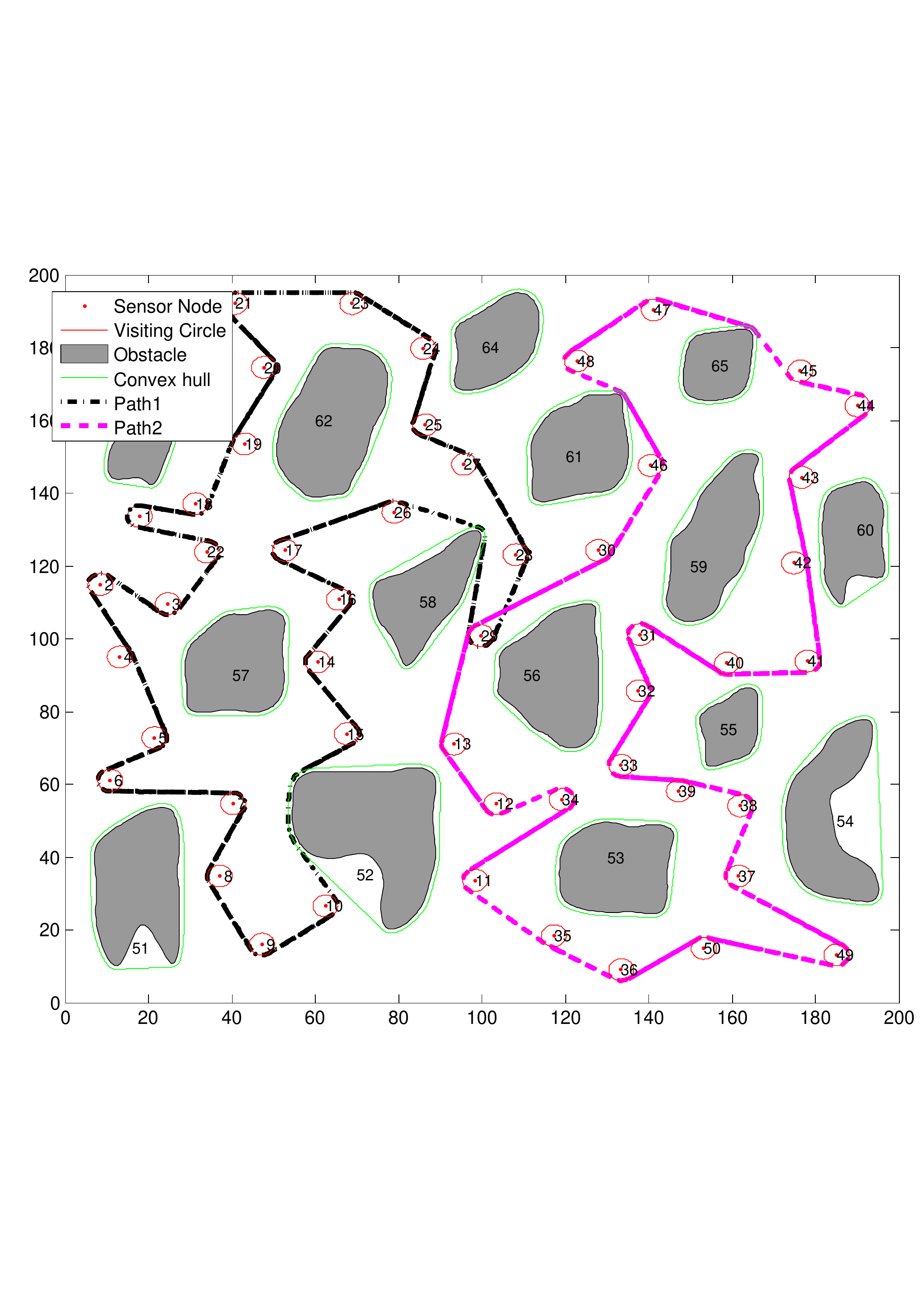}}
\caption{When $k=2$, the paths generated by $k$-SVPP and viable $k$-SPLITOUR on the instance shown in Figure \ref{fig:node50} are the same. The two path lengths are 710.2$m$ and 688.3$m$ respectively.}
\label{fig:node50_two_path}
\end{center}
\end{figure}

\begin{figure}[t]
    \centering
    \begin{subfigure}[t]{0.45\textwidth}
        \includegraphics[width=\textwidth]{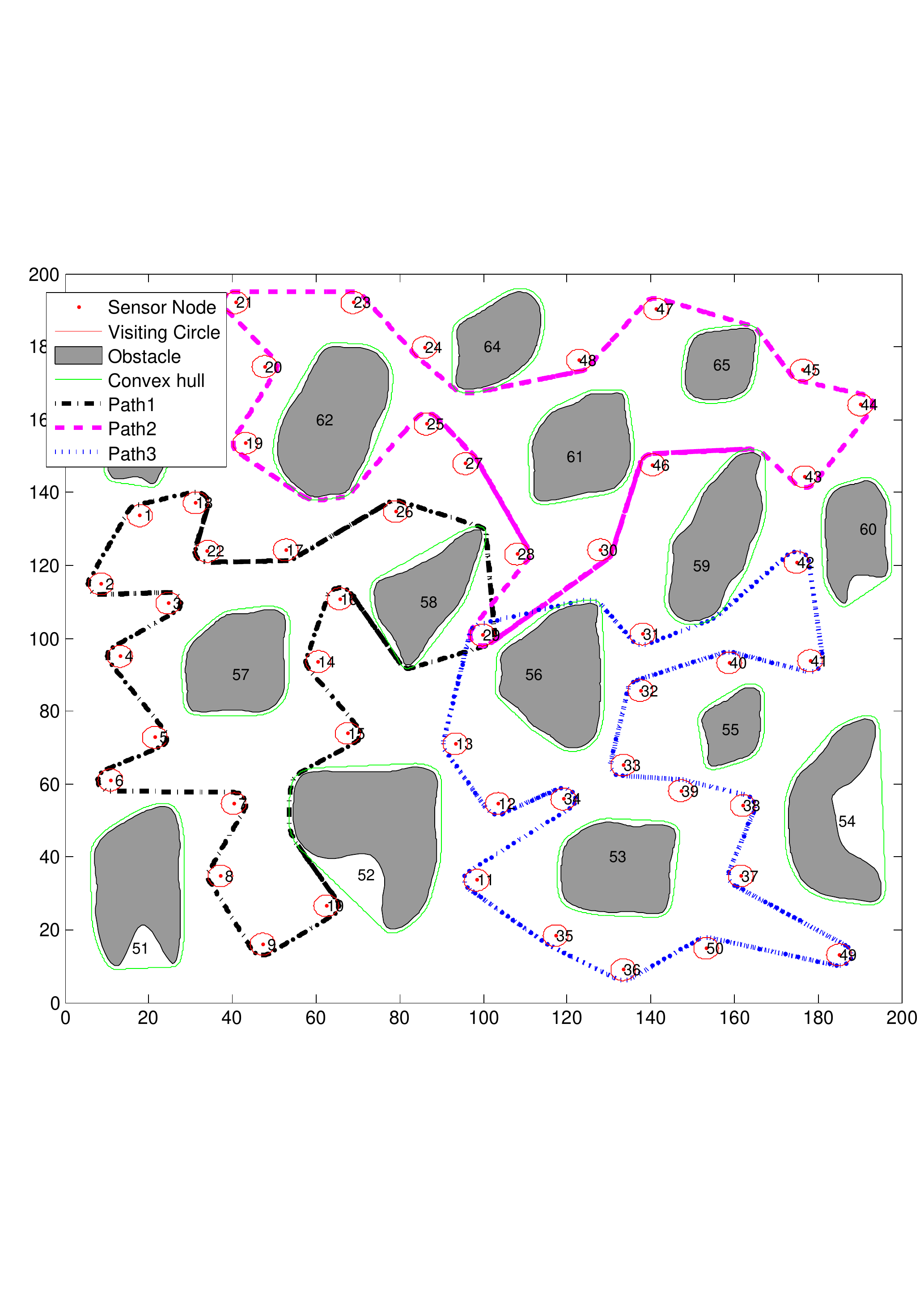}
        \caption{}
        \label{fig:node50_three_path}
    \end{subfigure}
    \begin{subfigure}[t]{0.45\textwidth}
        \includegraphics[width=\textwidth]{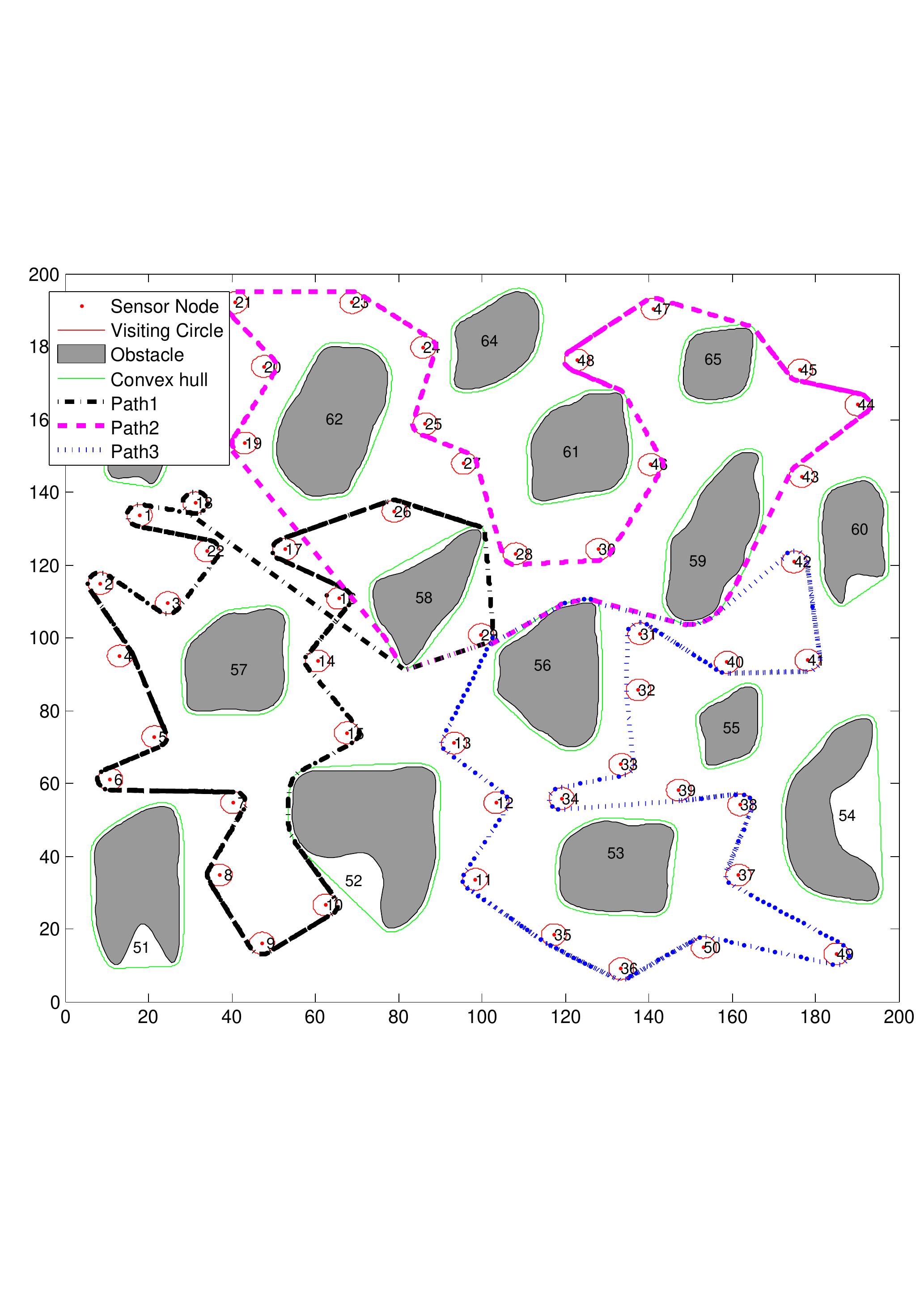}
        \caption{}
        \label{fig:node50_three_path_compare}
    \end{subfigure}
    \caption{Comparison of $k$-SVPP and viable $k$-SPLITOUR when $k=3$: (a) $k$-SVPP (path lengths are 528.8$m$, 517.2$m$ and 517.0$m$); (b) $k$-SPLITOUR (path lengths are 569.2$m$, 559.4$m$ and 552.8$m$).}\label{fig:node50_3}
\end{figure} 

Further, we apply $k$-SVPP to more network instances and compare the performance with viable $k$-SPLITOUR for $k=3$. The results are shown in Figure \ref{fig:klength} We can see that $k$-SVPP performs no worse than viable $k$-SPLITOUR and the former yields 5.5\% improvement on average.

\begin{figure}[t]
\begin{center}
{\includegraphics[width=0.5\textwidth]{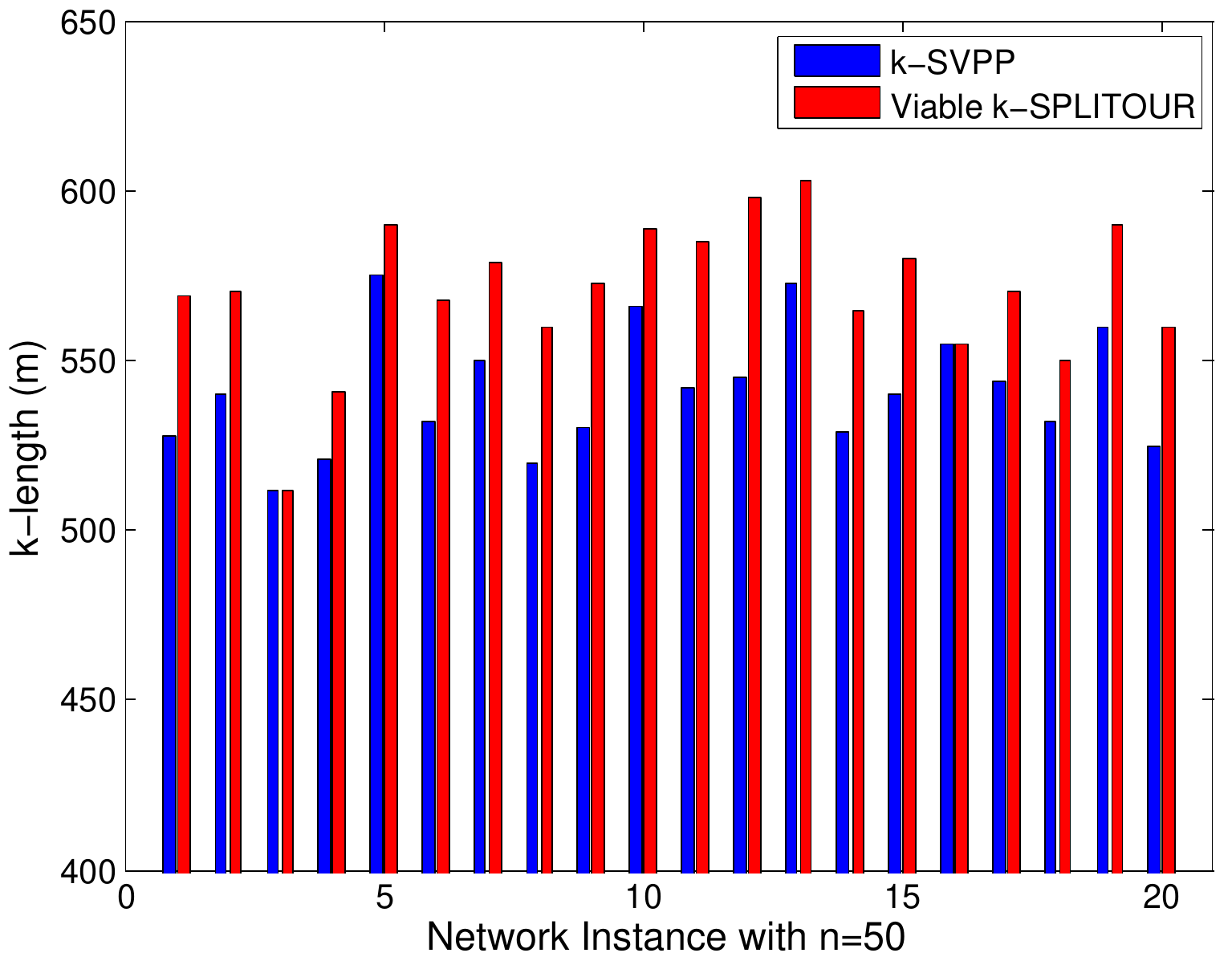}}
\caption{Comparison of $k$-SVPP and viable $k$-SPLITOUR when $k=3$ on the 20 network instances with $n=50$ generated in Section \ref{svpp}.}
\label{fig:klength}
\end{center}
\end{figure}

\subsection{Comparing with Multihop Communication}
This subsection compares our work with a multihop communication algorithm: Shortest Path Routing. We focus on energy consumption since it is an important system metric when the sensor nodes have limited power supplies.  We consider a data transmission rate model of two stairs and the corresponding parameters are $r_1=250KB/s$, $r_2=19.2KB/s$, $d_1=20m$, $d_{max}=d_2=50m$ \cite{ren2014data}. Associated with the transmission rates, the transmission energy consumption rates are $e_1=6.8\times 10^{-6}J/bit$ and $e_2=1.1\times 10^{-5}J/bit$ respectively. 

In Shortest Path Routing, every node transmits its data to the base through the path with shortest distance (such shortest path can be found by existing algorithms such as Dijkstra's algorithm \cite{Dijkstra}).  Since transmitting data usually consumes more energy than sensing and receiving, the energy consumption of the network using multihop communication can be simply expressed as:
\begin{equation}\label{energy_path}
E_{multihop}=\sum_{i}^n eh_ig_i
\end{equation}
where $h_i$ is the hop number from $s_i$ to the base, $g_i$ is the data load of $s_i$ and $e$ indicates the energy consumption for transmitting one bit data. More details about (\ref{energy_path}) can be found in \cite{GAO11}. If we assume the distance between any pair of nodes is larger than $d_1$, then $e=e_2$. Considering $R_{min}$, the sensor nodes can transmit data to the robots via single hop and the energy consumption rate is $e_1$. We use $E_{singlehop}$ to represent the energy consumption when robots are used.
\begin{equation}
E_{singlehop}=E_{node}+E_{robot}
\end{equation}
where $E_{node}=\sum_i^n e_1g_i$ and $E_{robot}=\lambda L$ \cite{energy_consumption}. $L$ gives the total path length of all the robots. $\lambda$ depends on the robot and its speed.
We will see the value of $\lambda$ influences the performance. Here, $\lambda$ is set to be 0.035.
 
In the following part, we focus on the network instances of 50 nodes shown in Section \ref{svpp} and \ref{ksvpp}. Unlike Section \ref{svpp} where we assume uniform data loads, the data loads in this part are distributed in the network, which is more practical. By Multivariate Gaussian Model, we generate the distributed data loads as shown in Figure \ref{fig:distribution}. We study four data load distributions. 1) Central: an interested event occurs at position (100,100) of the field, see Figure \ref{fig:central_distribution}. The closer a node to this position, the more data it generates. 2) Southwest: an interested event occurs at position (50,50), see Figure \ref{fig:southwest_distribution}. Also, we consider two random distributions. 3) R-low, the data loads across the network are generated \textit{Randomly} at a \textit{low} level (0-0.5$MB$). 4) R-high, the data loads are generated \textit{Randomly} at a \textit{high} level (0.5-1$MB$).

\begin{figure}[t]
    \centering
    \begin{subfigure}[t]{0.47\textwidth}
        \includegraphics[width=\textwidth]{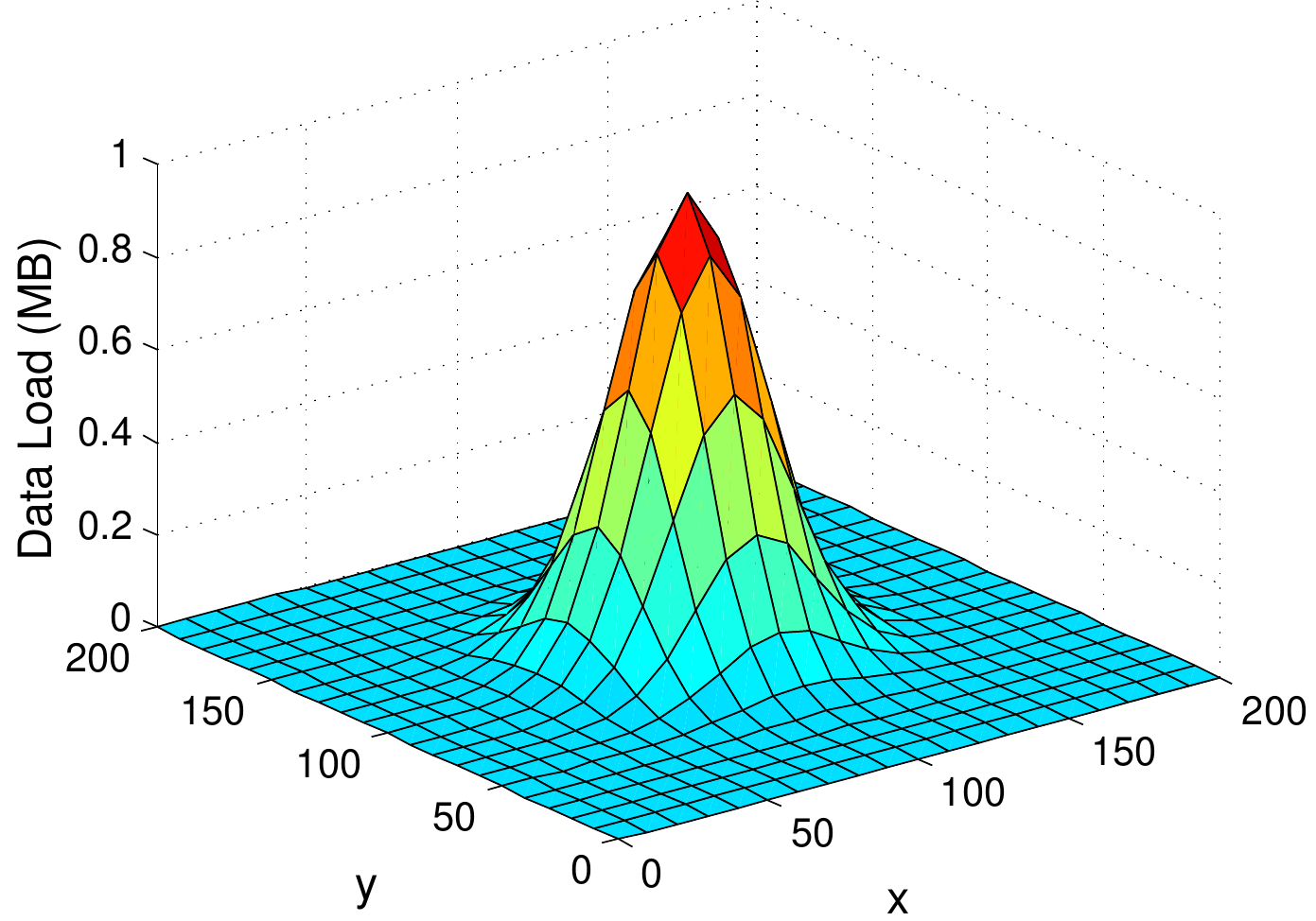}
        \caption{}
        \label{fig:central_distribution}
    \end{subfigure}
    ~
    \begin{subfigure}[t]{0.47\textwidth}
        \includegraphics[width=\textwidth]{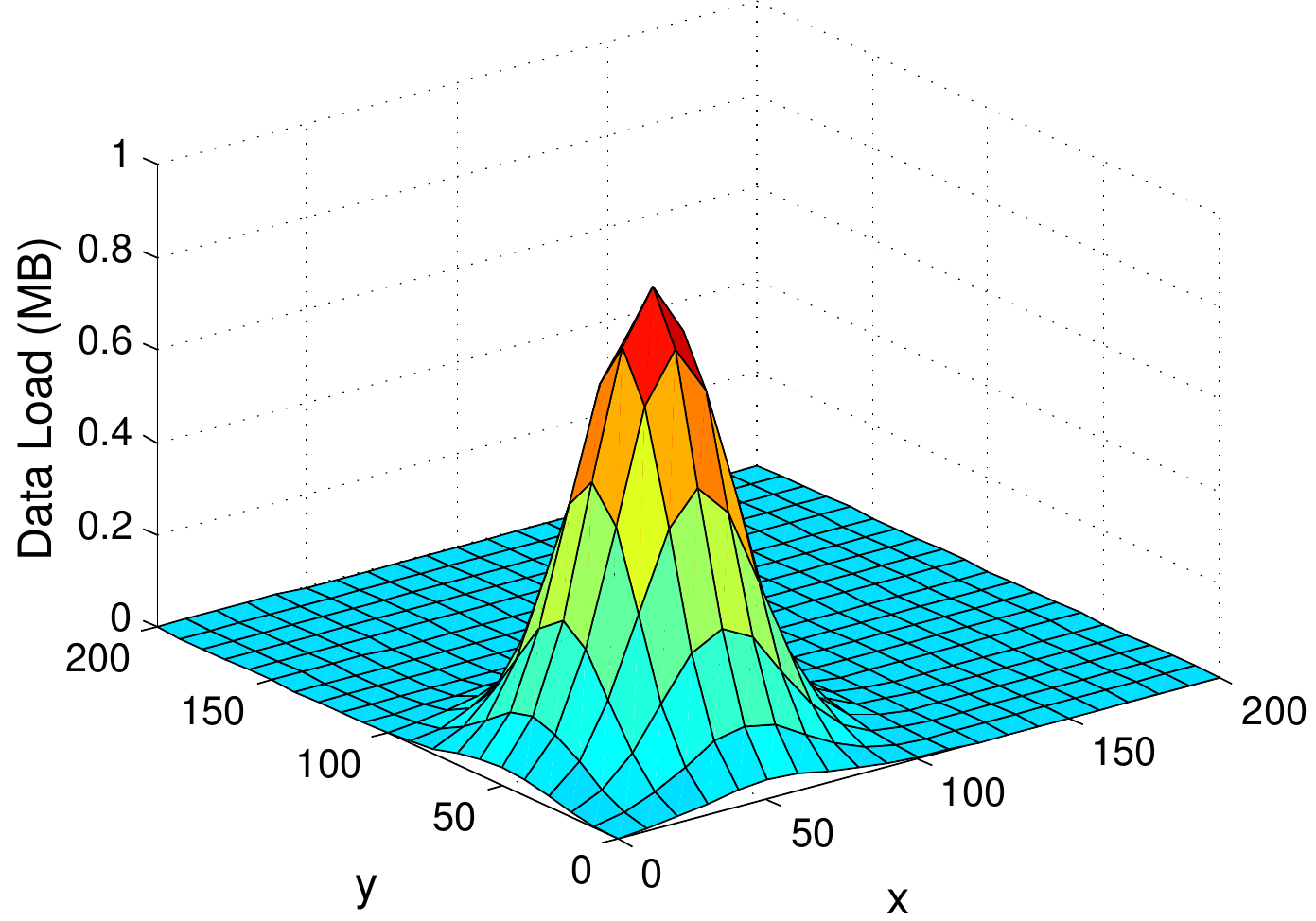}
        \caption{}
        \label{fig:southwest_distribution}
    \end{subfigure}
    \caption{Distributed data loads from Multivariate Gaussian Model where the covariance matrix is $[400,0;0,400]$. The node, nearest to the position of event, generates up to 1$MB$ data during $\cal T$. (a) Central; (b) Southwest.}\label{fig:distribution}
\end{figure}

We apply SVPP, $k$-SVPP, SVPP-no-adjustment, Shortest Paht Routing and Viable $k$-SPLITOUR on the network shown in Figure \ref{fig:node50} and the results are displayed in Figure \ref{fig:compare_energy}. Note, the results are mean values on the 20 independent instances. 
\begin{figure}[t]
    \centering
    \begin{subfigure}[t]{0.47\textwidth}
        \includegraphics[width=\textwidth]{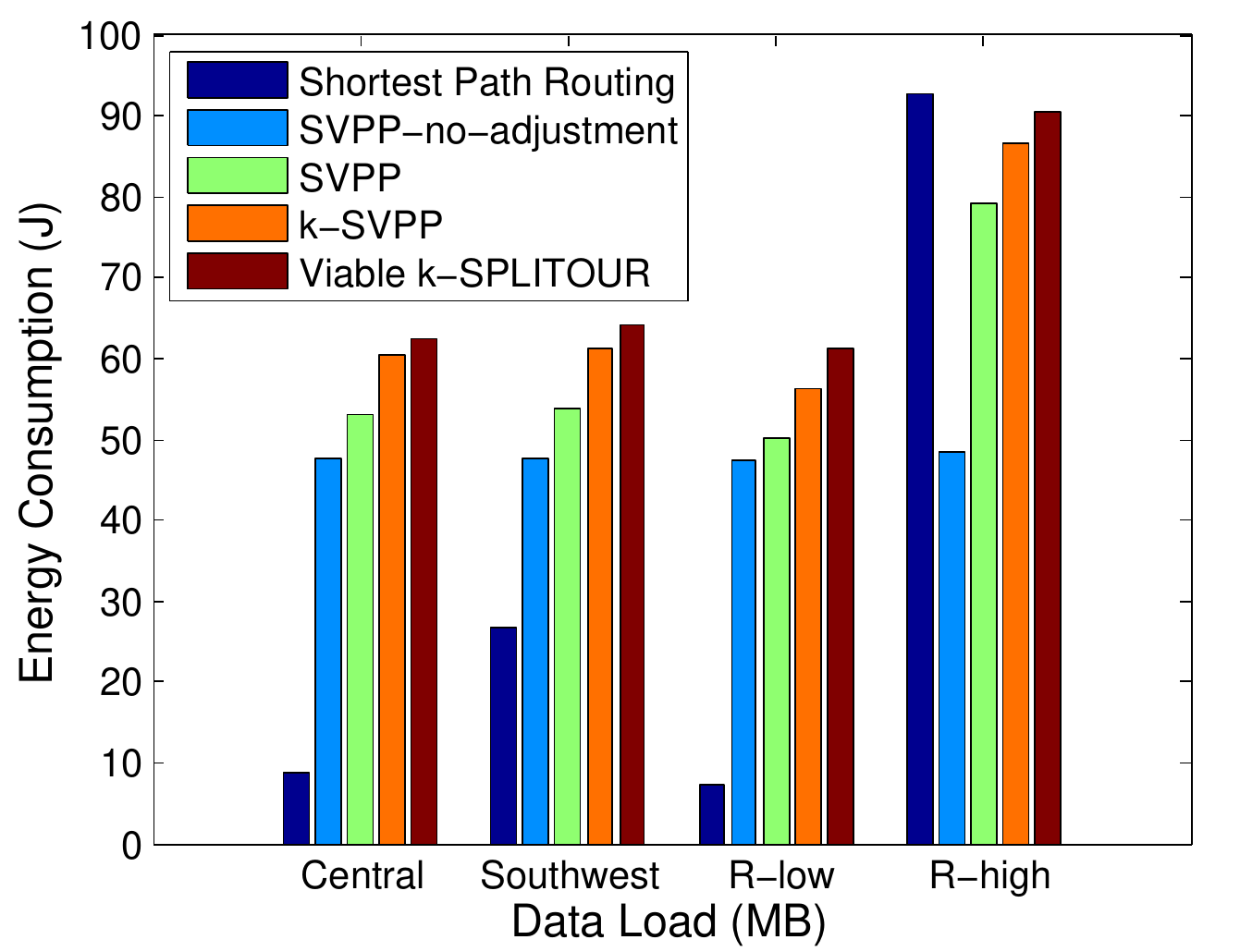}
        \caption{}
        \label{fig:compare_energy_robot}
    \end{subfigure}
    \begin{subfigure}[t]{0.47\textwidth}
        \includegraphics[width=\textwidth]{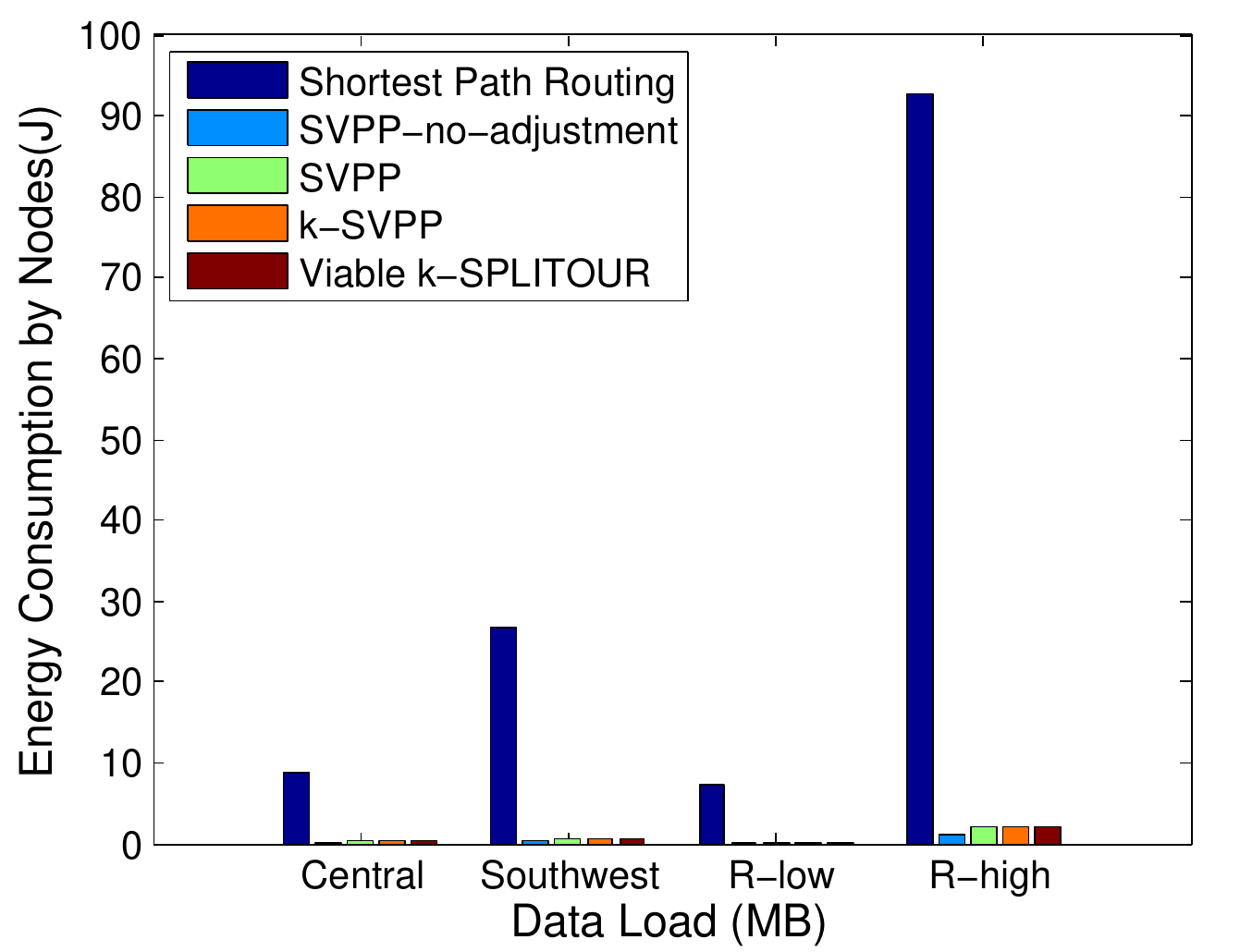}
        \caption{}
        \label{fig:compare_energy_node}
    \end{subfigure}
    \caption{Energy consumption by Shortest Path Routing, SVPP without adjustment, SVPP, $k$-SVPP, Viable $k$-SPLITOUR. (a) By sensor nodes and robots; (b) By sensor nodes.}\label{fig:compare_energy}
\end{figure} 

Figure \ref{fig:compare_energy_robot} demonstrates the energy consumptions by sensor nodes and robots to collect data in these four situations. We learn the following things. The energy consumption by Shortest Path Routing algorithm in Southwest is higher than Central. The reason is that as the base station locates in the central area, the sensor nodes in the Southwest case consume more energy on relaying data, although the total data loads of all the nodes are similar in these two cases. Shortest Path Routing algorithm spends the most energy in R-high and least in R-low due to the total amount of data loads. For the algorithms using robots, SVPP-no-adjustment and SVPP employ single robot while $k$-SVPP and Viable $k$-SPLITOUR both employ three robots. Figure \ref{fig:compare_energy_robot} shows that using single robot consumes less energy than using multiple robots. This is reasonable since the total path length of the multiple robots is longer than the single robot case, as shown in Section \ref{ksvpp}. The benefit of employing multiple robots is to shorten the collection time. Among the two schemes using single robot, SVPP costs more energy than SVPP-no-adjustment. The excess part is used on the rotating movement. We calculate the average collected data percentage of SVPP-no-adjustment in these four distributions: 55\%, 62\%, 95\% and 53\%. To collect all the data, the robot needs to do at least one more tour along the paths. Then SVPP-no-adjustment spends more energy than SVPP. Among the two schemes using multiple robots, Viable $k$-SPLITOUR costs more than $k$-SVPP. The reason is the same as that discussed in Section \ref{ksvpp}. Comparing these schemes on the four distributions, Shortest Path Routing consumes the least energy in the first three cases, since robot movement generally consumes more energy. To exclude the influence of movement energy consumption, we display the part consumed by sensor nodes only in Figure \ref{fig:compare_energy_node}. It shows in every case, using robots to collect data saves sensor nodes around 95\% energy compared to using multihop communication. Also, in terms of total energy consumption, using multihop communication may spend more energy when the data loads are large, such as the R-large case in Figure \ref{fig:compare_energy_robot}. 

We indicate that the results shown in Figure \ref{fig:compare_energy_robot} depends on $\lambda$. Increasing this factor may lead to large energy consumption by the schemes using robots. Although using robots may consume more than conventional multihop communication in terms of total energy consumption, the approaches of using robots are still promising since: the energy consumption by nodes can be saved prominently as shown in Figure \ref{fig:compare_energy_node}; besides, the robots can be recharged much easily than the sensor nodes.

\section{Discussion}\label{discussion}
This section provides some discussions of our work.
\subsection{Extension}\label{extension}
Our proposed algorithms SVPP and $k$-SVPP are designed for the data collection systems with two layers: sensor layer and base station layer. But both of them are not restricted to the systems of two layers. It can be extended to the systems of three layers: sensor layer, gateway layer and base station layer, by adding inner network communication protocols between sensor and gateway layers. The sensor nodes can be clustered locally into different groups and use inner network communication to transmit the extracted data to their corresponding gateways. Then our algorithms will serve the gateway and base station layers in the same way as we do above. As recent studies on the area of WSNs have proposed many network communication protocols, extending our work to the three layer systems will be efficient and effective. In Section \ref{navigation}, we present a specific extension of SVPP in details.

\subsection{Limitation}
We do not consider the computation capability of robots. In other words, we assumed that following the designed paths is able to avoid obstacles. However, this is based on the accuracy of the given information of obstacles such as shapes and positions. We notice that  that in reality such information may be inaccurate or unavailable, for example when the moveable obstacles exist in the sensing field. In this scenario, the proposed algorithms would not be appropriate. 
\subsection{Application}
The proposed work can be applied to the applications of using wheeled robots with bounded turning radius, to collect data from sensor nodes. It may save sensor nodes much energy in the scenario of large data loads, for example in wireless multimedia sensor networks.

The proposed algorithms are originally designed for the applications of collecting information from sensor nodes. They can also be applied in a reverse manner, i.e., to distribute information to the sensor nodes, such as charging the sensor nodes \cite{PLZQ10, ZWL12, DWCXL14} due to the availability of wireless charging technique \cite{kurs2007wireless}. The charging amount can be regarded as the data load, which is decided by the residual energy in the node and its capacity.

\section{Extension to Point to Point Navigation}\label{navigation}
In the above sections, we focus on the scenario of path planning for a or several mobile sinks to visit a number of sensor nodes and return to the origin. The approach can also be used in another way, i.e., navigating a robot or aircraft from a point to another.

Safe mission planning in various threat environments is an important objective for military aircraft
such as strike aircraft, cruise missiles and Unmanned Aerial Vehicles (UAVs). Safe mission planning for such aircraft typically seeks to construct a route from origin to destination that minimizes the risk imposed by enemy threats. Typical examples include minimizing the risk of aircraft detection by radars or other ground sensors \cite{IMM04,ZAB1,BA1,BA2}, minimizing the risk of being destroyed by surface-to-air missiles \cite{KH03,MOR}, and minimizing exposure to radiation while passing through a contaminated area. 

This class of problems was studies in a number of  publications where two main approaches were proposed. The first approach is based on minimizing the cumulative threat exposure when flying through the threat environment \cite{KH03,ZAB2}. In the second approach, the objective is to minimize the probability of being detected \cite{BA1,BA2}. An alternative approach to safe military aircraft routing in threat environments is to be presented below. This approach is based on an optimization model where the objective is to navigate an aircraft to its final destination while obtaining the shortest path that minimizes the threat level of the aircraft path. In some applications, this model is closer to reality. 

\subsection{Extended Problem Statement}
We consider an aircraft or a flying robot moving in a plane and modelled as (\ref{dubinscar}). The initial position of the aircraft is in a known region ${\mathcal M}$. Also, there is a point-wise aircraft final destination ${\mathcal F}$ which also belongs to the region ${\mathcal M}$. There are $n$ steady point-wise  threat agents  in the plane with known coordinates $P_1=(x_1,y_1), P_2=(x_2,y_2),\ldots, P_n=(x_n,y_n)$. Each threat agent $i, 1\leq i\leq n$ has a known threat level continuous function $f_i(d)$ which describes the level of threat for a point at standard Euclidean distance $d\geq 0$ from  the threat agent $i$ located at $(x_i,y_i)$. In real life applications, threat agents can represent surface-to-air missiles or geophysical obstacles such as mountains or volcanoes from
which the aircraft needs to keep a safe distance.

We assume that the threat level functions $f_i(d)$ satisfy the following assumption.

\begin{Assumption}
\label{Ast} 
For any $i=1,2,\ldots,n$, there exists a constant $R_i>0$ such that $f_i(d)=0$ for all $d\geq R_i$,
 $f_i(d)>0$ for all $0\leq d< R_i$, and $f_i(d_1)>f_i(d_2)$ for all $0\leq d_1<d_2< R_i$.  Moreover, we assume that $R_i>R_{min}$ for all $i=1,2,\ldots,n$. A typical threat level function is shown in Fig. \ref{fig1}. Further, it is worth mentioning that this formulation is different from the conventional risk-theoretic function which monotonically increases with the distance.
\end{Assumption}

\begin{figure}[t]
\begin{center}
{\includegraphics[width=0.4\textwidth]{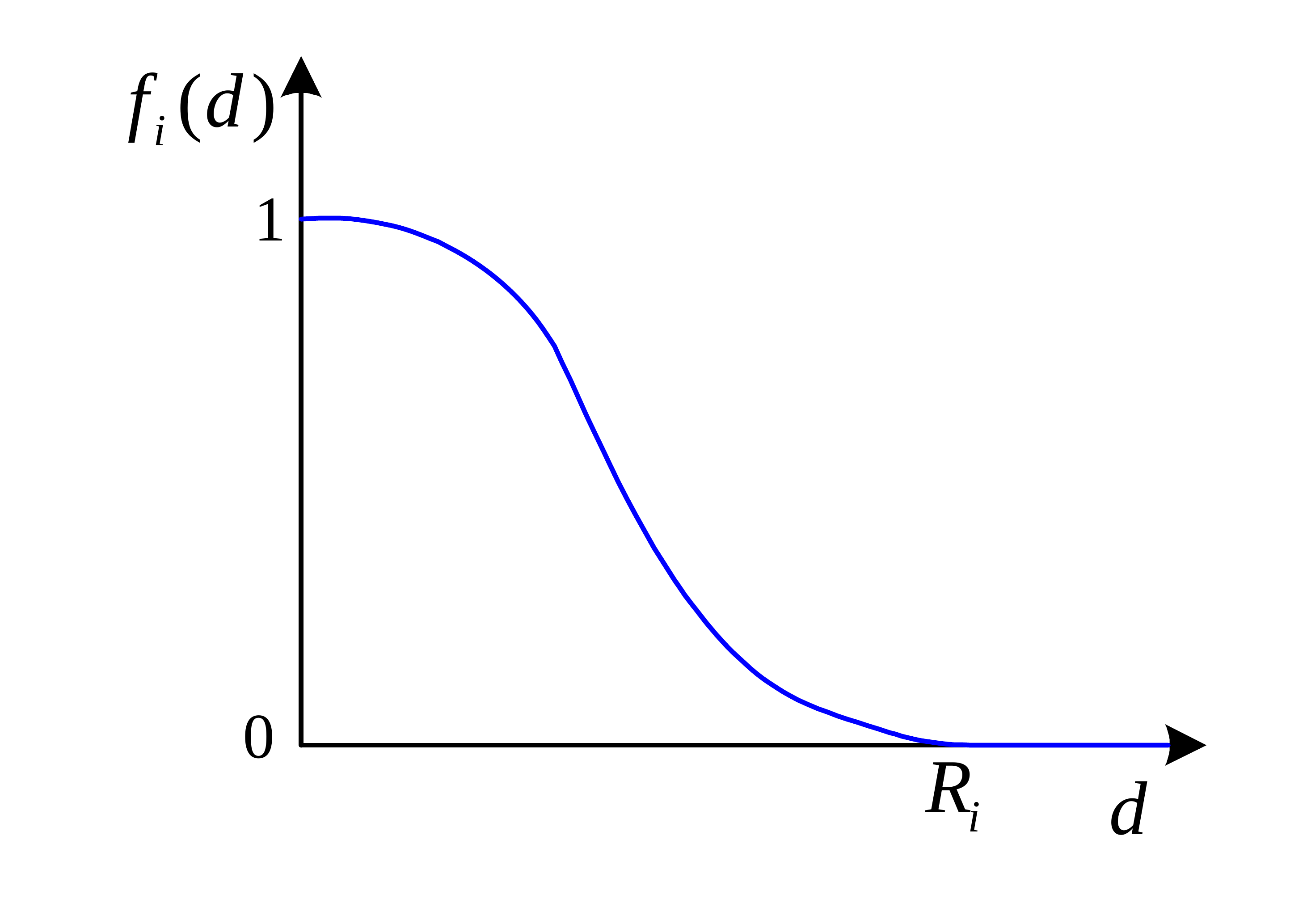}}
\caption{A typical threat level function.}
\label{fig1}
\end{center}
\end{figure}

Moreover, we suppose that the region ${\mathcal M}$ satisfies the following assumption.

\begin{Assumption}
\label{As-1} The region ${\mathcal M}$ is a convex bounded closed set. The boundary $\partial D({\mathcal M})$ of the region ${\mathcal M}$ is a smooth curve with the curvature $k(p)$ at any point $p$ satisfying $k(p)\leq \frac{1}{R_{i}}$ for all $i=1,2,\ldots, n$.
\end{Assumption}

Assumption \ref{As-1} guarantees that the shape and the boundary of the region ${\mathcal M}$ are relatively simple.
This greatly reduces the complexity of the proposed navigation algorithm.

\begin{definition}\label{define_2_1}
A path $p(t)=(x(t),y(t))$ of the aircraft (\ref{dubinscar}) is said to be {\em reaching the final destination} if there exists a time $t_f>0$ such that $p(t_f) ={\mathcal F}$ and $p(t)$ belongs to the region ${\mathcal M}$ for all $i , t\in [0,t_f]$. Furthermore, $T[p(\cdot)]$ denotes the maximum threat level to which the aircraft was exposed on the path $p(\cdot)$:
\begin{equation}
\label{mtl}
T[p(\cdot)]:=\max_{i=1,2,\ldots,n,t\in [0,t_f]}f_i(d_i(t)) 
\end{equation}
where $d_i(t)$ is the distance between the aircraft and the threat agent $i$ at time $t$: 
\[
d_i(t):=\sqrt{(x(t)-x_i(t))^2+(y(t)-y_i(t))^2}.
\]
Moreover, $L[p(\cdot)]$ denotes the length of the path $p(\cdot)$.
\end{definition}

\begin{definition}
A reaching of the final destination path  $p_0(t)=(x_0(t),y_0(t))$ of the aircraft (\ref{dubinscar}) is said to be {\em optimal} if the following two conditions holds:
\begin{enumerate}
\item
For any other reaching of the final destination path $p(t)=(x(t),y(t))$, the inequality 
\begin{equation}
\label{E1}
T[p_0(\cdot)]\leq T[p(\cdot)]
\end{equation}
holds.
\item
If $T[p_0(\cdot)]= T[p_{*}(\cdot)]$ for some other reaching of the final destination path $p_{*}(t)=(x_{*}(t),y_{*}(t))$, then 
\begin{equation}
\label{E2}
L[p_0(\cdot)]\leq L[p_{*}(\cdot)].
\end{equation}
\end{enumerate}
If $p_0(t)$ is an optimal path, then  the number $T[p_0(\cdot)]$ is called the minimum achievable threat level.
\end{definition}

\begin{remark}
An optimal aircraft path guarantees the minimum possible
threat level. Moreover, among different paths with the same minimum threat level, the optimal path has a shortest length.
\end{remark} 

{\bf Problem statement:} Our problem is to design an optimal
aircraft path. In other words, the objective is to navigate the aircraft to the final destination ${\mathcal F}$ inside the region ${\mathcal M}$ while minimizing the maximum threat level and the length of the aircraft path.

{\bf Example: Minimizing worst case probability of being hit:} We introduce an example in which the optimal solution
requires minimizing the maximum threat level. Suppose that
the threat agents $P_1,P_2,\ldots,P_n$ are known ground locations from which surface-to-air missiles can be fired to hit the aircraft. Let $N$ be a total number of missiles. 
The probability of each missile to hit the aircraft depends on the distance $d$ between the aircraft and the agent from which the missile is fired and is described by a known function $f(d)$. The distribution of missiles between the threat agents $P_1,P_2,\ldots,P_n$  are not known to us. Our objective is to minimize the worst case (for all missiles distributions between the agents and all enemy firing strategies) probability that at least one of the $N$ missiles will hit the aircraft.

Let $[0,t_f]$ be the time interval during which the aircraft
moves from origin to destination, $d_i(t)$ be the distance
between the aircraft and the agent $i$ at time $t$. Furthermore, let
\[
d_{min}:=\min_{i=1,2,\ldots,n, ~~t\in [0,t_f]}d_i(t).
\]
It is obvious that for any aircraft path, the worst case scenario for us is when all $N$ missiles are fired from the closest agents at the moments when the aircraft is at the shortest distance $d_{min}$. In this case, the probability that the aircraft is hit by at least one missile is
\[
p_h=1-(1-f(d_{min}))^N=1-(1-\theta)^N
\]
where $\theta$ is the maximum threat level of the trajectory.
Therefore, the minimum possible value for the probability
$p_h$ is delivered by an aircraft path  with the minimum possible threat level $\theta$.

\subsection{Optimal Aircraft Path}
Now, we build optimal aircraft paths that minimize the threat level and the length.

\begin{notation}\label{notation_3_1}
In this chapter, $d(w,h)$ denotes the standard Euclidean distance between the points $w$ and $h$. Moreover, 
$d(w,M)$ denotes the minimum  distance between the point $w$ and the closed set $M$.
\end{notation}

\begin{definition}\label{define_3_1}
Let $P_i$  be a threat agent such that the distance 
$d(P_i,\partial D({\mathcal M}))$ between $P_i$ and the boundary $\partial D({\mathcal M})$  of the region ${\mathcal M}$ satisfies $d(P_i,\partial D({\mathcal M}))<R_i$. Then it immediately follows from Assumption \ref{As-1}, that there exists a unique point $P_{*}$ at the  boundary of ${\mathcal M}$ such that  $d(P_i,\partial D({\mathcal M}))=d(P_i,P_{*})$. Then the point $P_{*}$ is called a critical point of type 1, and the number $\theta:=f_i(d(P_i,P_{*}))>0$ is called a critical threat level of type 1.
\end{definition}

\begin{definition}\label{define_3_2}
Let $P_i$ and $P_j$ be threat agents such that $d(P_i,P_j)<R_i+R_j$. Since the threat functions $f_i(\cdot), f_j(\cdot)$ are continuous and satisfy Assumption \ref{Ast}, there exists a unique point $P_{*}$ at the straight segment connecting $P_i$ and $P_j$ such that $f_i(d(P_i,P_{*}))=f_j(d(P_j,P_{*}))>0$. The point $P_{*}$ is called a critical point of type 2, and the number $\theta:=f_i(d(P_i,P_{*}))=f_j(d(P_j,P_{*}))>0$ is called a critical threat level of type 2.
\end{definition}

\begin{remark}\label{remark_3_1}
It is obvious that if $d(P_i,P_j)<R_i+R_j$ and the threat functions $f_i(\cdot)$ and $f_j(\cdot)$ are identical then the corresponding critical point $P_{*}$ is the middle of
the straight segment connecting $P_i$ and $P_j$.
\end{remark}

\begin{definition}
A number $\theta\geq 0$ is said to be a critical threat level
if one of the following conditions holds:
\begin{enumerate} 
\item 
$\theta =0$;
\item
$\theta$ is a critical threat level of type 1;
\item
$\theta$ is a critical threat level of type 2.
\end{enumerate}
Further, a point $P_{*}\in {\mathcal M}$ is said to be a critical point if it is either a critical point of type 1 or 
a critical point of type 2.
\end{definition}

Let $\theta\geq 0$ be a critical threat level. We now construct a non-convex closed set ${\mathcal M}(\theta)\subset {\mathcal M}$ as follows.
\begin{notation}
\label{note}
For any $i=1,2,\ldots,n$, introduce a radius $R_i(\theta)>0$ where $R_i(\theta):=R_i$ if $\theta =0$, and $R_i(\theta)$ satisfies $f_i(R_i(\theta))=\theta$ for $\theta>0$.
Notice that if $f_i(d)<\theta$ for all $d$, we do not introduce any $R_i(\theta)$. Now we construct the set ${\mathcal M}(\theta)$ by deleting from the region ${\mathcal M}$ all the open disks centred at the threat agents $P_i$ with the radius $R_i(\theta)$; see Fig. \ref{fig2}.
\end{notation}

\begin{figure}[t]
    \centering
    \begin{subfigure}{0.45\textwidth}
        \includegraphics[width=\textwidth]{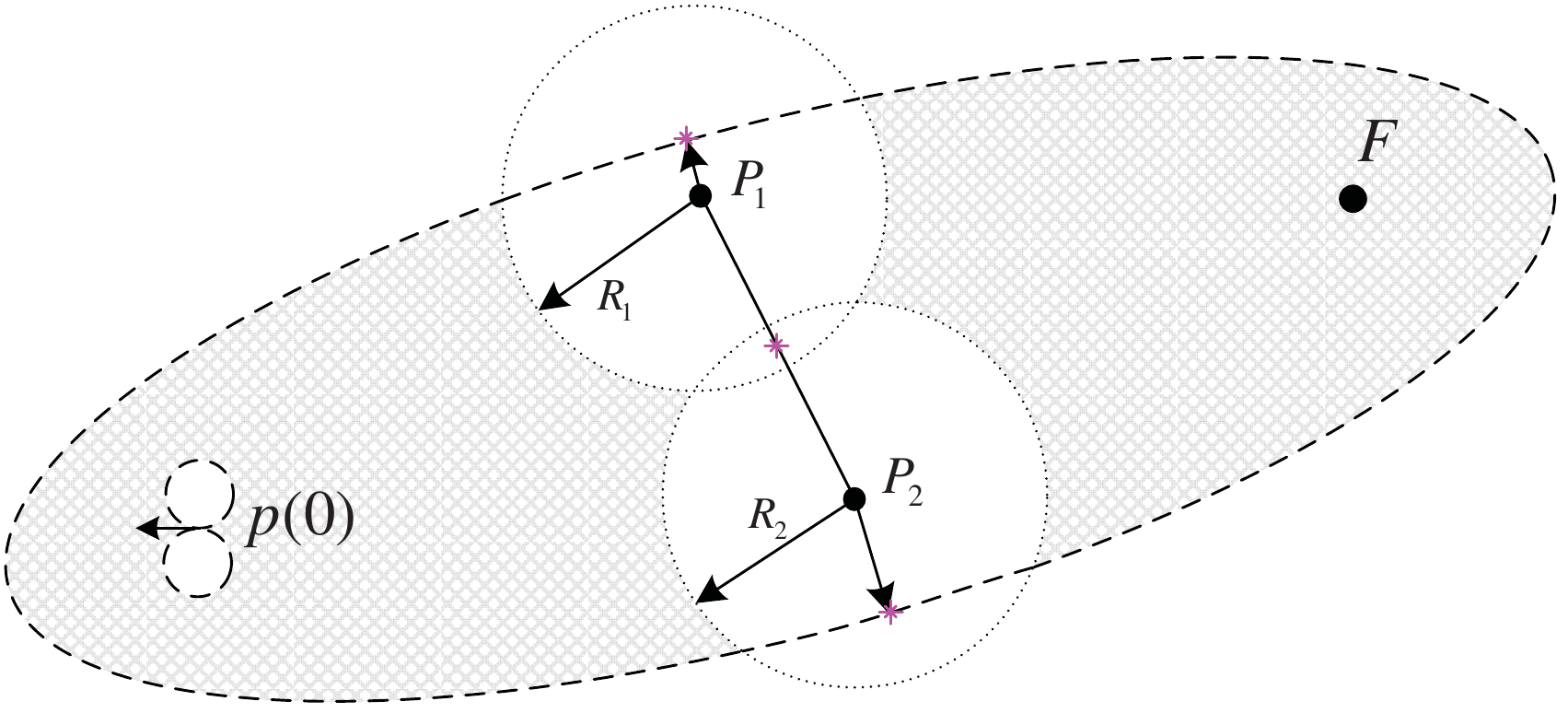}
        \caption{}
        \label{fig2a}
    \end{subfigure}
    \begin{subfigure}{0.45\textwidth}
        \includegraphics[width=\textwidth]{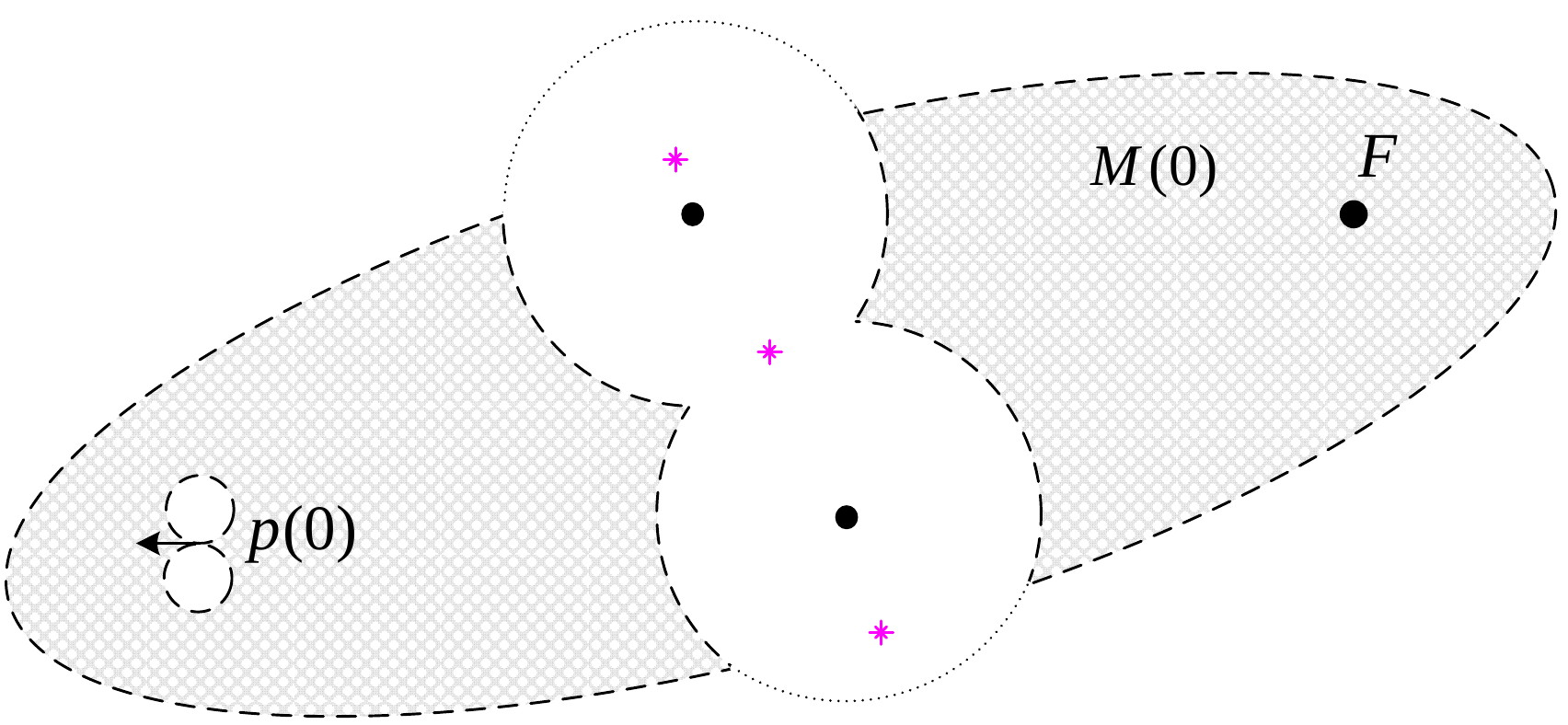}
        \caption{}
        \label{fig2b}
    \end{subfigure}
    \begin{subfigure}{0.45\textwidth}
        \includegraphics[width=\textwidth]{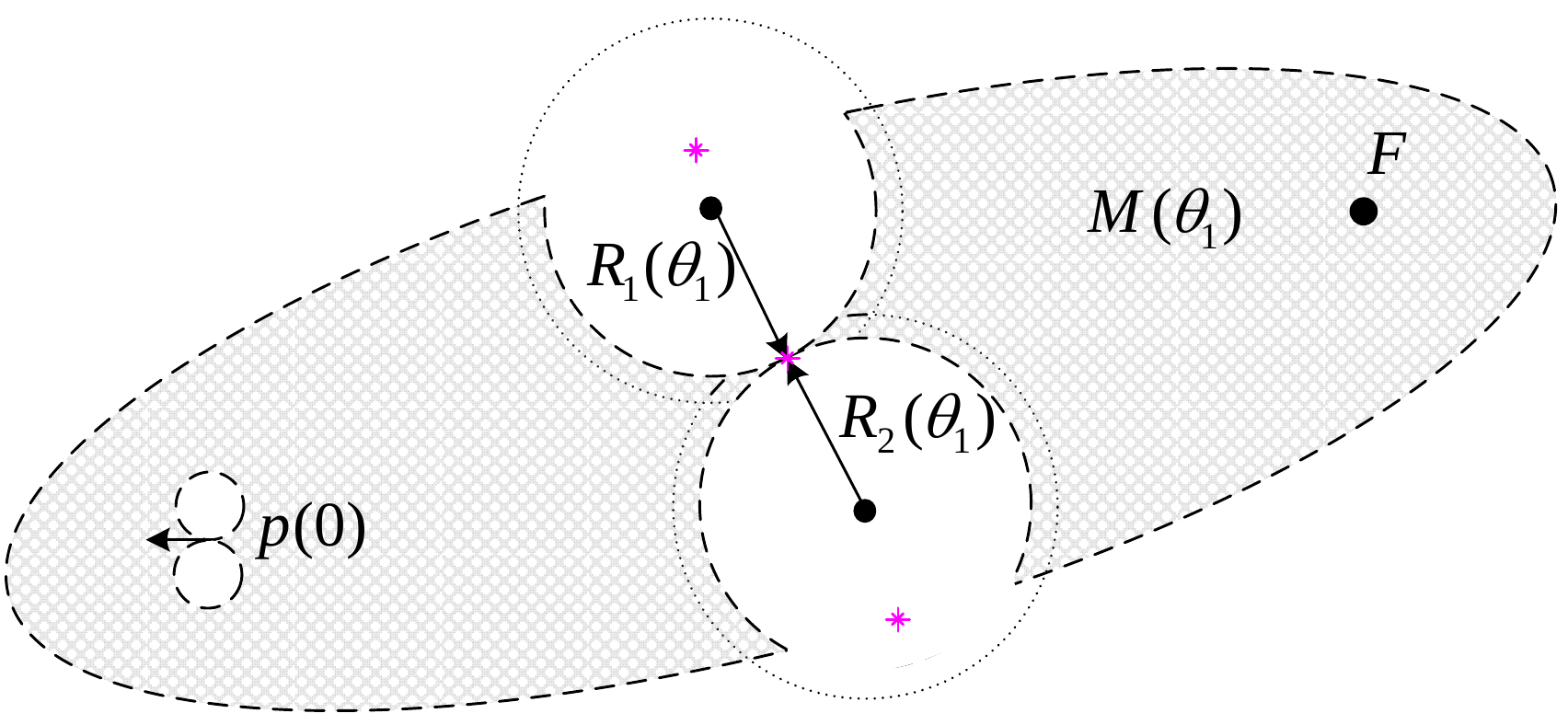}
        \caption{}
        \label{fig2c}
    \end{subfigure}
    \begin{subfigure}{0.45\textwidth}
        \includegraphics[width=\textwidth]{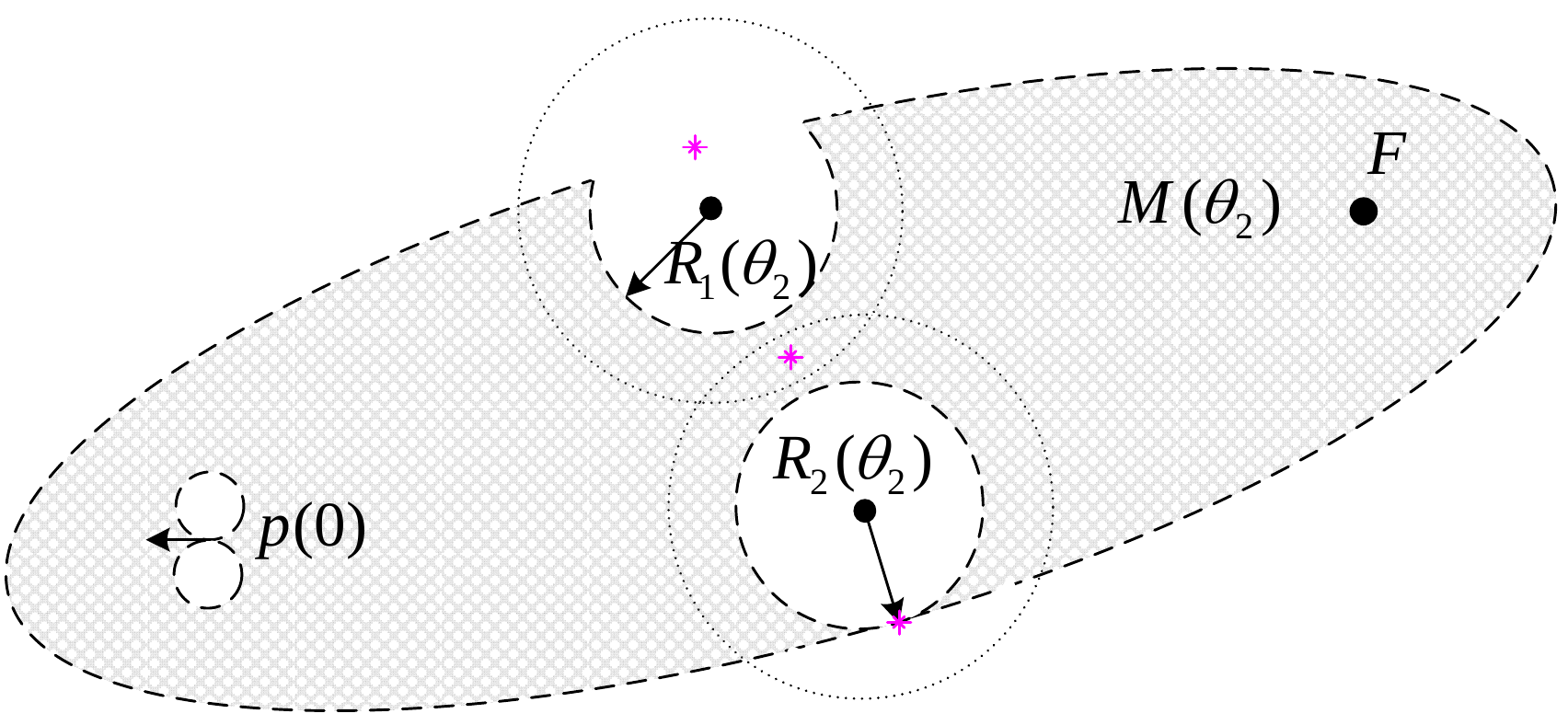}
        \caption{}
        \label{fig2d}
    \end{subfigure}
    \caption{Demonstration for constructing set ${\mathcal M}(\theta)$. (a) Initial figure with two agents locating at $P_1$ and $P_2$ with threat radius $R_1$ and $R_2$. (b) ${\mathcal M}(0)$. (c) ${\mathcal M}(\theta _1)$ by applying a critical point of type 2 and $\theta _1 =0.1$. (d) ${\mathcal M}(\theta _2)$ by applying a critical point of type 1 and $\theta _2 =0.3$.}\label{fig2}
\end{figure}  

It is clear that ${\mathcal M}(\theta)$ consists of all points ${\mathcal M}$ for which the threat level defined by
(\ref{mtl}) does not exceed $\theta$. It is also obvious that
${\mathcal M}(\theta_1)\subset {\mathcal M}(\theta_2)$ if 
$\theta_1\geq \theta_2$.

\begin{definition}
A critical threat level $\theta\geq 0$ is said to be 
a critical threat level with the connectivity requirement if
there exists a continuous curve belonging to ${\mathcal M}(\theta)$ and connecting the aircraft initial position $p(0)$ and the final destination ${\mathcal F}$. Moreover, a critical threat level $\theta$ with the connectivity requirement is called the minimum critical threat level with the connectivity requirement if the connectivity requirement does not hold for any critical threat level $\theta_{*}<\theta$.
\end{definition}

\begin{Assumption}
\label{Aso3-} 
Let $\theta$ be the minimum
critical threat level with the connectivity requirement.
We assume that $R_i(\theta)\geq R_{min}$ for all $i=1,2,\ldots,n$. 
\end{Assumption}

We also assume that the initial position $p(0)=(x(0),y(0))$ of the aircraft is far enough from the threat agents $P_1=(x_1,y_1), P_2=(x_2,y_2),\ldots, P_n=(x_n,y_n)$, the aircraft final destination ${\mathcal F}$ and the boundary of the region ${\mathcal M}$.
\begin{Assumption}
\label{Aso3} 
The following inequalities hold: $d({\mathcal F},p(0))> 8R_{min}$ and $d(P_i,p(0))> 8R_{min}+R_i$ for all $i=1,2,\ldots,n$. Finally, the distance between the aircraft initial position $p(0)$ and the boundary of the region ${\mathcal M}$ is greater than $8R_{min}$.
\end{Assumption}

Assumption \ref{Aso3} is a technical assumption that is necessary for the proof of our main theoretical result.

\begin{definition}
There are two circles with the radius $R_{min}$ that cross the initial aircraft position $p(0)$ and tangent to the aircraft initial heading $\alpha(0)$. We will call them the initial circles.
\end{definition}

\begin{definition}\label{define_3_6}
Let $\theta$ be the minimum critical threat level with the connectivity requirement. A straight line $S$ is said to be a tangent line if one of the following conditions holds:
\begin{enumerate}
\item
The line $S$ is simultaneously tangent to two circles centred
at $P_i$ and $P_j$ with radii $R_i(\theta)$ and $R_j(\theta)$, correspondingly, see Fig. \ref{fig3a} 
for an example.
\item
The line $S$ is simultaneously tangent to a circle centred
at $P_i$ with radius $R_i(\theta)$ and
an initial circle, see Fig. \ref{fig3b} 
for an example.
\item
The line $S$ is tangent to a  circle centred
at $P_i$ with radius $R_i(\theta)$ and crosses the final destination ${\mathcal F}$, see Fig. \ref{fig3c}
for an example.
\item
The line $S$ is tangent to an initial circle and crosses the final destination ${\mathcal F}$, see Fig. \ref{fig3d} 
for an example.
\end{enumerate}
\end{definition}

\begin{figure}[t]
    \centering
    \begin{subfigure}{0.35\textwidth}
        \includegraphics[width=\textwidth]{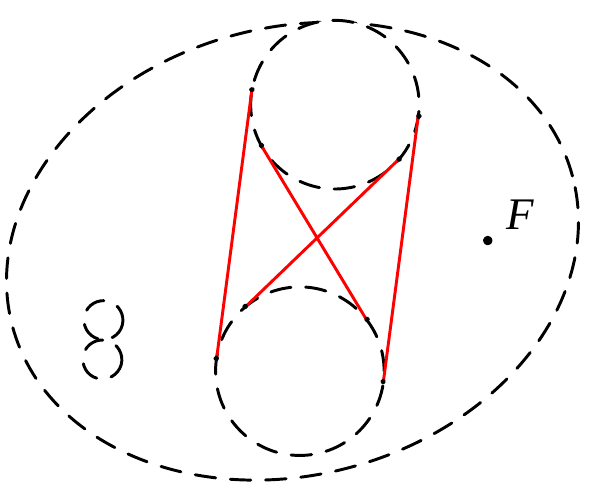}
        \caption{}
        \label{fig3a}
    \end{subfigure}
    \begin{subfigure}{0.35\textwidth}
        \includegraphics[width=\textwidth]{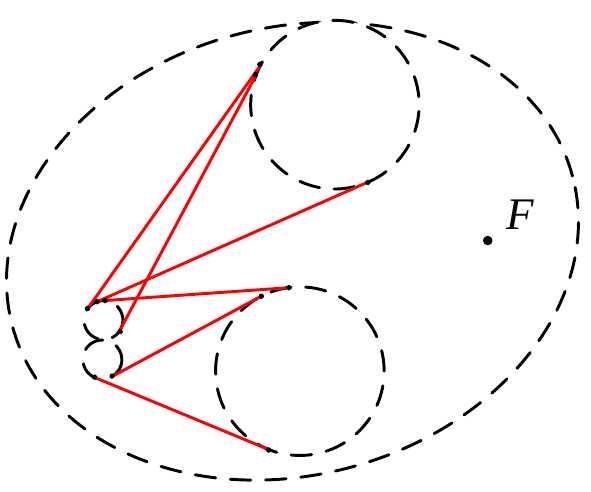}
        \caption{}
        \label{fig3b}
    \end{subfigure}
    \\
    \begin{subfigure}{0.35\textwidth}
        \includegraphics[width=\textwidth]{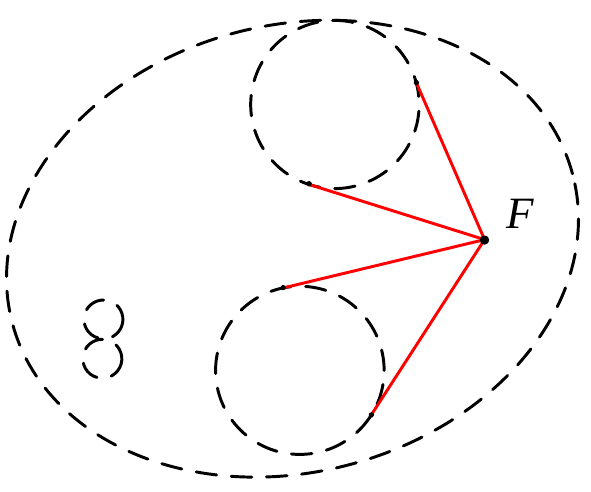}
        \caption{}
        \label{fig3c}
    \end{subfigure}
    \begin{subfigure}{0.35\textwidth}
        \includegraphics[width=\textwidth]{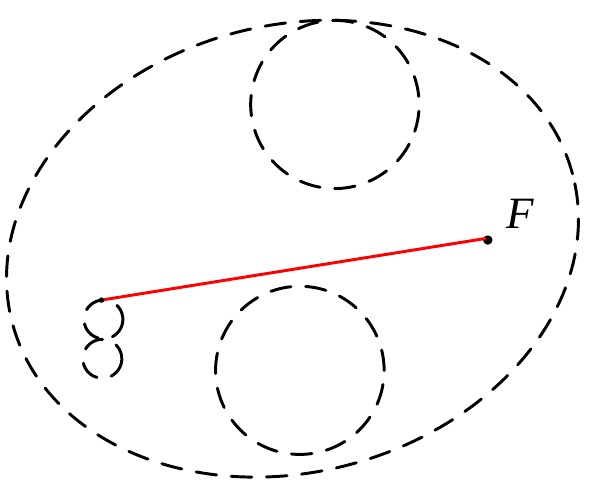}
        \caption{}
        \label{fig3d}
    \end{subfigure}
    \caption{Demonstration for tangent lines. (a) Tangent lines between two circles centered at two agents. (b) Tangent lines between a circle centered at an agent and an initial circle. (c) Tangent lines between a circle centered at a agent and  ${\mathcal F}$. (d) Tangent lines between an initial circle and ${\mathcal F}$.}\label{fig3}
\end{figure}

\begin{definition}\label{define_3_7}
Points of circles centred at $P_i$ and initial circles belonging to tangent lines are called tangent points. 
\end{definition}

We consider only finite segments of tangent lines between tangent points such that their interiors do not cross
the interiors of the initial circles. Moreover, we consider
only finite segments of tangent lines between tangent points that belong to ${\mathcal M}(\theta)$.

\begin{definition}\label{define_3_8}
The vertices of the graph $\mathcal{G}(\theta)$ are the aircraft initial position $p(0)$, the final destination $\mathcal{F}$ and the tangent points. The edges of the graph $\mathcal{G}(\theta)$ are the tangent segments, arcs of the circles centred at $P_i$ with radii $R_i(\theta)$ and the initial circles that connect the vertices of the graph. The graph ${\mathcal G}(\theta)$ is called the extreme graph (see e.g. Fig. \ref{fig4a}).
\end{definition}

\begin{definition}\label{define_3_9}
A path connecting the aircraft initial position $p(0)$ and the final destination ${\mathcal F}$ on the extreme graph is said to be viable if the heading at the end of each edge of the path is equal to the heading at the beginning of the next edge of the path (see Fig. \ref{fig4b}).
\end{definition}

\begin{figure}[t]
    \centering
    \begin{subfigure}{0.45\textwidth}
        \includegraphics[width=\textwidth]{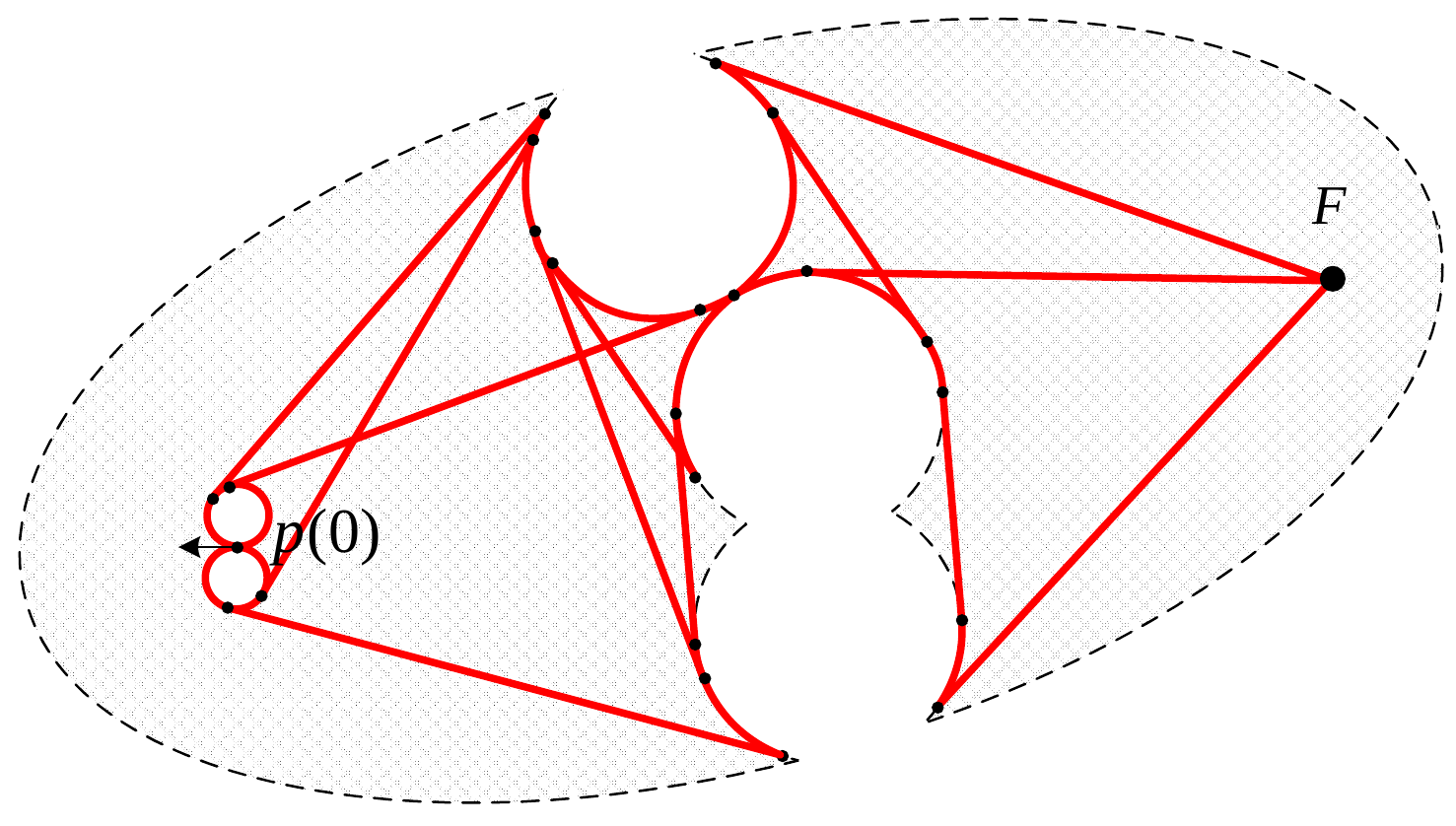}
        \caption{}
        \label{fig4a}
    \end{subfigure}
    \begin{subfigure}{0.45\textwidth}
        \includegraphics[width=\textwidth]{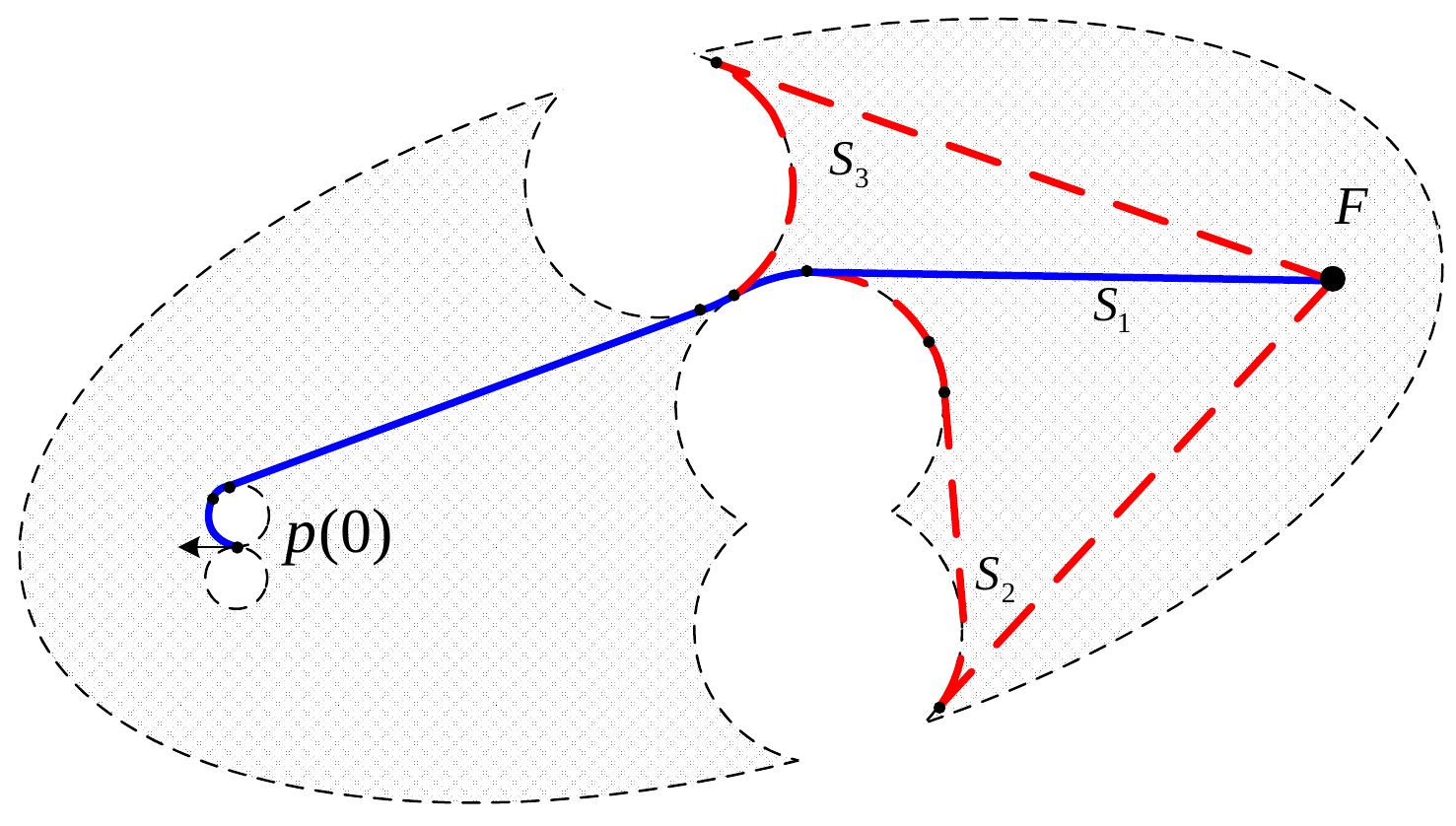}
        \caption{}
        \label{fig4b}
    \end{subfigure}
    \caption{The extreme graph and the viable path. (a) An illustrative example of ${\mathcal G}(\theta)$ consisting of the vertices representing by the black solid points and the edges representing by the red solid lines and arcs. (b) $S_1$ is a viable path satisfying the heading constraint while $S_2$ and $S_3$ are not as they dissatisfy the heading constraint.}\label{fig4}
\end{figure} 

\begin{remark}
If a path connecting the aircraft initial position $p(0)$ and the final destination ${\mathcal F}$ on the extreme graph is viable it can be a path of the aircraft (since $R_{min}< R_i$ for all $i$, it follows from Assumptions \ref{Ast} and \ref{As-1} that any viable path on the extreme graph ${\mathcal G}(\theta)$ satisfies the curvature requirement (\ref{ac})). Also, it is obvious that if $\theta$ is the minimum critical threat level with the connectivity requirement then the corresponding critical point is a tangent point. The proposed procedure for constructing optimal paths reminds in spirit optimal path planning for ground vehicles in complex cluttered environments \cite{savkin2013reactive}.
\end{remark}

Now we are in a position to present the main theoretical result.

\begin{Theorem}
\label{Theorem1} Suppose that Assumptions~\ref{Ast}, \ref{As-1},
\ref{Aso3-} and \ref{Aso3} are satisfied. Then the following statements hold:
\begin{enumerate}
\item
The minimum achievable threat level $\theta$ is equal to the minimum critical threat level  with the connectivity requirement.
\item
If $\theta$ is the minimum critical threat level  with the connectivity requirement, then an optimal aircraft path is a shortest viable path on the extreme graph ${\mathcal G}(\theta)$.
\end{enumerate}
\end{Theorem}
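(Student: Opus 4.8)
The plan is to prove the two statements together, since the construction that produces the optimal path in statement~2 is exactly what certifies achievability of the threat level in statement~1. The backbone of the argument is the observation recorded after Notation~\ref{note}: $\mathcal{M}(\theta)$ is precisely the set of points whose threat level does not exceed $\theta$, and $\mathcal{M}(\theta_1)\subseteq \mathcal{M}(\theta_2)$ whenever $\theta_1\geq\theta_2$. First I would track how the connected-component structure of $\mathcal{M}(\theta)$ evolves as $\theta$ decreases from a large value (where the removed disks are small or absent and $\mathcal{M}(\theta)=\mathcal{M}$ is convex, hence connected) down to $0$. Because each radius $R_i(\theta)$ grows continuously and strictly as $\theta$ decreases, the only events that can separate $p(0)$ from $\mathcal{F}$ are a removed disk merging with another disk or with the boundary $\partial D(\mathcal{M})$, and these occur exactly at a critical point of type~2 or type~1 (Definitions~\ref{define_3_1} and \ref{define_3_2}), i.e. at a critical threat level. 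Between consecutive critical levels the ``same-component'' relation of $p(0)$ and $\mathcal{F}$ is constant, so the infimum $\theta_c$ of threat levels at which $p(0)$ and $\mathcal{F}$ lie in one component is itself a critical threat level; since the removed disks are open, the touching tangent point remains in $\mathcal{M}(\theta)$ and connectivity persists at $\theta_c$. This identifies $\theta_c$ with the minimum critical threat level with the connectivity requirement, which I denote $\theta$.

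Statement~1 then splits into two halves. For the lower bound, any path reaching the final destination with $T[p(\cdot)]=\tau$ is, by (\ref{mtl}), a continuous curve lying in $\mathcal{M}(\tau)$ and joining $p(0)$ to $\mathcal{F}$; hence $\mathcal{M}(\tau)$ connects the two points and $\tau\geq\theta_c=\theta$. For the matching upper bound I would exhibit a genuine Dubins-feasible path attaining threat level $\theta$, which is supplied by the extreme-graph construction of statement~2, simultaneously showing the minimum is attained. Here Assumptions~\ref{Aso3-} and \ref{As-1} are essential: $R_i(\theta)\geq R_{min}$ guarantees that the arcs of the circles of radius $R_i(\theta)$ and the boundary arcs have curvature at most $1/R_{min}$, so they satisfy the average-curvature constraint (\ref{ac}) and can be flown, while at the critical point the two tangent circles share a common tangent direction, so heading continuity permits transit; Assumption~\ref{Aso3} provides enough clearance near $p(0)$ to splice in an initial circle without meeting an obstacle.

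For statement~2 I would argue that any optimal path is confined to the closed set $\mathcal{M}(\theta)$ (otherwise its threat level would exceed $\theta$), and then show that a shortest bounded-curvature curve inside $\mathcal{M}(\theta)$ joining $p(0)$ to $\mathcal{F}$ must consist only of straight segments and arcs of the forbidden circles of radius $R_i(\theta)$ together with the two initial circles, meeting tangentially; that is, it is exactly a viable path on $\mathcal{G}(\theta)$ in the sense of Definitions~\ref{define_3_8} and \ref{define_3_9}. The reasoning is the standard shortest-path-among-obstacles / tangent-graph variational argument adapted to the Dubins setting as in \cite{savkin2013reactive}: on any open sub-arc lying strictly inside $\mathcal{M}(\theta)$ the length-optimal bounded-curvature pieces are straight lines and minimum-radius turns, while wherever the path meets a forbidden disk it must follow that disk's boundary (whose radius is $\geq R_{min}$), with the transitions forced to be tangential by heading continuity (\ref{headingrequirement}). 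Reducing the continuum of candidate paths to the finite extreme graph, the shortest viable path on $\mathcal{G}(\theta)$ is then optimal in the full sense (minimum threat level by statement~1, then minimum length).

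The hard part will be statement~2: proving rigorously that a length-minimizing \emph{bounded-curvature} path in $\mathcal{M}(\theta)$ cannot beat hugging the circles and using straight tangents. The curvature constraint couples obstacle-avoidance and length-minimization in a way the classical curvature-free tangent-graph proof does not, and one must rule out shortcuts that cut a corner inside a disk or chatter along a boundary. The assumptions $R_i(\theta)\geq R_{min}$ and the $8R_{min}$ clearances in Assumption~\ref{Aso3} are precisely what make the local exchange arguments go through, so the proof must verify that they suffice to exclude every non-graph optimal path.
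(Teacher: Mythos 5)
Your treatment of statement 1 matches the paper's in substance (monotone nesting of ${\mathcal M}(\theta)$, a path of threat level $\tau$ lies in ${\mathcal M}(\tau)$, hence $\tau$ is at least the minimum critical level with connectivity), and your component-merging argument for why the infimum is itself a critical level is actually more explicit than what the paper writes. The problem is statement 2, where you correctly identify the hard step --- showing that a length-minimizing \emph{bounded-curvature} path in ${\mathcal M}(\theta)$ cannot do better than tangent segments and boundary arcs --- and then leave it unresolved. As stated, your plan would have to rule out interior minimum-radius turning arcs (Dubins CSC-type manoeuvres away from the start), which are not edges of ${\mathcal G}(\theta)$, and the ``standard tangent-graph variational argument'' you invoke does not do this: the classical exchange argument is for curvature-unconstrained paths, and the curvature constraint is exactly what could force extra arcs into the optimum. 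This is a genuine gap, not a detail.

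The paper closes this gap by a relaxation trick you have not identified. First it localizes all curvature effects to the two initial circles: using the $6R_{min}$/$8R_{min}$ clearances of Assumption \ref{Aso3} and Dubins' characterization of shortest curvature-constrained paths with prescribed endpoint headings, it shows the constrained optimum $S$ never enters the interiors of the initial circles. It then considers $S_2$, a shortest path in ${\mathcal M}(\theta)$ from $p(0)$ to ${\mathcal F}$ avoiding the initial circles \emph{with the curvature constraint dropped}, so $L[S_2]\leq L[S]$. The classical local-straightening argument plus Theorem 5 of \cite{LA92} shows $S_2$ consists of straight tangent segments and arcs of the boundaries of ${\mathcal M}(\theta)$ and the initial circles, i.e.\ $S_2$ is a viable path on ${\mathcal G}(\theta)$. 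Finally, because $R_i(\theta)\geq R_{min}$ (Assumption \ref{Aso3-}) and the boundary curvature bound of Assumption \ref{As-1}, every viable path on ${\mathcal G}(\theta)$ satisfies (\ref{ac}), so $S_2$ is feasible for the constrained problem and no longer than $S$; hence the constrained and relaxed optima coincide and both are shortest viable paths on the extreme graph. If you want to complete your proof, replacing your direct variational attack on the curvature-constrained problem with this compare-to-the-relaxation argument is the missing ingredient.
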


{\bf The proof of Theorem \ref{Theorem1}:} First, we prove that the minimum achievable threat level $\theta$ is equal to the minimum critical threat level  with the connectivity requirement. Indeed, as in \cite{dubins1957}, it  follows from Ascoli's Theorem (see e.g. \cite{KF99}) that the minimum achievable threat level exists. Furthermore, it is obvious that  the minimum achievable threat level is a critical threat level. Furthermore, if some critical threat level $\theta_{*}$ does not satisfy  the connectivity requirement then it is obvious that there does not exist any path in ${\mathcal M}(\theta_{*})$ connecting the aircraft initial position $p(0)$ and the final destination ${\mathcal F}$. Therefore, the minimum achievable threat level $\theta$ cannot be less
than the minimum critical threat level with the connectivity requirement. On the other hand, we prove that if $\theta$ is  the minimum critical threat level, then the set ${\mathcal M}(\theta)$ contains  a shortest reaching the final destination path. Indeed, as in \cite{DUB57}, it follows from Ascoli's Theorem (see e.g. \cite{KF99}). Furthermore, we prove that this shortest reaching the final destination path is a shortest viable path on the extreme graph ${\mathcal G}(\theta)$. Indeed, let $S$ be a shortest (minimum length) reaching the final destination path. We now prove that the path $S$ does not go inside of any of two initial circles. Indeed, consider the circle $C$ of radius $6R_{min}$ centred at $p(0)$. Since $R_i(\theta) \leq R_i$ for all $i$, it follows from Assumption \ref{Aso3} that this circle is inside of the set ${\mathcal M}(\theta)$. Let $p_1$ be the point of the circle at which the path $S$ leaves this circle, and $\alpha_1$ be the trajectory heading at $p_1$. It means that $S$ consists of two segments $(p(0),p_1)$ and $(p_1,{\mathcal F})$ where $(p_1,{\mathcal F})$ is outside of the circle $C$. Now let $S_1$ be a path of minimum length connecting the points $p(0)$ and $p_1$ such that it has headings $\alpha(0)$ and $\alpha_1$ at these points, respectively, and the curvature requirement (\ref{ac}) is satisfied. It follows from the main result of \cite{DUB57} that such a path exists and $S_1$ belongs to the disk of radius $8R_{min}$ centred at $p(0)$. Therefore, Assumption \ref{Aso3} and the fact that $R_i(\theta) \leq R_i$ for all $i$ imply that $S_1$ is inside of the set ${\mathcal M}(\theta)$. Since, $S_1$ is a minimum length path, its length is less or equal to the length of $(p(0),p_1)$. On the other hand, the length of $(p(0),p_1)$ cannot be greater than the length of $S_1$ (if it is not true, then the path consisting of $S_1$ and $(p_1,{\mathcal F})$ would be a shorter   path reaching the final destination than $S$ which contradicts to our assumption that $S$ is a minimum length path. Hence, $(p(0),p_1)$ is a shortest path satisfying the curvature constraint (\ref{ac}). Therefore, it follows from the main result of \cite{DUB57} that $(p(0),p_1)$ consists of segments of two or less  minimum radius circles and a straight line segment and $(p(0),p_1)$ does not cross the interiors of the both  initial circles. Hence, we have proved that that the path $S$ does not go inside of any of two initial circles.

\begin{figure}[t]
\begin{center}
{\includegraphics[width=0.4\textwidth]{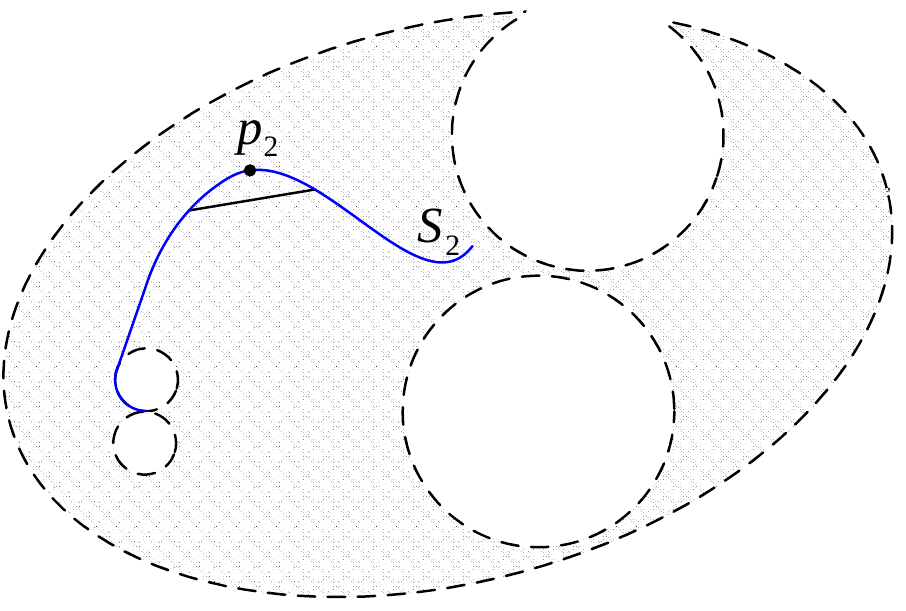}}
\caption{If a segment of a path is not a straight line segment, lies outside of the initial circles and does not cross the boundary of $\mathcal M(\theta)$, then there exists a shorter path.
}
\label{fig5}
\end{center}
\end{figure}

Now let $S_2$ be a minimum length path inside of the set ${\mathcal M}(\theta)$ connecting the points $p(0)$ and ${\mathcal F}$ which is not crossing two initial circles. Now we do not assume that $S_2$ satisfies condition (\ref{ac}). Since we have proved that the path $S$ does not go inside of any of two initial circles, the length of $S_2$ is less or equal to the length of $S$. Now we prove that $S_2$ consists of segments of the boundaries of the set ${\mathcal M}(\theta)$ and two initial circles, and straight line segments connecting two different points on these boundaries. Indeed, we prove it by contradiction. If this statement does not hold, then there exists a point $p_2\in S_2$ which is outside of  two initial circles and the boundary of the set ${\mathcal M}(\theta)$, and $p_2$ is not an interior point of a straight line segment of $S_2$. In this case, in some small neighbourhood of $p_2$ we can replace the segment of path $S_2$ containing $p_2$ by a straight line segment which does not intersect two initial circles and the boundary of the set ${\mathcal M}(\theta)$; see Fig. \ref{fig5}. Therefore, we can construct a path with shorter length than $S_2$ satisfying all the requirements. Furthermore, it follows from Theorem 5 of \cite{LA92} that any straight line segment of $S_2$ connects either two tangent points or a tangent point and the final destination point. Hence, the path $S_2$ may consist of segments of two initial circles, segments of the boundary of the set ${\mathcal M}(\theta)$ and segments of straight lines connecting tangent points or a tangent point and the target. Therefore, the path $S_2$ is a viable path on the extreme graph ${\mathcal G}(\theta)$. Finally, since $R_i(\theta) \leq R_i$ for all $i$, it follows from Assumptions \ref{Ast} and \ref{As-1} that any viable path on the extreme graph ${\mathcal G}(\theta)$ satisfies the curvature requirement (\ref{ac}). This completes the proof of Theorem \ref{Theorem1}.

\subsection{Implementation and Complexity Analysis}\label{Implementation}

In this section, we describe the implementation of the proposed navigation and present the complexity analysis.

\subsubsection{Implementation}\label{imp}
The input of our algorithm includes $p(0)$, $\mathcal{F}$, $P_i$, $R_i$, $f_i$ and $\partial D(\mathcal{M})$. Note $i=1,...,n$. The output is the minimum threat level and the shortest path. The overall procedure is outlined below:
\begin{enumerate}
\item Compute all the critical points and the corresponding threat levels; the sets of them are represented by $CP$ and $\Theta$ respectively;
\item Sort the critical points according to the ascending of threat level (without introducing more notations, $CP$ and $\Theta$ denote the sorted sets henceforce); let $k=1$;
\item Using $CP(k)$ to construct the corresponding non-convex set $\mathcal{M}(\Theta(k))$ and then the extreme graph $\mathcal{G}(\Theta(k))$;
\item Check whether $\mathcal{G}(\Theta(k))$ satisfies the connectivity requirement; if yes, output $\Theta(k)$ and the shortest path, and exit; otherwise, $k=k+1$ and repeat Step 3).
\end{enumerate}

According to Definition \ref{define_3_1} and \ref{define_3_2}, Step 1) requires the following information as input: $P_i$, $f_i$, $R_i$ and $\partial D(\mathcal{M})$. To calculate the critical points of type 1, we need to check whether an agent's threat range crosses the region boundary according to Notation \ref{notation_3_1}. Following Definition \ref{define_3_1}, the critical points of type 1 are the points on the region boundary which are the closest points to the related agents, see e.g. Fig. \ref{fig2}. To calculate the critical points of type 2, we need to check whether two agents' threat range overlap. 

We specify the threat level function $f_i(d)$ and threat radius $R_i(\theta)$ for all $i=1,...,n$ as follows:
\begin{equation}\label{eq3}
f_i(d)=
\begin{cases}
1-\frac{d}{R_i},\ \ 0\leq d < R_i \\
0,\ \ \ \ \ \ \ \ \ d \geq R_i \\
\end{cases}
\end{equation}
\begin{equation}\label{eq4}
R_i(\theta)=R_i\cdot(1-\theta)
\end{equation}
It is obvious the proposed piecewise linear threat level functions satisfy all the requirements of  Assumption \ref{Ast}. 

Considering Remark \ref{remark_3_1}, i.e., the threat functions are identical, the critical points of type 2 are the middle points of the connection of the related two agents, see e.g. Fig. \ref{fig2}. In the situation of non-identical threat functions, the critical point of type 2 is not the middle point. Details of computing such critical point are presented in Appendix \ref{AppendixB}. The algorithm to compute critical points and threat levels is given in Algorithm \ref{cp}. Here we assume $n>1$. If $n=1$, the critical points of type 2 do not exist. In this case, there are at most two critical points of type 1. Then, Line 3 of Algorithm \ref{cp} should be slightly modified by finding two $P_*$ on different sides, see e.g., Fig. \ref{fig9a}.

\begin{algorithm}[t]
\caption{Compute critical points and threat levels}\label{cp}
\begin{algorithmic}[1]
\Require $P_i$, $R_i$, $\partial D(\mathcal{M})$, $f_i$ 
\Ensure $CP$, $\Theta$
\State $CP\leftarrow \emptyset$, $\Theta \leftarrow \emptyset$
\For{$i=1:n$}
\State Find $P_*$ on $\partial D(\mathcal{M})$ such that $d(P_i,P_*)$ is minimized 
\If {$d(P_i,P_*)<R_i$}
\State Add $P_*$ to $CP$; add $f_i(d(P_i,P_*))$ to $\Theta$
\EndIf
\For{$j=i+1:n$}
\If {$d(P_i,P_j)<R_i+R_j$}
\State Compute $P_*$ by either $\frac{P_i+P_j}{2}$ or Eq. (\ref{cp_coordinates})
\State Add $P_*$ to $CP$ and the corresponding threat level to $\Theta$
\EndIf
\EndFor
\EndFor
\end{algorithmic}
\end{algorithm}

Step 2) sorts the critical points according to the threat levels. The idea behind is that if a smaller threat level leads to an extreme graph satisfying the connectivity requirement, there is no need to check the larger levels.

In Step 3), we build up the non-convex set $\mathcal{M}(\Theta(k))$ and then construct the extreme graph $\mathcal{G}(\Theta(k))$ based on the selected critical point $CP(k)$. To construct $\mathcal{G}(\Theta(k))$, we need to construct all the tangent lines according to Definition \ref{define_3_6}. There are two types of tangent lines between: 1) a circle and a point outside; and 2) two circles, both of which can be precisely computed. The formulations of common tangent lines for these two cases are given in Appendix \ref{AppendixC}. 

In Step 4), we need to verify whether $\mathcal{G}(\Theta(k))$ meets the requirement of connectivity by Breadth First Search (BFS) \cite{BFS}. If connected, the algorithm BFS can also figure out the shortest path from $p(0)$ to $\mathcal{F}$. We finally check whether the heading requirement is satisfied by all two successful edges. If we can obtain a viable path under the current threat level, then it is the minimum threat level and the path is the optimal one; otherwise, we need to test the next critical point which leads to a larger threat level, i.e., repeat Step 3).

\subsubsection{Complexity analysis}\label{complexity_analysis}
Now we analyse the complexity of the proposed algorithm.

Shown as Algorithm \ref{cp}, the time complexity of Step 1) is $O(n^2)$. For the sorting procedure, many algorithms can be finished with time complexity lower than $O(n^2)$, e.g., Heapsort, which is in $O(n\log(n))$. Both of Step 1) and 2) will be executed by only once.

We point out that our minimum threat level and the shortest path will be found within a finite number of rounds. This number equals to the size of $CP$. We first consider an extreme case to estimate the upper bound of $|CP|$ and then analysis the complexity of applying BFS in Step 4). 

Since any pair of agents have at most one critical point of type 2, the maximum number of critical points of type 2 is $\frac{n(n-1)}{2}$. Since any agent has at most one critical point of type 1 with the boundary, the maximum number of critical points of type 1 is $n$. So $|CP|$ is upper bounded by $\frac{n(n+1)}{2}$. 

As discussed in Section \ref{imp}, the tangent lines can be calculated analytically (the time complexity is $O(1)$). Any pair of circles can have up to four tangent lines, and there are at most two tangent lines between $\mathcal{F}$ and a circle. There are $n+2$ circles in the graph. In the extreme case, i.e., any agent circle has four common tangent lines with all the other circles, the number of tangent lines is $4C_{n+2}^2-4$. Note the tangent lines between two initial circles should be taken out. Between the $n$ threat circles and $\mathcal{F}$, there are $2n$ tangent lines. Thus, the total number of tangent lines is $2n^2+6n+2n$. The total number of tangent points in the graph is $2(2n^2+6n)+2n$. Note, the 2 here corresponds to the fact that there are two tangent points for the tangent line between the circles; while for the tangent line between the threat circles and $\mathcal{F}$, only the tangent points on the circles should be considered. Besides, using BFS to check the connectivity of $\mathcal{G}(\Theta(k))$ results in the time complexity of $O(n^2)$. 

Overall, in the extreme case, the time complexity of our algorithm is $O(n^2+nlog(n)+\frac{n(n+1)}{2}\times n^2)=O(n^4)$. 

However, we also point that the extreme case does not exist in real applications. In our considered scenario, the agents should be well placed. In other words, not any two agents can have common tangent lines. Besides, not all agents' threat areas overlap with the region boundary. Thus, the upper bound of $|CP|$ will be decreased significantly. Here, we consider at most $K$ (a constant) threat circles overlap with each other simultaneously. In this case, number of critical points of type 2 is at most $Kn/2$ and that of critical points of type 1 is no larger than $n$. Thus, the upper bound of $|CP|$ is reduced from $O(n^2)$ to $O(n)$. Further, the numbers of vertices and edges in $\mathcal{G}(\Theta(k))$ is decreased accordingly, both from $O(n^2)$ to $O(n)$.

Therefore, in general, the complexity of our algorithm is $O(n^2)$ instead of $O(n^4)$. In the next section, we will use some simulation to demonstrate the complexity difference between the extreme and general cases.

\subsection{Simulation Results}
\label{S4}

This section presents examples to demonstrate the efficiency of the proposed method. We first set up the simulation environment. Then, we present the simulations for identical threat radius and non-identical threat radii. Finally, we compare the proposed approach with a fuzzy logic algorithm.

We consider several scenarios and build optimal paths of the aircraft. The aircraft initial heading to the X-axis is $\alpha(0)=\pi$. The speed of the aircraft is set as $v=1.5m/s$ and its maximum angular velocity is $u_M=0.5rad/s$. $R_i$ is taken as $12m$ for Simulation 1, 2, 3 and 4, and $30m$ for Simulation 5. There is only one agent in Simulation 1 and 2, i.e., $n=1$, while $n=2,\ 6,\ 10$ for Simulation 3, 4 and 5 respectively. 
\subsubsection{Identical threat radius}
Fig. \ref{fig8} demonstrates a scenario where $\mathcal{M}$ is wide enough and the aircraft is able to reach the final destination with $0$ threat level. However in Fig. \ref{fig9}, the aircraft cannot arrive at the final destination with 0 threat level. In this case, the optimal path crosses a critical point of type 1. Another simulation where the optimal path crosses a critical point of type 2 is shown in Fig. \ref{fig10}.  A complex simulation is displayed in Fig. \ref{fig11}. 
 Six agents block the region and the aircraft arrives at its final destination with 0.08 threat level. A more challenging simulation is shown in Fig. \ref{fig12} where the maximum threat level is 0.10. In these cases, the optimal paths contain critical points of type 2.

\begin{figure}[t]
\begin{center}
{\includegraphics[width=0.45\textwidth]{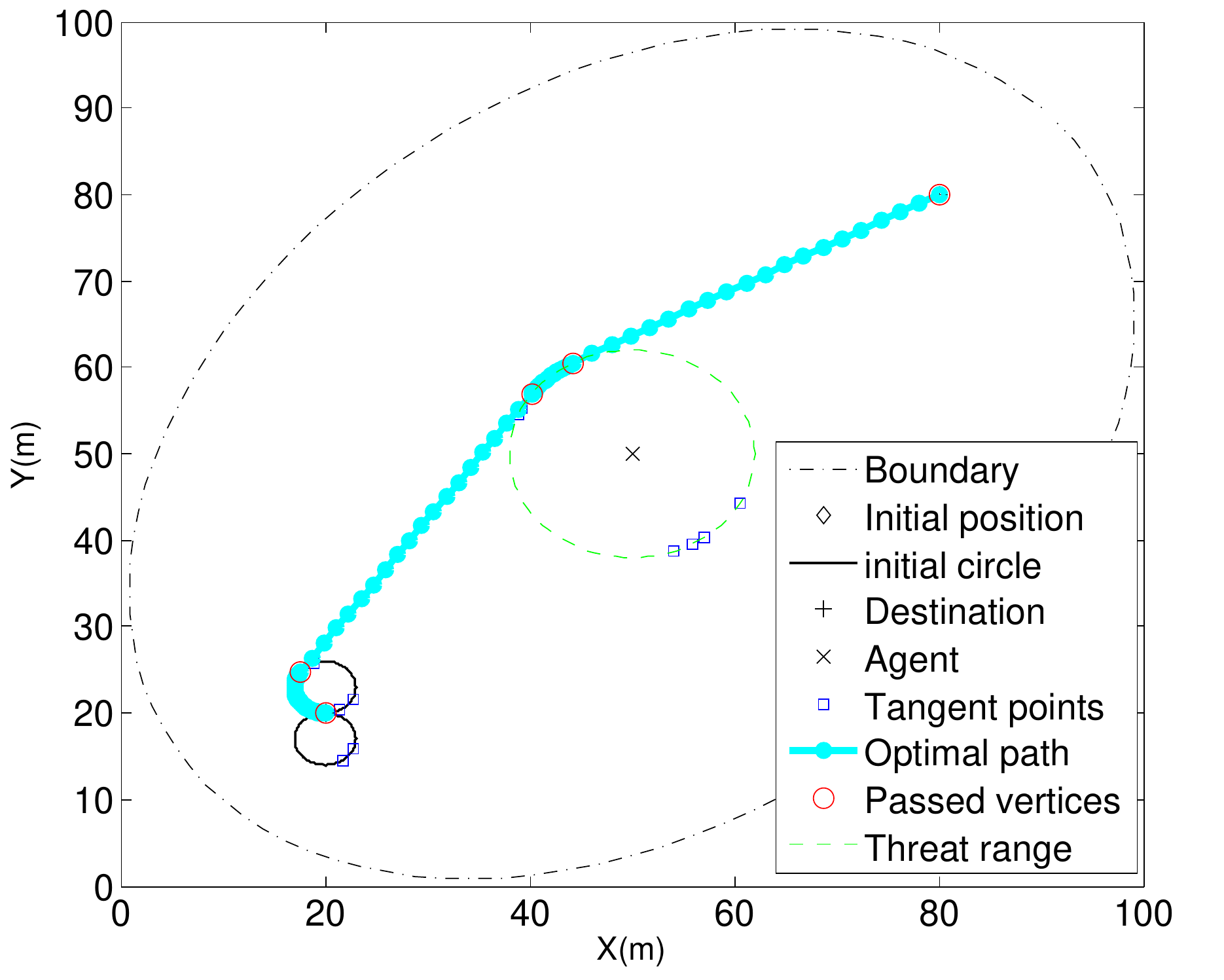}}
\caption{Simulation 1. $\theta=0.00$.}
\label{fig8}
\end{center}
\end{figure}

\begin{figure}[t]
    \centering
    \begin{subfigure}{0.45\textwidth}
        \includegraphics[width=\textwidth]{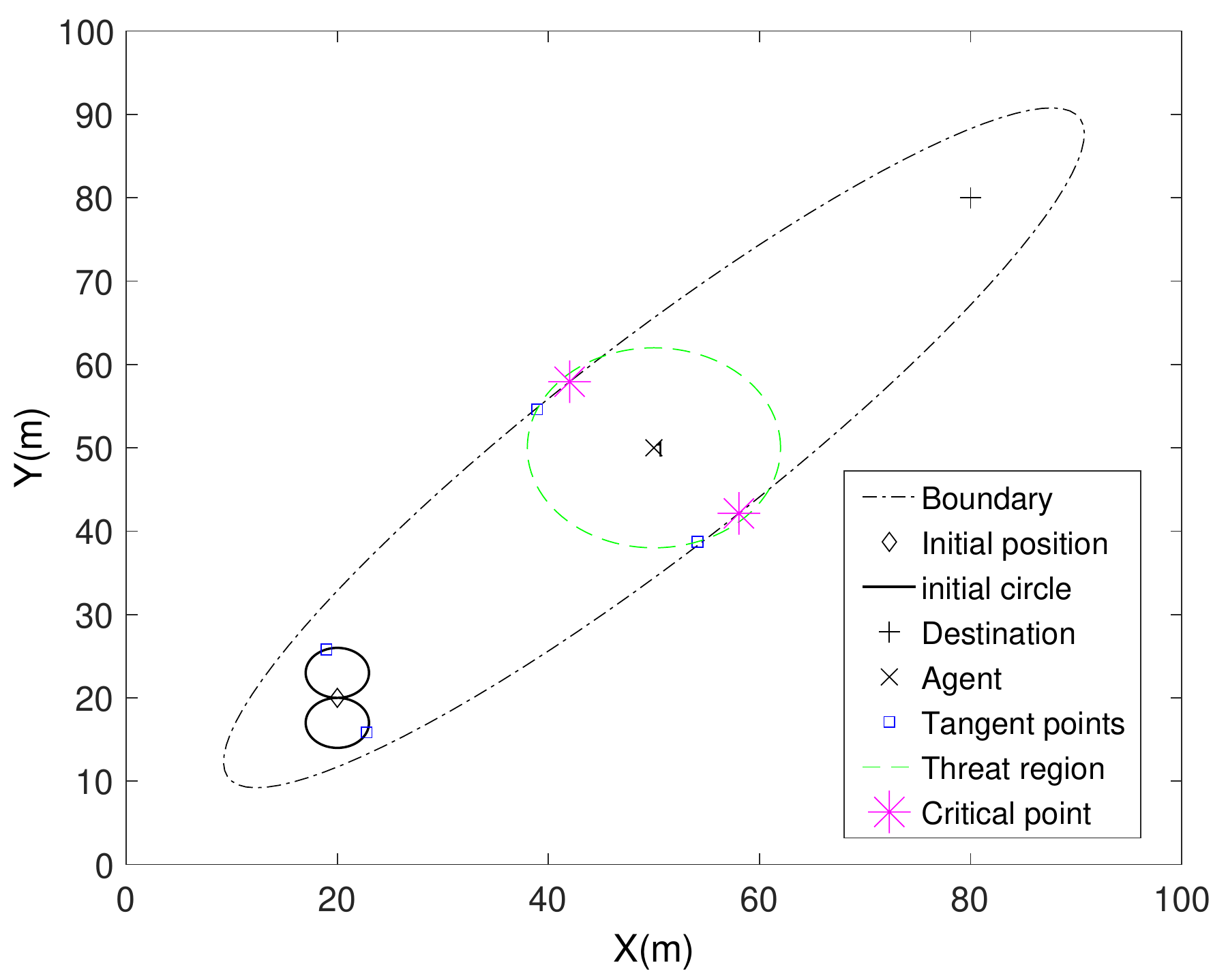}
        \caption{}
        \label{fig9a}
    \end{subfigure}
    \begin{subfigure}{0.45\textwidth}
        \includegraphics[width=\textwidth]{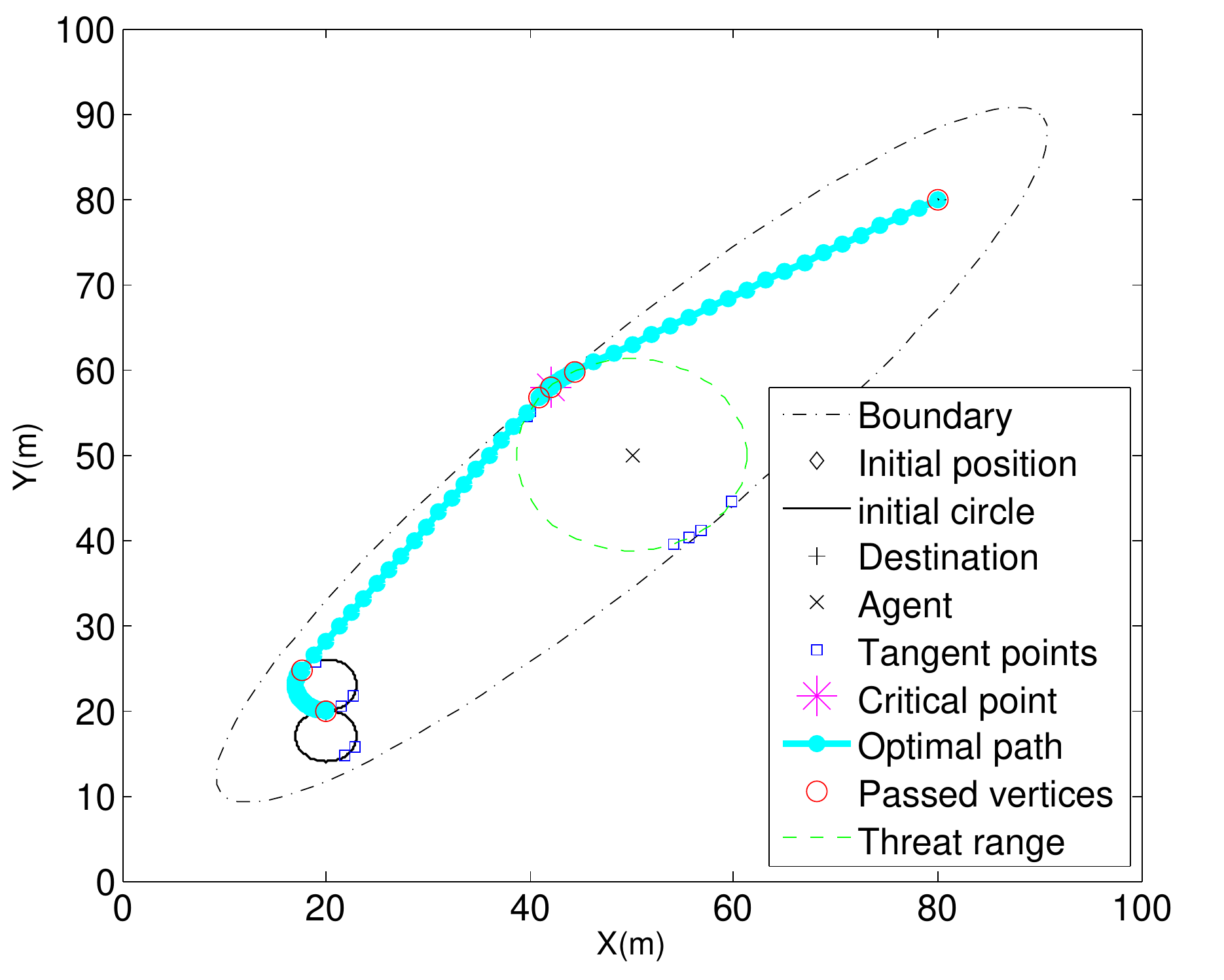}
        \caption{}
        \label{fig9b}
    \end{subfigure}
    \caption{Simulation 2 for crossing the critical point of type 1. (a) $\theta=0.00$. (b) $\theta=0.06$.}\label{fig9}
\end{figure}  

\begin{figure}[t]
    \centering
    \begin{subfigure}{0.45\textwidth}
        \includegraphics[width=\textwidth]{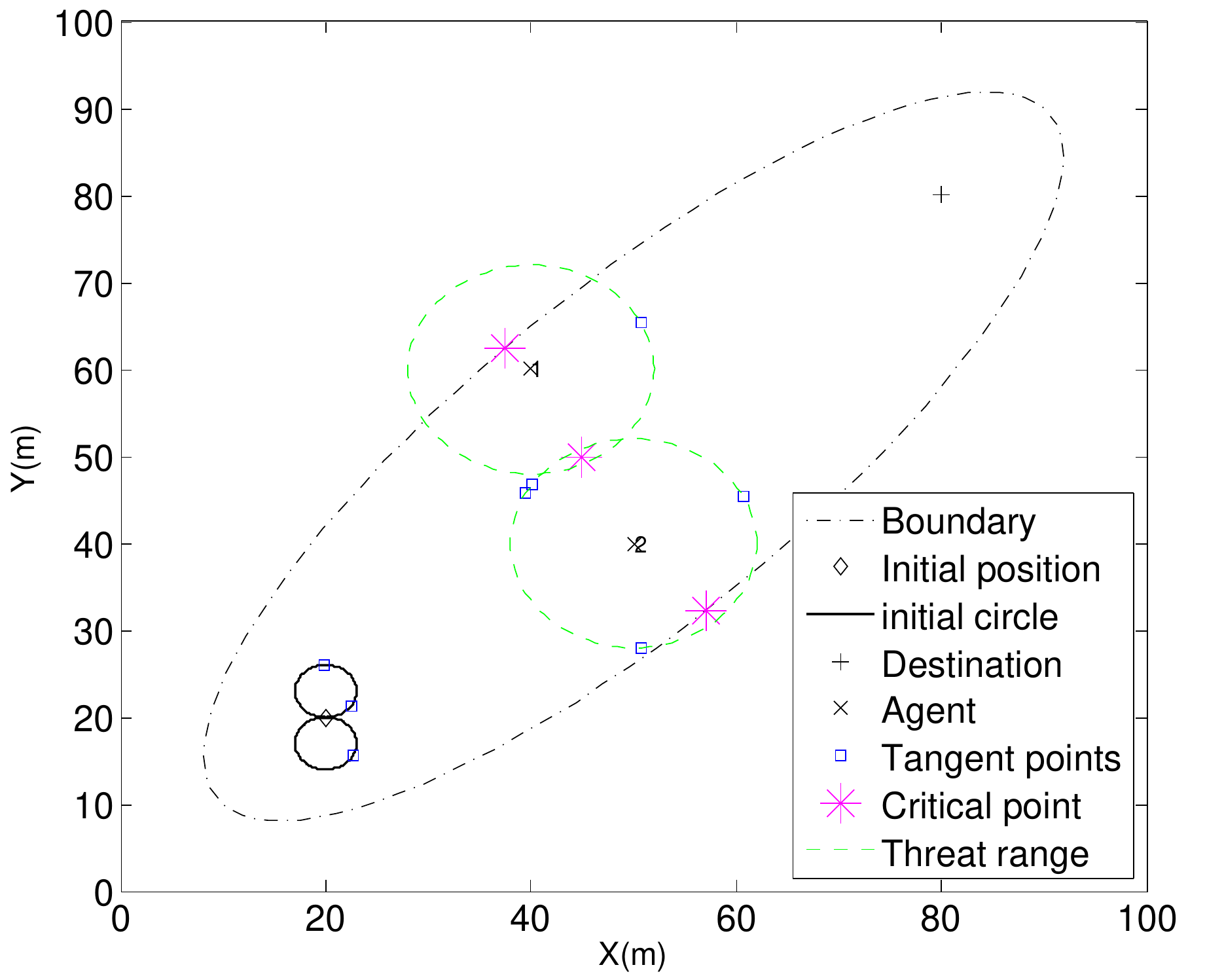}
        \caption{}
        \label{fig10a}
    \end{subfigure}
    \begin{subfigure}{0.45\textwidth}
        \includegraphics[width=\textwidth]{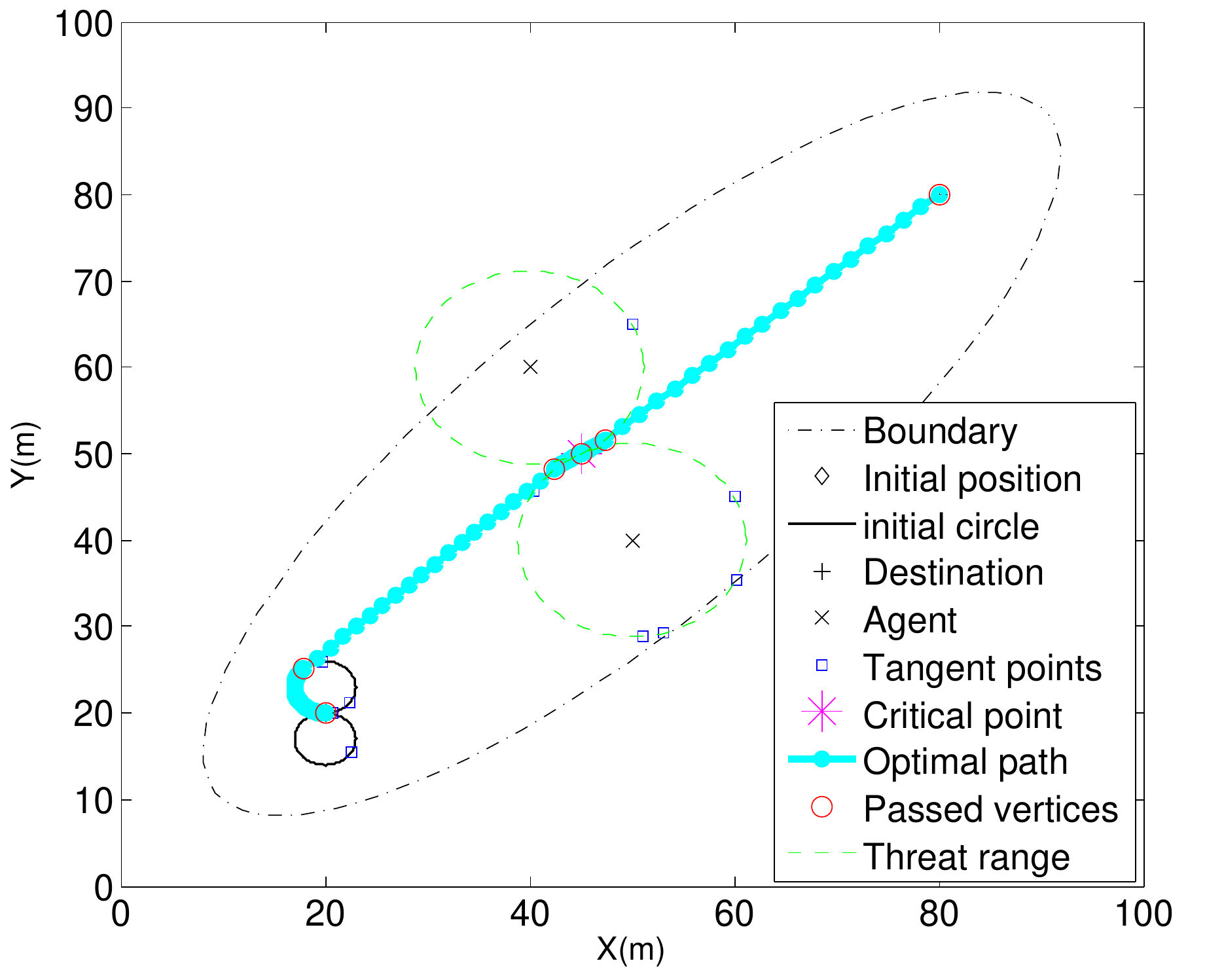}
        \caption{}
        \label{fig10b}
    \end{subfigure}
    \caption{Simulation 3 for crossing the critical point of type 2. (a) $\theta=0.00$. (b) $\theta=0.07$.}\label{fig10}
\end{figure}  

\begin{figure}[t]
    \centering
    \begin{subfigure}{0.45\textwidth}
        \includegraphics[width=\textwidth]{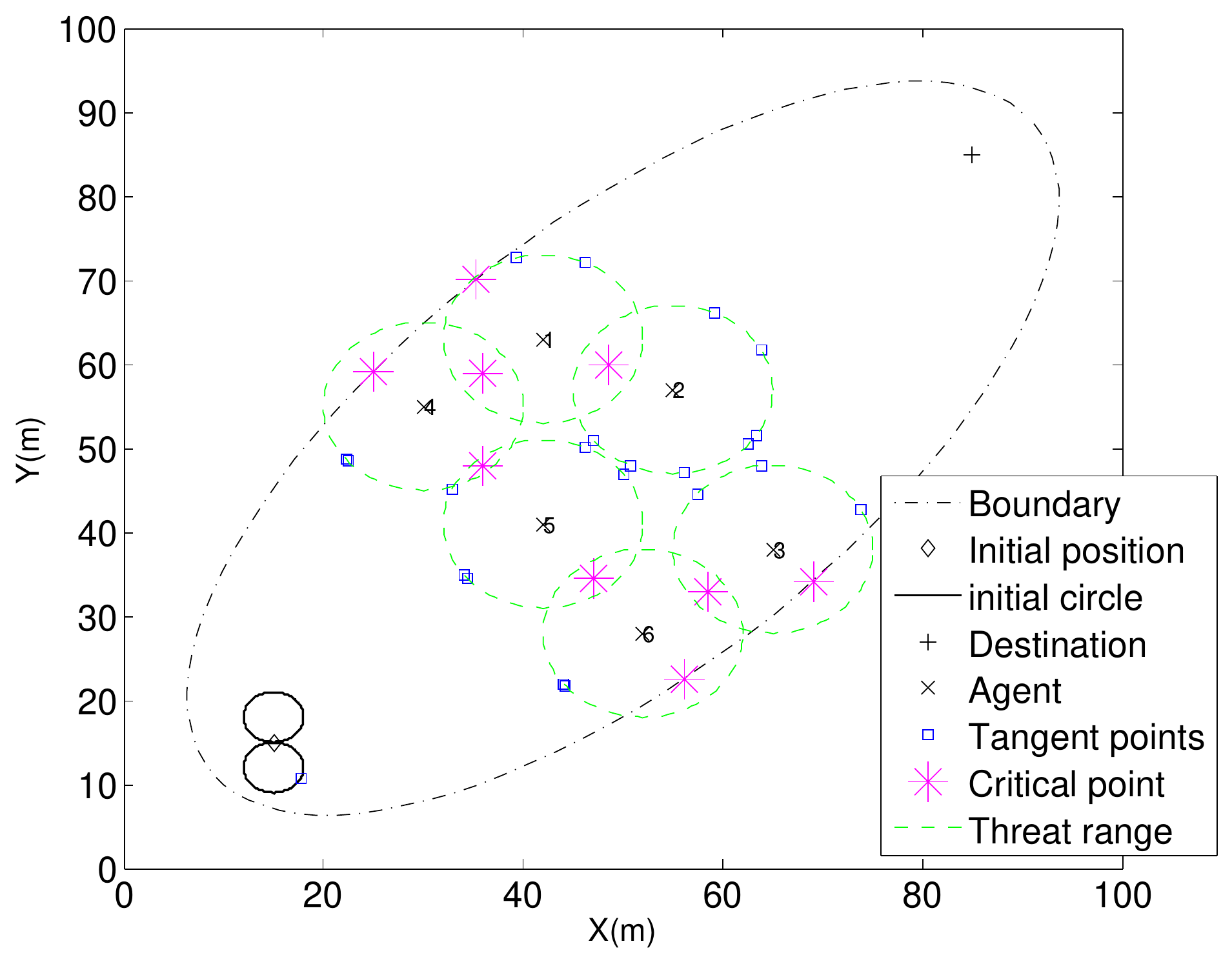}
        \caption{}
        \label{fig11a}
    \end{subfigure}
    \begin{subfigure}{0.45\textwidth}
        \includegraphics[width=\textwidth]{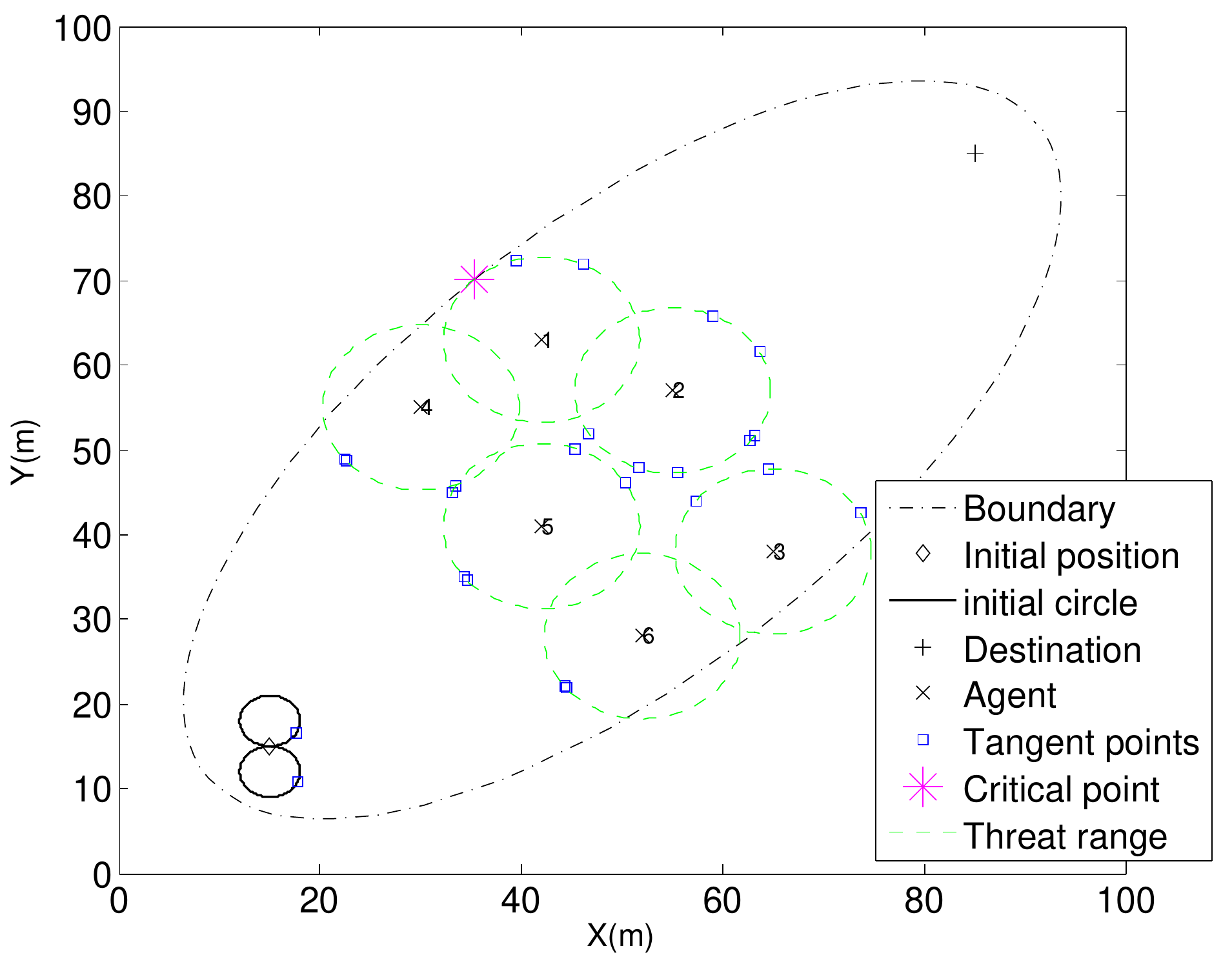}
        \caption{}
        \label{fig11b}
    \end{subfigure}
    \begin{subfigure}{0.45\textwidth}
        \includegraphics[width=\textwidth]{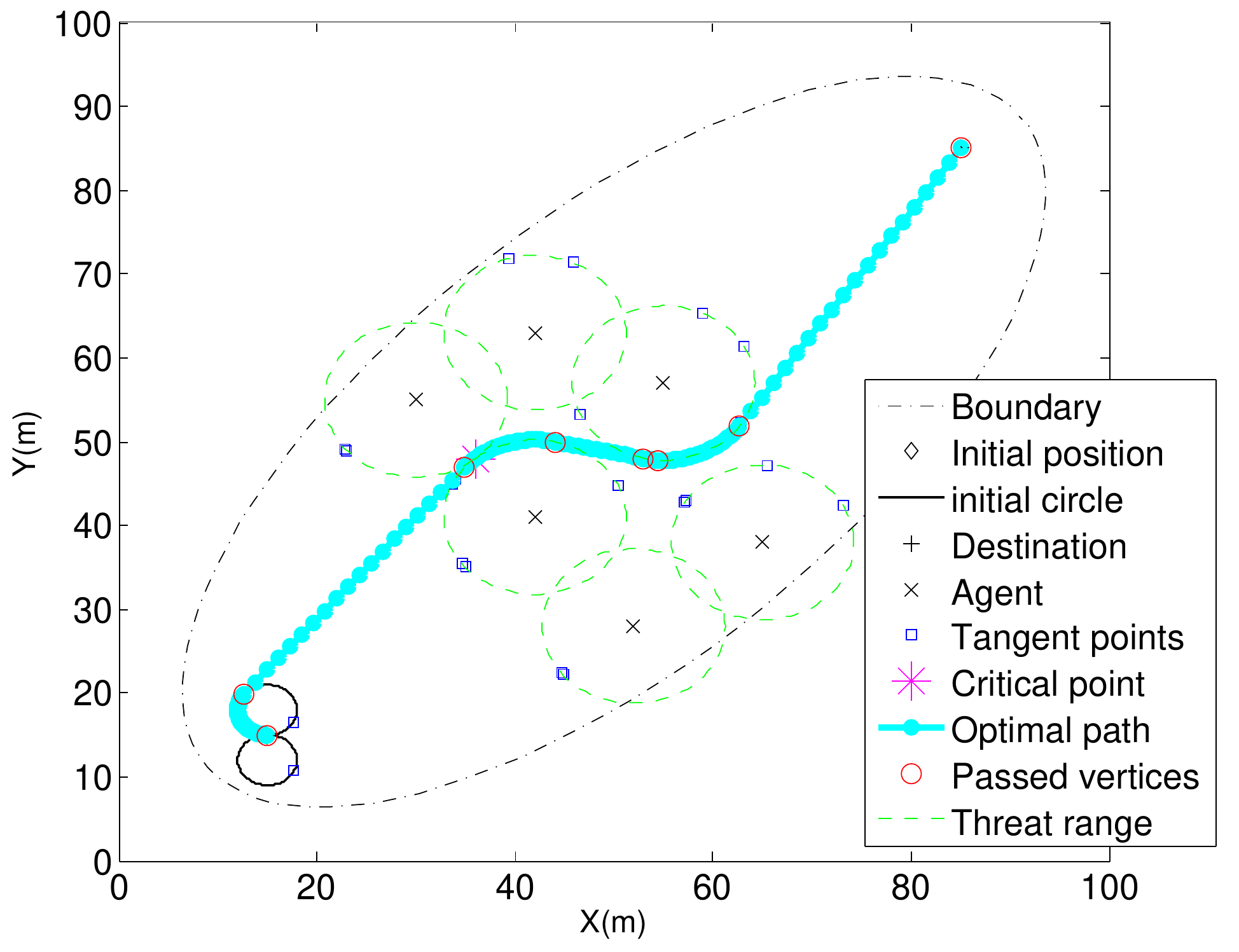}
        \caption{}
        \label{fig11c}
    \end{subfigure}
    \caption{Simulation 4 with 6 agents. (a) $\theta=0.00$. (b) $\theta=0.02$. (c) $\theta=0.08$.}\label{fig11}
\end{figure}

\begin{figure}[h]
    \centering
    \begin{subfigure}{0.45\textwidth}
        \includegraphics[width=\textwidth]{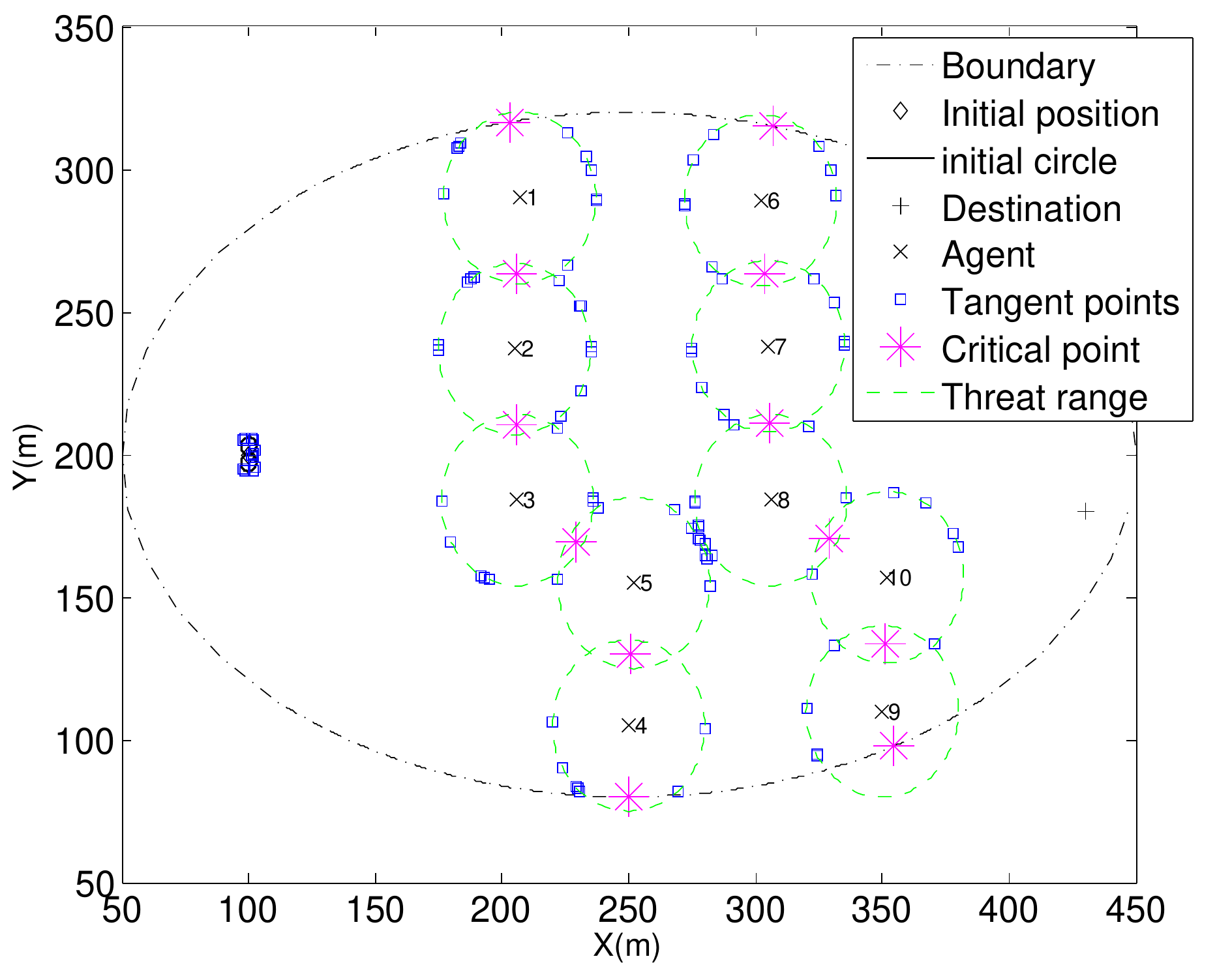}
        \caption{}
        \label{fig12a}
    \end{subfigure}
    \begin{subfigure}{0.45\textwidth}
        \includegraphics[width=\textwidth]{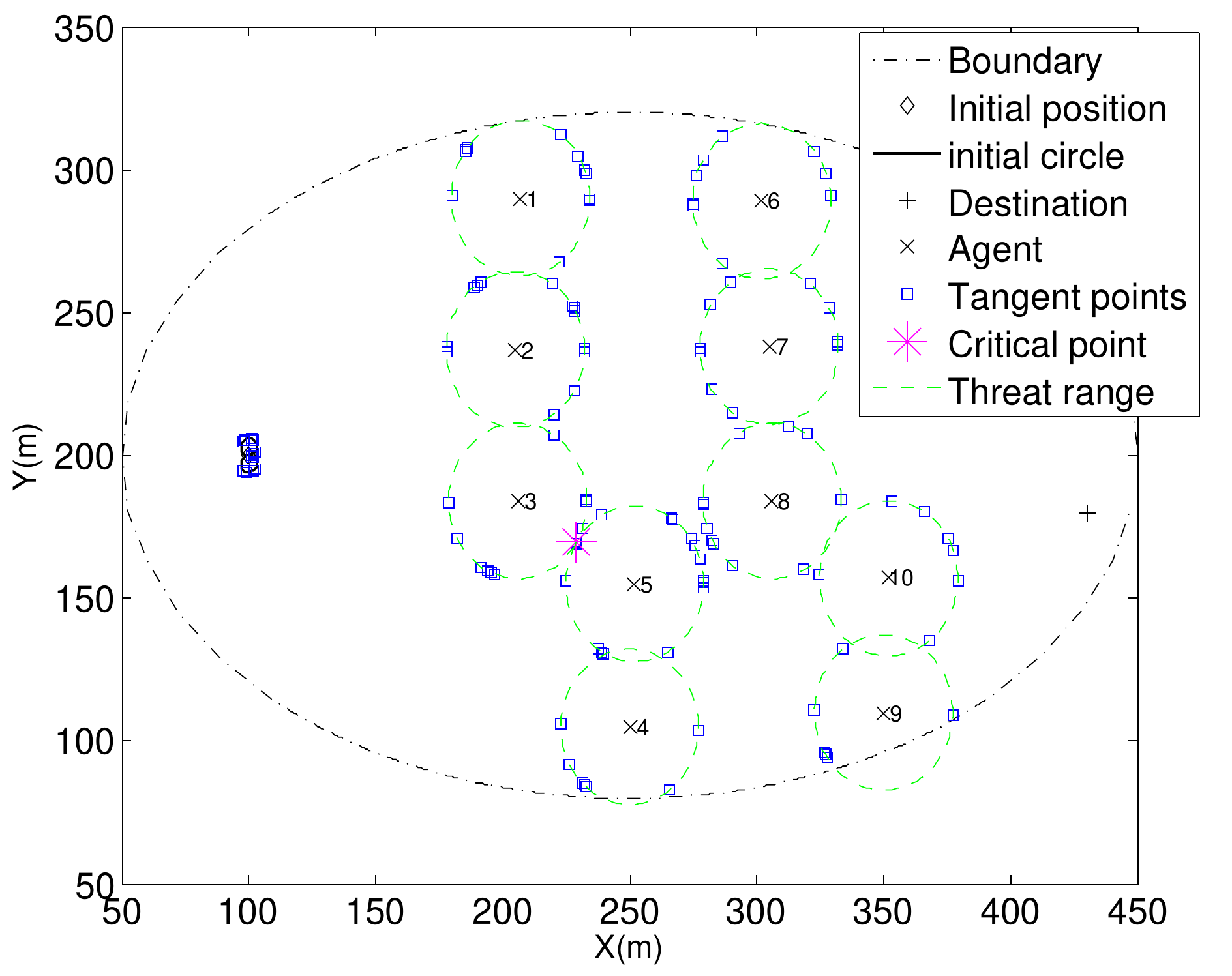}
        \caption{}
        \label{fig12b}
    \end{subfigure}
    \begin{subfigure}{0.45\textwidth}
        \includegraphics[width=\textwidth]{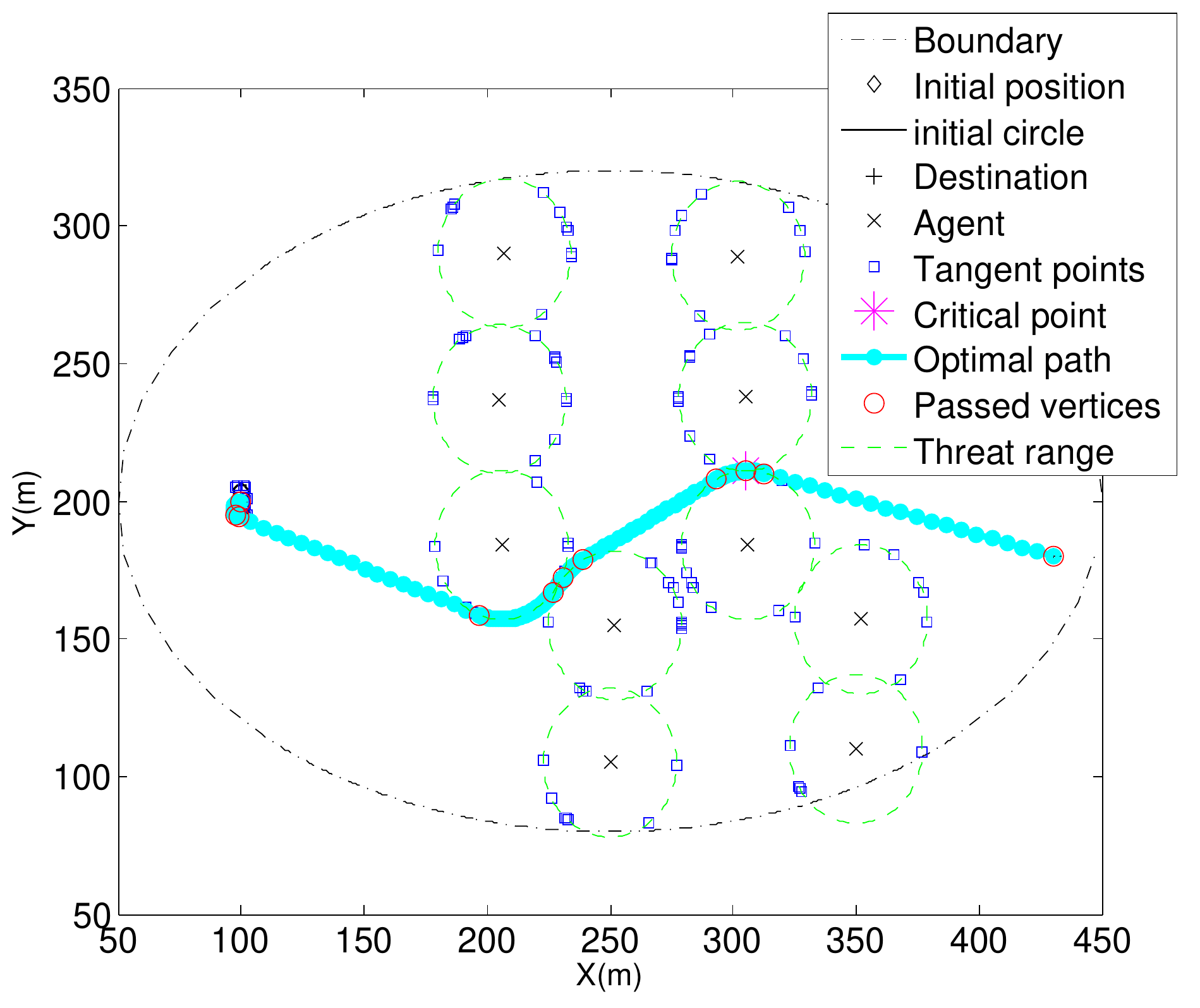}
        \caption{}
        \label{fig12c}
    \end{subfigure}
    \caption{Simulation 5 with 10 agents. (a) $\theta=0.00$. (b) $\theta=0.09$. (c) $\theta=0.10$.}\label{fig12}
\end{figure}

Now we test the complexity analysis in Section \ref{complexity_analysis} on Simulation 3, 4 and 5, where $n$ = 2, 6 and 10. We summarize the actual numbers of critical points, vertices and edges in the extreme graph in Table \ref{table:number}. Furthermore, we provide the estimated numbers of critical points, vertices and edges for both the general case and the extreme case. For the general case, considering the actual environment, $K$ takes 2, 2 and 3 respectively for Simulation 3, 4 and 5. As seen in Table \ref{table:number}, the estimations in the general case is more closer to reality than the extreme case.

\begin{table}[t]
\begin{center}
\caption{The number of critical points, vertices and edges}\label{table:number} 
  \begin{tabular}{| c | c | c | c | c|}
    \hline
    \multicolumn{2}{|c|}{} & $CP$ & Vertex & Edge\\ \hline
    \multirow{4}{*}{$n=2$} & Round 1& \multirow{2}{*}{3}& 10& 9\\
    \cline{4-5} 
    & Round 2& & 13&17\\ 
    \cline{2-5}
    & General & 3& 38&55\\
    \cline{2-5}
    & Extreme&3& 54&78\\ \hline
    \multirow{5}{*}{$n=6$} & Round 1& \multirow{3}{*}{9}& 24&26 \\ 
    \cline{4-5} 
    & Round 2& & 26&30\\ 
    \cline{4-5} 
    & Round 3& & 29&37\\
    \cline{2-5}
    & General& 12& 78& 115\\
    \cline{2-5}
    &Extreme& 21& 246&366\\ \hline
    \multirow{5}{*}{$n=10$} & Round 1& \multirow{3}{*}{12}& 96& 121\\ 
    \cline{4-5} 
    & Round 2& & 118&154\\ 
    \cline{4-5} 
    & Round 3& & 125&171\\
    \cline{2-5}
    & General& 25&166&247\\
    \cline{2-5}
    & Extreme& 55&566&846\\ \hline
    \end{tabular}
\end{center}
\end{table}

\subsubsection{Non-identical threat radii}\label{S52}
We further execute simulations for non-identical threat radii. In Simulation 4, the threat radii of the six agents are 12, 13, 12, 12, 11 and 10 meters respectively. In Simulation 5, the threat radii are 30, 28, 26, 30, 32, 31, 29, 27, 30 and 28 meters respectively. The results are demonstrated in Fig. \ref{fig13} and \ref{fig14}. 
Since the threat radii have been changed, correspondingly the achieved threat levels of passing the region are also different: 0.03 and 0.05  respectively in Simulation 4 and 5. 

\begin{figure}[h]
    \centering
    \begin{subfigure}{0.45\textwidth}
        \includegraphics[width=\textwidth]{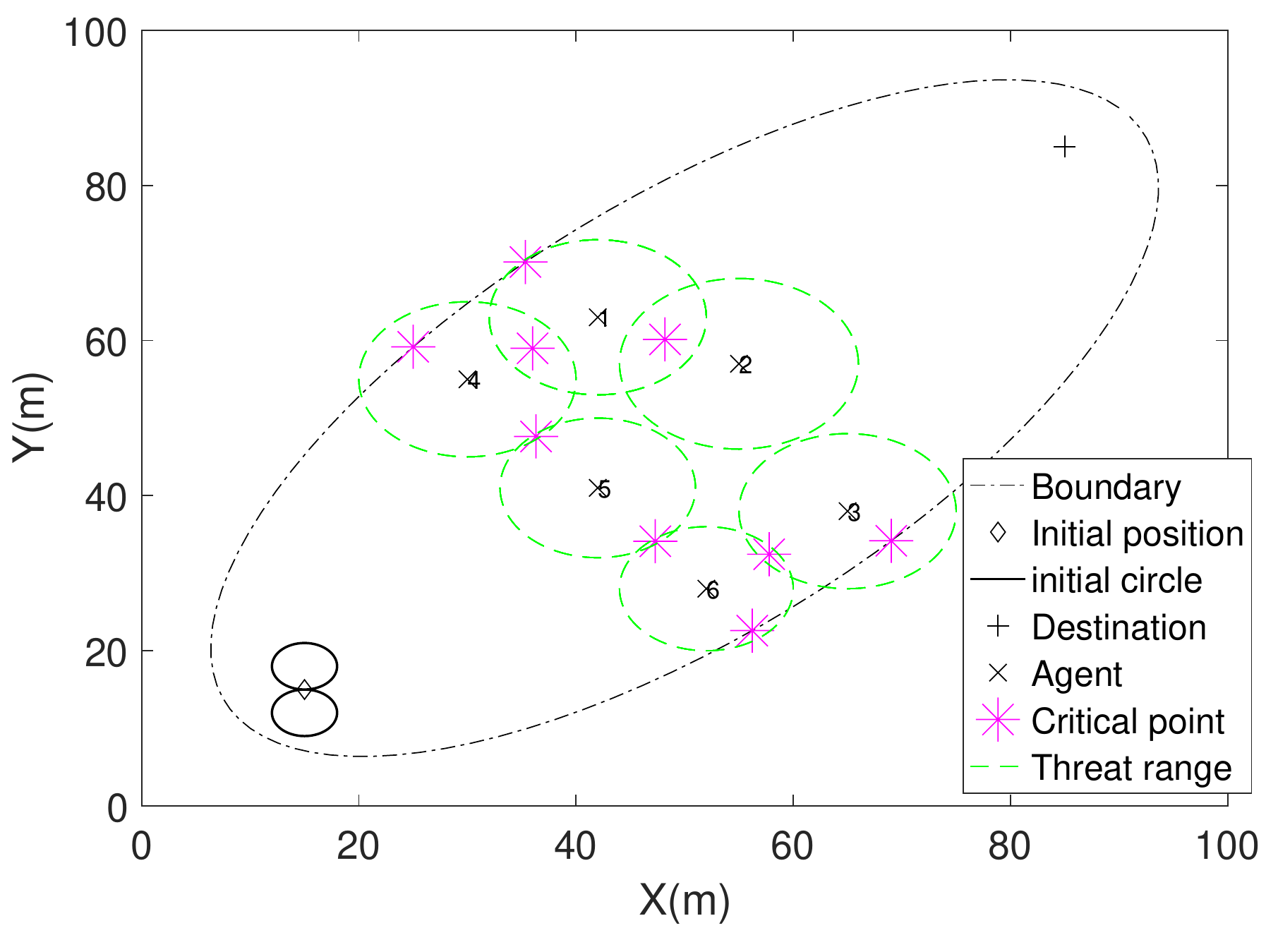}
        \caption{}
        \label{fig13a}
    \end{subfigure}
    \begin{subfigure}{0.45\textwidth}
        \includegraphics[width=\textwidth]{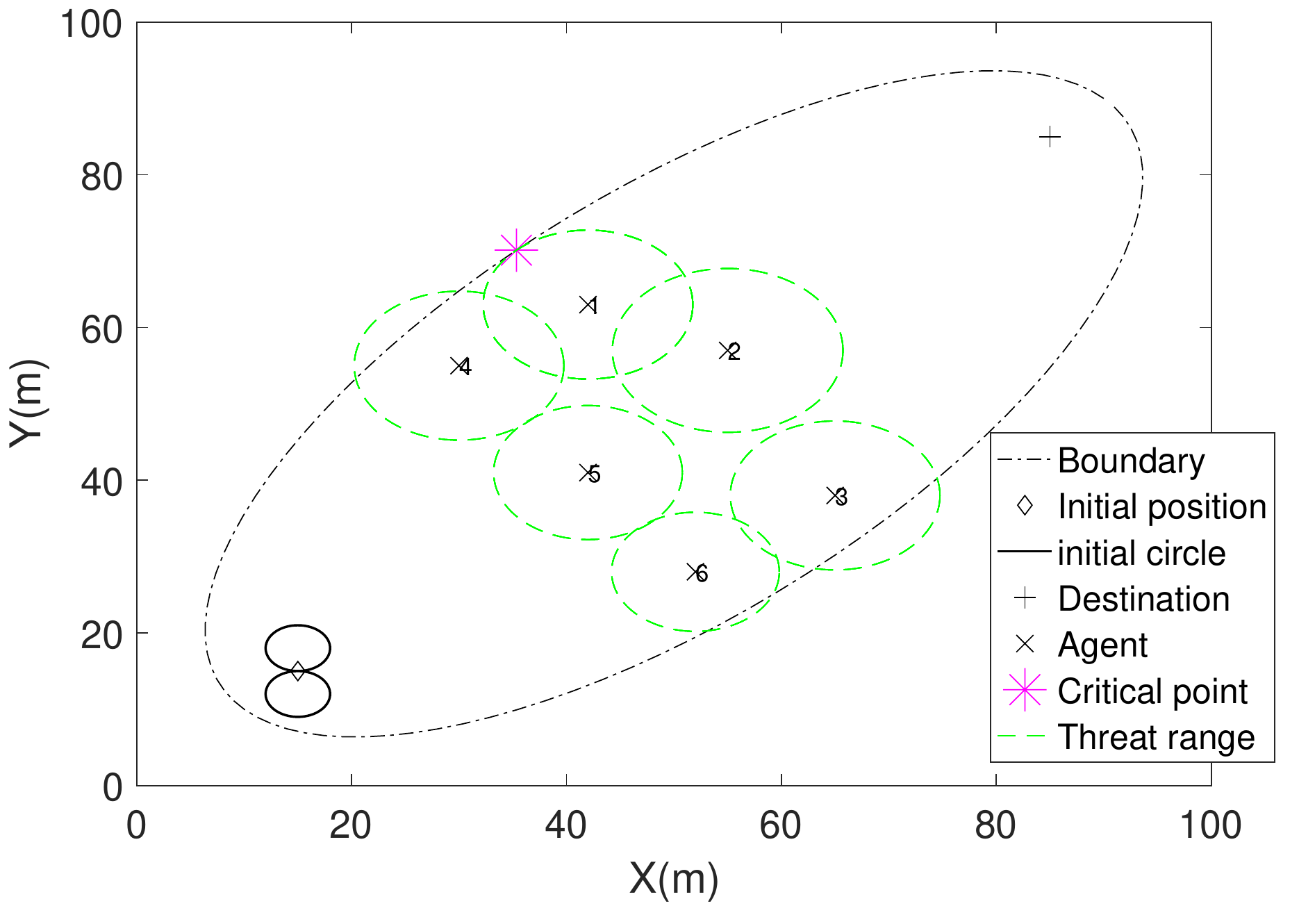}
        \caption{}
        \label{fig13b}
    \end{subfigure}
    \begin{subfigure}{0.45\textwidth}
        \includegraphics[width=\textwidth]{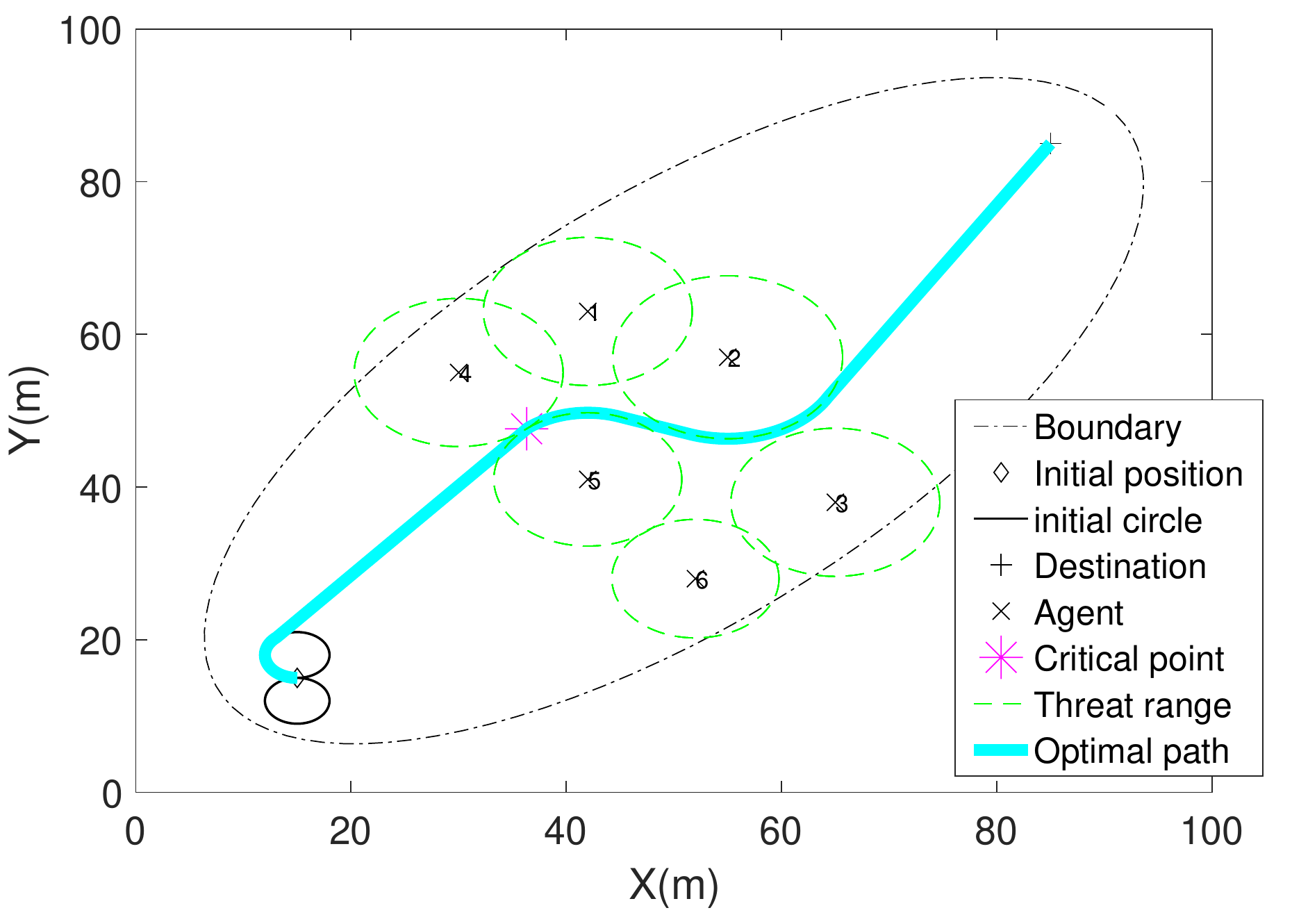}
        \caption{}
        \label{fig13c}
    \end{subfigure}
    \caption{Simulation 4 with 6 agents and non-identical threat radii. (a) $\theta=0.00$. (b) $\theta=0.02$. (c) $\theta=0.03$.}\label{fig13}
\end{figure}

\begin{figure}[h]
    \centering
    \begin{subfigure}{0.45\textwidth}
        \includegraphics[width=\textwidth]{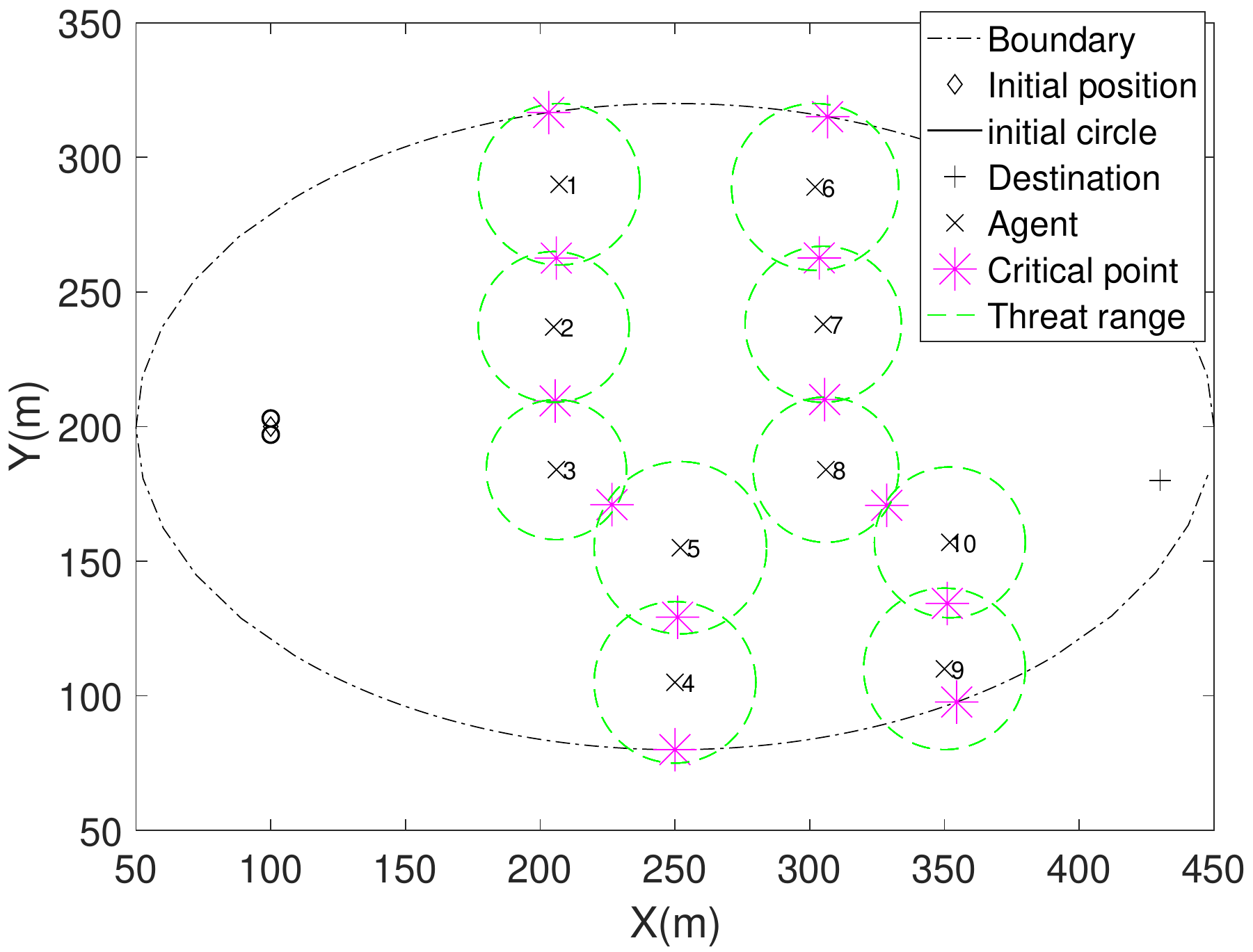}
        \caption{}
        \label{fig14a}
    \end{subfigure}
    \begin{subfigure}{0.45\textwidth}
        \includegraphics[width=\textwidth]{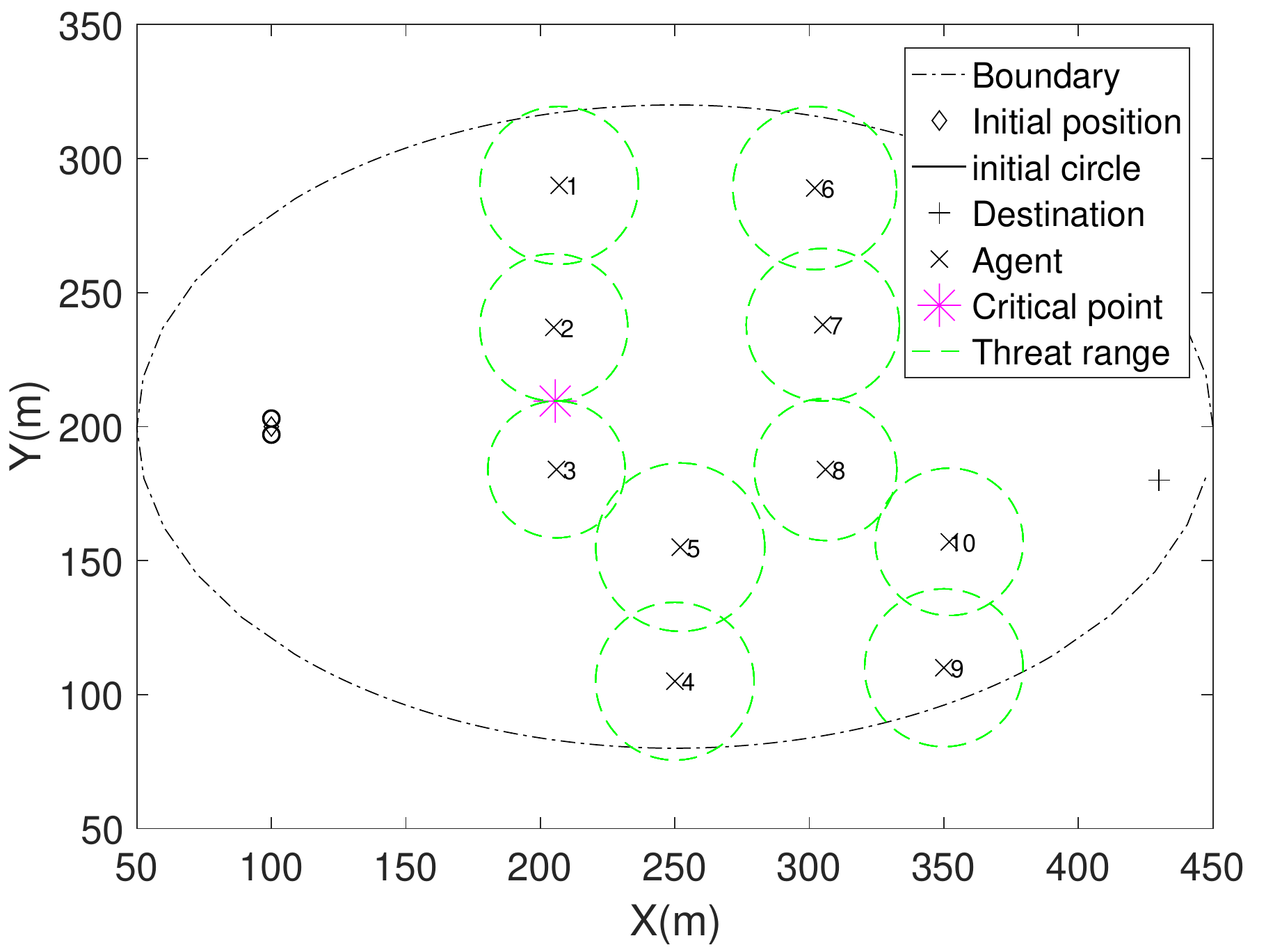}
        \caption{}
        \label{fig14b}
    \end{subfigure}
    \begin{subfigure}{0.45\textwidth}
        \includegraphics[width=\textwidth]{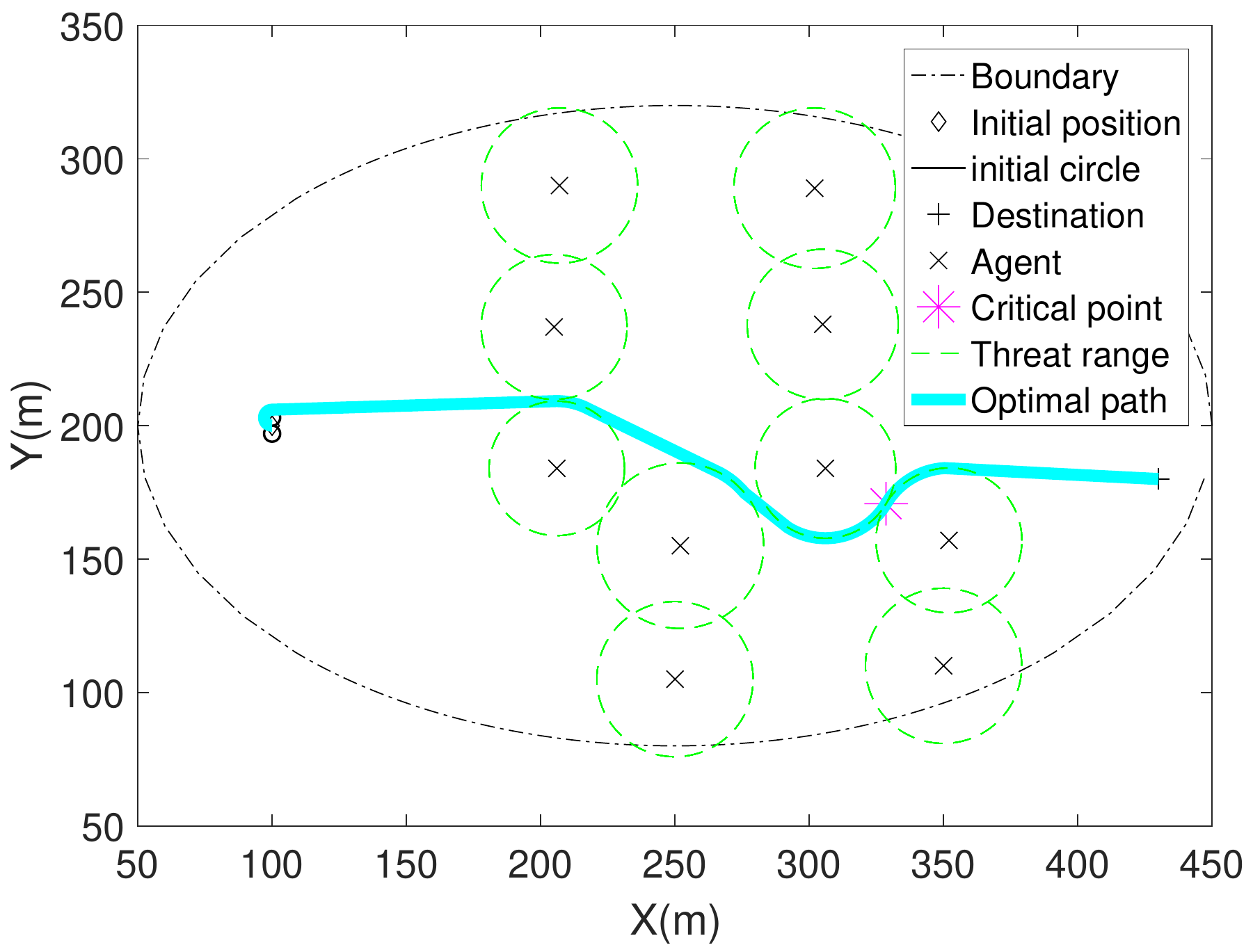}
        \caption{}
        \label{fig14c}
    \end{subfigure}
    \caption{Simulation 5 with 10 agents and non-identical threat radii. (a) $\theta=0.00$. (b) $\theta=0.03$. (c) $\theta=0.05$.}\label{fig14}
\end{figure}

\subsubsection{Comparison with existing work}
\begin{figure}[h]
    \centering
    \begin{subfigure}{0.45\textwidth}
        \includegraphics[width=\textwidth]{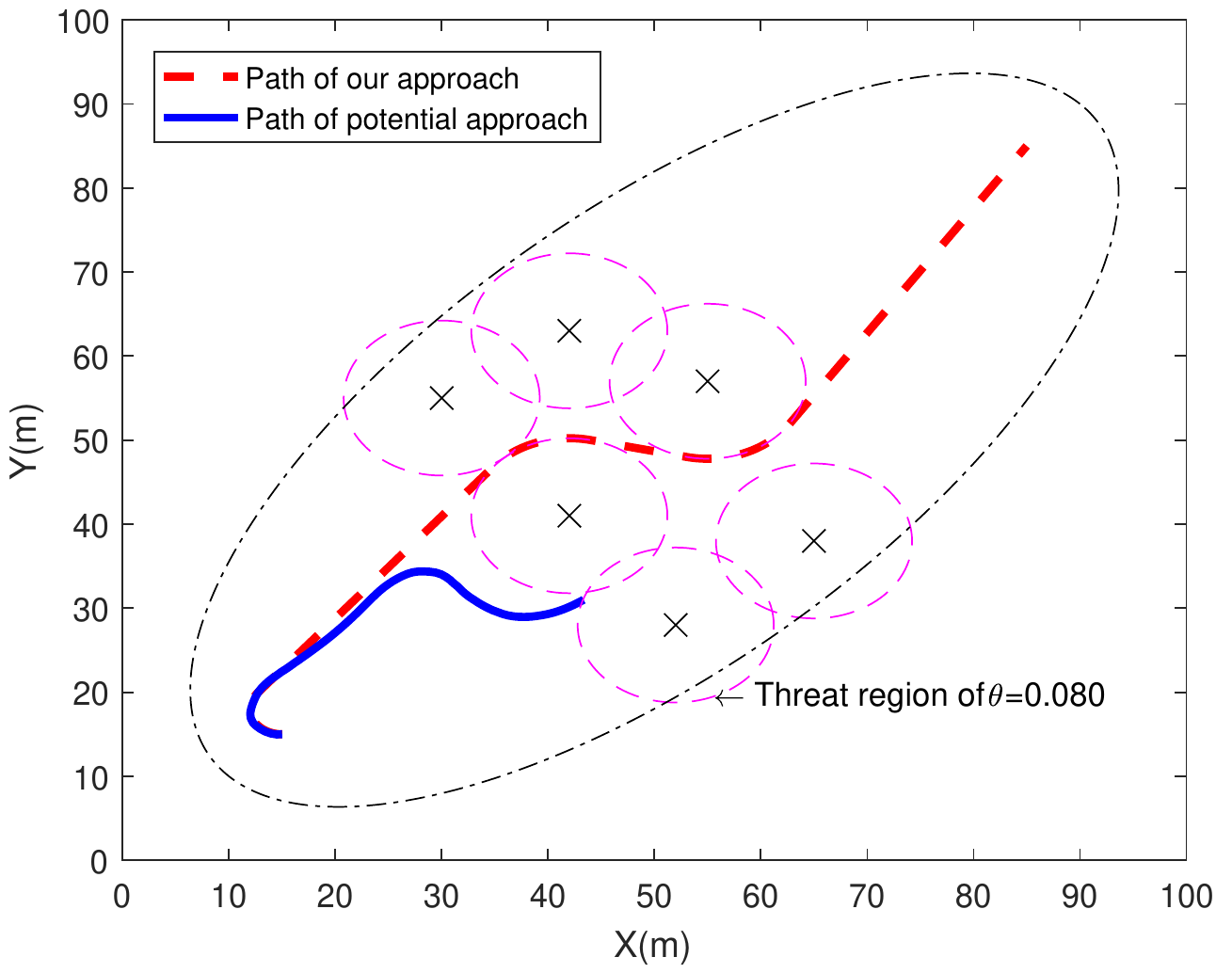}
        \caption{$R(\theta)=9.2$m.}
        \label{fig15a}
    \end{subfigure}
    \begin{subfigure}{0.45\textwidth}
        \includegraphics[width=\textwidth]{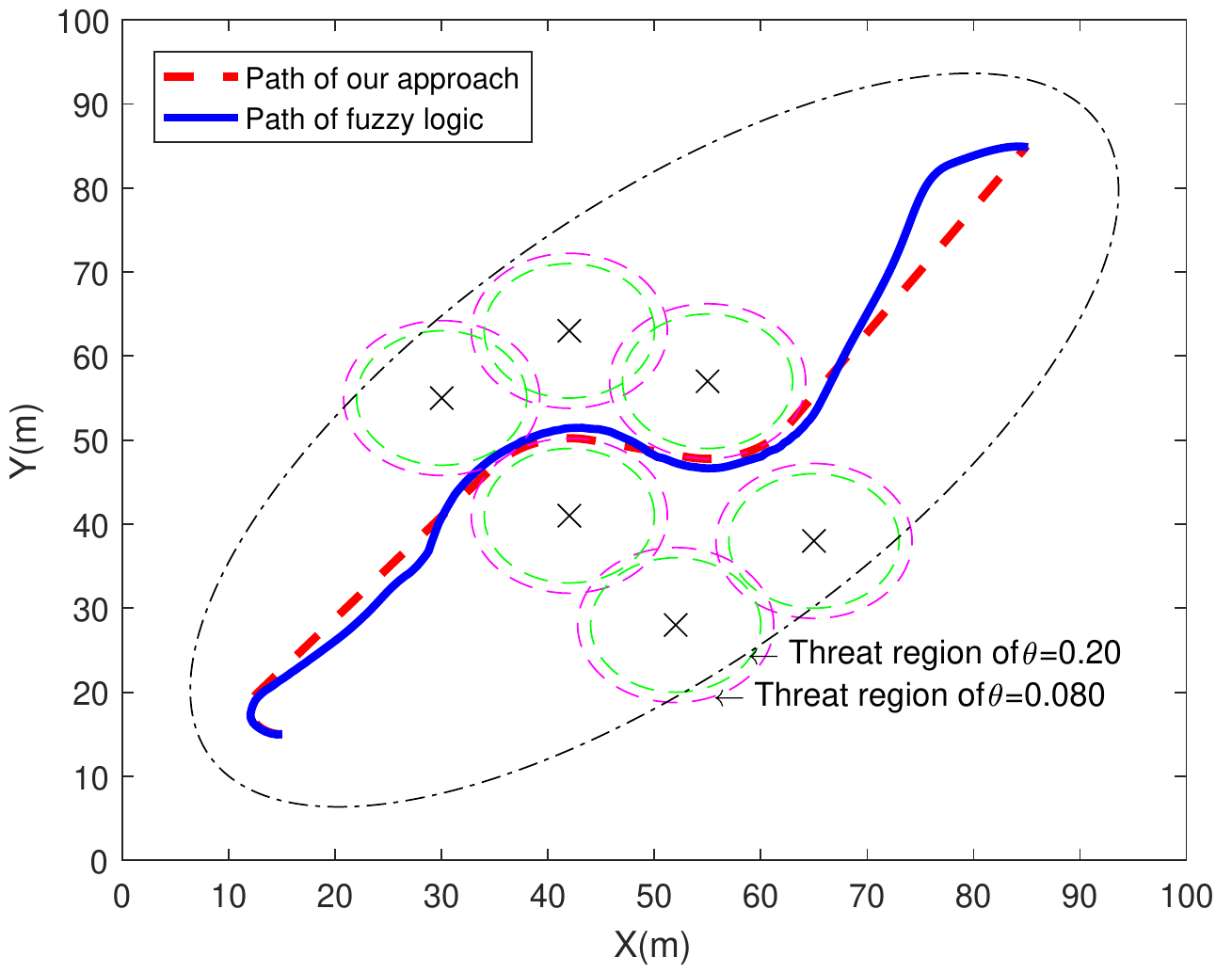}
        \caption{$R(\theta)=8$m; threat level: 0.145; path length: 118.3m.}
        \label{fig15b}
    \end{subfigure}
    \begin{subfigure}{0.45\textwidth}
        \includegraphics[width=\textwidth]{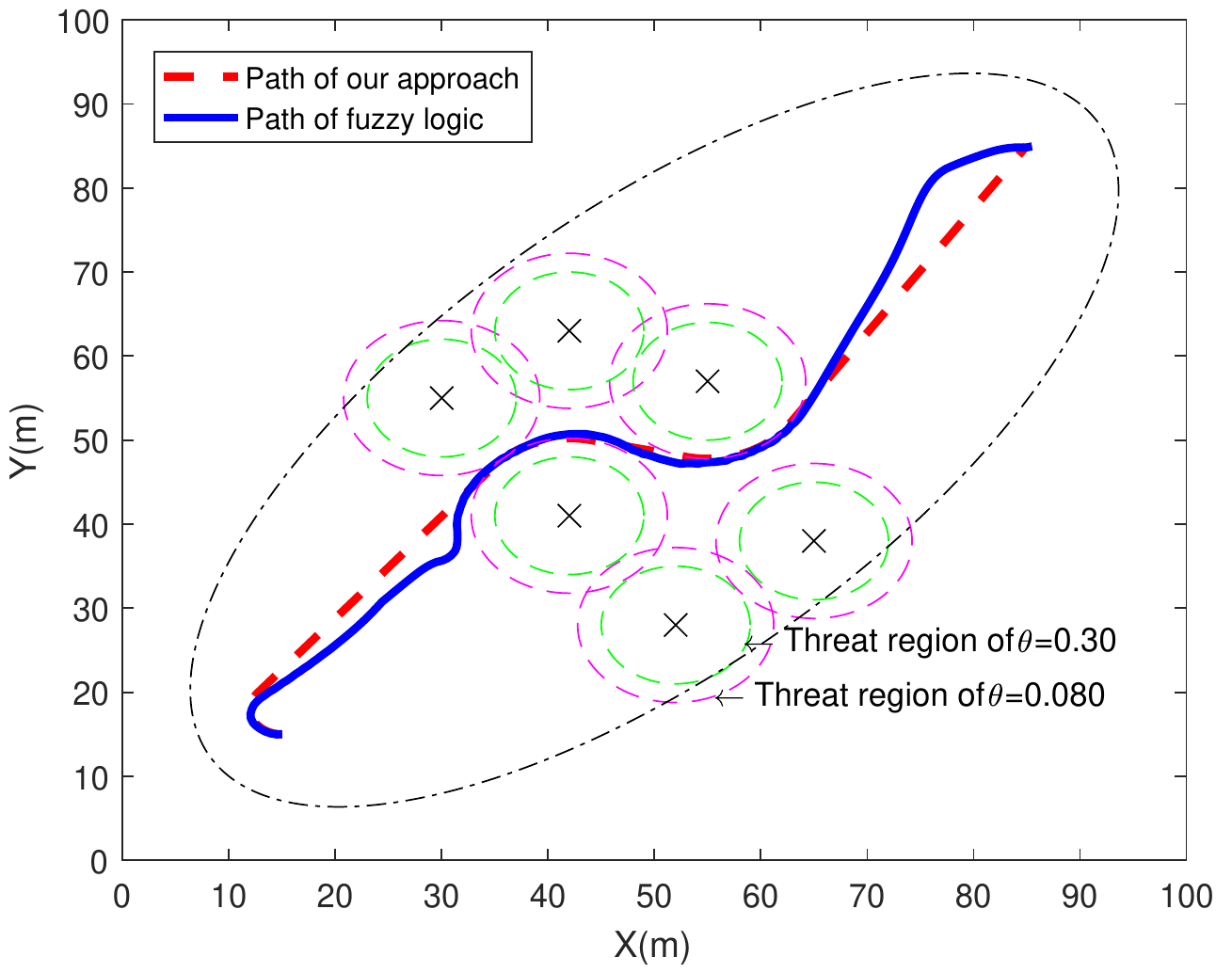}
        \caption{$R(\theta)=7$m; threat level: 0.086; path length: 117.2m.}
        \label{fig15c}
    \end{subfigure}
    \begin{subfigure}{0.45\textwidth}
        \includegraphics[width=\textwidth]{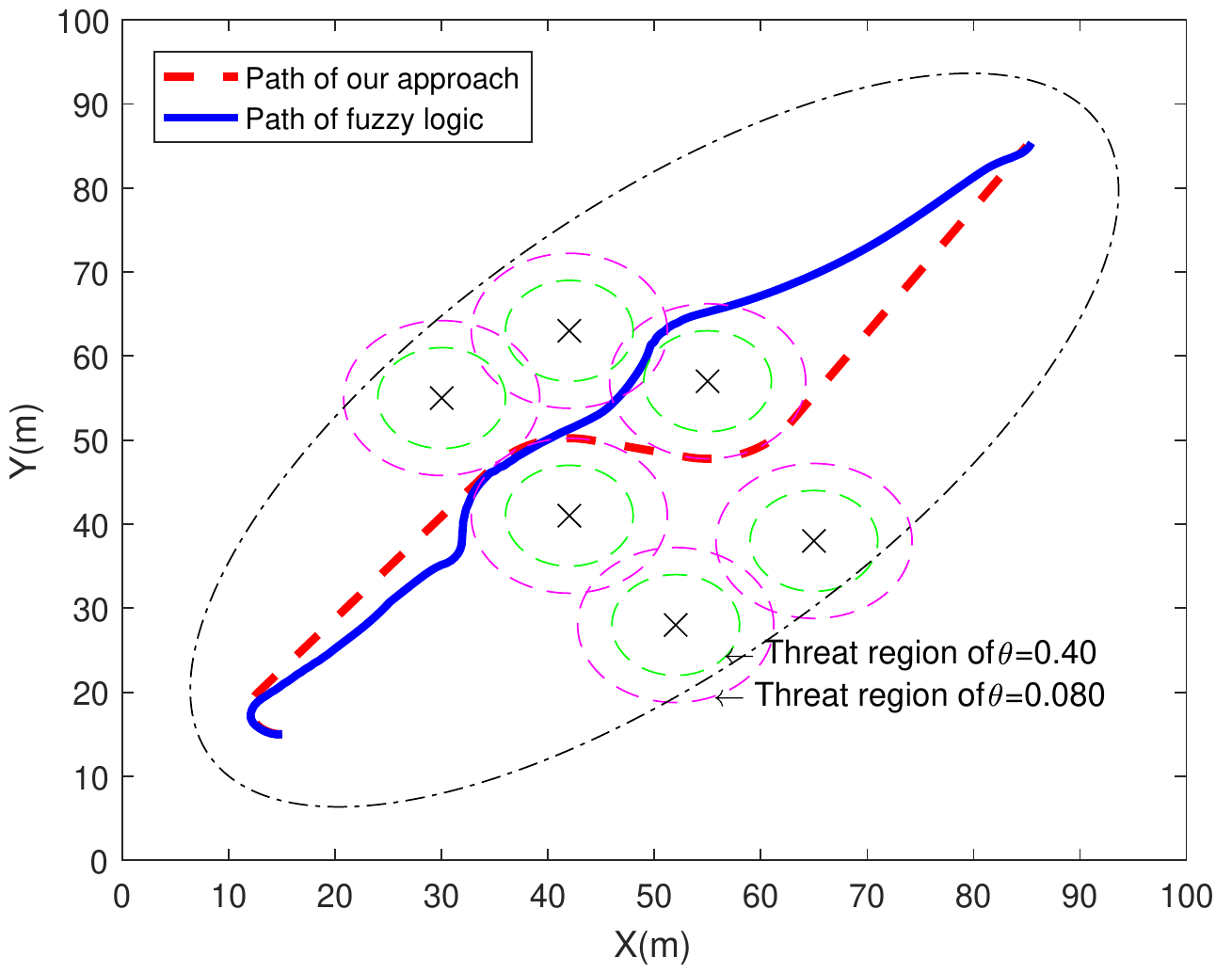}
        \caption{$R(\theta)=6$m; threat level: 0.290; path length: 108.0m.}
        \label{fig15d}
    \end{subfigure}
    \caption{Comparison with the fuzzy logic algorithm on Simulation 4.}\label{fig15}
\end{figure}

\begin{figure}[h]
    \centering
    \begin{subfigure}{0.45\textwidth}
        \includegraphics[width=\textwidth]{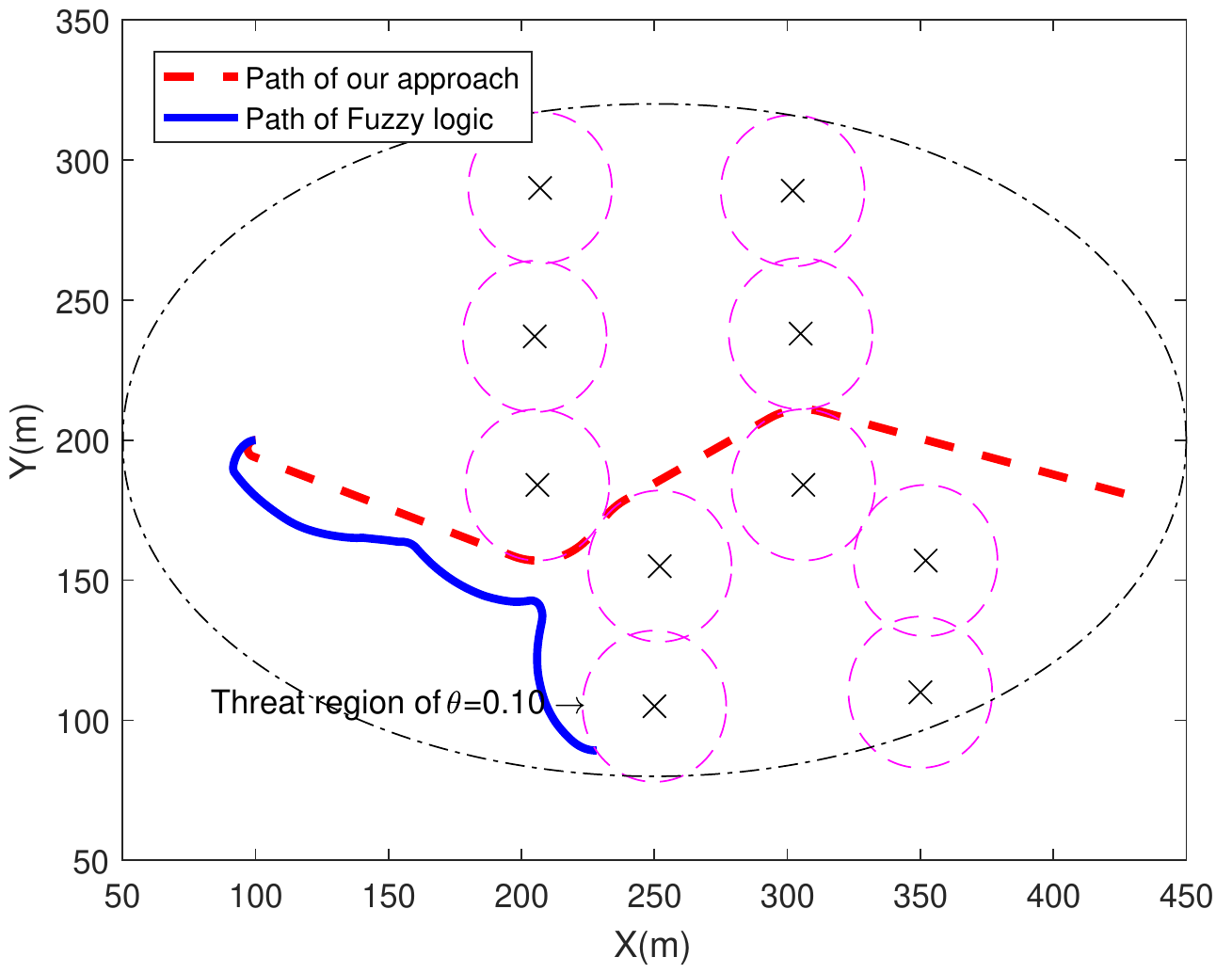}
        \caption{$R(\theta)=27$m.}
        \label{fig16a}
    \end{subfigure}
    \begin{subfigure}{0.45\textwidth}
        \includegraphics[width=\textwidth]{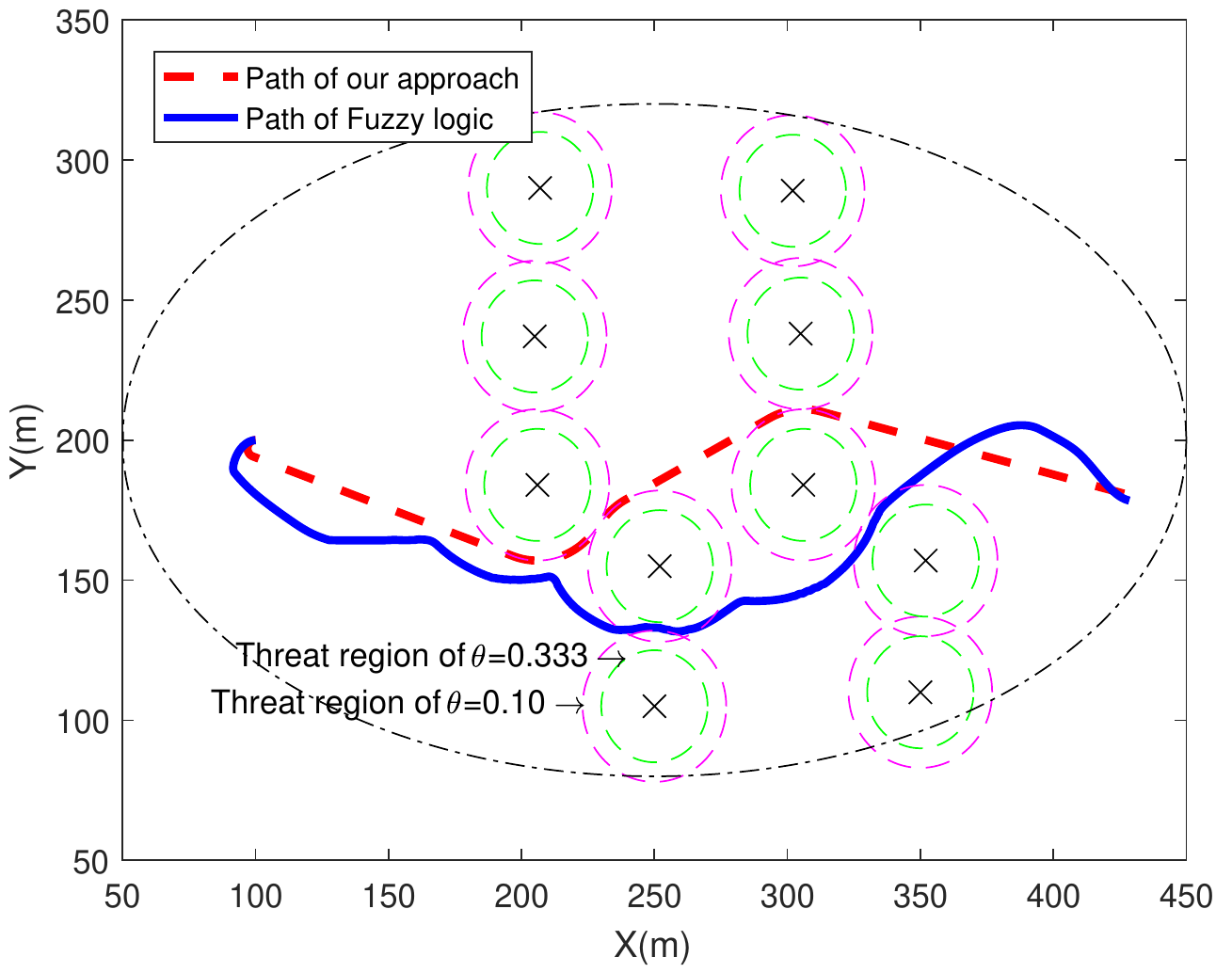}
        \caption{$R(\theta)=20$m; threat level: 0.261; path length: 407.3m.}
        \label{fig16b}
    \end{subfigure}
    \begin{subfigure}{0.45\textwidth}
        \includegraphics[width=\textwidth]{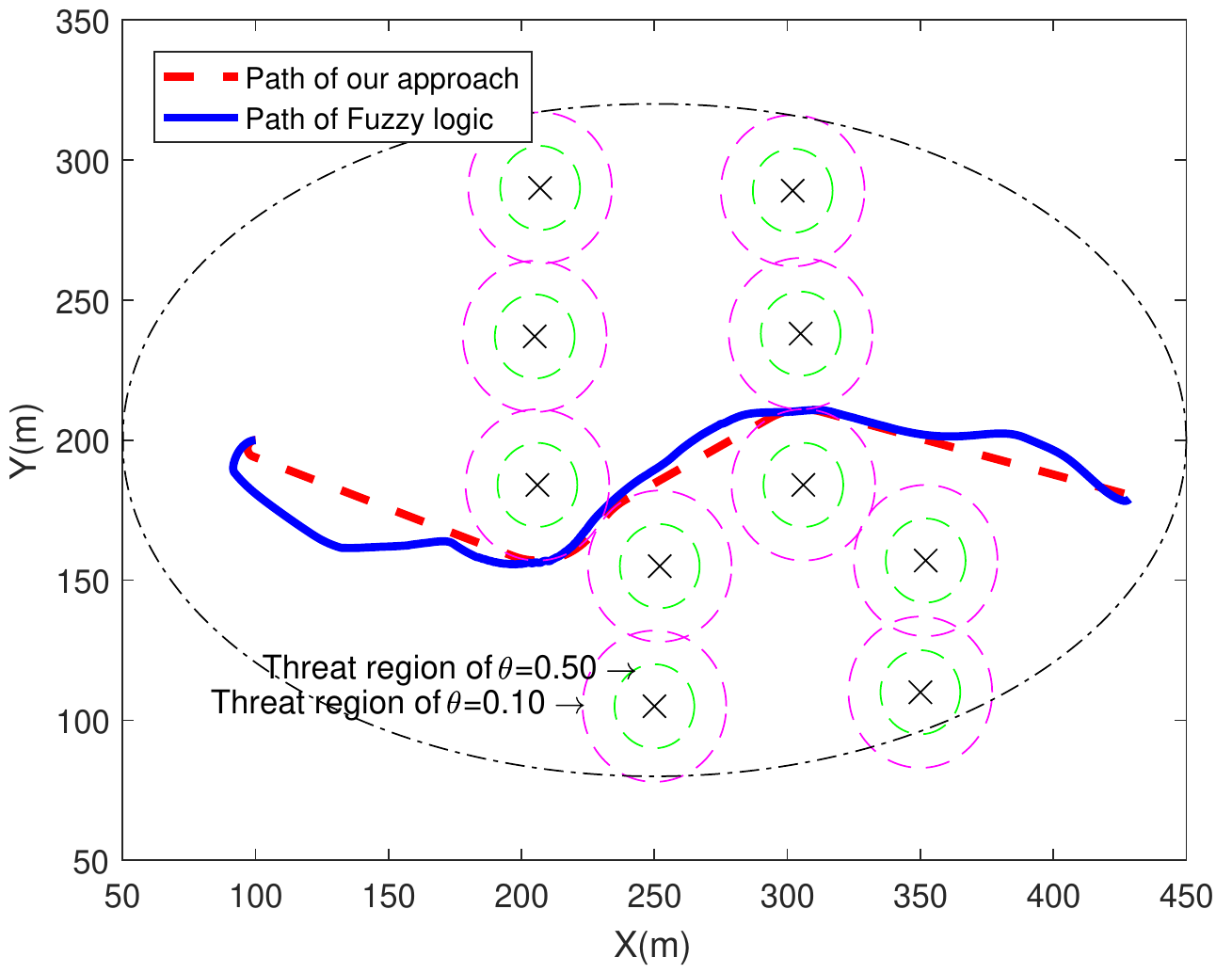}
        \caption{$R(\theta)=15$m; threat level: 0.325; path length: 373.6m.}
        \label{fig16c}
    \end{subfigure}
    \begin{subfigure}{0.45\textwidth}
        \includegraphics[width=\textwidth]{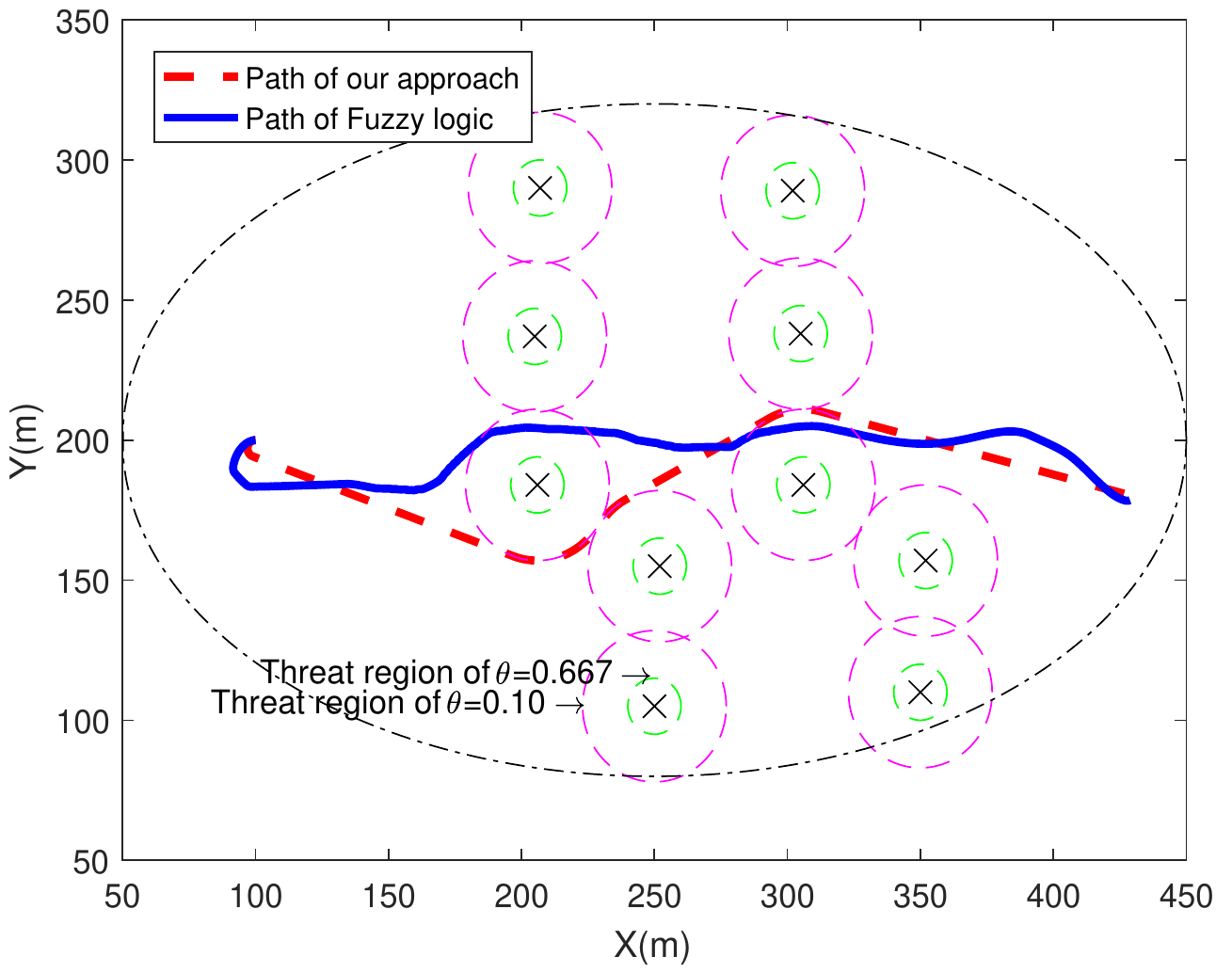}
        \caption{$R(\theta)=10$m; threat level: 0.191; path length: 389.3m.}
        \label{fig16d}
    \end{subfigure}
    \caption{Comparison with the fuzzy logic algorithm on Simulation 5.}\label{fig16}
\end{figure}

This subsection presents the comparison with a fuzzy logic approach. Fuzzy logic has been used in autonomous vehicle navigation to avoid collisions with obstacles, see e.g., \cite{llorca2011fuzzy, al2014fuzzy}. Since we use the Dubins car model to describe the motion of the aircraft, we have only one output of fuzzy logic system, i.e., angular velocity. We take front distance, left-front distance, right-front distance, distance to destination, and angle difference between the aircraft's current heading and the destination angle as inputs. Such system requires the aircraft to measure distances and know its current location.

Before applying the fuzzy logic algorithm to our problem, it is worth mentioning that such fuzzy logic algorithm requires at least one valid path from origin position to destination in the region. If we regard the threat regions with radii $R_i$ as obstacles, the field is blocked. Since the fuzzy logic system is unable to find the best threat level, we simply start from the radius corresponding to the minimum threat level found by our proposed approach, and then test some smaller radii.

We apply the fuzzy logic algorithm to Simulation 4 and 5, and the generated paths are displayed in Fig. \ref{fig15} and \ref{fig16} respectively. For the sake of comparison, the paths by our approach are also displayed. In Fig. \ref{fig15a}, the threat radius is 9.2m, which is consistent with the obtained 0.08 threat level. The fuzzy logic system fails to find a path leading the aircraft from initial position to destination. Further, we test the radii of 8, 7, and 6m, and the generated paths are demonstrated in Fig. \ref{fig15b}, \ref{fig15c} and \ref{fig15d}. We observe from these simulations that the best path generated by the fuzzy logic system is in Fig. \ref{fig15c}, whose threat level is 0.086 and path length is 117.2m. But this is still worse than our path, whose threat level and path length are respectively 0.080 and 112.1m. We execute the same procedure for Simulation 5. The best path generated by the fuzzy logic system is in Fig. \ref{fig16}, whose threat level is 0.191 and path length is 389.3m. Again the path produced by our proposed approach outperforms in threat level (0.100) and path length (365.6m).

Through these simulations we find that the threat level of a path generated by fuzzy logic system is uncontrollable. In contrast, our approach is based on a rigorous mathematical analysis and is guaranteed to find the optimal threat level from a finite set of candidates. Further, the length of the paths produced by our approach is proved to be the shortest under the threat level. The presented simulation results show  that our approach achieves better performance than the fuzzy logic algorithm  in terms of both threat level and path length.

\section{Summary}
\label{conclusion_path}
This chapter studies the problem of planning shortest viable path for unicycle robots with bounded turning radius serving as data collectors in a cluttered sensing field. We define a viable path which combines the concerns of both robotics and sensor networks. In many applications, this term is closer to reality. We formulate the problem of planning the shortest viable path for a single robot as a variant of DTSPN. Accordingly, we develop a Shortest Viable Path Planning (SVPP) algorithm. We further consider the problem of planning viable paths for multiple robots and presented a $k$-Shortest Viable Path Planning ($k$-SVPP) algorithm. We conduct simulations with different network scales, robot speeds and distributed data loads to show the performance of the proposed algorithms. Also comparisons with existing alternatives were provided. We find that SVPP and $k$-SVPP were effective to design viable paths for unicycle robots with bounded angular velocity. Compared to multihop communication, using our algorithms can save around 95\% energy for sensor nodes. Further, both increasing the robot speed and employing multiple robots are able to reduce the data collection time significantly. In this chapter, we consider several practical issues existing in the utilization of mobile robot to collect data and the presented results are meaningful in real applications. 

Moreover, we apply our approaches to the problem of point to point navigation, with the background of safe mission planning. We propose an optimization model to navigate an aircraft or a flying robot to its final destination while minimizing the maximum threat level and the length of the aircraft path. The construction of optimal paths involves a simple geometric procedure and is very computationally efficient. The effectiveness of the proposed method has been demonstrated by illustrative examples and  comparisons with existing work. It should be pointed out that we consider a 2D or planar navigation problem. An important direction of future research will be an extension of the presented planar algorithm to practically important cases of 3D threat environments.
\chapter{Energy Efficient Approach for Data Collection in Wireless Sensor Networks Using a Path Fixed Mobile Sink}\label{cluster_ms}
\minitoc

Chapter \ref{path_planning} focuses on an ideal movement pattern of MSs, i.e., \textit{\textbf{controllable}} mobility. Such model is easy to manage, while it is difficult to apply to realistic applications, although it performs well in theory. The basic reason is that in real applications, the  environment in which the MSs are moving is quite complex. It has not only obstacles as mentioned in Chapter \ref{path_planning}, but also other types of restrictions, for example, the MSs have to move only on roads in urban area. Therefore, it is necessary to study another mobility pattern, i.e., constrained mobility. Chapter \ref{cluster_ms}, \ref{cluster_cs}, \ref{cluster_um} and \ref{drone} all focus on how this kind of mobility can improve the system performance.

\section{Motivation}

As mentioned in Chapter \ref{literature}, one side effect of multiple hop communication to transmit data packets to the static BSs is the funneling effect \cite{li2007analytical}. Recent studies have shown that using MSs to collect data in WSNs can relieve the funneling effect issue \cite{mobilesinksurvey11, Gu16}. A MS traversing the sensing field can collect data from sensor nodes over a short range communication link \cite{zhao2012optimization, mottaghi2015optimizing}, and then the on-board MS transmits the collected data wirelessly to a remote center, since it has no energy limitation. Long-hop relaying is not used at sensor nodes and the energy consumption is reduced. Traversing the sensing field by MS needs to be timely and efficient because failure to visit some parts of the field leads to data loss, and infrequently visiting some areas results in long delivery delay. Besides, the trajectory planning of MS in these cases become more difficult to cope with. Furthermore, in the urban areas, the planned trajectory sometimes cannot be realized since the MS is constrained to roads. Alternatively, amounting MS on a vehicle, such as a bus, avoids some difficulties and can provide better performance for data collection. First, since the bus is already a component of the environment and its trajectory is predefined, the difficult path planning and complex control of MS's movement are avoided. Second, instead of visiting each sensor node individually, which is a time consuming task due to the low physical speed of MS, combining multihop communication with path constrained MS is able to increase the data delivery delay.

This chapter investigates using a MS, which is attached to a bus, to collect data in WSNs with nonuniform node distribution. Such WSNs exist in many applications. For example, in the case of monitoring the air pollution of a city, the industrial areas are usually deployed with more sensors than the residential areas. Also, since the areas of interest may be isolated from each other, using conventional data collection approaches is not appropriate due to the limited budget of energy resource. In this case, exploiting a MS amounted on a bus is able to relieve the bottleneck of energy at sensor nodes. Because the MS can serve the isolated areas at different time. It is like that there is a virtual static sink for each area and such sink only works at specified time duration. The specified time duration is the duration during which the MS is in the area. Instead of the coverage problem studied in publications \cite{cheng2013decentralized, savkin2012optimal, savkin2015book}, the focus here is on routing the sensory data from source nodes to MS in an energy efficient way such that the energy expenditure is balanced across the entire network.

The main contributions of this chapter are a clustering algorithm and a routing algorithm. The core of the clustering algorithm lies in the selection of cluster heads (CHs). With the aim of balancing energy consumption, we design unequal cover ranges for CHs considering the feature of nonuniform node distribution. The cover range of a CH depends on its distance to MS and the local node density. Unlike other works, the distance here is the hop distance instead of Euclidean distance. Removing the ability of measuring Euclidean distance simplifies the sensor nodes. The unequal cover ranges can make the clusters with similar distances to MS have approximate sizes such that the energy consumption by CHs can be balanced. Since the designed cover range does not exceed the single hop communication range, the cluster members (CMs) consume energy approximately also. The proposed routing algorithm associates each CH to a CH that is closer to MS' trajectory. The CH $u$ will associate to a CH $v$ only if $v$'s hop distance is no larger than $u$'s. Further, the association accounts the residual energy of CH $v$ and the number of attached CMs. We compare our approach with some existing ones through simulations and we conclude that our approach achieves longer network lifetime \footnote{There are several definitions of network lifetime in the literature. Here we adopt the definition of network lifetime as the number of rounds until the first node exhausts its energy reserve, which has been widely used.}.

The remainder of this chapter is organized as follows: Section \ref{model4} presents the network model. Section \ref{protocol4} discusses the proposed protocol in details, which is followed by some theoretical analysis. Section \ref{simulation4} provides extensive simulations to evaluate the proposed approach. Finally, Section \ref{conclusion4} summarizes this chapter. The publications related to this chapter include \cite{huang2017energy}, \cite{huang2016optimal}, \cite{huang2017vtc}.
\section{Network Model}\label{model4}
Consider a wireless sensor network consisting of $n$ static sensor nodes nonuniformly deployed in the field. A bus carrying a MS moves the predefined trajectory following its timetable. We consider the following assumptions.

\begin{enumerate}
\item The nodes as well as MS have unique IDs. 
\item All the nodes use power control to adjust the transmitting power.
\item If a node works as CH, it aggregates the received data packets within cluster into one packet; while it does not aggregate the data packets from other CHs. Further, the raw data packets and the aggregated packets have the same size.
\end{enumerate}

We consider the energy dissipation model used in previous work, e.g., \cite{CHEN09, WEI11, xiang10,  yu12, li2013coca, Shokouhifar15}:

\begin{equation}\label{transmitting_cluster}
E_t(l,d)=
\begin{cases}
l\times E_{elec}+l\times E_{fs}\times d^2,\ if\ d\leq d_0\\
l\times E_{elec}+l\times E_{mp}\times d^4,\ if\ d> d_0\\
\end{cases}
\end{equation}
where $E_t(l,d)$ is the total energy dissipated to deliver a single $l$-bit packet from a transmitter to its receiver over a single link of distance $d$. The electronic energy $E_{elec}$ depends on  electronic factors such as digital coding, modulation, filtering, and spreading of the signal. The amplifier energy in free space $E_{fs}$ or in multipath environment $E_{mp}$ depends on the distance from the transmitter to the receiver, and the threshold is $d_0$. 

For receiving data packets, the sensor nodes expand energy according to:
\begin{equation}\label{receiving}
E_r(l)=l\times E_{elec}.
\end{equation}
Note, data packet transmission and control message exchanging both follow models (\ref{transmitting_cluster}) and (\ref{receiving}).

Additionally, we assume that the energy consumption for sensing and data aggregation are, respectively,
\begin{equation}
E_s(l)=l\times E_{sens}
\end{equation}
and
\begin{equation}
E_a(l)=l\times E_{aggr}
\end{equation}
where $E_{sens}$ depends on electronic factors and $E_{aggr}$ relates to the aggregation algorithm.

\section{Routing Protocol}\label{protocol4}

The proposed protocol scheme contains two stages: initial and collecting stages. Figure \ref{fig:diagram} illustrates the protocol operation by the time line. Basically, the initial stage aims at making every node aware of the information required to operate the following procedures. The collecting stage consists of a number of data collection cycles. At the beginning of each cycle, the network constructs cluster formation. In this phase, the sensor nodes transmit control message to their neighbour nodes and build up network structure in a distributed manner. Then a certain number of data collection rounds are operated. The control messages used here are described in Table \ref{table:message}.

\begin{figure*}[t]
\begin{center}
{\includegraphics[width=0.95\textwidth]{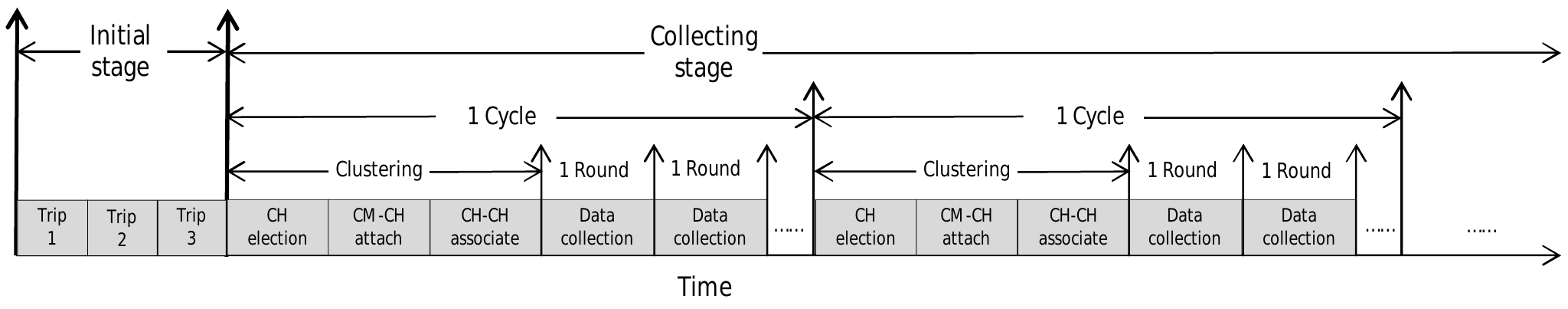}}
\caption{The operation of the proposed protocol by time line.}
\label{fig:diagram}
\end{center}
\end{figure*}

\begin{table}[t]
\begin{center}
\caption{Description of control messages}\label{table:message} 
  \begin{tabular}{ |l|  l|}
    \hline
    Message & Description \\ \hline
    Initial-Msg & hop, ID  \\\hline
    Return-Msg & hop \\\hline
    Hop-Msg & $h_{max}$, $h_{min}$ \\\hline
    CH-Compete-Msg & Energy, ID \\\hline
    CH-Win-Msg& ID \\  \hline
    CH-Quit-Msg &ID  \\\hline
    CM-CH-Offer-Msg &  ID  \\\hline
    CM-CH-Request-Msg &  ID  \\\hline
    CM-CH-Confirm-Msg &  ID  \\\hline
    CH-CH-Offer-Msg &  Energy, CM size, hop, ID \\\hline
    CH-CH-Confirm-Msg &  ID  \\\hline
    Data-Request-Msg &  ID \\\hline
    \end{tabular}
\end{center}
\end{table}

\subsection{Initial Stage}
The initial stage requires MS to make three trips on its path. The purpose here is to get hop distance to MS' trajectory as well as local node density for each node. Such information plays a significant role in the collecting stage.

Trip 1. When MS moves, it continuously broadcasts Initial-Msg containing MS ID and hop distance (the hop distance equals to 0). A node, which is within the communication range of MS, receives the packet and executes the following procedures to extract some information from the message and modify it: 1) extracting the ID in the packet as its parent node ID; 2) replacing the ID with its own ID; 3) increasing the hop distance by 1; 4) extracting the hop distance as its hop distance to MS. Then it broadcasts the modified message to the nodes within its communication range. Note, it is possible for one node to receive more than one message. If the hop distances are different, it selects the smallest one, see Figure \ref{fig:hop} for an example; otherwise, it selects the most early received one.  At the same time, it keeps the number of the received messages as its neighbour count. In the end of this trip, every node $i$ knows its hop distance ($h_i$) to MS, the parent node, as well as the neighbour count ($n_i$).

\begin{figure}[t]
    \centering
    \begin{subfigure}[t]{0.45\textwidth}
        \includegraphics[width=\textwidth]{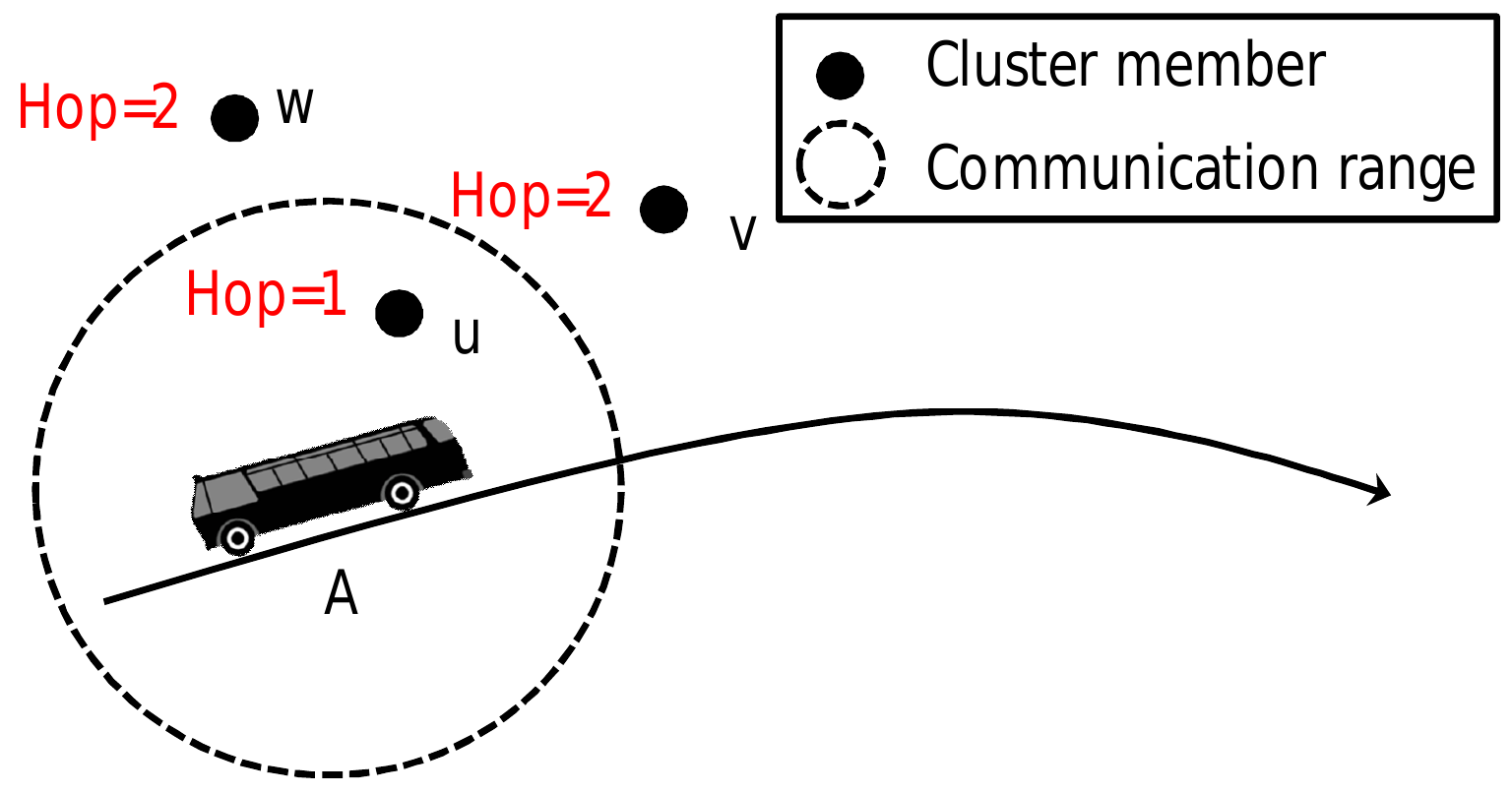}
        \caption{}
        \label{fig:hop1}
    \end{subfigure}
    \begin{subfigure}[t]{0.42\textwidth}
        \includegraphics[width=\textwidth]{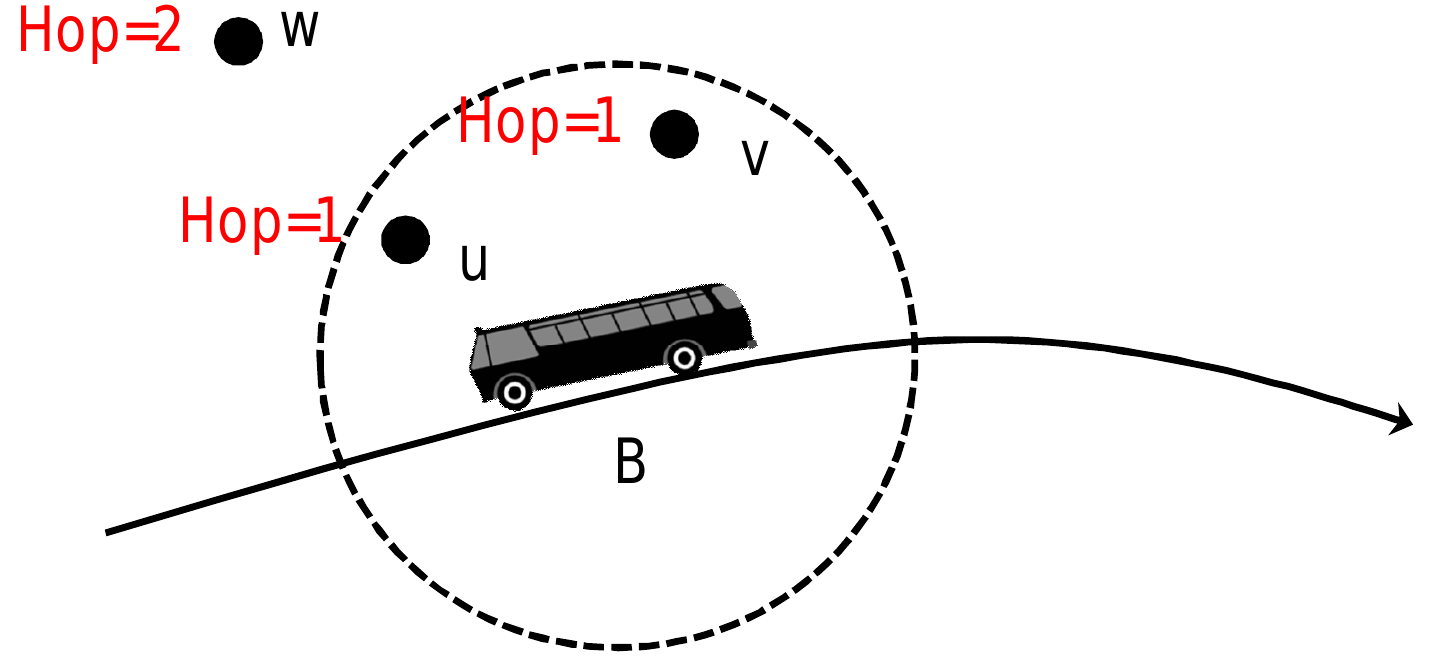}
        \caption{}
        \label{fig:hop2}
    \end{subfigure}
    \caption{Hop distance construction. When MS is at A, $u$ is within range. Hop distance of $u$ is 1. Since $v$ and $w$ are within range of $u$, their hop distances are 2. When MS is at B, $v$ is within range of MS, then the hop distance is changed to 1. $w$ never comes into range of MS but it is within range of $u$, thus its hop distance keeps 2.}\label{fig:hop}
\end{figure} 

Trip 2. Every sensor node transmits Return-Msg containing its hop distance to MS to its parent node. On this trip, the MS keeps receiving packets from the nodes nearby. At the end of Trip 2, MS knows the maximum and minimum hop distance ($h_{max}$ and $h_{min}$) from sensor nodes to itself.
 
Trip 3. MS keeps broadcasting Hop-Msg containing $h_{max}$ and $h_{min}$. The nodes receive such message extract $h_{max}$ and $h_{min}$ and forward the packet to other nodes within range. Finally, every node in the network knows $h_{max}$ and $h_{min}$.

\subsection{Collecting Stage}\label{steady_stage}
The collecting stage is the main stage of our protocol. It consists of a number of data collection cycles. At the beginning of each cycle, the sensor nodes re-organize themselves by constructing new clusters. After that, CMs send the sensory data to CHs and CHs send the aggregated data to MS directly or to another CH for relay. When the bus finishes its trip, we say one round of data collection is completed. Thus, each cycle consists of clustering and $k$ rounds of data collection. Obviously, the clustering result plays a significant role in the following data collection since it impacts on the energy consumption of both CHs and CMs. The parameter $k$ also influences the energy consumption. Small $k$ means that the network needs to reconstruct the clusters frequently, resulting in the large amount of control messages to exchange. On the other hand, large $k$ makes the network operate under one cluster structure for a long time, which may lead to the phenomenon that some CHs cannot survive the current cycle. Below, we describe the main phases in collecting stage in details.

\subsubsection{CH election}
For CH election, we extend the method in \cite{CHEN09}. In \cite{CHEN09}, the authors consider the scenario that the sensor nodes are uniformly deployed and design each CH's cover range based on its Euclidean distance to the static base station. In contrast, we consider the nonuniform deployment of sensor nodes, and we do not use Euclidean distance information since such information may not be reliable in harsh environment and measuring such information is costly. Instead, we use hop distance to perform the cover range calculation. The key equation to computer the cover range of each CH is shown below:
\begin{equation}\label{cover_range}
R_i=(1-\alpha\frac{1}{\sqrt{\rho_i}}\frac{h_{max}-h_i}{h_{max}-h_{min}})R_0
\end{equation}
where $R_i$, $\rho_i$ and $h_i$ are respectively the cover range, local node density and hop distance to MS of CH $i$, $\alpha$ is a given positive constant, and $R_0$ is the communication range of the sensor nodes. Note, $\rho_i$ is the relative node density, which is estimated based on the number of neighbour nodes, i.e., $\rho_i=n_i/\pi R_0^2/\bar{\rho}$. Here $\bar{\rho}$ is the average node density of the sensor network and can be obtained in the node placement phase. 

The fundamental idea to design the cover range like (\ref{cover_range}) is as follow. To make two CHs with the same hop to MS have approximately equal number of CMs, we try to make $\pi R_i^2 \rho_i=\pi R_j^2 \rho_j$ hold, from which we obtain $R_i \propto1/\sqrt{\rho_i}$, i.e., the CH in the dense area has a smaller cover range while the CH in the sparse area has a larger cover range, see Figure \ref{fig:idea} for an example. Besides, consistent with \cite{CHEN09}, the larger the hop distance to MS is, the larger the cover range will be. Thus, $R_i \propto h_i$. Eq. (\ref{cover_range}) works for both scenarios of uniform and nonuniform node distributions. For uniform node distribution, node densities are approximately equal across the entire network, thus $1/\sqrt{\rho_i}$ influences $R_i$ little.

\begin{figure}[t]
    \centering
    \begin{subfigure}[h]{0.45\textwidth}
        \includegraphics[width=\textwidth]{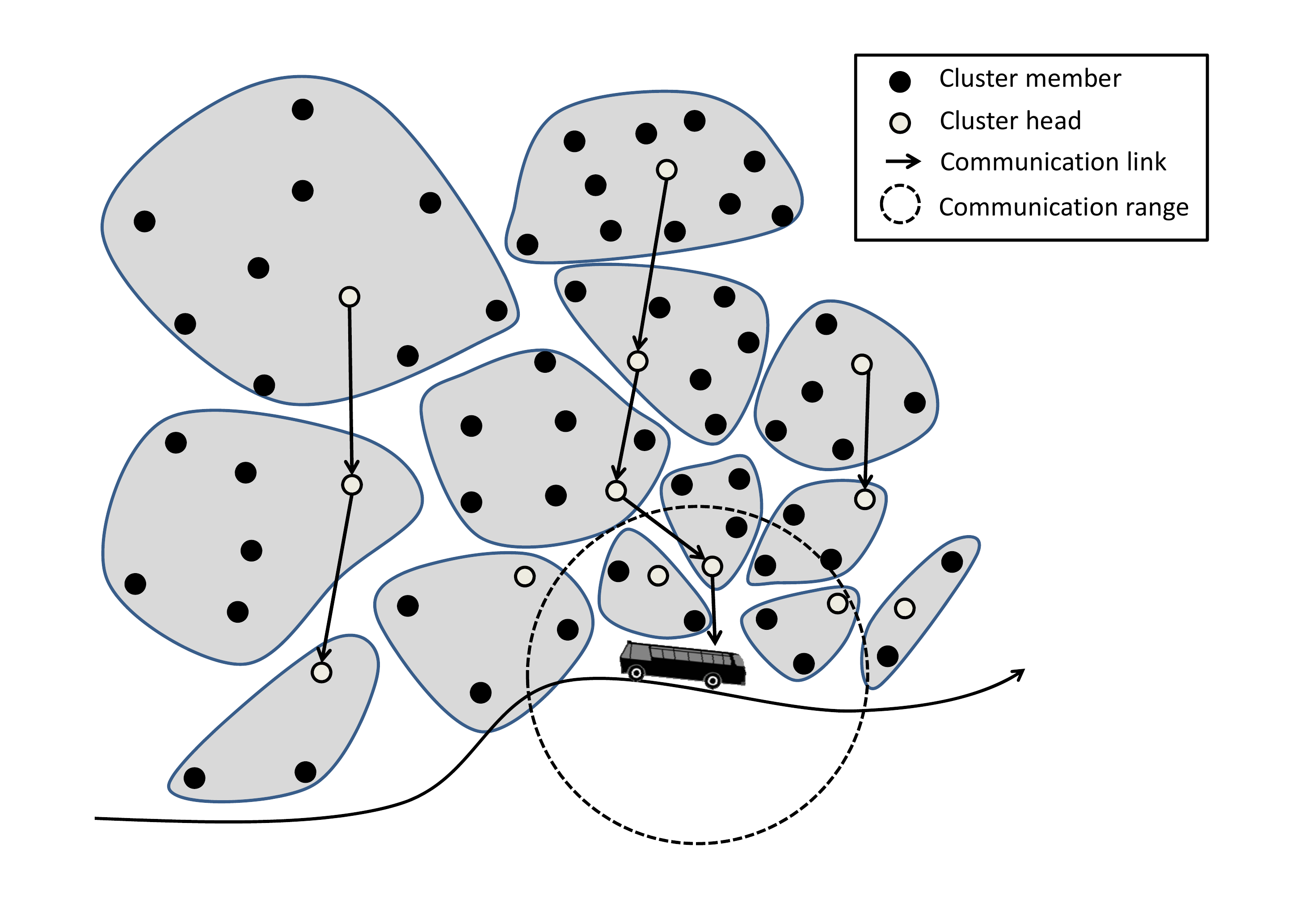}
        \caption{}
        \label{fig:nonuniform_idea}
    \end{subfigure}
    \begin{subfigure}[h]{0.45\textwidth}
        \includegraphics[width=\textwidth]{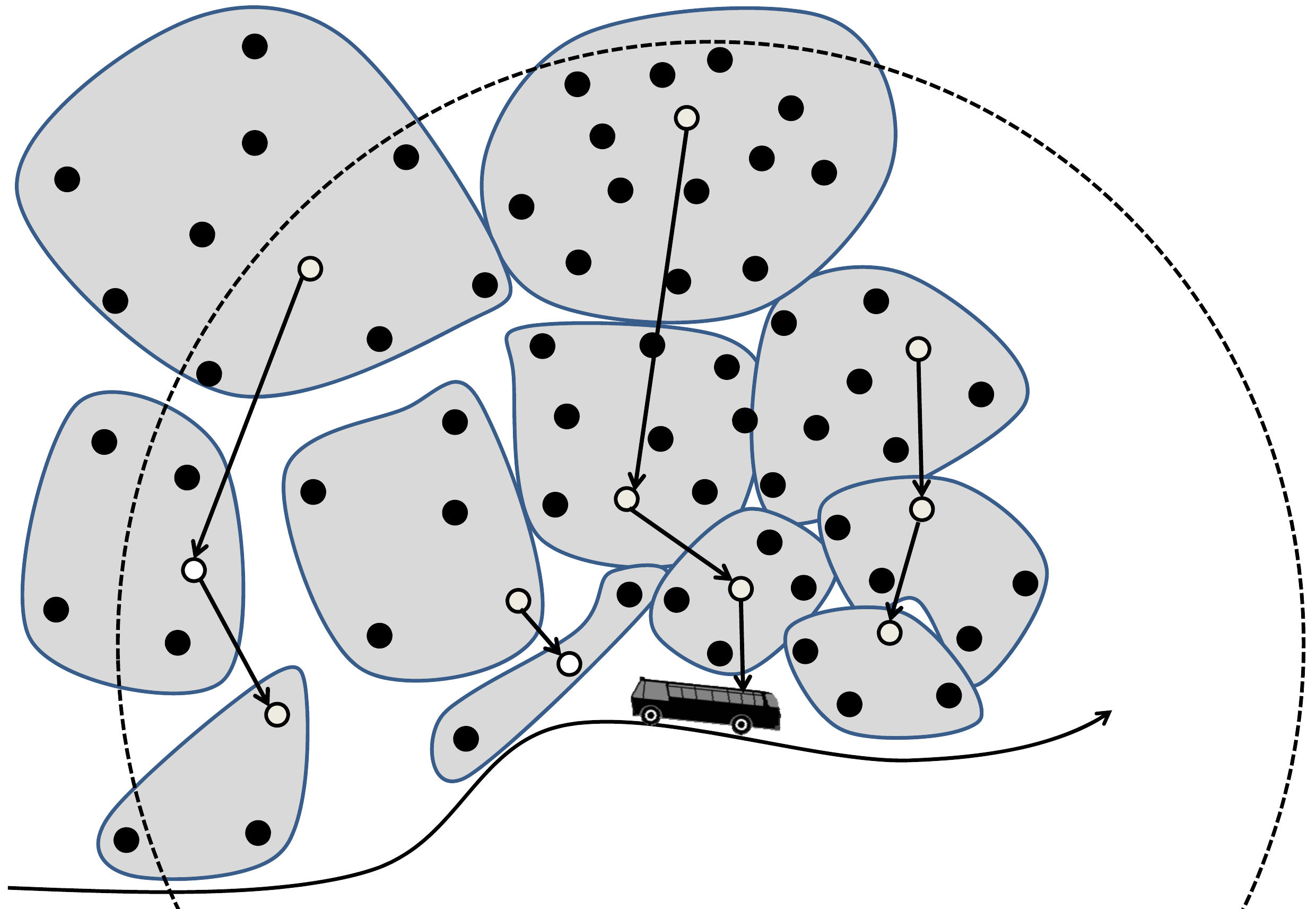}
        \caption{}
        \label{fig:uniform_idea}
    \end{subfigure}
    \caption{Cluster formation for a WSN with nonuniform node distribution. (a) Considering node distribution; (b) Not considering node distribution. When node distribution is not considered, CHs having similar distance to MS' trajectory have similar cover range. Thus, the CHs in dense areas have more CMs than those in sparse areas.}\label{fig:idea}
\end{figure} 

With this cover range in place, the network is in the phase of CH election. A set of CH-candidates is first elected. A node $i$ randomly generates $\mu \in (0,1)$ \cite{heinzelman2000energy}. Let $e_i$ be node $i$'s residual energy. If $\mu e_i >\beta$ ($\beta$ is the given threshold), node $i$ becomes a CH-candidate; otherwise, it becomes a CM. The CH-candidates turn into a CH election phase. In the CH election phase, every CH-candidate broadcasts CH-Compete-Msg containing its residual energy and ID within its cover range. The other CH-candidates within such cover range receive the message and compare the contained residual energy with their own. The competition rules are as follows. The CH-candidates with the highest residual energy within their cover range become CHs. The CHs broadcast CH-Win-Msg to inform their status. The CH-candidates that receive CH-Win-Msg quit the CH competition process and broadcast CH-Quit-Msg, because receiving CH-Win-Msg means they are covered by at least one CH's cover range. A CH-candidate that has not received CH-Win-Msg listens for CH-Quit-Msg. Once received CH-Quit-Msg, it ignores the corresponding node and check whether its residual energy is the largest in the left set. This process lasts until timer $T_{election}$ expires. The CH election algorithm is given as Algorithm \ref{algorithm1}.

\begin{algorithm}[h]
\caption{CH election}\label{algorithm1}
\begin{algorithmic}[1]
\State $\mu \longleftarrow rand(0,1)$
\If{$\mu e_i>\beta$}
\State Broadcast CH-Compete-Msg within $R_{i}$
\State Listen for CH-Compete-Msg
\State Find the message which contains the largest residual energy $e_j$
\While{timer $T_{election}$ has not expired}
\If{$e_j<e_i$}
\State Broadcast CH-Win-Msg within $R_{i}$; exit
\EndIf
\State Listen for CH-Win-Msg
\If{CH-Win-Msg is received}
\State Broadcast CH-Quit-Msg within $R_{i}$; exit
\EndIf
\State Listen for CH-Quit-Msg
\If{CH-Quit-Msg is received}
\State Get rid of the message corresponding to the ID in CH-Quit-Msg
\State From the left messages, find the one which contains the largest residual energy $e_j$
\EndIf
\EndWhile
\EndIf
\end{algorithmic}
\end{algorithm}
\subsubsection{CM-CH attachment}
After CH election, every CM needs to attach to a CH. The CM-CH attachment algorithm is shown as Algorithm \ref{algorithm2}. In this phase, CH and CM send message or hear alternatively. At the beginning, CH broadcasts a CM-CH-Offer-Msg within its cover range. Once a CM hears a CM-CH-Offer-Msg, it sends a CM-CH-Request-Msg back to the CH. Then, CH sends CM-CH-Confirm-Msg to the CMs from which it hears the CM-CH-Request-Msg. If a CM cannot hear CM-CH-Offer-Msg from any CHs, it broadcasts CM-CH-Request-Msg until it joins a CH as shown by Line 11-14 and the initial transmitting range is its own cover range. In the end of this phase, any CH (CM) knows its associated CMs (CH). We introduce several timers to support the implementation of CM-CH attachment. CM listens for CM-CH-Offer-Msg until timer $T_{offer}$ and CH listens for CM-CH-Request-Msg until timer $T_{asso}$. After timer $T_{offer}$, no matter a CM receives CM-CH-Offer-Msg or not, it starts to send CM-CH-Request-Msg and listen for CM-CH-Confirm-Msg until timer $T_{confirm}$. We display the operations of CH and CM in this phase in Figure \ref{fig:timer}. 

For intra-cluster data collection, TDMA schedule is used. The CH sets up a TDMA schedule based on the number of its CMs and transmits it back to its CMs. After the TDMA schedule is known by all CMs in the cluster, the CM-CH attachment phase completes.

\begin{algorithm}[t]
\caption{CM-CH attachment}\label{algorithm2}
\begin{algorithmic}[1]
\If{node $i$ is a CH}
\State Broadcast CM-CH-Offer-Msg within $R_{i}$.
\State Listen for CM-CH-Request-Msg until timer $T_{asso}$ expires
\State Send CM-CH-Confirm-Msg to the CMs
\Else
\State Listen for CM-CH-Offer-Msg until timer $T_{offer}$ expires.
\If{CM-CH-Offer-Msg is received}
\State Send CM-CH-Request-Msg to the CH
\State Listen for CM-CH-Confirm-Msg until timer $T_{confirm}$ expires
\Else
\While{CM-CH-Confirm-Msg not received}
\State Increase transmitting range and broadcast CM-CH-Request-Msg
\State Listen for CM-CH-Confirm-Msg until timer $T_{confirm}$ expires
\EndWhile
\EndIf 
\EndIf
\end{algorithmic}
\end{algorithm}

\begin{figure}[t]
\begin{center}
{\includegraphics[width=0.6\textwidth]{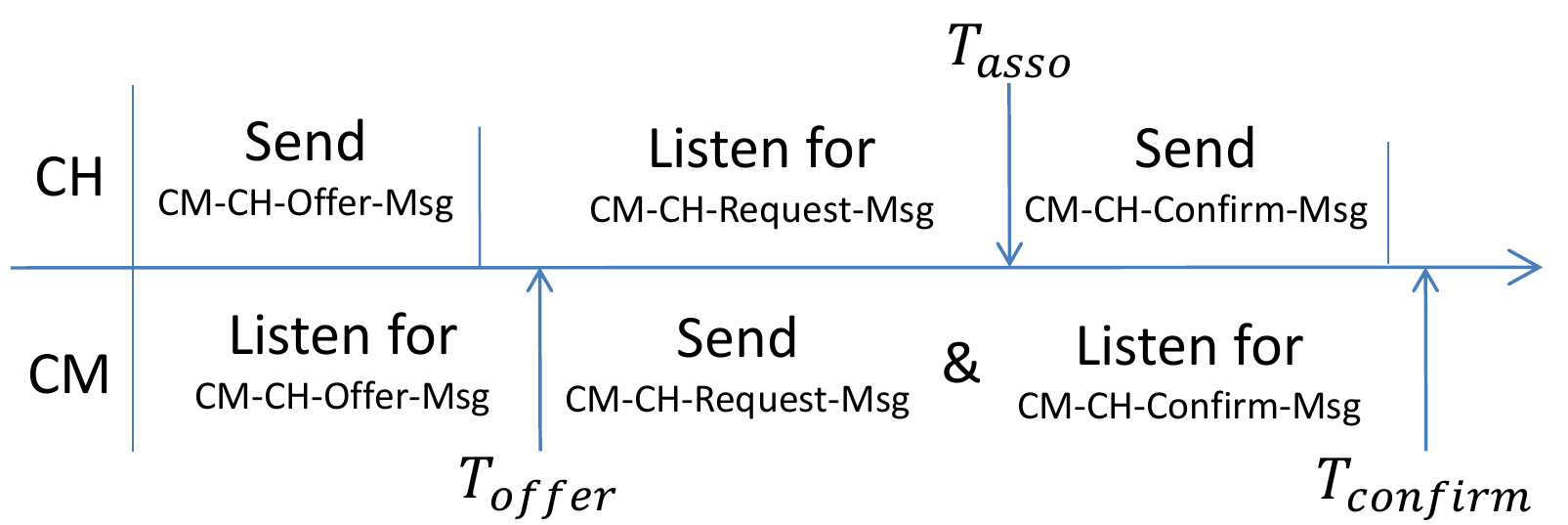}}
\caption{The operations of CH and CMs in CM-CH attachment.}
\label{fig:timer}
\end{center}
\end{figure}
\subsubsection{CH-CH association}
The next phase of the protocol is CH-CH association. The objective of CH-CH association is to find a route for each CH such that it can transmit its data to MS. Each CH broadcasts CH-CH-Offer-Msg containing its residual energy, the number of CMs attached to it (henceforce called CM size), ID and hop $l$ within a range of $\gamma R_0$ ($\gamma \geq 1$ is a parameter to adjust the transmitting range). The CH with higher hop, e.g., $l+1$, and hearing such message selects the one with the largest $\eta$ ($\eta$=residual energy/CM size) and replies a CH-CH-Confirm-Msg to the sender. The consideration behind this is that since the nodes are nonuniformly deployed, exactly the same CM sizes of two CHs cannot be guaranteed. Thus, the CH having small CM size is able to afford more data relaying tasks. Note the CH having lower hop, e.g., $l-1$, may also hear the CH-CH-Offer-Msg from the senders with hop $l$, but it will simply ignore the message. The CH-CH association algorithm is shown as Algorithm \ref{algorithm3}. Note, the CHs with hop 1 do not accept CH-CH-Offer-Msg as those with hop $l$, $2\leq l \leq h_{max}$, so Step 3-13 of Algorithm \ref{algorithm3} are skipped. 

After all the preparing procedures, the final task is to send data packets to CHs and delivery the buffered data from the CHs near MS's trajectory to MS. Different from delivering data to a static sink, the connection of a CH with MS varies since MS is moving. To assist the CHs to start and stop transmitting data packets to MS, MS periodically broadcasts Data-Request-Msg. The CH receiving Data-Request-Msg transmits data packets to MS; otherwise, it will not transmit data packets.

\begin{algorithm}[t]
\caption{CH-CH association, executed by CH $i$}\label{algorithm3}
\begin{algorithmic}[1]
\State $\eta \leftarrow$ 0, ID$\leftarrow$ 0
\State Broadcast CH-CH-Offer-Msg within $\gamma R_0$
\If{$h_i>1$}
\State Listen for CH-CH-Offer-Msg until $T_{offer}$ expires
\For{each CH-CH-Offer-Msg received from CH $j$}
\If{$\eta<$ $j$'s residual energy/CM size}
\State $\eta \leftarrow$ $j$'s residual energy/CM size
\State ID$\leftarrow$ $j$'s ID
\EndIf
\EndFor
\State Send CH-CH-Confirm-Msg to the CH with ID 
\State Listen for CH-CH-Confirm-Msg
\EndIf
\end{algorithmic}
\end{algorithm}
\section{Protocol Analysis}\label{analysis}
In this section, we present the analysis of the proposed protocol. Since the clusters and routing paths are constructed based on control message exchange, we first discuss the message complexity.

We consider the message complexity in terms of sensor nodes instead of MS. In the initial stage, all the sensor nodes forward Initial-Msg, transmit Return-Msg and forward Hop-Msg respectively. The messages add up to: $n+n+n$, i.e., the message complexity for the initial stage is $O(n)$. In the collecting stage, the worst case for CH election is that all $n$ sensor nodes are CH-candidates. In such case, each node broadcasts a CH-Compete-Msg. Suppose $m$ CHs are elected, then these CHs broadcast CH-Win-Msg; while the others broadcast CH-Quit-Msg. In the phase of CM-CH-Attachment, $m$ CM-CH-Offer-Msgs are transmitted first and then the other $n-m$ CMs send CM-CH-Request-Msgs. Third, $n-m$ CM-CH-Confirm-Msgs are returned. In the phase of CH-CH-Association, $m$ CHs broadcast CH-CH-Offer-Msg. Since the CHs with hop 1 can directly communicate with MS, they do not need to accept any CH-CH-Offer-Msg. Then, at most $m$ CH-CH-Confirm-Msgs are transmitted back. Therefore, in the worst case the messages add up to:
\begin{equation}
\underbrace{n+m+(n-m)}_\text{CH election}+\underbrace{m+2(n-m)}_\text{CM-CH}+\underbrace{2m}_\text{CH-CH}=4n+m
\end{equation}
i.e., the message complexity is $O(n)$. 

Now we consider the distribution of CHs. Due to Algorithm \ref{algorithm1}, every CH-candidate broadcasts a CH-Compete-Msg within its cover range. Any other CH-candidates within such range can receive this message. The one with the highest residual energy wins the competition and it broadcasts CH-Win-Msg. The CH-candidates that receive such message quit the competition by broadcasting CH-Quit-Msg. As mentioned in Section \ref{steady_stage}, in the competition phase, only the CH-candidates that have the highest residual energy within their own cover ranges can directly decide to be CHs; while all the other CH-candidates whose residual energies are not the largest have to wait for either CH-Win-Msg or CH-Quit-Msg from their neighbour CH-candidates, see Figure \ref{fig:competition} for an example. When CH-candidates $u$, $v$, $w$ first exchange CH-Compete-Msg, only $v$ can decide to be CH; while both $u$ and $w$ need to wait for further messages. Then, $v$ broadcasts CH-Win-Msg and $u$ receives such message and broadcasts CH-Quit-Msg. Finally, $w$ receives CH-Quit-Msg. Since $w$ has only neighbour $u$, it decides to be CH. In the end of this competition, within the cover ranges of $v$ and $w$, there is no other CHs. To make the conclusion generally: after CH competition process, it is impossible that two CHs are within each other's cover range.

\begin{figure}[t]
\begin{center}
{\includegraphics[width=0.5\textwidth]{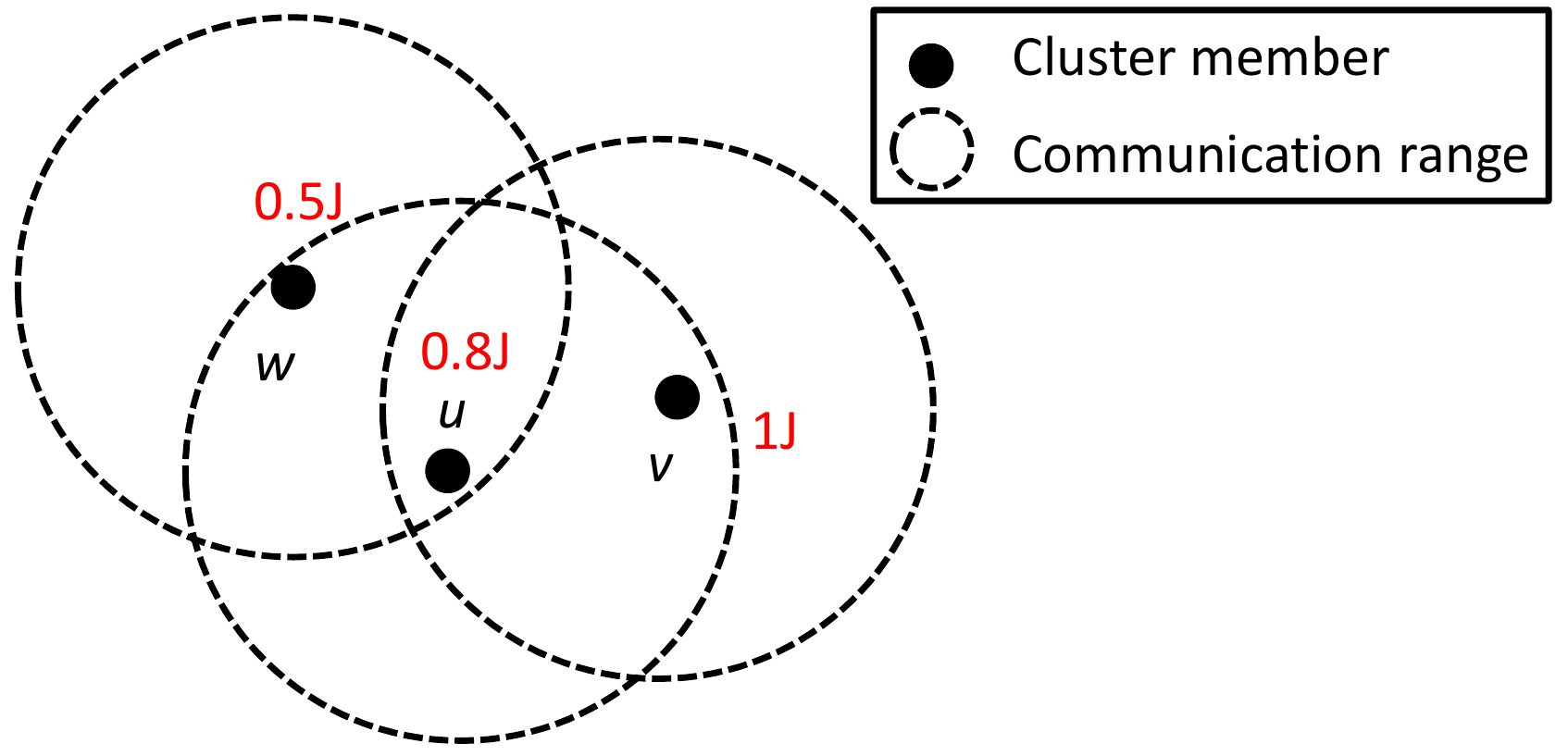}}
\caption{CH competition process. $v$ and $w$ win the competition and become CHs.}
\label{fig:competition}
\end{center}
\end{figure}

Another feature of the proposed protocol is that every sensor node is covered by exactly one CH. This feature can be obtained from the phase of CM-CH attachment. On the one hand, if a sensor node is a CH, it is covered by itself. On the other hand, if it is a CM, it listens for CM-CH-Offer-Msg and then joins the CH which sends the message. Since the CHs are elected based on a random manner, there are cases where one or several CMs cannot hear CM-CH-Offer-Msg within $T_{offer}$. In such cases, the CM proactively sends CM-CH-Request-Msg until it hears CM-CH-Confirm-Msg, and then it joins the CH. In this way, the CHs are able to cover all the CMs.

Finally, we discuss the implementation of our protocol. Since the hop distance from a node to MS' trajectory influences the cover range, our protocol requires all the nodes have the correct hop distances. The period for broadcasting Initial-Msg impacts on the hop distance construction. If a node that should receive Initial-Msg is not covered by two successful broadcasting, it may get a wrong hop distance. An example is demonstrated in Figure \ref{fig:broadcast}. We notice that this issue relates to the localizations of sensor nodes, the broadcasting frequency as well as MS moving speed. Since many applications involve randomly deployment and MS is not energy constrained in our scenario, to address this issue, we assume MS broadcasts in a high frequency such that all the nodes should hear Initial-Msg correctly, instead of making other assumptions.

\begin{figure}[t]
\begin{center}
{\includegraphics[width=0.45\textwidth]{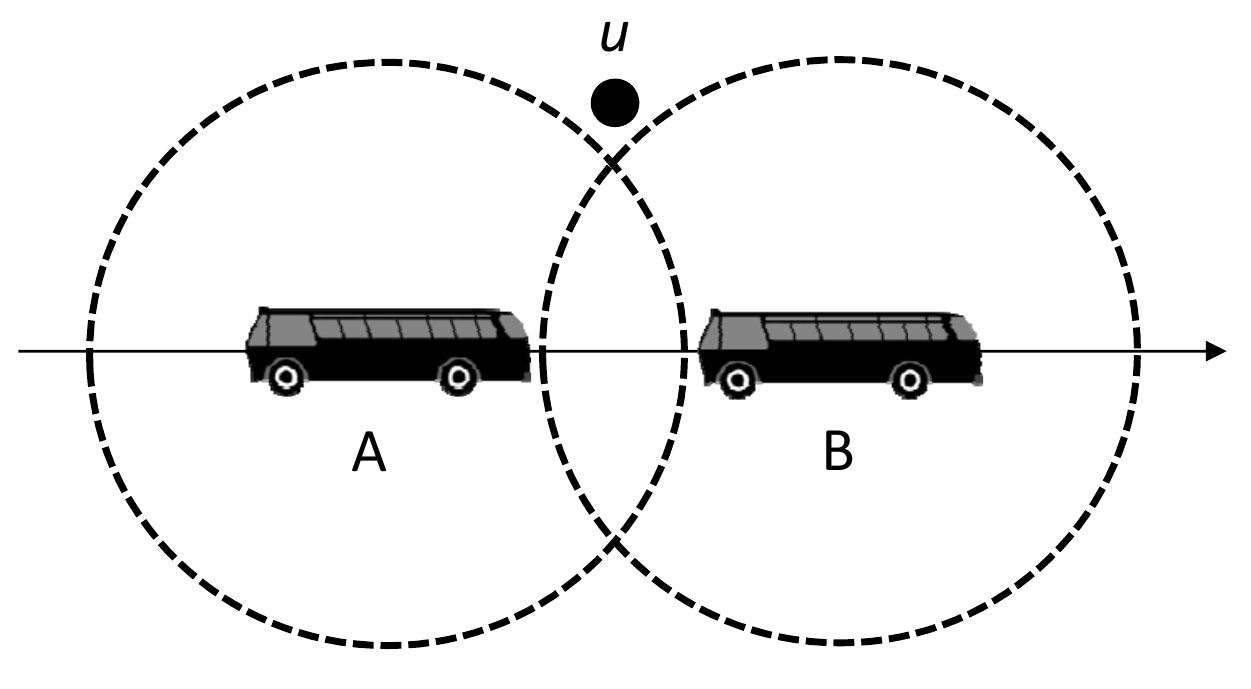}}
\caption{MS broadcasts Initial-Msg when it is at A and it broadcasts the next Initial-Msg when it arrives at B. Thus, node $u$ fails to hear such message.}
\label{fig:broadcast}
\end{center}
\end{figure}
\section{Simulation Results}\label{simulation4}
We evaluate the performance of our protocol by simulations under two scenarios. Scenario 1: uniform node distribution shown as Figure \ref{fig:uniform} and Scenario 2: nonuniform node distribution shown as Figure \ref{fig:nonuniform}. In Scenario 1, the nodes have approximate local density. In contrast, in Scenario 2, the nodes on the left part have larger densities than the right part. We simulate the proposed protocol using MATLAB with the networks shown in Figure \ref{fig:scenario} and the below mentioned parameters. Since the CH election is in a random manner, we execute simulation for each set of parameters independently 100 times. The results shown in this section are the average results.

\begin{figure}[t]
    \centering
    \begin{subfigure}[t]{0.4\textwidth}
        \includegraphics[width=\textwidth]{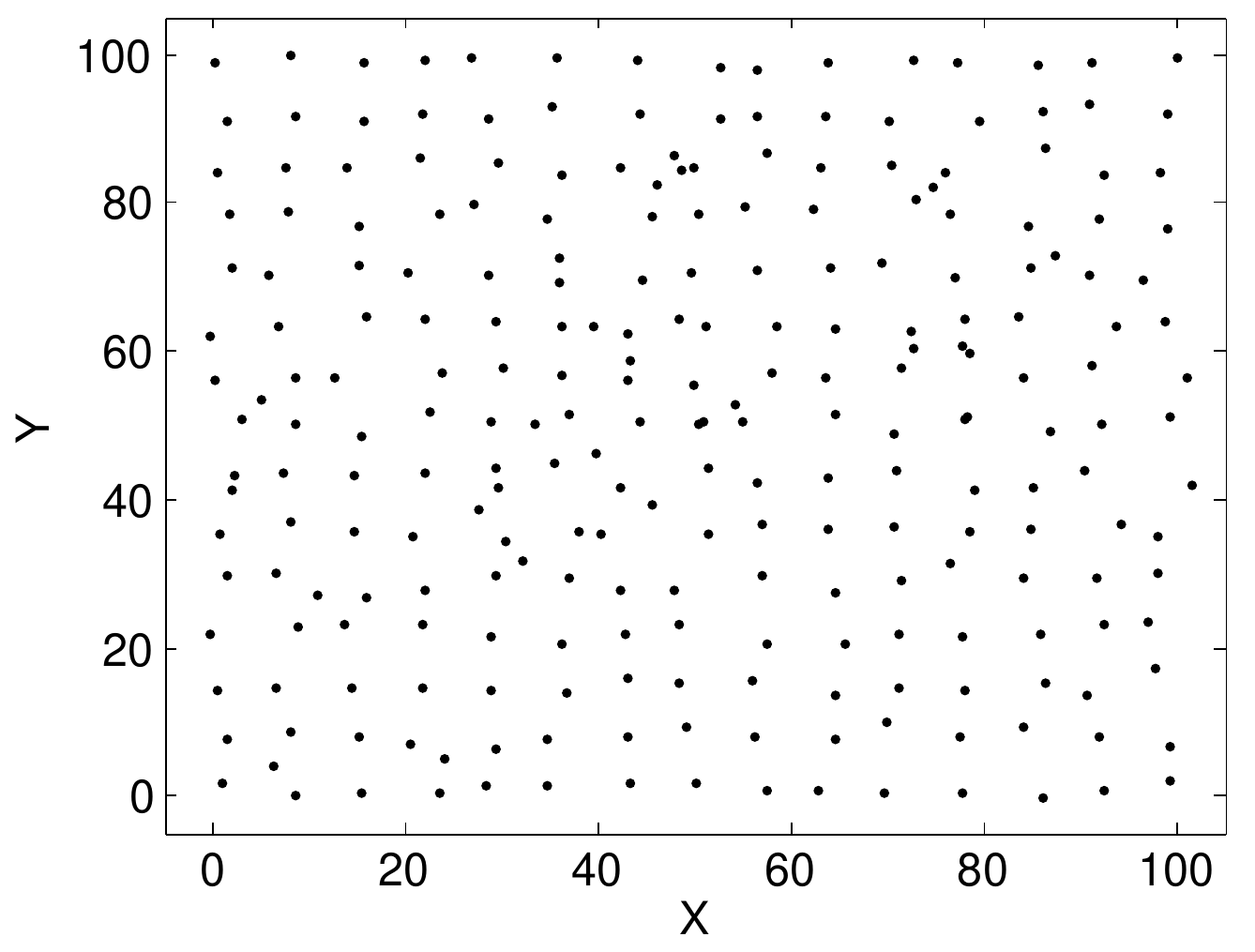}
        \caption{}
        \label{fig:uniform}
    \end{subfigure}
    ~
    \begin{subfigure}[t]{0.4\textwidth}
        \includegraphics[width=\textwidth]{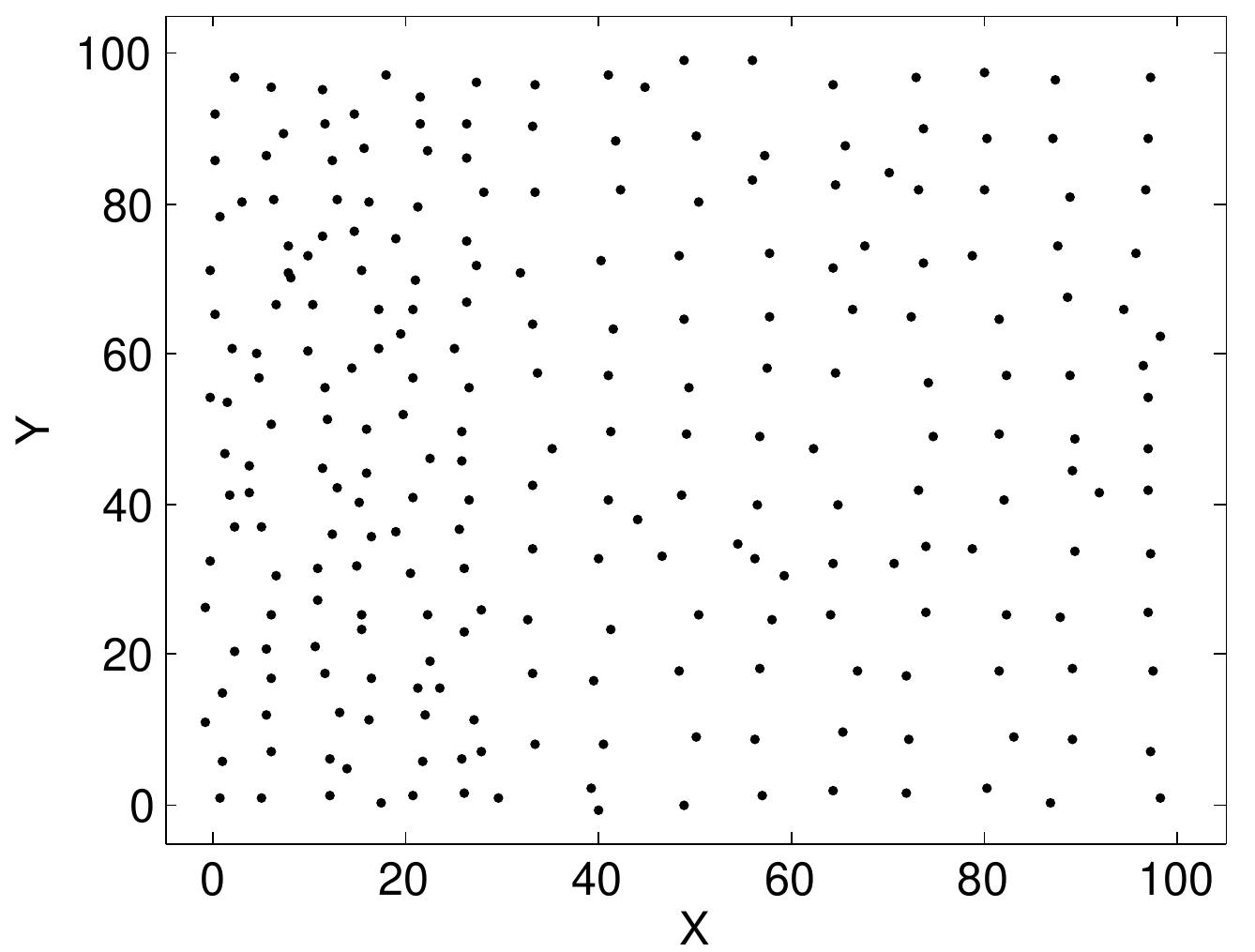}
        \caption{}
        \label{fig:nonuniform}
    \end{subfigure}
    \caption{Topology in each scenario. (a) Uniform node distribution; (b) Nonuniform node distribution.}\label{fig:scenario}
\end{figure} 

The parameters used in this section are summarized in Table \ref{table:parameter}. The parameters for energy consumption model are consistent with \cite{CHEN09, Shokouhifar15}.
\begin{table}[t]
\begin{center}
\caption{Parameters of simulations}\label{table:parameter} 
  \begin{tabular}{| l|  l|}
    \hline
    Parameter & Value\\ \hline
    Sensor field & $100 m \times 100 m$ \\
    Number of nodes & 250\\
    Data packet size & 4000 $bits$\\
    Control message size & 400 $bits$\\  
    Initial energy & $1J$\\ 
    $E_{elec}$ & $50 nJ/bit$\\ 
    $E_{fs}$ &  $10 pJ/(bit \ m^2)$\\
    $E_{mp}$ & $0.0013 pJ/(bit \ m^4)$\\
    $E_{sens}$ & $1 nJ/bit$\\ 
    $E_{aggr}$ & $5 nJ/(bit\ packet)$\\ 
    $d_0$ & $87m$\\
    \hline
    \end{tabular}
\end{center}
\end{table}

\subsection{Parameter impacts}
There are several parameters in our protocol, namely $\alpha$, $\beta$, $\gamma$, $k$, $R_0$. $\beta$ is the threshold for whether a node can act as CH-candidate. The larger the $\beta$, the smaller the possibility for a node to become CH-candidate. we fix $\beta$ as 0.1. $\gamma$ is to adjust the communication range for a CH in CH-CH-association. The larger the $\gamma$, the more neighbour CHs can be found. However, large $\gamma$ leads to energy waste, since energy consumption is a function of transmitting distance. Also too large $\gamma$ may result in that all the CHs are associated to one CH with lower hop and having the largest $\eta$ value, which heavily increases the relaying burden of this CH. we fix $\gamma$ as 2. As mentioned in Section \ref{steady_stage}, the larger the $k$, the fewer control messages need to transmit. But too large $k$ may make some CHs die within a cycle. We consider the influence of $k$ on the lifetime of a cluster.  Here, we select one cluster and test the number of rounds it can operate with the parameters in Table \ref{table:parameter}. As shown in Fig. \ref{fig:k_impact}, the lifetime increases with $k$ because larger $k$ means more energy is spent on data transmission and less on overhead. We can also see that when $k$ is smaller than 20, the lifetime raises quickly; after that it increases slightly. Thus, we fix $k$ as 30 since it gives a relative long lifetime.

\begin{figure}[t]
\begin{center}
{\includegraphics[width=0.5\textwidth]{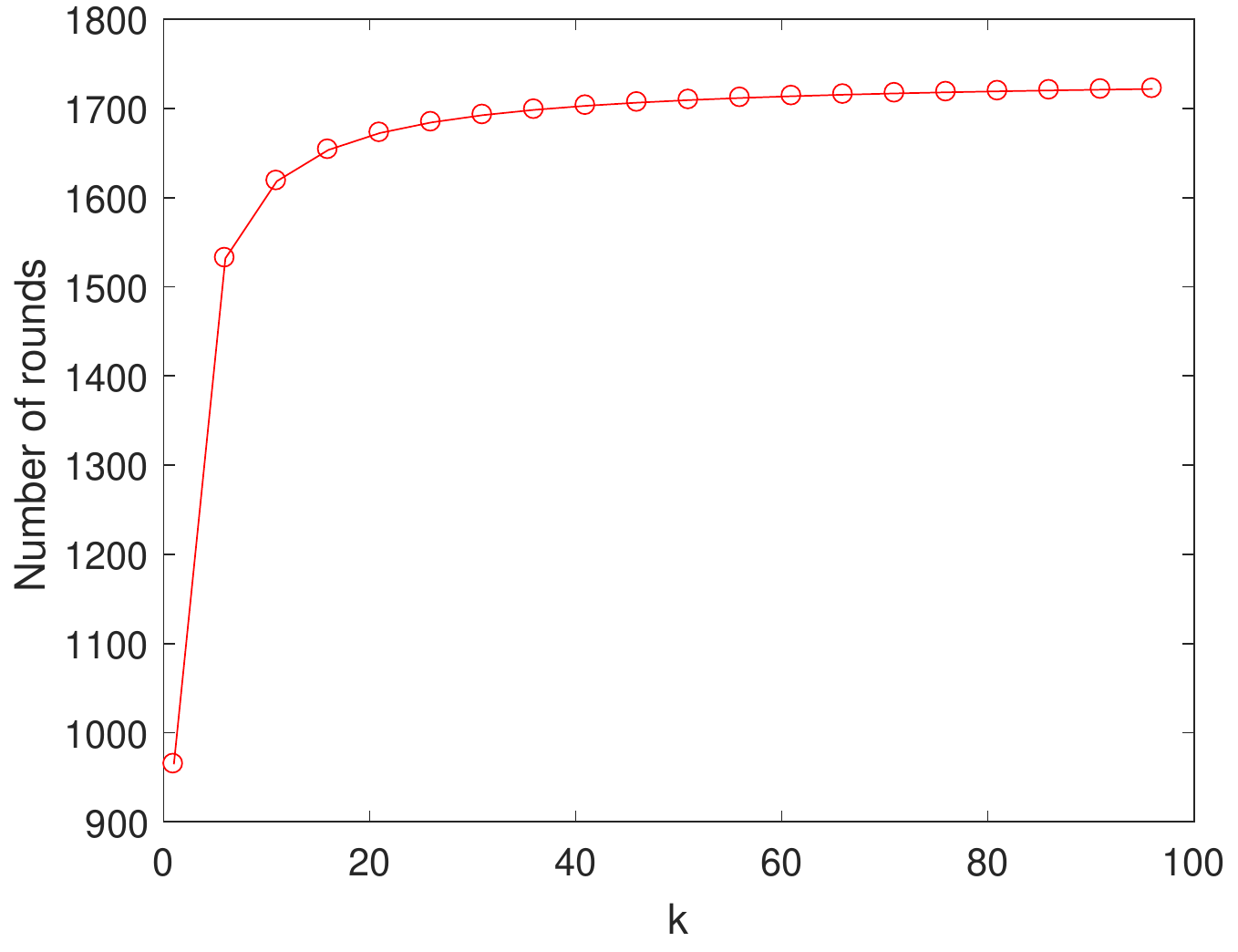}}
\caption{The impacts of $k$ on the number of rounds.}
\label{fig:k_impact}
\end{center}
\end{figure}

Next we study how $\alpha$ and $R_0$ influence the protocol performance. $\alpha$ ranges from 0.2 to 0.6 and $R_0$ is between 10 and 50. We consider the number of clusters under different parameter sets in both Scenario 1 and 2. The result is shown in Figure \ref{fig:ch}. For a fixed $\alpha$, the number of clusters decrease with the increasing of $R_0$. According to (\ref{cover_range}), we know when $\alpha$ is fixed, cover range increases with $R_0$. It testifies our design purpose, i.e., the larger the cover range, the smaller the number of clusters. Besides, for a fixed $R_0$, the number of clusters increase with the increasing of $\alpha$. The reason behind is as follows. According to (\ref{cover_range}), given $R_0$, $\alpha$ influences the range of $R_i$. The larger the $\alpha$, the smaller the lower bound of $R_i$. In other words, $R_i$ can take more values when $\alpha$ is set as a larger value. Thus, the number of clusters also increases with $\alpha$. 


\begin{figure}[t]
    \centering
    \begin{subfigure}[h]{0.45\textwidth}
        \includegraphics[width=\textwidth]{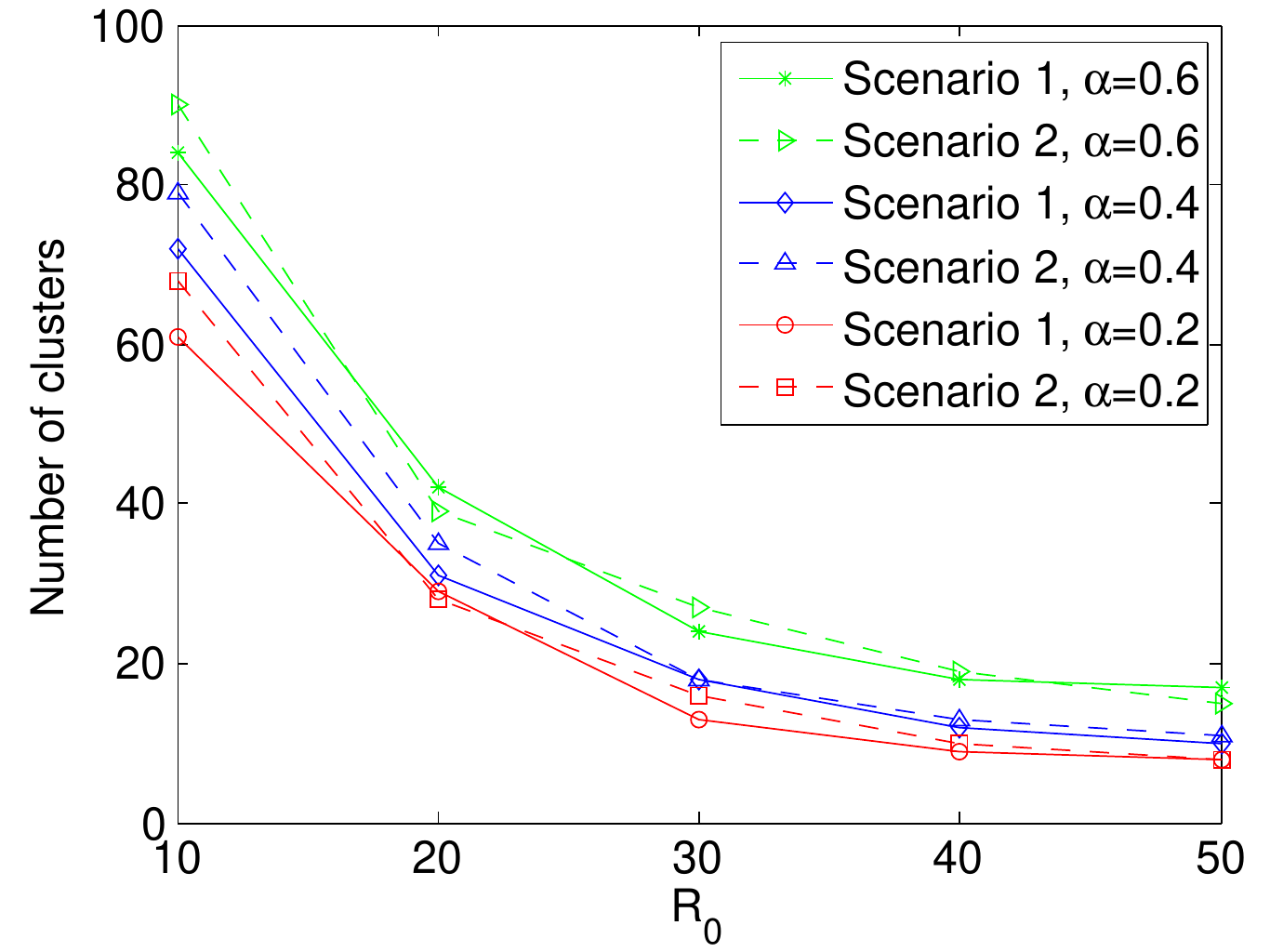}
        \caption{}
        \label{fig:ch}
        \end{subfigure}
        \
    \begin{subfigure}[h]{0.45\textwidth}
        \includegraphics[width=\textwidth]{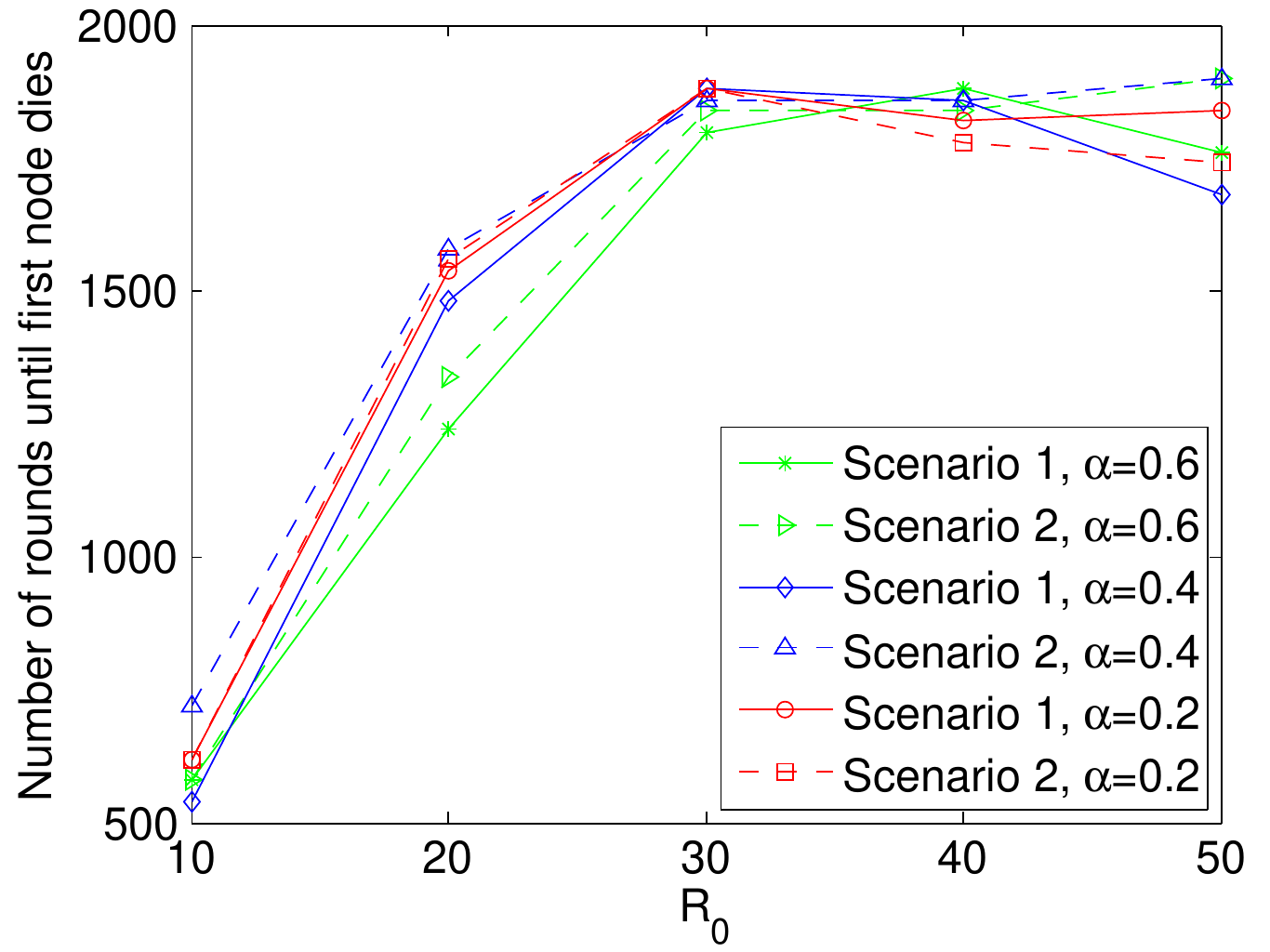}
        \caption{}
        \label{fig:round}
    \end{subfigure}
    \caption{The impacts of $\alpha$ and $R_0$ on. (a) the number of clusters; (b) the number of rounds until the first node dies.}\label{fig:average_energy_consumption}
\end{figure}

Now we consider the network lifetime under different parameter sets. The network lifetime is significant in many applications since the sensing field cannot be fully monitored once a node dies. The result is shown in Figure \ref{fig:round}. For a fixed $\alpha$, with the increase of $R_0$, the network lifetime tends to increase. The reason behind is as follows. As we assume that each CH aggregates the data packets received from its CMs into one packet, the larger the number of clusters, the more data packets are transmitted, resulting in a larger amount of energy consumption. This analysis is consistent with the result shown in Figure \ref{fig:round}, i.e., the network lifetime increase with the increase of $R_0$.


\subsection{Comparison with existing work}
Since the advantage of applying MS to collect data in WSNs has been shown in the literature, to make a fair comparison we compare our approach with those which consider the scenario of using path constrained MS. Here, we compare with \cite{GAO11} and \cite{KONS12} which are two representative approaches for data collection based on flat and cluster structure respectively. The approach in \cite{KONS12} determines the cover range for each CH according to the Euclidean distance to MS' trajectory. Thus, it requires the sensor nodes to able to extract the distance information from the received signal strength. Further, it also requires all the sensor nodes are within the range of MS. Thus, considering the size of the sensing field, MS communication range is set to be $100m$. In contrast, since our approach and \cite{GAO11} use only hop distance, the communication range of MS is set as $30m$. Since the approach in \cite{KONS12} constructs the clusters with two sizes and considering the sensing field, the communication range of sensor nodes is set as $60m$. The communication ranges of sensor nodes for the approach in \cite{GAO11} and ours are set as the same with MS. The basic features of our approach, \cite{GAO11} and \cite{KONS12} are shown in Table \ref{table:technique}. 

\begin{table}[t]
\begin{center}
\caption{Comparison of approaches}\label{table:technique} 
  \begin{tabular}{| c | c |c |c|}
    \hline
    Approach & MASP \cite{GAO11} & MobiCluster \cite{KONS12} & Ours\\\hline
     Distance measure& No & Yes & No\\
     Network structure& Flat & Cluster & Cluster\\
     MS comm. range& $30m$ & $100m$ & $30m$ \\ 
     Node comm. range& $30m$ & $60m$ & $30m$ \\
    \hline
    \end{tabular}
\end{center}
\end{table}

\begin{figure}[t]
    \centering
    \begin{subfigure}[h]{0.45\textwidth}
        \includegraphics[width=\textwidth]{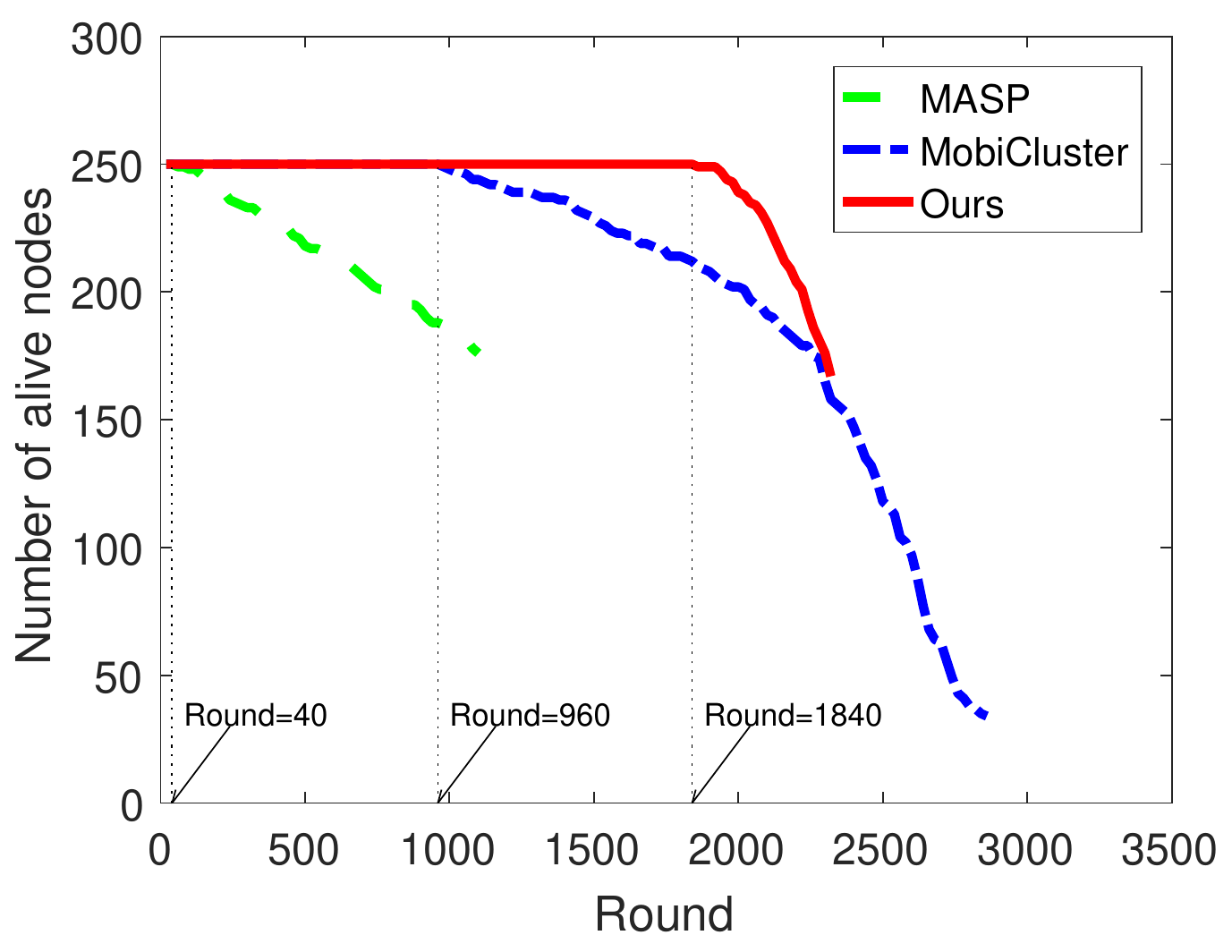}
        \caption{}
        \label{fig:alive_node_uniform}
    \end{subfigure}
    \
    \begin{subfigure}[h]{0.45\textwidth}
        \includegraphics[width=\textwidth]{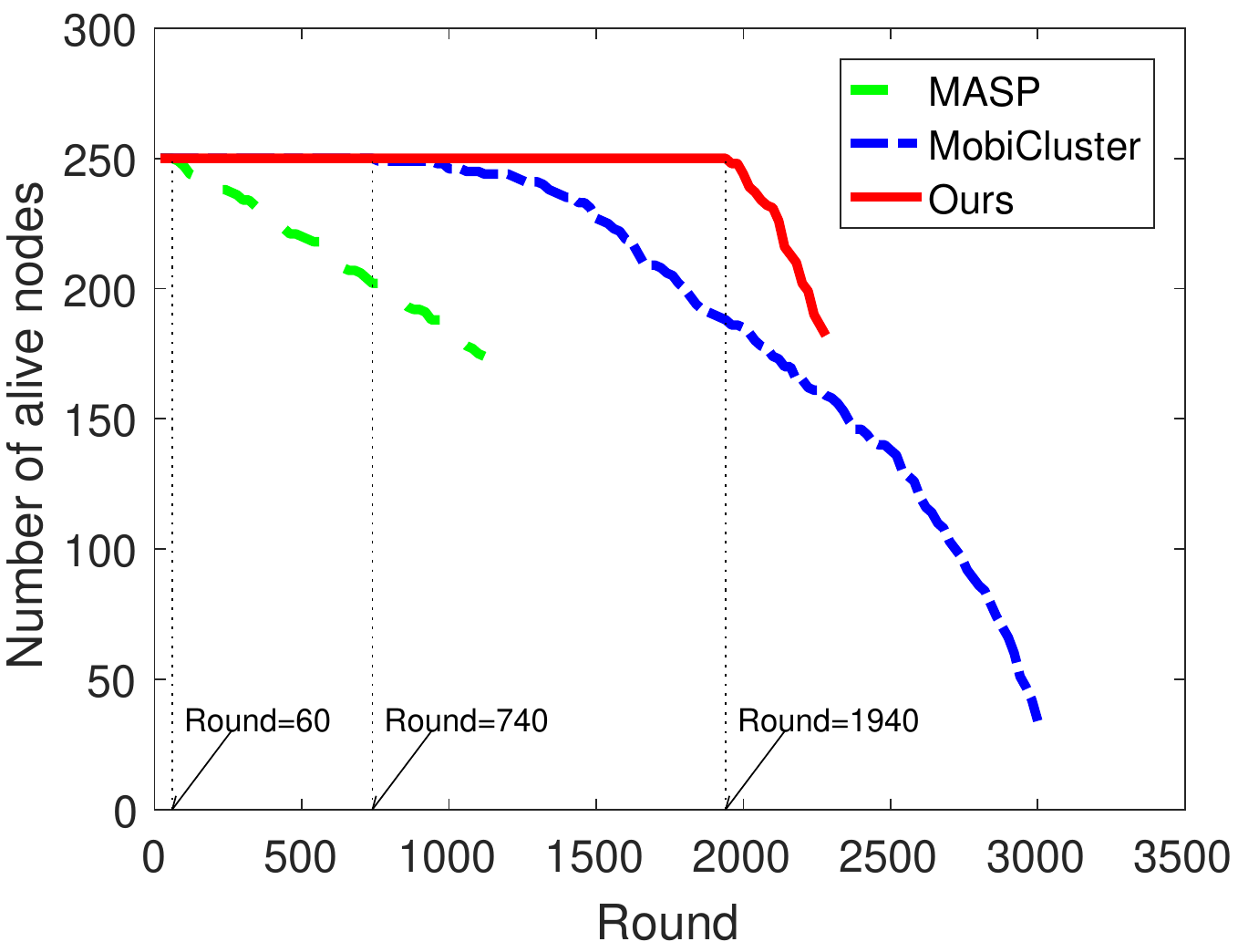}
        \caption{}
        \label{fig:alive_node_nonuniform}
    \end{subfigure}
    \caption{Number of alive nodes in each scenario. (a) Uniform node distribution; (b) Nonuniform node distribution.}\label{fig:alive_node}
\end{figure} 

Here our performance measure is still the network lifetime. The results of our approach and the compared ones are shown in Figure \ref{fig:alive_node}. In both Scenario 1 and 2, the proposed approach achieves the longest network lifetime, which guarantees the network to have good coverage of the interested areas for a long time. Besides, with the same initial energy amount, no matter if the nodes are uniformly distributed or nonuniformly distributed, the networks can achieve the similar network lifetime using the proposed approach. However, the approach of \cite{KONS12} performs differently, i.e., in the uniform node distribution case, it can operate for 960 rounds with all the nodes alive, while in the nonuniform node distribution case, it can only perform 740 rounds. The reason is that it does not take node density into account. Since the approach of \cite{GAO11} is based on flat structure, the node distribution impacts little. It performs similarly in both scenarios and both the worst, because the subsinks definitely suffer from funnelling effect issue. Comparing the performance of our approach and \cite{KONS12}, it can be seen that after the first node dies in these two cases, the number of alive nodes in our protocol drops more dramatically than that of \cite{KONS12}. This is because in our protocol the energy consumption of the sensor nodes is more balanced than that of \cite{KONS12}. Figure \ref{fig:alive_node} also plots how many rounds each protocol can operate under the connectivity requirement\footnote{Connectivity requirement: The sensory data packet at each alive sensor node can be transmitted MS. In other words, every alive sensor node should have at least one neighbour node.}.  Under this context, the approach of \cite{KONS12} can operate for the longest rounds, because the sensor nodes' communication range is larger. Our approach and the one of \cite{GAO11} cannot work that long due to relatively short communication ranges. Although it is possible to raise the operation rounds by increasing the communication range, it is not necessary since the network has already lost full coverage of the sensing field. 

\begin{figure}[t]
    \centering
    \begin{subfigure}[h]{0.45\textwidth}
        \includegraphics[width=\textwidth]{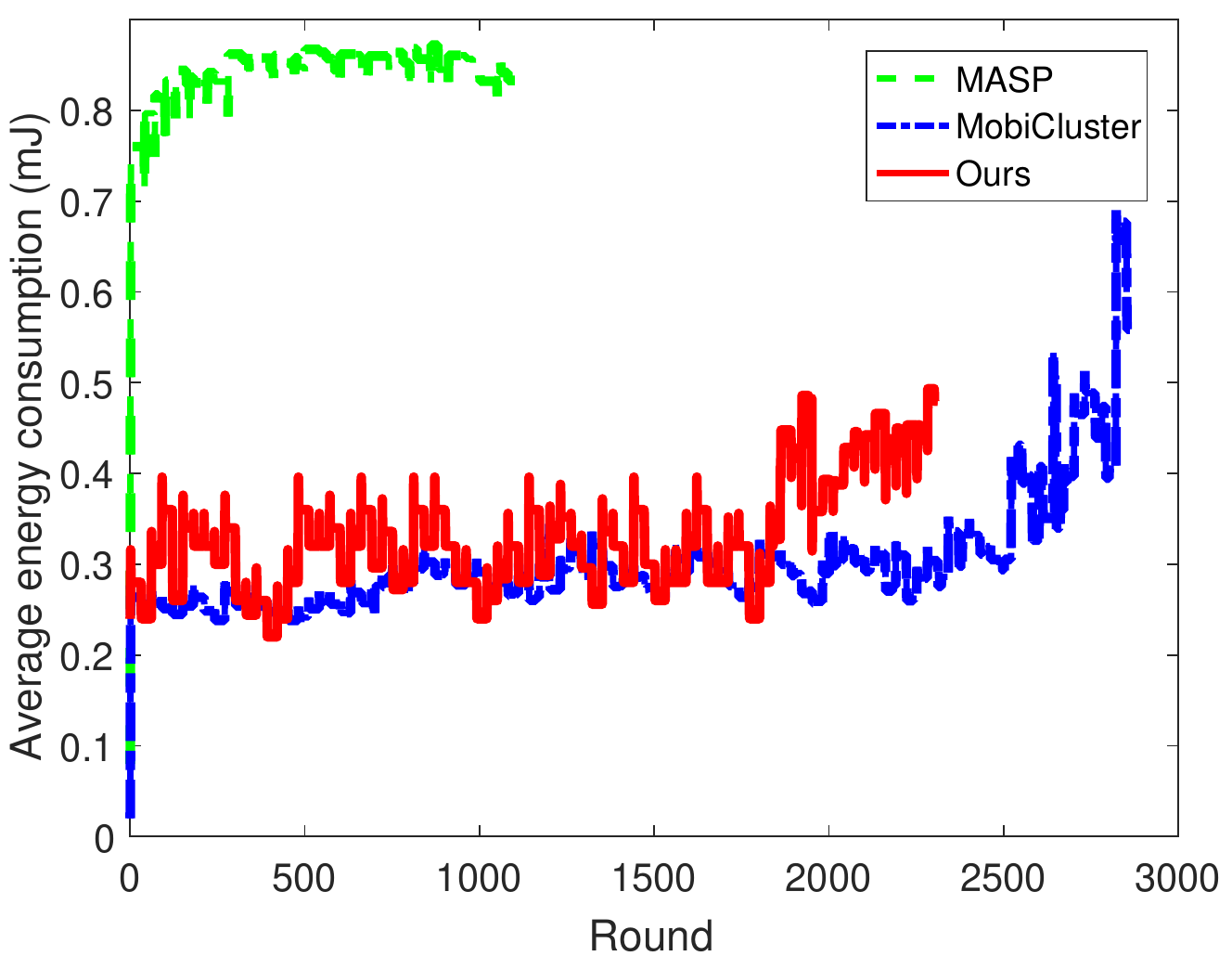}
        \caption{}
        \label{fig:average_energy_consumption_1}
        \end{subfigure}
        \
    \begin{subfigure}[h]{0.45\textwidth}
        \includegraphics[width=\textwidth]{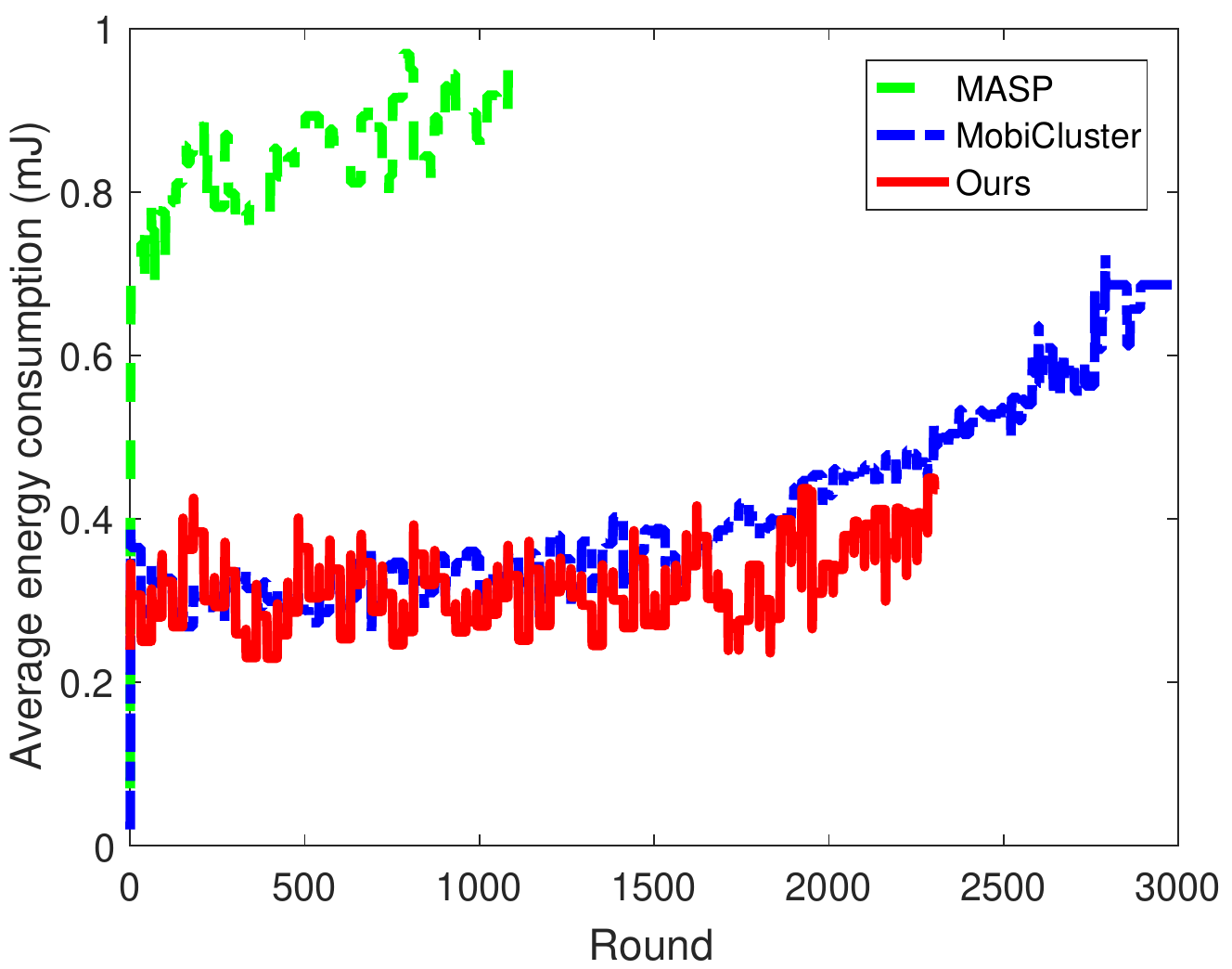}
        \caption{}
        \label{fig:average_energy_consumption_2}
    \end{subfigure}
    \caption{Average energy consumption by the alive nodes. (a) Uniform node distribution; (b) Nonuniform node distribution.}\label{fig:average_energy_consumption}
\end{figure}
\begin{figure}[t]
    \centering
    \begin{subfigure}[h]{0.45\textwidth}
        \includegraphics[width=\textwidth]{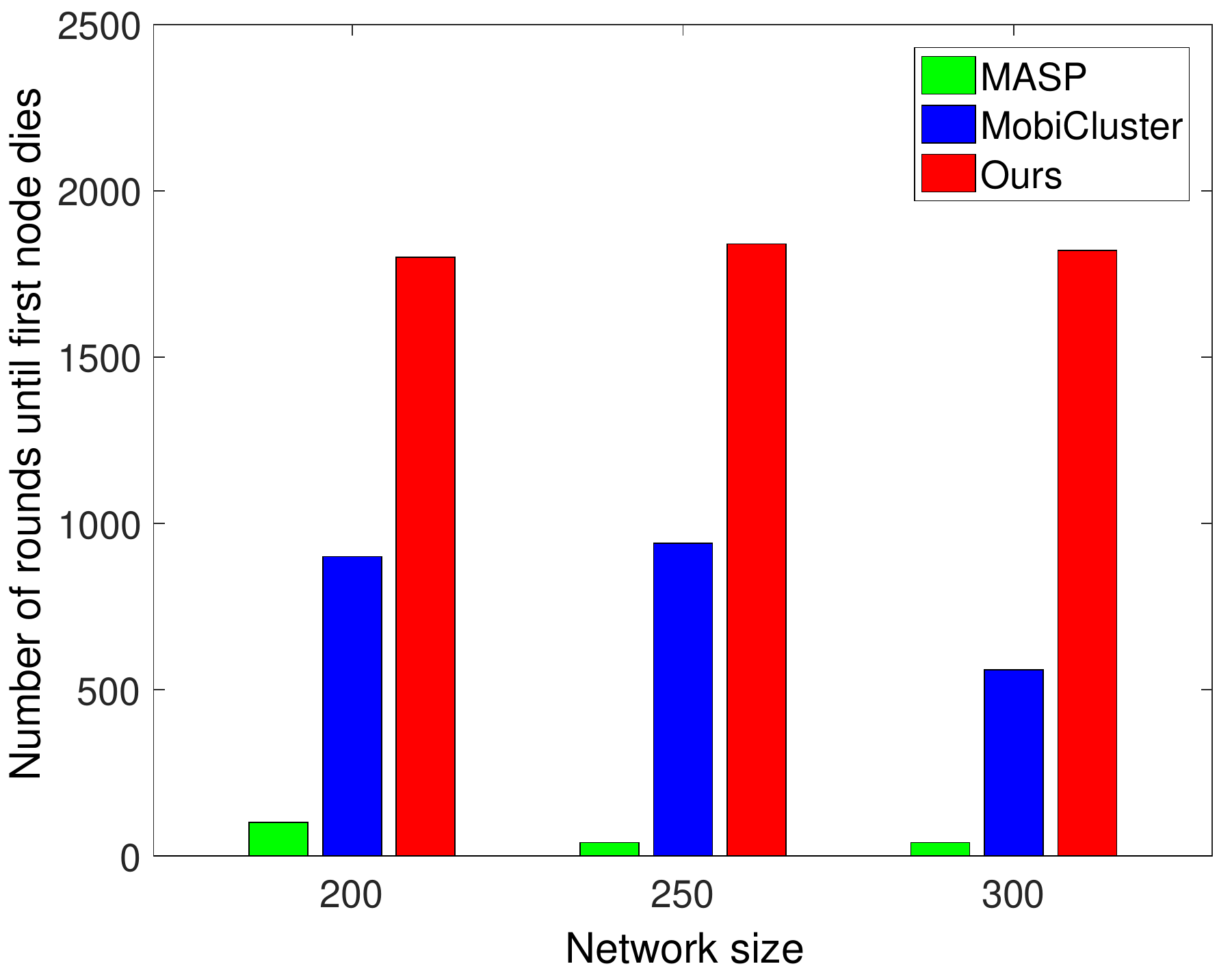}
        \caption{}
        \label{fig:alive_node_vs_network_size_uniform}
        \end{subfigure}
        \
    \begin{subfigure}[h]{0.45\textwidth}
        \includegraphics[width=\textwidth]{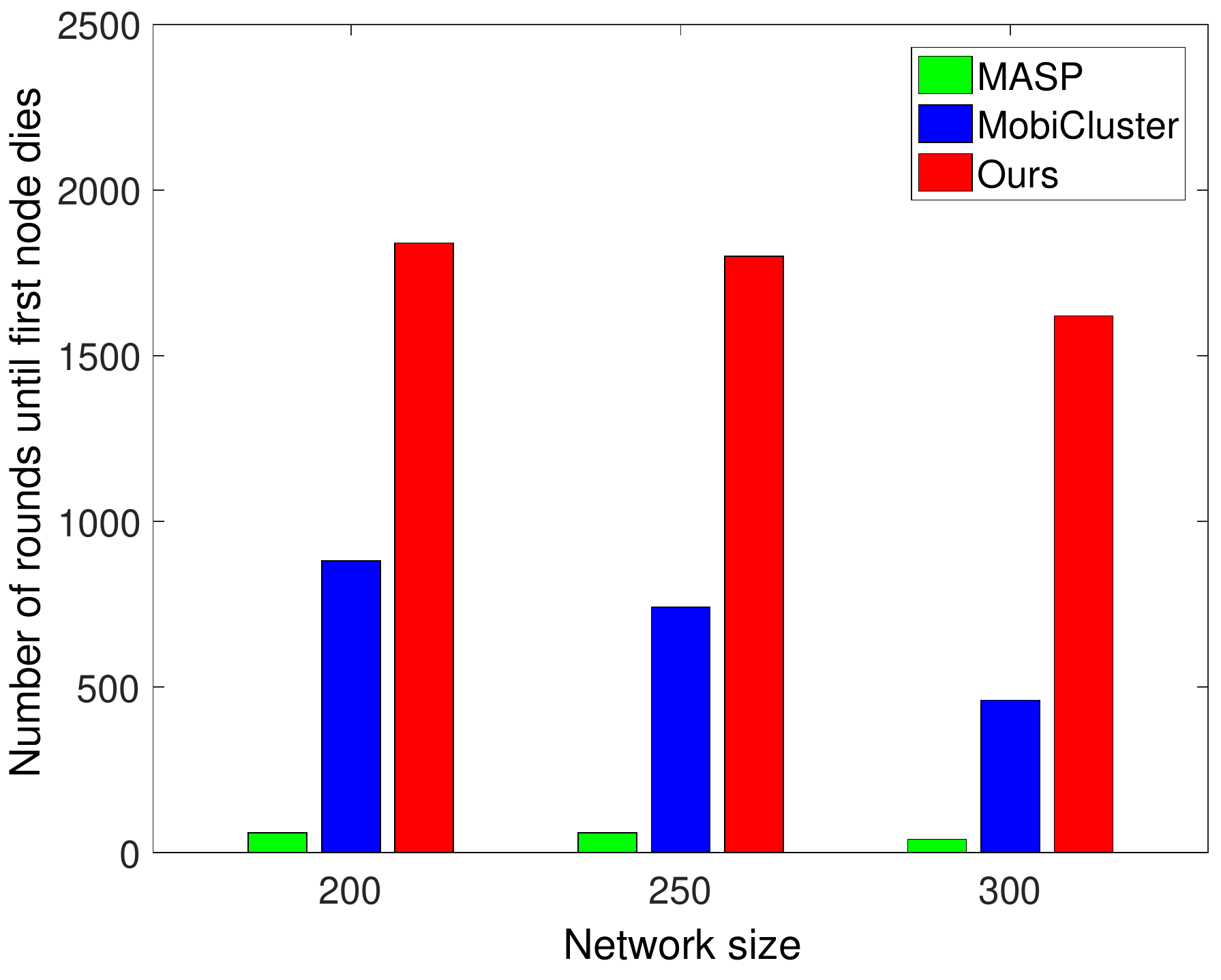}
        \caption{}
        \label{fig:alive_node_vs_network_size_nonuniform}
    \end{subfigure}
    \caption{Network lifetime over network sizes. (a) Uniform node distribution; (b) Nonuniform node distribution.}\label{fig:alive_node_vs_network_size}
\end{figure}

\begin{figure}[t]
    \centering
    \begin{subfigure}[h]{0.45\textwidth}
        \includegraphics[width=\textwidth]{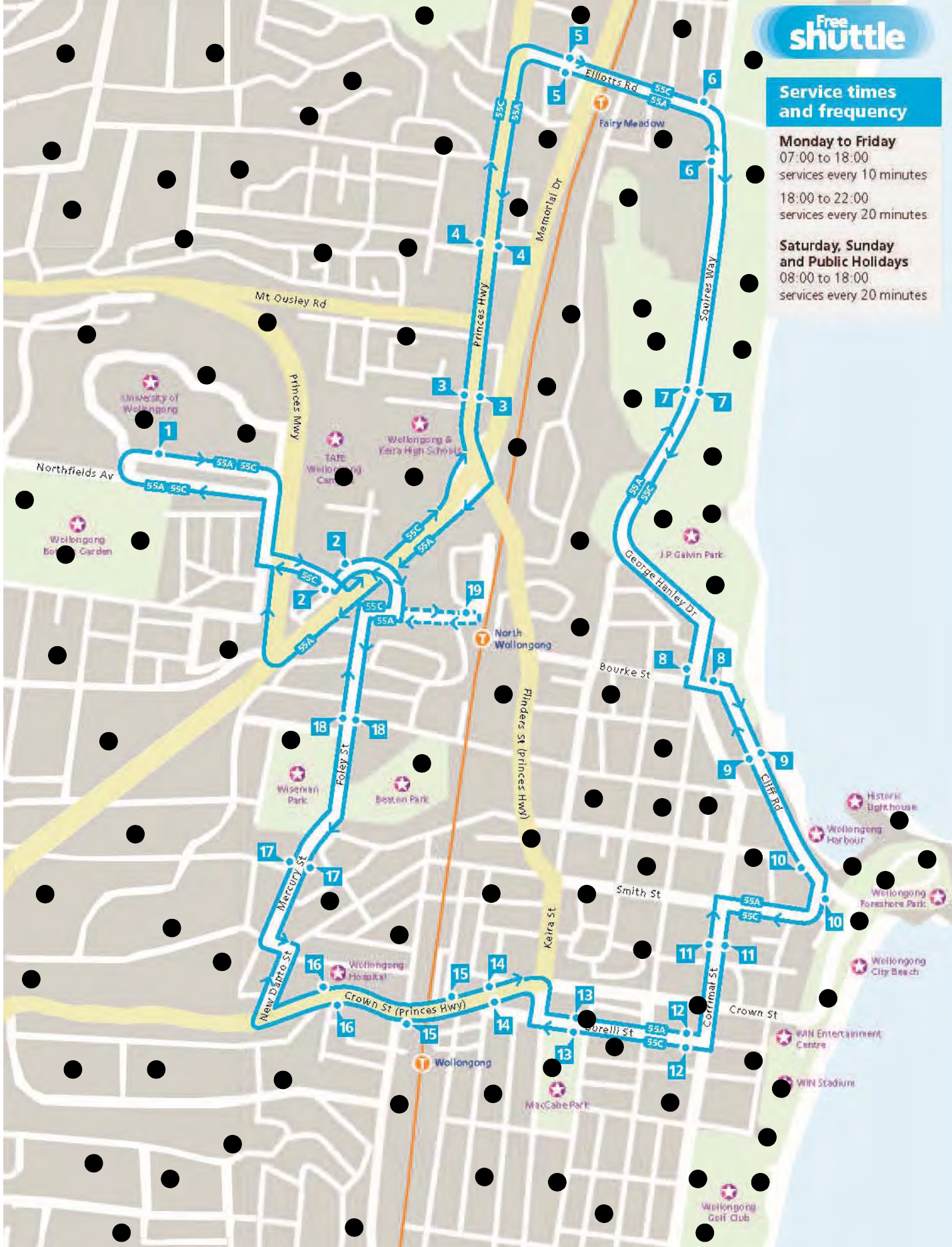}
        \caption{}
        \label{gongshuttlemap}
        \end{subfigure}
        \
    \begin{subfigure}[h]{0.45\textwidth}
        \includegraphics[width=\textwidth]{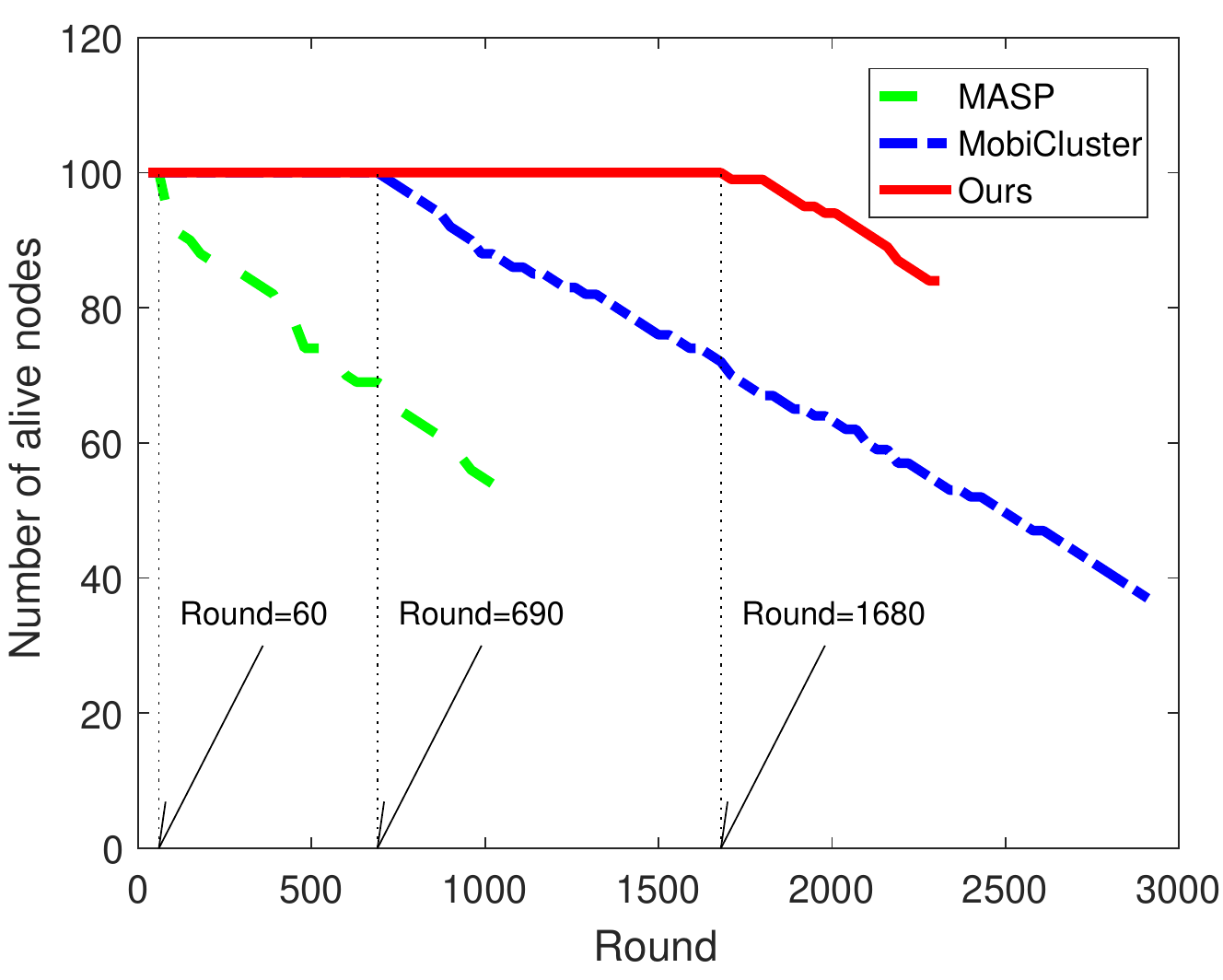}
        \caption{}
        \label{practical}
    \end{subfigure}
    \caption{(a) The sensing area covered by the route of a shuttle and a set of sensor nodes; (b) Number of alive nodes.}\label{fig:alive_node_vs_network_size}
\end{figure}

%

Accordingly, the average energy consumptions by the alive nodes in these two scenarios are displayed in Figure \ref{fig:average_energy_consumption}. The values on Round 0 represents the average energy consumption in the initial phase while those after are in the collection phase. The energy consumption at the beginning of each cycle, i.e., for clustering, is added to the first round. In the initial phase, the proposed approach consumes more energy than the alternatives, since it requires three rounds of control message exchange. In the collection phase, the average energy consumption increases when some nodes run out of energy. The fundamental reason is that the average transmission distance increases when some nodes die. Comparing Figure \ref{fig:average_energy_consumption_1} and \ref{fig:average_energy_consumption_2}, we can see that the propose approach consumes less energy than those of \cite{GAO11, KONS12} in the case of nonuniform distribution.

We further consider four other networks with 200 and 300 nodes uniformly  and nonuniformly deployed respectively. We also apply the proposed approach and the compared ones on these networks. As shown in Figure \ref{fig:alive_node_vs_network_size}, the network lifetimes with our proposed protocol are the longest. 

Moreover, we provide a simulation based on the real trace of the shuttle in Wollongong, Australia. The shuttle route is shown in Figure \ref{gongshuttlemap} and the timetable can be found on the website of Transport of NSW. 100 sensor nodes are nonuniformly deployed in this $3\times 5$ km area (the right part area is with higher density than the left part area). We apply the proposed approach as well as the two alternatives. For this practical case, the buffer overflow may occur since the shuttle does not operate at night. Here we do not consider the buffer overflow issue and only consider the network lifetime. The simulation results are shown in Figure \ref{practical}. We can see that the proposed approach achieves the longest network lifetime. 

\section{Summary}\label{conclusion4}
We propose a cluster-based routing protocol to support data collection in WSNs with nonuniformly node distribution using a path fixed MS. It involves an energy-aware and unequal clustering algorithm and an energy-aware routing algorithm. The clustering algorithm constructs clusters with unequal ranges based on the local node density and hop distance to MS's trajectory. The resulting clusters have the features: 1) at the same hop distance, the clusters in dense (sparse) area have small (large) cover ranges; 2) with the same density, the clusters close to (far away from) MS's trajectory have small (large) cover ranges. Such formation helps balance the energy consumption among clusters. In the routing algorithm, each CH selects a CH with lower hop distance to MS' trajectory and highest residual energy/CM size as the relay CH. By using the above techniques, our protocol works well for WSNs with nonuniformly node distribution and prolongs the network lifetime significantly comparing to existing work. Moreover, since Euclidean distance measure is not required, the system cost is low, which makes the proposed protocol have good scalability. The proposed approach is for a single MS. When more MSs are available, the cooperation between them may help to increase the performance further, which is our future consideration.

In this chapter, we study the cluster-based routing protocol assisting by a single ME. Local aggregation is assumed at CHs. Obviously this method loses some information in the original data. In Chapter \ref{cluster_cs}, we will further consider a recent technique, i.e., Compressive Sensing, to deal with the original data, which is able to recover the original data from a few measurements.

\chapter{The Cluster based Compressive Data Collection for Wireless Sensor Networks with a Path Fixed Mobile Sink}\label{cluster_cs}
\minitoc
This chapter still focuses on \textit{constrained} mobility. Beyond Chapter \ref{cluster_ms}, we apply a recent technique to deal with the original data. 

\section{Motivation}

According to field experiments, data communication contributes majority of energy expenditure in WSNs \cite{experiment04}. It is well known that the energy consumed for data transfer by sensor nodes relates to the packet size and the distance the packet travels. When the sensor nodes are static, the distances between them are fixed. However, it is possible to reduce the size of data, benefiting from the compressive sensing (CS) theory \cite{CS06_1, CS06_2}. In many situations the sensor nodes are densely deployed, thus the sensory readings are highly spacial/temporal correlated. The CS theory states that such spacial/temporal correlated data can be recovered from only a small number of measurements with high enough accuracy by using $l_1$-norm minimization. Many researches have already applied CS to the data collection problem in WSNs \cite{CDG09, xiang11, jia14}, due to its effectiveness in reducing the traffic load. To the best of our knowledge, these works mainly focus on the scenario of static sink. However, due to the requirements of certain applications, the WSNs may consist of several sensor islands and there may be no connectivity between them. In this case, the static sink cannot collect data from all the sensors. Therefore, designing a data collection system using CS and MS has the following advantages of reducing the number of transmissions, which in turn prolongs the network lifetime and being suitable to the large scale, especially disconnected WSNs. 


This chapter presents a complete CS based data collection strategy for delay-tolerant WSNs with constrained MS. Particularly, we consider a system consisting of a set of static sensor nodes and a MS. The constrained MS moves on a fixed path periodically. Some sensor nodes are located in proximity to MS's trajectory, which enables the communication between MS and sensor nodes. Instead of transmitting data in a real time manner, (i.e., each individual packet, involving a single reading, is sent out immediately after generation), the transmitted packets involve a set of readings. For transmitting packets, we use the hybrid CS rather than the pure CS, since the former requires less transmitted packets. We try to reduce the number of transmissions across the network by dividing the network into clusters. Cluster members (CMs) send their raw readings to their cluster heads (CHs), and CHs transmit the CS measurements. One significant problem in our scheme is the cluster radius, which impacts on the energy consumption of the entire network. As discussed in Section \ref{analysis}, larger cluster radius leads to higher intracluster energy consumption and lower intercluster energy consumption. Thus, there should be an optimal point such that the entire energy consumption is minimized. We provide an analytical model to describe the energy consumption of the network and determine the optimal cluster radius.


The main contributions are as follows. A complete scheme for data collection in WSNs using MS and hybrid CS is presented. We propose an analytical model to describe the energy consumption by all sensor nodes in the network, based on which the optimal cluster radius is figured out. Moreover, two implementations are discussed to achieve clustering the network with the optimal cluster radius. The implementations are fully distributed and do not require location information. The message complexities at each node are both $O(1)$. We further conduct extensive simulations. The results show that the proposed approach achieves around 1.2 and 2 times more network lifetime than two compared schemes.

The rest of this chapter is organized as follows. The proposed scheme is discussed in Section \ref{design}. Section \ref{analysis} presents an analytical model to figure out the optimal cluster radius. How to implement such scheme is shown in Section \ref{implementation} and the evaluation of the scheme is demonstrated in Section \ref{evaluation}. Finally, Section \ref{conclusion5} briefly summarizes the chapter and discusses some possible future work.

\section{Data Collection Approach}\label{design}
\subsection{Network Model}\label{model}
Consider a WSN consisting of $N$ uniformly deployed sensor nodes (with unique IDs) and a MS. MS passes the periphery of the sensor field (length of $L$ and width of $W$) periodically on its fixed path. It takes $T$ units of time to finish one tour.  

Let $X$ be the raw reading matrix by all nodes during the tour of MS. Thus, $X$ is a $N \times T$ matrix. $x_{nt}$, an element of $X$, is the raw reading by the $n$th ($n=1,...,N$) node at the $t$th ($t=1,...,T$) unit of time. $X_n$, a row of $X$, denotes the raw readings by the $n$th sensor nodes during the tour of MS and $X_t$, a column of $X$, denotes the raw readings by all nodes at the $t$th unit of time. Let $Y$ be the matrix of CS measurements, whose size is $M\times T$. $Y_t$, a column of $Y$, denotes the CS measurements for the $t$th unit of time. $\Phi$ is the measurement matrix and $\Psi$ is the transform matrix. The fundamental objective is to recover $X$ from $Y$.

We make the following assumptions for. The sensor nodes have the similar capabilities, i.e., sensing, computing, storage, communication, but the equipped energy resource is not necessarily identical. All sensor nodes are able to adjust transmit power based on distances \cite{ren2014data}. The adjustment only occurs in the phase of clustering; while in data transmission, a node only transmits its data packet to its neighbour nodes within 1 hop range. Other activities such as sensing, receiving packets and computing also consume energy. Compared to transmitting, the energy consumed by those activities can be omitted. 

The frequently used notations in this chapter are summarized in TABLE \ref{notations}.
\begin{table}[t]
\begin{center}
\begin{threeparttable}
\caption{Notations}\label{notations} 
  \begin{tabular}{|c| l|}
    \hline
     Notation & Description\\  \hline
     $T$ & Time for MS to finish one tour\\  \hline
     $N$ & Number of sensor nodes \\ \hline
     $X$ & Raw reading matrix\\ \hline
     $Y$ & CS measurements\\ \hline
     $M$ & Number of CS measurements \\ \hline
     $L$ & Length of the sensor field \\ \hline
     $W$ & Width of the sensor field  \\ \hline
     $R$ & Cluster radius\\ \hline
     $C$ & Number of clusters\\ \hline
     $\Phi$ & Measurement matrix \\ \hline
     $\Psi$ & Transform matrix \\ \hline
     $R_0$ & Communication range of single hop\\ \hline
     $P$ & Transmission power for single hop\\ \hline
     $p$ & The probability a sensor becomes CH\\ \hline
    \end{tabular}
\end{threeparttable}
\end{center}
\end{table}
\subsection{Approach Overview}
The cluster based hybrid CS method is used together with MS. An illustrative example of our scheme is shown in Fig. \ref{idea}. The data collection process follows the following steps. 

\begin{figure}[t]
\begin{center}
{\includegraphics[width=0.7\textwidth]{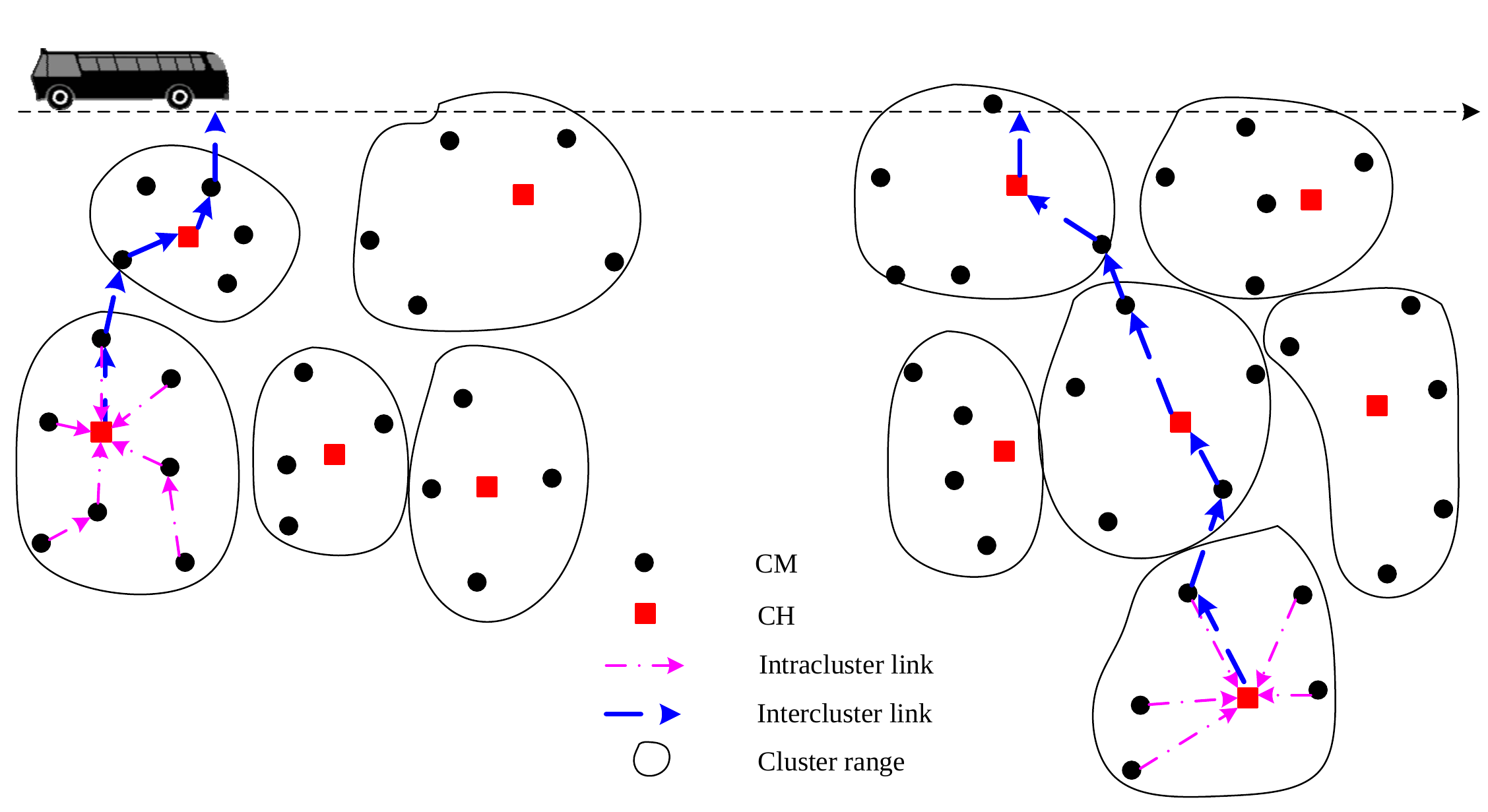}}
\caption{An example of the proposed scheme, where MS is installed on a bus.}\label{idea}
\end{center}
\end{figure}

\begin{itemize}
\item The sensor nodes are divided into a number of clusters.
\item Within clusters, CMs send raw readings to CH. CHs send CS measurements to another CH which is closer to MS's trajectory than itself.
\item The CHs, which have uploading nodes (referred to the nodes which are within the one hop range of MS), transmit CS measurements to the uploading nodes and the latter upload them when MS approaches.
\item MS recovers the raw readings via the CS measurements once it finishes the tour.
\end{itemize}

Several issues should be solved before realizing the above data collection process.
\begin{itemize}
\item How to execute clustering: 1) without the knowledge of sensor nodes' locations; and 2) how many clusters?
\item How to recover the raw readings from the CS measurements when clustering and MS are used?
\end{itemize}

In the rest, we will tackle these problems.

\subsection{Clustering}\label{clustering}

Transmitting CS measurements among CHs will not lead to the energy hole issue, benefiting from CS theory. Hence, unlike \cite{KONS12, CHEN09}, the clusters in our work can have equal radius. Let $R$ be the cluster radius. If the CHs are uniformly distributed, approximately, we have:
\begin{equation}\label{cluster_number}
C\pi R^2=LW
\end{equation}
where $C$ is the number of clusters.

As shown below, $R$ relates to the specific intracluster and intercluster transmission strategy and $R$ influences the entire network energy consumption. In Section \ref{intra_inter_cluster}, we present the communication strategies; and in Section \ref{analysis}, we discuss how to determine $R$.
\subsection{Intracluster and Intercluster Transmission}\label{intra_inter_cluster}
In CS based data collection approaches, every node transmits $M$ packets for a unit of time. The authors of \cite{xiang11} claim that such approach increases the traffic load at the nodes close to the leaf nodes. With this regard, the hybrid CS method is proposed. Further, the authors of \cite{jia14} use the hybrid CS method with clusters. In this subsection, we follow the basic idea of \cite{jia14}. However, we focus on the scenario with MS rather than the static sink case in \cite{jia14}.


A CM needs to transmit a vector of raw readings $X_{n}$ to its CH. As denoted in Section \ref{model}, $X_n$ is of size $T$. Once a CH collects all the packets from its CMs, it compresses the raw readings. Consider the $c$th ($c=1,...,C$) cluster and let $H_c$ represent this subset of nodes. The CS measurements from this cluster can be calculated as follows:
\begin{equation}\label{measurement}
Y_{H_c}= \Phi_{H_c} X_{H_c}
\end{equation}
where $\Phi_{H_c}$ is a submatrix of the measurement matrix $\Phi$ and the size is $M\times |H_c|$; $X_{H_c}$ is a submatrix of the raw reading matrix $X$ and the size is $|H_c|\times T$; and $Y_{H_c}$ is the measurements of this cluster with size $M\times T$. Here $|H_c|$ gives the number of nodes in cluster $H_c$. 

Following the popular method used in the previous work, instead of transmitting $\Phi_{H_c}$, each sensor node uses a pseudorandom number generator seeded with a ID to generate the corresponding measurement coefficients \cite{seed08}. In our work, CHs execute such function while CMs only provide their IDs in the transmitted packets. 

Once the intercluster packet is ready, a CH transmits it to another CH, which is closer to MS's trajectory. Again, hop distance is used for selecting such relay CH. As will be discussed in Section \ref{implementation}, through exchanging control messages, every sensor node knows a set of nodes which have the smaller hop distance to MS's trajectory than itself. By the same way, every CH is able to construct a set of CHs, whose hop counts are smaller than its own. In the rest, we call these CHs the \textit{closer neighbours}. The CH randomly chooses one of its closer neighbours and transmits the intercluster packet to it.

The CH, which receives intercluster packets, compresses its own packet and the received into one intercluster packet. 
The new CS measurements are the summation of its own CS measurements and those received. According to (\ref{measurement}), the sizes of the CS measurements in all the intercluster packets are the same, i.e., $M \times T$. This enables the direct compression of the CS measurements, although the number of nodes vary in different clusters. 

\subsection{Communication with MS}
The CHs which are the closest to MS's trajectory, transmit the CS measurements to the uploading nodes.
The communication between MS and the uploading nodes is a call and send mode. When MS moves on its trajectory, it broadcasts a call message to announce its presence. The uploading nodes send the CS measurements to MS. Since the uploading nodes may be close to each other, to avoid collision between them, each CH sends data packet to only one of its uploading nodes. Here, we simply assume a uploading node is able to upload all the stored CS measurements during the communication period with MS. Once MS finishes its tour, all the CS measurements have been collected. Then, the final CS measurement at MS is:
\begin{equation}\label{summation}
Y=\sum_{c=1}^C Y_{H_c}
\end{equation}

(\ref{summation}) only considers the relationship between the original CS measurements generated by each CH and the final CS measurements generated by MS. Actually, a specific $Y_{H_c}$ may not be known by MS, because additional compression may be done by relay CHs in the process of opportunistic routing. Although MS has no knowledge about where a set of CS measurements come from, it does not impact on recovery.

Note, as will be mentioned in Section \ref{analysis}, all the sensor nodes transfer data with each other using single hop communication. In particular, a CM transfers its raw readings to its CH through multihops. Intercluster transmission as well as CH-MS transfer are also executed in multihop fashion since shorter wireless link leads to low delivery delay.  
\subsection{Recovery}

Once MS gets $Y$, it solves the following problem:
\begin{equation}\label{optimization}
\begin{aligned}
&\min_{s\in R^N} \Vert s \Vert_{l_1}\\
&s.t. Y_t=\Phi X_t, X_t=\Psi s
\end{aligned}
\end{equation}

For one tour, there are totally $T$ problems of (\ref{optimization}) to be solved. After obtaining the estimated $\widehat{s}$, the raw readings are reconstructed by
\begin{equation}
\widehat{X_t}=\Psi \widehat{s}
\end{equation}

To solve (\ref{optimization}), MS needs to know the matrices of $\Phi$ and $\Psi$. $\Psi$ is only used in the data recovery process, thus, it can be generated independently at MS. $\Phi$ is used in the process of compressing raw readings. Thus, it should be identical across the network. As mentioned in Section \ref{intra_inter_cluster}, it is not necessary to make $\Phi$ on the fly, instead, MS can use the IDs in the received packets to generate such matrix.

\section{Analysis on Cluster Radius}\label{analysis}
There is a large body of work about determination of the optimal number of clusters in the literature, such as \cite{opt_number14}. 
Before discussing our analysis, it is worth mentioning the differences and similarities between our work and the existing approaches. First, in the previous work such as \cite{opt_number14}, it is assumed that the CMs transmit data packets to their CHs directly. However, as pointed out below, in our approach CMs communicate with their CHs through multihop communication, i.e., every node uses the first power level to communicate with the neighbour nodes. Second, the previous approaches use static sink to collect data from the sensor nodes; while we use MS which passes the edge of the sensor field. Further, instead of transmitting raw readings in the existing work, we adopt CS technique. So, although we both follow the framework: formulating the network energy consumption as a function of the considered variable (in our work it is the cluster radius; while in the previous it is the number of clusters), the formulation discussed here distinguish from the existing ones. In this section, we provide the analytical model to determine the radius of clusters, i.e., $R$. 


\begin{figure}[t]
\begin{center}
{\includegraphics[width=0.45\textwidth]{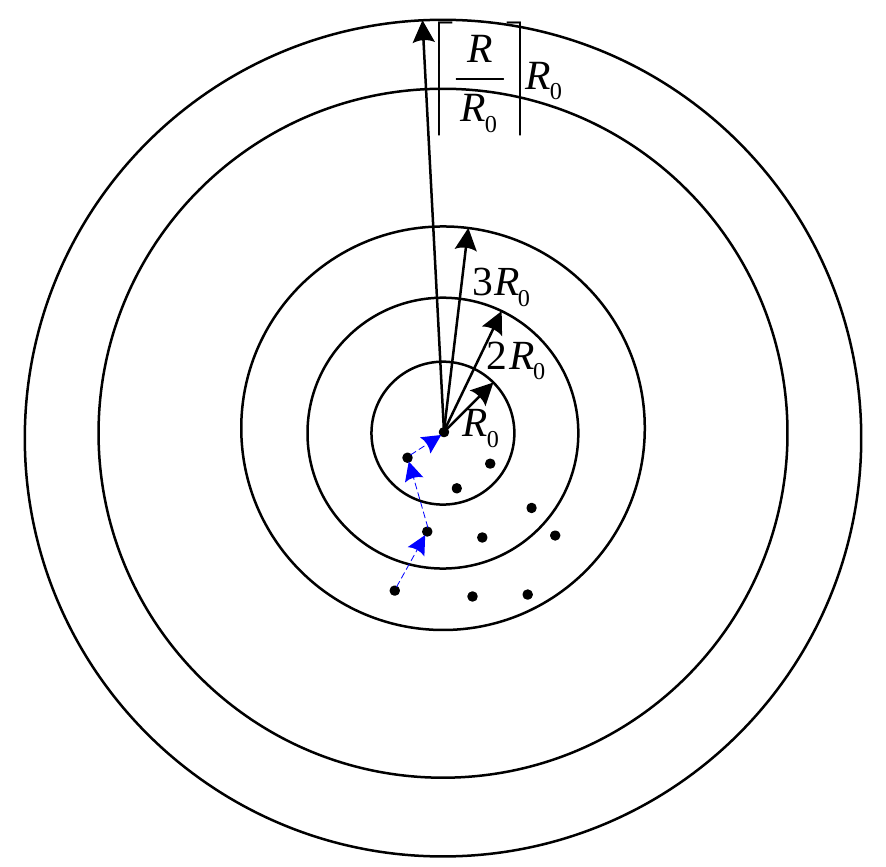}}
\caption{The layer model for a cluster. The CM in the outer layer can only send its data packet to another CM in the inner layer in an opportunisitc manner; and the CMs in the certer layer send packets directly to the CH.}\label{circles}
\end{center}
\end{figure}

Now we consider the energy consumption for intracluster transmission for one MS tour. Each CM needs to transmits a packet of size $T$. As assumed in Section \ref{clustering}, a CH is at the centre of a cluster with the radius of $R$. With the above transmission strategy, the CMs within a cluster follow the multihop fashion to transmit packet to their CH. 
The cluster area is divided into $\lceil \frac{R}{R_0}\rceil$ rings, as shown in Fig. \ref{circles}. The CMs within the first ring can communicate with the CH in one hop; those in the second need two hops to communicate with the CH; and CMs within the final can communicate with the CH in $\lceil \frac{R}{R_0}\rceil$ hops. Let $h=\lceil \frac{R}{R_0}\rceil$ be the largest hop distance between a CH and its furthest CM (see Fig. \ref{circles}) and $\rho=\frac{N}{LW}$ be the density of the sensor nodes. In the first layer, the number of CMs is $\pi R_0^2\rho-1$. Note, the 1 here refers to the CH. The number of CMs in the second layer is $\pi 2^2R_0^2\rho-\pi R_0^2\rho$. Following this pattern, the number in the final layer is $\pi h^2R_0^2\rho-\pi (h-1)^2R_0^2\rho$. Thus, the energy consumption by CMs in one cluster for one MS tour is:
\begin{equation}
\begin{aligned}
&TP((\pi R_0^2\rho-1)\cdot 1 + (\pi 2^2R_0^2\rho-\pi 1^2R_0^2\rho)\cdot 2 +\\
&\cdots+(\pi h^2R_0^2\rho-\pi (h-1)^2R_0^2\rho)\cdot h)\\
=&TP\left(\pi R_0^2\rho\left(\frac{h(h+1)(4h-1)}{6}\right)-1\right)\\
\end{aligned}
\end{equation}
Since the total number of clusters is $C=\frac{LW}{\pi R^2}$, the total energy consumption by all CMs is:
\begin{equation}
E_{intra}=TP\left(\pi R_0^2\rho\left(\frac{h(h+1)(4h-1)}{6}\right)-1\right)\frac{LW}{\pi R^2}
\end{equation}
Approximately, substituting $h=\frac{R}{R_0}$, we obtain:
\begin{equation}\label{e_intra}
E_{intra}=\frac{2NTP}{3R_0}R-\frac{NTPR_0}{6}\frac{1}{R}-\frac{LWTP}{\pi}\frac{1}{R^2}+\frac{NTP}{2}
\end{equation}

As seen from (\ref{e_intra}), the energy consumption for intracluster transmission increases with $R$. When $R$ is large enough to cover all the sensor nodes in the field, there is only one cluster. This equals to the case of the conventional strategy, i.e., every sensor node sends its raw reading to a static sink in the center, which is energy inefficient.

Next, we consider the energy consumption for intercluster transmission. Benefiting from CS theory, all the CHs transmit the same size of CS measurements, i.e., $MT$, no matter where they locate. Averagely, two CHs are $2R$ apart from each other. Thus, the packet of CS measurements may travel $\lceil \frac{2R}{R_0}\rceil$ hops in average between two CHs. Approximately, the CS measurements from the CHs (which are near the periphery of the field) to MS need to travel $\lceil \frac{R}{R_0}\rceil$ hops. There are $C=\frac{LW}{\pi R^2}$ clusters in the network. We use $\frac{2RL}{LW} C$ \footnote{Since the CHs are assumed to be uniformly distributed and $C$ CHs occupy the area of $LW$, then the number of CHs occupying the area of $2RL$ is $\frac{2RL}{LW} C$.} to estimate the number of CHs which transmit data to MS. Then, the energy consumption by the CHs can be approximated by 
\begin{equation}\label{e_inter}
\begin{aligned}
E_{inter}=&MTP\left(\left(C-\frac{2RL}{LW}C\right)\frac{2R}{R_0}+\frac{2RL}{LW}C\frac{R}{R_0}\right)\\
=&\frac{2LWMTP}{\pi R_0}\frac{1}{R}-\frac{2L}{\pi R_0}
\end{aligned}
\end{equation}

With (\ref{e_intra}) and (\ref{e_inter}), we have the total energy consumption by all the sensor nodes:
\begin{equation}\label{energy_cluster}
E=a_1R+a_2\frac{1}{R}+a_3\frac{1}{R^2}+a_4
\end{equation}
where $a_1=\frac{2NTP}{3R_0}$, $a_2=\frac{2LWMTP}{\pi R_0}-\frac{NTPR_0}{6}$, $a_3=-\frac{LWTP}{\pi}$ and $a_4=\frac{NTP}{2}-\frac{2L}{\pi R_0}$. 

Since there is at least one cluster in the network, we have the maximum of $R$, i.e., $R_{max}=\sqrt{\frac{LW}{\pi}}$. Then, we can figure out the optimal $R$ (no larger than $R_{max}$) to minimize (\ref{energy_cluster}).

The above analysis model has limitations in practice. First, for the intracluster communication, since the CH may not be exact the center, the corresponding energy consumption may not follow the layer pattern strictly, i.e., (\ref{e_intra}) is not a precise model. Second, the hop count between any two CHs may not always be $\frac{2R}{R_0}$, thus (\ref{e_inter}) also provides approximate result.
\section{Implementation}\label{implementation}
Let the optimal cluster radius be $R_*$, which is the solution of (\ref{energy_cluster}) and the corresponding energy consumption for data transmission be $E_*$. As stated in Section \ref{analysis}, to achieve $E_*$, the network should be evenly divided into a number of clusters with the radius of $R_*$. Further, the CHs should be at the centres of the clusters. According to (\ref{cluster_number}), we can calculate the optimal cluster number, $C_*$. Then, the clustering problem can be transferred to $k$-median problem \cite{kmedian}, which is to figure out $C_*$ positions for CHs such that the total hop distance from CMs to their CHs is minimized. It has been shown that $k$-median problem is NP-hard \cite{kmedian}. In this section, we provide two distributed implementations for clustering. 

\subsection{Implementation 1 (I1)}
I1 consists of: MS broadcast, CH election, CM-CH attachment, CH-CH association and uploading.

\subsubsection{MS broadcast}
In the first tour, MS broadcasts three values: $h_*=\lceil\frac{R_*}{R_0}\rceil$, $p_*=\frac{C_*}{N}$ and a hop count (initially equals to 0), within the range of $R_0$ when it moves on the predefined path. The nodes that receive such information increase the hop count by 1 and forward it to the nodes nearby. When MS finishes the tour, all the sensor nodes know $h_*$, $p_*$ and their hop counts. If more than one  message is received, a node sets its hop count as the smallest one. To avoid confusion with the below description, such hop count is referred to as intercluster hop count and it will be used to assist intercluster transmission.

\subsubsection{CH election}
The next step is to select $C_*$ CHs. Although we assume the CHs are uniformly distributed and located in the centres of the clusters in the analytical model, we point out it is not necessary to make them at centres in practice. First, selecting such CHs is a difficult task especially when the geographical information is unavailable. 
Second, in practice the role of CH will rotate within cluster. Although the CH at the centre leads to the minimum intracluster energy consumption, such situation will not last for long. 
With these observations, we present a simple method to select approximate $C_*$ CHs. Every sensor node generates a number 1 with the probability $p_*$ or 0 with the probability $1-p_*$. The sensor nodes which generate the number 1 become CHs no matter whether they will be the centres of the clusters. 

\subsubsection{CM-CH attachment}
Then, the CHs invite non-CHs into their cluster by broadcasting an Invitation message within range of $R_*$. The sensor nodes within range can hear the message and reply a Confirmation message to join the cluster. Note, it is possible that one node receives more than one Invitation. In this case, it selects the one with the strongest signal strength. It is also possible that some nodes are outside $R_*$ range from all the CHs, which are called isolated nodes. In this case, after the invitation process, i.e., timer $T_{invite}$ expires, the isolated nodes actively sends Joining request within range of $aR_*$, where $a=2,\ldots, a_m$ and $a_m=\lceil \frac{\sqrt{(L^2+W^2)}}{R_*}\rceil$. The selection of $a_m$ guarantees that no node is finally isolated. 

\subsubsection{CH-CH association}
Next, every CH needs to build up the closer neighbour set. As discussed in Section \ref{clustering}, a CH $v$ in the closer neighbours of CH $u$ has smaller intercluster hop count than $u$. The process of building up the closer neighbour set is as follows. Each CH broadcasts an Association message within range of $bR_*$, $b=1,\ldots, a_m$. $b$ starts from 1 and increases by 1 until the CH hears a Confirmation message from another CH. Note, a CH only replies the Confirmation message to the CHs with larger intercluster hop count. Further, some CHs will not receive any Confirmation message because there are no other CHs with smaller intercluster hop count. After timer $T_{association}$ expires, these CHs figure out the uploading nodes (with intercluster hop count as 1) within the clusters.

\subsubsection{Uploading}
From the second tour, CMs transmit raw readings to CHs; CHs transmit CS measurements to the uploading nodes; and the uploading nodes upload data to MS. 

To achieve longer network lifetime, the role of CH is rotated within clusters every a certain number of tours. Further, clusters will reform every a larger number of tours. The CH rotation and reclustering strategies help to balance the energy consumption among sensor nodes. The CH role can be shifted to the CM which has the largest residual energy. Reclustering can be achieved by another round of the above procedure.

\subsubsection{Discussion}
The advantage of I1 is the simpleness. It does not require location and distance information; while only uses hop information. The disadvantage is that the uniform CH distribution is not guaranteed, since CHs are randomly generated. Reselecting CHs after clustering can push CHs to the centres and then decrease intracluster energy consumption. But, this requires further overhead and only improves the energy efficiency for a short period. 

Now we consider the message complexity. 
For a CH, at first it broadcasts an Invitation message to invite non-CHs to join its cluster and then waits for Confirmation. If there are isolated nodes, suppose the number of which is $N_I$ ($N_I \ll N$), in the worst case, a CH replies at most $N_I$ Confirmation messages. To build up the closer neighbour set, a CH may broadcast at most $a_m$ Association messages and reply at most $C_*$ Confirmation messages. Therefore, the total number of messages a CH transmits is at most: $1+N_I+a_m+C_*$. Since $a_m$ is upper bounded and $C_*\ll N$ (see Section \ref{evaluation}), the message complexity in the worst case of a CH is $O(1)$. Next, we consider the case for a CM. If it is within $R_*$ range of any CHs, it replies one confirmation message to join cluster; otherwise, it may broadcast at most $a_m-1$ Joining request. Therefore, the message complexity is $O(1)$.  


\subsection{Implementation 2 (I2)}
I1 cannot guarantee the evenness of CH distribution. The reason is that due to the randomness of generating CHs, two nodes which locate in close proximity are possible to both become CHs. I2 is inspired by the observation that if the CHs are evenly distributed, they keep away from each other. So, we introduce a competing mechanism to ensure that every CH is at least a certain distance away from any other CHs.
\subsubsection{CH election}
Every node becomes a CH candidate at the beginning of a new cluster formation phase. The competing rules are as follows. A CH candidate broadcasts a Competing message involving its ID and residual energy within the range of $2R_*$. Then, it listens for Competing messages from other CH candidates. If it receives Competing messages, it compares the contained energy recording with its own (denoted by $e$). \textit{Rule 1}: if the latter is larger, it broadcasts a Win message and becomes a CH directly; otherwise, it cannot decide to be CH or not and need to check Rule 2 and Rule 3. \textit{Rule 2}: if a CH candidate receives a Win message, it broadcasts a Quit message. This is because receiving a Win message means its residual energy is not largest within the $2R_*$ range and it cannot become CH. A CH candidate which has not received win messages, listens for Quit message. \textit{Rule 3}: if a CH candidate receives a Quit message, it removes the corresponding CH candidate from the comparing list and checks whether its residual energy is the largest among the rest. This process lasts until $T_{election}$ expires. The pseudocode of the competing scheme is shown in Algorithm \ref{algorithm1}. 

\begin{algorithm}[t]
\caption{CH election by a node}\label{algorithm1}
\begin{algorithmic}[1]
\State Broadcast Competing message within $2R_*$.
\State Listen for Competing messages and organize the energy recordings into a list $E_{residual}$.
\While{timer $T_{election}$ has not expired}
\If{$e>\max(E_{residual})$}
\State Broadcast Win message within $2R_*$; exit
\EndIf
\State Listen for Win message
\If{Win message is received}
\State Broadcast Quit message within $2R_*$; exit
\EndIf
\State Listen for Quit message
\If{Quit message is received}
\State Remove the energy recording of the Quit message from $E_{residual}$.
\EndIf
\EndWhile
\end{algorithmic}
\end{algorithm}

The other steps for the cluster formation and data uploading are the same as I1. 

\subsubsection{Discussion}
We claim that with the competing mechanism, every CH is out of the $2R_*$ ranges (or $2h_*$ hops) of all the other CHs. Further, in I2, the nodes with large residual energy are likely to become CHs. 


I2 differs from I1 by the competing mechanism, which results in more overheads. We discuss the message complexity for the competing mechanism. First, a CH candidate needs to broadcast one Competing message. After that, it will broadcast either a Win message to announce its status as CH or a Quit message. Each node transmits two more messages than I1. Therefore, message complexity at a node in I2 for clustering is still $O(1)$. The benefits of such two more message transmissions will be demonstrated in the next section through simulation experiments.

\section{Evaluation}\label{evaluation}
We conduct simulation experiments to evaluate the performance of the proposed approach and compare these finding against some related methods.

\subsection{Settings}
We simulate a set of WSNs, 
where the sensor nodes are assumed to be distributed  independently  and uniformly in a field of 20 unit $\times$ 10 unit. A MS moves along the long edge and $T$=20 unit. The number of sensor nodes ($N$) ranges from 200 to 800. The number of CS measurements $M$ is set to be 40. The single hop transmission range $R_0$ is 1 unit and the corresponding power $P$ is 1 unit. All the simulations are conducted by Matlab.

\subsection{Compared algorithms}
Since the performance of reconstruction accuracy by CS for data collection in WSNs have already been investigated widely in previous work, here we pay attention to the energy efficiency of the proposed CS and MS combination approach and compare it with other related work. 
 In particular, we consider the below two approaches for comparison, both of which use MS and are location unaware.

\textit{MASP} \cite{GAO11}: formulates the data collection using MS as a Maximum Amount Shortest Path (MASP) problem. It targets on assigning sensor nodes properly to subsinks such that the overall throughput is maximized and each sensor node uses the shortest path to transmit the sensory data to its subsink. 

\textit{MobiCluster} \cite{KONS12}: uses clusters for data collection. The clusters are with unequal sizes depending on the distance to MS' trajectory. The sensor nodes have three roles: rendezvous node (RN), CH and CM. RNs are the nodes within the communication range of MS when it moves. CMs send raw readings to CHs. CHs execute an aggregation algorithm to downsize the data packet and transmit it to the CHs closer to MS's trajectory or RNs. Finally, RNs upload data to MS. 

\subsection{Results}
We first demonstrate the performance of I1 and I2. We consider the metrics of CH number and the evenness of CH distribution, which is described by the average hop distance (AHD) from CMs to CHs. When the CH numbers are the same, lower AHD means CHs are more evenly distributed.

Fig. \ref{CH_number} and Fig. \ref{AHD} compare the CH numbers and AHDs by I1, I2 and the analytical model. Both display the statistic results for 30 independent topological networks under each network size ($N$). 
As seen in Fig \ref{CH_number}, the average CH numbers by I1 and I2 are close to the analytical results ($C_*$), especially I2 when $N$ is small. The CH numbers by I1 and I2 can be either smaller or larger than $C_*$. The reason for I1 is the random CH selection. For I2, the reason for the upper side is the "margin effect". If a node near the margin of the field is with large residual energy, it is likely to become a CH. Then, the competition scheme may lead to more than $C_*$ CHs. The reason for lower side is that two CHs may be more than $2R_*$ away from each other. See Fig. \ref{example1} for an example. This occurs when a CH candidate $u$ at the edge of $2R_*$ range of CH candidate $v$ is covered by CH candidate $w$, whose residual energy is larger than $u$. Thus, $v$ and $w$, which are more $2R_*$ apart, become CHs. We also find that the range of CH numbers by I2 is more narrow than I1, which indicates that I2 is able to obtain the number of CHs more close to $C_*$. Correspondingly, the AHDs by I2 are generally smaller than I1, which indicates that I2 achieves better CH distribution than I1.

\begin{figure}[t]
    \centering
    \begin{subfigure}[b]{0.45\textwidth}
        \includegraphics[width=\textwidth]{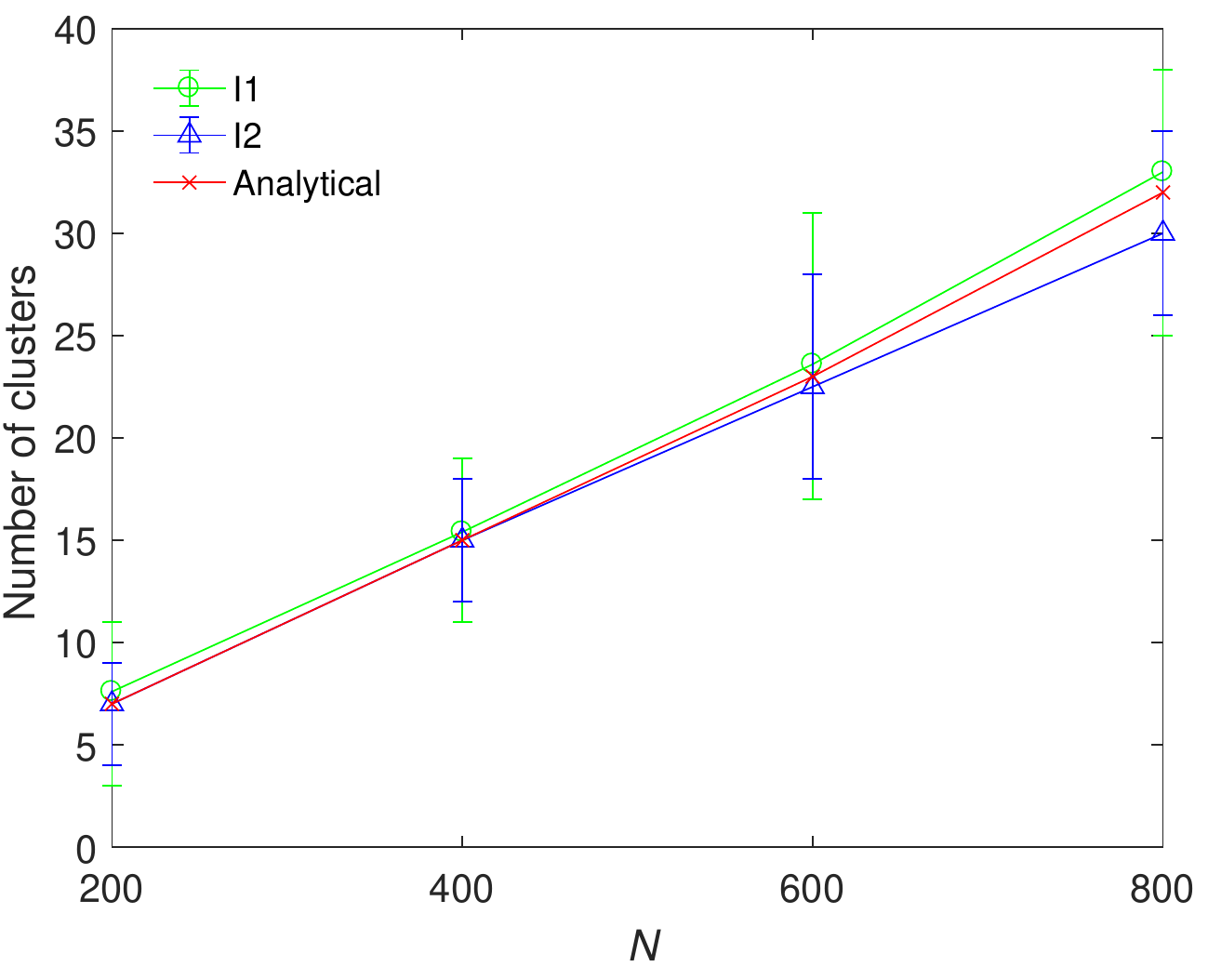}
        \caption{}
        \label{CH_number}
    \end{subfigure}
    \begin{subfigure}[b]{0.45\textwidth}
        \includegraphics[width=\textwidth]{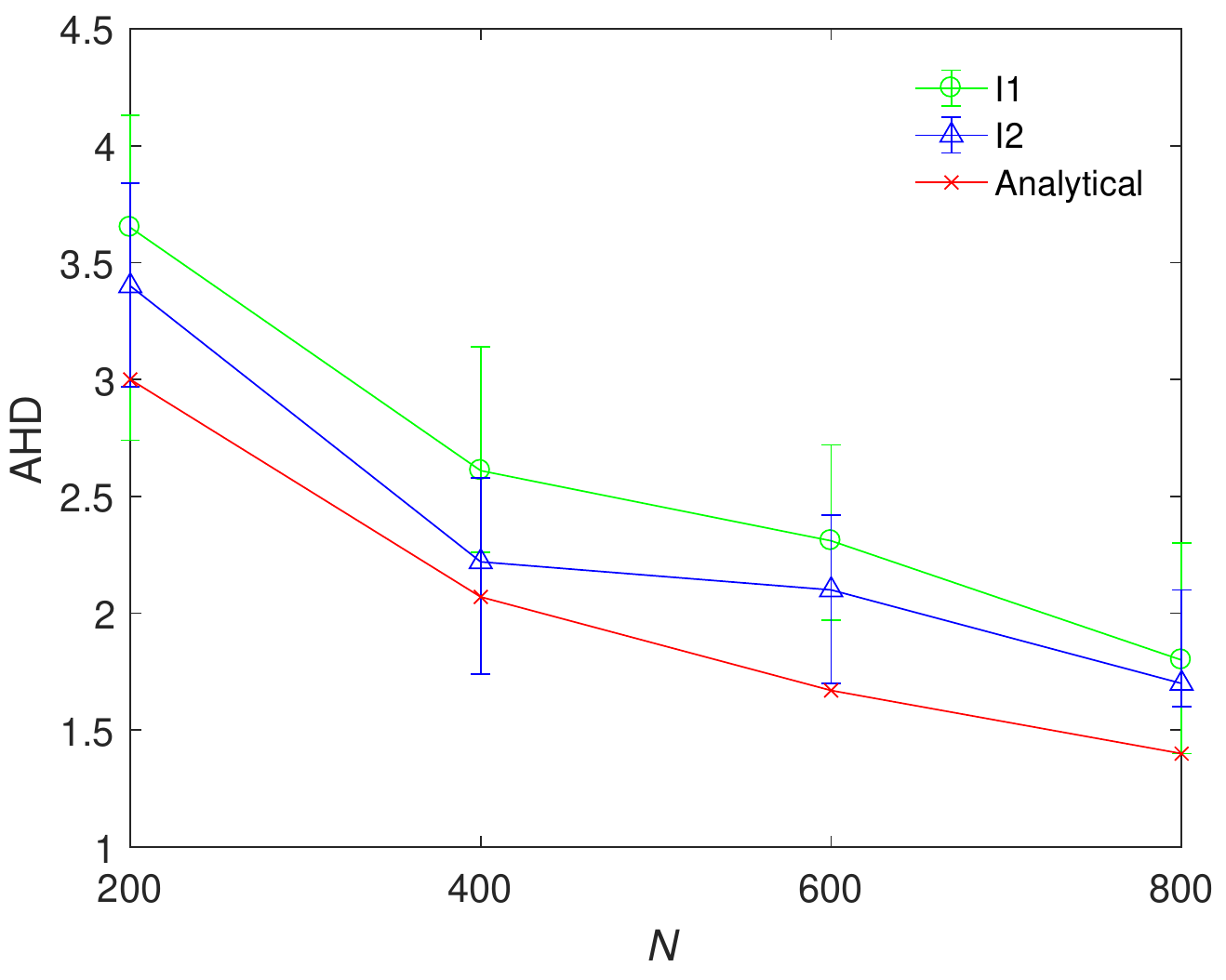}
        \caption{}
        \label{AHD}
    \end{subfigure}
    \caption{Comparison of I1 and I2. (a) CH numbers. (b) AHD.}
\end{figure}

%

\begin{figure}[t]
\begin{center}
{\includegraphics[width=0.5\textwidth]{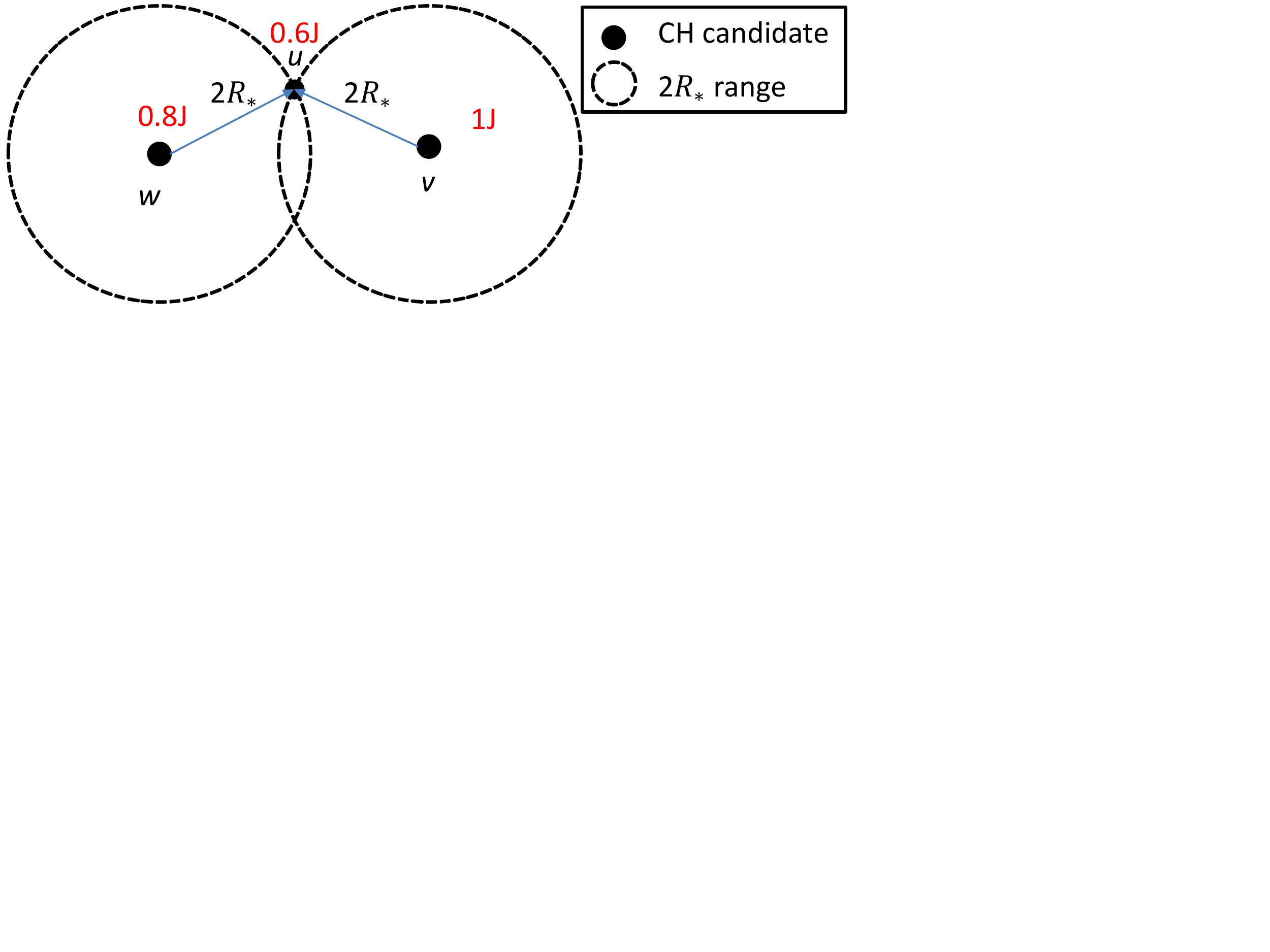}}
\caption{An example of two CHs are more than $2R_*$ apart.}\label{example1}
\end{center}
\end{figure}

Now we compare with MASP and MobiCluster. For MobiCluster, we set several parameters including the cover ranges for the clusters and the aggregation ratio. As we do not focus on the influence of these parameters on MobiCluster, we only select typical values which are suitable to our settings. Here, we set two cluster radii as 2 and 3 units and the aggregation ratio as 30\%. For simplicity, the aggregation is only adopted within clusters rather than interclusters. The probability of a node becoming a CH candidate is 0.2 \cite{CHEN09}. For each network size, we randomly select one of the 30 simulated networks, and apply I1, I2, MASP and MobiCluster to investigate the performance in energy efficiency. We focus on two metrics: average node energy consumption and network lifetime. See Fig. \ref{energy_consumption} and \ref{lifetime}.

Fig. \ref{energy_consumption} demonstrates the average energy consumption of a node per MS trip. With the increase of $N$, the energy consumption by a node in I1 and I2 decrease, which follow the trend of the analytical model. This is because that when $N$ becomes larger, the number of clusters increases (see Fig. \ref{CH_number}); while AHD becomes smaller (see Fig. \ref{AHD}). Since the multihop communication is adopted,  smaller AHD leads to lower energy consumption for intracluster transmission. Further, the intercluster energy consumption is unrelated to $N$, see Eq. (\ref{e_inter}). Thus, the total energy consumption decreases with $N$. In Fig. \ref{energy_consumption}, a node in I2 consumes less energy than that in I1 averagely, but both of them consume more than the analytical results. The reason is the same as discussed previously, i.e., the CH distributions in I1 and I2 are difficult to be precisely uniform. The energy consumption by Mobicluster has the similar trend. 
But the dropping rate is smaller than I1 and I2. The reason lies in the CS technique, the benefit of which is more obvious in the cases of large $N$. The average node energy consumption in MASP is not influenced much by $N$ and MASP performs the worst, which indicates that the flat infrastructure is less efficient than cluster infrastructure. 

\begin{figure}[t]
    \centering
    \begin{subfigure}[b]{0.45\textwidth}
        \includegraphics[width=\textwidth]{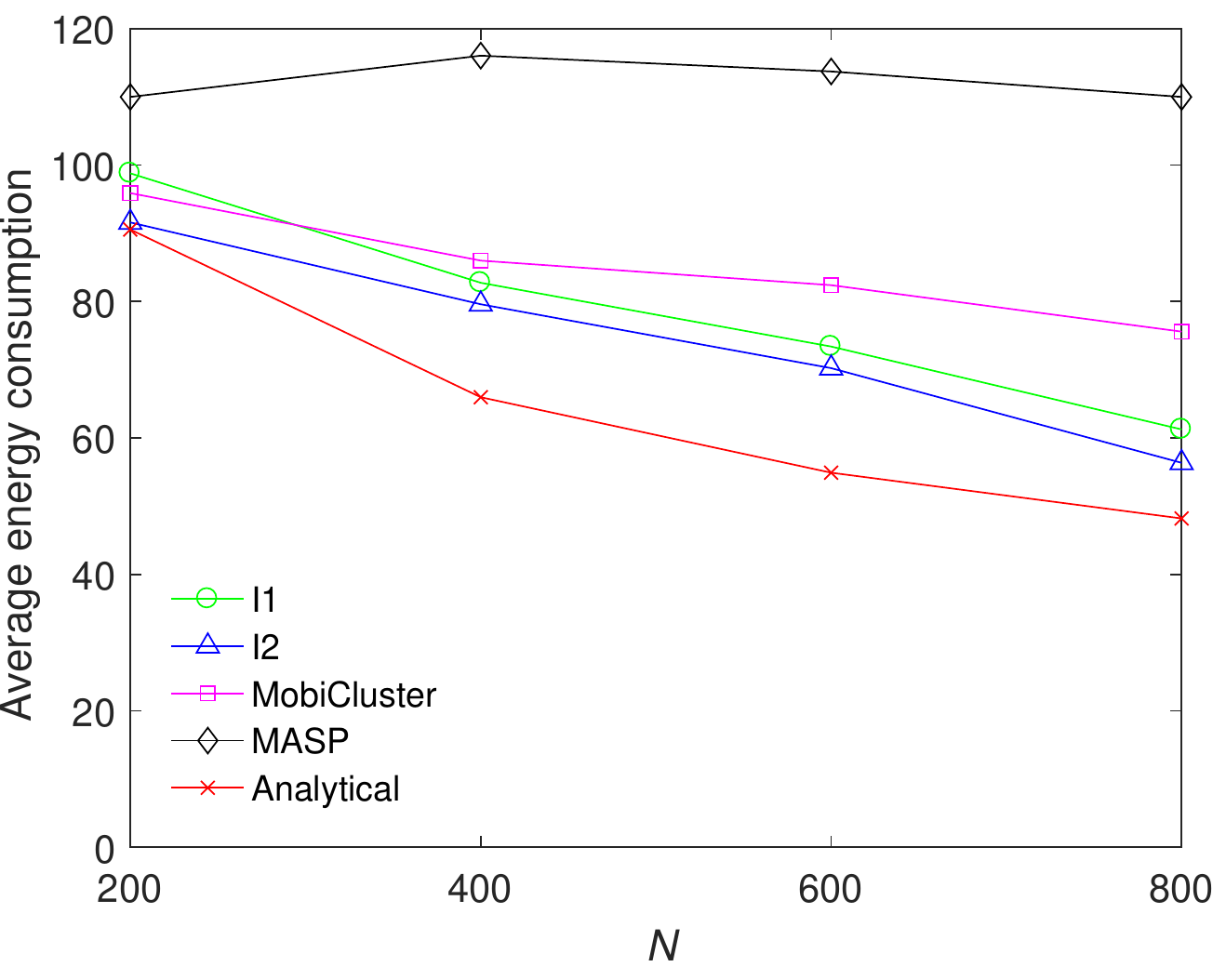}
        \caption{}
        \label{energy_consumption}
    \end{subfigure}
    \begin{subfigure}[b]{0.45\textwidth}
        \includegraphics[width=\textwidth]{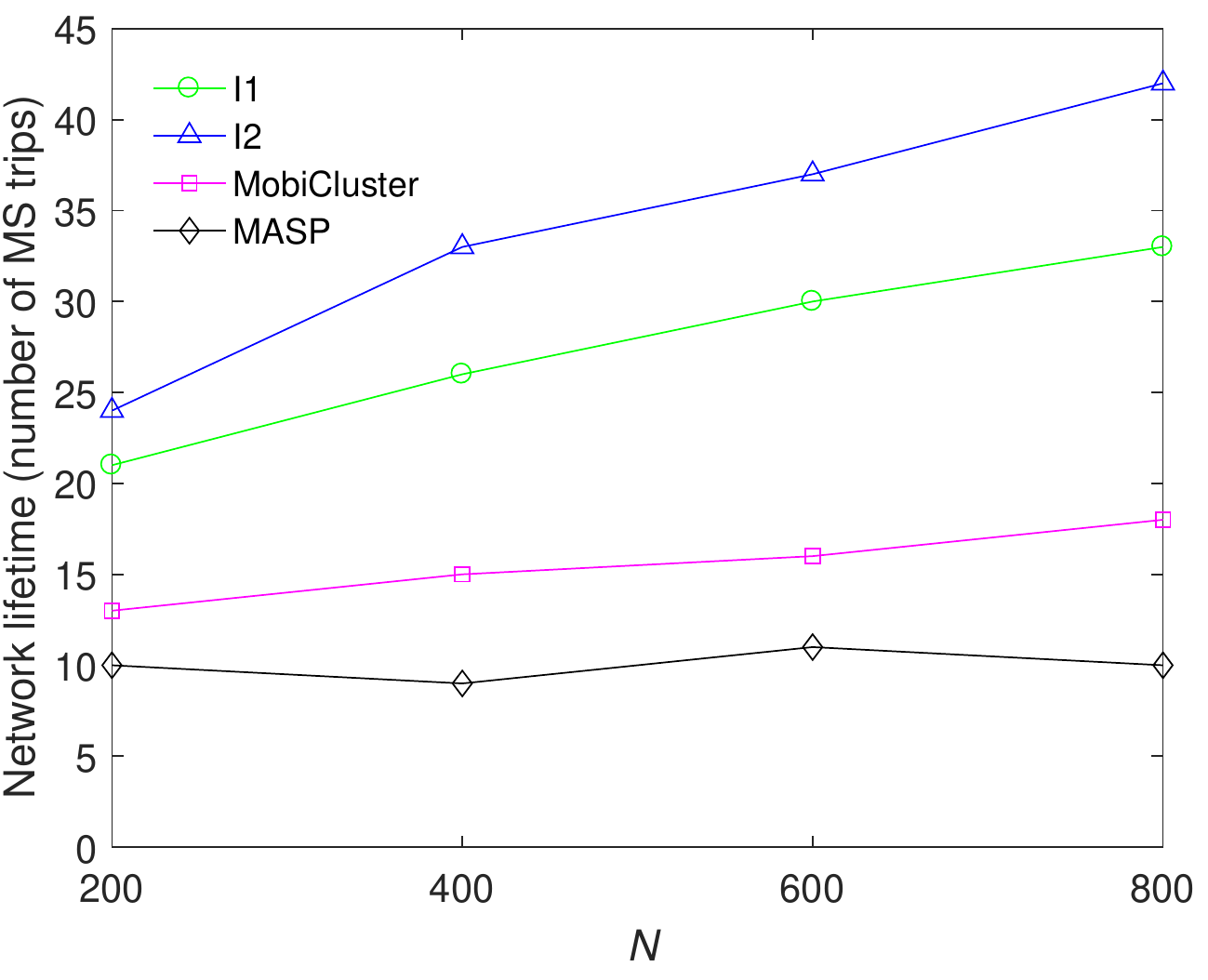}
        \caption{}
        \label{lifetime}
    \end{subfigure}
    \caption{Comparison of I1, I2, MASP, MobiCluster and the analytical model. (a) Energy consumption. (b) Network lifetime.}
\end{figure}


Fig. \ref{energy_consumption} demonstrates that the proposed approach is more energy efficient than the compared ones. But, it only provides the performance for a single MS trip. To have an insight of the performance of proposed approach, we further investigate the network lifetime\footnote{The network lifetime refers to the time duration from network deployment to the first node runs out of battery.}. The initial energy equipped by the nodes are around 5000 units
and they are not necessarily identical. 
In the following simulations, we only consider the energy consumption for data transmission. Further, for I1, I2 and MobiCluster, reclustering is triggered before each MS trip. But for MASP, the transmission routes remain the same for the whole lifetime, because it is not energy aware. Fig. \ref{lifetime} shows the network lifetimes by I1, I2, MASP and MobiCluster for the considered networks. The network lifetime by MASP is not influenced much by the node density. Its result is the lowest among the four approaches. This is because of the energy hole issue, i.e., the nodes which can communicate with MS relay large number of data packets for other nodes. The network lifetime by MobiCluster increase slightly with the network size and it performs better than MASP due to aggregation strategy. The network lifetimes by I1 and I2 increase more quickly than MobiCluster, which benefits from the utility of the hybrid CS technique. The hybrid CS technique reduces the number of the transmitted packets, thus the energy consumption by the sensor nodes is also reduced, which in turn improves the network lifetime. I2 improves the network lifetime by around 1.2 and 2 times respectively compared to MobiCluster and MASP. 


\section{Summary}\label{conclusion5}
This chapter considers the problem of data collection in delay-tolerant WSNs with constrained MS where the hybrid CS technique is used. We have presented a complete data collection strategy which accounts the characteristics of both MS and CS technique: the network is divided into clusters. We provided an analytical model for the energy consumption by the proposed data collection strategy, through which the analytical cluster radius is obtained. We further discussed two fully distributed implementations, which do not require the location information of sensor nodes. The message complexities are both $O(1)$. Extensive simulations were conducted and the comparisons with two related approaches, which do not use CS technique, were provided. The simulation results display that the proposed approach, especially I2, is more efficient than the alternatives and improves the network lifetime by about 1.2 and 2 times compared to MobiCluster and MASP. The results are also presented in \cite{huang17CS}.
 
In this chapter, we did not consider the issues associated with practical implementation, such as the packet loss and node failure. These issues influence the performance of the proposed approach, which are left for future work. Another defect is that, also similar to Chapter \ref{cluster_ms}, only a single MS is considered. In Chapter \ref{cluster_um}, multiple MSs are used.

\chapter{The Unusual Message Delivery Path Construction for Wireless Sensor Networks With Trajectory Fixed Mobile Sinks}\label{cluster_um}
\minitoc
Both Chapter \ref{cluster_ms} and \ref{cluster_cs} consider using a single MS to collect data from WSNs. As mentioned in Chapter \ref{literature}, single MS is associated with many practical issues, such as long data collection latency. In this Chapter, we focus on using multiple mobility constrained MSs to collect data from WSNs where the collection latency is intensive.
\section{Motivation}
Guaranteeing data delivery delay is critical for achieving acceptable quality of service in delay-intolerant applications, such as earthquake or volcanic eruption warning system. The detection of any unusual phenomenon needs to be transmitted to the users as soon as possible, then the users can take further actions. In this chapter, we use the term UM to represent the message containing the detected unusual phenomenon. Researchers have proposed various solutions to the delivery delay problem in WSNs. One class of approaches uses controllable M node \cite{zhao2012optimization, Liang13CSS, kumar13delay, path_opt_14, tang2015dellat, huang2016path}. The basic idea is to find an optimal path for M node such that a certain metric is optimized. These approaches are able to reduce the energy consumption by S nodes significantly, but they also pose several other challenges. Since the searching space for M node's possible position is infinite in the sensing field, it is hard to solve the path optimization problem. Besides, due to the low physical speed of M node, the room of delivery delay improvement is limited.

In this chapter, the Improved-Unusual Message Delivery Path Construction (I-UMDPC) is proposed for WSNs, which exploits predictable mobility, i.e., M nodes are attached to buses. This mobility avoids the control of M nodes' complex movements, which is suitable to the applications in urban areas. Moreover, instead of sending M nodes to visit S nodes, I-UMDPC adopts the concept of multihop communication for delivering UM, which can decrease delivery latency greatly. 

One difficulty of transmitting UM to M nodes is the management of routes. Since M nodes are moving, for a S node which has a UM to send (henceforce called source node), the routes to M nodes vary with time. To achieve successful UM delivery, S nodes need to keep track of the latest locations of M nodes. An ideal way is to flood the network once M nodes move away, such as \cite{twinroute09, stashing15}. However, it results in a huge network overhead. Thus, frequent propagation of the sink location updates should be avoided if the S nodes are not able to harvest energy from the environment. An alternative approach is based on clusters \cite{HCDD06, cluster_10, huang2017vtc} or virtual grids \cite{TTDD05, grid_09, VGDRA15}, where only cluster heads (CHs) or grid heads need to update the positions of M nodes, which significantly reduces the energy expenditure. I-UMDPC uses the cluster structure. The S nodes are divided into a number of clusters, each of which consists of one CH and several cluster members (CMs). Further, a special node is placed at each bus stop (henceforce called B node). The B nodes store the timetable of bus operations, and together with CHs, assist the source node to deliver UM. With these components, the overall procedure of I-UMDPC is as follows. Upon detecting the unusual phenomenon, the source node sends UM to its local CH (source CH). The source CH communicates with B nodes through other CHs to gather the arrival times of the coming buses. Based on these arrival times, the source CH determines a subset of B nodes as delivery targets, such that UM can be picked up within the allowed delay. When the buses come, B nodes upload UM to the on-board M nodes. 

Another difficulty of I-UMDPC lies in the uncertainty in the bus operation. Considering the traffic on road, a bus may arrival at a bus stop on time, early or late. This feature makes the timetable stored at B nodes unreliable. The stop duration at a stop is also uncertain, which depends on the passengers to get on or off. Both types of the above uncertainties impact on whether UM can be delivered in time. In this paper, we propose an optimization problem which aims at using the minimum energy to transmit UM and in the meanwhile maintaining the successful delivery probability to a high level. 

The main contribution of this chapter is I-UMDPC, which takes into account the actual features of bus operation, i.e., the uncertainties in the arrival times as well as the stop durations. The data collection system uses buses, which already exist in the environment, to carry M nodes. We conduct extensive simulations as well as practical experiments on our testbed to demonstrate the advantages of the proposed approach against the alternatives. We find that I-UMDPC is able to route UM to M nodes more reliably and efficiently than the alternatives.

The rest of this chapter is organized as follows. Section \ref{protocol} presents the proposed routing protocol in details. The optimization for selecting target B nodes is formulated. Section \ref{simu} demonstrates extensive simulations inclusive the influence from various factors. Section \ref{experiment} shows the experimental results. Finally, Section \ref{Summary} gives a brief summary of this chapter. The publications related to this chapter include \cite{huang2017umdpc}, \cite{huang2017delay}.
\section{Data Dissemination Protocol}\label{protocol}
The main objective of this work is to develop a routing method that delivers UM from source node to M nodes in an energy efficient way such that M nodes receive UM within the allowed latency. In this section, we first present the basic assumptions of the system and then describe our approach. The main notations in this chapter are summarized in TABLE \ref{notations6}.
\begin{table}[t]
\begin{center}
\begin{threeparttable}
\caption{Notations}\label{notations6} 
  \begin{tabular}{|c|l|}
    \hline
     Notation & Description\\  \hline
     $R$ & Candidate target B node set\\  \hline
     $T$ & Target B node set \\ \hline
     $c$ & The cost to transmit data packet to a B node\\ \hline
     $X$ & The instance a bus to arrive at a stop\\ \hline
     $Y$ & The duration a bus stops at a stop \\ \hline
     $\Delta$ & The maximum allowed delivery latency \\ \hline
     $t$ & The time required to transmit a data packet to a B node  \\ \hline
     $h$ & The hop distance from a CH to a B node\\ \hline
     $d$ & Distance between to node\\ \hline
     $E_t$ & Energy consumption for transmission\\ \hline
     $\alpha$ & The minimum probability of successfully received the data packet within $\Delta$\\ \hline
     $\beta$ & The time required for single hop transmission\\ \hline
     $PSD$ & The probability of successfully delivered\\ \hline
     $RSD$ & Ratio of successful delivery \\ \hline
     $RPU$ & Ratio of pick-up \\ \hline
    \end{tabular}
\end{threeparttable}
\end{center}
\end{table}

\subsection{Assumptions}
We assume the following network characteristics:
\begin{itemize}
\item S nodes are randomly deployed and remain static.
\item S nodes are equipped with limited initial energy while B nodes and M nodes do not have any energy constraints.
\item M nodes move on their predefined paths, along which static B nodes are placed. 
\item M nodes' movements follow the timetable, but they may arrive at bus stops early, on time or late.
\item M nodes stop at bus stops for a while and the durations are also uncertain.
\end{itemize}

\subsection{Improved-Unusual Message Delivery Path Construction}
The overall procedure of I-UMDPC is summarized as follows.

\begin{enumerate}
\item Announcement. The source node announces the source CH that it has data for delivery and transmits the data to the source CH.
\item Quote. The source CH sends Quote to the nearby CHs and B nodes. The nearby CHs forward the Quote until it reaches a B node. 
\item Reply. The B node transmits a Reply message to the source CH. The Reply message contains the arrival time of the coming bus.
\item Target selection. Source CH selects target B nodes.
\item Delivery. Source CH transmits UM to the targets.
\item Upload. The target B nodes upload the data to the on board M nodes when the buses arrive.
\end{enumerate}

Below we provide more details for I-UMDPC. I-UMDPC is based on a two layer structure where the higher layer consists of B nodes and CHs and the lower layer consists of CMs. Since this chapter mainly focuses on the delivery of UM, not the cluster formation, any existing approaches can be used. In this chapter, we adopt Max-Min D-Cluster Formation Algorithm \cite{maxmind}. This algorithm provides a load-balanced clustering solution by constructing D-hop clusters, i.e., any CM is at most D hops away from its CH. An illustration is shown in Fig. \ref{illustration} where D equals to 2. 

\begin{figure}[t]
\begin{center}
{\includegraphics[width=0.5\textwidth]{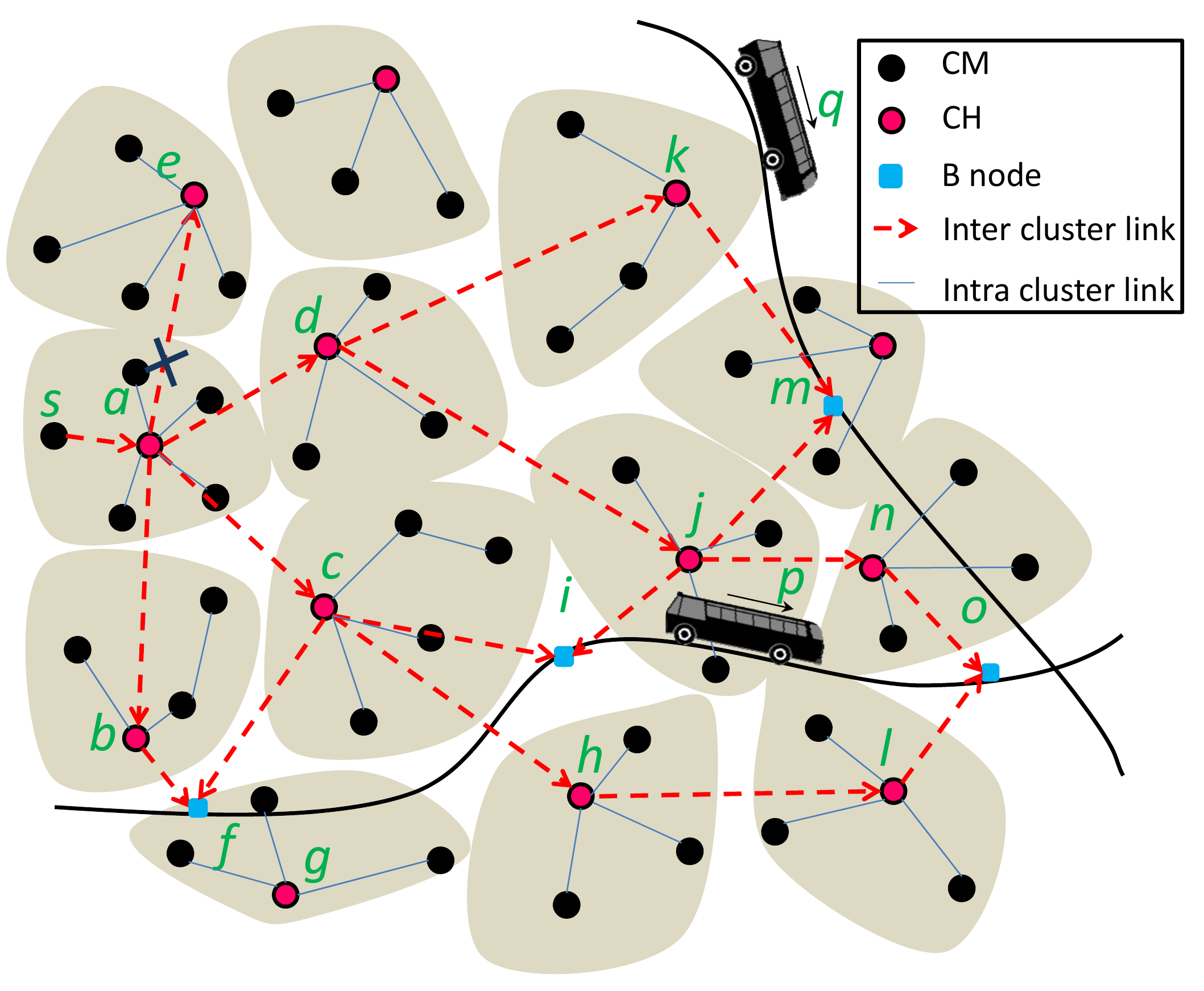}}
\caption{An illustrative example of I-UMDPC operation.}
\label{illustration}
\end{center}
\end{figure}

Any S node can be the source node, no matter if it acts as CH or CM. If a CH is the source node, Announcement is skipped; otherwise, CM announces its CH about the detection of interested event. Based on the cluster structure, every CH is able to learn the shortest path to each B node. Then, the source CH sends Quote to B nodes across the higher layer. As shown in Fig. \ref{illustration}, CH $a$ is the source CH of source node $s$ and it sends Quote to CH $b$, $c$ and $d$ rather than $e$, since $a$ is aware of the cluster structure. In Quote message, the hop count of being relayed is recorded. In the Quote released by source CH, hop count is 0, which is increased by 1 per relay. 

Once a B node receives a Quote message, in Step 3 it transmits a Reply message to the source CH through the reverse path. Such Reply message contains the B node ID (denoted by $i$), the arrival time of the coming bus (which is obtained from the store timetable), hop count $h_i$ and the energy cost $c_i$. The included information will be used in Target Selection in Step 4. 

Source CH may receive more than one Reply message. 
In Step 4, it executes Target Selection procedure, i.e., from the set of candidate targets (denoted by $R$) select final targets to deliver UM. We provide an optimization based target selection approach, which is formulated as a binary linear program. The output of this optimization is the set of the target B nodes, denoted by $T\subseteq R$. To set up our binary linear program, we define an indicator function $I(i)$, indicating whether B node $i$ belongs to $T$, i.e.,

\begin{equation}\label{indicator}
I(i)=
\begin{cases}
1,\ i\in T\\
0,\ otherwise\\
\end{cases}
\end{equation}

Based (\ref{indicator}), our objective is to minimize the total cost on delivering UM:

\begin{equation}\label{obj_func}
\min J=\sum_{i\in T} c_iI(i).
\end{equation} 
where $c_i$ is the cost to transmit data to B node $i$. 

Now, we consider the model of successful delivery at B node $i$. Let $X_i$ be the arrival instant of a bus at $i$ and $Y_i$ be the stop duration at $i$, with density functions $f_{X_i}(x_i)$ and $f_{Y_i}(y_i)$ respectively. Denote $Z_i$ as the instant when the bus departs from $i$. Then, $Z_i=X_i+Y_i$. Suppose that $X_i$ and $Y_i$ are independent, the density function of $Z_i$ is the convolution of $f_{X_i}(x_i)$ and $f_{Y_i}(y_i)$, i.e.,

\begin{equation}
f_{Z_i}(z_i) =\int_{-\infty}^{+\infty}f_{X_i}(x_i)f_{Y_i}(z_i-x_i)dx_i.
\end{equation}

Let $t_0$ be the time instant when source node detects an interested event. Let $t_i$ be the time cost to transmit a message from source CH to B node $i$. Then, the instant at which UM arrives at B node $i$ can be approximated by $t_o+3t_i$. Here, the 3 considers the time for transmitting Quote, Reply and UM. Note, this formulation neglects the time cost from source node to source CH. Let $\Delta$ be the allowed latency. For simplicity, we subtract $t_0$ from the arrival time of bus, depart time of bus, arrival time of sensory data at B node $i$. Without introducing new notations, from now on, both $X_i$ and $Z_i$ are $t_0$ free.

\begin{figure}[t]
    \centering
    \begin{subfigure}[b]{0.3\textwidth}
        \includegraphics[width=\textwidth]{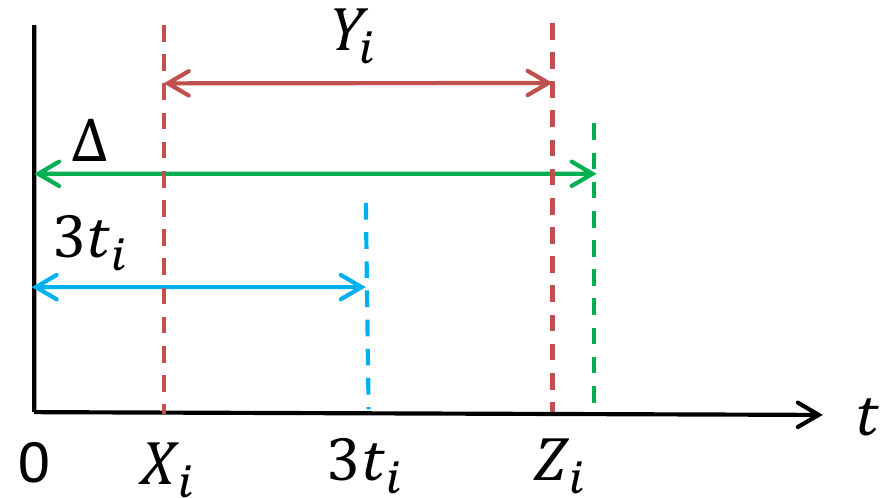}
        \caption{}
        \label{successful_delivery_1}
    \end{subfigure}
    \begin{subfigure}[b]{0.3\textwidth}
        \includegraphics[width=\textwidth]{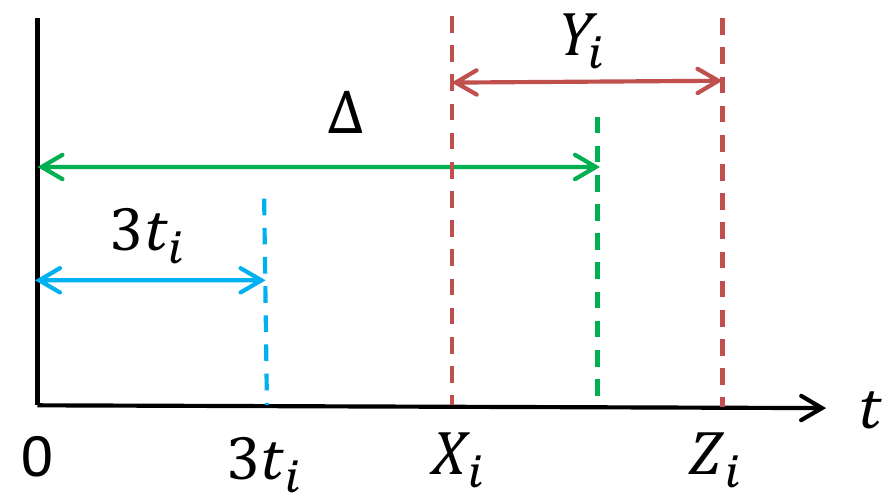}
        \caption{}
        \label{successful_delivery_2}
    \end{subfigure}
    \caption{Demonstrations of successful delivery. (a) Bus arrives early. (b) Data arrives early.}\label{successful_delivery}
\end{figure}

One fundamental requirement of successful delivery at B node $i$ is that UM should arrive at $i$ before the deadline $\Delta$: 
\begin{equation}\label{basic_requirement}
3t_i \leq \Delta.
\end{equation}
Otherwise, transmitting data to $i$ is in vain. Two specific situations of successful delivery are shown in Fig. \ref{successful_delivery}. Fig. \ref{successful_delivery_1} demonstrates Situation 1 where the bus arrives early and UM should arrive before the bus leaves. In this case, the condition is formulated as follows:
\begin{equation}\label{situation1}
X_i\leq 3t_i \leq Z_i.
\end{equation}
Fig. \ref{successful_delivery_2} demonstrates Situation 2 where UM arrives early. Then, it requires that the bus arrives before $\Delta$. In this case, the condition is formulated as follows:
\begin{equation}\label{situation2}
3t_i\leq  X_i \leq \Delta.
\end{equation}

Let $F_{X_i}(x_i)$ and $F_{Z_i}(z_i)$ be the cumulative distribution functions of $X_i$ and $Z_i$ respectively. The probability that Situation 1 occurs can be calculated by
\begin{equation}\label{P1}
P_1(i)=F_{X_i}(3t_i)(1-F_{Z_i}(3t_i)).
\end{equation} 
The probability that Situation 2 occurs can be calculated by
\begin{equation}\label{P2}
P_2(i)=F_{X_i}(\Delta)(1-F_{X_i}(3t_i)).
\end{equation}
Since Situation 1 and 2 are mutually exclusive, the probability of successful delivery at B node $i$ is
\begin{equation}\label{P}
P(i)=P_1(i)+P_2(i).
\end{equation}

We are in the position to introduce the delivery delay requirement. Let $\alpha$ be the threshold for the probability of 
successful delivery of sensory data to any M nodes. The delivery delay requirement is formulated as follows:
\begin{equation}\label{con_nonlinear}
1-\prod_{i\in T} (1-P(i))I(i) \geq \alpha.
\end{equation}

Requirement (\ref{con_nonlinear}) is also called $\alpha$ constraint and it can be transformed into a linear version as follows:
\begin{equation}\label{con_linear}
\sum_{i\in T} ln(1-P(i))I(i) \leq ln(1-\alpha).
\end{equation} 

Considering the basic requirement (\ref{basic_requirement}), we introduce a $\beta$ constraint as follow:
\begin{equation}\label{beta_constraint}
3\beta h_i I(i) \leq \Delta, i\in R.
\end{equation}
where $\beta h_i=t_i$ and $\beta$ is the time cost for one hop transmission.

With these definitions, our optimization problem looks for a set $S$ which minimizes the objective function (\ref{obj_func}) subject to the $\alpha$ constraint (\ref{con_linear}) and $\beta$ constraint (\ref{beta_constraint}). Note, the approach which does not consider the stop time uncertainty is named as UMDPC.

\section{Simulation Results}\label{simu}
The general idea of our simulations is to simulate an interested event at certain time and place, and check if the delivery approaches can successfully transmit UM to any M nodes within $\Delta$. We define that UM is successfully delivered if it is picked up by any M nodes from any target B nodes within $\Delta$. In (\ref{con_nonlinear}), we require that the probability of successfully delivered (PSD) should be no smaller than $\alpha$. 

We consider two metrics related to successful delivery:
\begin{itemize}
\item Ratio of successful delivery (RSD): the ratio of the number of UM successfully delivered to the total number of events occurred.
\item Ratio of pick-up (RPU): the ratio of the number of UMs picked up within $\Delta$ to the number of target B nodes which received UM.
\end{itemize}
RSD describes the reliability of the delivery approach. The larger the RSD, the more reliable the scheme is. RPU describes the efficiency of the delivery scheme. The higher RPU, the more efficient.

Since the B nodes are equipped with limited energy resource, w consider energy consumption on packet transmission as another metric. The energy dissipation model used in previous work, e.g., \cite{CHEN09}, is adopted:

\begin{equation}\label{transmitting}
E_t(l,d)=
\begin{cases}
l\times (E_{elec}+E_{fs}\times d^2),\ if\ d\leq d_0\\
l\times (E_{elec}+ E_{mp}\times d^4),\ if\ d> d_0\\
\end{cases}
\end{equation}
where $E_t(l,d)$ is the total energy dissipated to deliver a single $l$-bit packet from a transmitter to its receiver over a single link of distance $d$. The $E_{elec}$ depends on electronic factors such as digital coding, modulation, filtering, and spreading of the signal. The amplifier energy in free space $E_{fs}$ or in multipath environment $E_{mp}$ depends on the distance from the transmitter to the receiver. The threshold is $d_0$. The parameters in (\ref{transmitting}) are consistent with those in \cite{CHEN09}.

\subsection{Environment set up}

We test our approach in a simulated network, consisting of 400 S nodes randomly deployed in the eastern suburb of Sydney. There are several bus lines operating in this field and we select five of them (418, 395, 303, M50 and 353). We select 37 bus stops and each is associated with a B node. The B nodes store the timetable for the corresponding bus operations. They can also learn the distribution of actual arrival over long time observation. These five bus lines are with frequencies 10, 6, 13, 8 and 15 minutes respectively. We assume the arrival time and stop duration are independent and both follow Gaussian distribution. Thus, the departing time also follow Gaussian distribution, whose mean and standard deviation can be obtained from those of arrival time and stop duration.

We simulate an interested event: Event A and apply I-UMDPC, UMPDC and Stash \cite{lee2015predictive} to deliver UM. We conduct extensive simulations on Event A under different parameter sets. We display the illustrative results by the considered approaches in Fig. \ref{event_A}, where $\Delta=1.5$ minutes, $\alpha=0.997$ (three-sigma rule), and $\beta=1$ second. As seen in Fig. \ref{event_A}, the three approaches select 6, 9 and 15 target B nodes respectively.

\begin{figure}[t]
    \centering
    \begin{subfigure}[b]{0.4\textwidth}
        \includegraphics[width=\textwidth]{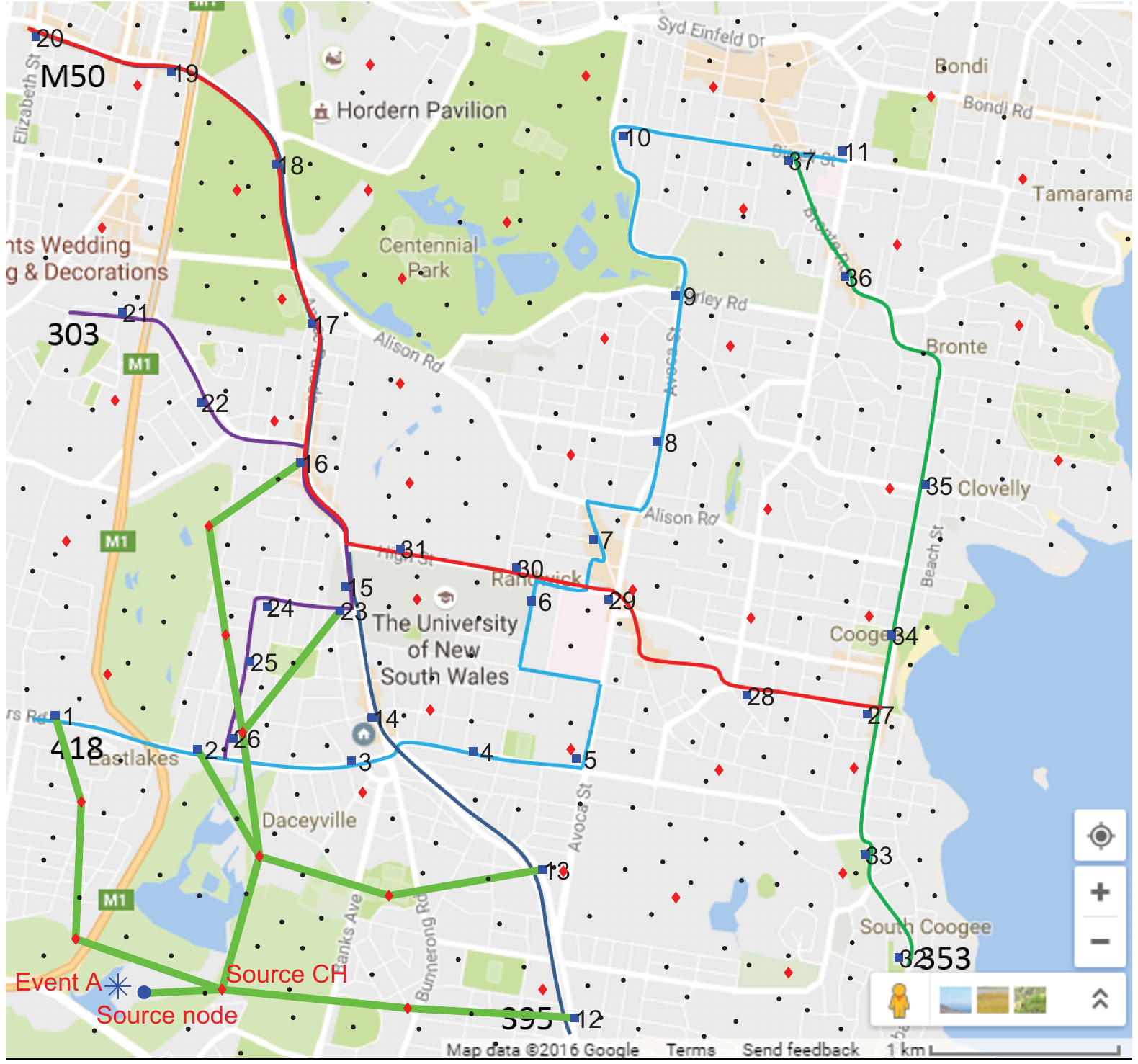}
        \caption{}
        \label{route_IUMDPC_15_1}
    \end{subfigure}
    \begin{subfigure}[b]{0.4\textwidth}
        \includegraphics[width=\textwidth]{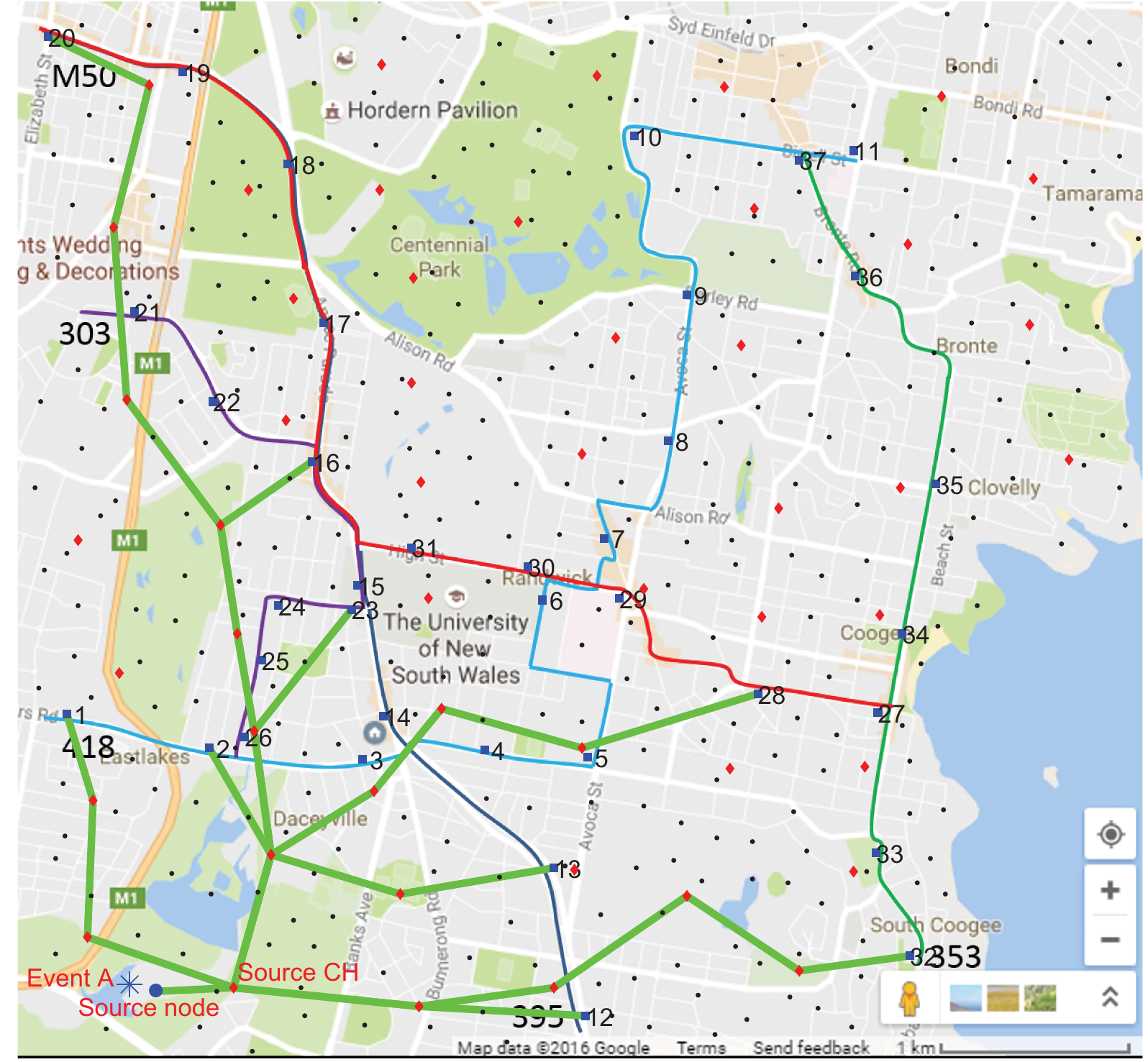}
        \caption{}
        \label{route_UMDPC_15_1}
    \end{subfigure}
    \begin{subfigure}[b]{0.4\textwidth}
        \includegraphics[width=\textwidth]{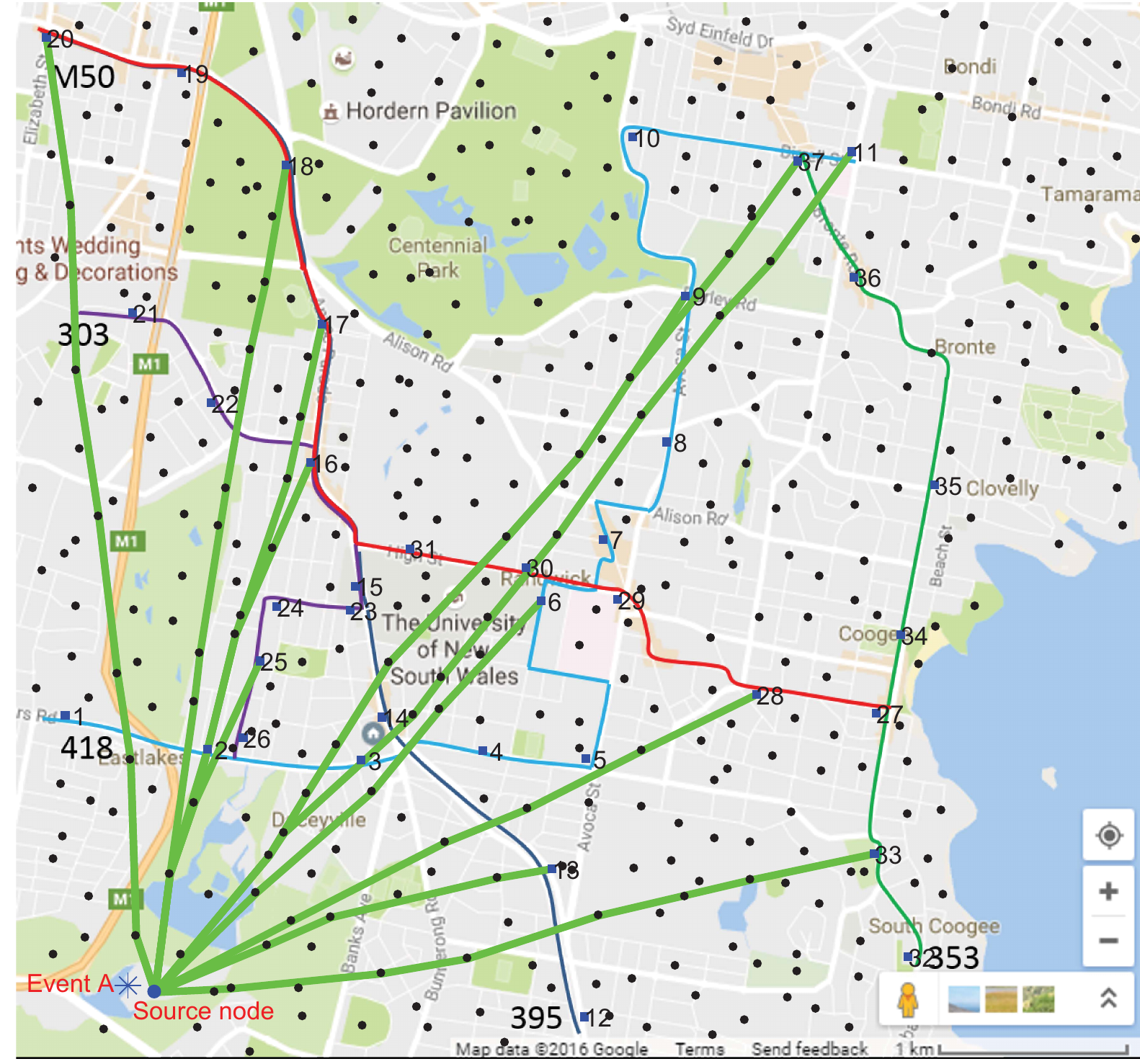}
        \caption{}
        \label{route_stash}
    \end{subfigure}
    \caption{UM delivery routes by I-UMDPC, UMDPC, and Stash for Event A ($\Delta=1.5$ minutes, $\alpha=0.997$, and $\beta=1$ second).}\label{event_A}
\end{figure} 

In the following parts, we further investigate the impacts of $\Delta$, $\alpha$ and $\beta$ on the performance of the considered approaches. Note, the number of B nodes also influences the performance of I-UMDPC, which has been reported in \cite{huang2017delay}.

\subsection{The influence of $\Delta$}\label{delta}
We investigate the influence of $\Delta$, which is an application dependent parameter. Here, $\alpha=0.997$ and $\beta=1$ second.

As seen from Fig. \ref{compare_energy_delta}, with the increase of $\Delta$, the energy consumed by I-UMDPC on delivery decreases. For UMDPC, when $\Delta$ is between 0.4 and 1.2 minutes, the energy consumption remains the same; while it decreases when $\Delta$ is between 1.3 and 1.7 minutes. This is because when $\Delta$ is smaller than 1.2 minutes, UMDPC is unable to find feasible solution of the optimization problem. Thus, all 37 B nodes are selected as targets (see Fig. \ref{compare_Bno_delta}). I-UMDPC and UMDPC consume the same energy on route construction. I-UMDPC consumes less on delivery than UMDPC. Stash spends more energy than I-UMDPC and UMDPC in both route construction and delivery. These results are consistent with the numbers of target B nodes as shown in Fig. \ref{compare_Bno_delta}.

\begin{figure}[t]
    \centering
    \begin{subfigure}[b]{0.45\textwidth}
        \includegraphics[width=\textwidth]{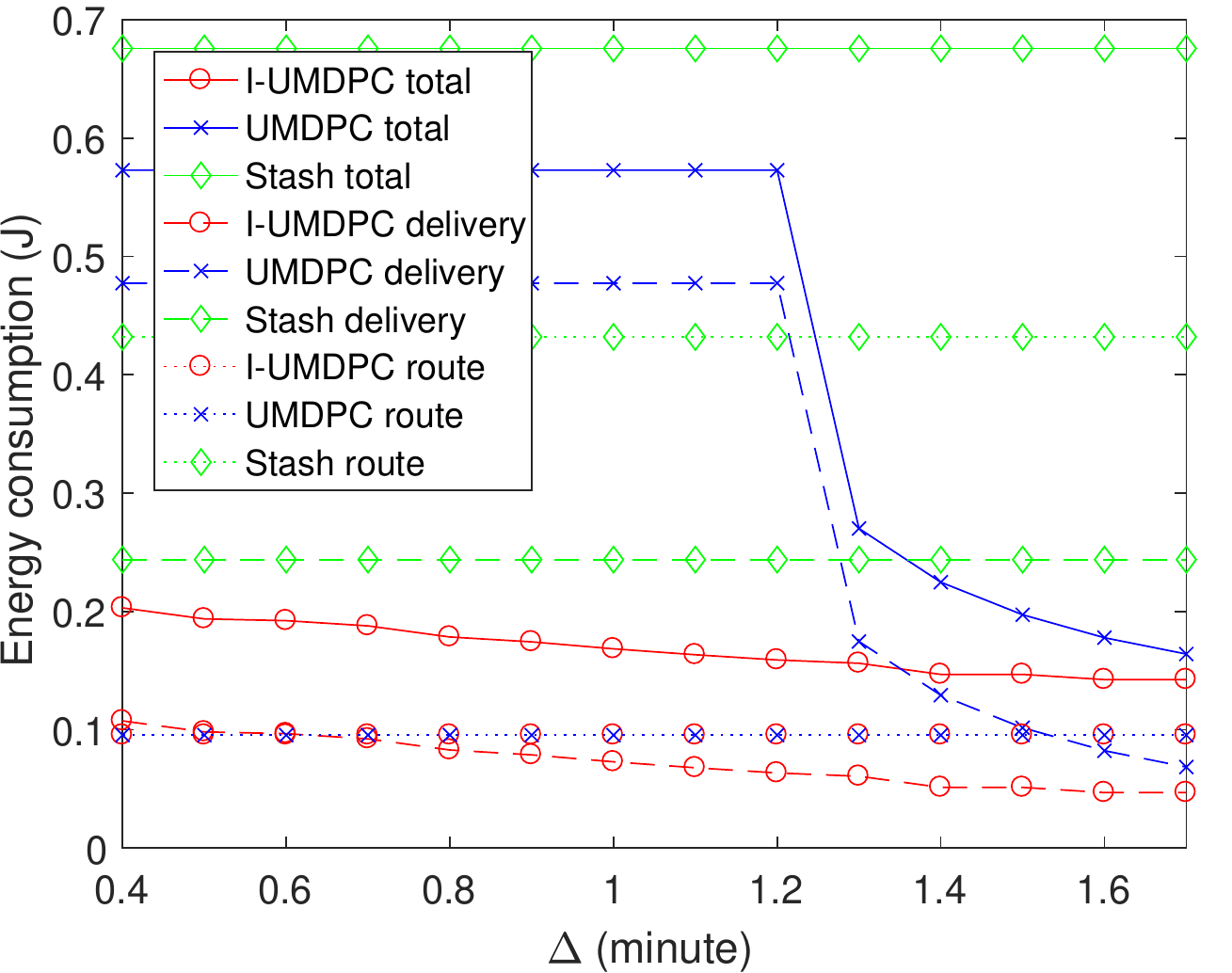}
        \caption{}
        \label{compare_energy_delta}
    \end{subfigure}
    \begin{subfigure}[b]{0.45\textwidth}
        \includegraphics[width=\textwidth]{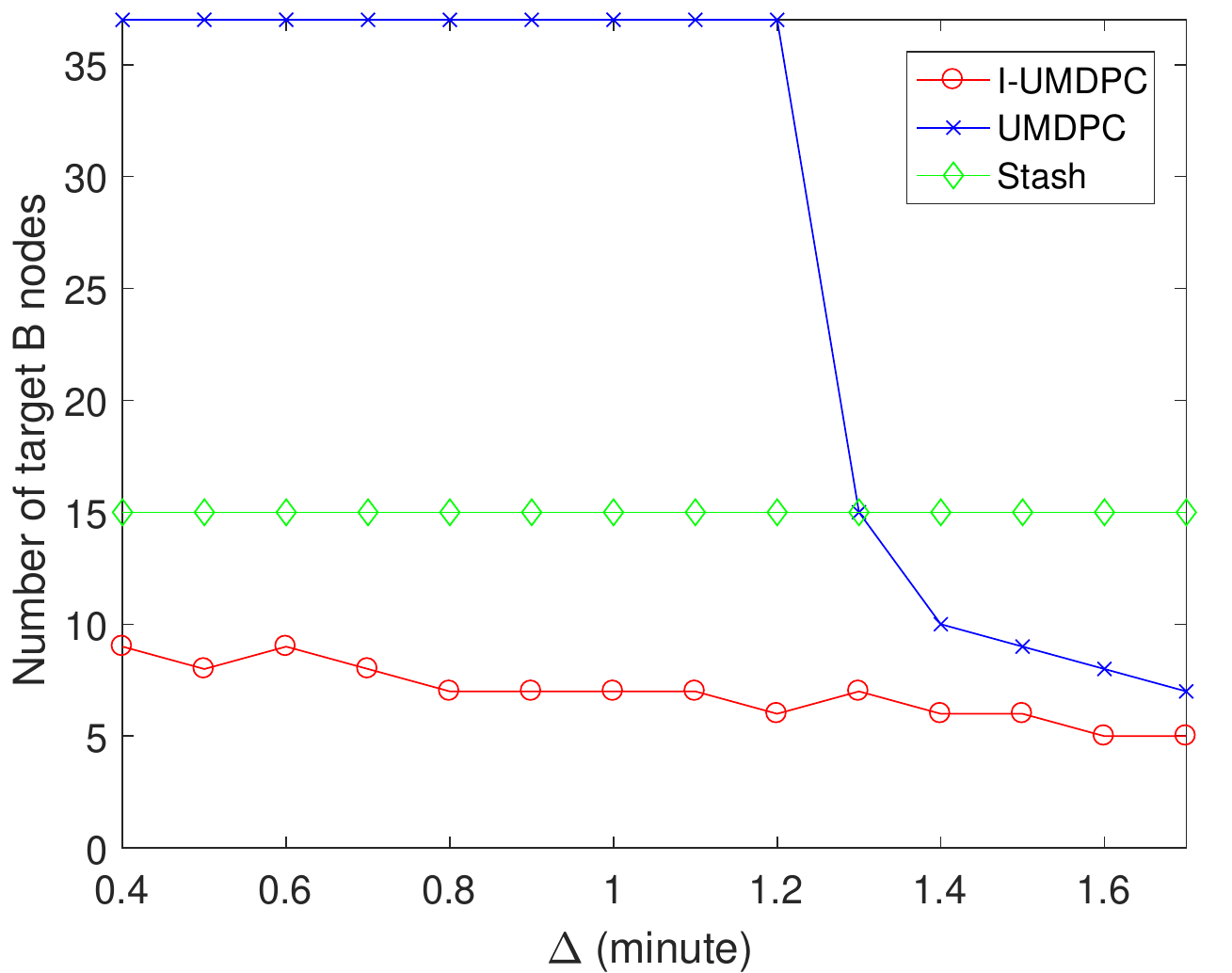}
        \caption{}
        \label{compare_Bno_delta}
    \end{subfigure}
    \begin{subfigure}[b]{0.45\textwidth}
        \includegraphics[width=\textwidth]{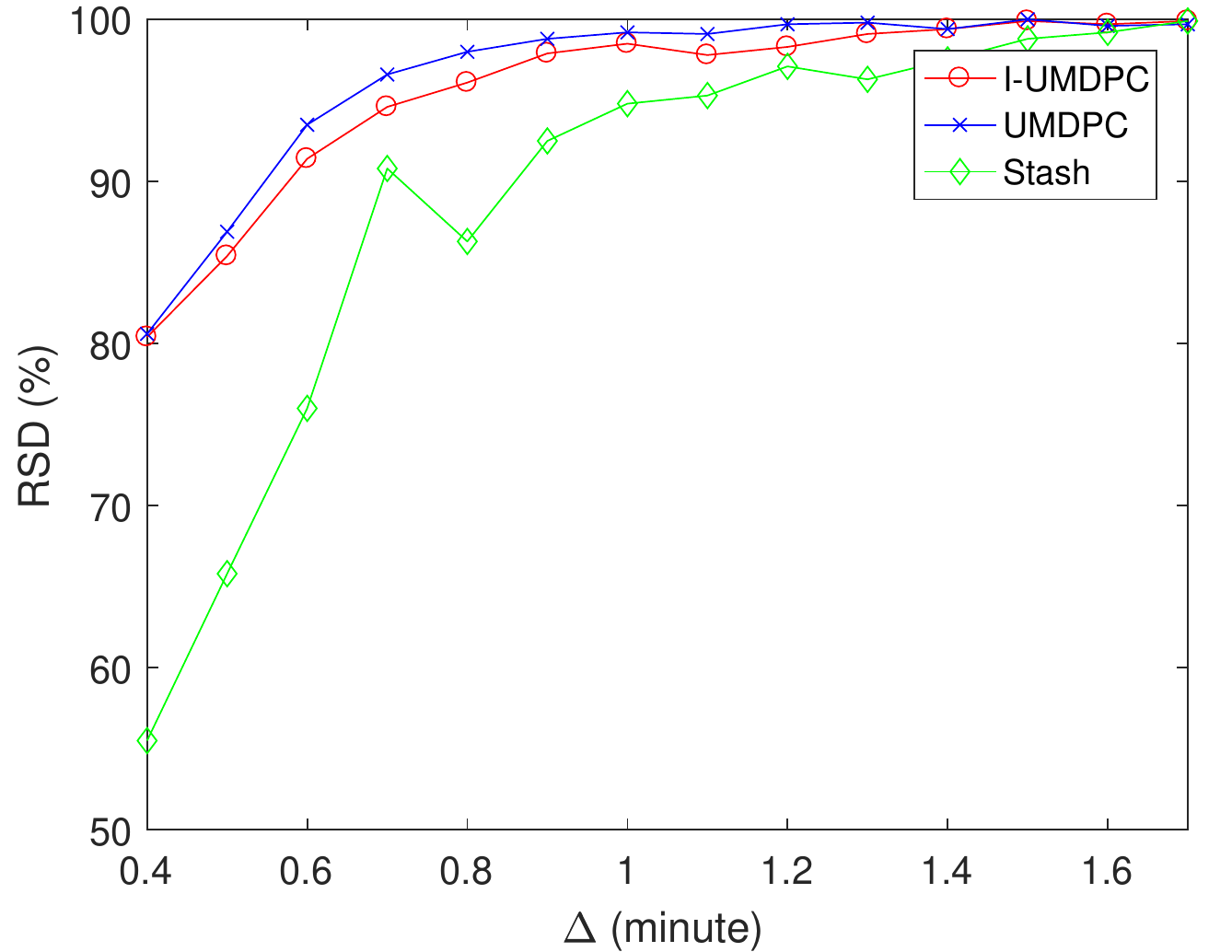}
        \caption{}
        \label{compare_RSD_delta}
    \end{subfigure}
    \begin{subfigure}[b]{0.45\textwidth}
        \includegraphics[width=\textwidth]{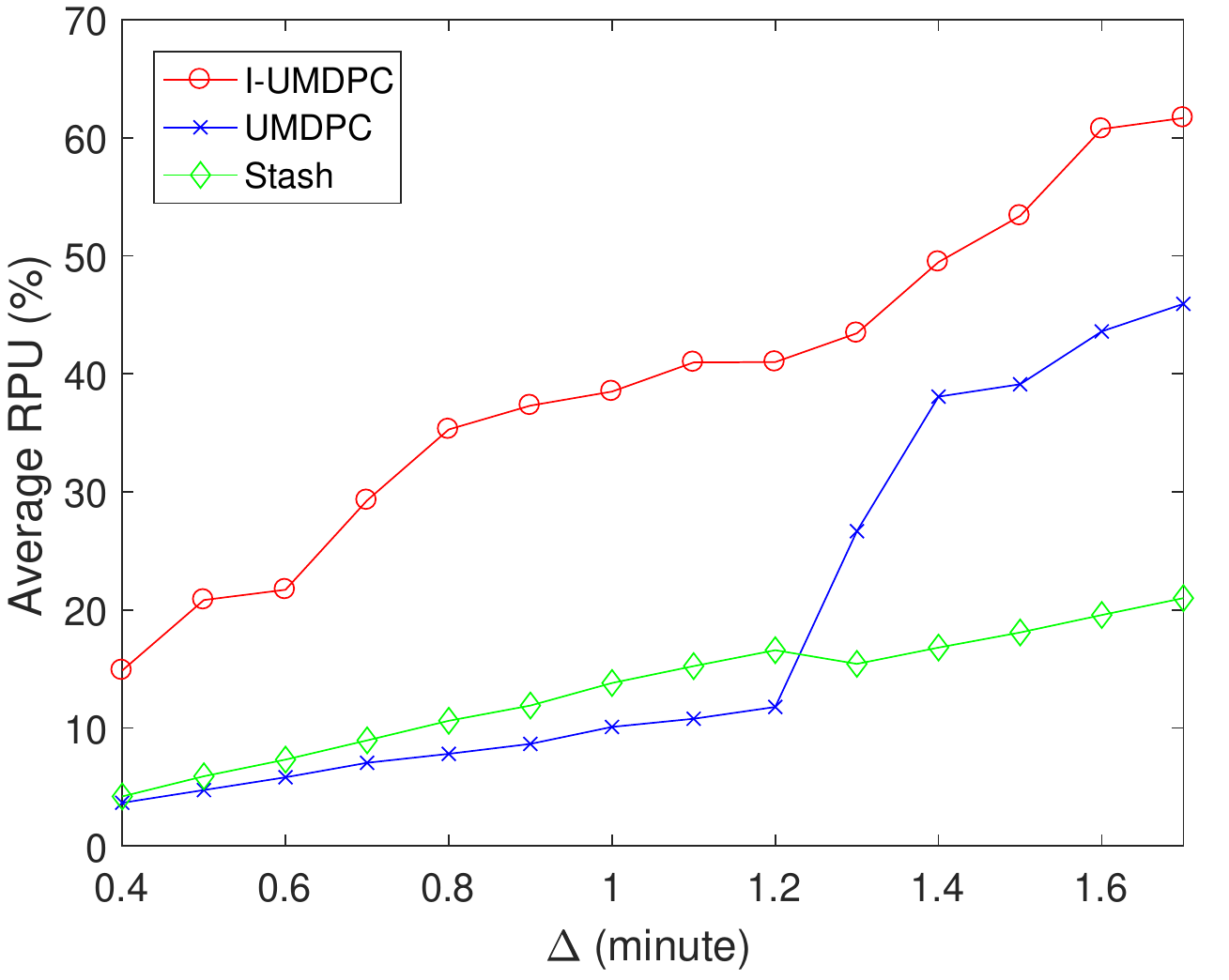}
        \caption{}
        \label{compare_RPU_delta}
    \end{subfigure}
    \caption{Comparison of I-UMDPC, UMDPC and Stash under different $\Delta$. (a) Energy consumption. (b) Number of target B nodes. (c) RSD. (d) RPU.}
\end{figure} 

%

As shown in Fig. \ref{compare_RSD_delta}, the RSDs of them increase with $\Delta$. When $\Delta$ is between 0.4 and 1.2, the RSD of UMDPC is higher than that of I-UMDPC. The reason is that all the B nodes are selected as targets by UMDPC. When $\Delta$ is larger than 1.2, I-UMDPC and UMDPC achieve similar RSD. We also notice that I-UMDPC and UMDPC perform better than Stash. The influence of $\Delta$ on RPU is demonstrated in Fig. \ref{compare_RPU_delta}. The trend of RPU is similar to RSD. When $\Delta \leq 1.2$, the RPU of UMDPC is lower than Stash; while when $\Delta \geq 1.3$, the RPU of UMDPC is larger than Stash. The reason is that Stash prefers the B nodes which are cheap to deliver to, while UMDPC focuses more on the reliability of delivery. I-UMDPC outperforms UMDPC and Stash for all the considered $\Delta$. 

%

From the above analysis, we conclude that the larger (smaller) $\Delta$, the easier (more difficult) to deliver UM successfully, leading to lower (higher) energy consumption and number of target B nodes, and higher (lower) RSD and RPU.

\subsection{The influence of $\alpha$}\label{alpha}
This section considers the impact of $\alpha$. Here, $\Delta=1.5$ minutes and $\beta=1$ second. Since I-UMDPC and UMDPC consume the same amount of energy on route construction and Stash is non-sensitive to the investigated parameters in terms of energy consumption, we only display the total amounts here and in the next section.

Here, $\alpha$ takes 0.970, 0.980, 0.990, 0.997 and 0.999. With the increasing of $\alpha$, the energy consumptions and the numbers of target B nodes by I-UMDPC and UMDPC increase as shown in Fig. \ref{compare_energy_alpha} and Fig. \ref{compare_Bno_alpha}, respectively. The reason lies in $\alpha$ constraint (\ref{con_nonlinear}). For the fixed $\Delta$ and $\beta$, $\alpha$ constraint impacts on the target B nodes. Generally, the more strict (\ref{con_nonlinear}), i.e., the larger $\alpha$, the more target B nodes are required, which further influences the energy consumption by target B nodes. Fig. \ref{compare_RSD_alpha} shows that the I-UMDPC and UMDPC become more reliable with higher $\alpha$. However, the efficiencies of them decrease with $\alpha$, see Fig. \ref{compare_RPU_alpha}. Similar to the simulations in Section \ref{delta}, Stash is non-sensitive to $\alpha$, thus the energy consumption and B node number remain constant with varying $\alpha$. RSD and RPU change slightly due to the random simulation of bus operation. 

From these simulations we can observe that I-UMDPC achieves similar reliability with UMDPC and outperforms UMDPC as well as Stash in terms of efficiency.

\begin{figure}[t]
    \centering
    \begin{subfigure}[b]{0.45\textwidth}
        \includegraphics[width=\textwidth]{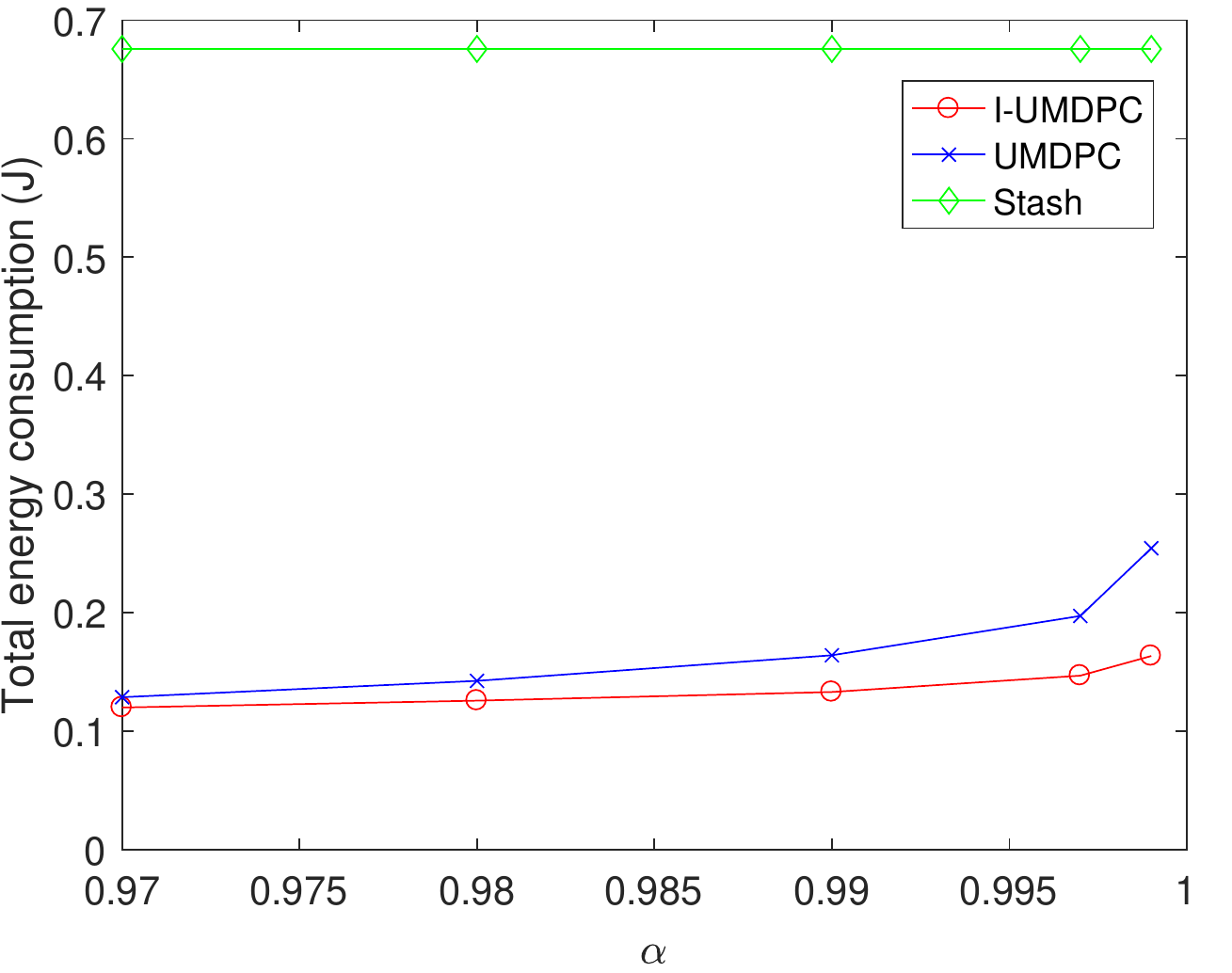}
        \caption{}
        \label{compare_energy_alpha}
    \end{subfigure}
    \begin{subfigure}[b]{0.45\textwidth}
        \includegraphics[width=\textwidth]{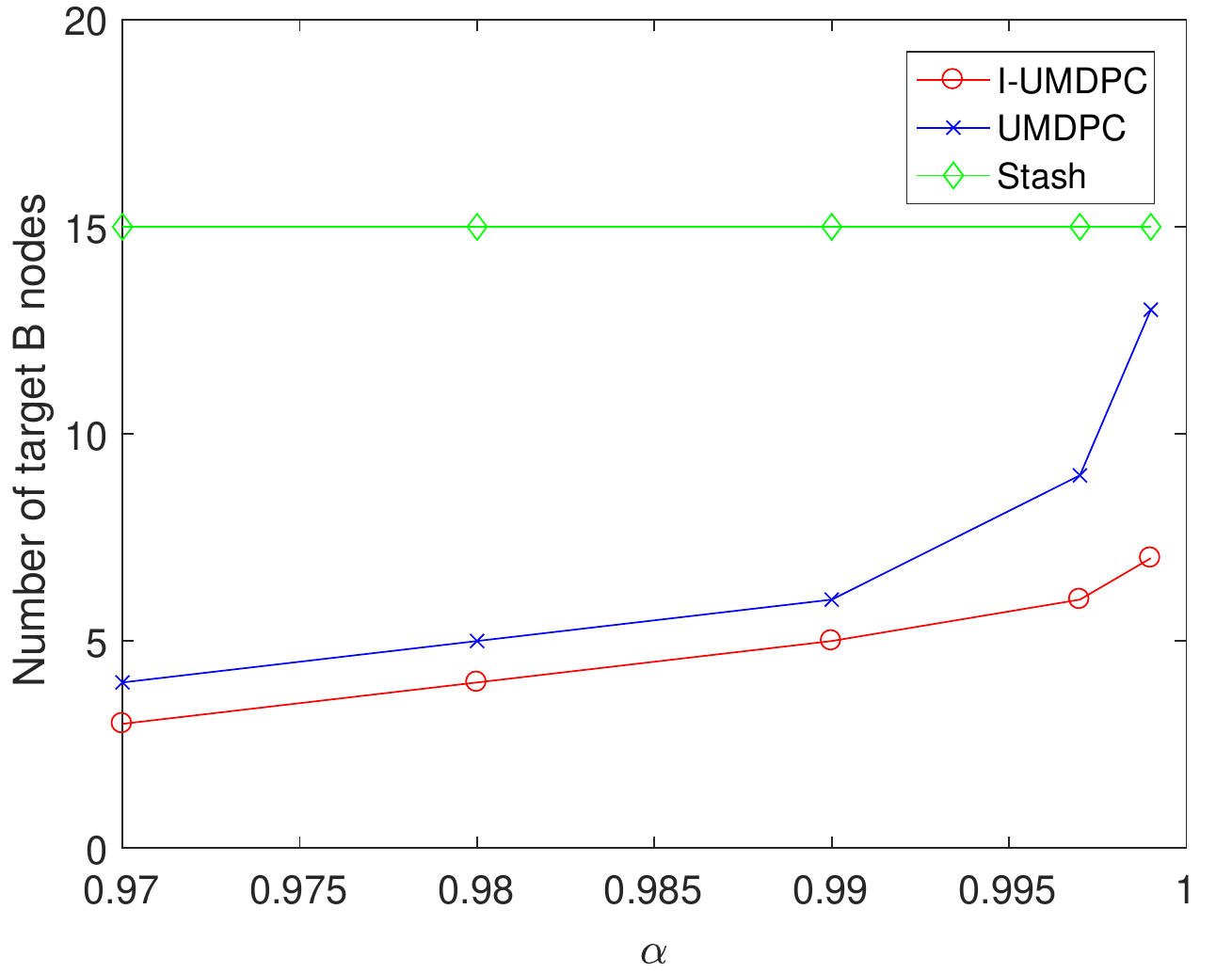}
        \caption{}
        \label{compare_Bno_alpha}
    \end{subfigure}
    \begin{subfigure}[b]{0.45\textwidth}
        \includegraphics[width=\textwidth]{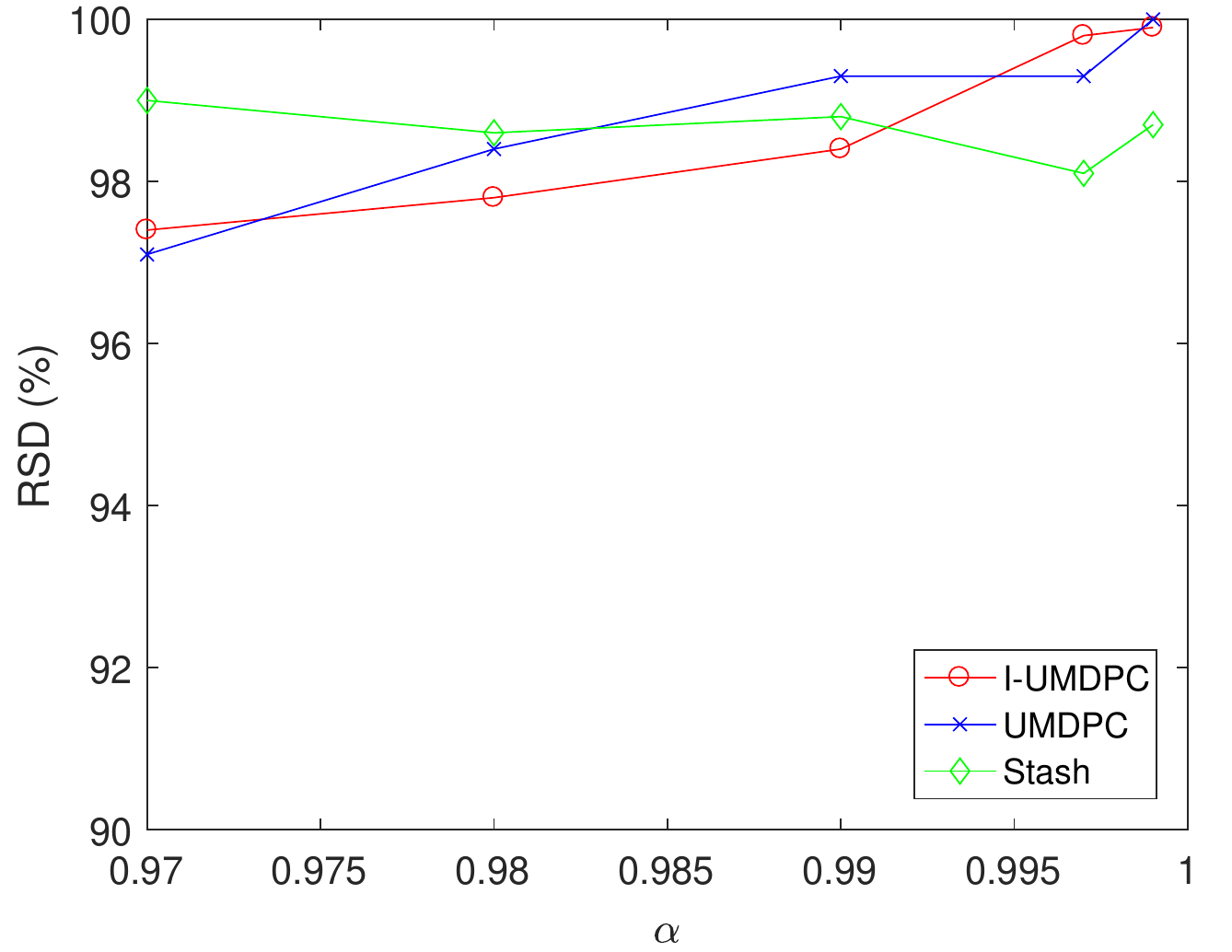}
        \caption{}
        \label{compare_RSD_alpha}
    \end{subfigure}
    \begin{subfigure}[b]{0.45\textwidth}
        \includegraphics[width=\textwidth]{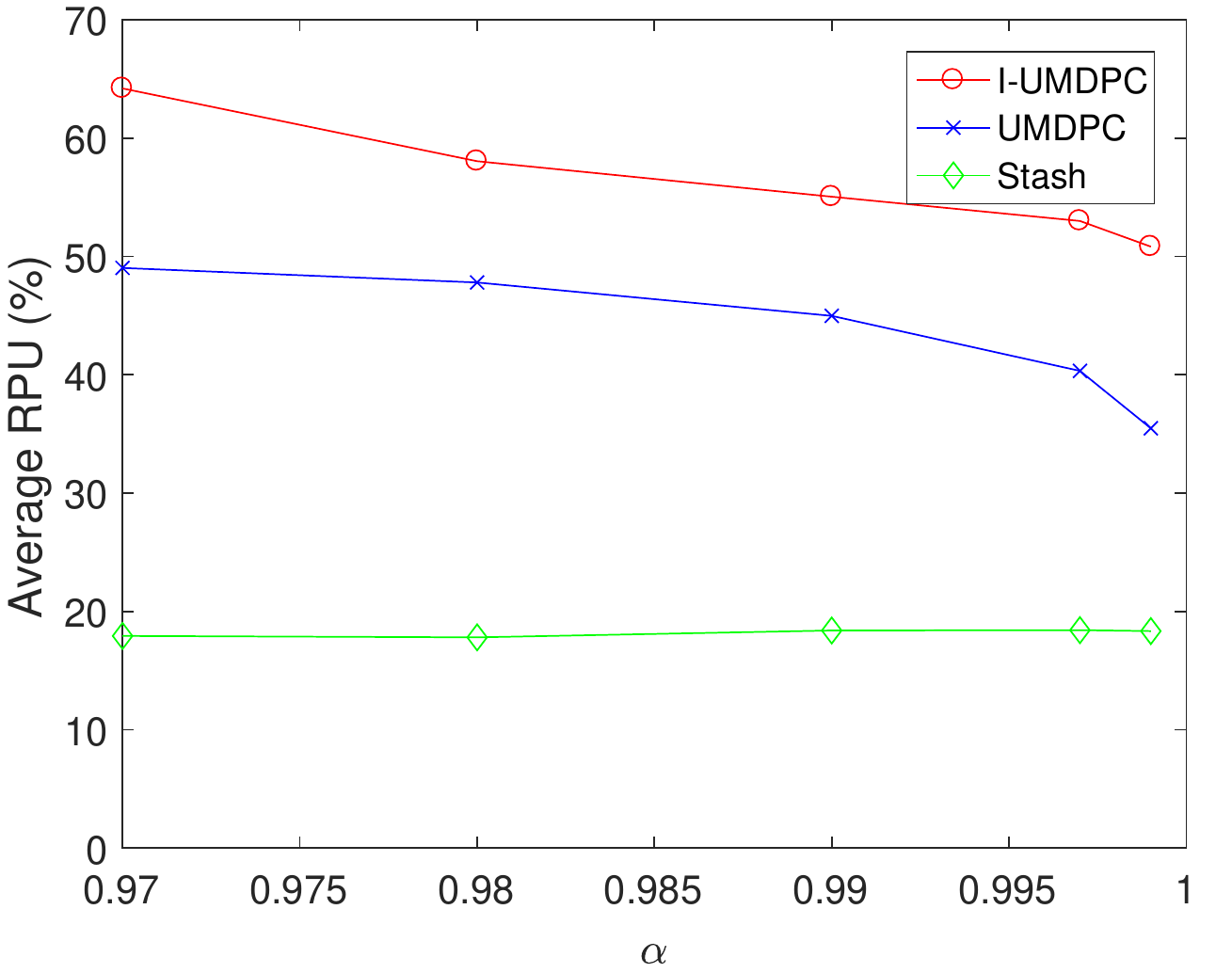}
        \caption{}
        \label{compare_RPU_alpha}
    \end{subfigure}
    \caption{Comparison of I-UMDPC, UMDPC and Stash under different $\alpha$. (a) Energy consumption. (b) Number of target B nodes. (c) RSD. (d) RPU.}
\end{figure} 

%
%
%

\subsection{The influence of $\beta$}\label{beta}
This section investigates the influence of $\beta$. $\beta$ is a parameter determining the time cost to delivery message to a B node, i.e., $t_i=\beta h_i$, for B node $i$. Moreover, it impacts on the probability of collecting a UM from a B node, see Eq. (\ref{P1}), (\ref{P2}) and (\ref{P}). For small $\beta$, the probability of Situation 2 (i.e., the probability  that UM arrives early (\ref{P2}), see Fig. \ref{successful_delivery_2}) is dominant in (\ref{P}). In this case, with the increase of $\beta$, the probability $P$ decreases. In contrast, for large $\beta$, the probability of Situation 1 (i.e., the probability  that M node arrives early (\ref{P1}), see Fig. \ref{successful_delivery_1}) is the dominant in (\ref{P}). Through simulations we find that with the increase of $\beta$, $P$ decreases. From the above analysis we can obtain that, the larger $\beta$, the smaller $P$. This is also the reason that the energy consumptions as well as the numbers of target B nodes by I-UMDPC and UMDPC raise with the increase of $\beta$, see Fig. \ref{compare_energy_beta} and \ref{compare_Bno_beta}. Another interesting phenomenon in Fig. \ref{compare_energy_beta} and \ref{compare_Bno_beta} is that when $\beta$ is larger than 2.25 seconds, target B node number by UMDPC is bounded by the number of available B nodes. This is because that there is no solution to the optimization problem in UMDPC. In the meanwhile, the energy consumption also increases and is bounded. 

\begin{figure}[t]
    \centering
    \begin{subfigure}[b]{0.45\textwidth}
        \includegraphics[width=\textwidth]{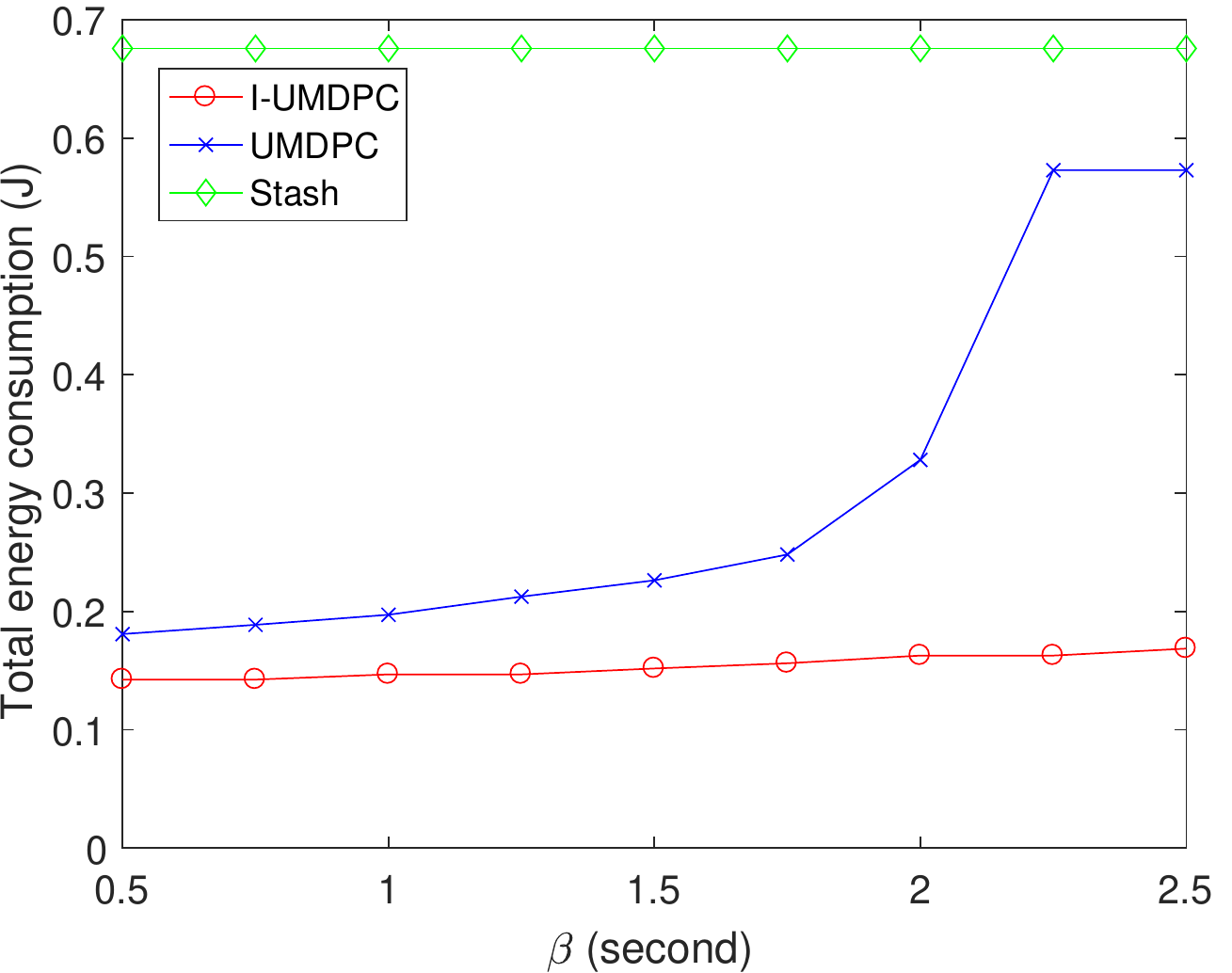}
        \caption{}
        \label{compare_energy_beta}
    \end{subfigure}
    \begin{subfigure}[b]{0.45\textwidth}
        \includegraphics[width=\textwidth]{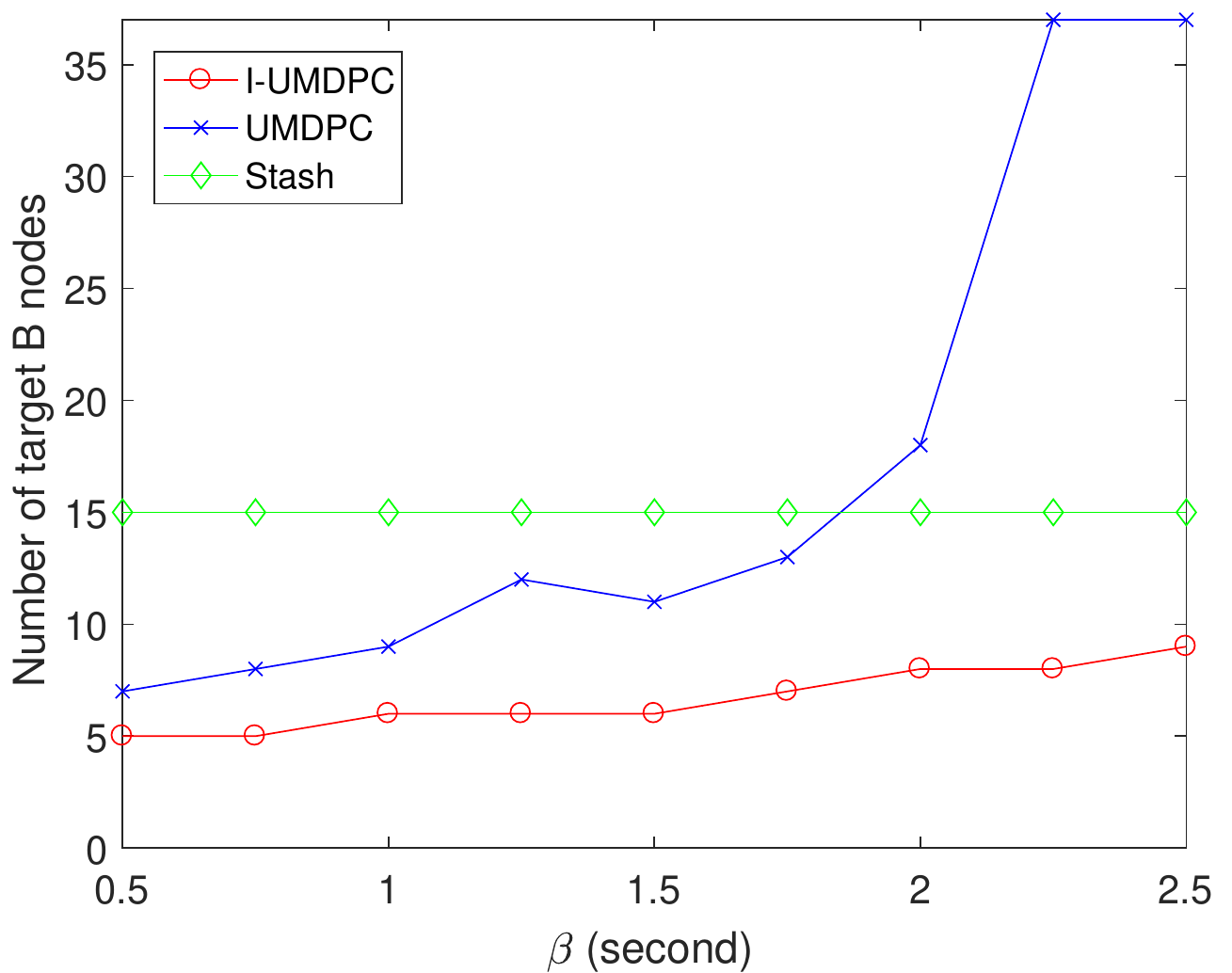}
        \caption{}
        \label{compare_Bno_beta}
    \end{subfigure}
    \begin{subfigure}[b]{0.45\textwidth}
        \includegraphics[width=\textwidth]{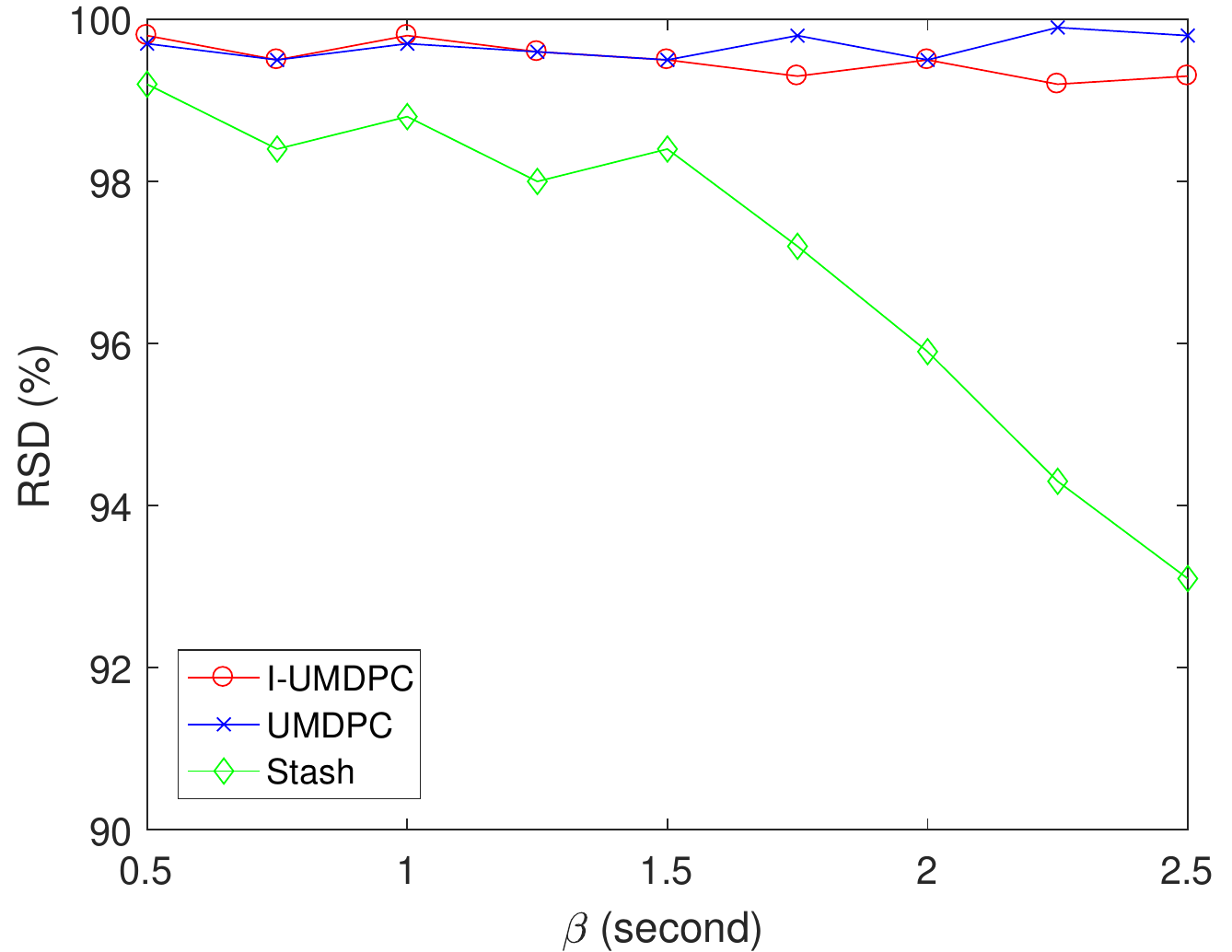}
        \caption{}
        \label{compare_RSD_beta}
    \end{subfigure}
    \begin{subfigure}[b]{0.45\textwidth}
        \includegraphics[width=\textwidth]{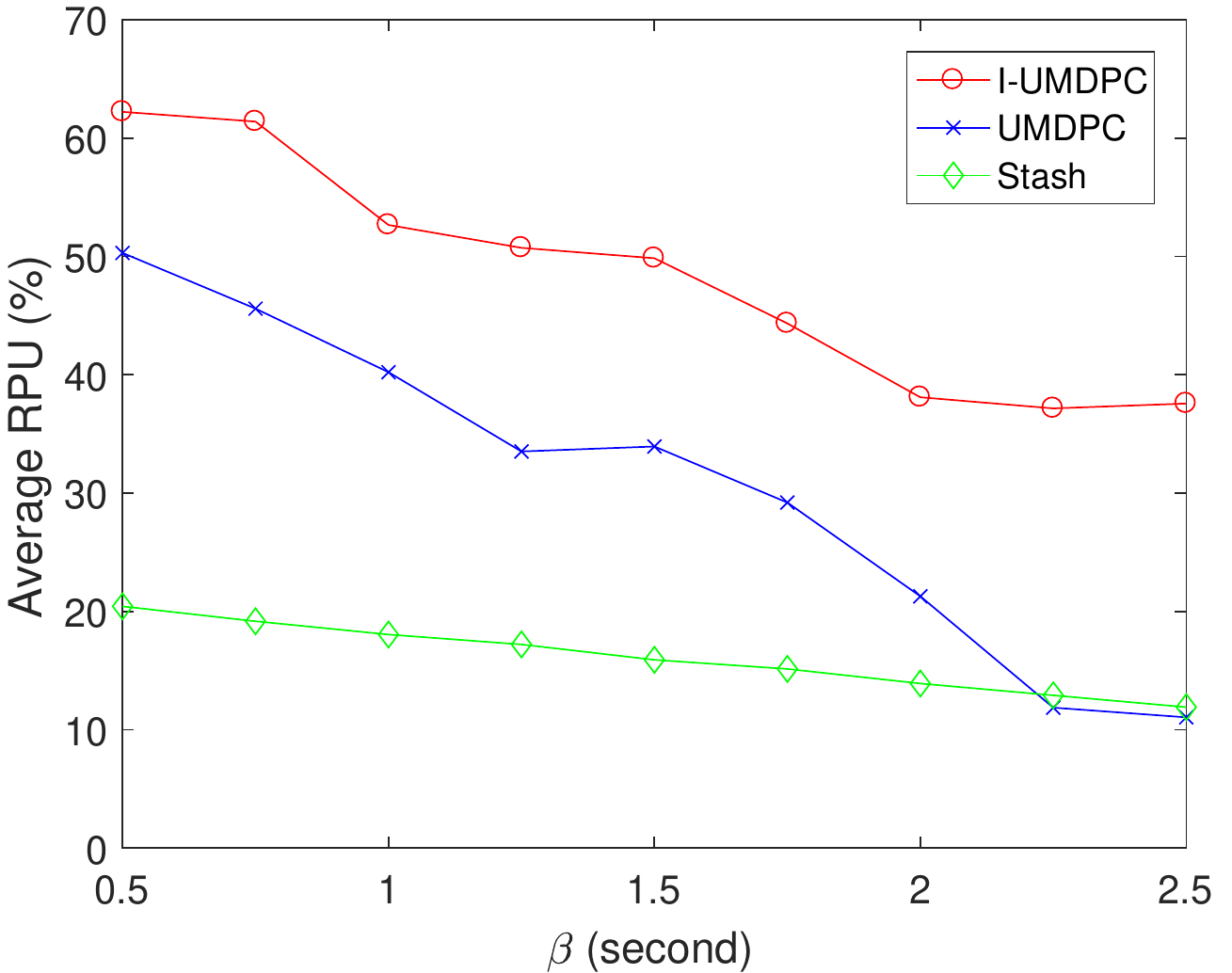}
        \caption{}
        \label{compare_RPU_beta}
    \end{subfigure}
    \caption{Comparison of I-UMDPC, UMDPC and Stash under different $\beta$. (a) Energy consumption. (b) Number of target B nodes. (c) RSD. (d) RPU.}
\end{figure} 
 
%

In terms of RSD, I-UMDPC achieves the similar performance as UMDPC when $\beta$ is smaller than 2.25 seconds. When $\beta$ takes larger values, all the B nodes are selected as targets, then RSD of UMDPC definitely achieves the best, see Fig. \ref{compare_RSD_beta}. However, the situation of RPU is different. With the increase of $\beta$, the RPUs of all the three approaches decrease. Among them, I-UMDPC performs at least 20\% better than UMDPC and around 2 times better than Stash. Therefore, I-UMDPC performs similarly to UMDPC in reliability but better than UMDPC in efficiency.

%
\section{Experimental Results} \label{experiment}
We carried out implementations of the proposed approach on our testbed, which consists of 42 ESP8266 module based S nodes, see Fig. \ref{deployment}. Each node has a DHT11 module to monitor the ambient temperature and an event is detected if the reading is over the threshold. 16 ESP8266 modules serve as B nodes. We design 4 paths (see Fig. \ref{topology}) for M nodes and 8 volunteers carrying ESP8266 modules act as buses on the paths. Each of the M nodes has a specific timetable and the B nodes along its path know it. 

\begin{figure}[t]
    \centering
    \begin{subfigure}[b]{0.6\textwidth}
        \includegraphics[width=\textwidth]{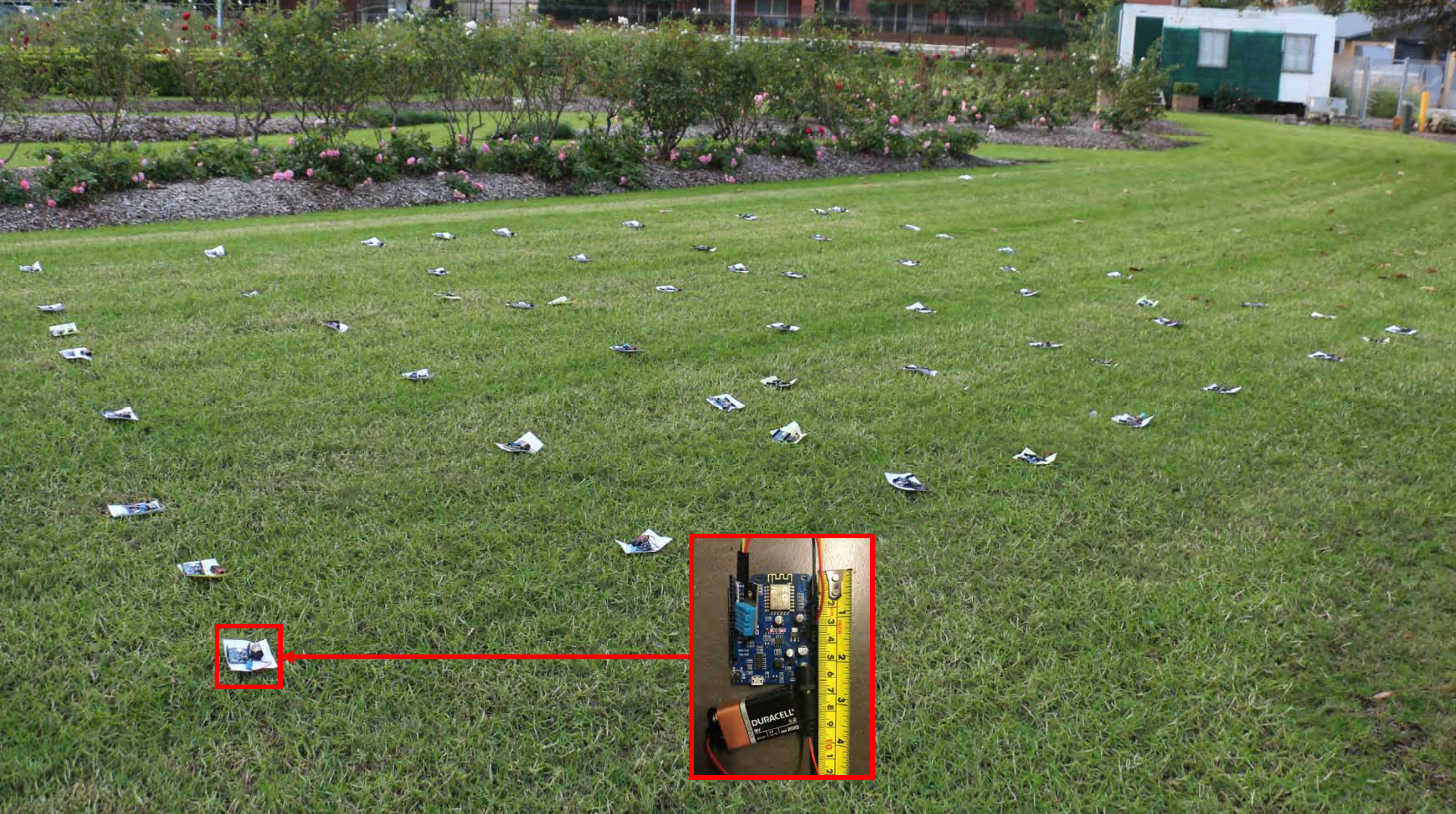}
        \caption{}
        \label{deployment}
    \end{subfigure}
    \begin{subfigure}[b]{0.4\textwidth}
        \includegraphics[width=\textwidth]{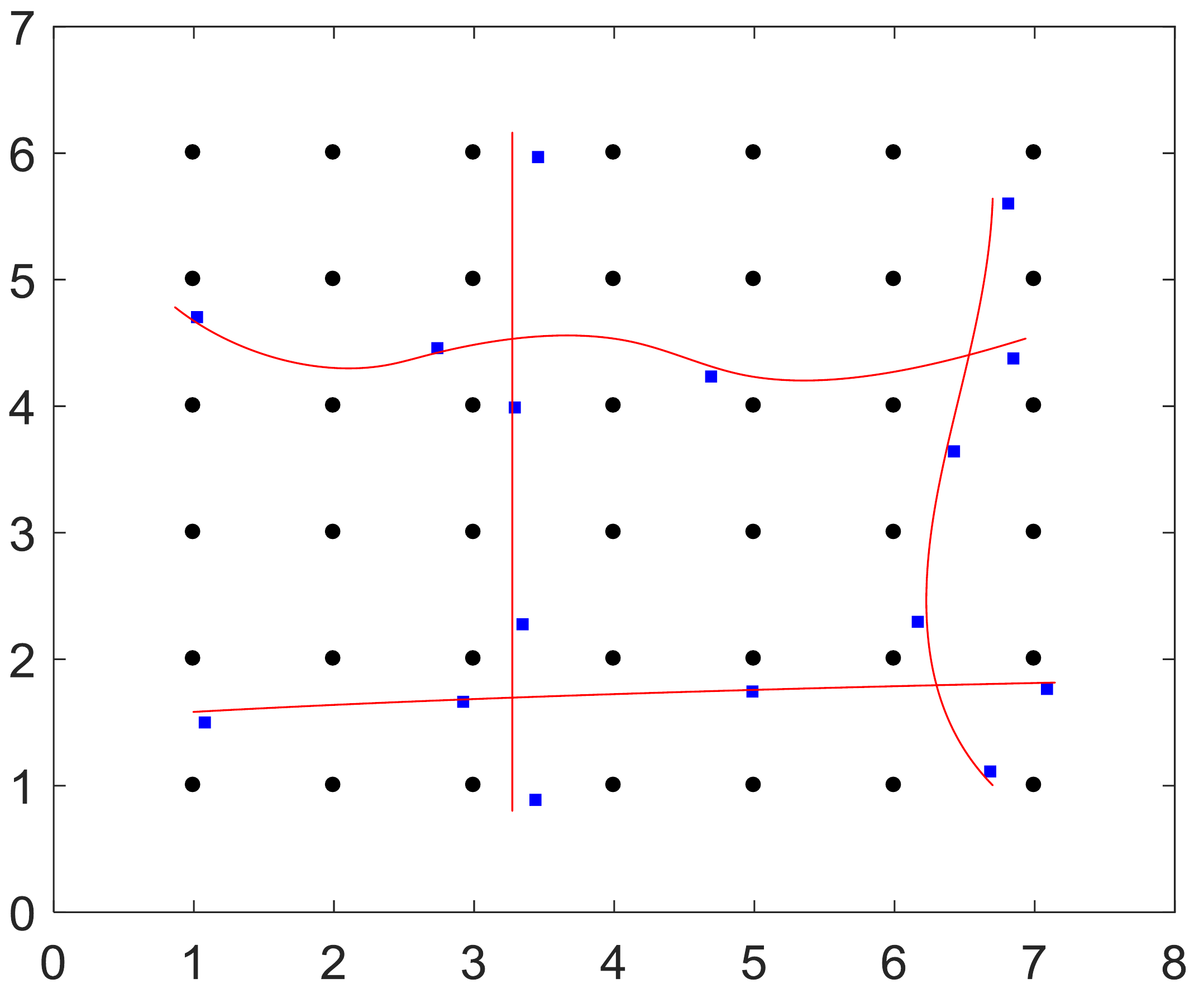}
        \caption{}
        \label{topology}
    \end{subfigure}
    \begin{subfigure}[b]{0.4\textwidth}
        \includegraphics[width=\textwidth]{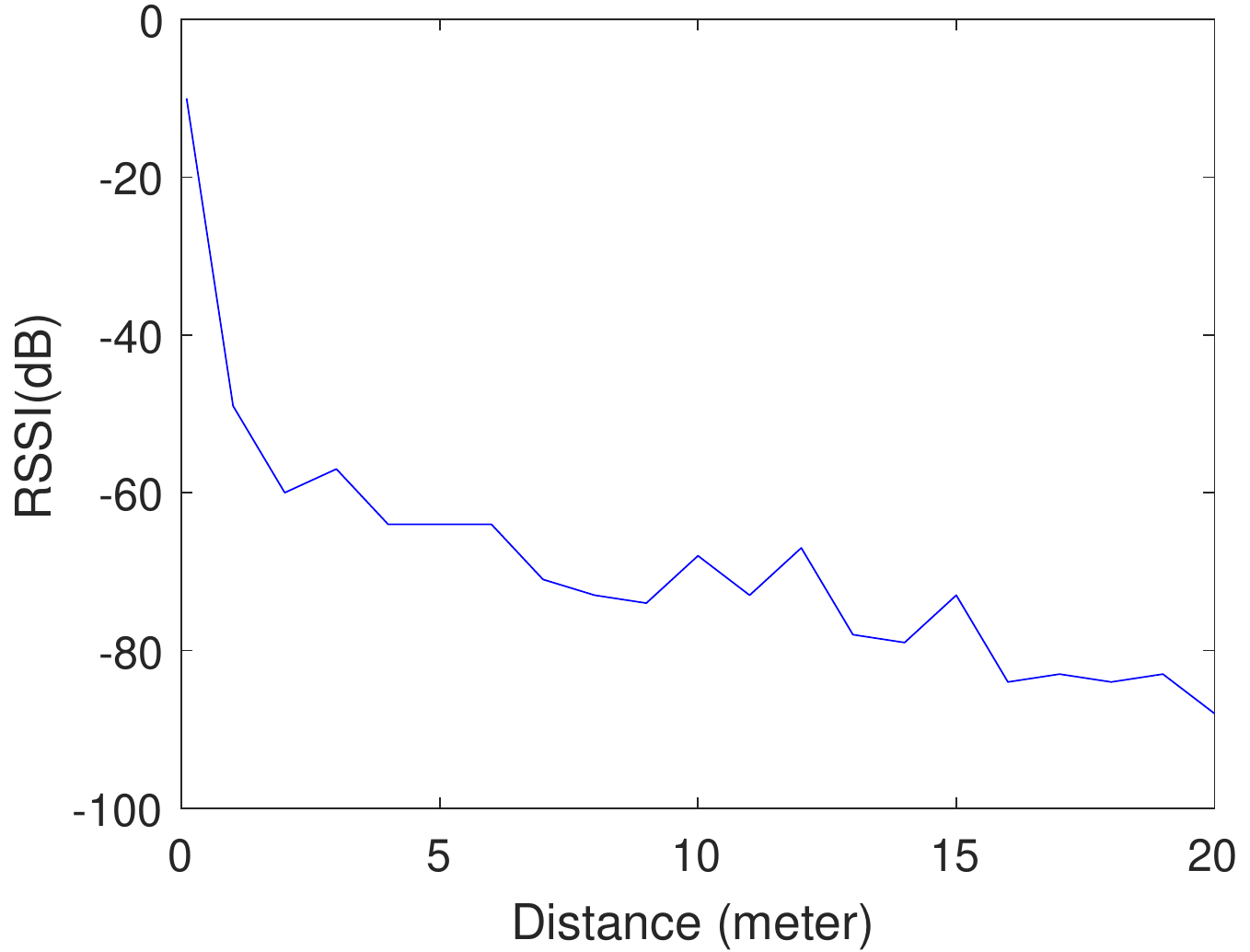}
        \caption{}
        \label{rssi}
    \end{subfigure}
    \caption{Node deployment for I-UMDPC experiments. (a) Node deployment. (b) Visualization with M nodes' paths. Black circles are S nodes; Blue square are B nodes and the red curves are M nodes' paths. (c) RSSI against distance.}\label{deployment_experiment}
\end{figure}

Since the testbed does not have enough nodes as the simulations above and clustering is not the focus of this chapter, clustering is not implemented in the experiments. We assume the topology of the network is known to the nodes. These nodes communicate with each via WiFi. To make a node communicate with only its neighbour nodes, we set a threshold for the S and B nodes, i.e., -40dB, corresponding to 1 meter, according to the tested Received Signal Strength Indicator (RSSI) in Fig. \ref{rssi}. To make B nodes communicate with M nodes only when the latter arrives, we set another threshold for B nodes, i.e., -20dB, corresponding to about 0.2 meters. In the simulations, we use $\beta h_i$ to estimate the time to transmit packet between source node and B node $i$. However, since $\beta$ is unknown in practice, that estimation is not precise. Besides, different nodes may have varying $\beta$. Thus, in the experiments, such time requirement is measured by the source node, which takes into account the time of processing and transmitting. 

We selected each S node as the source node once by making a fire near it and we totally did 42 experiments for each of the considered approaches. We set $\alpha$ as 0.997, $\Delta$ as 1 minute. We summarise the experiment results in Fig. \ref{result_experiment}. Fig. \ref{targetBnumber_experiment} shows that Stash always select 8 target B nodes since there are 8 M nodes. This number by I-UMDPC is always no larger than UMDPC since the former considers the stop time. Fig. \ref{RPU_experiment} shows the RPU of Stash is usually the smallest. In other words, the selection strategy is inefficient. But there are cases where its RPU is larger than I-UMDPC and UMDPC, for example when the source node is 24 and 25. I-UMDPC achieves the best performance in RPU among the three approaches in most experiments, while there are also cases where UMDPC is performs the best. For example, when the source node is 19, I-UMDPC chose 2 target B nodes while UMDPC chose a third one. One of the 2 B nodes failed to upload UM while the third one uploaded. Thus, the RPU of I-UMDPC is 50\% while that of UMDPC is 67\%. In summary, I-UMDPC performed the best in 93\% cases, which  showed its effectiveness.

\begin{figure}[t]
    \centering
    \begin{subfigure}[b]{0.7\textwidth}
        \includegraphics[width=\textwidth]{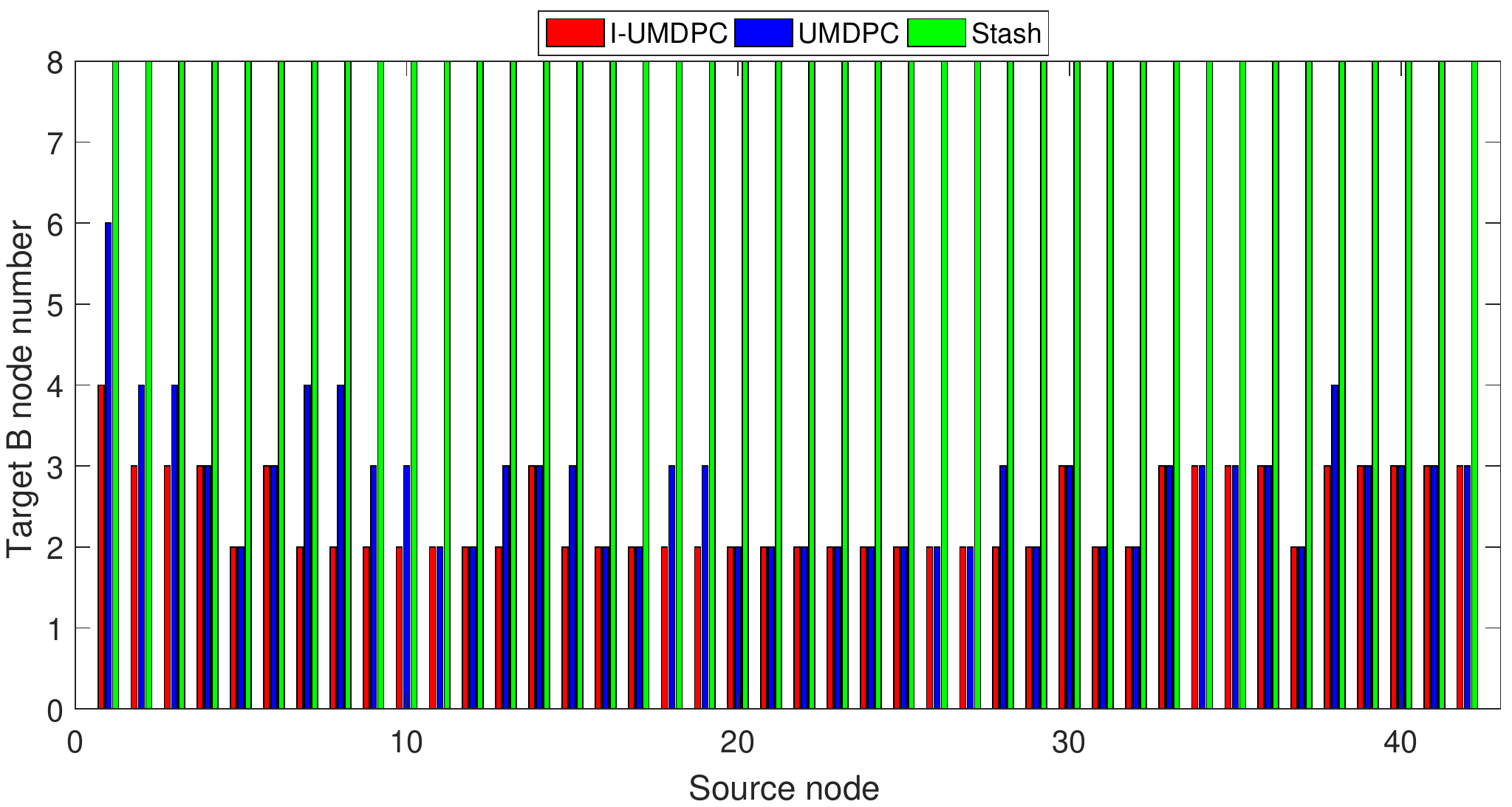}
        \caption{}
        \label{targetBnumber_experiment}
    \end{subfigure}
    \begin{subfigure}[b]{0.7\textwidth}
        \includegraphics[width=\textwidth]{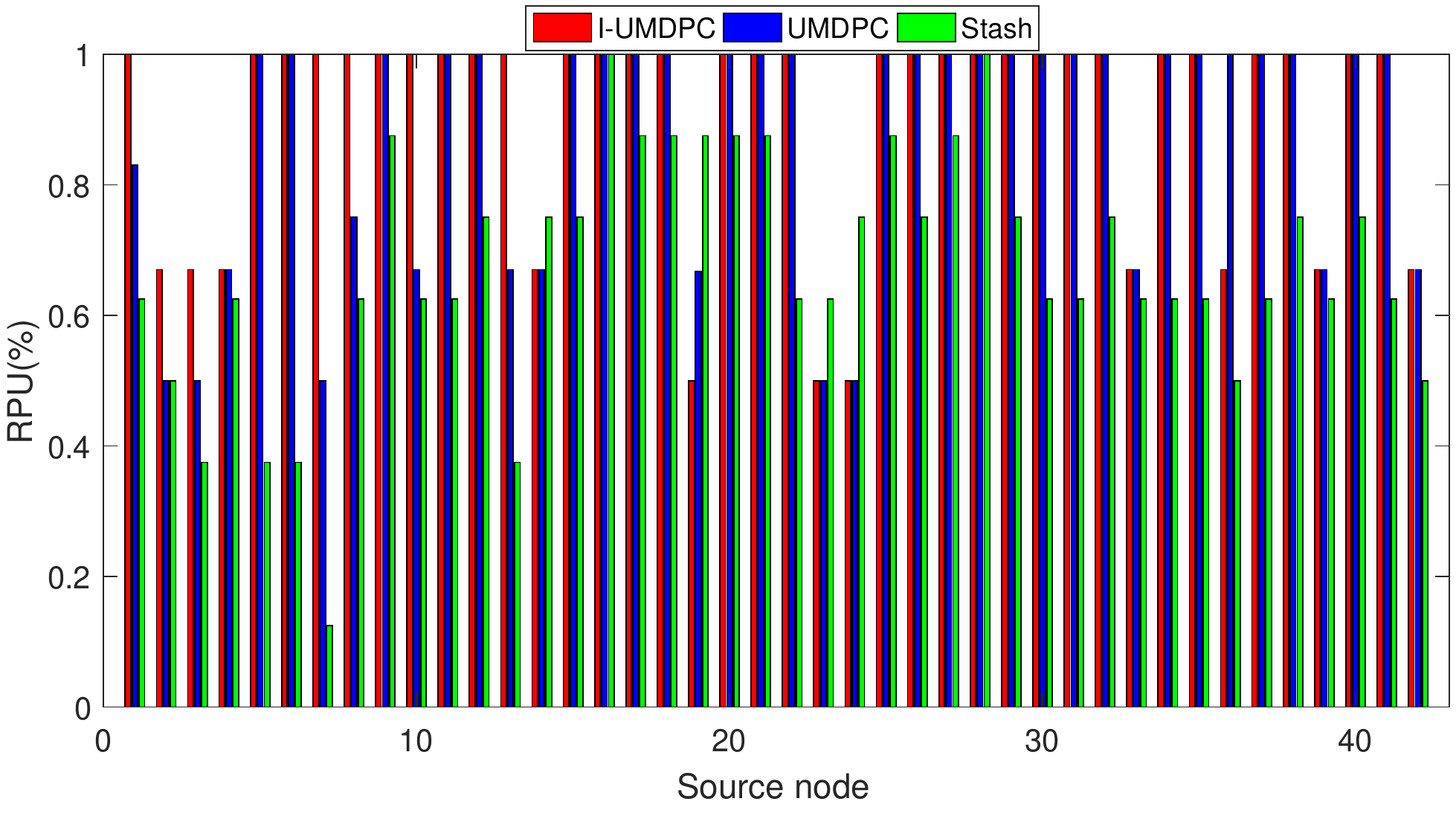}
        \caption{}
        \label{RPU_experiment}
    \end{subfigure}
    \caption{Comparison of I-UMDPC, UMDPC and Stash in real experiments. (a) Target B node number. (b) RPU.}\label{result_experiment}
\end{figure}

\section{Summary}\label{Summary}
This chapter considers the applications of wireless sensor networks where the users require fresh report of an event. We define that the report message is fresh if it is delivered to M nodes within the given latency. We present an Improved-Unusual Message Delivery Path Construction approach (I-UMDPC) for UM delivery. I-UMDPC combines cluster-based routing, M nodes which are attached to buses and an optimization based target B nodes selection procedure. The features of bus operation, i.e., the uncertain arrival time and stop duration at a bus, are formulated by random variables and accounted in the optimization problem. Through extensive simulations as well as experiments on our testbed, we show that I-UMDPC is able to deal with the uncertainties and deliver UM to M nodes with higher reliability and efficiency than the alternatives.

Although the models used here are quite practical, some aspects still need to be investigated, such as the failure in transmission, which is a key factor impacting on the collection latency.

\chapter{Optimized Deployment of Autonomous Drones to Improve User Experience in Cellular Networks}\label{drone}
\minitoc
Distinguishing from Chapters \ref{path_planning}, \ref{cluster_ms}, \ref{cluster_cs} and \ref{cluster_um}, this chapter considers the cellular networks. Similar to Chapters \ref{cluster_ms}, \ref{cluster_cs} and \ref{cluster_um}, we aim at using mobility \textit{constrained} drones to improve user experience.

\section{Motivation}\label{motivation}
Explosive demands for mobile data are driving mobile operators to respond to the challenging requirements of higher capacity and improved quality of user experience (QoE) \cite{nakamura2013trends}. Deploying more Base Stations (BSs) is able to meet the increasing traffic demand. This solution, however, may not only result in more cost for equipments and site rental, but also bring with other issues, such as a high percentage of BSs having low utility in non-peak hours. In this context, the utilization of autonomous drones, which work as flying BSs, could be a more efficient solution than network densification. 

Recently, the US government has approved a resolution to increase commercial use of drones\footnote{https://www.voanews.com/a/trump-ok-test-program-expand-domestic-drone-flights/4085752.html}. Arguably, drones will play a more significantly important role in our daily life in the future. In this chapter, we focus on one of the key issues of content delivery using drones: drone deployment. Generally, the drone deployment problem has been studied to find out the optimal 3D positions for drones to serve user equipments (UEs) on a 2D plane. The altitudes of drones are restricted by local regulations. For example, in the US, the maximum allowable altitude is 120 meters above the ground \cite{H_usa}. In Australia, the drones should be more than 30 meters away from people \cite{H_australia}. 
Here, we consider a scenario where drones are flying at a fixed altitude within the allowed range. The drone deployment problem is a bit similar to the problem of optimal sensor placement in control \cite{savkin2001problem, savkin2002hybrid}. The difference is that in control the optimal sensor placement is done in the time domain, while the drone deployment considered in this chapter is in the spatial domain.

Different from most existing work where UEs are assumed to be randomly distributed or following a predefined distribution in a 2D environment \cite{ding2016performance}, here we consider a more realistic scenario. We focus on urban environment, and only consider outdoor UEs. We propose a street graph and the UEs to be served by drones are near streets. This assumption is reasonable in urban environment as when users enter buildings, they can switch to Wi-Fi to access the network. Furthermore, instead of focusing on modelling the movements of UEs, we propose a UE density function model. Such model reflects the traffic demand at a certain position on the street graph during a certain time period. In this chapter, we build up the UE density functions based on a realistic dataset collected from a social discovery mobile App: Momo\footnote{http://www.immomo.com}. We assume that the drones are wirelessly connected to the existing BSs \cite{backhaul2012} via high frequency radios, which do not interfere with the low frequency radios used by drones and UEs.

Commercial drones often rely on batteries to power their rotors and the on-board electronic modules, such as sensors and radios \cite{gupta2016survey}. Hence, the flying time allowed by the battery is limited. In this chapter, we introduce the idea that the drones can recharge their battery from the existing powerlines\footnote{http://www.sbs.com.au/news/article/2017/08/23/powerlines-charge-drones-vic-students}. For safety, we assume that the drones can stop on the utility poles (instead of on powerlines) and recharge themselves. Then, the routine of a drone is serving UEs, flying to a utility pole, recharging on the utility pole, and then flying back to its serving position. According to the field experiments using Phantom drones \cite{azade2017}, the power for flying is over 140 watts; while the typical power for transmitting information through radios is usually around 250 milliwatts \cite{3GPP}, which is three orders of magnitude less than the former case. Compared to flying, the energy consumption caused by wireless transmission is neglected. To guarantee a certain time for serving UEs, the positions of drones should be well managed such that they are able to get recharged before running out of battery, since such positions impact the time spent on flying.

From the literature review in Section \ref{literature} we can see that, most existing approaches are proactive, which are based on some assumed UE distribution \cite{yang2017proactive, becvar2017performance, chen2017caching}. However, the real situation cannot be fully reflected by such distribution model, due to the high dynamics of realistic UEs. In contrast, the reactive approach is based on what is happening concurrently. The technique of on-line crowdsensing \cite{crowdsensing11} has been available to collect the dynamic information of UEs. Thus, the reactive approach may be more effective in real applications to to provide better service to the dynamic UEs. Before moving forward to the reactive approach, what we do in this chapter falls into the proactive group. But different from those assuming a specific distribution of UEs, we make use of a collected dataset called Momo, which can better reflect the real UE distribution.

In this chapter, we study four problems about drone deployment from simple to difficult:
\begin{itemize}
\item Single Drone Deployment (SDD): where to place a single drone in the area of interest to maximize the effectively served UE number;
\item $k$ Drones Deployment (kDD): given $k$ available drones, where to deploy them such that the effectively served UE number is maximized;
\item Energy aware $k$ Drones Deployment (EkDD): given $k$ available drones, the energy consumption and recharging models, where to deploy them such that the effectively served UE number is maximized;
\item Minimum Drones Deployment (MinDD): what is the minimum number of drones and where to deploy them such that a preferred UE coverage level can be achieved, subject to the inner drone distance constraint.
\end{itemize}

Clearly, the kDD problem is the general case of SDD by extending the number of drones to $k$ from 1, and the EkDD problem is the general case of kDD by taking into account the flying time of drones, which has not been considered in the context of using drones to serve UEs so far in the existing work. The minimum number of drones to achieve a certain user coverage level is a problem in which the Internet Service Providers (ISPs) are interested, which helps ISPs to consider the trade-off between the investment and benefit. Since we assume there a sufficient number of drones available, when drones are nearly out of battery, battery fresh drones would be deployed to replace them. Thus, the flying time constraint is not considered in MinDD.

We address SDD by finding the maximum coverage street point over the street graph, which is relatively easy. For kDD and EkDD, we prove that they can be reduced to the well known max $k$-cover problem, which is NP-hard \cite{feige1998threshold}. For MinDD, we prove that it can be reduced to the set cover problem, which is also NP-hard \cite{feige1998threshold}. Thus, we develop three greedy algorithms to solve kDD, EkDD and MinDD respectively.

To the best of our knowledge, this is the first work to study drone deployment constrained to street graph and with the consideration of battery lifetime constraint. We summarize our contributions as follows:
\begin{itemize}
\item We propose a street graph model describing the urban area of interest; and a UE density function reflecting the traffic demand at a certain position on street graph during a certain period of time.
\item We formulate a series of drone deployment problems and prove that they are NP-hard.
\item We design greedy algorithms to solve the proposed problems and provide theoretical analysis on the approximation factors.
\item Extensive simulations are conducted to verify the effectiveness of the proposed approaches.
\end{itemize}

The rest content is organized as follows. In Section \ref{model}, we present the system model, and in Section \ref{statement}, we formally formulate the problems. Section \ref{solution} presents the proposed solutions, which is followed by our theoretical analysis in Section \ref{analysis}. In Section \ref{simulation}, we conduct extensive simulations to evaluate the proposed approaches based on the realistic network dataset of Momo. Finally, Section \ref{conclusion7} concludes this chapter and discusses future work.

\section{System Model}\label{model}
We consider an urban area with UEs that cannot be served by the existing BSs due to capacity limitation or some malfunction of the infrastructure. We deploy drones to serve these UEs. The drones are assumed to be wirelessly connected to existing BSs via high frequency radios and there is a central station which manages the drones. 

In this section, we present the system model including a proposed street graph associated with the UE density function, and the wireless communication model. 

We first introduce the street graph model. Let ${\cal G} (V,E,\rho)$ be the street graph of the considered area, where $V$ and ${E}$ are the sets of street points and edges between two neighbour street points respectively. Each street point is associated with a UE density function $\rho_{v,t}$, $v\in V$, describing the number of UEs at $v$ during the $t^{th}$ ($t\in [1,T]$) time slot (the total time window we consider is $T$ time slots). Since $\rho$ is a function of time, the street graph $\cal G$ also varies with time. To simplify the statement, we omit the symbol $t$ in the below descriptions.

\begin{definition}\label{graph_distance}
The \textbf{graph distance} between two street points $v$ and $w$ on the graph, i.e., $g(v,w)$, is defined as the length of the shortest path between $v$ and $w$.  
\end{definition}

In the example shown in Fig. \ref{graph}, $g(A,B)$ is the length of the path represented by the red line segments; while $g(A,C)$ is the length of the path represented by the blue line segment. 

\begin{figure}[t]
\begin{center}
{\includegraphics[width=0.4\textwidth]{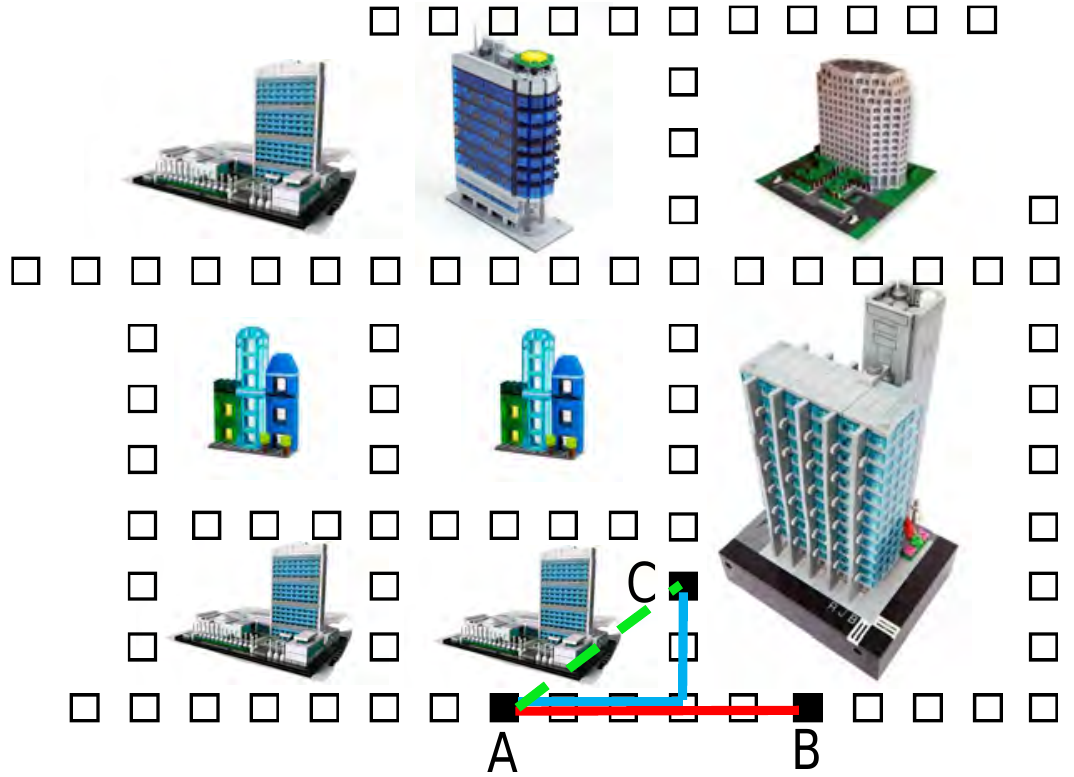}}
\caption{A street graph. The squares are street points. The length of the path represented by the red line segments is the graph distance between A and B; and that between A and C is the blue line.}\label{graph}
\end{center}
\end{figure}

\begin{remark}\label{remark1}
The physical ground distance between two street points on the street graph is always no larger than the corresponding graph distance, according to the triangle inequality theorem, which states that the sum of the lengths of two sides of a triangle must always be greater than the length of the third side.
\end{remark}

\begin{definition}\label{graph_distance}
Let $h$ be the altitude of the drone. The \textbf{spatial graph distance} between the drone at $(v,h)$ and a point $w$ on the graph, i.e., $d(v,h,w)$, is defined as:
\begin{equation}\label{physical_space_distance}
d(v,h,w)=\sqrt{g(v,w)^2+h^2}
\end{equation} 
\end{definition}

\begin{remark}\label{remark2}
According to Remark \ref{remark1}, $d(v,h,w)$ is always no smaller than the corresponding physical spatial distance.
\end{remark}

In this chapter, we do not focus on optimizing the altitude of the drones\footnote{This may involve dealing with specific regulations and policies.}. For simplicity, we fix $h$ in this chapter and then the $h$ in the spatial graph distance can be omitted without confusion. To avoid hitting tall buildings, we assume the drones can only fly and hover head over the streets.

Now, we consider the wireless communication model between drones and UEs, i.e., the drones-to-UEs links rather than the BSs-to-drones wireless backhaul links. The drones may have LoS or NLoS with UEs depending on the terrain. Consider a drone hovering over street point $v$ and a UE at street point $w$. Adopting the realistic 3GPP propagation model \cite{3GPP}, we have the path loss of signal from the drone to the UE: 
\begin{equation}\label{pathloss_simple}
\begin{aligned}
PL^{\lambda}(d(v,w))=A^{\lambda}+B^{\lambda}\log(d(v,w))\\
\end{aligned}
\end{equation}
where $\lambda \in \{LoS, NLoS\}$. 

The received power by the UE is computed by:
\begin{equation}\label{received_power}
S(v,w)=\frac{P_{tx}}{PL^{\lambda}(d(v,w))}
\end{equation}
where $P_{tx}$ is the transmit power of the drone. 

Then, the signal to noise ratio (SNR) is:
\begin{equation}\label{SNR}
SNR(v,w)=\frac{S(v,w)}{N_0}
\end{equation}
where $N_0$ is the power of noise. 

To meet the Quality of Service (QoS) requirement, we require that the SNR at any UEs should not be lower than a given threshold $\alpha$:
\begin{equation}\label{SNR_alpha}
SNR(v,w) \geq \alpha
\end{equation} 
otherwise, the received signal cannot be demodulated by the normal UEs.

From (\ref{pathloss_simple}), (\ref{received_power}) and (\ref{SNR}), we find that whether (\ref{SNR_alpha}) can be satisfied depends on $d(v,w)$ ($g(v,w)$, as $h$ is fixed in this chapter). Moreover, we find that the graph distance is upper bounded, beyond which (\ref{SNR_alpha}) will never be met. Let $g_{max}$ denote such upper bound.
\begin{remark}\label{remark3}
As will be shown later, we use the NLoS situation to compute $g_{max}$, which is the worst case. Then, according to Remark \ref{remark2}, if a drone is placed at $v$, the UEs within graph distance $g_{max}$ on the street graph always satisfy (\ref{SNR_alpha}), i.e., can be served by the drone.
\end{remark}

When multiple drones are used, the signal to interference and noise ratio (SINR) at a UE is computed by:
\begin{equation}\label{SINR}
SINR(v,w)=\frac{S(v,w)}{I+N_0}
\end{equation}
where $I$ is the total interference from all the other drones.
\section{Problem Statement}\label{statement}
In this section, we formulate four drone deployment problems.
\subsection{Single Drone Deployment (SDD)}
We start from the simplest problem: Single Drone Deployment (SDD) i.e., where to deploy a single drone in the area of interest such that the effectively served UE number by the drone is maximized.

Before formulating the problem, we introduce the below definitions.

\begin{definition}\label{covered_set}
The \textbf{covered street point set} is defined as the set of street points, which are covered by the drone. If the drone is deployed at $v$, the covered street point set can be computed by: $S(v)=\{w|g(w,v)\leq g_{max}\}$. Correspondingly, we define the \textbf{covered UE set} as the set of UEs which can be served by the drone if it is placed at $v$, i.e., $U(v)$.
\end{definition}

\begin{remark}\label{remark4}
Let $\cal U$ denote the set of all UEs on the street graph, then we have ${\cal U}=\bigcup_{v\in V} U(v)$.
\end{remark}

\begin{definition}\label{benefit}
The \textbf{benefit} achieved by placing a drone at $v$ is defined as the number of UEs it can serve, i.e., $B(v)=\sum_{w\in S(v)} \rho_w$.
\end{definition}

\begin{remark}\label{remark5}
From Definition \ref{covered_set}, \ref{benefit}, we have $|U(v)|=B(v)$, $\forall\ v\in V$.
\end{remark}

We introduce a binary variable $z(v)$ indicating whether the drone is deployed at $v$:
\begin{equation}\label{z}
z(v)=
\begin{cases}
\begin{aligned}
&1, \text{if the drone is deployed at}\ v\\
&0, \text{otherwise}
\end{aligned}
\end{cases}
\end{equation}

With Definition \ref{covered_set}, \ref{benefit} and (\ref{z}), Single Drone Deployment (SDD) can be formulated as follows:
\begin{equation}\label{optimization1}
\max_{z(v)} \left( \sum_{w\in S(v)} \rho_w \right) z(v)
\end{equation}
subject to 
\begin{equation}\label{constraint1}
S(v)=\{w|g(w,v)\leq g_{max}\}
\end{equation}

\subsection{$k$ Drones Deployment (kDD)}
Now we consider the second problem: $k$ Drones Deployment (kDD), i.e., where to deploy $k$ drones such that the effectively served UE number is maximized subject to the minimum distance between any to drones. Another constraint, i.e., the battery constraint, will be considered in Section \ref{P4}.

In the scenario with multiple drones, a UE may hear signals from more than one of them, which causes interference. In this chapter, we require that the minimum distance between any two drones is $\beta$. Note that, the system parameter $\beta$ controls the intensity of interference and a lower $\beta$ may lead to higher interference.

We formulate the $k$ Drones Deployment (kDD) problem as follows:
\begin{equation}\label{optimization2}
\max_{z(v)} \sum_{v\in V}\left( \sum_{w\in S(v)} \rho_w \right) z(v)
\end{equation}
subject to
\begin{equation}\label{constraint2_1}
S(v)=\{w|g(w,v)\leq g_{max}\}
\end{equation}
\begin{equation}\label{constraint2_2}
\sum_{v\in V} z(v)=k
\end{equation}
\begin{equation}\label{constraint2_3}
g(v,w)>\beta,\ \forall\ z(v)=1, z(w)=1, v\neq w
\end{equation}

As seen in the objective function (\ref{optimization2}), we want to maximize the benefits of drone deployment. 
In comparison to the SDD problem, here (\ref{constraint2_3}) constrains the inner distance between any pair of drones.

\subsection{Energy aware $k$ Drones Deployment (EkDD)}\label{P4}
The third problem is named as Energy aware $k$ Drones Deployment (EkDD), which aims at deploying $k$ drones such that the effectively served UE number is maximized, subject to the inner drone distance constraint and the drone battery lifetime constraint. 

The off-the-shelf drones often rely on the preloaded battery to power their rotors and the on-board electronics models. Considering the battery lifetime constraint, we require an efficient duty management scheme for the available drones to guarantee their service. 

Typically, there are three states of a drone: Serve (S), Fly (F) and Recharge (R). Denote $\lambda_S$, $\lambda_F$, and $\lambda_R$ as the percentages of a time slot during which a drone is in S, F, and R respectively. It is apparent that, $\lambda_S+\lambda_F+\lambda_R=100\%$. Among these three states, we assume that only S contributes to network performance, because when a drone is in F or R, the wireless backhaul links may be unstable or unavailable.

We assume that there are a few street points on the graph, which allow the drones to recharge. In practice, such positions can be placed on the utility poles. Let $V_R \subset V$ be the set of recharging positions.

Let $s$ be the speed of the drones. Given $\lambda_S$, $\lambda_F$, and $\lambda_R$, we can compute a maximum graph distance $g_R$ from the recharging positions, beyond which a drone cannot guarantee its serving time, i.e., $\lambda_S$ percent of the full cycle of time slots (S $\rightarrow$ F $\rightarrow$ R $\rightarrow$ F). For simplicity, we assume that to reach a recharging position $v_R$ from its current serving position $v$, a drone first flies over the street to the position whose projection is $v_R$, it then lands on the utility pole (whose altitude is $h_R$) to recharge. When the recharging process finishes, it ascends to $h$ and then flies to the serving position $v$ following streets. Consequently, we have
\begin{equation}
2(g(v,w)+h-h_R) \leq s\lambda_F\cdot 1,
\end{equation}
from which we can obtain
\begin{equation}
g_R=\frac{1}{2}s\lambda_F+h_R-h.
\end{equation}

Now, we are ready to formulate the problem with the constraint of recharging positions, which is shown below.

\begin{equation}\label{optimization4}
\max_{z(v)} \lambda_S\sum_{v\in V}\left( \sum_{w\in S(v)} \rho_w \right) z(v)
\end{equation}
subject to
\begin{equation}\label{constraint4_1}
S(v)=\{w|g(w,v)\leq g_{max}\}
\end{equation}
\begin{equation}\label{constraint4_2}
\sum_{v\in V} z(v)=k
\end{equation}
\begin{equation}\label{constraint4_3}
g(v,w)>\beta,\ \forall\ z(v)=1, z(w)=1, v\neq w
\end{equation}
\begin{equation}\label{constraint4_4}
g(v,v_R)\leq g_R,\ \forall\ z(v)=1,\ \exists\ v_R \in V_R 
\end{equation}

Here, we still aim at finding the optimal positions for $k$ drones such that the total number of served UEs can be maximized and in the meanwhile any two drones are at least $\beta$ apart from each other and any drone must be within $g_R$ from the nearest recharging position.

\subsection{Minimum Drones Deployment (MinDD)}
The above three problems all consider the scenario when drones are limited. The final problem is to seek the minimum number of drones and their deployments such that an ISP preferred level of coverage of the UEs can be served (say 98\%), which is called Minimum Drones Deployment (MinDD). Since we assume that we have a sufficient number of drones, the battery constraint can be ignored but the inner drone distance constrained is still considered.

MinDD can be formulated as follows:

\begin{equation}\label{optimization3}
\min_{z(v)} \sum_{v\in V} z(v)
\end{equation}
subject to
\begin{equation}\label{constraint3_1}
|\bigcup_{\{v|z(v)=1\}}U(v)|\geq\gamma |{\cal U}|
\end{equation}
\begin{equation}\label{constraint3_2}
g(v,w)>\beta,\ \forall\ z(v)=1, z(w)=1, v\neq w
\end{equation}
where $\gamma$ in (\ref{constraint3_1}) is the ISP preferred level of coverage, i.e., the effectively served UE ratio should not be less than $\gamma$. 


\section{Proposed solutions}\label{solution}
In this section, we present the solutions to the four considered problems.
\subsection{Solution to SDD}
\label{solution_single_case}
It is relatively easy to solve SDD using the maximum coverage street point over the street graph.

To find the solution, we can search the set of $V$. For each $v\in V$, find $S(v)$ and compute $B(v)$. Then, the $v$ having the largest $B(v)$ is the optimal place. Note, the solution to the problem may not be unique due to the discrete nature of the UE number. 

\subsection{Solution to kDD and EkDD}\label{solution_multiple_case}
First, we characterize the kDD and EkDD problems as NP-hard ones in Theorem \ref{theorem1}.

\begin{Theorem}\label{theorem1}
The problems of kDD and EkDD are NP-hard.
\end{Theorem}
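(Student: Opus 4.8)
The plan is to prove NP-hardness of both kDD and EkDD by polynomial-time reduction from the well-known \emph{maximum $k$-cover} problem, which is stated in the excerpt to be NP-hard \cite{feige1998threshold}. Recall that in max $k$-cover we are given a ground set $\mathcal{X}=\{x_1,\dots,x_N\}$, a collection of subsets $\mathcal{S}=\{S_1,\dots,S_M\}$ with $S_j\subseteq\mathcal{X}$, and an integer $k$, and we seek $k$ of the subsets whose union has maximum cardinality. The natural correspondence is to identify each candidate drone position $v\in V$ with a subset (its covered UE set $U(v)$), the UEs $\mathcal{U}$ with the ground set, and the benefit $B(v)=|U(v)|$ (Remark \ref{remark5}) with the size of the chosen subset; maximizing the served-UE count over $k$ positions is then exactly a weighted max $k$-cover. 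The main work is to construct, from an arbitrary max $k$-cover instance, a street graph $\mathcal{G}(V,E,\rho)$ together with a UE density function so that the covered sets $\{U(v)\}_{v\in V}$ realize the given family $\{S_j\}$, and to make this construction respect the extra geometric constraints of kDD and EkDD.

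First I would handle kDD. I would build a graph gadget with one designated ``position vertex'' $v_j$ for each subset $S_j$ and one ``element vertex'' $u_i$ for each $x_i\in\mathcal{X}$, placing an edge of length at most $g_{max}$ between $v_j$ and $u_i$ exactly when $x_i\in S_j$, and making all other pairwise graph distances exceed $g_{max}$ (e.g.\ by routing through long connector paths or isolating gadgets). Setting $\rho_{u_i}=1$ at element vertices and $\rho=0$ elsewhere makes $B(v_j)=|S_j|$ and forces $S(v_j)\cap\{\text{element vertices}\}=\{u_i:x_i\in S_j\}$, so that the objective (\ref{optimization2}) with $\sum_v z(v)=k$ is identical to the max $k$-cover objective. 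The delicate point is the inner-distance constraint (\ref{constraint2_3}): I must ensure that choosing any $k$ position vertices is always feasible, i.e.\ that $g(v_j,v_{j'})>\beta$ for all distinct position vertices. This is arranged by choosing $\beta$ smaller than the shortest inter-position distance in the gadget (or equivalently by padding the connector paths so that positions are mutually far apart while still each sitting within $g_{max}$ of its elements). Under this construction the constraint never excludes a candidate solution, so an optimal kDD placement yields an optimal $k$-cover and vice versa; the reduction is clearly polynomial in $N$, $M$, and $k$. Hence kDD is NP-hard.

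For EkDD I would extend the same gadget so that the two additional features collapse to the kDD case. The per-slot serving fraction $\lambda_S$ in the objective (\ref{optimization4}) is a fixed positive constant that multiplies every candidate uniformly, so it does not affect which placement is optimal and can be absorbed. To neutralize the recharging constraint (\ref{constraint4_4}), I would place a recharging position $v_R\in V_R$ within graph distance $g_R$ of \emph{every} candidate position vertex $v_j$ (for instance by co-locating a recharge vertex with each $v_j$, or by choosing the gadget parameters so that $g_R$ exceeds the graph diameter); then (\ref{constraint4_4}) is satisfied by every candidate and imposes no restriction. With $\lambda_S$ constant and both the distance and recharge constraints rendered vacuous by construction, EkDD restricted to this family of instances is exactly the kDD instance above, which is exactly weighted max $k$-cover. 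Therefore EkDD contains kDD (and hence max $k$-cover) as a special case and is NP-hard as well.

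The step I expect to be the main obstacle is the \emph{metric realizability} of an arbitrary set system as covered sets on a genuine street graph while simultaneously honoring the distance constraints. Covered sets are defined through the shortest-path metric (Definition \ref{graph_distance}) and the triangle-inequality property (Remark \ref{remark1}), so I cannot freely prescribe which vertices lie within $g_{max}$ of which positions; I must verify that the long connector paths separating distinct gadgets do not inadvertently bring an unintended element vertex within $g_{max}$ of a position vertex, and that they keep all position pairs beyond $\beta$. The clean way to discharge this is to make each subset-gadget a star centered at $v_j$ with element leaves at distance $\le g_{max}$, join the gadgets only through a central hub via paths of length strictly greater than $g_{max}$, and then fix $\beta$ between the intra-gadget scale and the inter-gadget scale. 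Once this separation is established, all constraints become either universally satisfiable or vacuous, the objective matches max $k$-cover exactly, and the NP-hardness of both problems follows.
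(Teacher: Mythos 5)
Your proposal is correct, and it reaches the conclusion by a genuinely more explicit route than the paper. The paper's proof is a two-line special-case argument: drop the inner-distance constraint (\ref{constraint2_3}) in kDD and the constraints (\ref{constraint4_3})--(\ref{constraint4_4}) in EkDD, observe that what remains is max $k$-cover over the family $\{U(v)\}_{v\in V}$, and invoke the NP-hardness of max $k$-cover. You instead run the reduction in the textbook direction, \emph{from} an arbitrary max $k$-cover instance \emph{to} a kDD/EkDD instance, by building a street graph whose covered sets realize the given set system (star gadgets centered at position vertices, element leaves at distance at most $g_{max}$, inter-gadget separation exceeding $g_{max}$, and $\beta$, $g_R$ chosen so the side constraints are vacuous). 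This buys rigor on exactly the point the paper leaves implicit: that every set system can actually be encoded, in polynomial time, as covered sets induced by the shortest-path metric of a street graph --- without that realizability step, ``kDD generalizes max $k$-cover'' is an assertion rather than a proof, and the paper's phrasing (``the two problems are reduced to an instance of max $k$-cover'') even has the reduction pointing the wrong way. The cost of your approach is that you must police the metric side effects of the gadget, and there is one loose end you should close explicitly: every vertex of $V$, including element and connector vertices, is a legal drone position in (\ref{optimization2}), so you need to argue that placing a drone at an element vertex $u_i$ (which covers only the unit of density at $u_i$) or at a zero-density connector vertex never beats placing it at some $v_j$ with $x_i\in S_j$; this holds after the harmless normalization that every element of the max $k$-cover instance lies in at least one subset, and it makes the two optima coincide exactly. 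With that observation added, your construction is a complete and correct NP-hardness proof for both problems.
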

\begin{proof}
If we ignore constraint (\ref{constraint2_3}) in kDD, and constraint (\ref{constraint4_3}) and (\ref{constraint4_4}) in EkDD, the two problems are reduced to an instance of max $k$-cover problem, i.e., given a universe $\cal U$ and sets $U(v)\in {\cal U}$ ($v\in V$), we are looking for a set ${\cal V}$ consisting of $k$ elements of $v$, such that the union of $U(v)$ has maximum cardinality. As shown in \cite{feige1998threshold}, the max $k$-cover problem is NP-hard. So, both of kDD and EkDD are also NP-hard.
\end{proof}

We propose greedy algorithms to solve kDD and EkDD respectively as shown in Algorithms \ref{algorithm1} and \ref{algorithm2}. The basic idea of Algorithm \ref{algorithm1} follows that in \cite{korte2012combinatorial}, i.e, in each iteration, we pick a street point which 1) adds a maximum benefit to the current benefit. Since we consider the inner drone distance constraint, we have to further check 2) whether the street point is $\beta$ away from all the already picked street points.

Clearly, the set ${\cal V}=\{v|z(v)=1\}$, i.e., the set of street points to deploy drones, is with a cardinality of $|{\cal V}|=k$. 
Let $C$ be the set of UEs which have already been served by drones. Initially, both $\cal V$ and $C$ are empty sets. In each iteration, we first compute the optimal street point $v$ in $V$. Second, we check whether (\ref{constraint2_3}) in kDD is satisfied. If yes, we add the served UEs by $v$ to the set of $C$ and add $v$ to the set of $\cal V$. Otherwise, we remove $v$ from $V$.

The intuition of Algorithm \ref{algorithm2} is similar to Algorithm \ref{algorithm1}. For EkDD,  with regards to $\beta$ constraint, we have another condition to check, i.e., 3) whether the selected street point is within $g_R$ from the nearest recharging position, see Line 7 of Algorithm \ref{algorithm2}. 

\begin{algorithm}[t]
\caption{The greedy algorithm for kDD}\label{algorithm1}
\begin{algorithmic}[1]
\Require $V$, $U$, $\beta$ and $k$
\Ensure $\cal V$
\State $C=\emptyset$.
\State ${\cal V}=\emptyset$.
\State $i=1$.
\While{$i \leq k$}
\State $\max_{v\in \{v|v\in V, v\notin {\cal V}\}} |C\bigcup U(v)|$.
\If {$g(v,w)> \beta$, $\forall\ w\in {\cal V}$}
\State $C=C\bigcup U(v)$.
\State ${\cal V}={\cal V}\bigcup \{v\}$.
\EndIf
\State $V=V\setminus v$.
\State $i=i+1$.
\EndWhile
\end{algorithmic}
\end{algorithm}

\begin{algorithm}[t]
\caption{The greedy algorithm for EkDD}\label{algorithm2}
\begin{algorithmic}[1]
\Require $V$, $U$, $\beta$, $k$, $V_R$ and $g_R$
\Ensure $\cal V$
\State $C=\emptyset$.
\State ${\cal V}=\emptyset$.
\State $i=1$.
\While{$i \leq k$}
\State $\max_{v\in \{v|v\in V, v\notin {\cal V}\}} |C\bigcup U(v)|$.
\If {$g(v,w)> \beta$, $\forall\ w\in {\cal V}$}
\If {$g(v,v_R)\leq g_R$, $\exists\ v_R \in V_R$}
\State $C=C\bigcup U(v)$.
\State ${\cal V}={\cal V}\bigcup \{v\}$.
\EndIf
\EndIf
\State $V=V\setminus v$.
\State $i=i+1$.
\EndWhile
\end{algorithmic}
\end{algorithm}

One interesting point worth mentioning is that not all of these $k$ drones can serve simultaneously if the battery constraint is considered, instead, they need to fly to the recharging positions alternately. Let $n$ be the number of drones which must recharge to make the drones work sustainablely. Then, we have that the amount of consumed energy should be no larger than the recharged energy, i.e.,
\begin{equation}\label{recharge_number}
p(k-n) \leq qn
\end{equation}
where $p$ is the energy consumption per time slot and $q$ is the recharging energy per time slot. Here, we only consider the dominant part of the power usage for resisting the gravity and omit that for other marginal parts of energy consumption such as wind. As a result, flying and hovering of drones consume the same power.

From (\ref{recharge_number}), we can obtain:
\begin{equation}\label{n_k}
n \geq \frac{q}{q+p} k
\end{equation}
Then, the drones can be divided into $\lfloor\frac{q+p}{q}\rfloor$ groups to recharge by turns. 

\subsection{Solution to MinDD}\label{minimum_number}
We characterize the MinDD problem as a NP-hard one in Theorem \ref{theorem2}.
\begin{Theorem}\label{theorem2}
The problem of MinDP is NP-hard.
\end{Theorem}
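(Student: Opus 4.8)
The plan is to prove that MinDD is NP-hard by a polynomial-time reduction from the classical set cover problem, exactly mirroring the strategy used for Theorem \ref{theorem1} (where kDD and EkDD were reduced to max $k$-cover). First I would recall the decision version of set cover: given a universe $\mathcal{U}$, a collection of subsets $\{U_1,\dots,U_m\}$ with $\bigcup_j U_j = \mathcal{U}$, and an integer $\ell$, decide whether there exist $\ell$ of these subsets whose union equals $\mathcal{U}$. This problem is known to be NP-hard \cite{feige1998threshold}. The objective of MinDD in (\ref{optimization3}) is to minimize the number of deployed drones subject to the coverage constraint (\ref{constraint3_1}) and the inner-distance constraint (\ref{constraint3_2}), so the natural target of the reduction is set cover rather than max $k$-cover.

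The key step is to construct, from an arbitrary set cover instance, a street graph $\mathcal{G}(V,E,\rho)$ together with the parameters $g_{max}$, $\beta$ and $\gamma$ such that a drone deployment of size $N$ exists if and only if the set cover instance admits a cover of size $N$. I would associate each candidate subset $U_j$ with a candidate street point $v_j\in V$, and arrange that the covered UE set $U(v_j)$ of a drone placed at $v_j$ equals exactly the subset $U_j$ of the given instance; this is achieved by placing UEs (via the density function $\rho$) at street points corresponding to the elements of $\mathcal{U}$, and laying out the edges so that the graph distances make $S(v_j)$ pick out precisely the elements of $U_j$ and nothing else. I would then set $\gamma = 1$ (full coverage required), so that constraint (\ref{constraint3_1}) becomes $\bigcup_{\{v\mid z(v)=1\}} U(v) = \mathcal{U}$, which is precisely the set-cover condition. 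To neutralize the inner-distance constraint (\ref{constraint3_2}), I would choose $\beta$ small enough (e.g. $\beta=0$, or smaller than the minimum edge length in the constructed graph) so that it never forbids any pair of candidate placements; this ensures the reduction preserves feasibility in both directions. Under this construction, the minimum number of drones equals the minimum number of subsets in a cover, so solving MinDD solves set cover.

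I would then verify the two directions of the equivalence and the polynomial-time bound. For the forward direction, any cover of size $N$ yields a deployment of size $N$ by placing drones at the corresponding $v_j$; the coverage constraint holds because the subsets cover $\mathcal{U}$, and the distance constraint holds vacuously by the choice of $\beta$. For the reverse direction, any feasible deployment of size $N$ selects street points whose covered UE sets union to all UEs, hence the corresponding subsets form a cover of size $N$. Finally I would note that the graph, the UE density function, and the parameters are all constructible in time polynomial in $m$ and $|\mathcal{U}|$. Since set cover is NP-hard and the reduction is polynomial, MinDD is NP-hard.

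The main obstacle I anticipate is the geometric realizability of the reduction: unlike the purely combinatorial max $k$-cover used for kDD, here the covered sets $U(v)$ are not free to be arbitrary subsets but are forced by the graph-distance threshold $g_{max}$ through $S(v)=\{w\mid g(w,v)\le g_{max}\}$ (Definition \ref{covered_set}). One must exhibit an explicit embedding of an arbitrary set system into a planar street graph in which each candidate point's $g_{max}$-ball realizes exactly the desired subset. The cleanest way around this is a ``gadget'' layout: represent each element of $\mathcal{U}$ by a street point, attach it by a short path to every candidate point $v_j$ whose subset contains that element (keeping those paths within graph distance $g_{max}$), and make all other element points lie at graph distance strictly greater than $g_{max}$ from $v_j$ by routing through long detours. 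Confirming that such a graph with consistent edge lengths can always be built — and that it stays planar and polynomial in size — is the delicate part of the argument, though it is routine once the gadget is specified.
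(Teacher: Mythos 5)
Your proposal is correct and rests on the same pillar as the paper's proof — the NP-hardness of set cover \cite{feige1998threshold} — but it is carried out in the opposite, and formally cleaner, direction. The paper argues that if one drops the inner-distance constraint (\ref{constraint3_2}) and sets $\gamma=100\%$, MinDD ``is reduced to an instance of set cover,'' i.e., it treats MinDD as a generalization of set cover and infers hardness of the generalization. That argument is only valid if every abstract set system $\{U_j\}$ can actually be realized as the family of covered UE sets $U(v)$ of some street-graph instance; the paper silently assumes this. You identify exactly this gap — the sets $U(v)$ are not free but are induced by the $g_{max}$-balls $S(v)=\{w\mid g(w,v)\leq g_{max}\}$ of Definition \ref{covered_set} — and you supply the missing ingredient: an explicit gadget embedding an arbitrary set cover instance into a street graph, with $\gamma=1$ and $\beta$ chosen so that (\ref{constraint3_2}) is vacuous, together with both directions of the equivalence and the polynomial-size bound. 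What your approach buys is a complete reduction; what it costs is the extra realizability argument, which is indeed the delicate part but is routine here: give every candidate--element edge length exactly $g_{max}$, so any path of two or more edges has length at least $2g_{max}>g_{max}$ and each ball $S(v_j)$ captures precisely the elements of $U_j$. Your worry about planarity is unnecessary — the model in Section \ref{model} defines $\mathcal{G}(V,E,\rho)$ as an abstract graph with a shortest-path metric and imposes no planarity requirement — so the construction goes through without detour gadgets.
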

\begin{proof}
If we ignore constraint (\ref{constraint3_2}) and set $\gamma=100\%$ in MinDD, the problem is reduced to an instance of set cover problem, i.e., given a universe $\cal U$ and sets $U(v)\in {\cal U}$ ($v\in V$), we are looking for a collection ${\cal V}$ of the minimum number of sets from $U$, whose union is the entire universe $\cal U$. Formally, $\cal V$ is a set cover if $\bigcup_{v\in {\cal V}}U(v)={\cal U}$. We try to minimize $|{\cal V}|$. As shown in \cite{feige1998threshold}, the set cover problem is NP-hard, then the generalized and constrained version of set cover problem, i.e., MinDD, is also NP-hard.
\end{proof}

We design a greedy algorithm for MinDD, which is shown in Algorithm \ref{algorithm3}. The basic idea of Algorithm \ref{algorithm3} is similar to Algorithm \ref{algorithm1}, i.e., in each iteration, it picks the street point which leads to the maximum coverage of UEs and at the same time satisfying the inner drone distance constraint.  Algorithm \ref{algorithm3} will terminate when $\gamma$ percent of all the UEs in $\cal U$ have been covered by the drones. The output $\cal V$ gives the projections of drones and the number of drones is $|\cal V|$.


\begin{algorithm}[t]
\caption{The greedy algorithm for MinDD}\label{algorithm3}
\begin{algorithmic}[1]
\Require $V$, $U$, $\cal U$ and $\beta$
\Ensure ${\cal V}$
\State $C=\emptyset$.
\State ${\cal V}=\emptyset$.
\While{$|C| < \gamma |{\cal U}|$}
\State $\max_{v\in \{v|v\in V, v\notin {\cal V}\}} |C\bigcup U(v)|$.
\If {$g(v,w)> \beta$, $\forall\ w\in {\cal V}$}
\State $C=C\bigcup U(v)$.
\State ${\cal V}={\cal V}\bigcup \{v\}$.
\EndIf
\State $V=V\setminus v$.
\EndWhile
\end{algorithmic}
\end{algorithm}

\section{Analysis of the proposed algorithms} \label{analysis}
This section analyses the proposed greedy algorithms.

We first consider Algorithms \ref{algorithm1} and \ref{algorithm2}.
Let $OPT$ be the number of served UEs by an optimal solution. Let $c_i$ be the number of served UEs by the $i^{th}$ ($i\leq k$) picked street point, and $x_i$ be the number of served UEs by $i$ already picked street points, i.e., $x_i=\sum_{j=1}^i c_j$. Then, the remaining unserved UE number can be computed by $y_i=OPT-x_i$. Further, $x_0=0$, then $y_0=OPT$.

Since the optimal solution uses $k$ sets to serve $OPT$ UEs, we find that: in the $(i+1)^{th}$ step, some street point must be able to serve at least $\frac{1}{k}$ of the remaining uncovered UEs from $OPT$, i.e., 
\begin{equation}\label{ana1}
x_{i+1}\geq \frac{y_i}{k}
\end{equation}
In particular, when $i=0$, we have:
\begin{equation}\label{ana2}
x_1\geq \frac{OPT}{k},
\end{equation}
since $y_0=OPT$.

\begin{Claim}\label{claim1}
$y_{i+1}\leq (1-\frac{1}{k})^{i+1}OPT$.
\end{Claim}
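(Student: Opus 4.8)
The plan is to prove Claim \ref{claim1} by induction on $i$, converting the one-step progress bound (\ref{ana1}) into geometric decay of the residual quantity $y_i$. The engine of the argument is the recursion $y_{i+1}\leq \big(1-\frac{1}{k}\big)y_i$, which I would derive first and then iterate.

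First I would record the one-step relation. By (\ref{ana1}), the street point chosen in the $(i+1)^{\text{th}}$ greedy iteration serves at least a $\frac{1}{k}$ fraction of the UEs that remain uncovered relative to the optimum, so the number of newly served UEs $c_{i+1}$ satisfies $c_{i+1}\geq \frac{y_i}{k}$. Since $x_{i+1}=x_i+c_{i+1}$, this gives $x_{i+1}\geq x_i+\frac{y_i}{k}$, and subtracting both sides from $OPT$ yields
\[
y_{i+1}=OPT-x_{i+1}\leq OPT - x_i - \frac{y_i}{k} = y_i - \frac{y_i}{k} = \Big(1-\frac{1}{k}\Big)y_i .
\]

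Then I would run the induction. For the base case $i+1=1$, inequality (\ref{ana2}) gives $x_1\geq \frac{OPT}{k}$, hence $y_1\leq \big(1-\frac{1}{k}\big)OPT=\big(1-\frac{1}{k}\big)^{1}OPT$, which matches the claim; equivalently the base value $y_0=OPT=\big(1-\frac{1}{k}\big)^{0}OPT$ holds trivially. For the inductive step, assuming $y_i\leq \big(1-\frac{1}{k}\big)^{i}OPT$, the one-step recursion gives $y_{i+1}\leq \big(1-\frac{1}{k}\big)y_i \leq \big(1-\frac{1}{k}\big)^{i+1}OPT$, which closes the induction and establishes the claim for all $i$.

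The step I expect to be the main obstacle is justifying the one-step bound (\ref{ana1}) itself, rather than the bookkeeping above. In the pure max $k$-cover reduction used in Theorem \ref{theorem1}, it follows from a pigeonhole argument: the $k$ sets of an optimal solution jointly cover all $OPT$ UEs, so at least one of them covers at least $\frac{y_i}{k}$ of the currently uncovered ones, and the greedy choice (maximizing the newly covered UEs in Line 5 of Algorithm \ref{algorithm1}) does at least as well. The delicate point is that kDD and EkDD additionally impose the inner-drone-distance constraint (\ref{constraint2_3}) (and, for EkDD, recharging feasibility), which may forbid greedy from selecting precisely that optimal set; I would therefore state and use the claim, as the surrounding analysis does, at the level of the unconstrained max $k$-cover relaxation underlying (\ref{ana1}), and flag explicitly that the constrained approximation guarantee must be phrased relative to the optimal \emph{feasible} value rather than the unconstrained optimum.
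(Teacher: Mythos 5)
Your proof is correct and follows essentially the same route as the paper's: induction on $i$, with the base case supplied by (\ref{ana2}) and the inductive step obtained by turning the one-step progress bound (\ref{ana1}) into the recursion $y_{i+1}\leq\bigl(1-\frac{1}{k}\bigr)y_i$. Your bookkeeping via $c_{i+1}=x_{i+1}-x_i$ is in fact slightly cleaner than the paper's (which conflates $x_{i+1}$ with the increment in (\ref{ana1}) and (\ref{ana3})), and your closing caveat about the distance and recharging constraints is a fair observation about the surrounding analysis, though it does not affect the claim itself.
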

\begin{proof}
We use the method of Mathematical Induction to prove the claim.

When $i=0$, from (\ref{ana2}) we have
\begin{equation}\label{ana4}
\begin{aligned}
&x_1 \geq \frac{OPT}{k}\\
\Rightarrow &OPT-x_1 \leq OPT-\frac{OPT}{k}\\
\Rightarrow &y_1 \leq (1-\frac{1}{k}) OPT
\end{aligned}
\end{equation}
i.e., the claim is true when $i$ takes 0.

Further, we assume $y_{i}\leq (1-\frac{1}{k})^{i}OPT$ and then we derive $y_{i+1}\leq (1-\frac{1}{k})^{i+1}OPT$ below:
\begin{equation}\label{ana3}
\begin{aligned}
&x_{i+1}\leq y_{i}-y_{i+1}\\
\Rightarrow &y_{i+1}\leq y_{i}-x_{i+1}\\
\Rightarrow &y_{i+1}\leq y_{i}-\frac{y_i}{k}\ (\text{using}\ (\ref{ana1}))\\
\Rightarrow &y_{i+1}\leq (1-\frac{1}{k})^{i+1}OPT.
\end{aligned}
\end{equation}

Therefore, (\ref{ana4}) and (\ref{ana3}) prove Claim \ref{claim1}.
\end{proof}

Now we are in the position to present the main results. 

\begin{Theorem}\label{theorem3}
Algorithm \ref{algorithm1} (Algorithm \ref{algorithm2}) is a $(1-\frac{1}{e})$ approximation of kDD (EkDD).
\end{Theorem}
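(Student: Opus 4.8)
The plan is to treat the served-UE count as a monotone coverage function and to ride on Claim~\ref{claim1}, which already carries almost all of the analytical weight. The objective in (\ref{optimization2}) is exactly the cardinality of a union of the covered UE sets $U(v)$, so it is a coverage (monotone submodular) function; the greedy rule in Algorithm~\ref{algorithm1} (and Algorithm~\ref{algorithm2}) picks at each step the feasible street point that maximizes the marginal gain $|C\cup U(v)|-|C|$. Once Claim~\ref{claim1} is in hand, establishing the $(1-\frac{1}{e})$ factor is a short deterministic calculation: I would only need to instantiate the claim at the final iteration and convert the bound on the residual $y_i$ into a lower bound on what greedy has served.

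Concretely, first I would set $i+1=k$ in Claim~\ref{claim1} to obtain $y_k\leq\left(1-\frac{1}{k}\right)^k OPT$. Since $x_k=OPT-y_k$ is the number of UEs served by the $k$ street points chosen by the algorithm, this gives
\begin{equation*}
x_k=OPT-y_k\geq\left(1-\left(1-\frac{1}{k}\right)^k\right)OPT.
\end{equation*}
Next I would invoke the elementary inequality $1-t\leq e^{-t}$ with $t=\frac{1}{k}$, raised to the $k$-th power, to get $\left(1-\frac{1}{k}\right)^k\leq e^{-1}$, whence $x_k\geq\left(1-\frac{1}{e}\right)OPT$. Dividing by $OPT$ yields the claimed ratio. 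For EkDD the objective in (\ref{optimization4}) merely carries the common constant factor $\lambda_S$, which cancels in the ratio $x_k/OPT$, so the identical argument delivers the same $(1-\frac{1}{e})$ bound for Algorithm~\ref{algorithm2}.

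The delicate point — and the step I would scrutinize most — lies not in this final calculation but inside the inductive inequality (\ref{ana1}) of Claim~\ref{claim1}, namely $x_{i+1}\geq y_i/k$. This step asserts that at every stage some still-available street point covers at least a $\frac{1}{k}$-fraction of the UEs that an optimal $k$-point solution covers but greedy has not yet covered. In the unconstrained coverage setting this is immediate by an averaging argument over the $k$ sets of $OPT$. Here, however, the $\beta$-separation constraint (\ref{constraint2_3}) — and additionally the recharge-reachability constraint (\ref{constraint4_4}) for EkDD — restricts which street points remain feasible given the already-selected set $\cal V$, so the averaging witness might in principle be blocked by a previously chosen point. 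To keep the chain honest I would fix $OPT$ to be an optimal configuration drawn from the \emph{same} constrained family (that is, $\beta$-separated and, for EkDD, $g_R$-respecting) that the greedy searches, and then verify that the marginal-gain maximization (Line~5 of Algorithms~\ref{algorithm1} and \ref{algorithm2}) still exposes a fraction-$\frac{1}{k}$ witness at each step.

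I would flag candidly that this is the genuine weak spot of the argument: a pairwise-distance constraint is not a matroid, so the clean $(1-\frac{1}{e})$ guarantee of unconstrained max-$k$-cover does not transfer automatically, and the constraint-checking in the algorithm (which discards an infeasible best-scoring point rather than restricting the maximization to feasible points) must be reconciled with the coverage analysis. The cleanest resolution within this framework — and the one I would pursue to justify (\ref{ana1}) — is to show that the feasible ground set remaining to greedy always contains a representative of $OPT$'s uncovered mass; absent that, one can only claim the bound under the idealization that the separation and recharging constraints are treated as part of a consistently-defined feasible family for both greedy and $OPT$, which is exactly the reading under which Claim~\ref{claim1} and hence Theorem~\ref{theorem3} hold.
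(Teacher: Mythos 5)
Your proposal is correct and takes essentially the same route as the paper: instantiate Claim \ref{claim1} at the final iteration to get $y_k\leq(1-\frac{1}{k})^k OPT\leq\frac{1}{e}OPT$, then conclude $x_k=OPT-y_k\geq(1-\frac{1}{e})OPT$, with the EkDD case following because $\lambda_S$ cancels in the ratio. The weak spot you flag — that the averaging witness behind $x_{i+1}\geq y_i/k$ may be blocked by the $\beta$-separation and recharging constraints — is a genuine gap, but it sits inside Claim \ref{claim1} rather than in the deduction of the theorem from it, and the paper's own argument passes over it in exactly the way you describe.
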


\begin{proof}
From Claim \ref{claim1}, we have
\begin{equation}
\begin{aligned}
y_{k} &\leq (1-\frac{1}{k})^{k}OPT\\
&\leq \frac{1}{e}OPT
\end{aligned}
\end{equation}
Hence, we have
\begin{equation}
x_k=OPT-y_k\geq (1-\frac{1}{e})OPT
\end{equation}
i.e., the number of served UEs by the $k$ street points picked by Algorithm \ref{algorithm1} (Algorithm \ref{algorithm2}) is at least $(1-\frac{1}{e})$ fraction of the optimal solution $OPT$.
\end{proof}

The above results can also be used in the analysis of Algorithm \ref{algorithm3}. Let $k^*$ denote the optimal solution to MinDD, i.e., $k^*$ is the minimum number of required drones to serve at least $\gamma$ of all the UEs. Let $m=\gamma |{\cal U}|$. Clearly, if we set $k=k^*$ in kDD or EkDD, $OPT=m$. 
\begin{Theorem}\label{theorem4}
Algorithm \ref{algorithm3} is a $(\ln (m)+1)$ approximation of MinDD.
\end{Theorem}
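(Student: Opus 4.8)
The plan is to reduce the analysis of MinDD to the covering estimate already established for the $k$-cover greedy, namely Claim \ref{claim1}, and then close the gap with an integrality argument. First I would fix notation: let $k^*$ be the optimal number of drones for MinDD and let $m=\gamma|{\cal U}|$ be the number of UEs that must be covered. As noted just before the statement, instantiating kDD with $k=k^*$ makes an optimal kDD solution serve $OPT=m$ UEs, so that $y_i=OPT-x_i=m-x_i$ counts exactly the still-uncovered target UEs after $i$ selections. The key observation is that Algorithm \ref{algorithm3} and the kDD greedy Algorithm \ref{algorithm1} make identical choices step by step (each picks the admissible street point maximizing the marginal coverage subject to the $\beta$-separation constraint); hence after $i$ iterations both are in the same state, and Claim \ref{claim1} applies verbatim, giving $y_i \le (1-\tfrac{1}{k^*})^i\, m$.

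Then I would bound the number of iterations $t$ at which Algorithm \ref{algorithm3} halts. By the stopping rule, after $t-1$ iterations fewer than $m$ UEs are covered, so the covered count satisfies $x_{t-1}\le m-1$, and therefore $y_{t-1}=m-x_{t-1}\ge 1$. Combining this with Claim \ref{claim1} and the elementary inequality $1-x\le e^{-x}$ yields
\begin{equation*}
1\le y_{t-1}\le \left(1-\tfrac{1}{k^*}\right)^{t-1} m \le e^{-(t-1)/k^*}\,m,
\end{equation*}
so that $e^{(t-1)/k^*}\le m$, i.e. $t-1\le k^*\ln m$. Since at least one drone is always needed ($k^*\ge 1$), this gives $t\le k^*\ln m+1\le k^*(\ln m+1)$. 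As $k^*$ is optimal, the number of drones produced by Algorithm \ref{algorithm3} is within a factor $\ln m+1$ of the optimum, which is the claim.

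The main obstacle I expect is justifying the reuse of Claim \ref{claim1} in the presence of the inner-drone-distance constraint (\ref{constraint3_2}). Claim \ref{claim1} rests on inequality (\ref{ana1}), that at each step some street point covers at least a $1/k^*$ fraction of the remaining target UEs; one must argue that such a point can still be selected while respecting the $\beta$-separation already imposed by the previously placed drones, so that the greedy marginal-coverage step is never blocked from matching the averaging bound. The secondary delicate point is the integrality step: the multiplicative recursion only drives $y_i$ to zero asymptotically, and it is precisely the observation $y_{t-1}\ge 1$ (valid because $y_{t-1}$ is an integer and coverage is still incomplete) that converts the continuous decay estimate into a finite iteration bound. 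I would not need to touch the path-loss or SNR computations here, since $g_{max}$ and the covered sets $U(v)$ enter the argument only through Claim \ref{claim1}.
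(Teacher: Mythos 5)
Your proof follows the paper's route: the same key lemma (Claim \ref{claim1}), obtained by instantiating kDD with $k=k^*$ so that $OPT=m$, giving the decay bound $y_i\le (1-\tfrac{1}{k^*})^i m$. The only difference is the endgame. The paper stops the decay after $k^*\ln(\tfrac{m}{k^*})$ iterations, observes that at most $k^*$ target UEs remain, and charges one additional drone per remaining UE, yielding $k^*\ln(\tfrac{m}{k^*})+k^*\le k^*(\ln m+1)$. You instead run the decay to the last iteration before termination and use integrality ($y_{t-1}\ge 1$) to conclude $t\le k^*\ln m+1\le k^*(\ln m+1)$ directly. Your variant is a little cleaner, since it avoids the paper's implicit claim that each leftover UE can always be covered by one more admissible drone placement (which, under the separation constraint (\ref{constraint3_2}), is not automatic); the paper's variant buys a marginally sharper constant $k^*\ln(\tfrac{m}{k^*})+k^*$. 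Both arguments share the weakness you flag yourself: inequality (\ref{ana1}), and hence Claim \ref{claim1}, presumes that at every step the greedy is free to select any street point used by the optimal solution, and neither the paper nor your sketch verifies that the $\beta$-separation constraint never excludes all such points. Since the paper applies Claim \ref{claim1} without addressing this either, you are not missing anything its proof supplies.
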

\begin{proof}
From Claim \ref{claim1}, we have $y_i\leq (1-\frac{1}{k^*})^i m$. After picking $k^* \ln (\frac{m}{k^*})$ street points, the remaining unserved UE number is:
\begin{equation}
\begin{aligned}
y_i&\leq (1-\frac{1}{k^*})^{k^* \ln (\frac{m}{k^*})} m\\
&\leq \frac{1}{e}^{\ln (\frac{n}{k^*})} m \\
&=k^*\\
\end{aligned}
\end{equation}

The worst case of serving those remaining unserved UEs (at most $k^*$) is to use at most $k^*$ drones, i.e., each drone serves one UE. Then, we have the minimum number of drones obtained by Algorithm \ref{algorithm3} is
\begin{equation}
\begin{aligned}
k^* \ln (\frac{m}{k^*})+k^* &\leq k^* (\ln (\frac{m}{k^*})+1)\\
&\leq k^* (\ln (m)+1)\\
\end{aligned}
\end{equation}

Therefore, the minimum number of drones to serve at least $\gamma$ of all the UEs obtained by Algorithm \ref{algorithm3} is a $(\ln(m)+1)$ approximation of the optimal solution.
\end{proof}

This section analyses the approximation factors for the proposed greedy algorithms. In the next section, we will evaluation these algorithms through extensive simulations.
\section{Evaluation}\label{simulation}

In this section, we evaluate our proposed drone deployment strategies based on a collected dataset from a mobile App (Momo). We first introduce the parameters used in the simulation (Section \ref{setup}). Then, we provide details on the realistic dataset in use (Section \ref{dataset}). Next, we show the metrics for evaluation (Section \ref{metric}), followed by the compared approaches (Section \ref{compared_approach}). We present the extensive simulation results are shown (Section \ref{results}). We end this section by discussing the advantages and disadvantages of the proposed approaches (Section \ref{discussion}).
\subsection{Simulation Setup}\label{setup}
Table \ref{table_parameter} provides a quick reference for the used parameters in the simulation.

According to the QoS constraint, we can determine $g_{max}$ from (\ref{physical_space_distance}), (\ref{pathloss_simple}), (\ref{received_power}), (\ref{SNR}) and (\ref{SNR_alpha}). We vary $\alpha$ from 10 to 20 $dB$ and calculate the corresponding $g_{max}$ for both LoS and NLoS cases, as shown in Fig. \ref{G_alpha}. We can find that under the same $\alpha$, the LoS case has a larger $g_{max}$ than of NLoS case. To determine the value of $g_{max}$, we further show the probability of LoS under varying distance between transmitter and receiver in Fig. \ref{p_los}, using the model obtained by realistic experiments in \cite{3GPP}. From Fig. \ref{p_los}, we can see that when the distance between the transmitter and receiver is greater than 100 meters, the probability of LoS is less than 20\%. Thereby, we use the NLoS case to compute $g_{max}$. We select $\alpha=15$ $dB$ and the corresponding $g_{max}$ becomes 95 $m$.

\begin{table}[t]
\begin{center}
\caption{Parameter configuration}\label{table_parameter} 
  \begin{tabular}{| l | l | l |}
    \hline
    Notation & Value & Description\\\hline
    $A^{LoS}$ & 103.8 &\multirow{2}{*}{LoS path loss parameters}\\\cline{1-2}
    $B^{LoS}$ & 20.9 & \\\hline
    $A^{NLoS}$ & 145.4&\multirow{2}{*}{NLoS path loss parameters}\\\cline{1-2}
    $B^{NLoS}$ & 37.5&\\\hline
    $P_{tx}$ & 20 $dBm$ & Drone transmission power\\\hline
    $N_0$ & -104 $dBm$& Noise power\\\hline
    $h$ & 50 $m$& Drone altitude\\\hline
    $\alpha$ & 15 $dB$& SNR threshold\\\hline
    $\beta$ & 0 - $3g_{max}$ & Inner drone distance\\\hline
    $\gamma$ & 90\% - 98\% & Level of UE coverage \\\hline
    $s$ & 4 - 8$m/s$ & Flying speed of drones\\\hline
    $W$ & 100 $MHz$ & Bandwidth\\\hline
    $W_{max}$ & 2 $MHz$ & Upper bound of bandwidth at a UE\\\hline
    $h_R$ & 10 $m$ & Height of utility pole\\\hline
	$\lambda_S$ & 45\% & Percentage of one time slot for serving\\\hline
	$\lambda_F$& 5\% & Percentage of one time slot for flying\\\hline
	$\lambda_R$& 50\% & Percentage of one time slot for recharging\\\hline
    \end{tabular}
\end{center}
\end{table}

\begin{figure}[t]
    \centering
    \begin{subfigure}[b]{0.45\textwidth}
        \includegraphics[width=\textwidth]{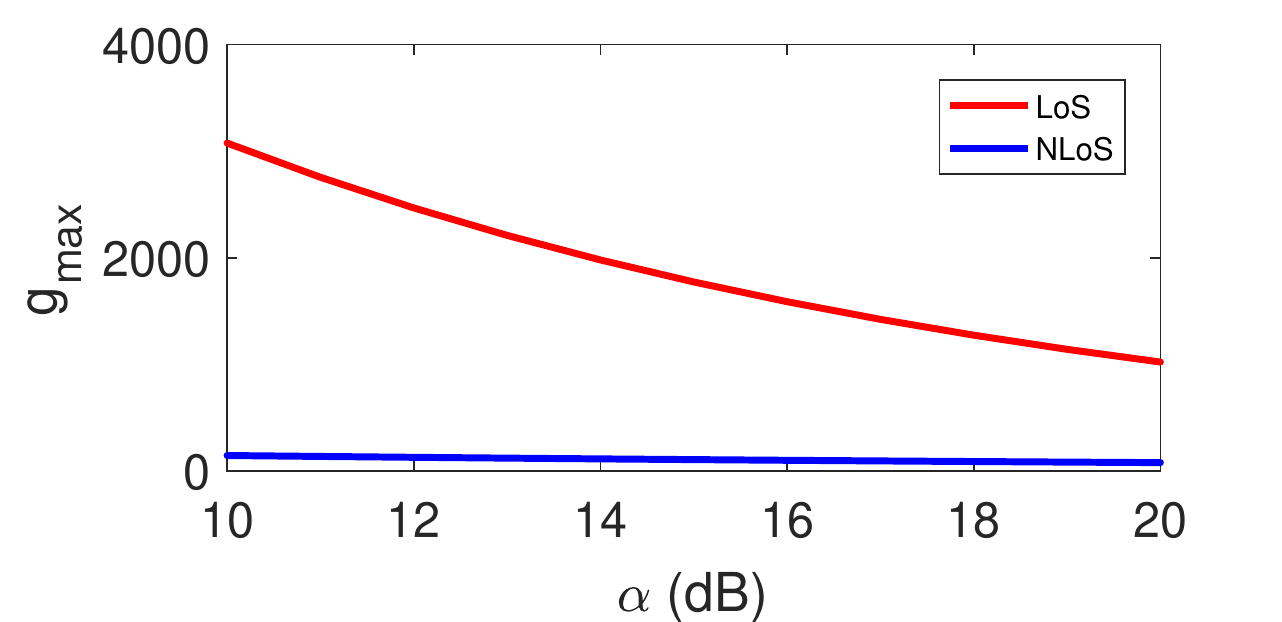}
        \caption{}
        \label{G_alpha}
    \end{subfigure}
    \begin{subfigure}[b]{0.45\textwidth}
        \includegraphics[width=\textwidth]{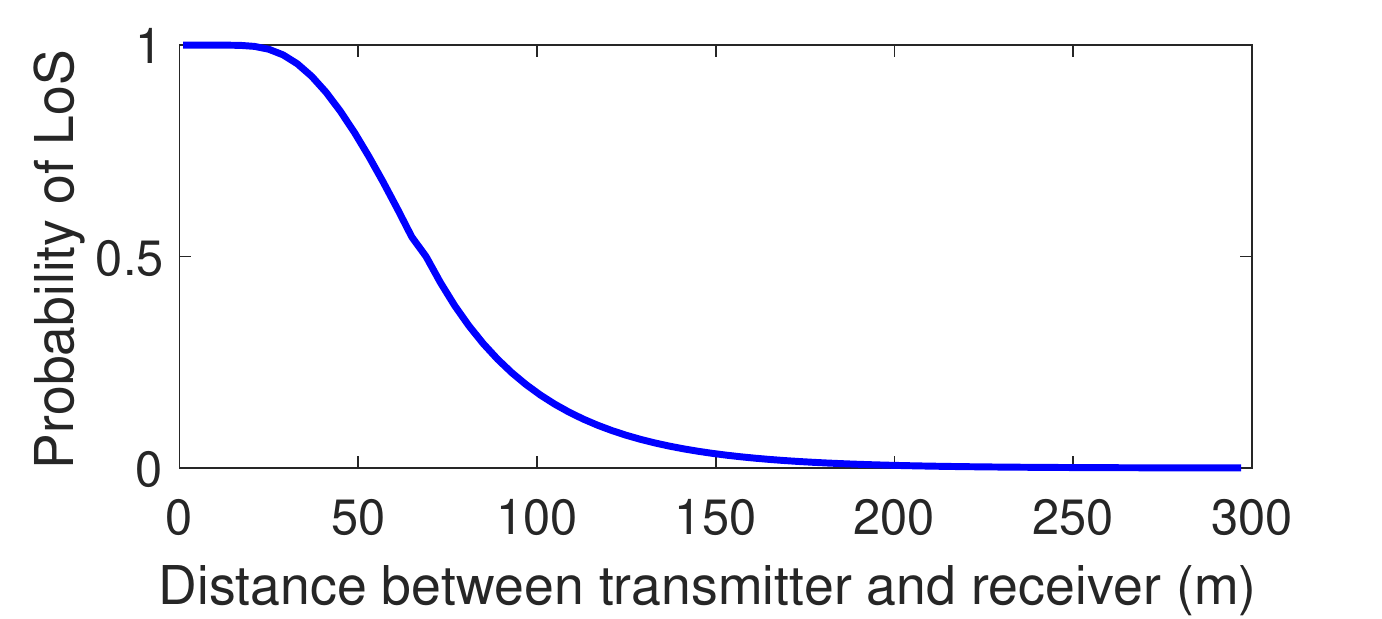}
        \caption{}
        \label{p_los}
    \end{subfigure}
    \begin{subfigure}[b]{0.45\textwidth}
        \includegraphics[width=\textwidth]{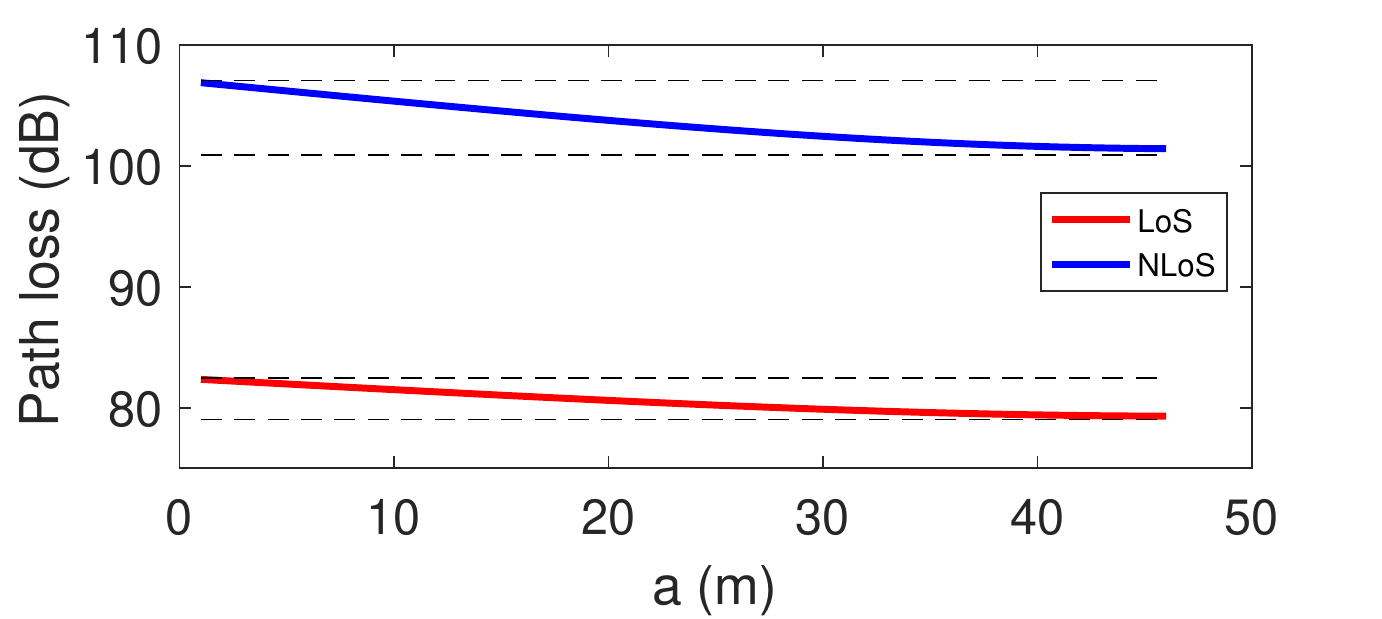}
        \caption{}
        \label{p_los_bound}
    \end{subfigure}
    \caption{(a) $g_{max}$ vs $\alpha$. (b) Probability of LoS vs distance. (c) Path loss.}
\end{figure}

Now we discuss how accurate of using graph distance to calculate path loss, instead of the physical distance. Consider two street segments $a$ and $b$ and they are perpendicular. The graph distance between the two end points is $a+b$ while the physical distance is $\sqrt{a^2+b^2}$. Let $g_{max}=a+b$ and we compute the path loss corresponding to the physical distance. Fig. \ref{p_los_bound} shows the path loss of both LoS and NLoS when $a$ is between 1 and $g_{max}/2$ meters. The physical distance varies with $a$, while the graph distance is fixed to $g_{max}$. Obviously, the largest gap occurs when $a=g_{max}/2$, and the corresponding path loss of LoS is only about 3 $dB$ lower, and that of NLoS is only about 6 $dB$ lower than the value calculated by the graph distance $g_{max}$, both of which incur small errors in practice.
In short, approximating the physical distance by the graph distance has a small impact on the calculation of path loss.

\subsection{Dataset}\label{dataset}
To get the UE density function, we make use of the dataset of Momo. When a Momo user has an update, the information of his ID, timestamp, latitude and longitude is sent to the server. The Momo dataset contains approximately 150 million such updates in a period of 38 days, from 21/5/2012 to 27/6/2012 \cite{chen2013and}. We extract a subset of this dataset, based on which we build up the UE density function. 

The Momo dataset consists of the updates of world-wide users. To make the selected dataset suitable to our problem, we only focus on the updates by the users in a small residential community in Beijing, China. The latitude of this area is from 39.9176N to 39.9242N and the longitude is from 116.4406E to 116.4501E, which is about 1059 $\times$ 721 $m^2$, as shown in Fig. \ref{map}. Further, we build up a discrete street graph according to the area map as shown in Fig. \ref{user_distribution_521}. Each street is represented by a number of discrete points. 
From the whole dataset, we select the updates whose locations fall into the considered latitude and longitude range. Based on our observation, these updates belong to indoor and outdoor UEs. We remove the indoor updates by selecting those falling into the neighbourhoods of the street points. Then, the selected dataset only consists of the updates belonging to the UEs near the streets. We demonstrate the distribution of such UEs on 21/5/2012 in Fig. \ref{user_distribution_521}.

\begin{figure}[t]
    \centering
    \begin{subfigure}[b]{0.45\textwidth}
        \includegraphics[width=\textwidth]{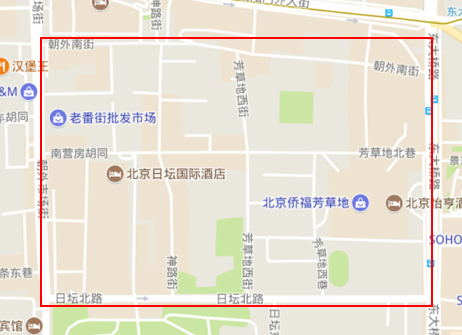}
        \caption{}
        \label{map}
    \end{subfigure}
    \begin{subfigure}[b]{0.5\textwidth}
        \includegraphics[width=\textwidth]{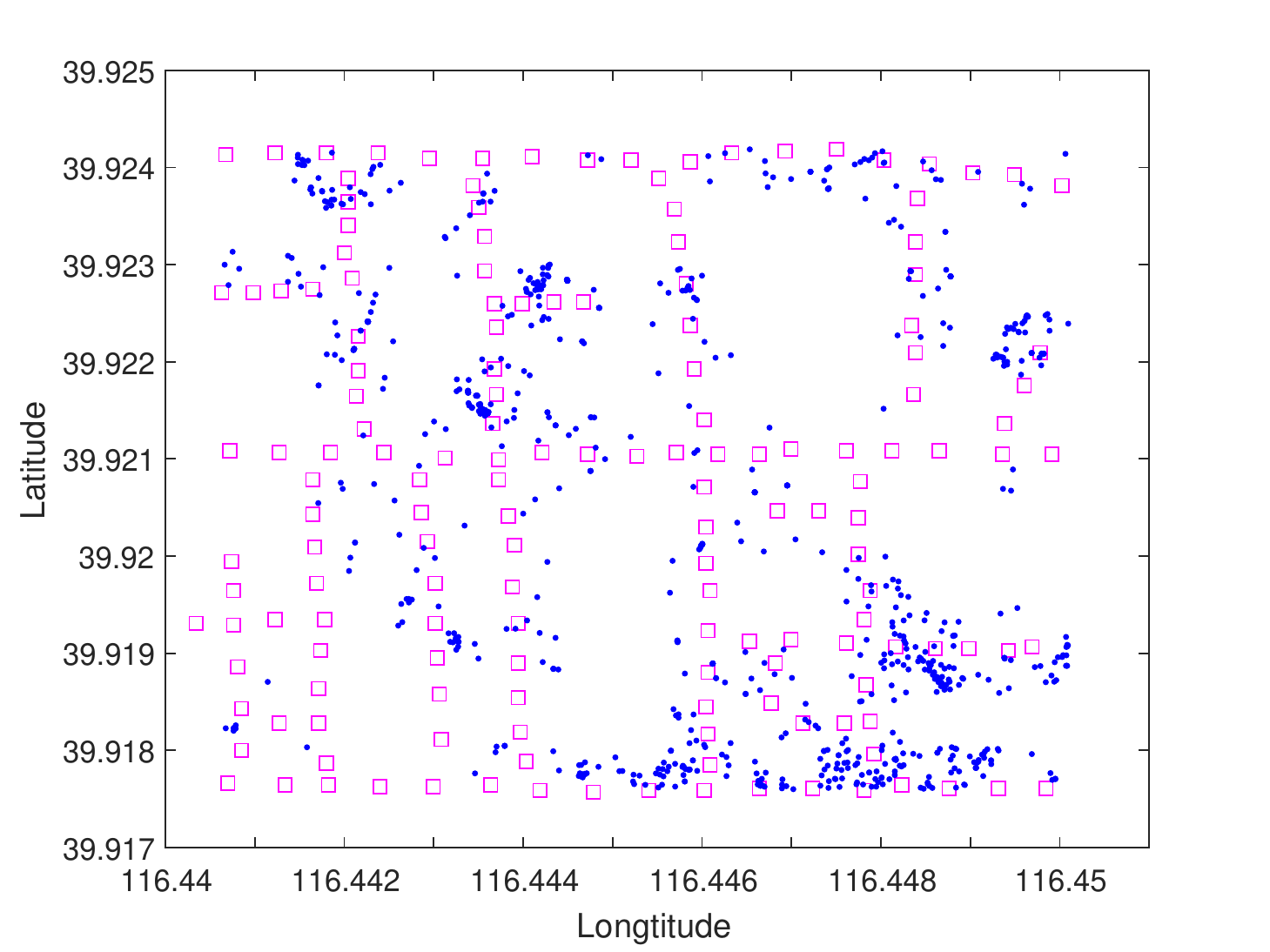}
        \caption{}
        \label{user_distribution_521}
    \end{subfigure}
    \caption{(a) The considered residential community in Beijing. (b) UE distribution on the street graph on 21/5/2012.}
\end{figure}

The selected subset of data covers more than five weeks including three types of UE pattens on weekdays, weekends, and public holidays. Since there are only three days belonging to category of public holidays, in this chapter we only consider the UE pattens on weekdays and weekends. We respectively take the average numbers of UEs for weekdays and weekends and show them in Fig. \ref{UEnumber}, where the duration of a time slot is one hour. 
It can be seen that on weekdays the average UE number is mostly larger than weekends. The average total UE number on weekdays is 1003 per day while that on weekends is 727. On weekdays, the UE number increases steadily from 8:00 and arrives at the peak (about 70 UEs) at 16:00, after which it decreases. The UE variation for weekends is different: it increases slowly from 8:00 to 16:00, and from 16:00 to 23:00 the UE number remains at around 40. After 23:00, much later than weekdays, it drops down. If we set 40 as a UE number threshold to decide the usage of drones: on weekdays we need to send drones to serve UEs from 10:00 to 22:59, i.e., 13 hours; while on weekends the drones should work for 8 hours between 15:00 and 22:59. We need to mention that the number of UEs provided by Fig. \ref{UEnumber} shows only the UEs using Momo. The actual number of UEs should also include those not using Momo. Although the dataset we have cannot provide us the true total number of UEs, it presents the realistic traffic pattens and UE distributions, which is much more closer to reality than random distribution. To make the results more sensible, we introduce a scalar to scale up the UE number according to \cite{scale15}, which is set as 5 in the simulations.

\begin{figure}[t]
\begin{center}
{\includegraphics[width=0.53\textwidth]{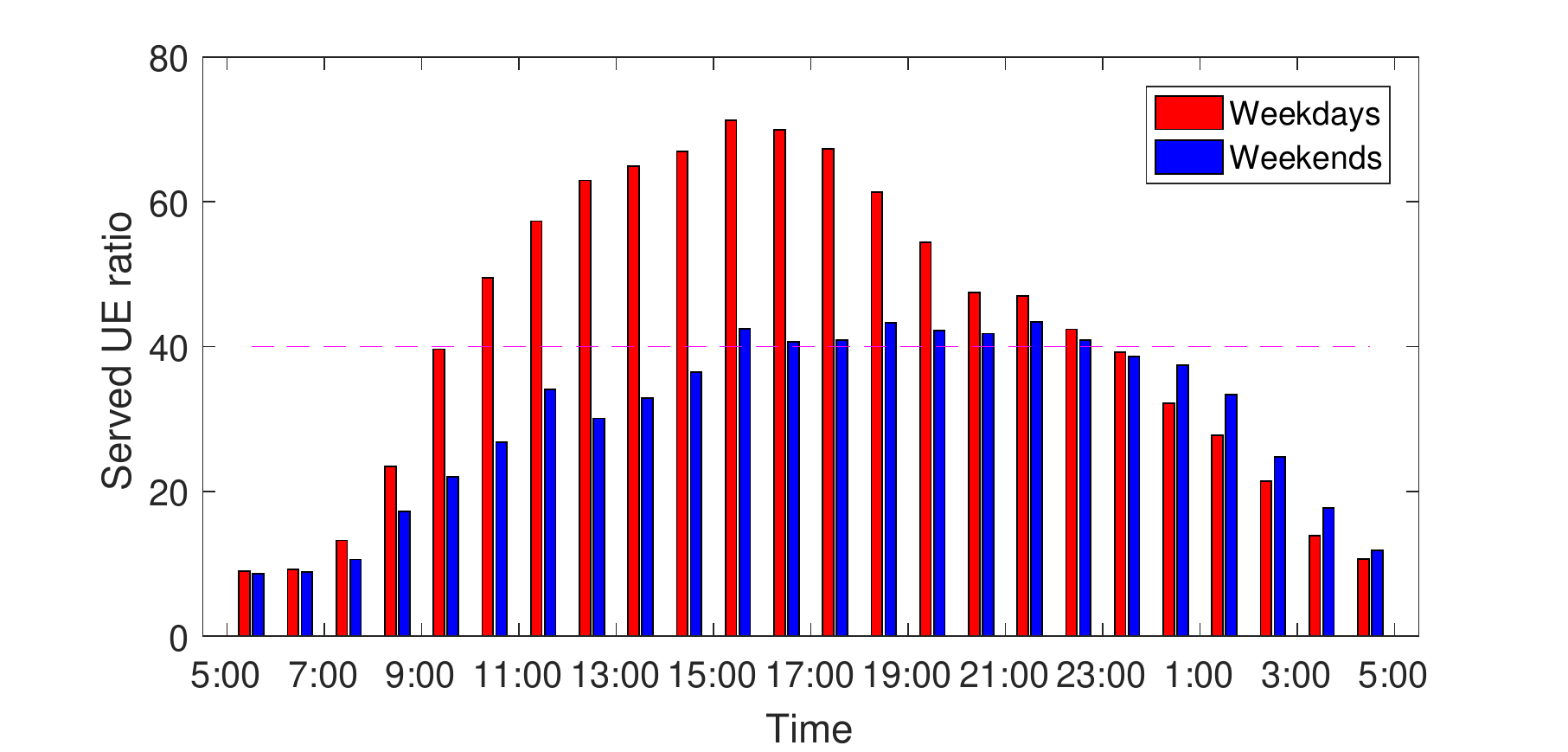}}
\caption{Average UE numbers using Momo on weekdays and weekends in the considered area.}\label{UEnumber}
\end{center}
\end{figure}

\subsection{Metrics}\label{metric}
In the evaluations, we consider the performance metrics:
\begin{itemize} 
\item Served UE ratio: The served UE ratio is defined as $|\bigcup_{v\in {\cal V}}U(v)|/|{\cal U}|\times 100\%$. 
\item Spectral efficiency ($SE$): a measure of how efficiently a limited frequency spectrum is utilized by PHY layer protocol. The spectral efficiency at a UE is computed by $SE=log_2(1+SINR)$, where SINR is defined in (\ref{SINR}). In this chapter, we consider the average spectral efficiency:
\begin{equation}
ASE=\frac{1}{M} \sum_{i=1}^M log_2(1+SINR_i)
\end{equation} 
\item Number of drones: the number of drones required to serve at least $\gamma$ percent of UEs. 
\item Served UE per drone: the average number of UEs served by one drone. 
\item Network capacity ($NC$): the number of bits that can be generated in an unit area of 1 $km^2$, given the bandwidth $W$. The network capacity is a metric measuring the overall performance in both PHY and MAC layers, whose unit is $Mbps/km^2$. For simplicity, we assume that the bandwidth of a drone is evenly allocated to its served UEs and there is a limit on the bandwidth that each UE can get, i.e., $W_{max}$. As a result, $W_{i}=\min(\frac{W}{M_j}, W_{max})$, where $M_j$ is the number of UEs served by drone $j$. Suppose there are $k$ drones in 1 $km^2$, then the network capacity can be computed by:
\begin{equation}
NC=\sum_{j=1}^k \sum_{i=1}^{M_j} SE_i\cdot W_i
\end{equation}
\end{itemize}

\subsection{Comparing Approaches}\label{compared_approach}
As mentioned in Sections \ref{motivation}, there are several related work on the topic of drone deployment problem. For the scenario of placing a single drone, the most relevant work is  \cite{alzenad20173d}. Aiming at serving the maximum UEs, the authors formulate a mixed integer non-linear problem to find the optimal position of the drone. Further, the authors formulate a second order cone problem to shorten the cover range of the drone to save transmission energy. 
For a fair comparison, we only compare our work with the results of the mixed integer non-linear problem (MINLP), since shortening a bit transmission range does not contribute much to the total energy consumption by the drone. We acknowledge that the energy consumed for drone movement is the dominant factor.

There are also some related work about placing multiple drones, such as \cite{sharma2016uav} and \cite{kalantari2016number}. However, both of them divide the area of interest into a set of zones or subareas, which are quite different from our basic model, i.e., the street graph. So we demonstrate the comparison of our approach with \cite{alzenad20173d} for the case of single drone deployment; and the comparison with max $k$-cover (without inner drone distance constraint) for the case of multiple drone deployment. 

\subsection{Simulation results}\label{results}
We present simulation results to evaluate the performance of the proposed solutions respectively in this part.
\subsubsection{Evaluation of SDD}

We first show the performance of single drone deployment method. As discussed in Section \ref{dataset}, on weekdays we deploy a drone from 10:00 to 22:59 for 13 hours; while on weekends the drone work for 8 hours between 15:00 and 22:59. Except \cite{alzenad20173d} (named as Approach 1), we also compare with a random deployment (named as Random). 

We respectively display the served UE ratios on weekdays and weekends in Fig. \ref{average_UE_weekdays_5} and \ref{average_UE_weekends}. We can see that our proposed method achieves similar performance compared with \cite{alzenad20173d} in terms of served UE ratio. In theory, the 2D projection of a drone is constrained on street in our approach; while there is no such limitation in \cite{alzenad20173d}. So the optimum solution by \cite{alzenad20173d} achieves no worse performance in served UE ratio than ours. However, \cite{alzenad20173d} uses MOSEK solver to address the MINLP and some solutions are local optimums, which are not competitive to our solutions. Further, both of them outperform the random deployment. 

It is worthy of mentioning that, as there is no limitation on the 2D projection of the drone, over 70\% of the solutions by \cite{alzenad20173d} locate off the streets (see the illustrative example shown in Fig. \ref{deployment_example}, although the deployment by \cite{alzenad20173d} serves more UEs, the projection is off street), which have high probability of hitting tall buildings if applied in urban environment directly. In contract, our solutions are on streets and can be applied directly to realistic networks, thanks to the street graph model.

\begin{figure}[t]
    \centering
    \begin{subfigure}[b]{0.47\textwidth}
        \includegraphics[width=\textwidth]{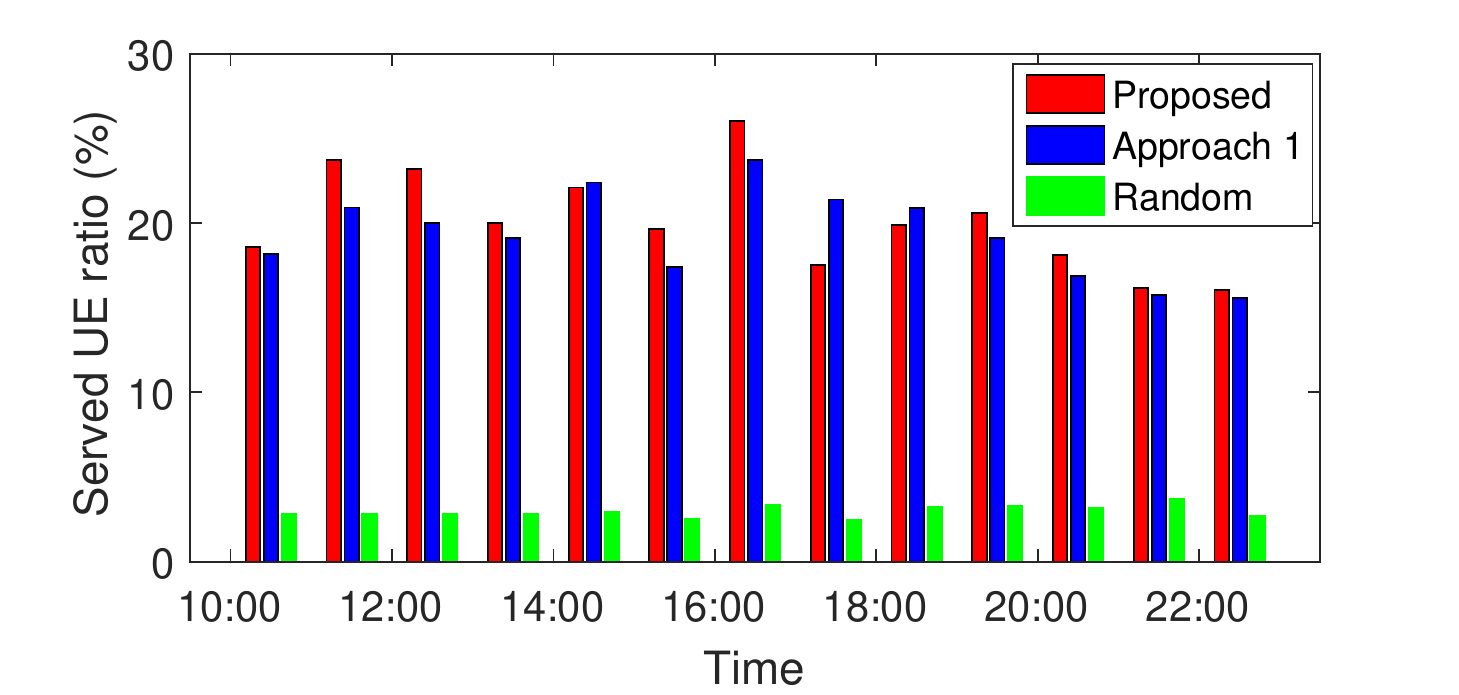}
        \caption{}
        \label{average_UE_weekdays_5}
    \end{subfigure}
    \begin{subfigure}[b]{0.47\textwidth}
        \includegraphics[width=\textwidth]{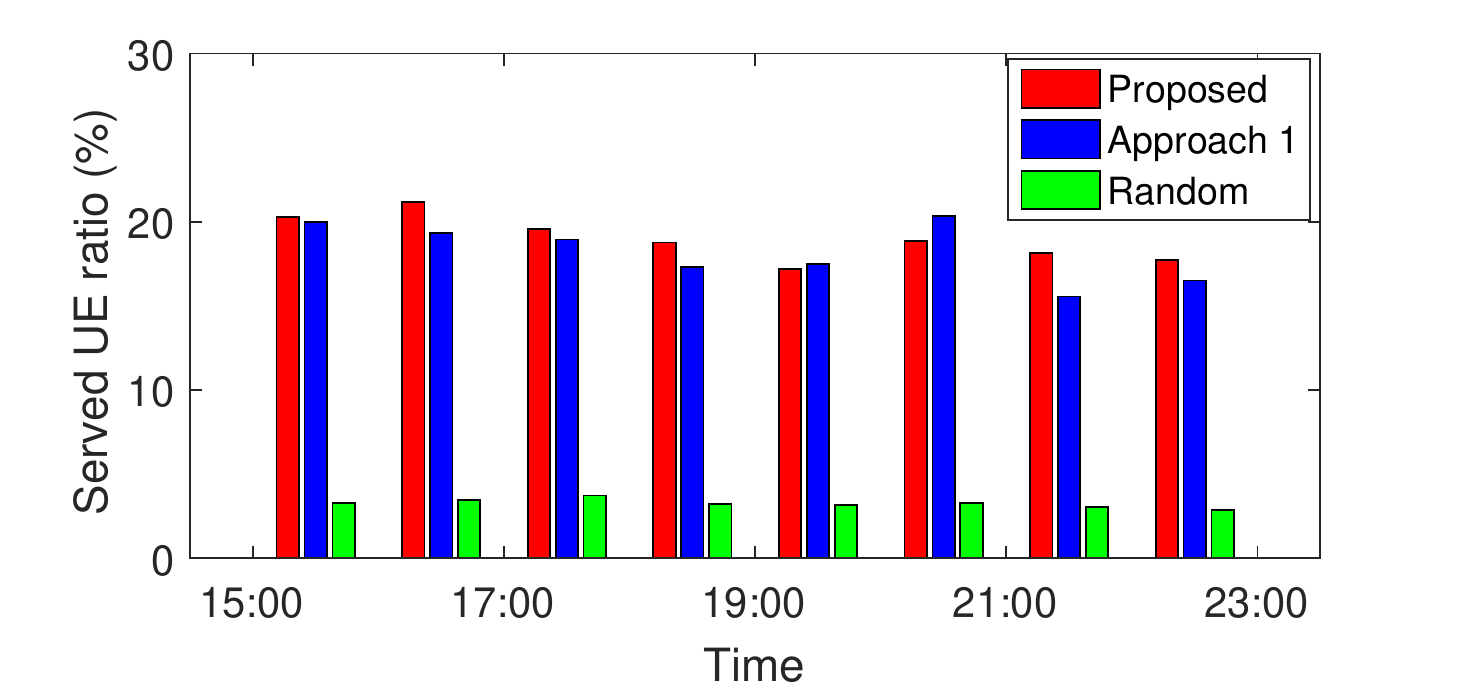}
        \caption{}
        \label{average_UE_weekends}
    \end{subfigure}
    \caption{(a) Average served UE ratio on weekdays. (b) Average served UE ratio on weekends.}
\end{figure}

\begin{figure}[t]
\begin{center}
{\includegraphics[width=0.45\textwidth]{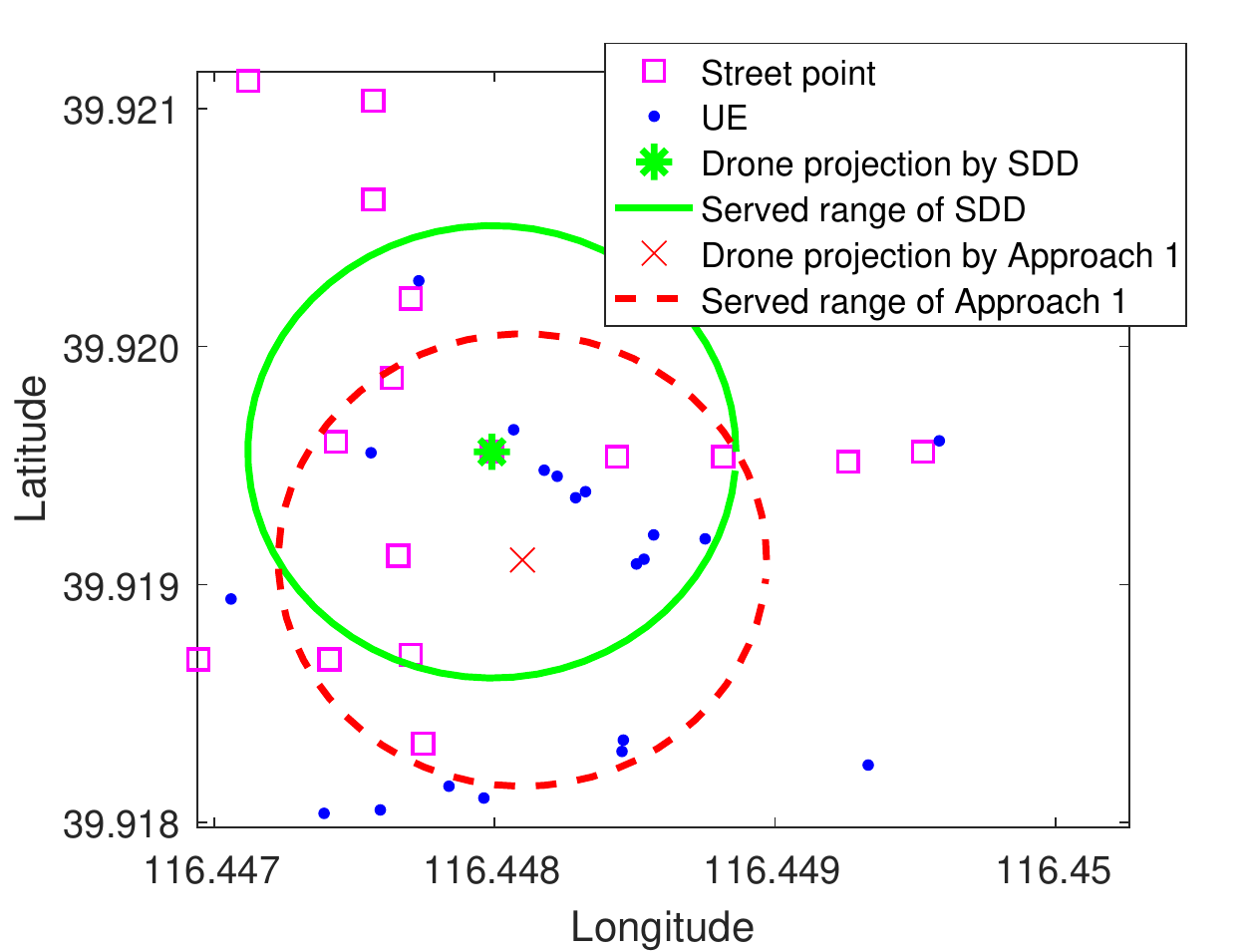}}
\caption{Illustrative example of 2D projections by the proposed approach and Approach 1.}\label{deployment_example}
\end{center}
\end{figure}

\subsubsection{Evaluation of kDD}

\begin{figure}[t]
    \centering
    \begin{subfigure}[b]{0.47\textwidth}
        \includegraphics[width=\textwidth]{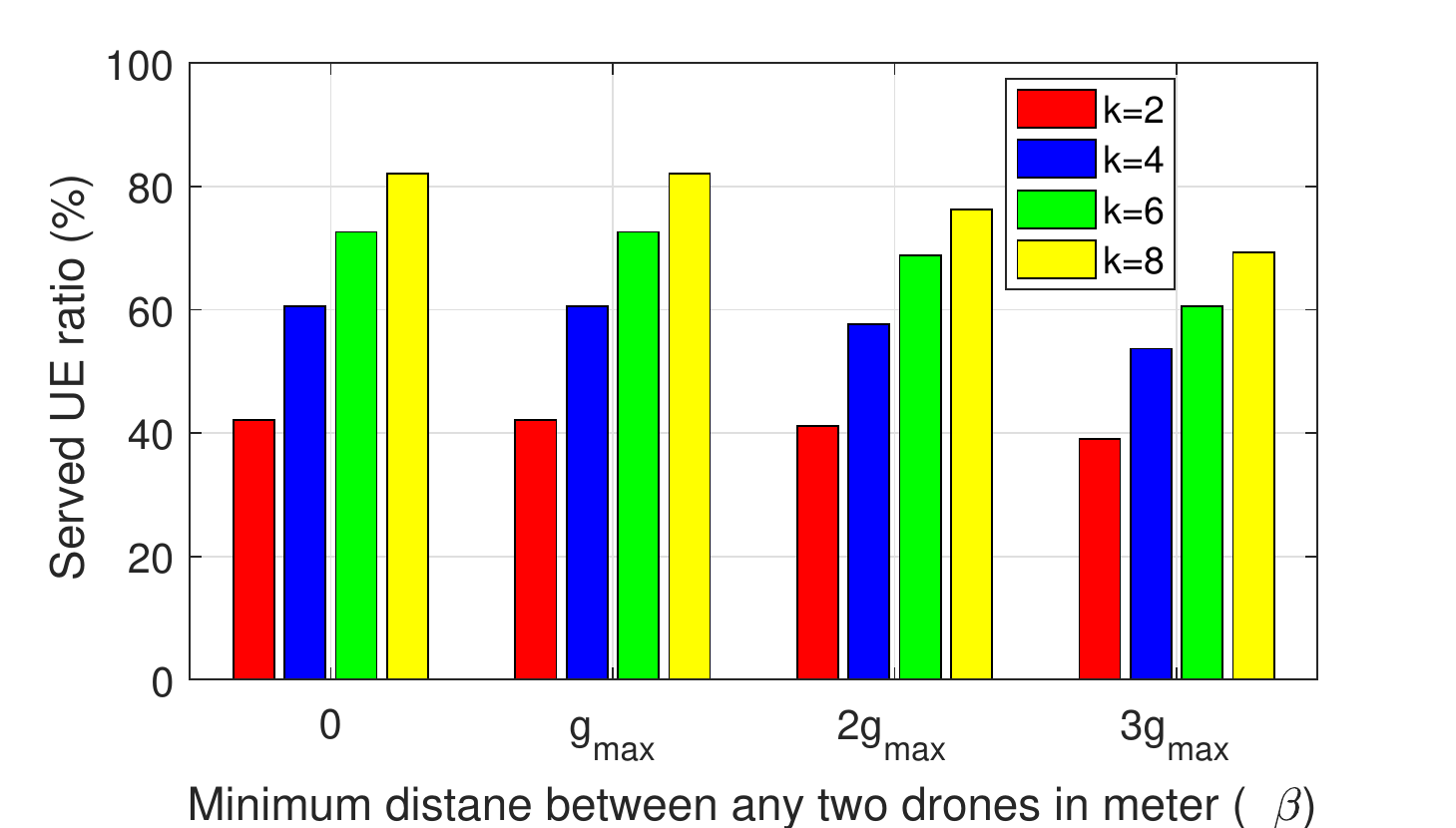}
        \caption{}
        \label{UE_ratio_beta_k_17}
    \end{subfigure}
    \begin{subfigure}[b]{0.47\textwidth}
        \includegraphics[width=\textwidth]{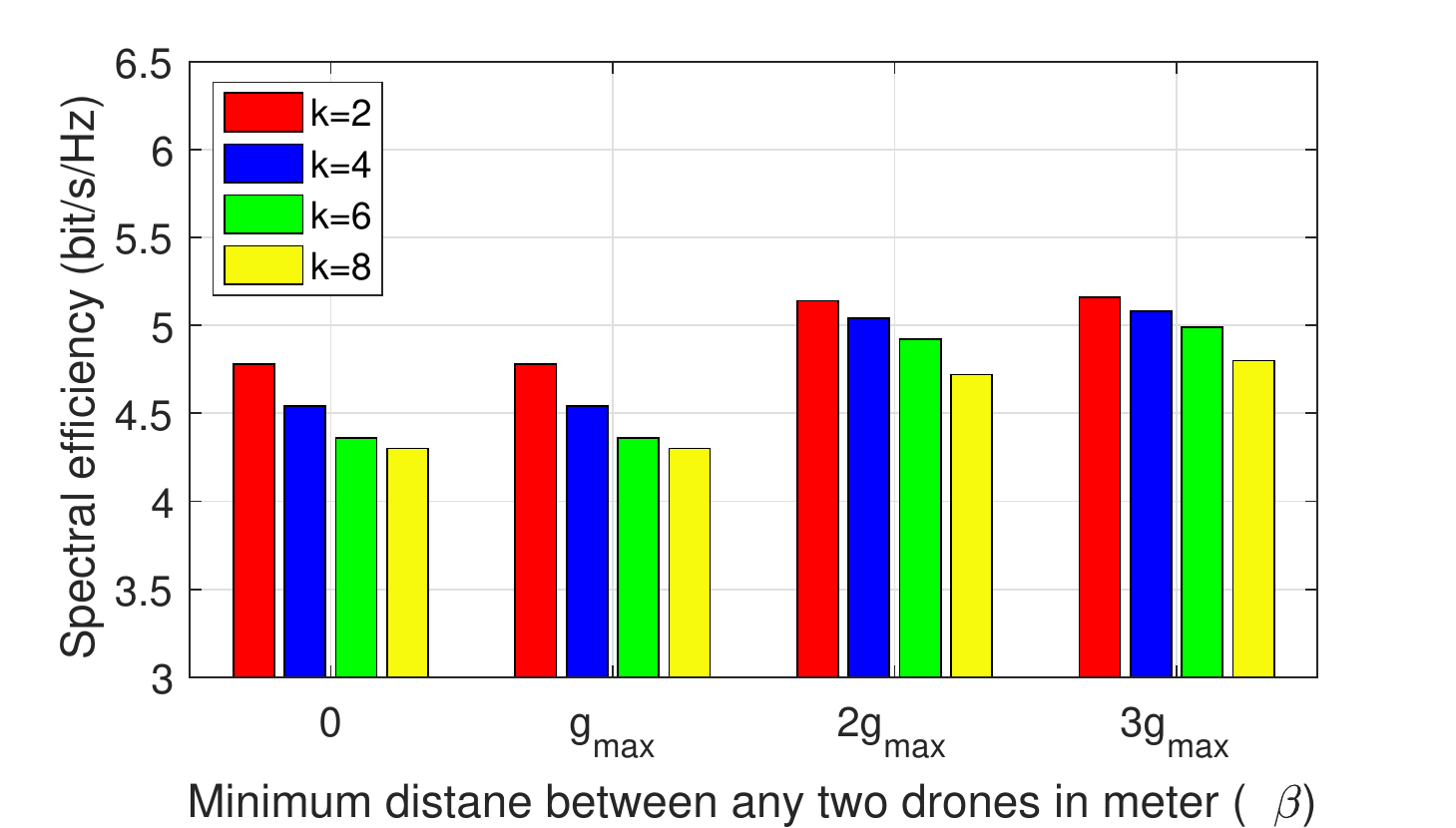}
        \caption{}
        \label{rate_beta_k_17}
    \end{subfigure}
    \caption{(a) Average served UE ratio on weekdays by multiple drones. The average number of UEs during the peak hour on weekdays is 350. (b) Average spectral efficiency on weekdays by multiple drones.}
\end{figure}

We demonstrate simulation results for deploying multiple drones. When multiple drones are used, as mentioned before, the interference should be taken into account to assess the user experience. Two main factors influence the interference intensity: the number of drones $k$, and the distance threshold between two drones $\beta$. Here, we vary $k$ from 2 to 8 and $\beta$ from $0$ to $3g_{max}$. Note that, when $\beta$ is 0, we are solving the max $k$-cover problem, i.e., without the consideration of inner drone distance constraint, because constraint (\ref{constraint4_3}) is always satisfied.

To have an insight of the impacts of these two factors, we focus on the performance on the peak hour, i.e., from 15:00 to 15:59, on weekdays. We present the served UE ratio and spectral efficiency in Fig. \ref{UE_ratio_beta_k_17} and \ref{rate_beta_k_17} for the peak hour. Note, the results shown here are the average values of 27 weekdays. From these results we can see that: 
\begin{itemize}
\item With the increase of $k$, the served UE ratio increases; while the average spectral efficiency decreases. Because more UEs are covered by drones when $k$ is increased, and in the same time, the interference is also  raised.
\item With the increase of $\beta$, the served UE ratio slightly decreases; while the average spectral efficiency increases. This is because increasing $\beta$ means increasing the minimum distance between any two drones, which reduces the coverage performance; but in the meanwhile, it also reduces the strength of interference at a UE from other drones, which leads to larger spectral efficiency.
\item An interesting finding is that the results when $\beta$ takes $g_{max}$ are very similar to those when $\beta=0$ in terms of both served UE ratio and average spectral efficiency. In other words, the standard greedy algorithm for max $k$-cover problem inherently avoids placing two drones closely to each other.
\end{itemize}

\begin{figure}[t]
    \centering
    \begin{subfigure}[b]{0.47\textwidth}
        \includegraphics[width=\textwidth]{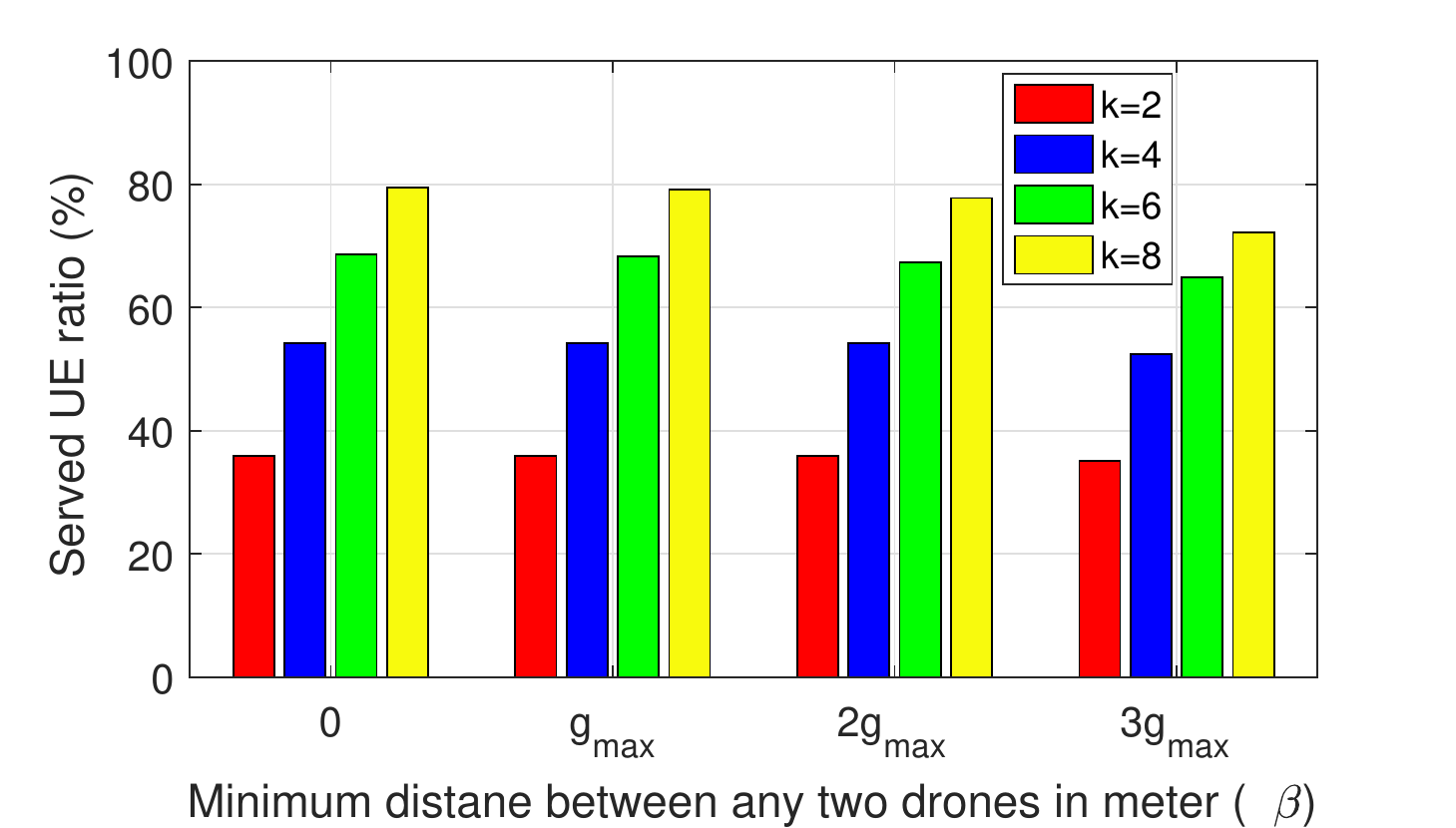}
        \caption{}
        \label{UE_ratio_beta_k_17_weekends}
    \end{subfigure}
    \begin{subfigure}[b]{0.47\textwidth}
        \includegraphics[width=\textwidth]{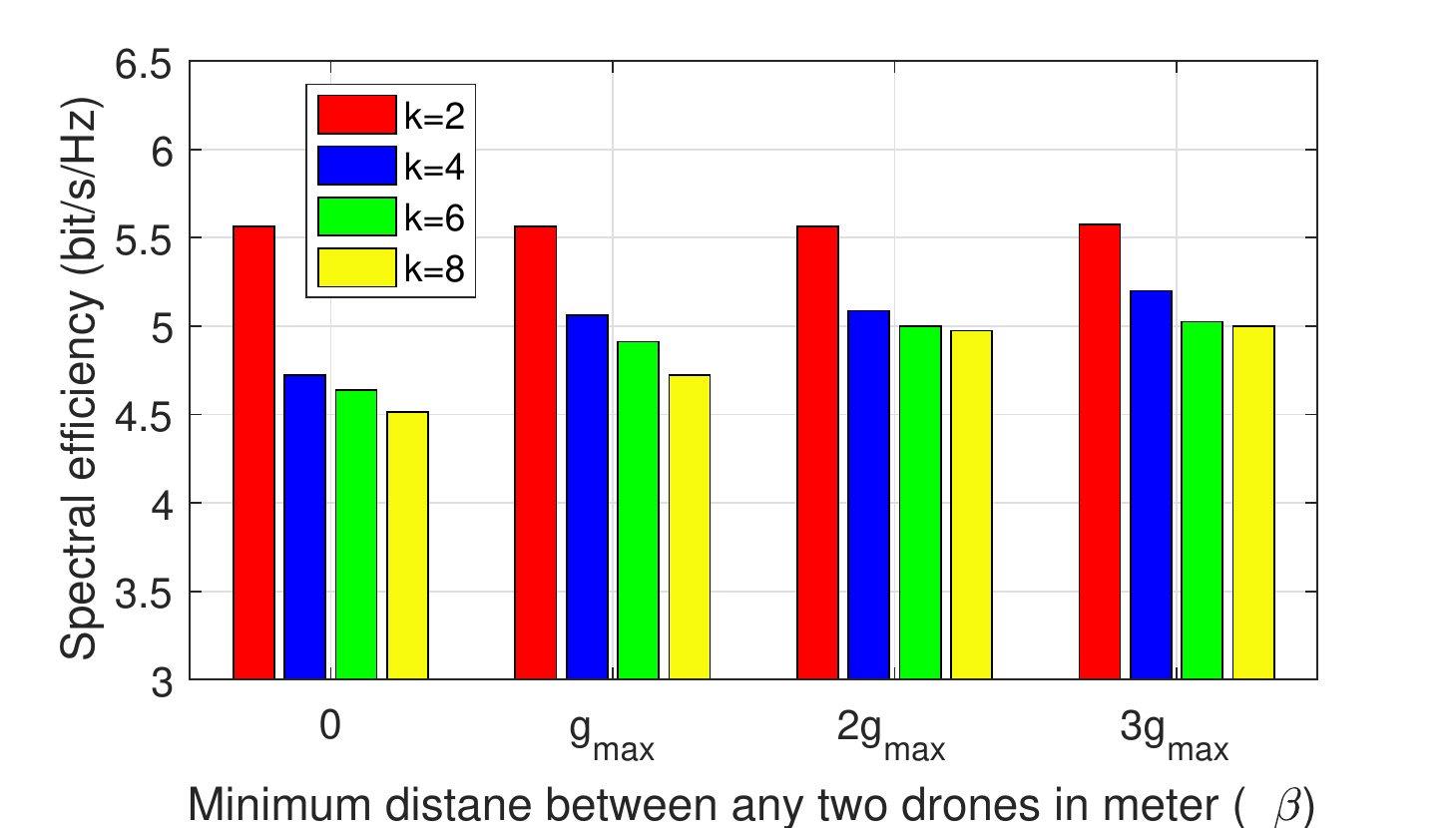}
        \caption{}
        \label{rate_beta_k_17_weekends}
    \end{subfigure}
    \caption{(a) Average served UE ratio on weekends by multiple drones. The average number of UEs during the peak hour on weekends is 210. (b) Average spectral efficiency on weekends by multiple drones.}
\end{figure}

The results for the same peak hour on the weekends are shown in Fig. \ref{UE_ratio_beta_k_17_weekends} and \ref{rate_beta_k_17_weekends}. The performance trends are similar to those of weekdays. Comparing the results of weekdays and weekends, we can find that under the same $\beta$ and $k$, the served UE ratio on weekdays is similar to weekends (see Fig. \ref{UE_ratio_beta_k_17} and \ref{UE_ratio_beta_k_17_weekends}). But the number of served UEs on weekdays is much larger than weekends due to the dense distribution of UEs on weekdays. The spectral efficiency on weekdays is about 10\% lower than that on weekends in average (see Fig. \ref{rate_beta_k_17} and \ref{rate_beta_k_17_weekends}). The reason behind this phenomenon is that the distribution of UEs on weekends is more sparse than weekdays. Under the same $\beta$ and $k$, the deployments of drones is also more sparse, which leads to lower interference, then higher spectral efficiency.

\subsubsection{Evaluation of EkDD}
In this part, we demonstrate the performance of the solution to the problem of EkDD. 

We assume that the operation power and the energy recharging rate are the same, i.e., $p=q$. From (\ref{n_k}) we have $n\geq \frac{1}{2}k$, i.e., the drones should be  divided into 2 groups. Consider the time slot of 1 hour in this chapter and the state-of-the-art commercial drones, such as DJI\footnote{https://www.dji.com}, whose the flying time are around 30 minutes. So in this chapter, we set $\lambda_S$, $\lambda_F$, and $\lambda_R$ to 45\%, 5\% and 50\% respectively. We consider that there are four utility poles located at the corners of graph, see Fig. \ref{snapshot} for an example. Here, we fix $\beta$ as $g_{max}$.

As discussed in Section \ref{P4}, the flying speed is the main factor impacting on the drone deployment. Here, we consider various practical flying speeds $4,5,6,7,8 m/s$, and present the corresponding performance in served UE ratio for the considered peak hour in Fig. \ref{P4_served_UE_ratio}. The results of kDP are also displayed for comparison, indicated by 'Inf'. For a fair comparison, in EkDD, $k=8$; while in kDD, $k=4$. Then, in both the cases, 4 drones can serve UEs simultaneously. From Fig. \ref{P4_served_UE_ratio} we can see that with the increase of $s$, the served UE ratio increases first and then remain at a steady level. Because a larger $s$ means wider operation radius for drones, and thus a weaker constraint on the positions of drones. When $s$ is larger than $6m/s$, the recharging constraint becomes invalid. Fig. \ref{snapshot} shows the drone projections for four cases of $s=4m/s$, $s=5m/s$, $s=6m/s$ and 'Inf', from which we can find that the constraint (\ref{constraint4_4}) pulls the drones' positions closer to the recharging positions compared to the non-constraint case of 'Inf'. When $s$ takes 7 or $8m/s$, the positions of drones are the same with the cases of $6m/s$ and 'Inf'. So we do not display them in Fig. \ref{snapshot}.

\begin{figure}[t]
    \centering
    \begin{subfigure}[b]{0.47\textwidth}
        \includegraphics[width=\textwidth]{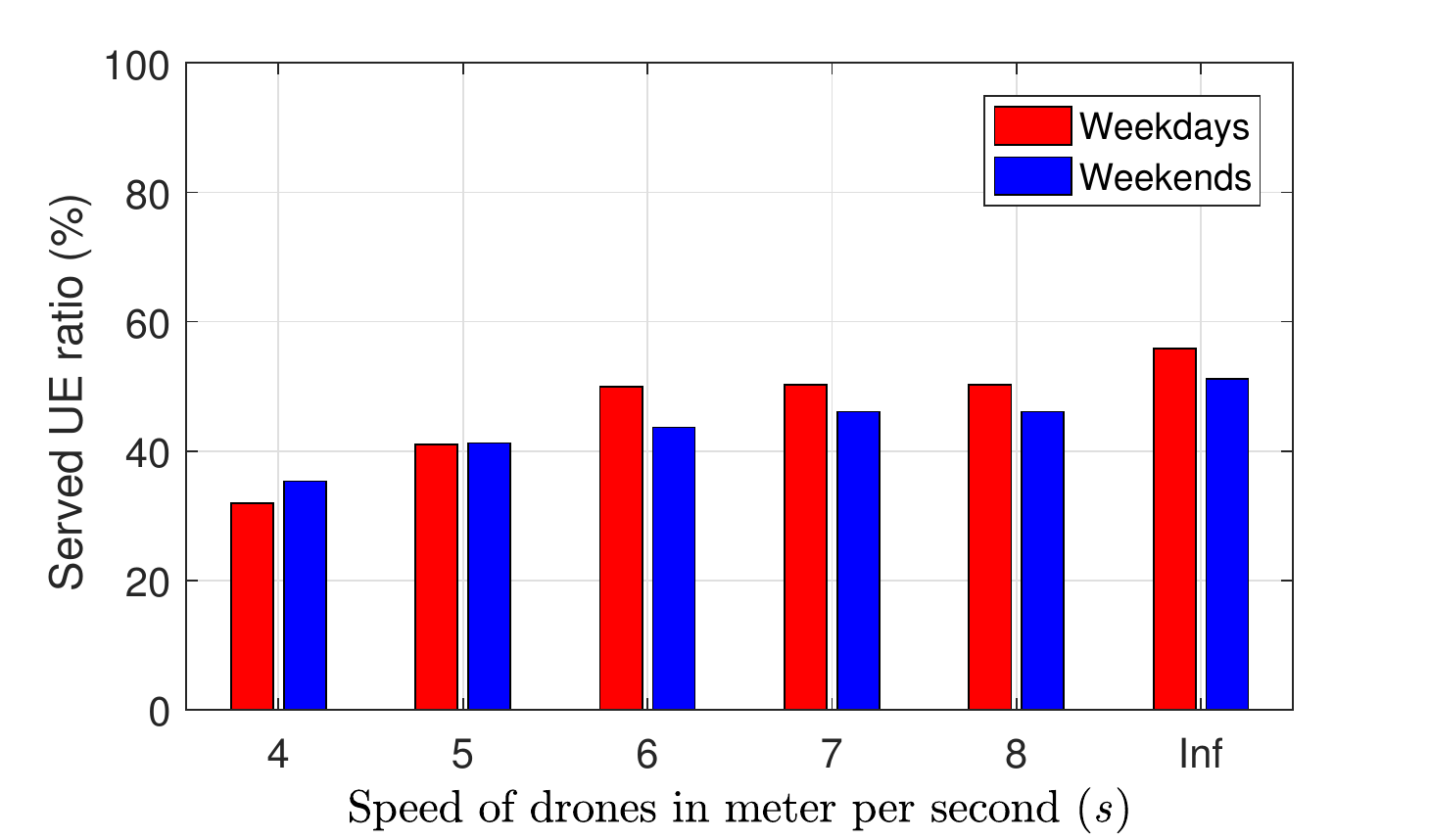}
        \caption{}
        \label{P4_served_UE_ratio}
    \end{subfigure}
    \begin{subfigure}[b]{0.47\textwidth}
        \includegraphics[width=\textwidth]{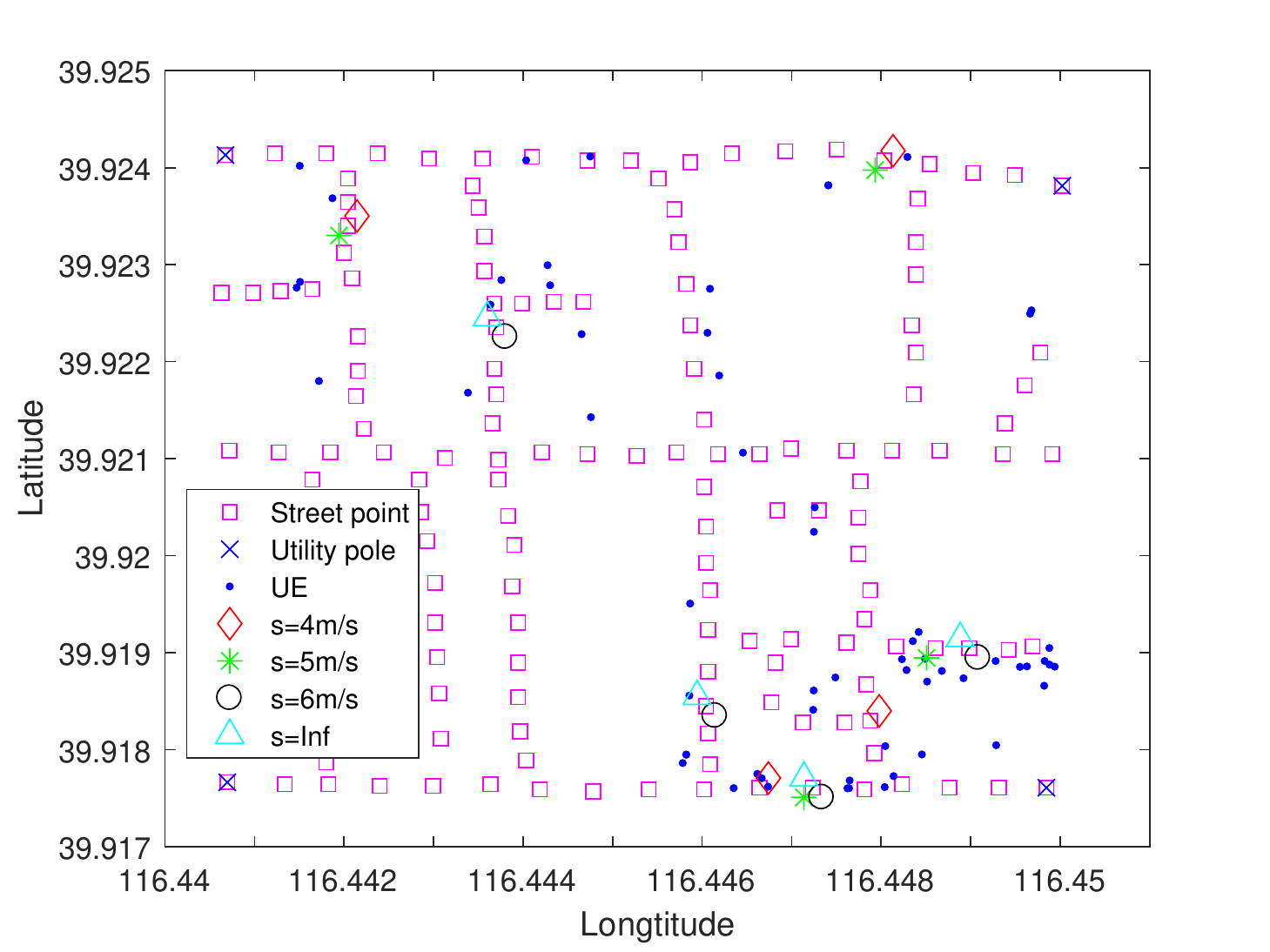}
        \caption{}
        \label{snapshot}
    \end{subfigure}
    \caption{(a) Served UE ratio against flying speed. The total numbers of UEs are 350 and 210 for the peak hour on weekdays and weekends respectively. (b) Drone projections in the cases of $4m/s$, $5m/s$, $6m/s$, and 'Inf' for the peak hour on a weekday.}
\end{figure}

In short, taking into account the energy issue, the performance reflects a more practical scenario. The drones with higher flying speed can achieve better performance due to the relaxed constraint on its position to recharge its battery.

\subsubsection{Evaluation of MinDD}
In this part, we show how many drones are needed to serve at least $\gamma$ percent of UEs. Here, we fix $\beta$ as $g_{max}$ and show the relationship between $\gamma$ and the minimum number of required drones, number of UEs served per drone, and network capacity. We vary $\gamma$ from 90\% to 98\%. The results are shown in Fig. \ref{evaluation4}.

\begin{figure}[t]
    \centering
    \begin{subfigure}[b]{0.47\textwidth}
        \includegraphics[width=\textwidth]{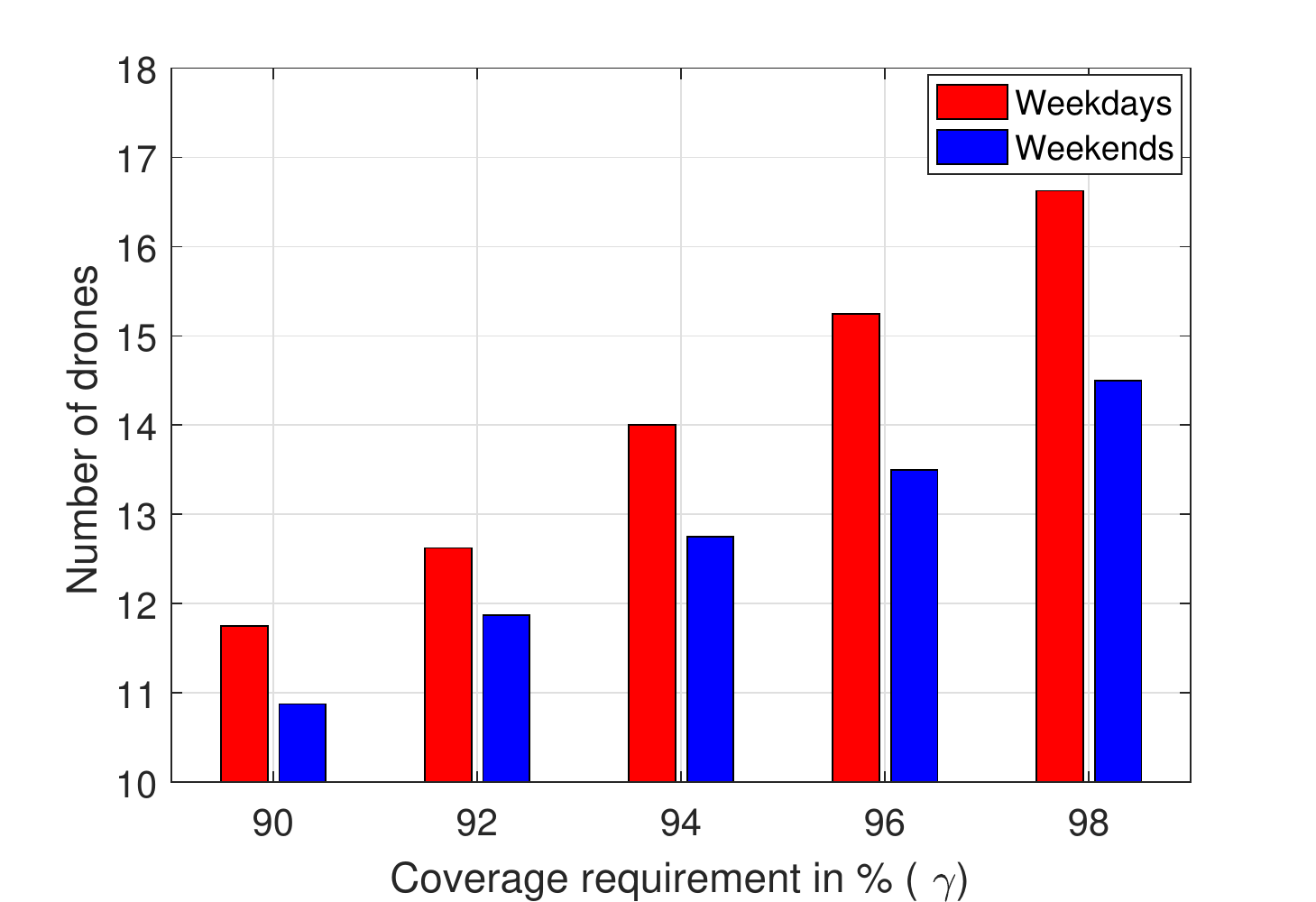}
        \caption{}
        \label{droneNo_gamma_17}
    \end{subfigure}
    \begin{subfigure}[b]{0.47\textwidth}
        \includegraphics[width=\textwidth]{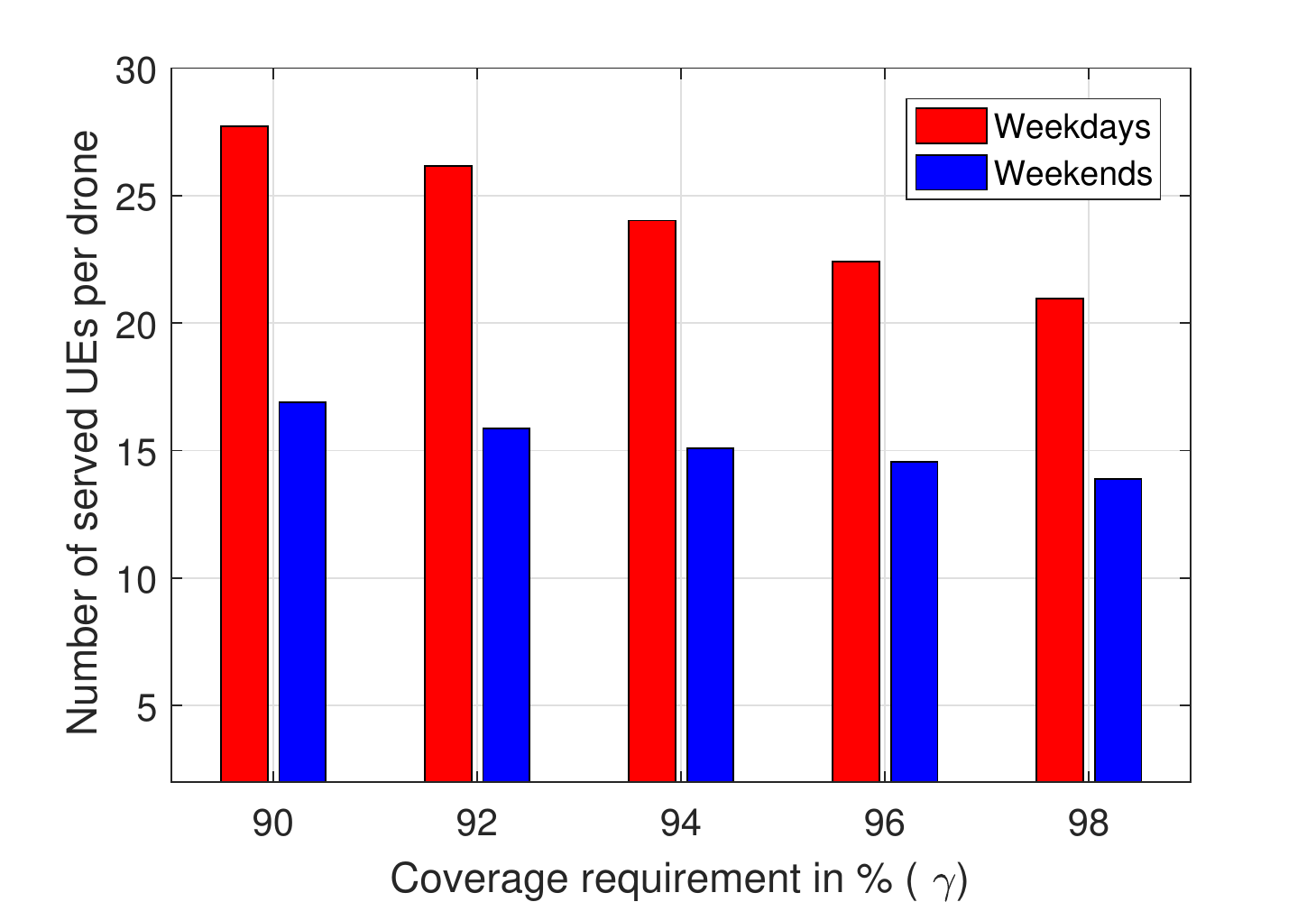}
        \caption{}
        \label{UE_per_drone_gamma_17}
    \end{subfigure}
    \begin{subfigure}[b]{0.47\textwidth}
        \includegraphics[width=\textwidth]{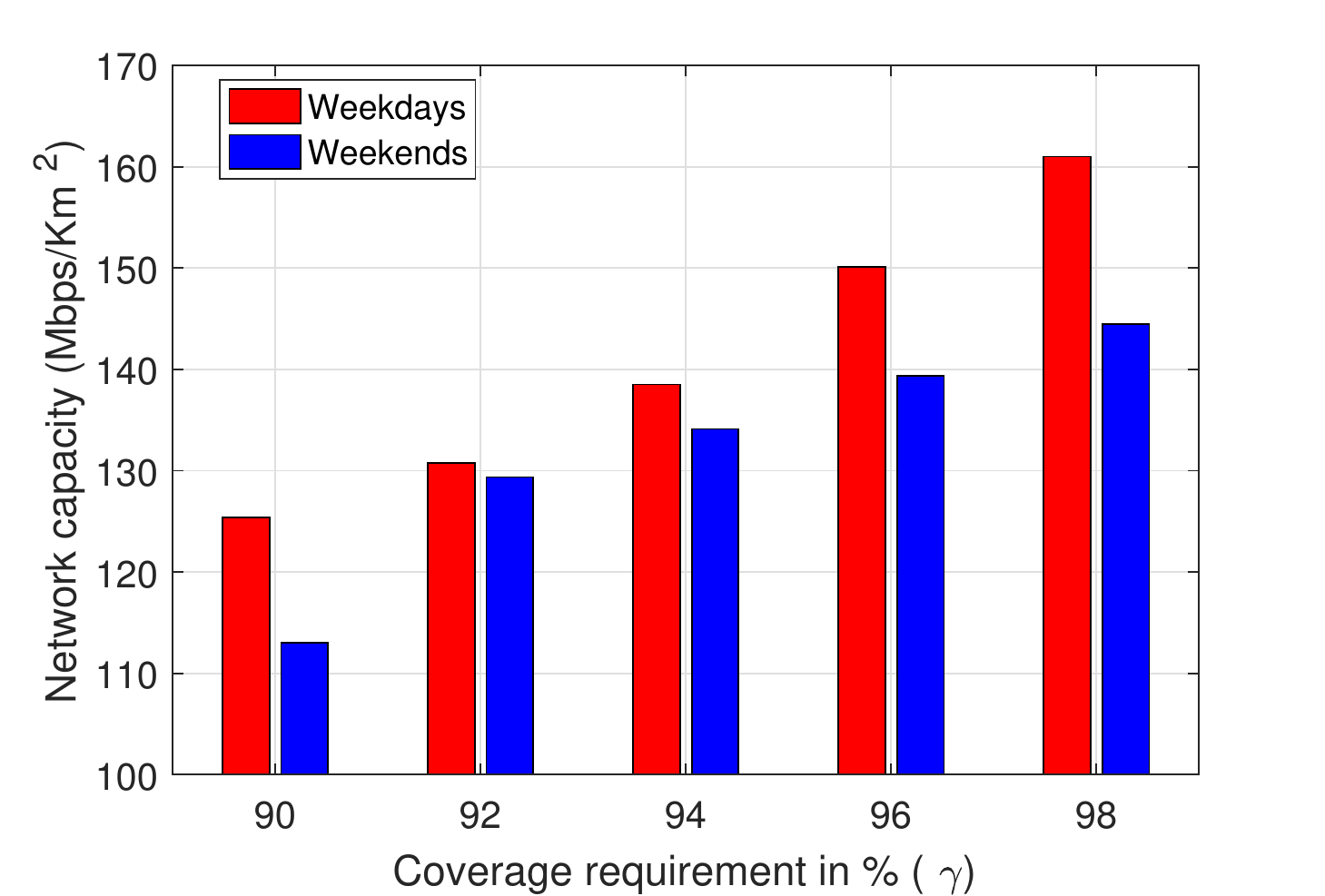}
        \caption{}
        \label{network_capacity_17}
    \end{subfigure}
    \caption{(a) Average minimum number of drones to serve $\gamma$ percent of UEs. (b) Average number of served UEs per drone against $\gamma$. (c) Network capacity against $\gamma$.}\label{evaluation4}
\end{figure}

\begin{itemize}
\item Fig. \ref{droneNo_gamma_17} displays the required drone numbers with various $\gamma$ on weekdays and weekends for the considered peak hour. We can see that the minimum number of drones increases with $\gamma$. 
Since the number of UEs on weekends are much smaller than weekdays, fewer drones are needed on weekends than on weekdays to achieve the same level of UE coverage. 
\item Fig. \ref{UE_per_drone_gamma_17} shows that, with the increase of $\gamma$, the average number of UEs served per drone decreases. In other words, the increase of served UE number is slow than the increase of drone number. We can also see that on weekdays the number of UEs served per drone is much higher than weekends, because UEs are sparsely distributed on weekends. 

\item Fig. \ref{network_capacity_17} presents the network capacity against $\gamma$. We can see that the network capacities on weekdays and weekends both increase with $\gamma$. Since the total number of UEs on weekends is much smaller than that on weekdays, hence the resource at drones is not fully used. So the network capacity on weekdays is larger than that on weekends.
\end{itemize}

\subsection{Discussion}\label{discussion}
\subsubsection{Advantages} 
From the above simulation results we can find that to serve more UEs, more drones are required and the network capacity can be improved. However, the efficiency of drone usage, i.e., the average number of served UEs per drone, decreases. Therefore, from the view of ISPs, there is a trade-off between the investment, i.e., the number of drones, and performance the gain, i.e., the network capacity.  

\subsubsection{Limitations} 
By use of practical models and reasonable network assumptions, the proposed solutions to the drone deployment problems provide useful guideline for practical drone deployment. In particular, all the drones are deployed over head streets, which can avoid collision with tall buildings. Thus, it is straightforward to apply our approaches to realistic applications. Moreover, we consider the energy constraint of drones in the optimization problems, which in our best knowledge has not been addressed by the existing work.

\section{Summary}\label{conclusion7}
In this chapter, we adopt a UE density model based on the collected dataset Momo. However, the modern wireless traffic demand is quite dynamic. Thus, more effective models to predict the time-variant UE density are for further study. Besides, we have not touched the performance impact of drone altitude, which is another significant factor in practical drone deployment. Although it is a regulation issue, finding the optimal altitude within the allowed range may improve the number of served UEs. Further, the objective function of EkDD is a rough estimation of the served UE number, because some drones can fly back to the serving positions using less time than $\lambda_F$ of a time slot. The precise formulation requires to consider the serving positions in the objective function, which makes the problem more complex. Moreover, in terms of radio resource management (RRM), we simply allocate the available bandwidth evenly to UEs. Optimal allocation of radio resource is worth investigating to achieve a larger network capacity \cite{abdelnasser2016resource, jafari2015study}. 

From simple to complex, in this chapter we discussed how to deploy a given number of drones to maximize the effectively served UE number; and how to determine the minimum number of drones to achieve a certain level of UE coverage. Different from existing work, the drone deployment is formulated based on a street graph model in this chapter. The street graph, associated with UE density function (obtained from a realistic dataset Momo), is close to reality. Further, we proved that the problems are NP-hard, and then greedy algorithms were proposed to solve the problems. The effectiveness of our approaches was verified by extensive simulations on the Momo dataset. Since both the street graph model and the UE density function are quite realistic, the results provided in this chapter provide valuable guidelines for realistic applications. The main content is presented in \cite{huang17drone}.
\chapter{Conclusion and Future Work}\label{conclusion}
\minitoc
\section{Contribution}
This report has addressed several challenges of introducing mobility into mobile networks, including how to track the mobile sinks and how to make use of them to improve the system performance. We conclude the report by highlighting the contributions and providing some possible future work. 

\begin{itemize}
\item \textbf{Viable Path Planning for Data Collection Robots in a Sensing Field with Obstacles}: The proposal of a path planning approach resolves several practical issues not having been sufficiently addressed so far when mobile sinks are used to collect data. We propose SVPP and $k$-SVPP and show that they are effective to design viable paths for unicycle mobile sinks with bounded angular velocity and can save much energy for the nodes through various simulations.
\item \textbf{An Energy Efficient Approach for Data Collection in Wireless Sensor Networks with Nonuniform Node Distribution Using Public Transportation Vehicles}: is an approach using a single mobile sink with fixed path for data collection. It aims at balancing the energy consumption to make the network operate as long as possible with all nodes alive. An energy-aware unequal clustering algorithm and an energy-aware routing algorithm are designed. Theoretical analysis is also provided.
\item \textbf{A Cluster based Compressive Data Collection for Wireless Sensor Networks with Mobile Sinks}: Similarly, this also considers the scenario of using a single mobile sink with fixed path. The difference lies in that we combine the technique of compressive sensing (CS) and clustering: within clusters, raw reading is transmitted; while CS measurement is transmitted between clusters and MS. We present an analytical model to describe the energy consumed by the nodes, based on which we figure out the optimal cluster radius. Two distributed implementations are presented and simulations are conducted. 
\item \textbf{Unusual Message Delivery Path Construction for Wireless Sensor Networks With Mobile Sinks}: The proposal of an algorithm targets on delivery unusual message to the mobile sinks within the allowed latency. Upon detecting any unusual message, the source node transmits the information to a set of selected target bus stop nodes. When buses pass by, the information is uploaded. We formulate the bus stop nodes selection as an integer programming problem. Some realistic features such as the uncertainties in arrival time and stopping duration are accounted. By simulations and experiments, we show the proposed approach can deliver the unusual message to mobile sinks within the allowed latency with higher reliability and efficiency than the alternatives. 
\item \textbf{Optimized Deployment of Autonomous Drones to Improve User Experience in Cellular Networks}: We formulate the constrained drone placement problems based on a novel street graph model associated with the UE density function (built up based on the real dataset). We prove that the problems are NP-hard and provide greedy solutions. Theoretical analysis on the approximation factors of the greedy algorithms is provided. Simulations on the real dataset are conducted. Since the models used here are quite realistic, the results can serve suitable guidelines in practice.
\end{itemize}
\section{Future work}
The research questions addressed in this report have created new opportunities for further research. We highlight some of them in this section.

In Chapter \ref{path_planning}, we use controllable mobile sinks to collect data from sensor nodes. We assume both the sensor nodes and obstacles are static. However, in practice, the obstacles can also be mobile. Thus, the development of algorithms which can tackle the mobile obstacles for data collection is one future direction.

In Chapter \ref{cluster_ms} and \ref{cluster_cs}, we only consider the scenario of using single mobile sink which follows a predefined trajectory to collect data from sensor nodes. The extension to multiple mobile sinks with fixed trajectories is worth studying, since it is promising in large scale networks.

In Chapter \ref{cluster_um}, we assume the wireless transmission between two nodes is reliable. However, in practice, the failure in transmission occurs frequently. Thus, accounting such feature is one possible means to improve our current approach.

In Chapter \ref{drone}, we present a primary study of using autonomous drones to serve mobile users in cellular networks. There are various direction we can consider. First, we have not touched the impact of drone altitude, which is believed to be another important factor influencing QoS. Second, the radio resource is assumed to be evenly allocated to the mobile users, which is a naive scheme. Designing a comprehensive strategy can improve the overall network capacity. Third, what we have studied is the problem of proactive deployment of autonomous drones based on a collected dataset Momo. However, in reality, the actual UE distribution may not exactly the same with Momo. Therefore, it is necessary to develop reactive strategies which can deal with the dynamic UE distribution.
\appendix
\chapter{Construction of tangents}\label{app}
We first describe the algorithm to construct tangents between a convex set and a outside point (\textit{Point-Conv}) and then between two convex sets (\textit{Conv-Conv}).

Let $Q$ be a convex set with a boundary approximated by $N$ points and $p$ be a point outside $Q$. The line connecting $p$ and one point on $Q$ is called a tangent candidate. \textit{Point-Conv} calculates the angle ($\alpha\in[-\pi,\pi]$) between X-axis direction and the each tangent candidate. The tangent candidate whose $\alpha$ value is the maximum (minimum) is a tangent between $p$ and $Q$, e.g., Figure \ref{fig:point_to_convex}. As there are $N$ tangent candidate to be checked, the time complexity of \textit{Point-Conv} is $O(N)$.

\begin{figure}[!h]
    \centering
    \begin{subfigure}[t]{0.2\textwidth}
        \includegraphics[width=\textwidth]{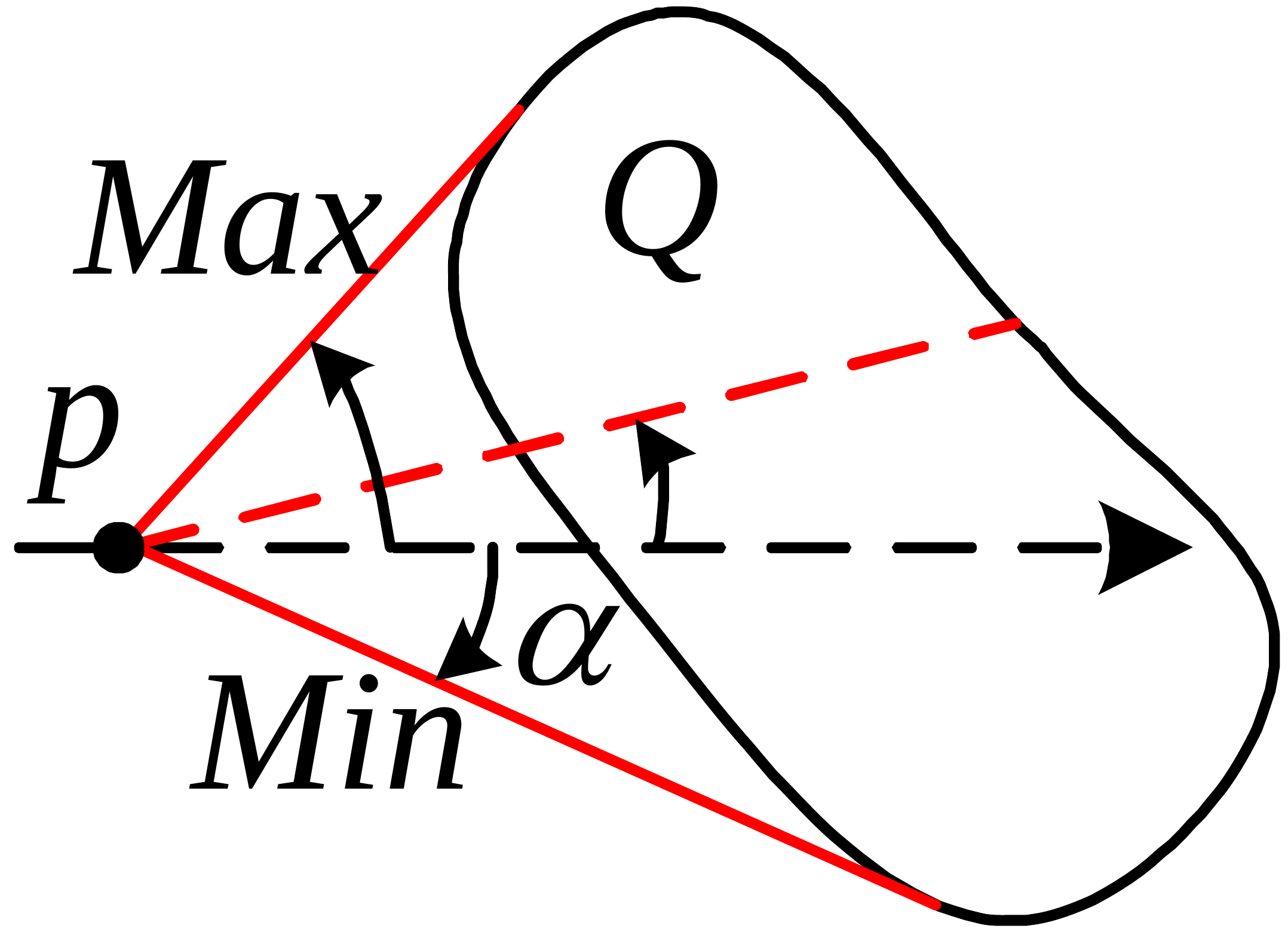}
        \caption{}
        \label{fig:point_to_convex}
    \end{subfigure}
    ~
    \begin{subfigure}[t]{0.25\textwidth}
        \includegraphics[width=\textwidth]{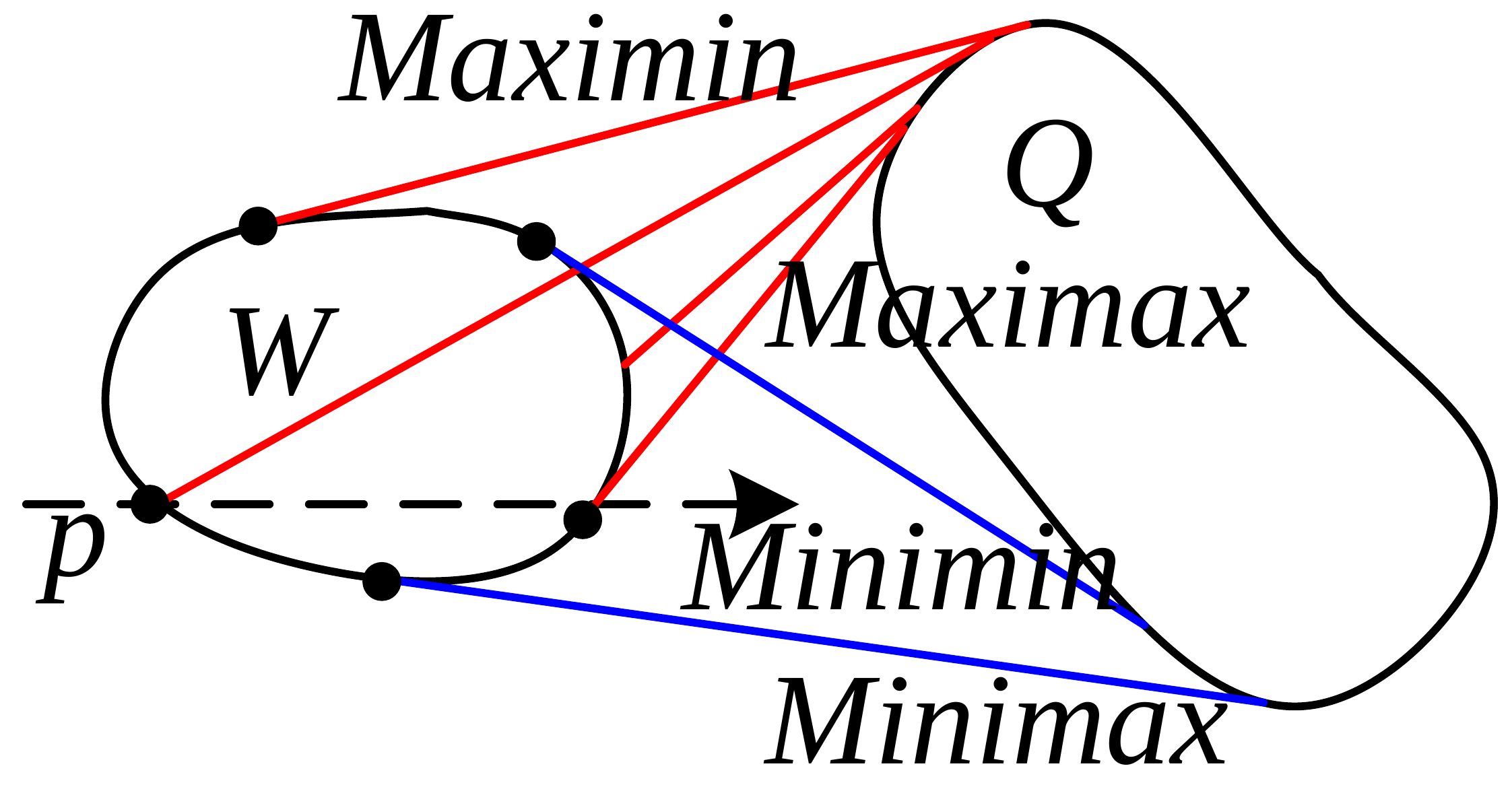}
        \caption{}
        \label{fig:convex_to_convex}
    \end{subfigure}
    \caption{(a) Two tangents are between a point and a convex set shown as the two solid lines marked by $Max$ and $Min$; (b) Four tangents are between two convex sets, marked by $Maximax$, $Maximin$, $Minimax$, and $Minimin$.}\label{fig:tangents}
\end{figure} 

Let $W$ be a convex set with a boundary approximated by $M$ points. Following Assumption \ref{position}, $W$ and $Q$ do not overlap. \textit{Conv-Conv} works as follows. For each point $p$ on the boundary of $W$, it uses \textit{Point-Conv} to determine a tangent with maximum $\alpha$ for $p$ and $Q$, e.g., see Figure \ref{fig:convex_to_convex}. Among these maxima $\alpha$ values, \textit{Point-Conv} finds the maximum and minimum values, which correspond to two tangents between $W$ and $Q$ (see the tangents marked by $Maximax$ and $Maximin$ in Figure \ref{fig:convex_to_convex}). At the same time, for each point on the boundary of $W$, \textit{Conv-Conv} can also determine a tangent with minimum $\alpha$ value. Among these minima values, it can find the maximum and minimum values, which correspond to the other two tangents. Since \textit{Conv-Conv} calls \textit{Point-Conv} $M$ times to get all the $\alpha$ values, the time complexity of \textit{Conv-Conv} is $O(MN)$. In reality, if a point on one convex set cannot 'see' the other convex set, none of the tangent candidates from this point will be a tangent, such as point $p$ shown in Figure \ref{fig:convex_to_convex}. This fact reduces the number of tangent candidates.\label{AppendixA}
\chapter{Critical Points under Non-identical Threat Functions}
Consider two agents $P_i$ and $P_j$ with radius $R_i$ and $R_j$ (see Fig. \ref{fig6}). 
Let $P_*$ $(x_*,y_*)$ be the type 2 critical point, and $D$ and $d$ be respectively the distance between $P_i$ and $P_j$, and $P_*$ and $P_i$. 

\begin{figure}[h]
\begin{center}
{\includegraphics[width=0.35\textwidth]{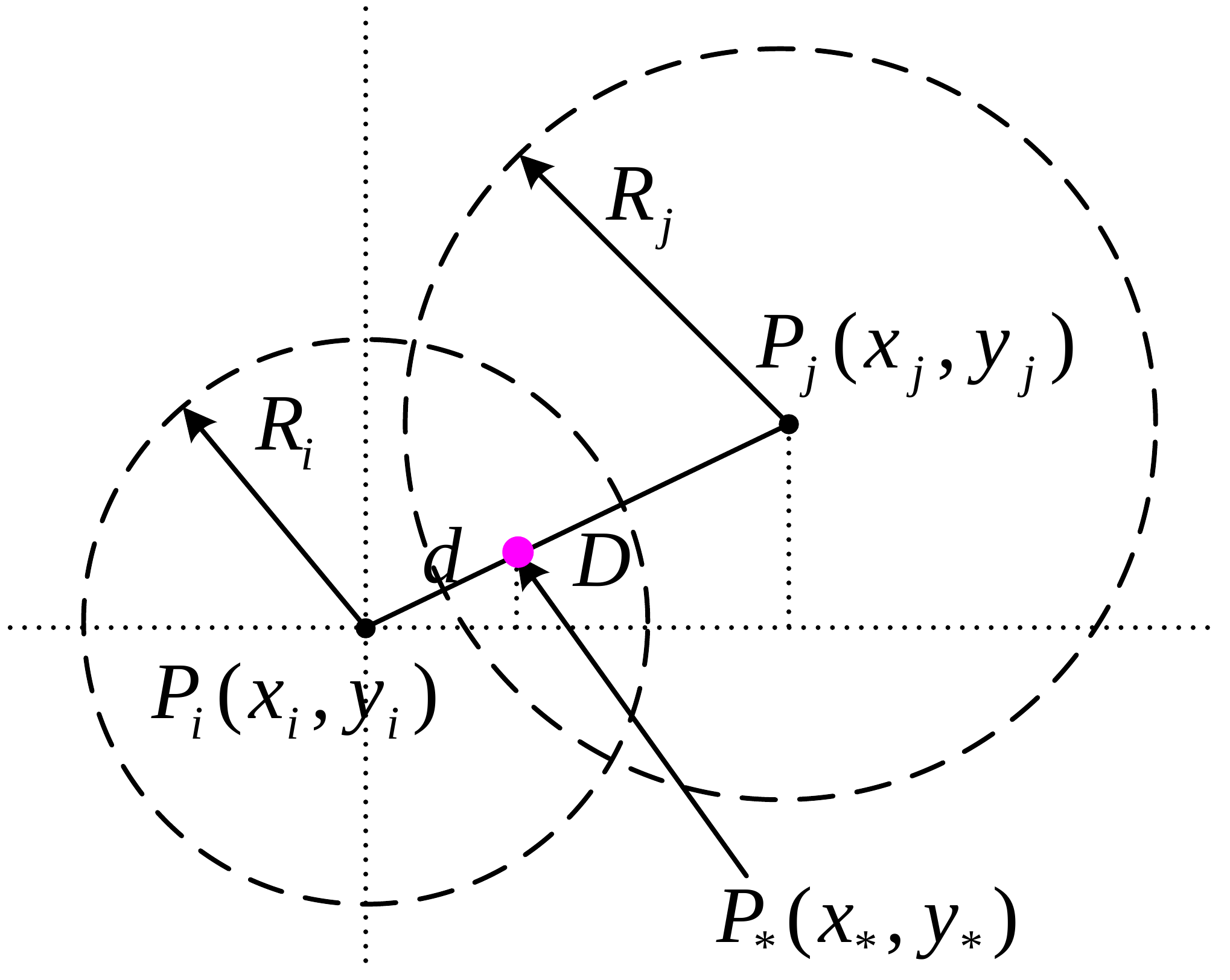}}
\caption{Calculating critical point of type 2 under non-identical threat radii.}
\label{fig6}
\end{center}
\end{figure}

According to Definition \ref{define_3_2} and model (\ref{eq3}), we have 

\begin{equation}\label{bili}
1-\frac{d}{R_i}=1-\frac{D-d}{R_j}.
\end{equation}
According to the properties of similar triangles, we have
\begin{equation}\label{similar_triangles}
\frac{x_{*}-x_{i}}{x_{j}-x_{i}}=\frac{y_{*}-y_{i}}{y_{j}-y_{i}}=\frac{d}{D}.
\end{equation}
Thus, the location of type 2 critical point can be obtained from Eq. (\ref{bili}) and (\ref{similar_triangles}):
\begin{equation}\label{cp_coordinates}
x_{*}=\frac{R_i}{R_i+R_j}(x_{j}-x_{i})+x_{i}, y_{*}=\frac{R_i}{R_i+R_j}(y_{j}-y_{i})+y_{i}
\end{equation}

Having these critical points, the corresponding threat levels can be calculated based on the threat function $f_i$, where the input is the distance between a critical point and the agent. \label{AppendixB}
\chapter{Tangent Lines}

Since any tangent line is determined by two tangent points, here we only provide the equation for tangent points.

There are two common tangent lines between a circle and a point outside. This covers the scenario of finding the tangent lines between $\mathcal{F}$ and agents. Let $A=(x_A, y_A)$ be the point and $(x_{t1},y_{t1})$ and $(x_{t2},y_{t2})$ be the tangent points, see, e.g., Fig. \ref{fig7}. Since $(x_{t1},y_{t1})$ is on the circle centred at $P_i$, we have

\begin{equation}\label{on_ciecle}
(x_{t1}-x_{i})^2+(y_{t1}-y_{i})^2=R_i^2.
\end{equation}
Since the line connecting $A$ and $(x_{t1},y_{t1})$ is perpendicular to the line connecting $P_i$ and $(x_{t1},y_{t1})$, and further suppose both of the lines have slopes, we know their slopes are opposite reciprocals:
\begin{equation}\label{reciprocals}
\frac{y_{t1}-y_A}{x_{t1}-x_A}\cdot \frac{y_{t1}-y_i}{x_{t1}-x_i}=-1
\end{equation}
Solving Eq. (\ref{on_ciecle}) and (\ref{reciprocals}), we obtain

\begin{equation}\label{tp1}
\begin{aligned}
x_{t1,t2}=\frac{R_i^2(x_A-x_i)\pm R_i(y_A-y_{i})\sqrt{D^2-R_i^2}}{D^2}+x_i\\
y_{t1,t2}=\frac{R_i^2(y_A-y_{i})\mp R_i(x_{A}-x_{i})\sqrt{D^2-R_i^2}}{D^2}+y_{i}
\end{aligned}
\end{equation}
where $D=\sqrt{(x_{A}-x_{i})^2+(y_{A}-y_{i})^2}$.

\begin{figure}[t]
\begin{center}
{\includegraphics[width=0.5\textwidth]{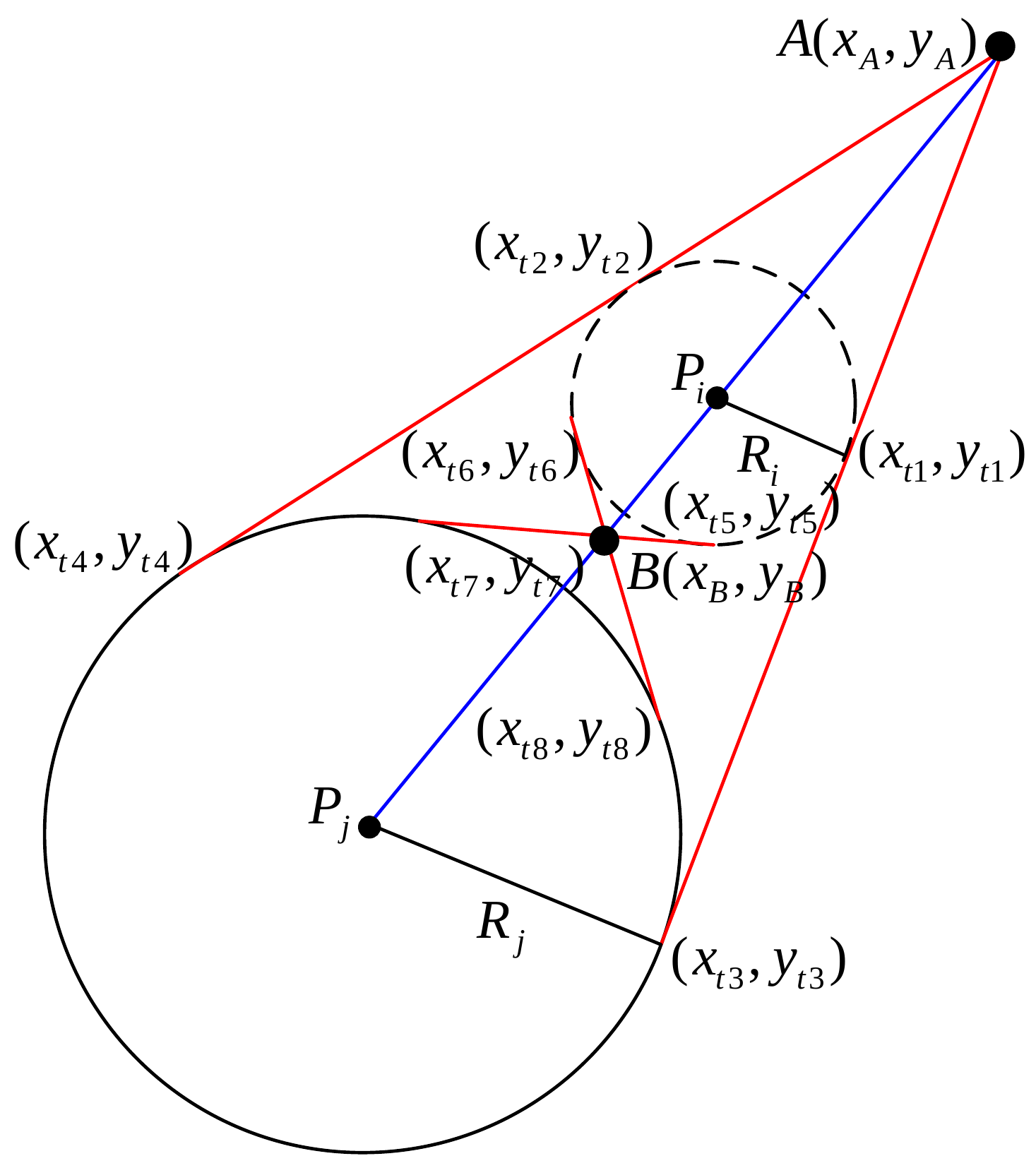}}
\caption{Tangent lines between 1) a circle and a point outside and 2) two circles.}
\label{fig7}
\end{center}
\end{figure}

There are four common tangent lines between two non-overlapped circles. This covers the scenario of finding the tangent lines between any two of initial circles and circles centred at agents. The formulation can be derived from the above case. Consider another agent $P_j(x_{j}, y_{j})$ and let $(x_{tk},y_{tk})$, $k=1,\cdots,8$ be the tangent points. To calculate the tangent points for the outer common tangent lines, we first find a point $A$ on the line passing both $P_i$ and $P_j$, see Fig. \ref{fig7}. 

Since $A$, $P_i$ and $P_j$ are on the same line, we have
\begin{equation}\label{same_line}
\frac{y_A-y_{i}}{x_A-x_{i}}=\frac{y_A-y_{j}}{x_A-x_{j}}.
\end{equation}

Based on the characteristics of similar triangles, we have $\frac{|AP_i|}{|AP_j|}=\frac{R_i}{R_j}$, i.e.,
\begin{equation}\label{ratio}
\frac{\sqrt{(x_A-x_{i})^2+(y_A-y_{i})^2}}{\sqrt{(x_A-x_{j})^2+(y_A-y_{j})^2}}=\frac{R_i}{R_j}
\end{equation}

By solving Eq. (\ref{same_line}) and (\ref{ratio}), we obtain the coordinates of $A$:
\begin{equation}
x_A=\frac{x_{i}R_j-x_{j}R_i}{R_j-R_i}, y_A=\frac{y_{i}R_j-y_{j}R_i}{R_j-R_i}
\end{equation}

With $A$, the tangent points on $P_i$ for the outer tangent lines can be calculated by Eq. (\ref{tp1}) directly. Further to find the tangent points on $P_j$ for the outer tangent lines, we can simply modify (\ref{tp1}) by replacing $(x_i,y_i)$ and $R_i$ with $(x_j,y_j)$ and $R_j$. 

For the tangent points of the inner common tangent lines, we need to find another outside point $B=(x_B,y_B)$, i.e., the intersecting point of two inner common tangent lines, see Fig. 7. 
 Using the same idea of finding $A$, the coordinates of $B$ are
\begin{equation}
x_B=\frac{x_{i}R_j+x_{j}R_i}{R_j+R_i},
y_B=\frac{y_{i}R_j+y_{j}R_i}{R_j+R_i}
\end{equation}

Having $B$, the tangent points for inner common tangent lines can be calculated by replacing $A$ with $B$ in the formulations for tangent points of the outer common tangent lines.

With these tangent points, the tangent lines can be constructed. Then $\mathcal{G}(\Theta(k))$ can be built up by taking all the tangent points, $p(0)$ and $\mathcal{F}$ as vertices and the tangent lines and arcs as edges, following Definition III.7.\label{AppendixC}
\small
\bibliographystyle{ieeetr}
\bibliography{mybibfile}
\end{document}